    \newcolumntype{P}[1]{>{\centering\arraybackslash}p{#1}}
    \newcolumntype{M}[1]{>{\centering\arraybackslash}m{#1}}
\newtheorem{definition}{Definition}
\crefname{section}{\S}{\S\S}
\Crefname{section}{\S}{\S\S}
\Crefname{line}{Line}{line}
\crefname{line}{Line}{line}
\Crefname{assumption}{Assumption}{assumption}
\Crefname{@theorem}{Theorem}{theorem}
\crefname{@theorem}{Theorem}{theorem}
\crefname{table}{Table}{Tables}
\newcommand{\name}{\textsc{Oper}\xspace}
\newcommand{\block}{\textsc{Crux}\xspace}
\newcommand{\longreb}{\textsc{LongReb3}\xspace}
\newcommand{\longgct}{\textsc{LongGC3}\xspace}
\newcommand{\longgcf}{\textsc{LongGC5}\xspace}
\newcommand{\shortvbt}{\textsc{ShortVB3}\xspace}
\newcommand{\longvbt}{\textsc{LongVB3}\xspace}
\newcommand{\longvbf}{\textsc{LongVB5}\xspace}
\newcommand{\shortfin}{\textsc{ShortFin}\xspace}
\newcommand{\longfin}{\textsc{LongFin}\xspace}
\newcommand{\reduceacool}{\textsc{RedACOOL}\xspace}
\newcommand{\vmax}{V_{\mathit{max}}\xspace}
\newcommand{\vfinal}{V_{\mathit{final}}\xspace}
\newcommand{\MRoot}[1]{\mathsf{MR}(#1)}
\newcommand{\rebnotbottom}[1]{$#1 \in \mathsf{Value}\,(#1\ne\bot_{\mathit{reb}})$}
\definecolor{lightgray}{gray}{0.90}
\renewenvironment{leftbar}[1][\hsize]
{%
\MakeFramed{\hsize#1\advance\hsize-\width\FrameRestore}%
}
{\endMakeFramed}
\algnewcommand{\BlueComment}[1]{\textcolor{blue}{\hfill\(\triangleright\) #1}}
\algnewcommand{\LineComment}[1]{\State \(\triangleright\) #1}
\crefname{lstlisting}{listing}{listings}
\Crefname{lstlisting}{Listing}{Listings}
\crefname{code}{line}{lines}
\Crefname{code}{Line}{Lines}
\definecolor{mygreen}{rgb}{0.254,0.572,0.294}
\definecolor{mygray}{rgb}{0.5,0.5,0.5}
\definecolor{myorange}{rgb}{1,0.35,0}
\definecolor{mymauve}{rgb}{0.58,0,0.82}
\definecolor{myblue}{rgb}{0.2,0.4,0.6}
\definecolor{rakos4orange}{RGB}{255,165,0}
\definecolor{rakos4blue}{RGB}{14,48,173}
\definecolor{rakos4lblue}{RGB}{92,172,238}
\definecolor{rakos4dgray}{RGB}{77,77,77}
\definecolor{plainred}{RGB}{211,63,63}
\definecolor{plainorange}{RGB}{221,105,41}
\lstdefinelanguage{Golang}%
  {morekeywords=[1]{package,import,struct,defer,panic,%
     recover,select,var,const,iota,},%
   morekeywords=[2]{string,uint,uint8,uint16,uint32,uint64,int,int8,int16,%
     int32,int64,bool,float32,float64,complex64,complex128,byte,rune,uintptr,%
     error,interface,message,node},%
   morekeywords=[3]{map,slice,make,new,nil,len,cap,copy,close,true,false,%
     delete,append,real,imag,complex,chan,},%
   morekeywords=[4]{break,continue,goto,switch,case,fallthrough,%
    default,},%
   morekeywords=[5]{Println,Printf,Error,Send},%
   sensitive=true,%
   morecomment=[l]{//},%
   morecomment=[s]{/*}{*/},%
   morestring=[b]",%
   morestring=[s]{`}{`},%
   }
\small\color{mygray}\textnormal,
\scriptsize\color{mygray}, 
\newcommand{\ayaz}[1]{{\color{cyan}{[Ayaz: #1}]}}
\newcommand{\remove}[1]{}
\newcommand{\changed}[1]{#1}
\newif\ifcomments
\newcommand{\ms}[1]{%
	    \relax\ifmmode
	        \mathord{\mathcode`\-="702D\it #1\mathcode`\-="2200}%
	    \else
	        {\it #1}%
	    \fi
}
\newcommand{\tup}[1]{%
	    \relax\ifmmode
	      \langle #1 \rangle%
	    \else
	        $\langle$ #1 $\rangle$%
	    \fi
}
\date{}
\begin{document}

\algsetblockdefx[LocalState]{LocalState}{EndLocalState}{}{}{\textbf{Local state $\alpha_p$:}}{}
\algsetblockdefx[Round]{Round}{EndRound}{}{}[1]{\textbf{Round:} $#1$}{}
\algsetblockdefx[SendStep]{SendStep}{EndSendStep}{}{}{$S_p^r$:}{}
\algsetblockdefx[TransitionStep]{TransitionStep}{EndTransitionStep}{}{}{$T_p^r$:}{}
\algsetblockdefx[Upon]{Upon}{EndUpon}{}{}[2]{\textbf{upon} $\mathit{deliver}(#1)$ from $#2$ \textbf{do}}{}
\algsetblockdefx[UponEvent]{UponEvent}{EndUponEvent}{}{}[1]{\textbf{upon event} $#1$ \textbf{do}}{}
\algsetblockdefx[UponReceipt]{UponReceipt}{EndUponReceipt}{}{}[2]{\textbf{upon receipt of} $#1$ from $#2$ \textbf{do}}{}









\title{Partial Synchrony for Free: New Upper Bounds for Byzantine Agreement}

\author{
Pierre Civit\footnote{École Polytechnique Fédérale de Lausanne (EPFL), Lausanne, Switzerland, \texttt{pierre.civit@epfl.ch}}
\and Muhammad Ayaz Dzulfikar\footnote{NUS Singapore, Singapore, Singapore, \texttt{ayaz.dzulfikar@u.nus.edu}}
\and Seth Gilbert\footnote{NUS Singapore, Singapore, Singapore, \texttt{gilbert@comp.nus.edu.sg}}
\and Rachid Guerraoui\footnote{École Polytechnique Fédérale de Lausanne (EPFL), Lausanne, Switzerland, \texttt{rachid.guerraoui@epfl.ch}}
\and Jovan Komatovic\footnote{École Polytechnique Fédérale de Lausanne (EPFL), Lausanne, Switzerland, \texttt{jovan.komatovic@epfl.ch}}
\and Manuel Vidigueira\footnote{École Polytechnique Fédérale de Lausanne (EPFL), Lausanne, Switzerland, \texttt{manuel.ribeirovidigueira@epfl.ch}}
\and Igor Zablotchi\footnote{Mysten Labs, Zurich, Switzerland, \texttt{igor@mystenlabs.com}}
}

\maketitle







Byzantine agreement allows $n$ processes to decide on a common value, in spite of arbitrary failures.
The seminal Dolev-Reischuk bound states that any deterministic solution to Byzantine agreement exchanges $\Omega(n^2)$ bits.
In \emph{synchronous} networks, with a known upper bound on message delays, solutions with optimal $O(n^2)$ bit complexity, optimal fault tolerance, and no cryptography have been established for over three decades.
However, these solutions lack robustness under adverse network conditions.
Therefore, research has increasingly focused on Byzantine agreement for \emph{partially synchronous} networks, which behave synchronously only eventually and are thus more reflective of real-world conditions.
Numerous solutions have been proposed for the partially synchronous setting.
However, these solutions are notoriously hard to prove correct, and the most efficient cryptography-free algorithms still require $O(n^3)$ exchanged bits in the worst case.
Even with cryptography, the state-of-the-art remains a $\kappa$-bit factor away from the $\Omega(n^2)$ lower bound (where $\kappa$ is the security parameter).
This discrepancy between synchronous and partially synchronous solutions has remained unresolved for decades.

In this paper, we tackle the discrepancy above by introducing \name, the first generic transformation of deterministic Byzantine agreement algorithms from synchrony to partial synchrony.
\name requires no cryptography, is optimally resilient ($n \geq 3t+1$, where $t$ is the maximum number of failures), and preserves the worst-case per-process bit complexity of the transformed synchronous algorithm.
Leveraging \name, we present the first partially synchronous Byzantine agreement algorithm that (1) achieves optimal $O(n^2)$ bit complexity, (2) requires no cryptography, and (3) is optimally resilient ($n \geq 3t+1$), thus showing that the Dolev-Reischuk bound is tight even in partial synchrony.
Moreover, we adapt \name for long values and obtain several new partially synchronous algorithms with improved complexity and weaker (or completely absent) cryptographic assumptions.
Finally, we demonstrate the broad applicability of the \name transformation by showcasing its use for randomized synchronous agreement algorithms.
Indirectly, \name contradicts the folklore belief that there is a fundamental gap between synchronous and partially synchronous agreement protocols. In a way, we show that there is no inherent trade-off between the robustness of partially synchronous algorithms on the one hand, and the simplicity/efficiency of synchronous ones on the other hand. 


\newpage
\setcounter{tocdepth}{1}
\tableofcontents

\clearpage\newpage\section{Introduction} \label{section:introduction}

Byzantine agreement~\cite{lamport2019byzantine} is a fundamental problem in distributed computing.
The emergence of blockchain systems~\cite{abraham2016solida,buchman2016tendermint,correia2019byzantine,CGL18,chen2016algorand,luu2015scp} and the widespread use of State Machine Replication (SMR)~\cite{abd2005fault,adya2002farsite,amir2006scaling,CL02,kotla2007zyzzyva,kotla2004high,malkhi2019flexible,momose2021multi,Momose2021,veronese2011efficient}, in which Byzantine agreement plays a vital role, has vastly increased the demand for efficient and robust solutions.
Byzantine agreement operates among $n$ processes: each process proposes its value, and all processes eventually agree on a common valid decision.
A process is either \emph{correct} or \emph{faulty}: correct processes follow the prescribed protocol, whereas faulty processes are controlled by the adversary, and can behave arbitrarily.
Byzantine agreement satisfies the following properties:
\begin{compactitem}     
    \item \emph{Agreement:} No two correct processes decide different values.

    \item \emph{Termination:} All correct processes eventually decide.

    \item \emph{Strong validity:} If all correct processes propose the same value $v$, then no correct process decides a value $v' \neq v$.

    \item \emph{External validity:} If a correct process decides a value $v$, then $\mathsf{valid}(v) = \mathit{true}$.
\end{compactitem}
Here, $\mathsf{valid}(\cdot)$ is any predefined logical predicate that indicates whether or not a value is valid.
The Byzantine agreement problem can be characterized by different types of validity properties~\cite{aboutvalidity,civit2023validity}.
In this work, for the sake of generality, we take into account (the conjunction of) two of the most commonly used validity properties: \emph{strong validity}~\cite{abraham2017brief,civit2022byzantine,CGL18,KMM03} and \emph{external validity}~\cite{BKM19,Cachin2001,yin2019hotstuff}.\footnote{\changed{It may seem as though external validity on its own makes the problem trivial: processes could agree on some predetermined valid value. However, the problem can be made non-trivial even with external validity alone, e.g., by assuming that the domain of valid values is not known to honest parties and cannot be computed efficiently from the $\mathsf{valid}(\cdot)$ predicate~\cite{spiegelman21search}.}}


\smallskip
\noindent \textbf{Synchronous Byzantine agreement.}
Byzantine agreement has been extensively studied in the synchronous network model~\cite{abraham2017brief,berman1992bit,CoanW92,Delporte-Gallet22,dolev1985bounds,dolev1983authenticated,kowalski2013synchronous,madsen2020subject,raynal2002consensus,schmid2004synchronous}.
According to this model, algorithms are provided with a strong, ``round-based'' notion of time: all processes start simultaneously, send messages at the beginning of a round, and receive all messages sent to them by the end of the round.
In essence, all processes are perfectly aligned and share the same global clock.
This model has several key advantages.
First, it is fairly easy to reason about synchronous algorithms as their executions are defined around well-delineated rounds.
Second, the synchronous environment provides a strong guarantee that each correct process receives all messages sent by other correct processes within the same round.
Lastly, crashes can be detected perfectly with synchrony~\cite{C96weakest,CT96,lynch1996distributed}.
For example, if process $A$ expects a message from process $B$ in a certain round and does not receive it, $A$ can safely deduce that $B$ is faulty. 
Synchronous Byzantine agreement can be solved with $n \geq 2t+1$ if we assume a cryptographic setup, but requires $n \geq 3t+1$ in the setup-free case we primarily consider in this paper~\cite{FLM85, pfitzmann1996information}.

A significant body of work has been produced on the cost of solving Byzantine agreement in synchrony.
The seminal Dolev-Reischuk bound~\cite{dolev1985bounds} proves that any deterministic synchronous Byzantine agreement solution exchanges $\Omega(n^2)$ bits in the worst case.
It has also been shown that any synchronous solution incurs $\Omega(n)$ worst-case latency~\cite{dolev1983authenticated}.
Notably, these two lower bounds have been proven tight over three decades ago: both~\cite{berman1992bit} and~\cite{CoanW92} have presented Byzantine agreement algorithms for constant-sized values with $O(n^2)$ exchanged bits and $O(n)$ latency.
For long $L$-bit ($L \in \Omega(\log n)$) 
values, where the Dolev-Reischuk bound translates to $\Omega(nL+n^2)$, the work of~\cite{chen2021optimal} and~\cite{errorFreeValidated} introduces optimal and near-optimal solutions for strong validity and external validity, respectively. 
All the aforementioned algorithms are \emph{error-free} in the sense that they are (1) secure against a \emph{computationally unbounded} adversary (no cryptography is employed), and (2) correct in \emph{all} executions (no incorrect execution exists even with a negligible probability).
In summary, synchronous Byzantine agreement algorithms offer two primary benefits:
\begin{compactenum}
    \item They are \emph{conceptually simple}.  The ``round-based'' nature of the synchronous model has yielded algorithms that are easy to understand and prove correct.

    \item They are \emph{efficient}. Many powerful solutions have been discovered, culminating in deterministic error-free algorithms with \emph{optimal} $O(n^2)$ exchanged bits and $O(n)$ latency in the worst case.
\end{compactenum}

\smallskip
\noindent \textbf{Partially synchronous Byzantine agreement.}
The main drawback of synchronous algorithms is their fragility. They are not robust to adverse network conditions and thus have limited applicability in practice.
Many real-world applications are built over the Internet (or some other unreliable network), and inevitably suffer from ``periods of asynchrony'', during which correct processes are disconnected.\footnote{It can be tempting to implement synchronous rounds 
by using big timeouts, but this induces slow reactions to crashes.}
On the other hand, while (fully) asynchronous (randomized) Byzantine agreement algorithms could present a robust alternative, they struggle to achieve the same performance, especially without significant cryptography.
(In the asynchronous, full information model against an adaptive adversary, the best error-free optimally resilient Byzantine agreement algorithm has $\tilde{O}(n^{12})$ expected latency~\cite{huang2023byzantine}.)
To cope with sporadic periods of asynchrony, the \emph{partially synchronous} network model was introduced~\cite{DLS88}.
According to this model, the network behaves asynchronously (i.e., with no bound on message delays) up until an unknown point in time GST (Global Stabilization Time) after which it behaves synchronously. \changed{Partially synchronous Byzantine agreement requires $n \geq 3t+1$, even with a cryptographic setup~\cite{DLS88}.}

Partially synchronous Byzantine agreement algorithms have been the subject of intense research ~\cite{bravo2022making,CL02,civit2022byzantine,civit2023every,gueta2019sbft,lewis2022quadratic,naor2019cogsworth,yin2019hotstuff} and a go-to choice in practice.
Notably, these algorithms are much more \emph{error-prone} and \emph{difficult to design} than their synchronous counterparts, owing to their network model; in partial synchrony, there are no clear rounds and perfect failure detection is impossible.
Moreover, optimal partially synchronous solutions are still unknown.
Recently, two near-optimal solutions~\cite{civit2022byzantine,lewis2022quadratic} were presented, achieving $O(n^2\kappa)$ bit complexity (where $\kappa$ is a security parameter).\footnote{In practice, $\kappa \approx 256$ (the size of a hash).}
These algorithms are, however, not error-free as they rely on cryptography (such as threshold signatures~\cite{shoup2000practical}).
The most efficient known error-free solutions~\cite{CGL18,stern2020information,yu2024tetrabft} achieve $O(n^3)$ bit complexity, which presents a linear factor gap to both the lower bound~\cite{dolev1985bounds} and the complexity attainable in synchrony~\cite{berman1992bit,CoanW92}.
Historically, the fundamental differences between the synchronous and the partially synchronous network models have cultivated the belief that, not only is a complexity gap inevitable, but most synchronous Byzantine agreement algorithms have little to no use in partial synchrony. 
For example, the beautiful recursive approach of~\cite{berman1992bit, CoanW92} appears to be unusable without synchrony.
Clearly, synchronous agreement algorithms are unreliable in partial synchrony.
However, does this mean they are useless? Is the synchrony/partial synchrony gap fundamental?

\smallskip
\noindent \textbf{Contributions.}
In this paper, contrary to popular belief, we show that \emph{any} synchronous Byzantine agreement algorithm can be translated to partial synchrony, by introducing a novel transformation we call \name.\footnote{\name stands for ``Optimistic PERseverance'', which we believe is an adequate short description of our transformation.} 
Not only that, but by applying our transformation to efficient synchronous algorithms, we obtain efficient partially synchronous algorithms. 
Concretely, we prove the following theorem.

\begin{theorem} [Main]
Given any $t$-resilient $(t < n / 3)$ deterministic synchronous Byzantine agreement algorithm $\mathcal{A}^S$ with worst-case per-process bit complexity $\mathcal{B}$ and worst-case latency $\mathcal{L}$, $\name(\mathcal{A}^S)$ is a $t$-resilient deterministic partially synchronous Byzantine agreement algorithm with $O(\mathcal{B})$ worst-case per-process bit complexity and $O(\mathcal{L})$ worst-case latency.
\end{theorem}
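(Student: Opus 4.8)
The plan is to build the partially synchronous algorithm $\name(\mathcal{A}^S)$ around a \emph{view-based} structure in the spirit of DLS/HotStuff, but using the synchronous algorithm $\mathcal{A}^S$ as the engine that runs inside each view rather than a bespoke single-shot consensus routine. Concretely, I would have processes proceed through a sequence of views $v = 1, 2, \dots$, each view led by a rotating leader $\process{v}$. Within a view, correct processes emulate a run of $\mathcal{A}^S$: each synchronous ``round'' of $\mathcal{A}^S$ is mapped to a local timeout-driven phase, where a process sends the messages prescribed by $\mathcal{A}^S$, waits for a (timeout-bounded) window, and then applies the state-transition of $\mathcal{A}^S$ treating any message not received in time as ``not sent'' (legitimate in synchrony since a silent process is provably faulty there). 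Because asynchrony before GST can make this emulation produce garbage, the key is that we never \emph{decide} directly on the output of an in-view emulation: instead, when a process would decide $x$ in the emulated run, it attempts to \emph{certify} $x$ through a lightweight all-to-all confirmation step; a process truly decides only on a value that collects sufficient corroboration, and otherwise it abandons the view (on a timeout or on detecting inconsistency), broadcasts its current state, and moves to $v+1$ carrying over a ``safe'' value so that validity and agreement are not violated across views.

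The correctness argument then splits along the usual lines. \textbf{Safety (Agreement).} I would prove that the certification step has the quorum-intersection property — any two values that get certified in any two views must be equal — exploiting $n \ge 3t+1$ so that two sets of $n-t$ confirmations intersect in at least $t+1 \ge 1$ correct processes; combined with a ``carry the locked value forward'' rule at view changes (a process that enters a new view with a certified/locked value forces that value as the effective input, or refuses to help certify anything else), this gives agreement across all views, while agreement \emph{within} a single post-GST view is inherited from the safety of $\mathcal{A}^S$ itself once the emulation is faithful. \textbf{Liveness (Termination).} After GST, message delays are bounded by some $\delta$; I would argue that timeouts, doubled on each view change (or set as a function of an increasing estimate), eventually exceed the true round length $\Theta(\delta)$, so there is a view $v^\star$, reached after GST, whose leader is correct and in which every correct process runs the emulation of $\mathcal{A}^S$ with all correct-to-correct messages delivered on time; faithful emulation plus the termination of $\mathcal{A}^S$ forces all correct processes to reach the decision step within that view, pass certification, and decide. \textbf{Validity.} Strong and external validity are preserved because the emulated $\mathcal{A}^S$ satisfies them round-by-round on its (possibly adversarial) inputs, and the view-change carry-over rule only ever propagates values that were either a correct process's proposal or already certified (hence $\mathsf{valid}(\cdot)$).

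For the \textbf{complexity bounds}, the accounting is the delicate part. Per view, the emulation of $\mathcal{A}^S$ costs each process $O(\mathcal{B})$ bits by construction, the certification and state-broadcast steps add $O(n)$ messages of size $O(\mathcal{B})$ per process (so $O(n\mathcal{B})$, which I'd want to argue is within $O(\mathcal{B})$ per process by the same trick $\mathcal{A}^S$ must already use — i.e., $\mathcal{B} = \Omega(n)$ for any agreement algorithm by Dolev–Reischuk on a per-process basis, so $O(n\mathcal{B})$ total is $O(\mathcal{B})$ amortized per process only if we are careful, which means I may actually need a \emph{view synchronizer} with $O(n^2)$ total / $O(\mathcal{B})$ per-process cost and a bound on the number of failed views before $v^\star$). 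This is where I expect the main obstacle: bounding the number of views that can occur before the ``good'' view $v^\star$ — and hence before decision — by $O(1)$ \emph{after} GST, so that the total cost is $O(\mathcal{B})$ per process and latency $O(\mathcal{L})$, rather than letting the pre-GST views inflate the count. The resolution is that pre-GST views, though unbounded in number, must be shown to incur cost only when a process actually takes a step, and the adversary cannot force correct processes to send more than $O(\mathcal{B})$ bits \emph{total} before GST without also being detected or without the view synchronizer throttling progress; making this quantitative — ideally via an amortized/charging argument tying every pre-GST message to an eventual useful round, or via a synchronizer whose cost telescopes — is the technical heart of the proof and likely occupies the bulk of the paper's analysis section.
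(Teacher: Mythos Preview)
Your high-level picture is right --- view-based, simulate $\mathcal{A}^S$ inside each view, add machinery for safety across views, bound the number of post-GST views --- but several concrete design choices would prevent you from reaching the claimed bounds, and they differ materially from what the paper does.

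\textbf{The leader is the problem.} You rely on a rotating leader and argue liveness once ``a view $v^\star$ whose leader is correct'' arrives. With $t$ Byzantine processes this can take $\Theta(t)$ views after GST, so the per-process bit cost becomes $\Theta(t\cdot\mathcal{B})$, not $O(\mathcal{B})$, and latency becomes $\Theta(t\cdot\mathcal{L})$. The paper is \emph{leaderless}: each view wraps the simulation of $\mathcal{A}^S$ between two instances of asynchronous \emph{graded consensus} (the ``safety guards'') plus a \emph{validation broadcast}. These primitives cost $O(n)$ bits per process (for constant-sized values), and crucially they make the \emph{first} view that starts after GST succeed, so only $O(1)$ views are executed after GST. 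This is what buys $O(\mathcal{B})$.

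\textbf{Safety via graded consensus, not locking.} Your HotStuff-style certify-and-lock step, as you describe it, sends $O(n)$ messages of size $O(\mathcal{B})$ per process, i.e.\ $O(n\mathcal{B})$ per process per view; the handwave that ``$\mathcal{B}=\Omega(n)$ so this is fine'' does not rescue you (you need $O(\mathcal{B})$ per process, and $n\mathcal{B}\ne O(\mathcal{B})$). The paper's mechanism is different and cheaper: the first graded consensus $\mathcal{GC}_1$ ensures that if all correct processes enter a view with the same value (which is guaranteed inductively once anyone has decided), then every process gets grade $1$ and \emph{ignores} the output of the $\mathcal{A}^S$ simulation; the second graded consensus $\mathcal{GC}_2$ allows a decision only if it outputs grade $1$, and its consistency property then forces everyone to carry the decided value into the validation broadcast and hence into the next view. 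No certificates, no locks, no leader, and the graded-consensus and validation-broadcast messages are constant-sized (or hash-sized for long inputs), so the whole wrapper is $O(n)$ bits per process.

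\textbf{Complexity accounting.} You spend effort trying to bound pre-GST cost; in this model that cost is not counted (and is provably unbounded in the worst case). What actually matters is (i) capping the bits a process can send \emph{inside the simulation} at $\mathcal{B}$ even when GST falls mid-simulation (the paper does this explicitly), and (ii) ensuring late processes can \emph{skip} stale views rather than traverse them --- that is what the validation-broadcast ``safe skip'' step is for. Also, $\delta$ is assumed known, so no timeout doubling is needed; fixed-duration rounds of length $\Theta(\delta)$ suffice and are what keep latency at $O(\mathcal{L})$.
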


By taking the seminal work of~\cite{berman1992bit, CoanW92}, achieving optimal $O(n)$ worst-case per-process bit complexity and optimal $O(n)$ latency in synchrony, \name constructs the first partially synchronous worst-case bit-optimal Byzantine agreement algorithm, which is additionally (1) worst-case latency-optimal, (2) error-free, and (3) optimally resilient ($t < n/3$). 
The emergence of this algorithm closes a long-standing open question on the tightness of the Dolev-Reischuk~\cite{dolev1985bounds} bound on the bit complexity of Byzantine agreement in partial synchrony.
We underline that this algorithm's quadratic complexity is optimal in \emph{every} scenario (not only in the worst case): any signature-free algorithm exchanges $\Omega(n^2)$ messages even in failure-free executions~\cite{DBLP:journals/mst/HadzilacosH93}.
A summary of the state-of-the-art partially synchronous algorithms is given in \Cref{fig:complexities_summary_short}.

\begin{table}[th]
\centering
\footnotesize
\begin{tabular}{ |P{2.7cm}|P{2.1cm}|P{1.5cm}|P{2.1cm}|  }
 \hline
 Protocol & Bit complexity & Resilience & Cryptography \\

 \hline
 \hline
Binary \textsc{DBFT}~\cite{CGL18} & $O(n^3)$ & $t < n/3$ & None \\
IT-HotStuff~\cite{stern2020information} & $O(n^3)$ & $t < n/3$ & None \\
 SQuad~\cite{civit2022byzantine,lewis2022quadratic}                 & $O(n^2\kappa)$ & $t < n/3$ & T. Sig \\
 \textbf{This paper}                 & $O(n^2)$ & $t < n/3$ & None \\
 \hline
 \hline
 Lower bound~\cite{dolev1985bounds} & $\Omega(n^2)$ & $t \in \Omega(n)$ & Any \\
 \hline
\end{tabular}
    \caption{Performance of various Byzantine agreement algorithms with constant-sized inputs and $\kappa$-bit security parameter.
    We consider the binary version of \textsc{DBFT}~\cite{CGL18} for fairness since the multi-valued version, which would be $O(n^4)$, solves a stronger problem (i.e., vector consensus \cite{civit2023validity}). 
    All the algorithms have $O(n)$ worst-case latency.}
\label{fig:complexities_summary_short}
\end{table}

Furthermore, we show that our generic \name transformation can be adapted to accommodate for long values (values with $L \in \Omega(\log n)$ bits), thus yielding several new algorithms.
We summarize these new solutions, as well as the state-of-the-art, in \Cref{fig:complexities_summary_long}.
Algorithm L0 proves that \name can produce cryptography-free Byzantine agreement algorithms with optimal $t < n / 3 $ resilience, albeit with a sub-optimal bit complexity of $O(n^2 L)$.
Algorithms L1 and L3 are obtained by applying the \name transformation to the algorithm of~\cite{errorFreeValidated} (with both strong and external validity).
Algorithms L2 and L4 are obtained from transforming the algorithm of~\cite{chen2021optimal} (with only strong validity).
The biggest improvements over state-of-the-art lie in (1) the complexity, namely the removal of the $\textsf{poly}(k)$ factor of \textsc{DARE-Stark}~\cite{civit2023every} and the $n^{0.5}L$ factor of \textsc{DARE}~\cite{civit2023every}, and (2) the reduced (or absent) cryptographic requirements.

\begin{table}[ht]
\centering
\footnotesize
\begin{tabular}{ |P{2.8cm}|P{1.2cm}|P{3.4cm}|P{1.6cm}|P{2.4cm}|  }
 \hline
 Protocol & Validity & Bit complexity & Resilience & Cryptography \\
\hline
\hline
 IT-HotStuff~\cite{stern2020information} & E & $O(n^3L)$ & $t < n/3$ & None \\
 SQuad~\cite{civit2022byzantine,lewis2022quadratic} & S+E        & $O(n^2L + n^2\kappa)$ & $t < n/3$ & T. Sig \\
 \textsc{DARE}~\cite{civit2023every} & S+E & $O(n^{1.5}L + n^{2.5}\kappa)$ & $t < n/3$ & T. Sig \\
 \textsc{DARE-Stark}~\cite{civit2023every} & S+E & $O(nL + n^2 \textsf{poly}(\kappa))$ & $t < n/3$ & T. Sig + STARK \\
\hline
\hline
\textbf{This paper - L0} & S+E & $O(n^2L)$ & $t < n / 3$ & None \\
\textbf{This paper - L1} & S+E & $O(n\log(n)L + n^2\log(n)\kappa)$ & $t < n/3$ & Hash \\
\textbf{This paper - L2} & S & $O(nL + n^2\log(n)\kappa)$ & $t < n/3$ & Hash\\
 \hline
 \hline
\textbf{This paper - L3} & S+E & $O(n\log(n)L + n^2\log n)$ & $t < n/5$ & None \\
\textbf{This paper - L4} & S & $O(nL + n^2\log n)$ & $t < n/5$ & None \\
\hline\hline
Lower bound~\cite{civit2023validity} & Any & $\Omega(nL + n^2)$ & $t \in \Omega(n)$ & Any \\
 \hline
\end{tabular}
    \caption{Performance of partially synchronous Byzantine agreement algorithms with long ($L$-bit) values and $\kappa$-bit security parameter.
    (S stands for ``strong validity'', and E stands for ``external validity''.) We underline that IT-HotStuff~\cite{stern2020information} is not optimized for long inputs.
    All the algorithms have $O(n)$ worst-case latency.}
\label{fig:complexities_summary_long}
\end{table}

Finally, we emphasize that \name can be applied even to randomized synchronous protocols.
In the randomized world, the power of \name is most evident when applied to synchronous protocols tailored for relaxed models, where, for instance, the adversary is static (rather than adaptive, as in the deterministic world) or unable to observe the messages exchanged between correct processes.
(Without any relaxations, the best known partially synchronous protocols are the deterministic ones outlined above.)
Our new results are summarized in Table~\ref{table:state_of_the_art_protocols} of \Cref{sec:consequences}. 
An overview of these results can be found in Table~\ref{table:state_of_the_art_protocols_intro}. 

\begin{table}[h!]
    \footnotesize
    \centering
    \begin{tabular}{|c|c|c|c|c|c|c|c|}
        \hline
        \textbf{Protocol} & \textbf{Relaxation} & \textbf{Security} & \textbf{Resiliency} & \textbf{Latency} & \textbf{Communication} & \textbf{Problem} & \textbf{Synchrony} \\
        \hline 
        KKKSS \cite{KKKSS08,KKKSS10} & Static adv. & $1/\mathit{log}^c(n)$ & $n/(3 + \epsilon)$ & $O(\log^{27}(n))$ & $\tilde{O}(n^2)$ & LE & A \\
        \hline
        $\boldsymbol{\textbf{\name}}($GPV \cite{GPV06}$)$ & Static adv. & Perfect & $n/(3 + \epsilon)$ & $O(\log(n)/\epsilon^2)$ & $O(Ln^2+n^3)$ & S+E & PS \\
        \hline
        $\boldsymbol{\textbf{\name}}($GPV \cite{GPV06}$)$ & Static adv. & Perfect & $n/\log^{1.58}(n)$ & $O(1)$ & $O(Ln+n^3)$ & S+E & PS \\
        \hline
        $\boldsymbol{\textbf{\name}}($KSSV \cite{KSSV06}$)$ & Static adv. & Perfect & $n/(3 + \epsilon)$ & $\mathit{polylog}(n)$ & $O(Ln^2)$ & S+E & PS \\
        \hline
        \hline
        AAKS \cite{AAKS14,AAKS18} & Crash & Perfect & $n/2$ & $\tilde{O}(n)$ & $\tilde{O}(n^2)$ & binary & A \\
        \hline
        $\boldsymbol{\textbf{\name}}($HKO \cite{hajiaghayi2024nearly}$)$ & Omission & Perfect & $n/30$ & $O(\sqrt{n}\log^2(n))$ & $O(n^2\log^3(n))$ & binary & PS \\
        \hline \hline
        BCP \cite{BCP18} & Private chan. & Perfect & $n/3$ & $O(n)$ & $O(n^6)$ & binary & A \\
        \hline
        $\boldsymbol{\textbf{\name}}($AC \cite{AsharovChandramouli24}$)$ & Private chan. & Perfect & $n/3$ & $O(1)$ & $O(Ln^2+n^3 \log^2(n))$ & S+E & PS \\
\hline
    \end{tabular}
    \caption{
        Overview of state-of-the-art fully asynchronous and partially synchronous protocols in relaxed models with computationally unbounded adversary. 
        (LE stands for ``leader election'', S stands for ``strong validity'', and E stands for ``external validity'', binary refers to the Byzantine agreement problem where only 0 and 1 can be proposed/decided.)
        See Table~\ref{table:state_of_the_art_protocols} in \Cref{sec:consequences} for more details. 
    }
\label{table:state_of_the_art_protocols_intro}
\end{table}

\medskip
\noindent \textbf{Roadmap.}
We discuss related work in \Cref{section:related_work}.
In \Cref{section:technical_overview}, we detail our key idea and provide an intuitive overview of \name.
We define the formal system model and preliminaries in \Cref{section:preliminaries}.
We formally present \name, and its main component \block, in \Cref{section:block}.
In \Cref{sec:consequences}, we discuss how \name can be applied to randomized synchronous agreement algorithms.
Finally, we conclude the paper in \Cref{section:conclusion}.
The optional appendix includes all omitted algorithms and proofs.
\section{Related Work}\label{section:related_work}

This section discusses existing results on Byzantine agreement,
including previous attempts at translating synchronous algorithms to weaker network models and common techniques used to achieve agreement.

\smallskip
\noindent \textbf{Byzantine agreement.}
Byzantine agreement~\cite{LSP82} is the problem of agreeing on a common proposal in a distributed system of $n$ processes despite the presence of $t < n$ arbitrary failures.
Byzantine agreement has many variants~\cite{AbrahamAD04,abraham2017brief,Cachin2001,civit2022byzantine,CGG21,CGL18,fitzi2003efficient,KMM03,lamport1983weak,melnyk2018byzantine,PeaseSL80,siu1998reaching,stolz2016byzantine,yin2019hotstuff} depending on its validity property~\cite{aboutvalidity,civit2023validity}.
In this paper, we focus on (arguably) the two most widely employed validity properties, namely \emph{strong validity} ~\cite{abraham2017brief,civit2022byzantine,CGL18,KMM03} and \emph{external validity}~\cite{BKM19,Cachin2001,yin2019hotstuff}.
Byzantine agreement protocols are primarily concerned with two metrics: \emph{latency} and \emph{communication}.
Latency captures the required number of rounds (or message delays) before all correct processes decide.
Communication concerns the information sent by correct processes and can be measured in multiple ways, such as the total number of sent messages, bits, or words.\footnote{Word complexity is a simplification of bit complexity as it deems all values and cryptographic objects to be of constant bit-size.}
In the worst case, deterministic Byzantine agreement is impossible to solve with fewer than $\Omega(t^2)$ messages ~\cite{civit2023all,civit2023validity,dolev1985bounds}, which also applies to words and bits. 
For $L$-bit proposals and $t \in \Omega(n)$, the (best) bit complexity lower bound is $\Omega(nL + n^2)$~\cite{civit2023all,civit2023validity,dolev1985bounds}.
In partial synchrony (and asynchrony), it has been shown~\cite{rambaud2023adaptively} that no unauthenticated (and thus information-theoretic secure) protocol (even randomized) achieves sub-quadratic expected message complexity.
This holds even with secure channels, common random strings (CRS), and non-interactive zero-knowledge (NIZK).

\smallskip
\noindent \textbf{Byzantine agreement in synchrony.} 
Considering only strong validity, there exist word-optimal~\cite{berman1992bit, CoanW92} ($O(n^2)$) and near bit-optimal~\cite{chen2021optimal} ($O(nL + n^2\log n)$) deterministic error-free solutions.
Recently, 
a deterministic error-free solution that achieves near-optimal $O(nL\log n + n^2\log n)$ bit complexity was presented for external validity~\cite{errorFreeValidated}.

\smallskip
\noindent \textbf{Byzantine agreement in partial synchrony.}
In the authenticated setting (with employed cryptography) deterministic non-error-free word-optimal solutions were proposed~\cite{civit2022byzantine,lewis2022quadratic}.
In terms of bit complexity, a deterministic solution with $O(nL + n^2\mathsf{poly}(\kappa))$ bits was recently achieved~\cite{civit2023every}, albeit employing both threshold signatures~\cite{shoup2000practical} and STARK proofs~\cite{ben2018scalable}, which are computationally heavy and induce the $\mathsf{poly}(\kappa)$ factor.
The best deterministic error-free solution has $O(n^3)$ bit-complexity even for the binary case~\cite{CGL18,stern2020information,yu2024tetrabft}.

\smallskip
\noindent \textbf{Randomized Byzantine agreement in asynchrony.}
In the full information model (without private channels) with an adaptive adversary, fully asynchronous Byzantine agreement presents an immense challenge \cite{huang2022byzantine,huang2023byzantine,kimmett2020improvement,king2016byzantine,king2018correction,melnyk2020byzantine}.
A breakthrough came in 2018 with the introduction of the first polynomial algorithm with linear resilience \cite{king2018correction},  correcting an earlier claim~\cite{king2016byzantine} with a technical flaw \cite{melnyk2020byzantine}. 
Yet, the solution proposed in \cite{king2018correction} achieved a resilience no better than $1.14\cdot 10^{-9} \cdot n$. 
Very recently, the first polynomial algorithm achieving optimal resilience for this model was presented~\cite{huang2023byzantine}, building upon a near-optimally resilient result published by the same authors~\cite{huang2022byzantine}. 
However, this \emph{tour de force} comes at the cost of a discouraging expected latency complexity of $\tilde{O}(n^{12})$.

\smallskip
\noindent \textbf{View synchronization}.
In network models where synchrony is only sporadic, such as partial synchrony~\cite{DLS88}, many algorithms rely on a ``view-based'' paradigm.
Essentially, processes communicate and attempt to enter a ``view'' roughly simultaneously (within some fixed time of each other).
Once in a view, processes act as if in a synchronous environment and try to safely achieve progress, typically by electing a leader who drives it.
If the processes suspect that progress is blocked, e.g., due to faulty behavior or asynchrony, they may try to re-synchronize and enter a different view (with a potentially different leader).
The view synchronization problem is closely related to the concept of \emph{leader election}~\cite{C96weakest,CT96}.
View synchronization has been employed extensively in agreement protocols, both for crash~\cite{lamport2001paxos,oki1988viewstamped,ongaro2015raft} and Byzantine faults~\cite{bravo2022making,CL02,civit2022byzantine,civit2023every,gueta2019sbft,lewis2022quadratic,lewis2023fever,lewis2023lumiere,naor2019cogsworth,yin2019hotstuff}.

\smallskip
\noindent \textbf{Synchronizers}.
Synchronizers~\cite{awerbuch1985complexity,devarajan1993correctness,fekete1987modular,lynch1996distributed,raynal1988networks,schneider1982synchronization} are a technique used to simulate a synchronous network in an asynchronous environment.
The main goal is to design efficient distributed algorithms in asynchronous networks by employing their synchronous counterparts.
Examples of successful applications include breadth-first search, maximum flow, and cluster decompositions~\cite{awerbuch1985complexity,awerbuch1985reducing,awerbuch1993near,awerbuch1990sparse,awerbuch1992routing}.
The main limitation of synchronizers is that they work only in the absence of failures~\cite{lynch1996distributed}, or by enriching the model with strong notions of failure detection~\cite{C96weakest,CT96,lynch1996distributed}, such as a \emph{perfect failure detector}, as done in \cite{schneider1982synchronization} for processes that can crash and subsequently recover.
Unfortunately, perfect failure detectors cannot be implemented in asynchronous or partially synchronous networks even for crash faults without further assumptions~\cite{C96weakest,fischer1985impossibility}.
Thus, no general transformation (i.e., for any problem) from synchrony into partial synchrony exists in the presence of failures.
In ~\cite{alistarh2011generating}, the authors introduce an \emph{asynchrony detector} that works on some classes of distributed tasks with crash failures, including agreement, and can be used to transform synchronous algorithms into partially synchronous ones that perform better in optimistic network conditions.
The proposed technique however does not provide any improvement in less-than-ideal network conditions (some asynchrony, or in the worst case) and does not extend to Byzantine failures.

\smallskip
\noindent \textbf{Network agnostic Byzantine agreement and MPC}.
 Network agnostic Byzantine agreement and Multi-party Computation (MPC) have been addressed in various works \cite{ABKL22,ACC22, ACC23,AC23,BCLL23,BKL19,BKL20SMR, BZL20,DE24,LM18,spiegelman21search}. These protocols can tolerate $t_a$ Byzantine failures in an asynchronous network and $t_s$ Byzantine failures in a synchronous network, provided $t_a + 2t_s < n$~\cite{BKL19}. 
 To achieve optimal $n/3$ resiliency in asynchronous environments, $t_s < n/3$ is necessary. Conversely, if the goal is to have $t_s>n/3$, it becomes inevitable to limit the adversary's simulation capabilities due to the FLM impossibility result \cite{FLM85}. Consequently, a majority of these constructions rely on a public key infrastructure (PKI) and a bounded adversary \cite{ABKL22,BCLL23,BKL19,BKL20SMR,BZL20,DE24,LM18,spiegelman21search}.
The work of \cite{abraham2008almost,BCP20} focuses on network agnostic information-theoretic MPC. 
Given that they tackle a fully asynchronous network and an unbounded adversary, they must contend with the complexity of an information-theoretic (shunning) common coin, 
assuming secret channels, to circumvent the FLP impossibility result \cite{fischer1985impossibility}. 
These challenges hinder them from exploring techniques that could potentially be beneficial in partially synchronous settings.

\section{\name: Overview}\label{section:technical_overview}

This section provides the overview of the \name transformation.
First, we introduce the key idea behind \name (\Cref{subsection:name_key_idea}).
Then, we intuitively explain how we realize this idea (\Cref{subsection:key_idea_in_action}).

\subsection{Key Idea} \label{subsection:name_key_idea}

The key idea underlying our \name transformation is to sequentially repeat a synchronous Byzantine agreement algorithm $\mathcal{A}^{S}$ in a series of \emph{views}, one instance of $\mathcal{A}^{S}$ per view, until one succeeds. 
Clearly, when the synchronous algorithm $\mathcal{A}^{S}$ is run in partial synchrony, \emph{a priori} nothing is guaranteed due to the asynchronous period before GST (the network stabilization point).
Before GST, the output of $\mathcal{A}^{S}$ might be unreliable (no agreement nor validity) if there is an output at all (no termination).
However, if $\mathcal{A}^{S}$ is started after GST by all correct processes \emph{nearly simultaneously} (with only a constant delay between processes), the conditions become sufficiently similar to synchrony that $\mathcal{A}^{S}$ can be simulated, thus allowing processes to decide.
In essence, to efficiently translate $\mathcal{A}^S$ from synchrony to partial synchrony, our \name transformation needs to tackle the following challenges:
\begin{compactitem}
    \item \emph{Challenge 1:} Ensuring agreement among correct processes within and across views, i.e., guaranteeing that correct processes do not decide different values despite the unreliability of $\mathcal{A}^{S}$ before GST.

    \item \emph{Challenge 2:} Ensuring a successful simulation of $\mathcal{A}^S$ after GST, thus enabling processes to decide.
    
    \item \emph{Challenge 3:} Preserving the per-process bit complexity and latency of $\mathcal{A}^S$.
\end{compactitem}


\smallskip
\noindent\textbf{Challenge 1: ensuring agreement within and across views.}
Running and deciding from $\mathcal{A}^S$ directly would be risky as $\mathcal{A}^S$ provides no security guarantees if run before GST.
Instead, 
we run $\mathcal{A}^S$ sequentially in between two protocols that act as ``safety guards''.
The job of the first safety guard is to effectively ``disable'' $\mathcal{A}^S$ when appropriate, forcing processes to ignore its (potentially harming) output.
For example, if all correct processes start a view (even before GST) already in agreement, that view's $\mathcal{A}^S$ instance will be disabled by the first safety guard.
The job of the second safety guard is to trigger a decision if it detects agreement after running $\mathcal{A}^S$.
For instance, if all processes obtain the same value after running (or ignoring) $\mathcal{A}^S$, then running the second safety guard will allow correct processes to decide.
Crucially, the two safety guards work in \emph{tandem}: if the second safety guard triggers a decision in some view $V$ then, in all future views $V' > V$, the first safety guard disables $\mathcal{A}^S$, thus preventing any potential disagreement caused by an unreliable output of $\mathcal{A}^S$.
This collaboration between the safety guards is essential for ensuring agreement in our \name transformation.

\smallskip
\noindent\textbf{Challenge 2: ensuring a successful simulation of $\mathcal{A}^S$ after GST.}
To successfully simulate $\mathcal{A}^S$ after GST, we guarantee conditions that are analogous to synchrony.
To this end, we employ a \emph{view synchronization} mechanism~\cite{civit2022byzantine,lewis2022quadratic,lewis2023fever,lewis2023lumiere} to ensure that all correct processes start a view (and its $\mathcal{A}^S$ instance) nearly simultaneously, i.e., within a constant delay $\Delta_{\mathit{shift}}$ of each other.

However, this is not enough to guarantee successful simulation of $\mathcal{A}^S$ as, in synchrony, correct processes start executing $\mathcal{A}^S$ at \emph{exactly} the same time (i.e., without any misalignment).
To tackle the initial distortion, we expand the duration of each round of $\mathcal{A}^S$: specifically, each correct process $p_i$ executes a round of $\mathcal{A}^S$ for exactly $\Delta_{\mathit{shift}} + \delta$ time, where $\delta$ denotes the upper bound on message delays after GST (in partial synchrony).
The $\Delta_{\mathit{shift}} + \delta$ round duration ensures that, after GST, $p_i$ receives each round's messages from \emph{all} correct processes as (1) all correct processes start executing $\mathcal{A}^S$ at most $\Delta_{\mathit{shift}}$ time after $p_i$ (ensured by the view synchronization mechanism), and (2) the message delays are bounded by $\delta$.
(A similar simulation technique is proposed in~\cite{LS22} for error-free synchronous algorithms; to accommodate for cryptography-based algorithms, we provide a general simulation technique, similar to that of~\cite{DBLP:conf/podc/LindellLR02}, in \Cref{section:crux_correctness_complexity_formal}.)



\smallskip
\noindent\textbf{Challenge 3: preserving the complexity of $\mathcal{A}^S$.}
To minimize communication and preserve the per-process bit complexity and latency of $\mathcal{A}^S$ after GST, we bound both (1)  the complexity of each view, and (2) the number of views executed after GST.
To accomplish the first task, we limit the number of bits sent within any view, before or after GST, as follows.
(1) We set the maximum number of bits any process can send when simulating $\mathcal{A}^S$ to the worst-case per-process maximum in synchrony $\mathcal{B}$.
This prevents any correct process from inadvertently exploding its complexity beyond $\mathcal{B}$, even in the presence of asynchrony.
(2) We implement all our view schemes efficiently (i.e., with $O(\mathcal{B})$ per-process bit cost and constant latency), including the safety guards and the view synchronization protocol.


For the second task, we rely on our view synchronization mechanism, which (indirectly) guarantees that by the end of the first view started after GST, all processes will have decided (as that view's $\mathcal{A}^S$ instance will be correctly simulated).
Nonetheless, during periods of asynchrony (i.e., before GST), slow correct processes can fall arbitrarily many views behind fast correct processes. 
After GST, slow processes can catch up by advancing through \emph{all} stale views at a communication cost proportional to the number of stale views, which is costly.
We thus introduce a mechanism to allow processes to catch up by \emph{skipping} any number of stale views while preserving agreement.
At the end of each view (before moving on to the next view), processes run a ``safe skip'' protocol that provides all correct processes (even late ones!) with a safe value to adopt.
This value is guaranteed to be in line with any previous decision, i.e., it preserves agreement.
By adopting this value, late processes can immediately skip all stale views and synchronize with fast processes (by view synchronization), while preserving the safety of \name.
In explicit terms, if we denote the number of views traversed until GST by $V_{\text{GST}}$, the ``safe skip'' protocol ensures all correct processes terminate in $O(1)$ views after GST (by safely skipping views) instead of $O(V_{\text{GST}})$ views (by traversing each stale view).

\smallskip
\noindent \textbf{A note on view numbers.}
We remark that, like in all ``view-based'' work in partial synchrony (that we are aware of), our protocols include the view number in message headers to distinguish messages from different views.
This means that there is an implicit additive term of $\log(V_{\text{GST}})$ bits per message.
Previous work typically considers ``word complexity'' (instead of bit complexity) which assumes that all counters are bounded by, e.g., 64 bits, and implies that $\log(V_{\text{GST}}) \in O(1)$.
For comparative clarity (and simplicity), we also omit the $\log(V_{\text{GST}})$ term in our results throughout this paper.
However, we do not use word complexity, since that would hide much larger (and thus significant) factors such as $L$ and $\kappa$.
Nevertheless, even if we were to consider word complexity, our work still presents the first error-free protocol with $O(n^2)$ total word complexity in the worst case (and not just per view, as in~\cite{stern2020information}).



\subsection{Implementation of the Key Idea} \label{subsection:key_idea_in_action}

To show how \name implements its key idea, we start by introducing a spider-graph-based interpretation of each of \name's views (\Cref{subsubsection:spider_graph}).
Then, we present the structure of each view in \name and use the aforementioned interpretation to show how \name satisfies agreement and termination (\Cref{sec:phases}).

\subsubsection{Interpretation of \name's Views via Spider Graphs} \label{subsubsection:spider_graph}
Let us represent the internal states of correct processes throughout a view of \name using \emph{spider graphs}~\cite{koebe1992new}.
(This interpretation is inspired by~\cite{AttiyaWelch23}.)
A spider graph is a graph with a central clique with $|\mathsf{Value}|$ branches, where each branch is associated with a particular value $v \in \mathsf{Value}$.
(Let $\mathsf{Value}$ denote the set of all values.)
See \Cref{fig:spider-graph} for an example.
Within a branch, the distance from the clique ($c = 0,1,2$) indicates the level of ``confidence'' in the corresponding value.
Roughly, for a given value $v$, $c=0$ implies that at least one process holds $v$, $c=1$ implies that all (correct) processes hold $v$ (the system is ``convergent'' on $v$), and $c=2$ implies that the system is $v$\emph{-valent}~\cite{fischer1985impossibility} (the only possible decidable value will forever be $v$).
Consequently, each process starts any view of \name with confidence $c = 0$ in its proposal, and, if it reaches some position $(v, c = 2)$ within the view, it decides $v$.





\begin{figure}[h]
\centering
\begin{subfigure}{0.28\textwidth}
    \centering
    \includegraphics[width=0.9\textwidth]{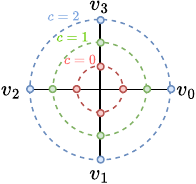}
    \caption{Positions in a spider graph.}
    \label{fig:spider-graph}
\end{subfigure}
\hfill
\begin{subfigure}{0.28\textwidth}
    \centering
    \includegraphics[width=0.9\textwidth]{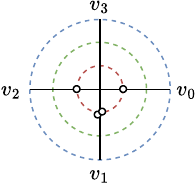}
    \caption{A divergent configuration.}
    \label{fig:divergent_config}
\end{subfigure}
\hfill
\begin{subfigure}{0.28\textwidth}
    \centering
    \includegraphics[width=0.9\textwidth]{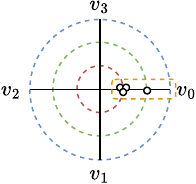}
    \caption{A convergent configuration.}
    \label{fig:convergent_config}
\end{subfigure}

\caption{Interpration of \name's views using spider graphs for $|\mathsf{Value}| = 4$. 
    In \Cref{fig:spider-graph}, all possible positions are represented with circles. 
    \Cref{fig:divergent_config,fig:convergent_config} illustrate examples of divergent and convergent configurations, respectively; each process is represented with a circle.
    }
\label{fig:graph_and_configurations}
\end{figure}

A set of positions of correct processes is called a \emph{configuration}.
We distinguish two types of configurations:
\begin{compactitem}
    \item \emph{Convergent:} all processes are on the same branch. 
    \item \emph{Divergent:} not all processes are on the same branch, and the maximum confidence is $c=0$.
\end{compactitem}
Note that the separation above is not exhaustive.
Concretely, it excludes {inconsistent} configurations where some process has high confidence in a value ($v, c \geq 1$) and some other process disagrees ($v' \neq v$).
Importantly, our \name transformation avoids such inconsistent configurations: if there exists a correct process with high confidence $c \geq 1$ in some value $v$, then it is guaranteed that \emph{all} correct processes hold $v$.
In summary, in a convergent configuration, all processes hold the same value, knowingly ($c > 0$) or unknowingly ($c = 0$), and in a divergent configuration, at least two processes disagree, but only with low confidence ($c = 0$).
\Cref{fig:divergent_config,fig:convergent_config} illustrate divergent and convergent configurations, respectively.







\subsubsection{Structure of \name's Views.}
\label{sec:phases}

The structure of each \name's view is illustrated in \Cref{fig:view_example}.
A view has four main components:
\begin{compactenum}
    \item \emph{First safety guard:}
    this component ensures that if the system is convergent at the start of the view, all correct processes ignore the (unreliable-before-GST) $\mathcal{A}^S$ instance.
    
    
    \item \emph{$\mathcal{A}^S$ simulation:}
    as the simulation can successfully be performed after GST, this component ensures that the system converges after GST.
    
    
    \item \emph{Second safety guard:}
    the second safety guard ensures that if the system is convergent before this step, all correct processes decide (and remain convergent).\footnote{Strictly speaking, the second safety guard not only ensures safety but also provides termination (liveness) in the sense that processes decide at this step. However, in the bigger picture, its role in \emph{achieving} liveness is minor: the ``heavy-lifting'' of converging is done by the preceding $\mathcal{A}^S$ simulation.}
    
    
    \item \emph{Safe skip:}
    this component enables lagging processes to skip views and immediately ``jump'' ahead to the next view.
    Importantly, if the system is convergent, the convergence is preserved.\footnote{Note that without this component, we could technically rely on just the first and second safety guards to guarantee safety transitively across views. 
    However, in that case (late) processes would have to traverse every (stale) view sequentially after GST, which is undesirable as previously explained.}
    
\end{compactenum}

\begin{figure}[ht]
\centering
\includegraphics[width=0.8\textwidth]{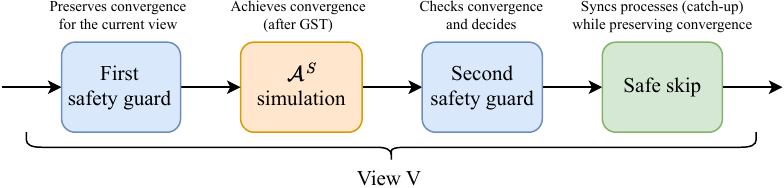}
\caption{Structure of a view in \name.}
\label{fig:view_example}
\end{figure}

Finally, let us informally show why \name satisfies the agreement and termination properties.

\begin{figure}[h]
\centering
\includegraphics[width=0.75\textwidth]{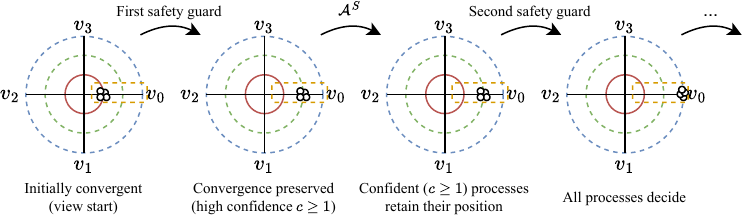}
\caption{Illustration of the preservation of (initial) convergence in a view.}
\label{fig:convergence_preservation_example}
\end{figure}

\smallskip
\noindent \textbf{Satisfying agreement.}
Suppose a correct process $p_i$ decides some value $v_0$ in some view $V$.
Hence, the second safety guard of view $V$ ensures that the system is convergent on $v_0$ after $p_i$'s decision.
Moreover, the safe skip component of view $V$ guarantees that all correct processes start view $V + 1$ with $v_0$ (i.e., convergence is preserved).
Therefore, the first safety guard of view $V + 1$ ensures that (1) the convergence is preserved, and (2) the output of the $\mathcal{A}^S$ simulation is ignored (this is crucial, as, before GST, the simulation can break the established convergence).
Therefore, the system remains convergent after the simulation step, which implies that all correct processes decide $v_0$ after executing the second safety guard.
We illustrate this agreement-preserving mechanism in \Cref{fig:convergence_preservation_example}.

\smallskip
\noindent \textbf{Satisfying termination.}
Let $\vfinal$ be the first view started after GST.
Suppose the system is divergent at the start of $\vfinal$.
Hence, the first safety guard might preserve the existing divergence (since the first safety guard only preserves already-existing convergence).
However, as a successful simulation of $\mathcal{A}^S$ can be performed after GST, this simulation step \emph{achieves} convergence.
Therefore, the system is convergent before starting the second safety guard, which then ensures that all correct processes decide.
We illustrate this convergence-achieving concept in \Cref{fig:achieving_convergence_example}.

\begin{figure}[ht]
\centering
\includegraphics[width=0.75\textwidth]{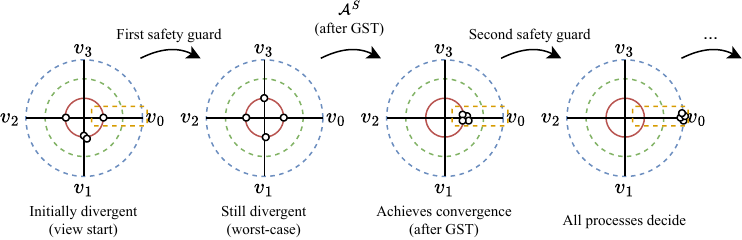}
\caption{Illustration of a view achieving convergence after GST.
}
\label{fig:achieving_convergence_example}
\end{figure}

\section{Preliminaries} \label{section:preliminaries}

\noindent \textbf{Processes.}
We consider a static system $\Pi = \{p_1, ..., p_n\}$ of $n$ processes that communicate by sending messages; each process acts as a deterministic state machine.
Each process has its local clock.
At most $0 < t < n / 3$ processes are Byzantine and controlled by the adaptive adversary capable of corrupting processes throughout the entire execution (and not only at its beginning).
(If $t \geq n / 3$, Byzantine agreement cannot be solved in partial synchrony~\cite{DLS88}.)
A Byzantine process behaves arbitrarily, whereas a non-Byzantine process behaves according to its state machine.
Byzantine processes are said to be \emph{faulty}; non-faulty processes are said to be \emph{correct}. 
The adversary is aware of the internal states of all processes (the full information model) and, unless stated otherwise, is computationally unbounded.
Lastly, we assume that local steps of processes take zero time, as the time needed for local computation is negligible compared to message delays.

\smallskip
\noindent \textbf{Values.} We denote the set of values by $\mathsf{Value}$.
Recall that $\mathsf{valid}: \mathsf{Value} \to \{\mathit{true}, \mathit{false}\}$ indicates whether or not a value is valid.
For the sake of simplicity, unless otherwise stated, we consider only constant-sized values ($L \in O(1)$ bits) throughout the rest of the main body of the paper.
(All results, including for long values, can be found in the appendix.)

\smallskip
\noindent \textbf{Communication network.} We assume a point-to-point communication network. Furthermore, we assume that the communication network is \emph{reliable}: if a correct process sends a message to a correct process, the message is eventually received.
Finally, we assume authenticated channels: the receiver of a message is aware of the sender's identity.

\smallskip
\noindent \textbf{Partial synchrony.}
We consider the standard partially synchronous environment~\cite{DLS88}.
Specifically, there exists an unknown Global Stabilization Time (GST) and a positive duration $\delta$ such that message delays are bounded by $\delta$ after GST: a message sent at time $\tau$ is received by time $\max(\tau, \text{GST}) + \delta$.
We assume that $\delta$ is known.
Moreover, we assume that all correct processes start executing their local algorithm before GST.
Finally, the local clocks of processes may drift arbitrarily before GST, but do not drift thereafter.

\smallskip
\noindent \textbf{Complexity of synchronous Byzantine agreement.}
Let $\mathcal{A}^S$ be any synchronous Byzantine agreement algorithm, and let $\mathit{execs}(\mathcal{A}^S)$ be the set of executions of $\mathcal{A}^S$.
The bit complexity of any correct process $p_i$ in any execution $\mathcal{E} \in \mathit{execs}(\mathcal{A}^S)$ is the number of bits sent by $p_i$ in $\mathcal{E}$.
The per-process bit complexity $\mathit{pbit}(\mathcal{A}^S)$ of $\mathcal{A}^S$ is then defined as
\begin{equation*}
    \mathit{pbit}(\mathcal{A}^S) = \max_{\mathcal{E} \in \mathit{execs}(\mathcal{A}^S), p_i \in \Pi} \bigg\{\text{the bit complexity of } p_i \text{ in } \mathcal{E} \bigg\}.
\end{equation*}
The latency of any execution $\mathcal{E} \in \mathit{execs}(\mathcal{A}^S)$ is the number of synchronous rounds before all correct processes decide in $\mathcal{E}$.
The latency $\mathit{latency}(\mathcal{A}^S)$ of $\mathcal{A}^S$ is then defined as
\begin{equation*}
    \mathit{latency}(\mathcal{A}^S) = \max_{\mathcal{E} \in \mathit{execs}(\mathcal{A}^S)} \bigg\{\text{the latency of } \mathcal{E} \bigg\}.
\end{equation*}

\smallskip
\noindent \textbf{Complexity of partially synchronous Byzantine agreement.}
Let $\mathcal{A}^{PS}$ be any partially synchronous Byzantine agreement algorithm, and let $\mathit{execs}(\mathcal{A}^{PS})$ be the set of executions of $\mathcal{A}^{PS}$.
The bit complexity of any correct process $p_i$ in any execution $\mathcal{E} \in \mathit{execs}(\mathcal{A}^{PS})$ is the number of bits sent by $p_i$ during the time period $[\text{GST}, \infty)$.\footnote{The number of bits any correct process sends before GST is unbounded in the worst case~\cite{spiegelman21search}.}
The per-process bit complexity $\mathit{pbit}(\mathcal{A}^{PS})$ of $\mathcal{A}^{PS}$ is then defined as
\begin{equation*}
    \mathit{pbit}(\mathcal{A}^{PS}) = \max_{\mathcal{E} \in \mathit{execs}(\mathcal{A}^{PS}), p_i \in \Pi} \bigg\{\text{the bit complexity of } p_i \text{ in } \mathcal{E} \bigg\}.
\end{equation*}
The latency of any execution $\mathcal{E} \in \mathit{execs}(\mathcal{A}^{PS})$ is equal to \changed{$\max(\frac{1}{\delta}(\tau^* - \text{GST}), 0)$}, where $\tau^*$ is the first time by which all correct processes decide in $\mathcal{E}$.
The latency $\mathit{latency}(\mathcal{A}^{PS})$ of $\mathcal{A}^{PS}$ is then defined as
\begin{equation*}
    \mathit{latency}(\mathcal{A}^{PS}) = \max_{\mathcal{E} \in \mathit{execs}(\mathcal{A}^{PS})} \bigg\{\text{the latency of } \mathcal{E} \bigg\}.
\end{equation*}

\smallskip
\noindent \textbf{Complexity of asynchronous algorithms.}
In our \name transformation, we utilize two asynchronous algorithms (see~\Cref{section:block}).
Therefore, we define the complexity of asynchronous algorithms as well.
Let $\mathcal{A}^{A}$ be any asynchronous algorithm, and let $\mathit{execs}(\mathcal{A}^{A})$ be the set of executions of $\mathcal{A}^{A}$.
The bit complexity of any correct process $p_i$ in any execution $\mathcal{E} \in \mathit{execs}(\mathcal{A}^{A})$ is the number of bits sent by $p_i$ in $\mathcal{E}$.
The per-process bit complexity $\mathit{pbit}(\mathcal{A}^{A})$ of $\mathcal{A}^{A}$ is then defined as
\begin{equation*}
    \mathit{pbit}(\mathcal{A}^{A}) = \max_{\mathcal{E} \in \mathit{execs}(\mathcal{A}^{A}), p_i \in \Pi} \bigg\{\text{the bit complexity of } p_i \text{ in } \mathcal{E} \bigg\}.
\end{equation*}

For latency, we adopt the standard definition of~\cite{AW04,AttiyaWelch23}.
Formally, a timed execution is an execution in which non-decreasing non-negative integers (``times'') are assigned to the events, with no two events by the same process having the same time.
The first event of each correct process is assigned 0.
For each timed execution, we consider the prefix ending when the last correct process terminates, and then scale the times so that the maximum time that elapses between the sending and receipt of any message between correct processes is $1$.
Therefore, we define the latency $\mathit{latency}(\mathcal{A}^A)$ of $\mathcal{A}^A$ as the maximum time, over all such scaled timed execution prefixes, assigned to the last event.
The latency of an asynchronous algorithm is also known as the number of asynchronous rounds that the algorithm requires~\cite{DBLP:journals/jacm/MostefaouiMR15}.
We use these two terms interchangeably.

\section{\block: The View Logic of \name} \label{section:block}

This section formally introduces \block, a distributed protocol run by processes in every view of the \name transformation.
First, we present \block's formal specification (\Cref{subsection:block_problem_definition}).
Second, we introduce the building blocks of \block: graded consensus and validation broadcast (\Cref{section:building_blocks}).
Third, we present \block's pseudocode and a proof sketch (\Cref{subsection:block_pseudocode}).
Finally, we explain how to obtain \name from \block (\Cref{section:from_block_to_name}).

\subsection{\block's Specification} \label{subsection:block_problem_definition}

Module~\ref{mod:crux} captures  \block's specification.
An instance of \block is parameterized with two time durations: $\Delta_{\mathit{shift}}$ and $\Delta_{\mathit{total}}$.
Moreover, each correct process $p_i$ is associated with its default value $\mathsf{def}(p_i)$. 
In brief, \block guarantees the safety of \name always (even if \block is run before GST), and it ensures the liveness of \name (by guaranteeing synchronicity) after GST, provided that all correct processes enter \block within $\Delta_{\mathit{shift}}$ of each other and do not abandon \block within $\Delta_{\mathit{total}}$ of entering.


\begin{module} [h]
\caption{\block}
\label{mod:crux}
\footnotesize
\begin{algorithmic}[1]

\Statex \textbf{Parameters:}
\Statex \hskip2em $\mathsf{Time\_Duration}$ $\Delta_{\mathit{shift}}$ \BlueComment{common for all processes}
\Statex \hskip2em $\mathsf{Time\_Duration}$ $\Delta_{\mathit{total}}$ \BlueComment{common for all processes}
\Statex \hskip2em $\mathsf{Value}$ $\mathsf{def}(p_i)$ such that $\mathsf{valid}\big( \mathsf{def}(p_i) \big) = \mathit{true}$, for every correct process $p_i$ \BlueComment{each process $p_i$ has its default value}

\medskip
\Statex \textbf{Events:}
\Statex \hskip2em \emph{request} $\mathsf{propose}(v \in \mathsf{Value})$: a process proposes value $v$.
\Statex \hskip2em \emph{request} $\mathsf{abandon}$: a process abandons (i.e., stops participating in) \block.
\Statex \hskip2em \emph{indication} $\mathsf{validate}(v' \in \mathsf{Value})$: a process validates value $v'$.
\Statex \hskip2em \emph{indication} $\mathsf{decide}(v' \in \mathsf{Value})$: a process decides value $v'$.
\Statex \hskip2em \emph{indication} $\mathsf{completed}$: a process is notified that \block has completed.

\medskip 
\Statex \textbf{Notes:} 
\Statex \hskip2em We assume that every correct process proposes at most once and it does so with a valid value. 
We do not assume that all correct processes \hphantom{aallll}propose.
Note that a correct process can validate a value from \block even if (1) it has not previously proposed, or (2) it has previously \hphantom{aallll}abandoned \block, or (3) it has previously received a $\mathsf{completed}$ indication.
Moreover, a correct process can receive both a $\mathsf{validate}(\cdot)$ and \hphantom{aallll}a $\mathsf{decide}(\cdot)$ indication from \block.
Finally, observe that two correct processes can validate (but not decide!) different values.

\medskip 
\Statex \textbf{Properties:}

\Statex \hskip2em  \emph{Strong validity:} If all correct processes that propose do so with the same value $v$, then no correct process decides or validates any value $v' \neq v$.

\smallskip
\Statex \hskip2em \emph{External validity:} If any correct process decides or validates any value $v$, then $\mathsf{valid}(v) = \mathit{true}$.
    


\smallskip 
\Statex \hskip2em \emph{Agreement:} If any correct process decides a value $v$, then no correct process validates or decides any value $v' \neq v$.
    
\smallskip 
\Statex \hskip2em \emph{Integrity:} No correct process decides or receives a $\mathsf{completed}$ indication unless it has previously proposed.

\smallskip 
\Statex \hskip2em \emph{Termination:} If all correct processes propose and no correct process abandons \block, then every correct process eventually receives a $\mathsf{completed}$ \hphantom{allllll}indication.

\smallskip 
\Statex \hskip2em \emph{Totality:} If any correct process receives a $\mathsf{completed}$ indication at some time $\tau$, then every correct process validates a value by time \hphantom{allllll}$\max(\tau, \text{GST}) + 2\delta$.
    
\smallskip 
\Statex \hskip2em \emph{Synchronicity:} Let $\tau$ denote the first time a correct process proposes to \block.
If (1) $\tau \geq \text{GST}$, (2) all correct processes propose by time \hphantom{allllll}$\tau + \Delta_{\mathit{shift}}$, and (3) no correct process abandons \block by time $\tau + \Delta_{\mathit{total}}$, then every correct process decides by time $\tau + \Delta_{\mathit{total}}$.

\smallskip 
\Statex \hskip2em \emph{Completion time:} If a correct process $p_i$ proposes to \block at some time $\tau \geq \text{GST}$, then $p_i$ does not receive a $\mathsf{completed}$ indication by time \hphantom{allllll}$\tau + \Delta_{\mathit{total}}$.

\end{algorithmic}
\end{module}


\subsection{\block's Building Blocks} \label{section:building_blocks} 

In this subsection, we formally present two building blocks that \block utilizes in a ``closed-box'' manner.
Namely, we introduce graded consensus (\Cref{subsection:graded_consensus_building_blocks}) and validation broadcast (\Cref{subsection:validation_broadcast_building_blocks}).
Roughly, graded consensus is used to implement the \emph{safety guards}, while validation broadcast fulfills the role of the ``safe skip'' mechanism (see \Cref{section:technical_overview}).

\subsubsection{Graded Consensus (Module~\ref{mod:graded-consensus})} \label{subsection:graded_consensus_building_blocks}

Graded consensus~\cite{abraham2022gradecast,AttiyaWelch23,DBLP:journals/siamcomp/FeldmanM97} (also known as Adopt-Commit~\cite{DBLP:journals/jpdc/Delporte-Gallet21,DBLP:journals/siamcomp/MostefaouiRRT08}) is a problem in which processes propose their input value and decide on some value with some binary grade.
Graded consensus and similar primitives~\cite{AttiyaWelch23} are often employed in consensus protocols~\cite{Abraham22crusader,gradecastphaseking}.
In brief, the graded consensus primitive ensures agreement among the correct processes only if some correct process has decided a value with (higher) grade $1$.
If no such correct process exists, graded consensus does not guarantee agreement.
(Thus, graded consensus is a weaker primitive than Byzantine agreement.)
In the context of \Cref{sec:phases}, graded consensus is the core primitive of the first and second safety guards.
Importantly, to utilize a graded consensus algorithm in \block, the algorithm must have a known worst-case latency. This requirement is satisfied by all graded consensus implementations that we consider in this paper, as well as all implementations that we are aware of in general.

\begin{module} [h]
\caption{Graded consensus}
\label{mod:graded-consensus}
\footnotesize
\begin{algorithmic}[1]

\Statex \textbf{Events:}

\Statex \hskip2em \emph{request} $\mathsf{propose}(v \in \mathsf{Value})$: a process proposes value $v$.

\Statex \hskip2em \emph{request} $\mathsf{abandon}$: a process abandons (i.e., stops participating in) graded consensus.

\Statex \hskip2em \emph{indication} $\mathsf{decide}(v' \in \mathsf{Value}, g' \in \{0, 1\})$: a process decides value $v'$ with grade $g'$.

\medskip 
\Statex \textbf{Notes:} 
\Statex \hskip2em We assume that every correct process proposes at most once and it does so with a valid value.
We do not assume that all correct processes \hphantom{allllll}propose.

\medskip 
\Statex \textbf{Properties:}
\Statex \hskip2em \emph{Strong validity:} If all correct processes that propose do so with the same value $v$ and a correct process decides a pair $(v', g')$, then $v' = v$ \hphantom{allllll}and $g' = 1$.

\smallskip 
\Statex \hskip2em \emph{External validity:} If any correct process decides a pair $(v', \cdot)$, then $\mathsf{valid}(v') = \mathit{true}$.

\smallskip 
\Statex \hskip2em \emph{Consistency:} If any correct process decides a pair $(v, 1)$, then no correct process decides any pair $(v' \neq v, \cdot)$.

\smallskip
\Statex \hskip2em \emph{Integrity:} No correct process decides more than once.

\smallskip 
\Statex \hskip2em \emph{Termination:} If all correct processes propose and no correct process abandons graded consensus, then every correct process eventually decides.
\end{algorithmic}
\end{module}

\subsubsection{Validation Broadcast (Module~\ref{mod:validation-broadcast})} \label{subsection:validation_broadcast_building_blocks}
Validation broadcast is a novel primitive that we introduce to allow processes to skip views in \name while preserving its safety.
In the context of \Cref{sec:phases}, validation broadcast plays the role of the ``safe skip'' component.
Intuitively, processes broadcast their input value and eventually validate some value.
In a nutshell, validation broadcast ensures that, if all correct processes broadcast the same value, no correct process validates another value.
(This preserves convergence among views.)
Furthermore, if any correct process completes the validation broadcast, all correct processes (even those that have not broadcast) will validate some value shortly after (in two message delays).
(This enables catch-up of processes arbitrarily far behind.)

\begin{module} [h]
\caption{Validation broadcast}
\label{mod:validation-broadcast}
\footnotesize
\begin{algorithmic}[1]

\Statex \textbf{Parameters:}
\Statex \hskip2em $\mathsf{Value}$ $\mathsf{def}(p_i)$ \BlueComment{each process $p_i$ has its default value}

\medskip
\Statex \textbf{Events:}
\Statex \hskip2em \emph{request} $\mathsf{broadcast}(v \in \mathsf{Value})$: a process broadcasts value $v$.

\Statex \hskip2em \emph{request} $\mathsf{abandon}$: a process abandons (i.e., stops participating in) validation broadcast.

\Statex \hskip2em \emph{indication} $\mathsf{validate}(v' \in \mathsf{Value})$: a process validates value $v'$.

\Statex \hskip2em \emph{indication} $\mathsf{completed}$: a process is notified that validation broadcast has completed.

\medskip 
\Statex \textbf{Notes:} 
\Statex \hskip2em We assume that every correct process broadcasts at most once and it does so with a valid value.
We do not assume that all correct processes \hphantom{allllll}broadcast.
Note that a correct process might validate a value even if (1) it has not previously broadcast, or (2) it has previously abandoned the \hphantom{allllll}primitive, or (3) it has previously received a $\mathsf{completed}$ indication. 
Moreover, a correct process may validate multiple values, and two correct \hphantom{allllll}processes may validate different values.

\medskip \Statex \textbf{Properties:}
\Statex \hskip2em \emph{Strong validity:} If all correct processes that broadcast do so with the same value $v$, then no correct process validates any value $v' \neq v$.

\smallskip 
\Statex \hskip2em \emph{Safety:} If a correct process $p_i$ validates a value $v'$, then a correct process has previously broadcast $v'$ or $v' = \mathsf{def}(p_i)$.

\smallskip 
\Statex \hskip2em \emph{Integrity:} No correct process receives a $\mathsf{completed}$ indication unless it has previously broadcast a value.

\smallskip 
\Statex \hskip2em \emph{Termination:} If all correct processes broadcast and no correct process abandons validation broadcast, then every correct process eventually \hphantom{allllll}receives a $\mathsf{completed}$ indication.

\smallskip 
\Statex \hskip2em \emph{Totality:} If any correct process receives a $\mathsf{completed}$ indication at some time $\tau$, then every correct process validates a value by time \hphantom{allllll}$\max(\tau, \text{GST}) + 2\delta$.
\end{algorithmic}
\end{module}

\subsection{\block's Pseudocode} \label{subsection:block_pseudocode}

\block's pseudocode is presented in \Cref{algorithm:block}, and it consists of three independent tasks.
Moreover, a flowchart of \block is depicted in \Cref{fig:block_overview}.
\block internally utilizes the following three primitives: (1) asynchronous graded consensus with two instances $\mathcal{GC}_1$ and $\mathcal{GC}_2$ (line~\ref{line:asynchronous_graded_consensus}), (2) synchronous Byzantine agreement with one instance $\mathcal{A}^S$ (line~\ref{line:synchronous_byzantine_agreement}), and (3) validation broadcast with one instance $\mathcal{VB}$ (line~\ref{line:asynchronous_validation_broadcast}).



\begin{algorithm}
\caption{\block: Pseudocode (for process $p_i$)}
\footnotesize
\label{algorithm:block}
\begin{algorithmic} [1]

\State \textbf{Uses:}
\State \hskip2em Asynchronous graded consensus, \textbf{instances} $\mathcal{GC}_1$, $\mathcal{GC}_2$ \BlueComment{see \Cref{subsection:graded_consensus_building_blocks}} \label{line:asynchronous_graded_consensus}
\State \hskip2em Synchronous Byzantine agreement, \textbf{instance} $\mathcal{A}^S$ \BlueComment{the synchronous agreement algorithm used as a closed-box} \label{line:synchronous_byzantine_agreement}
\State \hskip2em Asynchronous validation broadcast, \textbf{instance} $\mathcal{VB}$ \BlueComment{$\mathcal{VB}$ is initialized with $\mathsf{def}(p_i)$; see \Cref{subsection:validation_broadcast_building_blocks}} \label{line:asynchronous_validation_broadcast}

\medskip
\State \textbf{Comment:}
\State \hskip2em Whenever $p_i$ measures time, it does so locally. Recall that, as $p_i$'s local clock drifts arbitrarily before GST (see \Cref{section:preliminaries}), and $p_i$ \hphantom{aaaal}accurately measures time after GST.

\medskip
\State \textbf{Constants:}
\State \hskip2em $\Delta_1 = \mathit{latency}(\mathcal{GC}_1) \cdot \delta$ \BlueComment{$\mathit{latency}(\mathcal{GC}_1)$ denotes the number of asynchronous rounds of $\mathcal{GC}_1$ (see \Cref{section:preliminaries})}
\State \hskip2em $\Delta_2 = \mathit{latency}(\mathcal{GC}_2) \cdot \delta$ \BlueComment{$\mathit{latency}(\mathcal{GC}_2)$ denotes the number of asynchronous rounds of $\mathcal{GC}_2$ (see \Cref{section:preliminaries})}
\State \hskip2em $\Delta_{\mathit{sync}} = \Delta_{\mathit{shift}} + \delta$ 
\State \hskip2em $\mathcal{B} = \mathit{pbit}(\mathcal{A}^S)$ \BlueComment{$\mathcal{B}$ denotes the per-process bit complexity of $\mathcal{A}^S$ (see \Cref{section:preliminaries})}
\State \hskip2em $\mathcal{R} = \mathit{latency}(\mathcal{A}^S)$ \BlueComment{$\mathcal{R}$ denotes the number of synchronous rounds of $\mathcal{A}^S$ (see \Cref{section:preliminaries})}


\medskip
\State \textbf{Parameters:}
\State \hskip2em $\Delta_{\mathit{shift}}$ = any value (configurable) \label{line:delta_shift}

\State \hskip2em $\Delta_{\mathit{total}} = ( \Delta_{\mathit{shift}} + \Delta_1 ) +  (\mathcal{R} \cdot \Delta_{\mathit{sync}}) +  (\Delta_{\mathit{shift}} + \Delta_2 )$ \label{line:delta_total}


\medskip
\State \textbf{Task 1:}
\State \hskip2em \textbf{When to start:} upon an invocation of a $\mathsf{propose}(v \in \mathsf{Value})$ request \label{line:start_task_1}

\smallskip
\State \hskip2em \textbf{Steps:}
\State \hskip4em 1) Process $p_i$ proposes $v$ to $\mathcal{GC}_1$. Process $p_i$ runs $\mathcal{GC}_1$ until (1) $\Delta_{\mathit{shift}} + \Delta_1$ time has elapsed since $p_i$ proposed, and (2) $p_i$ decides from \hphantom{aaaaaaaaaaa}$\mathcal{GC}_1$.
Let $(v_1, g_1)$ be $p_i$'s decision from $\mathcal{GC}_1$. \label{line:step_gc_1}
 
\smallskip
\State \hskip4em 2) Process $p_i$ proposes $v_1$ to $\mathcal{A}^S$. Process $p_i$ runs (i.e., simulates) $\mathcal{A}^S$ in the following way: (1) $p_i$ executes $\mathcal{A}^S$ for exactly $\mathcal{R}$ rounds, (2) \hphantom{aaaaaaaaaaa}each round lasts for exactly $\Delta_{\mathit{sync}}$ time, and (3) $p_i$ does not send more than $\mathcal{B}$ bits.
Let $v_{A}$ be $p_i$'s decision from $\mathcal{A}^S$. 
If $p_i$ did not \hphantom{aaaaaaaaaaa}decide in time (i.e., there is no decision after running $\mathcal{A}^S$ for $\mathcal{R}$ rounds), then $v_{A} \gets \bot$. \label{line:step_ba}

\smallskip
\State \hskip4em 3) Process $p_i$ initializes a local variable $\mathit{est}_i$.
If $g_1 = 1$, then $\mathit{est}_i \gets v_1$.
Else if $v_{A} \neq \bot$ and $\mathsf{valid}(v_{A}) = \mathit{true}$, then $\mathit{est}_i \gets v_{A}$.
\hphantom{aaaaaaaaaaa}Else, when neither of the previous two cases applies, then $\mathit{est}_i \gets v$. \label{line:step_est}

\smallskip
\State \hskip4em 4) Process $p_i$ proposes $\mathit{est}_i$ to $\mathcal{GC}_2$. Process $p_i$ runs $\mathcal{GC}_2$ until (1) $\Delta_{\mathit{shift}} + \Delta_2$ time has elapsed since $p_i$ proposed, and (2) $p_i$ decides \hphantom{aaaaaaaaaaa}from $\mathcal{GC}_2$.
Let $(v_2, g_2)$ be $p_i$'s decision from $\mathcal{GC}_2$. \label{line:step_gc_2}

\smallskip
\State \hskip4em 5) If $g_2 = 1$, then process $p_i$ triggers $\mathsf{decide}(v_2)$. \BlueComment{process $p_i$ decides from \block} \label{line:step_decide}

\smallskip
\State \hskip4em 6) Process $p_i$ broadcasts $v_2$ via $\mathcal{VB}$, and it runs $\mathcal{VB}$ until it receives a $\mathsf{completed}$ indication from $\mathcal{VB}$. \label{line:step_validation}

\smallskip
\State \hskip4em 7) Process $p_i$ triggers $\mathsf{completed}$. \BlueComment{process $p_i$ completes \block} \label{line:step_complete}

\medskip
\State \textbf{Task 2:}
\State \hskip2em \textbf{When to start:} upon an invocation of an $\mathsf{abandon}$ request \label{line:start_task_2}
\State \hskip2em \textbf{Steps:}
\State \hskip4em 1) Process $p_i$ stops executing Task 1, i.e., process $p_i$ invokes an $\mathsf{abandon}$ request to $\mathcal{GC}_1$, $\mathcal{GC}_2$ and $\mathcal{VB}$ and stops running $\mathcal{A}^S$ (if it is \hphantom{aaaaaaaaaaa}currently doing so). \label{line:step_stop_task_1}

\medskip
\State \textbf{Task 3:}
\State \hskip2em \textbf{When to start:} upon a reception of a $\mathcal{VB}.\mathsf{validate}(v' \in \mathsf{Value})$ indication \label{line:start_task_3}
\State \hskip2em \textbf{Steps:}
\State \hskip4em 1) Process $p_i$ triggers $\mathsf{validate}(v')$. \BlueComment{process $p_i$ validates from \block} \label{line:step_validate_from_block}

\end{algorithmic}
\end{algorithm}

\smallskip
\noindent \textbf{Values of the $\Delta_{\mathit{shift}}$ and $\Delta_{\mathit{total}}$ parameters.} 
In \Cref{algorithm:block}, $\Delta_{\mathit{shift}}$ is a configurable parameter that can take any value (line~\ref{line:delta_shift}).
(Specifically, when employed in \name, the $\Delta_{\mathit{shift}}$ parameter is set to $2\delta$.)
The $\Delta_{\mathit{total}}$ parameter takes an exact value (i.e., it is not configurable) that depends on (1) $\Delta_{\mathit{shift}}$, (2) $\mathcal{GC}_1$, (3) $\mathcal{A}^S$, and (4) $\mathcal{GC}_2$ (line~\ref{line:delta_total}).

\smallskip
\noindent \textbf{Description of Task 1.}
Process $p_i$ starts executing Task 1 upon receiving a $\mathsf{propose}(v \in \mathsf{Value})$ request (line~\ref{line:start_task_1}).
As many of the design choices for Task 1 are driven by the synchronicity property of \block, let us denote the precondition of the property by $\mathcal{S}$.
Concretely, we say that ``$\mathcal{S}$ holds'' if and only if (1) the first correct process that proposes to \block does so at some time $\tau \geq \text{GST}$, (2) all correct processes propose by time $\tau + \Delta_{\mathit{shift}}$, and (3) no correct process abandons \block by time $\tau + \Delta_{\mathit{total}}$.
We now explain each of the seven steps of \block's Task 1:


\smallskip
\noindent$\triangleright$ Step 1 (line~\ref{line:step_gc_1}):
This step corresponds to the execution of the first graded consensus ($\mathcal{GC}_1$). 
As will become clear in Step 3, $\mathcal{GC}_1$ essentially acts as the \emph{first safety guard} (see~\Cref{section:technical_overview}).
Process $p_i$ inputs its proposal $v$ and outputs $(v_1, g_1)$.
Importantly, $p_i$ only moves on to the next step (Step 2) when enough time has elapsed in Step 1 ($\Delta_{\mathit{shift}} + \Delta_1$, where $\Delta_1$ is the maximum time it takes for $\mathcal{GC}_1$ to terminate after GST).
This way, when $\mathcal{S}$ holds, all processes initiate Step 2 nearly simultaneously (within at most $\Delta_{\mathit{shift}}$ time of each other).

\smallskip
\noindent$\triangleright$ Step 2 (line~\ref{line:step_ba}):
This step corresponds to the simulation of the synchronous Byzantine agreement algorithm ($\mathcal{A}^S$).
We are foremost concerned with correctly simulating $\mathcal{A}^S$ when $\mathcal{S}$ holds.
In this scenario, due to Step 1, all processes start Step 2 at most $\Delta_{\mathit{shift}}$ apart.
Therefore, instead of (normally) running each ``synchronous'' round of $\mathcal{A}^S$ for its regular duration ($\delta$), process $p_i$ runs each round for an increased duration that accounts for this shift ($\delta + \Delta_{\mathit{shift}}$).
Hence, process $p_i$ will receive all messages sent for the round, even by ``$\Delta_{\mathit{shift}}$-late'' processes, before moving on to the next round.
After processes execute exactly $\mathcal{R}$ rounds of $\mathcal{A}^S$ in this way, each correct process is guaranteed to decide a valid value from $\mathcal{A}^S$.
Importantly, for $p_i$ (and every other process), we can deduce the maximum number $\mathcal{B}$ of bits sent during a correct synchronous execution of $\mathcal{A}^S$.
Hence, we limit $p_i$ to sending no more than $\mathcal{B}$ bits when simulating $\mathcal{A}^S$.
This prevents $\mathcal{A}^S$ from overshooting its (per-process) budget when faced with asynchronous behavior before GST.\footnote{It may appear as though $\mathcal{A}^S$ cannot exceed its budget due to asynchrony, since we exclude all pre-GST communication from the budget. But since nothing is guaranteed about $\mathcal{A}^S$ in asynchrony, if GST occurs while $\mathcal{A}^S$ is executing, it is possible for the post-GST portion of $\mathcal{A}^S$'s execution to perform more communication than any fully-synchronous execution of $\mathcal{A}^S$.}

\smallskip
\noindent$\triangleright$ Step 3 (line~\ref{line:step_est}):
This step relates to the ``convergence-preservation'' aspect of the \emph{first safety guard} (see~\Cref{section:technical_overview}). 
If $p_i$ decides with grade $1$ from $\mathcal{GC}_1$ (i.e., $g_1 = 1$), then $\mathit{est}_i$ takes the value decided from $\mathcal{GC}_1$ (i.e., $\mathit{est}_1 = v_1$), essentially ignoring the output of $\mathcal{A}^S$.
Due to the strong validity property of $\mathcal{GC}_1$, this will always be the case if processes are already convergent (propose the same value) before executing \block (and $\mathcal{GC}_1$).
Otherwise, if $g_1 = 0$, $p_i$ will adopt the value decided from $\mathcal{A}^S$ instead ($v_{A}$).
If there was no value from $\mathcal{A}^S$ at all, or $v_{A}$ is invalid, it must mean that \block was started before GST ($\mathcal{S}$ does not hold).
In that case, $p_i$ simply adopts its original proposal, which is at least valid.
Notice how, when $\mathcal{S}$ holds, every process adopts the same value by the end of Step 3 (i.e., processes converge).
If all processes have $g_1 = 0$, all processes adopt $v_{A}$ which is the same for all processes ($\mathcal{A}^S$ ensures agreement when $\mathcal{S}$ holds).
Else, if some process has $g_1 = 1$, all processes proposed $v_1$ in Step 2 (as $\mathcal{GC}_1$ ensures consistency), thus $v_1 = v_{A}$ (as $\mathcal{A}^S$ ensures strong validity when $\mathcal{S}$ holds). 
Thus, all processes adopt the same value at the end of Step 3 when $\mathcal{S}$ holds.

\begin{figure} [h]
   \centering
   \includegraphics[width=\textwidth]{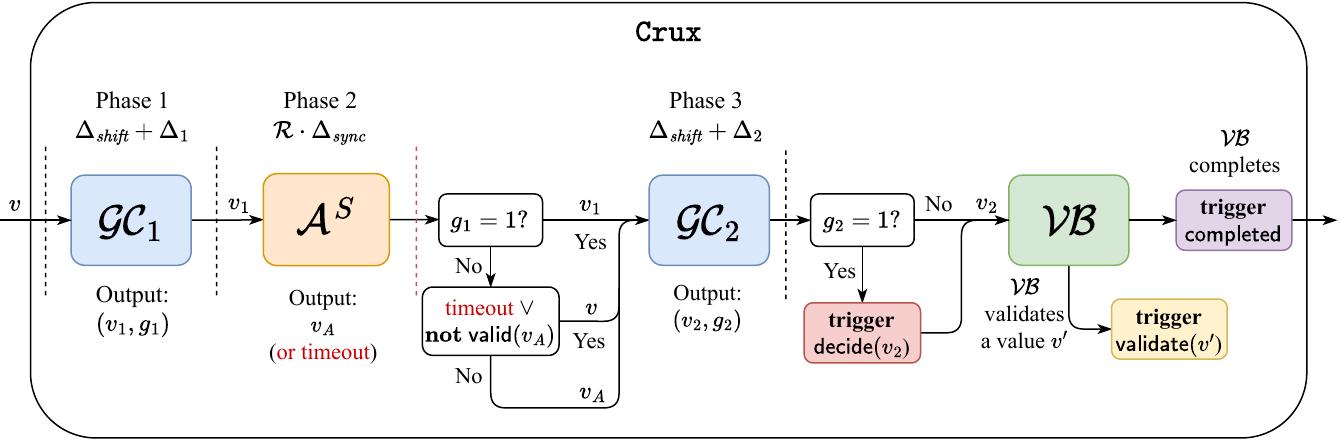}
   \caption{Overview of \block.}
   \label{fig:block_overview}
\end{figure}

\smallskip
\noindent$\triangleright$ Step 4 (line~\ref{line:step_gc_2}):
This step corresponds to the execution of the second graded consensus ($\mathcal{GC}_2$), which acts as the \emph{second safety guard} (as will be seen in Step 5).
Process $p_i$ inputs its estimate $\mathit{est}_i$ (obtained from Step 3) to $\mathcal{GC}_2$, and outputs $(v_2, g_2)$.
As in Step 1, $p_i$ waits enough time ($\Delta_{\mathit{shift}} + \Delta_2$) before moving on to the next step.
This waiting step serves only to ensure the completion time property of \block.

\smallskip
\noindent$\triangleright$ Step 5 (line~\ref{line:step_decide}): 
In this step, we see $\mathcal{GC}_2$'s role as the second safety guard play out: if $p_i$ decided with grade $1$ from $\mathcal{GC}_2$ (Step 4), then $p_i$ decides from \block the value decided from $\mathcal{GC}_2$ (i.e., $p_i$ decides $v_2$), and all other processes either decide or (at least) adopt $v_2$.
Importantly, when $\mathcal{S}$ holds, all correct processes input the same value to $\mathcal{GC}_2$ (as we detailed in Step 3).
This means all correct processes will obtain $g_2 = 1$ in Step 4 and decide $v_2$ in Step 5, due to the strong validity property of $\mathcal{GC}_2$.
This ensures the synchronicity property of \block.

\smallskip
\noindent$\triangleright$ Step 6 (line~\ref{line:step_validation}): 
This step corresponds to the execution of the validation broadcast ($\mathcal{VB}$), which performs the role of the ``\emph{safe skip}'' mechanism (see~\Cref{section:technical_overview}).
Before $p_i$ completes the \block instance, it helps all correct processes obtain a valid value, even if they have not participated in the \block instance (e.g., slow processes before GST).
Moreover, any value obtained here is \emph{safe}: if some correct process decides $v_2$ in Step 5, all processes are guaranteed to broadcast the same $v_2$ value in Step 6, and the only value that can be obtained is precisely $v_2$, due to the strong validity property of validation broadcast.

\smallskip
\noindent$\triangleright$ Step 7 (line~\ref{line:step_complete}): Finally, process $p_i$ completes \block and triggers a $\mathsf{completed}$ indication. 

\smallskip
\noindent \textbf{Description of Task 2.}
A correct process $p_i$ starts executing Task 2 upon receiving an $\mathsf{abandon}$ request (line~\ref{line:start_task_2}).
Task 2 instructs process $p_i$ to stop executing Task 1: process $p_i$ invokes $\mathsf{abandon}$ requests to $\mathcal{GC}_1$, $\mathcal{GC}_2$ and $\mathcal{VB}$ and it stops running $\mathcal{A}^S$ (line~\ref{line:step_stop_task_1}).

\smallskip
\noindent \textbf{Description of Task 3.}
A correct process $p_i$ starts executing Task 3 upon receiving a $\mathsf{validate}(v' \in \mathsf{Value})$ indication from $\mathcal{VB}$ (line~\ref{line:start_task_3}).
When that happens, process $p_i$ validates $v'$ from \block, i.e., $p_i$ triggers a $\mathsf{validate}(v')$ indication (line~\ref{line:step_validate_from_block}).

\smallskip
\noindent \textbf{\changed{Proof sketch.}}
We relegate a formal proof of \block's correctness and complexity to \Cref{section:crux_correctness_complexity_formal}.
Here, we give a proof sketch.

\begin{itemize}
\setlength{\itemsep}{5pt}

\item \emph{Strong validity} (see \Cref{theorem:block_strong_validity}) is derived directly from the strong validity of \block's submodules, namely $\mathcal{GC}_1$, $\mathcal{GC}_2$, and $\mathcal{VB}$. 
We recall that the updating rule of the estimation variable ($est_i$, Step 3), combined with the strong validity property of $\mathcal{GC}_1$, ensures that the output of the synchronous algorithm is ignored if a value is unanimously proposed.

\item \emph{External validity} (see \Cref{theorem:block_external_validity}) is ensured by the external validity property of $\mathcal{GC}_2$ and the safety property of $\mathcal{VB}$. 
Any value decided by a correct process is valid due to the external validity of $\mathcal{GC}_2$. 
If a correct process validates a value, it has either been previously validated from $\mathcal{VB}$, or it is the process's proposal ($v$), which is (assumed) valid.

\item \emph{Agreement} (see \Cref{theorem:block_agreement}) is guaranteed by the consistency property of $\mathcal{GC}_2$ and the strong validity property of $\mathcal{VB}$. 
No two correct processes decide different values from \block due to the consistency property of $\mathcal{GC}_2$. 
Moreover, the strong validity property of $\mathcal{VB}$ ensures that no correct process validates any value different from a potential decided value, which would have been unanimously broadcast through $\mathcal{VB}$.

\item \emph{Integrity} (see \Cref{theorem:block_integrity}) is satisfied as any correct process that decides or completes \block does so while executing Task 1, which it starts only after proposing to \block.

\item \emph{Termination} (see \Cref{theorem:block_termination}) is ensured by the simulation of $\mathcal{A}^S$ within bounded time (timeout) and the termination properties of \block's remaining submodules, $\mathcal{GC}_1$, $\mathcal{GC}_2$, and $\mathcal{VB}$. 

\item \emph{Totality} (see \Cref{theorem:block_totality}) comes as a direct consequence of the totality property of $\mathcal{VB}$. 

\item \emph{Completion time} (see \Cref{theorem:block_completion_time}). 
The earliest time at which a correct process can broadcast via $\mathcal{VB}$ is lower-bounded by the sum ($\Delta_{\mathit{total}}$) of the time it takes to complete each previous step of the algorithm (particularly, Steps 1, 2, and 4). 
Therefore, the integrity property of $\mathcal{VB}$ ensures that no process receives a completed indication from \block before this time.

\item \emph{Synchronicity} (see \Cref{theorem:block_synchronicity}). 
As detailed in the description of Task 1, when $\mathcal{S}$ holds (i.e., the precondition of the synchronicity property), all correct processes (1) execute a correct simulation of $\mathcal{A}^S$ (Step 2), (2) obtain the same estimate (Step 3), (3) propose and obtain the same value with grade 1 from $\mathcal{GC}_2$ (Step 4), and (4) decide (Step 5).
For the complete proof that the simulation of $\mathcal{A}^S$ (Step 2) is correct, we refer the reader to \Cref{section:crux_correctness_complexity_formal}.


\item \emph{Per-process bit complexity} (see \Cref{theorem:block_communication}). 
Let us consider the worst-case bit complexity \emph{per-process}.
\block only sends messages through its $\mathcal{GC}_1$, $\mathcal{GC}_2$, and $\mathcal{VB}$ instances, and when simulating $\mathcal{A^S}$.
Thus, \block's worst-case per-process bit complexity is equal to the sum of its parts.
For $\mathcal{GC}_1$, $\mathcal{GC}_2$, and $\mathcal{VB}$, which are originally asynchronous algorithms, the worst-case complexity is identical.
For the simulation of $\mathcal{A^S}$, crucially, communication is bounded in Step 3: the per-process communication complexity when simulating $\mathcal{A^S}$ in asynchrony is bounded by the worst-case per-process complexity of executing $\mathcal{A^S}$ directly in synchrony.
Therefore, before and after GST, the per-process bit complexity of \block is bounded by
\begin{equation*}
    \mathit{pbit}(\mathcal{GC}_1) + \mathit{pbit}(\mathcal{GC}_2) + \mathit{pbit}(\mathcal{VB}) + \mathit{pbit}(\mathcal{A}^S) \text{ bits, where}
\end{equation*}
$\mathit{pbit}(\mathcal{X})$ denotes the maximum number of bits any correct process sends in $\mathcal{X} \in \{\mathcal{GC}_1, \mathcal{GC}_2, \mathcal{VB}, \mathcal{A}^S\}$.
When $\mathit{pbit}(\mathcal{GC}_1) = \mathit{pbit}(\mathcal{GC}_2) = \mathit{pbit}(\mathcal{VB}) \in O(n)$, \block preserves the per-process bit complexity of $\mathcal{A}^S$, given that $\mathit{pbit}(\mathcal{A}^S) \in \Omega(n)$, due to the Dolev-Reischuk lower bound~\cite{dolev1985bounds}.

\item \emph{Latency after GST}.
The latency of \block $(\mathit{latency}(\block)$) is defined such that, if all correct processes start executing \block by some time $\tau$, then all correct processes complete by time $\textsf{max}(\tau, GST) + \mathit{latency}(\block)$.
In particular $\mathit{latency}(\block) = (\mathit{latency}(\mathcal{GC}_1) \cdot \delta) + (\mathcal{R} \cdot \Delta_{\mathit{sync}}) + (\mathit{latency}(\mathcal{GC}_2) \cdot \delta) + (\mathit{latency}(\mathcal{VB}) \cdot \delta)$.
This corresponds directly to the sum of the maximum latency of each step after GST.
(Notice that, given that the latency concerns time spent executing steps after GST, starting or executing steps before GST can only decrease latency.)

\end{itemize}



\subsection{From \block to \name}\label{section:from_block_to_name}


This subsection briefly presents \name, our generic transformation that maps \emph{any} synchronous Byzantine agreement algorithm into a partially synchronous one.
\name consists of a sequential composition of an arbitrarily 
 long series of \block instances (see \Cref{section:block}) through views.
Whenever a process starts a \block instance within a view, it simply proposes a value validated by the \block instance in the preceding view (or its initial proposal, if it is the first view).
\name's safety is directly ensured by this sequential composition, since each \block instance guarantees agreement and strong validity.
Therefore, \name has two main objectives: (1) ensuring \emph{liveness} by providing the necessary precondition for \emph{synchronicity} (\block's property, see \Cref{subsection:block_problem_definition}), and (2) halting processes without jeopardizing liveness or safety.
These objectives are achieved efficiently via two components, namely a \emph{view synchronizer} (for liveness) and a \emph{finisher} (for halting), which we briefly describe next.
The full pseudocode and proof of \name (and all its components) can be found in \Cref{section:name_correctness_complexity_formal}. 



\smallskip
\noindent \textbf{View synchronization.}
The view synchronizer is run once before each view and ensures that, after GST, correct processes (1) enter a common view nearly simultaneously (within $\Delta_{\mathit{shift}} = 2\delta$ time of each other), and (2) remain there for a sufficiently long duration ($\Delta_{\mathit{total}}$).
This matches exactly the precondition for synchronicity (\Cref{subsection:block_problem_definition}), resulting in all processes deciding by the end of that common view (liveness).
It is known that view synchronization can be implemented efficiently  (using $O(n^2)$ bits) following Bracha's double-echo approach \cite{bravo2022making,naor2019cogsworth}.
Briefly, (1) when a process wishes to advance to the next view $V$, it broadcasts $\langle \textsc{start-view}, V \rangle$, (2) when at least $t+1$ $\langle \textsc{start-view}, V \rangle$ messages are received by a correct process, it echoes (broadcasts) that $\langle \textsc{start-view}, V \rangle$ message, and (3) a process finally advances to view $V$ upon receiving $2t+1$
$\langle \textsc{start-view}, V \rangle$. 
The amplification mechanism (2) ensures that, if a process enters a view at time $\tau \geq \mathsf{GST}$, it will be followed by the remaining correct processes by time $\tau + 2\delta$.
The $t+1$ threshold (2) prevents Byzantine processes from arbitrarily pushing correct processes to skip views.




\smallskip
\noindent \textbf{Finisher.}
After a correct process has decided, it cannot arbitrarily halt, as other correct processes might depend upon it to terminate.
To halt, a correct process $p_i$ first executes a \emph{finisher}, which guarantees upon completion that all correct processes will obtain the correct decided value (leaving $p_i$ free to halt).
For short inputs ($L\in O(1)$), an efficient finisher can be implemented straightforwardly by following the double-echo approach of Bracha's reliable broadcast \cite{B87}. 
We note that implementing an efficient finisher for long values becomes non-trivial, since broadcasting the decided value is prohibitive in terms of communication.
We relegate an efficient implementation of a finisher for long values to \Cref{subsection:finisher}.

\smallskip
\noindent \textbf{Proof sketch.}
The formal proof of \name's correctness and complexity is relegated to \Cref{section:name_correctness_complexity_formal}.
Here, we provide a proof sketch of \name's correctness:
\begin{itemize}
\setlength{\itemsep}{5pt}

\item \emph{Agreement} (see \Cref{theorem:name_agreement}).
Suppose a correct process $p_i$ decides a value $v$ in some view $V$ and another correct process $p_j$ decides a value $v'$ in some view $V'$.
Without loss of generality, let $V \leq V'$.
If $V = V'$, the agreement property of the \block instance of view $V = V'$ ensures that $v = v'$.
If $V < V'$, it is ensured that no correct process validates any non-$v$ value from the \block instance of view $V$ (due to the agreement property of \block).
Therefore, no correct process proposes a non-$v$ value to the \block instance of view $V + 1$, which implies that no correct process can decide or validate a non-$v$ value in view $V + 1$.
By inductively applying the same argument, we reach a conclusion that process $p_j$ cannot decide any non-$v$ value in view $V' > V$, thus showing the agreement property of \name.

\item \emph{External validity} (see \Cref{theorem:name_external_validity}) follows directly from the external validity property of \block.

\item \emph{Strong validity} (see \Cref{theorem:name_strong_validity}) follows from the strong validity property of \block.
Suppose all correct processes propose the same value $v$ to \name.
Therefore, no correct process proposes any value different from $v$ to the \block instance of view $1$, which implies that (1) no correct process decides or validates any value different from $v$ in view $1$ (due to the strong validity property of \block), and (2) no correct process proposes any value different from $v$ to the \block instance of view $2$.
By applying the same argument to view $2$ (and all subsequent views), we show that the strong validity of \name is preserved.

\item \emph{Termination} (see \Cref{theorem:name_termination}).
Let $\vfinal$ denote the smallest view entered by any correct process after GST; let $\tau_{\vfinal}$ denote the time at which the first correct process enters view $\vfinal$.
Our view synchronization mechanism ensures that all correct processes enter view $\vfinal$ within $\Delta_{\mathit{shift}} = 2\delta$ time of each other.
Moreover, the completion time property of the \block instance of view $\vfinal$ ensures that no correct process leaves view $\vfinal$ before time $\tau_{\vfinal} + \Delta_{\mathit{total}}$.
Therefore, the preconditions of the synchronicity property of the \block instance associated with view $\vfinal$ are satisfied, which implies that all correct processes decide in view $\vfinal$ (unless they have already done so).
\end{itemize}

\smallskip
\noindent \textbf{New algorithms obtained by \name.}
We conclude this section by presenting a few efficient signature-free partially synchronous Byzantine agreement algorithms that \name yields (see \Cref{table:concrete_algorithms}).
As we formally prove in \Cref{section:name_correctness_complexity_formal}, each correct process executes only $O(1)$ views after GST.
Therefore, $\mathit{pbit}(\name) = O\big( n + \mathit{pbit}(\block) \big)$ for constant-sized inputs.
Since $\mathit{pbit}(\block) = \mathit{pbit}(\mathcal{GC}_1) + \mathit{pbit}(\mathcal{GC}_2) + \mathit{pbit}(\mathcal{VB}) + \mathit{pbit}(\mathcal{A}^S)$, the per-process bit complexity $\mathit{pbit}(\name)$ of \name for constant-sized values can be defined as
\begin{equation*}
    \mathit{pbit}(\name) = O\big( n + \mathit{pbit}(\mathcal{GC}_1) + \mathit{pbit}(\mathcal{GC}_2) + \mathit{pbit}(\mathcal{VB}) + \mathit{pbit}(\mathcal{A}^S) \big).
\end{equation*}
Similarly, \Cref{section:name_correctness_complexity_formal} proves that the per-process bit complexity $\mathit{pbit}(\name)$ of \name for long $L$-bit values can be defined as
\begin{equation*}
    \mathit{pbit}(\name) = O\big( L + n \log n + \mathit{pbit}(\mathcal{GC}_1) + \mathit{pbit}(\mathcal{GC}_2) + \mathit{pbit}(\mathcal{VB}) + \mathit{pbit}(\mathcal{A}^S) \big).
\end{equation*}
In \Cref{table:concrete_algorithms}, we specify, for each \name-obtained Byzantine agreement algorithm, the concrete implementations of (1) asynchronous graded consensus ($\mathcal{GC}_1$ and $\mathcal{GC}_2$), (2) synchronous Byzantine agreement ($\mathcal{A}^S$), and (3) asynchronous validation broadcast ($\mathcal{VB}$) required to construct the algorithm.

\begin{table}[ht]
\centering
\footnotesize
\tabcolsep=0.08cm
\begin{tabular}{|p{3.75cm}|p{1.3cm}|p{2.6cm}|p{3cm}|p{2.6cm}|p{1.8cm}|  }
 \hline
 \centering \textbf{Total bit complexity of \\the final algorithm} &
\centering Resilience & \centering $\mathcal{GC}_1 = \mathcal{GC}_2$ \\ total bits & \centering $\mathcal{A}^S$ \\ $n$ $\cdot$ (bits per process)  & \centering $\mathcal{VB}$ \\ total bits & \centering Cryptography \tabularnewline
 \hline
 \hline

\centering $O(n^2)$ \\(with $L \in O(1)$)  & \centering $t < n/3$ & \centering \cite{AttiyaWelch23} \\ $O(n^2)$ & \centering \cite{berman1992bit,CoanW92} \\ $O(n^2)$ & \centering \Cref{subsection:validation_broadcast_basic_implementation} \\ $O(n^2)$ & \centering None \tabularnewline

\hline

\centering $O(n^2L)$  & \centering $t < n/3$ & \centering \cite{AttiyaWelch23} \\ $O(n^2L)$ & \centering \cite{berman1992bit,CoanW92} \\ $O(n^2L)$ & \centering \Cref{subsection:validation_broadcast_basic_implementation} \\ $O(n^2L)$ & \centering None \tabularnewline



 
 \hline
  \centering $O(nL + n^2\log(n)\kappa)$ \\ (only strong validity) & \centering $t < n/3$ & \centering \Cref{subsection:graded_consensus_bounded} \\ $O(nL + n^2\log(n)\kappa)$  & \centering \cite{chen2021optimal} \\ $O(nL + n^2\log n)$ & \centering \Cref{subsection:validation_bcast_long_3_4} \\ $O(nL + n^2\log(n)\kappa)$ & \centering Hash \tabularnewline
 \hline

 \centering $O(n\log(n)L + n^2\log(n)\kappa)$ & \centering $t < n/3$ & \centering \Cref{subsection:graded_consensus_bounded} \\ $O(nL + n^2\log(n)\kappa)$  & \centering \cite{errorFreeValidated}
 \\ $O(
  n\log(n)L + n^2\log n)$ & \centering \Cref{subsection:validation_bcast_long_3_4} \\ $O(nL + n^2\log(n)\kappa)$ & \centering Hash \tabularnewline
 \hline

   \centering $O(nL + n^2\log n)$ \\ (only strong validity) & \centering $t < n/5$ & \centering \Cref{subsection:its_graded_consensus} \\ $O(nL + n^2\log n)$  & \centering \cite{chen2021optimal} \\ $O(nL + n^2\log n)$ & \centering \Cref{subsection:its_validation_broadcast} \\ $O(nL + n^2\log n)$ & \centering None \tabularnewline
 \hline

 \centering $O(n\log(n)L + n^2\log n)$ & \centering $t < n/5$ & \centering \Cref{subsection:its_graded_consensus} \\ $O(nL + n^2\log n)$  & \centering \cite{errorFreeValidated}
 \\ $O(
  n\log(n)L + n^2\log n)$ & \centering \Cref{subsection:its_validation_broadcast} \\ $O(nL + n^2\log n)$ & \centering None \tabularnewline
 \hline
\end{tabular}
    \caption{Concrete partially synchronous Byzantine agreement algorithms obtained by \name.
    We emphasize that rows 3 and 5 satisfy only strong validity (i.e., they do not satisfy external validity).
    All mentioned algorithms are balanced in terms of total bit complexity.
    \\($L$ denotes the bit-size of a value, whereas $\kappa$ denotes the bit-size of a hash value. We consider $\kappa \in \Omega(\log n)$.)}
\label{table:concrete_algorithms}
\end{table}

\smallskip
\noindent \changed{\textbf{Discussion.} It is well-known that synchronous Byzantine Agreement can be solved with $n \geq 2t+1$ under the assumption of a cryptographic setup, so the interested reader may ask themselves the following question: If we assume cryptography and use a $2t+1$ protocol for $\mathcal{A}^S$, would we obtain a $2t+1$ version of \block? The answer is no: the other components of \block, specifically graded consensus, require $n \geq 3t+1$ in partial synchrony \cite{AW04}, even with cryptography. In fact, this is not specific to \block: no transformation could achieve $2t+1$ Byzantine Agreement in partial synchrony, since this has been shown to be impossible~\cite{DLS88}}.

\changed{\section{\name in the Randomized World}}
\label{sec:consequences}

While the presentation of \name has been made for an underlying deterministic synchronous protocol, the transformation can be easily adapted to an underlying \emph{randomized} synchronous protocol by using a more general simulation technique~\cite{DBLP:conf/podc/LindellLR02} which is geared towards cryptography-based algorithms (detailed in Algorithm \ref{algorithm:simulation}, Lemmas \ref{lemma:simulation_correct} and \ref{lemma:cryptography_based_simulation_correct}). 
In the most challenging model (i.e., when faced with a full-information adaptive adversary), the best known (potentially randomized) solution is actually a deterministic one~\cite{errorFreeValidated}. 
However, with some relaxations (e.g., against a static adversary or assuming private channels), we can see how \name can be useful for randomized protocols, where significant gaps exist between synchronous and
asynchronous variants.

\subsection{Adaptation of the \name Transformation}

The $\mathsf{CryptoSim}$ simulation (Algorithm \ref{algorithm:simulation}) enables us to simulate any adversarial probability space over executions of any synchronous randomized protocol $\mathcal{A}^S$ in a partially synchronous environment, as long as the precondition $\mathcal{S}$ (see \Cref{subsection:block_pseudocode}) of view synchronization holds (i.e., correct processes start and execute the simulation within $\Delta_{\mathit{shift}} = 2\delta$ time of each other and for sufficiently long). Moreover, if $\mathcal{A}^S$ has a per-process bit-complexity of $\mathcal{B}$, the simulation will preserve the same (fixed) round complexity as the original protocol and exhibit a per-process bit-complexity of $\max(O(n), \mathcal{B})$, even if the Global Stabilization Time (GST) occurs during the execution.

\smallskip
\noindent \textbf{Randomized simulation robustness.}
The robustness of the randomized simulation arises because the adversary cannot adapt its behavior in a (simulated) round $r$ based on information from round $r+1$. 
This ensures that the simulated protocol retains its probabilistic guarantees in the partially synchronous model.
The proof amounts to a reformulation of \Cref{lemma:cryptography_based_simulation_correct}, which is written for deterministic \emph{cryptography-based} solutions, but done instead in a probabilistic fashion.
Formally, let $\mathcal{A}^S$ be a (randomized) synchronous algorithm with a fixed number of rounds $\mathcal{R}$ and fixed per-process bit-complexity $\mathcal{B}$.\footnote{We explain why this assumption is not a problem in the paragraph ``From synchronous Las Vegas to synchronous Monte Carlo''.} Let $\mathcal{A}^{*} = \mathsf{CryptoSim}(\mathcal{A}^S)$. For every adversary $\mathsf{Adv}^{*}$ for $\mathcal{A}^{*}$ that guarantees that the condition $\mathcal{S}$ holds and that $\mathcal{R}$ rounds elapse, there exists an adversary $\mathsf{Adv}^{S}$ for $\mathcal{A}^S$ and an associated bijective homomorphism

$$\mu : (\mathit{execs}(\mathcal{A}^{*}), \mathcal{F}_{\mathit{execs}(\mathcal{A}^{*})}, \mathsf{Pr}_{\mathsf{Adv}^{*}}) \rightarrow (\mathit{execs}(\mathcal{A}^S), \mathcal{F}_{\mathit{execs}(\mathcal{A}^S)}, \mathsf{Pr}_{\mathsf{Adv}^S})\footnote{Following a standard definition of probability space over executions engendered by the adversary \cite{Mitra07,Segala95}.}$$
such that for every pair of executions $(\mathcal{E}^{*}, \mathcal{E}^S) \in \mathit{execs}(\mathcal{A}^{*}) \times \mathit{execs}(\mathcal{A}^S)$ with $\mathcal{E}^S = \mu(\mathcal{E}^{*})$, we have $\mathsf{Pr}_{\mathsf{Adv}^S}(\mathcal{E}^S) = \mathsf{Pr}_{\mathsf{Adv}^{*}}(\mathcal{E}^{*})$ and if $\mathsf{Pr}_{\mathsf{Adv}^{*}}(\mathcal{E}^{*}) > 0$:

\begin{compactitem}
    \item $\mathcal{E}^S$ and $\mathcal{E}^{*}$ count the same number of rounds $\mathcal{R}$.
    \item The sets of correct processes in $\mathcal{E}^S$ and $\mathcal{E}^{*}$ are identical.
    \item The proposals of correct processes in $\mathcal{E}^S$ and $\mathcal{E}^{*}$ are identical.
    \item The sets of messages sent by correct processes in $\mathcal{E}^S$ and $\mathcal{E}^{*}$ are identical.
    \item For each correct process $p_i$ and every $k \in [1, \mathcal{R} + 1]$, $s_i^k(\mathcal{E}^{*}) = s_i^k(\mathcal{E}^S)$, where $s_i^k(\mathcal{E}^{S})$ (resp., $s_i^k(\mathcal{E}^{*})$) is the state of $p_i$ at the beginning of the $k$-th (i.e., at the end of the $(k - 1)$-st) round (resp., simulated round) in $\mathcal{E}^{S}$ (resp., $\mathcal{E}^{*}$). 
\end{compactitem}

\smallskip
\noindent \textbf{From synchronous Monte Carlo to partially synchronous Las Vegas.}
In the context of randomized algorithms, Las Vegas algorithms always produce a correct result (with probability 1), but their runtime may vary, while Monte Carlo algorithms have a fixed runtime but may produce incorrect results with a non-zero probability.
Ensuring correctness in a synchronized view (where processes enter the view within $\Delta_{\mathit{shift}} = 2\delta$ from each other) with a constant probability $\rho$ is acceptable, as safety is guaranteed deterministically, and Las Vegas termination will be achieved with probability $1$ after $O(1)$ expected views after GST. Indeed, a reduction to a geometric distribution yields an expected latency of $O(\mathcal{R}/\rho) = O(\mathcal{R})$ (after GST), where $\mathcal{R}$ is the known worst-case round complexity of the underlying synchronous algorithm $\mathcal{A}^S$. The per-process communication complexity (after GST)  remains \(\max(\mathcal{B}, O(n))\).
This is particularly relevant for protocols solving leader or committee election against a static adversary with a constant probability of success (e.g., versions of \cite{GPV06,KSSV06}).

We can note that there is no obvious complexity-preserving transformation to convert an asynchronous Monte Carlo protocol with a non-zero probability of termination violation (e.g., \cite{KKKSS08,KKKSS10}) into a Las Vegas protocol with probability 1 of liveness and safety. This limitation emphasizes the value of our approach in partially synchronous settings.

\smallskip
\noindent \textbf{From synchronous Las Vegas to synchronous Monte Carlo.}
A synchronous Las Vegas algorithm with constant probability $\rho$ of success and known expected latency  $\mathcal{R}^e$ but no bound on worst-case latency can be converted into a Monte Carlo algorithm with worst-case latency $\mathcal{R}_x = x \mathcal{R}^e$ and success probability $\rho (1 - \frac{1}{x})$, via a straightforward Markov inequality. Let us note that a similar observation can be made for the per-process bit-complexity, and its utilization in $\mathsf{CryptoSim}$.
Therefore, in the following discussion, we will not distinguish between Las Vegas and Monte Carlo synchronous protocols, nor will we specify whether the mentioned latencies and communication complexities are expected or worst-case latencies.

\smallskip
\noindent \textbf{Reductions between Byzantine agreement problems in the synchronous setting.}
Interactive consistency (IC) is an agreement variant where correct processes agree on the proposals of all processes.
Any solvable Byzantine agreement problem, including multi-valued validated Byzantine agreement (MVBA, guaranteeing external validity) and strong MVBA (SMVBA, guaranteeing both strong and external validity, as in \Cref{section:introduction}), can be reduced to interactive consistency at no cost \cite{civit2023all}. 
Interactive consistency in turn can be reduced to $n$ (parallel) instances of Byzantine broadcast (BB), hence its alternative name, ``parallel broadcast'' \cite{AsharovChandramouli24}. 
In our setting $(n>3t)$, Byzantine broadcast (BB) and Byzantine agreement with strong validity (BA) are computationally equivalent.\footnote{Our strong validity property is sometimes called \emph{weak validity} \cite{abraham2019asymptotically}, while \emph{strong validity} is used when the decision must be proposed by a correct process.} Moreover, since the breakthrough of \cite{chen2021optimal}, $L$-bit BB (resp., BA) can be reduced to binary BB (resp., BA) at the cost of $O(1)$ additional rounds and $O((L+n)\log(n))$ per-process bit-complexity. The parallel composition of $n$ instances of black-box Monte Carlo protocols that succeed with probability $1 - 1/n^c$ for some constant $c \geq 1$ (e.g., \cite{hajiaghayi2024nearly,KingSaia10,King2011}), succeeds with at least constant probability. 
Therefore, we can derive (S)MVBA protocols from binary BA/BB protocols, with the corresponding overhead taken into account.

Other protocols facing a static adversary can elect a good leader or a good committee with some probability $\rho$ (as done in $\cite{GPV06}$). From such a protocol, we can derive a version solving MVBA by a reduction
to the deterministic committee broadcast of \cite{errorFreeValidated}, at the cost of $O(1)$ additional rounds and $O((L+n)\log(n))$ additional bits per-process. 
Accordingly, we can construct MVBA protocols from leader or committee election protocols, factoring in the associated overhead.

Some (subquadratic) protocols, like \cite{KSSV06}, achieve \emph{almost-everywhere} agreement, where the agreement is reached only among a set of $1-O(\log^{-1}(n))$ processes. Such a protocol can be directly transformed into ``full agreement'' (i.e., going from almost-everywhere to everywhere) by a single quadratic all-to-all round of communication. 
Other work has aimed to solve subquadratic almost-everywhere-to-everywhere \cite{braud2013fast,Gelles23}, but this is not directly useful in our case as \name's transformation involves a quadratic graded consensus implementation.

\smallskip
\noindent \textbf{Reductions between Byzantine agreement problems in the asynchronous setting.}
Agreement on a Core Set (ACS) (a.k.a. vector consensus or asynchronous common subset) is an agreement variant where correct processes agree on the proposals of $n-t$ processes. Any asynchronously solvable Byzantine agreement problem, including SMVBA, can be reduced to ACS at no cost \cite{civit2023validity}.
This problem can further be reduced \cite{ben1994asynchronous} to $n$ (parallel) instances of Byzantine Reliable Broadcast \cite{Alhaddad2022} and $n$ (parallel) instances of binary asynchronous BA (ABA) with strong validity (e.g., \cite{BCP18}).

\subsection{Implications for protocol performance}
\label{sec:state_of_the_art}

In this subsection, we present a non-exhaustive comparison of state-of-the-art protocols designed for different adversarial models with a computationally unbounded adversary (see Table~\ref{table:state_of_the_art_protocols}). These protocols are evaluated based on their resiliency, latency, communication complexity, and any specific relaxations of the adversary model. 
For setup-free synchronous protocols, \name immediately yields partially-synchronous counterparts in the same model with (1) the same resiliency, (2) the same asymptotic complexity (counted post-GST), (3) unconditional safety, and (4) Las Vegas liveness ensured with probability 1 and preserved expected latency (modulo a constant overhead) after GST. 
Essentially, these protocols inherit the performance characteristics of their synchronous sources while avoiding safety issues in non-synchronous environments. The performance metrics (evaluated post-GST), surpass those of state-of-the-art asynchronous counterparts (operating under too challenging conditions). We acknowledge that pre-GST cost is not considered in this comparison, but we still believe that the yielded results offer an interesting trade-off in practice.
Table~\ref{table:state_of_the_art_protocols} provides a comprehensive overview of these protocols.

\begin{table}[h!]
    \footnotesize
    \centering
    \begin{tabular}{|c|c|c|c|c|c|c|c|}
        \hline
        \textbf{Protocol} & \textbf{Relaxation} & \textbf{Security} & \textbf{Resiliency} & \textbf{Latency} & \textbf{Communication} & \textbf{Problem} & \textbf{Sync} \\
        \hline
        KS \cite{king2016byzantine,king2018correction} & None & Perfect & $n/(0.87\cdot 10^{9}) $ & $O(n^3)$ & $O(n^5)$ & binary & A \\
        \hline
        HPZ \cite{huang2023byzantine} & None & Perfect & $n/(3 + \epsilon)$ & $\tilde{O}(n^4/\epsilon^{8})$ & $\tilde{O}(n^6/\epsilon^{8})$ & binary & A \\
        \hline
        HPZ \cite{huang2023byzantine} & None & Perfect & $n/3$ & $\tilde{O}(n^{12})$ & $\tilde{O}(n^{14})$ & binary & A \\
        \hline
        $\boldsymbol{\textbf{\name}}($CDGGKVZ \cite{errorFreeValidated}$)$ & None & Perfect & $n/3$ & $O(n)$ & $O(Ln^2)$ & S+E & PS \\
        \hline \hline
        KKKSS \cite{KKKSS08,KKKSS10} & Static adv. & $1/\mathit{log}^c(n)$ & $n/(3 + \epsilon)$ & $O(\log^{27}(n))$ & $\tilde{O}(n^2)$ & LE & A \\
        \hline
        KKKSS \cite{KKKSS08,KKKSS10} & Static adv. & $1/n^c$ & $n/(3 + \epsilon)$ & $2^{\Theta(\log^8(n))}$ & $\tilde{O}(n^2)$ & LE & A \\
        \hline
        $\boldsymbol{\textbf{\name}}($GPV \cite{GPV06}$)$ & Static adv. & Perfect & $n/(3 + \epsilon)$ & $O(\log(n)/\epsilon^2)$ & $O(Ln^2+n^3)$ & S+E & PS \\
        \hline
        $\boldsymbol{\textbf{\name}}($GPV \cite{GPV06}$)$ & Static adv. & Perfect & $n/\log^{1.58}(n)$ & $O(1)$ & $O(Ln+n^3)$ & S+E & PS \\
        \hline
        $\boldsymbol{\textbf{\name}}($KSSV \cite{KSSV06}$)$ & Static adv. & Perfect & $n/(3 + \epsilon)$ & $\mathit{polylog}(n)$ & $O(Ln^2)$ & S+E & PS \\
        \hline \hline
        BB \cite{Bar-JosephB98} & Crash & - & - & $\Omega(\sqrt{n/\log(n)})$ & - & LB & S \\
        \hline
        ACDNP0S \cite{AbrahamCDNP0S19} & Omission & - & - & - & $\Omega(n^2)$ & LB & S \\
        \hline
        AAKS \cite{AAKS14,AAKS18} & Crash & Perfect & $n/2$ & $\tilde{\Omega}(n)$ & $\tilde{O}(n^2)$ & binary & A \\
        \hline
        $\boldsymbol{\textbf{\name}}($HKO \cite{hajiaghayi2024nearly}$)$ & Omission & Perfect & $n/30$ & $O(\sqrt{n}\log^2(n))$ & $O(n^2\log^3(n))$ & binary & PS \\

\hline
        AAKS \cite{AAKS14,AAKS18} & Crash only & Perfect & $n/2$ & $\tilde{\Omega}(n)$ & $\tilde{O}(Ln^2+n^3)$ & ACS & A\\
        \hline
        $\boldsymbol{\textbf{\name}}($HKO \cite{hajiaghayi2024nearly}$)$ & Omission & Perfect & $n/30$ & $O(\sqrt{n}\log^2(n))$ & $O(Ln^2+n^3\log^3(n))$ & S+E & PS \\
        
        \hline \hline

Rambaud \cite{rambaud2023adaptively} & Private chan. & - & $t$ & - & $\Omega(nt)$ & LB & PS \\
        \hline
        BCP \cite{BCP18} & Private chan. & Perfect & $n/3$ & $O(n)$ & $O(n^6)$ & binary & A \\
        \hline
        BCP \cite{BCP18} & Private chan. & Perfect & $n/(3+\epsilon)$ & $O(1/\epsilon)$ & $O(n^6)$ & binary & A \\
        \hline
        BCP \cite{BCP18} & Private chan. & Perfect & $n/3$ & $O(n \log(n))$ & $O(Ln^2+n^7)$ & ACS & A \\
        \hline
        BCP \cite{BCP18} & Private chan. & Perfect & $n/(3+\epsilon)$ & $O(\log(n)/\epsilon)$ & $O(Ln^2+n^7)$ & ACS & A \\
        \hline
        AAPS \cite{AAPS23} & Private chan. & Perfect & $n/4$ & $O(1)$ & $\tilde{O}(Ln^2+n^4)$ & ACS & A \\
        \hline
        AAPS \cite{AAPS23} + \cite{ChoudhuryPatra23}'s AISS & Private chan. & $\mathit{neg}(\lambda)$ & $n/3$ & $O(1)$ & $O(Ln^2+n^6 + \lambda^2 n^5)$ & ACS & A \\
        \hline
        $\boldsymbol{\textbf{\name}}($AC \cite{AsharovChandramouli24}$)$ & Private chan. & Perfect & $n/3$ & $O(1)$ & $O(Ln^2+n^3 \log^2(n))$ & S+E & PS \\
\hline
        $\boldsymbol{\textbf{\name}}($KS \cite{KingSaia10,King2011}$)$  & Private chan. & Perfect & $n/(3+\epsilon)$ & $\mathit{polylog(n)}$ & $\tilde{O}(Ln^{5/2})$ & S+E & PS \\    
        \hline
    \end{tabular}
    \caption{Overview of state-of-the-art fully asynchronous and partially synchronous protocols in different models with computationally unbounded adversary. 
        In the Problem column: ACS stands for ``Agreement on a Core Set'', LE stands for ``Leader/Committee Election'', LB stands for ``lower bound'', S stands for ``strong validity'', E stands for ``external validity'', and binary refers to the Byzantine agreement problem (with strong validity) where only 0 and 1 can be proposed/decided.
        A value different from 'Perfect' in the Security column corresponds to the probability of violating termination. 
        A (Asynchrony), S (Synchrony), and PS (Partial Synchrony) in the Sync column indicate the synchrony assumption. 
        $\epsilon$ is always a non-zero constant. 
        For protocols yielded by $\name$, we present the total communication, while it is balanced (the per-process counterpart can be divided by $n$). 
        The communication complexity of GPV \cite{GPV06} is indeed cubic, while it is sometimes presented to be $\tilde{O}(n^2)$ (see \cite[Lemma 10]{Vaikuntanathan09}), since the first step of the underlying $\Pi_{\mathsf{Elect}\text{-}\mathsf{GPV}}$ involves $n$ parallel instances of (balanced) quadratic gradecast, which is cubic \cite{AsharovChandramouli24,BG12,ZLC23}. 
    }
    \label{table:state_of_the_art_protocols}
\end{table}

For instance, in the non-relaxed model, no fully asynchronous consensus protocol with good resiliency ($>n/10^{8}$) achieves better than $O(n^4)$ expected latency, while $\name(\mathsf{ErrorFreeExt} \cite{errorFreeValidated})$ achieves linear latency. In the static full information adversary model, $\name(\mathsf{GPV} \cite{GPV06})$ can achieve $\log(n)$ latency (and even $O(1)$ latency with a minor logarithmic resiliency concession), outperforming its asynchronous counterpart \cite{KKKSS08,KKKSS10} by an exponent of $27$. Furthermore, in the adaptive model with private channels, $\name(\mathsf{AC} \cite{AC23})$ achieves optimal $n/3$ resiliency, $O(1)$ latency, and cubic communication complexity, outperforming any fully asynchronous solution in the same model with comparable (near)-optimal resiliency by a cubic multiplicative factor.

\subsection{Extension to Agreement on a Core Set}
\label{sec:acs_extension}

In \Cref{subsec:from_ole_to_acs}, we also go over how the (randomized) \name transformation can be adapted to solve Agreement on a Core Set (ACS)~\cite{AAPS23,ben1993asynchronous,ben1994asynchronous,cohen2023concurrent,das2024asynchronous,duan2023practical,shoup2024theoretical} with quadratic per-process bit-complexity. This approach is particularly relevant when applied to \cite{AsharovChandramouli24} and the near-optimally resilient version of \cite{GPV06}, as it enables achieving ACS instead of SMVBA without incurring any additional cost. Results are summarized in \Cref{table:state_of_the_art_protocols_acs} and allow a fairer comparison.

\section{Conclusion}
\label{section:conclusion}


This paper introduces \name, the first generic transformation of deterministic Byzantine agreement algorithms from synchrony to partial synchrony.
\name requires no cryptography, is optimally resilient ($n \geq 3t+1$, where $t$ is the maximum number of failures), and preserves the worst-case per-process bit complexity of the transformed synchronous algorithm.
Leveraging \name, we present the first partially synchronous Byzantine agreement algorithm that (1) achieves optimal $O(n^2)$ bit complexity, (2) requires no cryptography, and (3) is optimally resilient ($n \geq 3t+1$), thus showing that the Dolev-Reischuk bound is tight even in partial synchrony.
By adapting \name for long values, we obtain several new partially synchronous algorithms with improved complexity and weaker (or completely absent) cryptographic assumptions.
\changed{We also show how to adapt the transformation to the randomized setting, immediately yielding new partially synchronous results (presented in Table~\ref{table:state_of_the_art_protocols} in \Cref{sec:consequences}) that can offer a good alternative to their asynchronous counterparts that evolve in a too challenging model, without sacrificing correctness due to periods of asynchrony.}
Indirectly, \name contradicts the folklore belief that there is a fundamental gap between synchronous and partially synchronous agreement protocols.
We show that there is no inherent trade-off between the robustness of partially synchronous agreement protocols on the one hand, and the simplicity and efficiency of synchronous ones on the other hand. 
Concretely, we prove that partially synchronous algorithms can be automatically derived from synchronous ones, combining thereby simplicity, efficiency, and robustness. 

Interesting future research directions include:
\begin{compactitem}
    \item Achieving adaptive latency (e.g., $O(f)$, where $f$ is the \emph{actual} number of failures in an execution) while preserving the worst-case bit complexity.

    \item Modifying our \name transformation to attain optimistic responsiveness that allows decisions to be made at the actual speed of the network rather than at the pessimistic bound $\delta$.

    \item Improving the results for long values, e.g., by finding a worst-case bit-optimal (for long values) error-free Byzantine agremeent algorithm for $n \geq 3t+1$ (which would dominate all other solutions for long values).
    We emphasize that, given our \name transformation, achieving optimal bit complexity in Byzantine agreement requires focusing solely on solving the graded consensus and validation broadcast primitives with optimal bit complexity in the error-free setting with $n \geq 3t + 1$.
\end{compactitem}



\newpage
\appendix
\section*{APPENDIX}
The appendix can be separated into three parts.
In the first part, we formally prove the correctness and complexity of \block and \name.
Concretely, we provide a formal proof of correctness and complexity for \block in \Cref{section:crux_correctness_complexity_formal}.
Then, we give the pseudocode of \name and prove its correctness and complexity in \Cref{section:name_correctness_complexity_formal}.

The second part of the appendix focuses on the concrete implementations of graded consensus and validation broadcast that we employ in our \name transformation (concretely, in \block).
We review the existing primitives we utilize in \Cref{section:existing_primitives}.
Then, we introduce \reduceacool (\Cref{section:acool_reduction}), a primitive inspired by the A-COOL Byzantine algorithm protocol~\cite{li2021communication}; \reduceacool plays an important role in our graded consensus and validation broadcast implementations.
Next, we define and implement the rebuilding broadcast primitive (\Cref{section:rebuilding_broadccast}), another primitive that allows us to efficiently implement graded consensus and validation broadcast.
Finally, we give our implementations of graded consensus (\Cref{section:graded_consensus_concrete_implementations}) and validation broadcast (\Cref{section:validation_broadcast_concrete_implementations}).

The last part of the appendix shows how \block and \name can be used to solve other variants of Byzantine agreement. Specifically, in \Cref{subsec:from_ole_to_acs}, we show how agreement on a core set (ACS), arguably the strongest variant of Byzantine agreement, can be achieved efficiently with \block and \name.

\section{\block's Correctness \& Complexity: Formal Proof}\label{section:crux_correctness_complexity_formal}

In this section, we provide formal proof of \block's correctness and complexity.

\subsection{Review of the Specification of \block}

For the reader’s convenience, we first review the specification of \block.
Two durations parameterize \block's specification: (1) $\Delta_{\mathit{shift}}$, and (2) $\Delta_{\mathit{total}} > \Delta_{\mathit{shift}}$.
Moreover, each process $p_i$ is associated with its default value $\mathsf{def}(p_i)$ such that $\mathsf{valid}(\mathsf{def}(p_i)) = \mathit{true}$.
\block exposes the following interface:
\begin{compactitem}
    \item \textbf{request} $\mathsf{propose}(v \in \mathsf{Value})$: a process proposes value $v$.

    \item \textbf{request} $\mathsf{abandon}$: a process abandons (i.e., stops participating in) \block.

    \item \textbf{indication} $\mathsf{validate}(v' \in \mathsf{Value})$: a process validates value $v'$.

    \item \textbf{indication} $\mathsf{decide}(v' \in \mathsf{Value})$: a process decides value $v'$.

    \item \textbf{indication} $\mathsf{completed}$: a process is notified that \block has completed.
\end{compactitem}
Every correct process proposes to \block at most once and it does so with a valid value.
Observe that it is not guaranteed that all correct processes propose to \block.

The following properties are satisfied by \block:
\begin{compactitem}
    \item \emph{Strong validity:} If all correct processes that propose do so with the same value $v$, then no correct process validates or decides any value $v' \neq v$.
    
    \item \emph{External validity:} If any correct process decides or validates any value $v$, then $\mathsf{valid}(v) = \mathit{true}$.

    


    \item \emph{Agreement:} If any correct process decides a value $v$, then no correct process decides or validates any value $v' \neq v$.
    
    \item \emph{Integrity:} No correct process decides or receives a $\mathsf{completed}$ indication unless it has proposed.

    \item \emph{Termination:} If all correct processes propose and no correct process abandons \block, then every correct process eventually receives a $\mathsf{completed}$ indication.

    \item \emph{Totality:} If any correct process receives a $\mathsf{completed}$ indication at some time $\tau$, then every correct process validates a value by time $\max(\tau, \text{GST}) + 2\delta$.
    
    \item \emph{Synchronicity:} Let $\tau$ denote the first time a correct process proposes to \block.
    If (1) $\tau \geq \text{GST}$, (2) all correct processes propose by time $\tau + \Delta_{\mathit{shift}}$, and (3) no correct process abandons \block by time $\tau + \Delta_{\mathit{total}}$, then every correct process decides by time $\tau + \Delta_{\mathit{total}}$.

    \item \emph{Completion time:} If a correct process $p_i$ proposes at some time $\tau \geq \text{GST}$, then $p_i$ does not receive a $\mathsf{completed}$ indication by time $\tau + \Delta_{\mathit{total}}$.
\end{compactitem}

\subsection{Proof of Correctness \& Complexity}
We now prove \block's correctness and complexity.

\smallskip
\noindent \textbf{Proof of correctness.}
First, we prove that correct processes propose only valid values to $\mathcal{GC}_1$ and $\mathcal{GC}_2$.
Recall that $\mathcal{GC}_1$ and $\mathcal{GC}_2$ are two instances of the graded consensus primitive (see \Cref{subsection:graded_consensus_building_blocks}) utilized by \block (see \Cref{algorithm:block}).

\begin{lemma} \label{lemma:crux_valid_only_1}
Let $p_i$ be any correct process that proposes a value $v$ to $\mathcal{GC}_1$ or $\mathcal{GC}_2$ in \block (\Cref{algorithm:block}).
Then, $\mathsf{valid}(v) = \mathit{true}$.
\end{lemma}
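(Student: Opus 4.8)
The plan is to walk through \Cref{algorithm:block} and locate every point where a correct process $p_i$ invokes a $\mathsf{propose}(\cdot)$ on $\mathcal{GC}_1$ or $\mathcal{GC}_2$; there are exactly two such points, namely Step~1 (line~\ref{line:step_gc_1}) for $\mathcal{GC}_1$ and Step~4 (line~\ref{line:step_gc_2}) for $\mathcal{GC}_2$. For the $\mathcal{GC}_1$ proposal the argument is immediate: the value passed is $v$, the same value with which $p_i$ invoked $\mathsf{propose}(v)$ on \block (this invocation is precisely what triggers Task~1, line~\ref{line:start_task_1}), and the standing assumption on \block's interface is that every correct process proposes a valid value, so $\mathsf{valid}(v) = \mathit{true}$.

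For the $\mathcal{GC}_2$ proposal, the value passed is $\mathit{est}_i$, which is assigned in Step~3 (line~\ref{line:step_est}). I would finish by a three-way case analysis matching the three clauses of that assignment:
\begin{compactitem}
    \item if $g_1 = 1$, then $\mathit{est}_i = v_1$, where $(v_1, g_1)$ is $p_i$'s decision from $\mathcal{GC}_1$, so $\mathsf{valid}(v_1) = \mathit{true}$ follows from the external validity property of $\mathcal{GC}_1$ (\Cref{subsection:graded_consensus_building_blocks});
    \item otherwise, if $v_A \neq \bot$ and $\mathsf{valid}(v_A) = \mathit{true}$, then $\mathit{est}_i = v_A$, which is valid by the guard of this branch;
    \item otherwise, $\mathit{est}_i = v$, which is valid as established above.
\end{compactitem}
Since these three clauses are exhaustive (every execution of Step~3 assigns $\mathit{est}_i$ via exactly one of them), $\mathsf{valid}(\mathit{est}_i) = \mathit{true}$ in all cases, and the lemma follows.

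I do not expect a genuine obstacle here: the only non-syntactic ingredient is the external validity of $\mathcal{GC}_1$, and the remainder is a direct reading of the pseudocode together with the hypothesis that \block is fed a valid input. The single point that deserves a moment's attention is confirming the exhaustiveness of the case split on line~\ref{line:step_est}, which is evident from its if / else-if / else structure.
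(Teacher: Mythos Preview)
Your proposal is correct and essentially identical to the paper's own proof: both handle $\mathcal{GC}_1$ by invoking the assumption that \block receives valid proposals, and both handle $\mathcal{GC}_2$ by the same three-way case split on the clauses of Step~3, using external validity of $\mathcal{GC}_1$, the explicit guard on $v_A$, and the input assumption, respectively.
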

\begin{proof}
To prove the lemma, we consider all possible cases:
\begin{compactitem}
    \item Let $p_i$ propose $v$ to $\mathcal{GC}_1$ (Step 1 of Task 1).
    Therefore, $p_i$ has previously proposed $v$ to \block.
    Due to the assumption that correct processes only propose valid values to \block, $v$ is valid.


    \item Let $p_i$ propose $v$ to $\mathcal{GC}_2$ (Step 4 of Task 1).
    Hence, $v$ is the value of $p_i$'s local variable $\mathit{est}_i$ updated in Step 3 of Task 1.
    Let us investigate all possible scenarios for $v$ according to Step 3 of Task 1:
    \begin{compactitem}
        \item Let $v$ be the value decided from $\mathcal{GC}_1$.
        In this case, the external validity property of $\mathcal{GC}_1$ guarantees that $v$ is valid.

        \item Let $v$ be the value decided from the $\mathcal{A}^S$ instance of synchronous Byzantine agreement.
        In this case, $p_i$ explicitly checks that $v$ is valid before assigning $v$ to $\mathit{est}_i$.

        \item Let $v$ be $p_i$'s proposal to \block.
        Here, $v$ is valid due to the assumption that no correct process proposes an invalid value to \block.
    \end{compactitem}
\end{compactitem}
The lemma holds as its statement is true for all possible cases.
\end{proof}

\Cref{lemma:crux_valid_only_1} proves that $\mathcal{GC}_1$ and $\mathcal{GC}_2$ behave according to their specification as correct processes indeed propose only valid values to them.
Next, we prove a direct consequence of \Cref{lemma:crux_valid_only_1}: any correct process broadcasts only valid values via the $\mathcal{VB}$ instance of the validation broadcast primitive (see \Cref{subsection:validation_broadcast_building_blocks}).

\begin{lemma} \label{lemma:crux_valid_only_2}
Let $p_i$ be any correct process that broadcasts a value $v$ via $\mathcal{VB}$ in \block (\Cref{algorithm:block}).
Then, $\mathsf{valid}(v) = \mathit{true}$.
\end{lemma}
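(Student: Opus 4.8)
The plan is to trace where in \Cref{algorithm:block} a correct process can ever invoke a $\mathsf{broadcast}(\cdot)$ request on the $\mathcal{VB}$ instance. Inspecting the pseudocode, the only such point is Step 6 of Task 1 (line~\ref{line:step_validation}), where $p_i$ broadcasts the value $v_2$, i.e., the first component of its decision $(v_2, g_2)$ obtained from $\mathcal{GC}_2$ in Step 4 (line~\ref{line:step_gc_2}). Neither Task 2 (which only issues $\mathsf{abandon}$ requests to $\mathcal{GC}_1$, $\mathcal{GC}_2$, $\mathcal{VB}$ and stops $\mathcal{A}^S$) nor Task 3 (which only triggers $\mathsf{validate}$ indications) ever broadcasts via $\mathcal{VB}$. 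Hence it suffices to show that any value a correct process decides from $\mathcal{GC}_2$ is valid.

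To establish this, I would first appeal to \Cref{lemma:crux_valid_only_1}, which guarantees that every correct process proposes only valid values to $\mathcal{GC}_2$ (as well as to $\mathcal{GC}_1$). Therefore $\mathcal{GC}_2$ is invoked within its assumed precondition (each correct process proposes at most once, and always with a valid value), so the graded consensus properties listed in \Cref{subsection:graded_consensus_building_blocks} hold for $\mathcal{GC}_2$. Applying the external validity property of graded consensus to the instance $\mathcal{GC}_2$: since $p_i$ is correct and decides the pair $(v_2, g_2)$ from $\mathcal{GC}_2$, we obtain $\mathsf{valid}(v_2) = \mathit{true}$. As the value $v$ that $p_i$ broadcasts via $\mathcal{VB}$ equals $v_2$, we conclude $\mathsf{valid}(v) = \mathit{true}$, proving the lemma.

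I do not expect any genuine obstacle here: the statement is an immediate corollary of \Cref{lemma:crux_valid_only_1} combined with the external validity of $\mathcal{GC}_2$. The only point requiring a small amount of care is the exhaustive (but trivial) check that Step 6 of Task 1 is indeed the unique site at which a correct process broadcasts on $\mathcal{VB}$, which a brief case analysis over the three tasks of \block settles.
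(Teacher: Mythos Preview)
Your proposal is correct and follows essentially the same approach as the paper: the value broadcast via $\mathcal{VB}$ is the decision $v_2$ from $\mathcal{GC}_2$, \Cref{lemma:crux_valid_only_1} guarantees the preconditions of $\mathcal{GC}_2$ are met, and external validity of $\mathcal{GC}_2$ yields $\mathsf{valid}(v_2)=\mathit{true}$. Your added case analysis over the three tasks to confirm Step~6 is the unique broadcast site is a minor elaboration the paper leaves implicit, but the argument is otherwise identical.
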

\begin{proof}
As $p_i$ broadcasts $v$ via $\mathcal{VB}$ (Step 6 of Task 1), $p_i$ has previously decided $v$ from $\mathcal{GC}_2$ (Step 4 of Task 1).
As $\mathcal{GC}_2$ satisfies external validity (due to \Cref{lemma:crux_valid_only_1}), $v$ is valid.
\end{proof}

Note that \Cref{lemma:crux_valid_only_2} shows that $\mathcal{VB}$ behaves according to its specification.
The following theorem proves that \block satisfies strong validity.

\begin{theorem} [Strong validity] \label{theorem:block_strong_validity}
\block (\Cref{algorithm:block}) satisfies strong validity.
\end{theorem}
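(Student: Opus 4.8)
The plan is to chain the strong validity properties of \block's three submodules --- $\mathcal{GC}_1$, $\mathcal{GC}_2$, and $\mathcal{VB}$ --- along the flow of Task~1. Assume every correct process that proposes to \block does so with the same value $v$; we must show no correct process decides or validates any $v' \neq v$. A correct process proposes to $\mathcal{GC}_1$ (Step~1, line~\ref{line:step_gc_1}) only as the first action of Task~1, and it proposes exactly the value received in its $\mathsf{propose}(\cdot)$ request; hence every correct process that proposes to $\mathcal{GC}_1$ proposes $v$. Since $\mathcal{GC}_1$ behaves per the graded consensus specification (its inputs are valid by \Cref{lemma:crux_valid_only_1}), its strong validity forces any correct decision $(v_1, g_1)$ from $\mathcal{GC}_1$ to satisfy $v_1 = v$ and $g_1 = 1$.

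Next, in Step~3 (line~\ref{line:step_est}), because $g_1 = 1$, every correct process sets $\mathit{est}_i \gets v_1 = v$, ignoring the (possibly spurious, pre-GST) output of $\mathcal{A}^S$. Consequently every correct process that reaches Step~4 proposes $\mathit{est}_i = v$ to $\mathcal{GC}_2$. Applying the strong validity of $\mathcal{GC}_2$ analogously, any correct decision $(v_2, g_2)$ from $\mathcal{GC}_2$ satisfies $v_2 = v$ and $g_2 = 1$. A correct process decides from \block only at Step~5 (line~\ref{line:step_decide}), and only the value $v_2$; hence no correct process decides any $v' \neq v$, establishing the ``decide'' part.

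For the ``validate'' part: a correct process validates from \block only in Task~3 (line~\ref{line:step_validate_from_block}), echoing a value received via a $\mathcal{VB}.\mathsf{validate}(\cdot)$ indication. A correct process broadcasts via $\mathcal{VB}$ only at Step~6 (line~\ref{line:step_validation}), broadcasting $v_2 = v$. So every correct process that broadcasts via $\mathcal{VB}$ broadcasts $v$, and since $\mathcal{VB}$ behaves per its specification (its inputs are valid by \Cref{lemma:crux_valid_only_2}), its strong validity guarantees that no correct process validates any $v' \neq v$ from $\mathcal{VB}$, hence none from \block. This completes the argument.

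The only point needing care is that the strong validity of each submodule carries the precondition ``all correct processes that propose do so with the same value,'' which must be re-verified at each stage rather than assumed --- in particular, one must observe that $g_1 = 1$ is \emph{forced} (so that the output of $\mathcal{A}^S$ is provably discarded in Step~3) before concluding that $v$ propagates into $\mathcal{GC}_2$, and likewise into $\mathcal{VB}$. Correct processes that never propose to \block are harmless: they can validate only what $\mathcal{VB}$ hands them, a case already covered by $\mathcal{VB}$'s strong validity. No liveness or GST-related reasoning is required, so the argument holds even if \block runs entirely before GST.
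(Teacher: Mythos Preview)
Your proof is correct and follows essentially the same approach as the paper: chaining the strong validity of $\mathcal{GC}_1$, then using $g_1 = 1$ to force $\mathit{est}_i = v$ regardless of $\mathcal{A}^S$'s output, then chaining strong validity through $\mathcal{GC}_2$ and $\mathcal{VB}$. Your additional remarks about re-verifying the precondition at each stage and noting that non-proposing processes are covered by $\mathcal{VB}$'s strong validity are accurate and slightly more explicit than the paper's version, but the core argument is identical.
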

\begin{proof}
Suppose all correct processes that propose to \block do so with the same value denoted by $v$.
This implies that all correct processes that propose to $\mathcal{GC}_1$ do propose value $v$ (Step 1 of Task 1).
Hence, due to the strong validity property of $\mathcal{GC}_1$, every correct process that decides from $\mathcal{GC}_1$ decides $(v, 1)$.
Therefore, every correct process $p_i$ that reaches Step 3 of Task 1 sets its $\mathit{est}_i$ local variable to $v$, and proposes $v$ to $\mathcal{GC}_2$ (Step 4 of Task 1).
The strong validity property of $\mathcal{GC}_2$ further ensures that every correct process that decides from $\mathcal{GC}_2$ does decide $(v, 1)$, which implies that no correct process decides any value $v' \neq v$ from \block (Step 5 of Task 1).
Furthermore, every correct process that broadcasts using $\mathcal{VB}$ does broadcast value $v$ (Step 6 of Task 1).
Due to the strong validity property of $\mathcal{VB}$, no correct process validates any value $v' \neq v$ (Step 1 of Task 3), thus ensuring strong validity.
\end{proof}

Next, we prove \block's external validity.

\begin{theorem} [External validity] \label{theorem:block_external_validity}
\block (\Cref{algorithm:block}) satisfies external validity.
\end{theorem}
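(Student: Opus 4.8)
The statement has two halves — ``decides or validates'' — so I would dispatch each half separately, in each case tracing the value back to a submodule that \Cref{lemma:crux_valid_only_1} or \Cref{lemma:crux_valid_only_2} has already shown to be used within its specification. Both halves are a pure closed-box composition argument; no induction or delicate timing reasoning is needed.

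\textbf{Decision case.} A correct process triggers a $\mathsf{decide}(\cdot)$ indication only in Step 5 of Task 1 (line~\ref{line:step_decide}), and the value it decides is the value $v_2$ it obtained from $\mathcal{GC}_2$ together with grade $g_2 = 1$ in Step 4 (line~\ref{line:step_gc_2}). By \Cref{lemma:crux_valid_only_1}, every correct process proposes only valid values to $\mathcal{GC}_2$, so $\mathcal{GC}_2$ operates within its specification; its external validity property then yields $\mathsf{valid}(v_2) = \mathit{true}$. Hence any value a correct process decides from \block is valid.

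\textbf{Validation case.} A correct process $p_i$ triggers a $\mathsf{validate}(v')$ indication only in Step 1 of Task 3 (line~\ref{line:step_validate_from_block}), immediately upon receiving a $\mathcal{VB}.\mathsf{validate}(v')$ indication. By \Cref{lemma:crux_valid_only_2}, every correct process broadcasts only valid values via $\mathcal{VB}$, so $\mathcal{VB}$ operates within its specification. Its safety property then guarantees that either (i) some correct process previously broadcast $v'$ via $\mathcal{VB}$, in which case $\mathsf{valid}(v') = \mathit{true}$ again by \Cref{lemma:crux_valid_only_2}, or (ii) $v' = \mathsf{def}(p_i)$, in which case $\mathsf{valid}(v') = \mathsf{valid}(\mathsf{def}(p_i)) = \mathit{true}$ by the standing assumption on default values in the parameters of \block. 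In both sub-cases $\mathsf{valid}(v') = \mathit{true}$, which concludes the proof.

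\textbf{Main point of care.} There is no real obstacle here; the only subtlety is that the validation broadcast module (Module~\ref{mod:validation-broadcast}) exposes a \emph{safety} property (``$v'$ was broadcast by a correct process, or $v' = \mathsf{def}(p_i)$'') rather than an external-validity property, so unlike the decision case one cannot conclude validity in a single step — the $v' = \mathsf{def}(p_i)$ branch has to be discharged explicitly using that every default value is valid.
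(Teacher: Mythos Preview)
Your proof is correct and follows essentially the same approach as the paper: the decision case is handled via the external validity of $\mathcal{GC}_2$ (with \Cref{lemma:crux_valid_only_1} ensuring the precondition), and the validation case via the safety property of $\mathcal{VB}$, splitting on whether $v'$ was broadcast by a correct process (handled by \Cref{lemma:crux_valid_only_2}) or equals $\mathsf{def}(p_i)$ (handled by the parameter assumption). Your ``main point of care'' paragraph even highlights exactly the subtlety the paper's proof navigates.
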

\begin{proof}
Let a correct process decide a value $v$ from \block (Step 5 of Task 1).
Hence, that process has previously decided $v$ from $\mathcal{GC}_2$.
Due to the external validity property of $\mathcal{GC}_2$ (ensured by \Cref{lemma:crux_valid_only_1}), $v$ is a valid value.

If a correct process $p_i$ validates a value $v'$ (Step 1 of Task 3), the process has previously validated $v'$ from $\mathcal{VB}$.
There are two possibilities to analyze according to $\mathcal{VB}$'s safety property:
\begin{compactitem}
    \item Let $v' = \mathsf{def}(p_i)$. 
    In this case, $v' = \mathsf{def}(p_i)$ is valid due to the assumption that $\mathsf{valid}(\mathsf{def}(p_i)) = \mathit{true}$.

    \item Otherwise, $v'$ has been broadcast via $\mathcal{VB}$ by a correct process.
    In this case, $\mathsf{valid}(v') = \mathit{true}$ by \Cref{lemma:crux_valid_only_2}.
\end{compactitem}
The theorem holds.
\end{proof}

The following theorem shows \block's agreement.

\begin{theorem} [Agreement] \label{theorem:block_agreement}
\block (\Cref{algorithm:block}) satisfies agreement.
\end{theorem}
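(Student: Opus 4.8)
The plan is to reduce \block's agreement to two sub-module guarantees: the \emph{consistency} property of $\mathcal{GC}_2$ and the \emph{strong validity} property of $\mathcal{VB}$. First I would pin down the only two ways a correct process produces an output that agreement constrains. A correct process decides a value $v$ from \block solely by executing Step~5 of Task~1, which fires only when its $\mathcal{GC}_2$ decision $(v_2, g_2)$ has $g_2 = 1$, in which case it decides $v = v_2$. A correct process validates a value $v'$ from \block solely by executing Step~1 of Task~3, which fires exactly when $\mathcal{VB}$ delivers a $\mathsf{validate}(v')$ indication. Hence everything hinges on controlling the outputs of $\mathcal{GC}_2$ and of $\mathcal{VB}$.

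Assume some correct process $p_i$ decides $v$ from \block. Then $p_i$ decided the pair $(v, 1)$ from $\mathcal{GC}_2$ in Step~4. Since, by \Cref{lemma:crux_valid_only_1}, correct processes feed $\mathcal{GC}_2$ only valid values, $\mathcal{GC}_2$ behaves according to its specification, so its consistency property applies: no correct process decides any pair $(v' \neq v, \cdot)$ from $\mathcal{GC}_2$. In particular, no correct process decides a value $v' \neq v$ in Step~5, which settles the ``decides'' half of the statement.

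For the ``validates'' half, I would show that every correct process that reaches Step~6 broadcasts the value $v$ via $\mathcal{VB}$. Such a process $p_j$ broadcasts precisely the value component of its own $\mathcal{GC}_2$ decision, and by the same consistency argument that component equals $v$. Thus all correct processes that broadcast through $\mathcal{VB}$ broadcast $v$, so by the strong validity property of $\mathcal{VB}$ no correct process validates any $v' \neq v$ from $\mathcal{VB}$; therefore none validates $v' \neq v$ from \block in Step~1 of Task~3. Combining the two halves gives agreement.

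The only mild point to be careful about — not a real obstacle — is that the consistency hypothesis of $\mathcal{GC}_2$ must be legitimately invoked, i.e., correct processes propose only valid inputs to $\mathcal{GC}_2$ so that it meets its contract; this is exactly what \Cref{lemma:crux_valid_only_1} supplies, and likewise \Cref{lemma:crux_valid_only_2} justifies treating $\mathcal{VB}$ as behaving per spec. Everything else is a direct chaining of sub-module guarantees, so the proof is expected to be short.
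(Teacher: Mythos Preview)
Your proposal is correct and mirrors the paper's proof essentially step for step: consistency of $\mathcal{GC}_2$ handles the ``decides'' half and forces all correct broadcasts into $\mathcal{VB}$ to carry $v$, after which strong validity of $\mathcal{VB}$ handles the ``validates'' half. Your explicit appeal to \Cref{lemma:crux_valid_only_1} and \Cref{lemma:crux_valid_only_2} to justify that the sub-modules meet their contracts is a nice touch that the paper leaves implicit at this point (having established those lemmas just before).
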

\begin{proof}
No two correct processes decide different values from \block (Step 5 of Task 1) due to the consistency property of $\mathcal{GC}_2$.
Moreover, if a correct process decides some value $v$ from \block (Step 5 of Task 1), every correct process that decides from $\mathcal{GC}_2$ (Step 4 of Task 1) does so with value $v$ (due to the consistency property of $\mathcal{GC}_2$).
Thus, every correct process that broadcasts via $\mathcal{VB}$ does so with value $v$ (Step 6 of Task 1).
Due to the strong validity property of $\mathcal{VB}$, no correct process validates any value $v' \neq v$ from $\mathcal{VB}$.
Hence, no correct process validates any non-$v$ value from \block (Step 1 of Task 3).
\end{proof}

Next, we prove that \block satisfies integrity.

\begin{theorem} [Integrity] \label{theorem:block_integrity}
\block (\Cref{algorithm:block}) satisfies integrity.
\end{theorem}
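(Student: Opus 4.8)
The plan is to trace every point at which a correct process can emit a $\mathsf{decide}(\cdot)$ or a $\mathsf{completed}$ indication of \block and argue that each such point lies strictly inside Task~1 of \Cref{algorithm:block}, which a correct process executes only after it has itself invoked $\mathsf{propose}(\cdot)$.

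First, I would inspect the pseudocode of \Cref{algorithm:block} and observe that a correct process $p_i$ triggers $\mathsf{decide}(v_2)$ only at Step~5 of Task~1 (line~\ref{line:step_decide}) and triggers $\mathsf{completed}$ only at Step~7 of Task~1 (line~\ref{line:step_complete}); no other line produces either indication. In particular, the $\mathsf{decide}(\cdot,\cdot)$ indications of the internal instances $\mathcal{GC}_1$ and $\mathcal{GC}_2$ are internal events of \block and are not forwarded as a $\block.\mathsf{decide}$, and the $\mathsf{completed}$ indication of the internal $\mathcal{VB}$ instance is consumed at Step~6 (line~\ref{line:step_validation}) rather than re-emitted. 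Task~2 and Task~3 emit neither a $\mathsf{decide}$ nor a $\mathsf{completed}$ indication at all.

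Second, I would note that Steps~5 and~7 are reached by a correct process only as part of the sequential execution of Task~1, and that, by the ``When to start'' clause at line~\ref{line:start_task_1}, a correct process begins Task~1 exclusively upon receiving a $\mathsf{propose}(v)$ request (i.e., upon proposing). Combining the two observations: if a correct process decides from \block, or receives a $\mathsf{completed}$ indication from \block, then it has previously executed (the start of) Task~1, hence it has previously proposed. This is precisely the integrity property.

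I do not expect a substantive obstacle here; the statement is essentially a syntactic consequence of the structure of \Cref{algorithm:block}. The only point requiring a brief check is that an $\mathsf{abandon}$ request (Task~2), which merely halts Task~1 and its submodules (line~\ref{line:step_stop_task_1}), cannot itself cause a delayed $\mathsf{decide}$ or $\mathsf{completed}$ to fire afterwards — but this is immediate, since halting Task~1 removes exactly the control flow that could reach lines~\ref{line:step_decide} and~\ref{line:step_complete}.
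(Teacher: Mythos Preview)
Your proposal is correct and follows essentially the same approach as the paper: both argue that the only places a correct process can decide or complete are inside Task~1, and Task~1 is started only upon a $\mathsf{propose}$ request. Your version is more detailed (explicitly ruling out Tasks~2 and~3 and the internal submodule indications, and checking that $\mathsf{abandon}$ cannot cause a delayed indication), but the paper's proof is the same two-step observation in condensed form.
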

\begin{proof}
Any correct process $p_i$ that decides or completes \block does so while executing Task 1. 
As $p_i$ starts executing Task 1 only after it has proposed to \block, the integrity property is satisfied.
\end{proof}

The following theorem proves \block's termination.

\begin{theorem} [Termination] \label{theorem:block_termination}
\block (\Cref{algorithm:block}) satisfies termination.
\end{theorem}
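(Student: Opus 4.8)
The plan is to show that every correct process progresses through all seven steps of Task 1 in finite time, culminating in the $\mathsf{completed}$ indication triggered at Step 7. Since no correct process abandons \block, no correct process ever executes Task 2, so each correct process keeps running Task 1 without interruption. I will argue step by step, maintaining as an invariant that all correct processes eventually reach the beginning of each successive step; this invariant is exactly what is needed to invoke the termination properties of \block's submodules, whose preconditions require that all correct processes participate and none abandons.

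First, consider Step 1. All correct processes propose to $\mathcal{GC}_1$ and none abandons, so the termination property of graded consensus guarantees that every correct process eventually decides from $\mathcal{GC}_1$. Moreover, local clocks always make progress (even before GST, where they may drift arbitrarily but cannot stall), so the local $\Delta_{\mathit{shift}} + \Delta_1$ timer of Step 1 eventually fires at every correct process. Hence both conditions for leaving Step 1 are met, and every correct process moves to Step 2. Step 2 is purely time-bounded: a correct process simulates $\mathcal{A}^S$ for exactly $\mathcal{R}$ rounds of $\Delta_{\mathit{sync}}$ time each, and if $\mathcal{A}^S$ produces no decision within those $\mathcal{R}$ rounds it simply sets $v_{A} \gets \bot$ and proceeds; thus Step 2 always terminates, \emph{regardless} of $\mathcal{A}^S$'s behavior, and every correct process reaches Step 3. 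Step 3 is a local assignment and hence instantaneous, so every correct process reaches Step 4 with a well-defined $\mathit{est}_i$.

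Step 4 mirrors Step 1: since all correct processes reached Step 4, they all propose to $\mathcal{GC}_2$, and since none abandons, the termination property of graded consensus ensures every correct process decides from $\mathcal{GC}_2$, while the local $\Delta_{\mathit{shift}} + \Delta_2$ timer also fires. Hence Step 4 completes, and Step 5 (again local) is instantaneous. Finally, for Step 6, every correct process broadcasts $v_2$ via $\mathcal{VB}$ and none abandons, so the termination property of validation broadcast guarantees that every correct process receives a $\mathsf{completed}$ indication from $\mathcal{VB}$, completing Step 6. Step 7 then triggers the $\mathsf{completed}$ indication of \block, which establishes the theorem.

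The only delicate points I expect to address carefully are: (a) the inductive bookkeeping ensuring that the preconditions of the termination properties of $\mathcal{GC}_1$, $\mathcal{GC}_2$, and $\mathcal{VB}$ are genuinely satisfied — this reduces to the observation that all correct processes reach and do not abandon each corresponding step, which follows from the absence of Task 2 activations; and (b) the fact that Step 2's termination rests solely on the explicit $\mathcal{R}$-round bound and not on any progress guarantee from $\mathcal{A}^S$, since $\mathcal{A}^S$ offers no guarantees when run (possibly before GST) under partial synchrony. Neither point is technically hard; the bulk of the argument is simply chaining the submodules' termination properties together.
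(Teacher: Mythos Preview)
Your proof is correct and follows essentially the same approach as the paper: step through Task 1 sequentially, invoking the termination properties of $\mathcal{GC}_1$, $\mathcal{GC}_2$, and $\mathcal{VB}$ (each enabled because all correct processes reach the step and none abandons), and observing that Step 2 is time-bounded independently of $\mathcal{A}^S$'s behavior. Your write-up is in fact slightly more careful than the paper's, explicitly noting the timer conditions in Steps 1 and 4 and the irrelevance of Task 2.
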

\begin{proof}
Let all correct processes propose to \block and let no correct process ever abandon \block.
Hence, every correct process proposes to $\mathcal{GC}_1$ (Step 1 of Task 1), and no correct process ever abandons it.
This implies that every correct process eventually decides from $\mathcal{GC}_1$ (due to the termination property of $\mathcal{GC}_1$), and proposes to $\mathcal{A}^S$ (Step 2 of Task 1).
As every correct process executes $\mathcal{A}^S$ for a limited time only (i.e., for exactly $\mathcal{R}$ rounds of finite time), every correct process eventually concludes Step 2 of Task 1.
Therefore, every correct process proposes to $\mathcal{GC}_2$ (Step 4 of Task 1), and no correct process ever abandons it.
Hence, the termination property of $\mathcal{GC}_2$ ensures that every correct process eventually decides from $\mathcal{GC}_2$, which implies that every correct process broadcasts its decision via $\mathcal{VB}$ (Step 6 of Task 1).
Lastly, as no correct process ever abandons $\mathcal{VB}$, every correct process eventually receives a $\mathsf{completed}$ indication from $\mathcal{VB}$ (Step 6 of Task 1) and completes \block (Step 7 of Task 1).
\end{proof}

Next, we prove \block's totality.

\begin{theorem} [Totality] \label{theorem:block_totality}
\block (\Cref{algorithm:block}) satisfies totality.
\end{theorem}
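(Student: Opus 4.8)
The plan is to obtain Totality of \block directly from the Totality property of the validation-broadcast instance $\mathcal{VB}$, exploiting the fact that \block emits its $\mathsf{completed}$ and $\mathsf{validate}$ indications in lockstep with $\mathcal{VB}$'s corresponding indications, since local steps take zero time (\Cref{section:preliminaries}).

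First, I would unwind the hypothesis: suppose some correct process $p_i$ receives a $\mathsf{completed}$ indication from \block at time $\tau$. Inspecting \Cref{algorithm:block}, the only place that triggers $\mathsf{completed}$ is Step 7 of Task 1 (line~\ref{line:step_complete}), which $p_i$ executes immediately after finishing Step 6 (line~\ref{line:step_validation}), i.e., immediately after $p_i$ receives a $\mathsf{completed}$ indication from $\mathcal{VB}$. Because local computation is instantaneous, $p_i$ receives $\mathcal{VB}$'s $\mathsf{completed}$ indication at some time $\tau' \le \tau$ (in fact $\tau' = \tau$).

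Next, I would invoke the Totality property of $\mathcal{VB}$ (Module~\ref{mod:validation-broadcast}): since the correct process $p_i$ receives a $\mathcal{VB}.\mathsf{completed}$ indication at time $\tau'$, every correct process $p_j$ receives a $\mathcal{VB}.\mathsf{validate}(v')$ indication by time $\max(\tau', \text{GST}) + 2\delta \le \max(\tau, \text{GST}) + 2\delta$ (using monotonicity of $\max(\cdot, \text{GST})$). Finally, by Task 3 (lines~\ref{line:start_task_3}--\ref{line:step_validate_from_block}), each such $p_j$ triggers $\mathsf{validate}(v')$ from \block upon receiving that indication, again instantaneously. Hence every correct process validates a value from \block by time $\max(\tau, \text{GST}) + 2\delta$, which is precisely Totality.

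There is no real obstacle here; the only points that need care are (i) Task 3 is triggered unconditionally on a $\mathcal{VB}.\mathsf{validate}(\cdot)$, so the conclusion applies to \emph{every} correct process irrespective of whether it ever proposed to or abandoned \block (consistent with the Notes of both modules), and (ii) $\mathcal{VB}$'s Totality is stated with exactly the same $\max(\tau, \text{GST}) + 2\delta$ bound, so the reduction loses no slack.
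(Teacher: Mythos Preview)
Your proof is correct and matches the paper's approach essentially line for line: trace the $\mathsf{completed}$ indication back to $\mathcal{VB}$'s $\mathsf{completed}$ (Step~6/7), invoke $\mathcal{VB}$'s Totality to get $\mathcal{VB}.\mathsf{validate}$ at every correct process by $\max(\tau,\text{GST})+2\delta$, and then use Task~3 to lift this to a \block $\mathsf{validate}$. Your extra care about $\tau' \le \tau$ and the unconditional nature of Task~3 is sound but not needed beyond what the paper already assumes.
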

\begin{proof}
Suppose a correct process receives a $\mathsf{completed}$ indication from \block at some time $\tau$ (Step 7 of Task 1).
Hence, that correct process has previously received a $\mathsf{completed}$ indication from $\mathcal{VB}$ at time $\tau$ (Step 6 of Task 1).
Therefore, the totality property of $\mathcal{VB}$ ensures that every correct process validates a value from $\mathcal{VB}$ by time $\max(\tau, \text{GST}) + 2\delta$.
Therefore, every correct process validates a value from \block by time $\max(\tau, \text{GST}) + 2\delta$ (Step 1 of Task 3).
\end{proof}

The theorem below proves the completion time property of \block.

\begin{theorem} [Completion time] \label{theorem:block_completion_time}
\block (\Cref{algorithm:block}) satisfies completion time.
\end{theorem}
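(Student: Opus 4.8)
The plan is a direct timing argument that tracks how long process $p_i$ must spend in Task 1 of \Cref{algorithm:block} before it can possibly trigger $\mathsf{completed}$. The first observation is that a correct process triggers a $\mathsf{completed}$ indication for \block only in Step 7 of Task 1 (line~\ref{line:step_complete}); neither Task 2 nor Task 3 ever produces a $\mathsf{completed}$ indication. Moreover, Step 7 is reached only after Steps 1, 2, 4, and 6 of Task 1 have been executed sequentially, and, by the integrity property of $\mathcal{VB}$, $p_i$ cannot receive the $\mathsf{completed}$ indication from $\mathcal{VB}$ in Step 6 (line~\ref{line:step_validation}) before it has broadcast via $\mathcal{VB}$ in that same step. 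Hence it suffices to lower-bound the time at which $p_i$ starts Step 6.

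The second ingredient is clock accuracy after GST. Since $p_i$ proposes at time $\tau \geq \text{GST}$ and local clocks do not drift after GST (see \Cref{section:preliminaries}), every duration that $p_i$ measures locally while executing Task 1 equals the corresponding real elapsed time. I can therefore add up the mandatory waiting times imposed by the pseudocode: Step 1 (line~\ref{line:step_gc_1}) is left no earlier than $\tau + (\Delta_{\mathit{shift}} + \Delta_1)$, because $p_i$ explicitly waits until $\Delta_{\mathit{shift}} + \Delta_1$ time has elapsed since it proposed; Step 2 (line~\ref{line:step_ba}) runs for exactly $\mathcal{R}$ rounds of $\Delta_{\mathit{sync}}$ time each, i.e., for exactly $\mathcal{R}\cdot\Delta_{\mathit{sync}}$ time; Step 3 (line~\ref{line:step_est}) and Step 5 (line~\ref{line:step_decide}) are purely local and take zero time; and Step 4 (line~\ref{line:step_gc_2}) is left no earlier than $\Delta_{\mathit{shift}} + \Delta_2$ time after it is entered. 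Summing these contributions and invoking the definition of $\Delta_{\mathit{total}}$ (line~\ref{line:delta_total}), namely $\Delta_{\mathit{total}} = (\Delta_{\mathit{shift}} + \Delta_1) + (\mathcal{R}\cdot\Delta_{\mathit{sync}}) + (\Delta_{\mathit{shift}} + \Delta_2)$, I conclude that $p_i$ does not reach Step 6 before time $\tau + \Delta_{\mathit{total}}$.

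Combining the two ingredients: $p_i$ cannot broadcast via $\mathcal{VB}$ before time $\tau + \Delta_{\mathit{total}}$, so by the integrity property of $\mathcal{VB}$ it cannot receive a $\mathsf{completed}$ indication from $\mathcal{VB}$ before then, and hence cannot trigger $\mathsf{completed}$ for \block (Step 7) before then either. The one point needing slightly more care than the bare arithmetic is the boundary case: for the statement to read ``by time $\tau + \Delta_{\mathit{total}}$'' rather than merely ``strictly before $\tau + \Delta_{\mathit{total}}$'', I use that running $\mathcal{VB}$ in Step 6 from a broadcast to the matching $\mathsf{completed}$ indication consumes at least one round of communication, a strictly positive amount of time, so $p_i$ in fact triggers $\mathsf{completed}$ strictly after $\tau + \Delta_{\mathit{total}}$. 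I do not expect any real obstacle here: the argument is essentially bookkeeping of the Task 1 timeouts, the only subtleties being the appeal to post-GST clock accuracy (which is exactly why the hypothesis $\tau \geq \text{GST}$ is essential) and the treatment of the Step 6 boundary just mentioned.
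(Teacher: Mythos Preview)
Your proof is correct and follows essentially the same approach as the paper: both arguments track the mandatory timing of Steps 1, 2, and 4 of Task~1 using post-GST clock accuracy to lower-bound when $p_i$ can broadcast via $\mathcal{VB}$, and then invoke the integrity property of $\mathcal{VB}$ to conclude. Your treatment of the boundary at $\tau + \Delta_{\mathit{total}}$ (appealing to $\mathcal{VB}$ requiring at least one communication round) is slightly more explicit than the paper's, which simply asserts the strict inequality $\tau' > \tau + \Delta_{\mathit{total}}$.
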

\begin{proof}
Let $p_i$ be any correct process that proposes to \block at some time $\tau \geq \text{GST}$.
As $\tau \geq \text{GST}$, $p_i$'s local clock does not drift (see \Cref{section:preliminaries}).
Process $p_i$ does not complete Step 1 of Task 1 by time $\tau + ( \Delta_{\mathit{shift}} + \Delta_1 )$.
Similarly, $p_i$ does not complete Step 2 of Task 1 by time $\tau + ( \Delta_{\mathit{shift}} + \Delta_1 ) + ( \mathcal{R} \cdot \Delta_{\mathit{sync}})$.
Lastly, $p_i$ does not complete Step 4 of Task 1 by time $\tau + ( \Delta_{\mathit{shift}} + \Delta_1 ) + ( \mathcal{R} \cdot \Delta_{\mathit{sync}} ) + ( \Delta_{\mathit{shift}} + \Delta_2 ) = \tau + \Delta_{\mathit{total}}$.
Hence, the earliest time at which $p_i$ broadcasts via $\mathcal{VB}$ (Step 6 of Task 1) is $\tau' > \tau + \Delta_{\mathit{total}}$.
Thus, due to the integrity property of $\mathcal{VB}$, $p_i$ cannot receive a $\mathsf{completed}$ indication from $\mathcal{VB}$ (and, thus, from \block at Step 1 of Task 3) before time $\tau' > \tau + \Delta_{\mathit{total}}$, which proves \block's completion time property.
\end{proof}

Lastly, we need to prove the synchronicity property of \block.
First, we explicitly state how processes simulate a synchronous agreement algorithm $\mathcal{A}^S$ in \block.
Concretely, we propose two simulation approaches: (1) $\mathsf{CryptoFreeSim}$, when $\mathcal{A}^S$ is cryptography-free, which is conceptually simpler, and (2) $\mathsf{CryptoSim}$, when $\mathcal{A}^S$ is cryptography-based, which is more general.

\smallskip
\noindent \textbf{Simulating cryptography-free $\mathcal{A}^S$.}
We explicitly define our simulation $\mathsf{CryptoFreeSim}$ in \Cref{algorithm:crypto_free_sim}.
As mentioned in \block's pseudocode (\Cref{algorithm:block}), $\mathsf{CryptoFreeSim}$ roughly works as follows.
(1) A correct process $p_i$ runs each simulated round for exactly $\Delta_{\mathit{sync}} = \Delta_{\mathit{shift}} + \delta$ time.
(2) If process $p_i$ sends a message $m$ in an even (resp., odd) round $r$ of $\mathcal{A}^S$, then $p_i$ appends parity bit $0$ (resp., $1$) to $m$ in the simulated round $r$.
(3) Process $p_i$ executes exactly $\mathcal{R}$ simulated rounds; recall that $\mathcal{R}$ denotes the number of rounds $\mathcal{A}^S$ takes to terminate when run in synchrony.
(4) Process $p_i$ does not send more than $2\mathcal{B}$ bits; recall that $\mathcal{B}$ is the maximum number of bits any correct process sends when $\mathcal{A}^S$ is run in synchrony.

\begin{algorithm}
\caption{$\mathsf{CryptoFreeSim}$: Pseudocode (for process $p_i$)}
\label{algorithm:crypto_free_sim}
\footnotesize
\begin{algorithmic} [1] 
\State \textbf{Local variables:}
\State \hskip2em $\mathsf{Local\_State}$ $s_i \gets$ the initial state corresponding to $p_i$'s proposal to $\mathcal{A}^S$
\State \hskip2em $\mathsf{Integer}$ $\mathit{round}_i \gets 1$
\State \hskip2em $\mathsf{Integer}$ $\mathit{sent\_bits}_i \gets 0$
\State \hskip2em $\mathsf{Set}(\mathsf{Message})$ $\mathit{received}_i \gets \emptyset$ \BlueComment{received messages are stored here}

\medskip
\State \textbf{while} $\mathit{round}_i \leq \mathcal{R}$:
\State \hskip2em \textbf{for each} $\mathsf{Process}$ $p_j$:
\State \hskip4em let $M_j \gets$ the messages $\mathcal{A}^S$ instructs $p_i$ to send to $p_j$ when $p_i$'s local state is $s_i$
\State \hskip4em let $B_j \gets$ the number of bits in $M_j$
\State \hskip4em \textbf{if} $M_j \neq \bot$: \BlueComment{there exists a message to be sent to $p_j$}
\State \hskip6em \textbf{if} $\mathit{sent\_bits}_i + B_j \leq 2\mathcal{B}$:   \label{line:check_sent_messages_crypto_free} \BlueComment{$p_i$ can still send messages}
\State \hskip8em send $\langle \mathit{round}_i \text{ mod } 2, M_j \rangle$ to $p_j$ \BlueComment{$p_i$ sends $M_j$ to process $p_j$ with the parity bit $\mathit{round}_i \text{ mod } 2$}
\State \hskip8em $\mathit{sent\_bits}_i \gets \mathit{sent\_bits}_i + B_j$

\medskip
\State \hskip2em \textbf{wait for} $\Delta_{\mathit{sync}} = \Delta_{\mathit{shift}} + \delta$ time

\medskip
\State \hskip2em let $\mathit{received\_current\_round} \gets $ every message $m$ that belongs to $\mathit{received}_i$ with the parity bit $\mathit{round}_i \text{ mod } 2$
\State \hskip2em $s_i \gets$ the state $\mathcal{A}^S$ instructs $p_i$ to transit to based on (1) $p_i$'s previous state $s_i$, and (2) $\mathit{received\_current\_round}$ 
\State \hskip2em $\mathit{round}_i \gets \mathit{round}_i + 1$
\end{algorithmic} 
\end{algorithm}

In the rest of the proof, we say that ``$\mathcal{S}^*$ holds'' if and only if (1) the first correct process that starts $\mathsf{CryptoFreeSim}$ does so at some time $\tau^* \geq \text{GST}$, (2) all correct processes start $\mathsf{CryptoFreeSim}$ by time $\tau^* + \Delta_{\mathit{shift}}$, and (3) no correct process stops $\mathsf{CryptoFreeSim}$ by time $\tau^* + \mathcal{R}(\Delta_{\mathit{shift}} + \delta) = \tau^* + \mathcal{R} \cdot \Delta_{\mathit{sync}}$.
The following lemma proves that $\mathsf{CryptoFreeSim}$ indeed simulates $\mathcal{A}^S$ when $\mathcal{S}^*$ holds.

\begin{lemma} [$\mathsf{CryptoFreeSim}$ simulates $\mathcal{A}^S$] \label{lemma:cryptography_free_simulation_correct}
Let $\mathcal{S}^*$ hold.
For each execution $\mathcal{E}$ of $\mathsf{CryptoFreeSim}$, there exists an $\mathcal{R}$-rounds-long synchronous execution $\mathcal{E}'$ of $\mathcal{A}^S$ such that:
\begin{compactitem}
    \item the sets of correct processes in $\mathcal{E}$ and $\mathcal{E}'$ are identical, and

    \item the proposals of correct processes in $\mathcal{E}$ and $\mathcal{E}'$ are identical, and

    \item the sets of messages sent by correct processes in $\mathcal{E}$ and $\mathcal{E}'$ are identical (modulo the parity bits), and

    \item for each correct process $p_i$ and every $k \in [1, \mathcal{R} + 1]$, $s_i^k(\mathcal{E}) = s_i^k(\mathcal{E}')$, where (1) $s_i^k(\mathcal{E})$ is the state of $p_i$ at the beginning of the $k$-th (i.e., at the end of the $(k - 1)$-st) simulated round in $\mathcal{E}$, and (2) $s_i^k(\mathcal{E}')$ is the state of $p_i$ at the beginning of the $k$-th (i.e., at the end of the $(k - 1)$-st) round in $\mathcal{E}'$.
\end{compactitem}    
\end{lemma}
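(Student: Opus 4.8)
The plan is to prove the lemma by constructing the synchronous execution $\mathcal{E}'$ round by round, in lockstep with an induction on the round index $k \in [1,\mathcal{R}+1]$ establishing that $s_i^k(\mathcal{E}) = s_i^k(\mathcal{E}')$ for every correct process $p_i$. In the construction I fix (i) the set of correct processes and their proposals in $\mathcal{E}'$ to be exactly those of $\mathcal{E}$, which immediately yields the first two bullet points and settles the base case $k=1$ (the initial state of a correct process in either execution is the one determined by its proposal), and (ii) for each round $r$, the messages the Byzantine processes send to each correct process in $\mathcal{E}'$, chosen to mirror what is received in $\mathcal{E}$; this is permissible because a Byzantine process is unconstrained in a synchronous execution of $\mathcal{A}^S$.

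The crux is a timing claim. Under $\mathcal{S}^*$, each correct process $p_i$ runs its $r$-th simulated round during a real-time interval $[t_i + (r-1)\Delta_{\mathit{sync}},\, t_i + r\Delta_{\mathit{sync}}]$ with $t_i \in [\tau^*, \tau^*+\Delta_{\mathit{shift}}]$ and $\tau^* \geq \text{GST}$, sending its round-$r$ messages at the left endpoint and applying $\mathcal{A}^S$'s transition at the right endpoint. Since $\Delta_{\mathit{sync}} = \Delta_{\mathit{shift}} + \delta$ and $\tau^* \geq \text{GST}$, a round-$r$ message sent by a correct $p_i$ at time $t_i + (r-1)\Delta_{\mathit{sync}}$ reaches a correct $p_j$ by time $t_i + (r-1)\Delta_{\mathit{sync}} + \delta \leq t_j + r\Delta_{\mathit{sync}}$ (using $t_j \geq \tau^* \geq t_i - \Delta_{\mathit{shift}}$), i.e.\ before $p_j$ ends its round $r$. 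Symmetrically, a message that a correct $p_i$ tags with a parity for some round $r'$ with $r' \geq r+2$ is \emph{sent} strictly after $p_j$ ends round $r$, and one tagged for $r' \leq r-2$ is \emph{delivered} by the time $p_j$ ends round $r'$; the parity bit $r \bmod 2$ disambiguates the only remaining case $|r'-r| = 1$. Hence the set of parity-$(r\bmod 2)$ messages that a correct $p_j$ processes at the end of its round $r$ is precisely the round-$r$ messages of the correct processes together with some adversarially controlled set $M^{\mathrm{adv}}_{j,r}$.

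For the inductive step, assume $s_i^r(\mathcal{E}) = s_i^r(\mathcal{E}')$ for all correct $p_i$. Then $\mathcal{A}^S$ instructs each correct $p_i$ to send the same messages in round $r$ of $\mathcal{E}$ and of $\mathcal{E}'$ (up to parity tags). Because $\mathcal{E}'$ is a genuine synchronous execution of $\mathcal{A}^S$, each correct process sends at most $\mathcal{B}$ content bits across rounds $1$ through $r$ of $\mathcal{E}'$, hence at most $2\mathcal{B}$ bits in $\mathcal{E}$ (one parity bit per message, and a message of $\mathcal{A}^S$ carries at least one bit), so the budget check on \cref{line:check_sent_messages_crypto_free} never suppresses a send; carrying this bit-count bound as part of the induction is, I expect, the main bookkeeping obstacle, since it resolves the apparent circularity ``messages match because the budget is respected, the budget is respected because messages match''. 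Now let the Byzantine processes in round $r$ of $\mathcal{E}'$ send exactly $M^{\mathrm{adv}}_{j,r}$ (parity bits stripped) to each correct $p_j$. By the timing claim, $p_j$ receives the same multiset of round-$r$ messages in $\mathcal{E}$ and in $\mathcal{E}'$, so $\mathcal{A}^S$'s transition gives $s_j^{r+1}(\mathcal{E}) = s_j^{r+1}(\mathcal{E}')$. Iterating for $\mathcal{R}$ rounds yields the required $\mathcal{R}$-round synchronous execution of $\mathcal{A}^S$, and equality of the sets of messages sent by correct processes (modulo parity) is exactly the round-by-round equality we have maintained.
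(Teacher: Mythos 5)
Your proof is correct, but it takes a genuinely different route from the paper's. The paper's proof is modular: it first defines a variant $\mathsf{CryptoFreeSim}^-$ with the budget check on line~\ref{line:check_sent_messages_crypto_free} removed, cites the timed-message-delivery simulation theorem of~\cite[Theorem 4.1]{LS22} to get the correctness of $\mathsf{CryptoFreeSim}^-$ for free, then proves in a second step that no correct process ever exceeds $2\mathcal{B}$ bits in $\mathsf{CryptoFreeSim}^-$ when $\mathcal{S}^*$ holds (by mapping each message to its counterpart in the simulated execution of $\mathcal{A}^S$ and counting the parity overhead), and finally concludes $\mathsf{CryptoFreeSim} \equiv \mathsf{CryptoFreeSim}^-$ so the budget check is vacuous. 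You instead carry out the round-by-round inductive construction in full — including the timing argument showing that $\Delta_{\mathit{sync}} = \Delta_{\mathit{shift}}+\delta$ plus the parity bit correctly slots each correct-to-correct message into the intended round — which re-derives in place what the paper outsources to~\cite{LS22}. Your chosen way of breaking the circularity (``messages match because budget holds, budget holds because messages match'') is to thread the bit-count bound through the same induction that establishes state equality; the paper's way is to push the circularity into a separate equivalence argument between the bounded and unbounded variants. Both resolve it; yours is self-contained and arguably makes the dependency between the budget invariant and the state invariant more explicit, while the paper's is shorter because it reuses a black-box result. One detail you should state more crisply is the strengthened inductive hypothesis: in addition to $s_i^r(\mathcal{E}) = s_i^r(\mathcal{E}')$ for all correct $p_i$, you need to have established that no send was suppressed in any round $r' < r$ (so that the messages actually present in $\mathcal{E}$ match those of $\mathcal{E}'$), since the budget bound for a given round is deduced from state agreement, which in turn relies on the absence of earlier suppressions; you acknowledge this as the ``main bookkeeping obstacle'' but leave the hypothesis implicit.
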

\begin{proof}
To prove the lemma, we go through a sequence of intermediate results.

\medskip
\noindent \emph{Intermediate result 1: Let $\mathsf{CryptoFreeSim}^-$ be identical to $\mathsf{CryptoFreeSim}$ except that correct processes are allowed to send any number of bits (i.e., the check at line~\ref{line:check_sent_messages_crypto_free} is removed).
Moreover, let the condition $\mathcal{S}^*$ be adapted to $\mathsf{CryptoFreeSim}^-$.
Then, the lemma holds for $\mathsf{CryptoFreeSim}^-$.}
\\ The result is proven in~\cite[Theorem 4.1]{LS22}.

\medskip
\noindent \emph{Intermediate result 2: Let $\mathsf{CryptoFreeSim}^-$ be identical to $\mathsf{CryptoFreeSim}$ except that correct processes are allowed to send any number of bits (i.e., the check at line~\ref{line:check_sent_messages_crypto_free} is removed).
Moreover, let the condition $\mathcal{S}^*$ be adapted to $\mathsf{CryptoFreeSim}^-$.
Then, no correct process sends more than $2\mathcal{B}$ bits in any execution $\mathcal{E}^-$ of $\mathsf{CryptoFreeSim}^-$ when $\mathcal{S}^*$ holds.}
\\ By contradiction, suppose there exists an execution $\mathcal{E}^-$ of $\mathsf{CryptoFreeSim}^-$ in which some correct process $p_i$ sends more than $2\mathcal{B}$ bits.
The first intermediate result proves that $\mathcal{E}^-$ simulates an execution $\mathit{sim}(\mathcal{E}^-)$ of $\mathcal{A}^S$.
Hence, a message $m$ is sent by $p_i$ in $\mathcal{E}^-$ if and only if a message $m'$ is sent by $p_i$ in $\mathit{sim}(\mathcal{E}^-)$ such that $|m| = |m'| + 1$, where $|m|$ (resp., $|m'|$) denotes the bit-size of message $m$ (resp., $m'$).
As each sent message contains at least a single bit, $p_i$ sends at most $\mathcal{M} \leq \mathcal{B}$ messages in $\mathit{sim}(\mathcal{E}^-)$.
Therefore, process $p_i$ can send at most $\mathcal{M} \leq \mathcal{B}$ parity bits in $\mathcal{E}^-$ (not sent in $\mathit{sim}(\mathcal{E}^-)$).
Thus, it is impossible for $p_i$ to send more than $2\mathcal{B}$ bits in $\mathcal{E}^-$.

\medskip
\noindent \emph{Proof of \Cref{lemma:cryptography_free_simulation_correct}.}
To prove that $\mathsf{CryptoFreeSim}$ correctly simulates $\mathcal{A}^S$ when $\mathcal{S}^*$ holds, it suffices to show that $\mathsf{CryptoFreeSim} \equiv \mathsf{CryptoFreeSim}^-$ as the lemma would follow from the first intermediate result, where $\mathsf{CryptoFreeSim}^-$ is defined above.
By contradiction, suppose $\mathsf{CryptoFreeSim} \not\equiv \mathsf{CryptoFreeSim}^-$ when $\mathcal{S}^*$ holds.
This is only possible if there exists an execution $\mathcal{E}$ of $\mathsf{CryptoFreeSim}$ in which a correct process does not send some message $m$ it was supposed to send according to $\mathcal{A}^S$ because the sending would exceed the $2\mathcal{B}$ bits limit.
However, this implies that there exists an execution of $\mathsf{CryptoFreeSim}^-$ in which this correct process does send more than $2\mathcal{B}$ bits, which represents a contradiction with the second intermediate result.
Therefore, $\mathsf{CryptoFreeSim} \equiv \mathsf{CryptoFreeSim}^-$ when $\mathcal{S}^*$ holds.
\end{proof}

\smallskip
\noindent \textbf{Simulating cryptography-based $\mathcal{A}^S$.}
$\mathsf{CryptoSim}$ (\Cref{algorithm:simulation}) represents our simulation of a cryptography-based synchronous algorithm $\mathcal{A}^S$ (Step 2 of Task 1).
Importantly, when $\mathsf{CryptoSim}$ is utilized in \block, $\Delta_{\mathit{sync}} = 2\Delta_{\mathit{shift}} + \delta$.

\begin{algorithm}
\caption{$\mathsf{CryptoSim}$: Pseudocode (for process $p_i$)}
\label{algorithm:simulation}
\footnotesize
\begin{algorithmic} [1] 
\State \textbf{Local variables:}
\State \hskip2em $\mathsf{Local\_State}$ $s_i \gets$ the initial state corresponding to $p_i$'s proposal to $\mathcal{A}^S$
\State \hskip2em $\mathsf{Integer}$ $\mathit{round}_i \gets 1$
\State \hskip2em $\mathsf{Integer}$ $\mathit{sent\_bits}_i \gets 0$
\State \hskip2em $\mathsf{Set}(\mathsf{Message})$ $\mathit{received}_i \gets \emptyset$

\medskip
\State \textbf{while} $\mathit{round}_i \leq \mathcal{R}$:
\State \hskip2em $\mathit{received}_i \gets \emptyset$
\State \hskip2em \textbf{measure} $\Delta_{\mathit{sync}} = 2\Delta_{\mathit{shift}} + \delta$ time 
\State \hskip2em for every message $m$ received in the following $\Delta_{\mathit{sync}} = 2\Delta_{\mathit{shift}} + \delta$ time period, add $m$ to $\mathit{received}_i$ \label{line:receive_simulation}

\medskip
\State \hskip2em \textbf{wait for} $\Delta_{\mathit{shift}}$ time
\State \hskip2em \textbf{for each} $\mathsf{Process}$ $p_j$:
\State \hskip4em let $M_j \gets$ the messages $\mathcal{A}^S$ instructs $p_i$ to send to $p_j$ when $p_i$'s local state is $s_i$
\State \hskip4em let $B_j \gets$ the number of bits in $M_j$
\State \hskip4em \textbf{if} $M_j \neq \bot$: \BlueComment{there exists a message to be sent to $p_j$}
\State \hskip6em \textbf{if} $\mathit{sent\_bits}_i + B_j \leq \mathcal{B}$:  \BlueComment{$p_i$ can still send messages} \label{line:check_sent_bits}
\State \hskip8em send $\langle M_j \rangle$ to $p_j$ \label{line:send_simulation} \BlueComment{$p_i$ sends $M_j$ to process $p_j$}
\State \hskip8em $\mathit{sent\_bits}_i \gets \mathit{sent\_bits}_i + B_j$

\medskip
\State \hskip2em \textbf{upon} the measured $\Delta_{\mathit{sync}} = 2\Delta_{\mathit{shift}} + \delta$ time elapses:
\State \hskip4em $s_i \gets$ the state $\mathcal{A}^S$ instructs $p_i$ to transit to based on (1) $p_i$'s previous state $s_i$, and (2) $\mathit{received}_i$ \label{line:update_state}
\State \hskip4em $\mathit{round}_i \gets \mathit{round}_i + 1$
\end{algorithmic} 
\end{algorithm}

As we did for $\mathsf{CryptoFreeSim}$, we say that ``$\mathcal{S}^*$ holds'' if and only if (1) the first correct process that starts $\mathsf{CryptoSim}$ does so at some time $\tau^* \geq \text{GST}$, (2) all correct processes start $\mathsf{CryptoSim}$ by time $\tau^* + \Delta_{\mathit{shift}}$, and (3) no correct process stops $\mathsf{CryptoSim}$ by time $\tau^* + \mathcal{R} \cdot \Delta_{\mathit{sync}}$ (recall that $\Delta_{\mathit{sync}} = 2\Delta_{\mathit{shift}} + \delta$).
The following lemma is crucial in proving that $\mathsf{CryptoSim}$ successfully simulates a synchronous algorithm $\mathcal{A}^S$.

\begin{lemma} \label{lemma:simulation_correct}
Let $\mathsf{CryptoSim}^-$ be identical to $\mathsf{CryptoSim}$ except that correct processes are allowed to send any number of bits (i.e., the check at line~\ref{line:check_sent_bits} is removed).
Moreover, let $\mathcal{S}^*$ hold for $\mathsf{CryptoSim}^-$.
For each execution $\mathcal{E}$ of $\mathsf{CryptoSim}^-$, there exists an $\mathcal{R}$-rounds-long synchronous execution $\mathcal{E}'$ of $\mathcal{A}^S$ such that:
\begin{compactitem}
    \item the sets of correct processes in $\mathcal{E}$ and $\mathcal{E}'$ are identical, and

    \item the proposals of correct processes in $\mathcal{E}$ and $\mathcal{E}'$ are identical, and

    \item the sets of messages sent by correct processes in $\mathcal{E}$ and $\mathcal{E}'$ are identical, and

    \item for each correct process $p_i$ and every $k \in [1, \mathcal{R} + 1]$, $s_i^k(\mathcal{E}) = s_i^k(\mathcal{E}')$, where (1) $s_i^k(\mathcal{E})$ is the state of $p_i$ at the beginning of the $k$-th (i.e., at the end of the $(k - 1)$-st) simulated round in $\mathcal{E}$, and (2) $s_i^k(\mathcal{E}')$ is the state of $p_i$ at the beginning of the $k$-th (i.e., at the end of the $(k - 1)$-st) round in $\mathcal{E}'$.
\end{compactitem}
\end{lemma}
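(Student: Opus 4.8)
The plan is to prove the lemma by an explicit, round-by-round construction of the synchronous execution $\mathcal{E}'$, following the simulation paradigm of \cite{DBLP:conf/podc/LindellLR02}. The heart of the argument is a timing claim: when $\mathcal{S}^*$ holds, the choice $\Delta_{\mathit{sync}} = 2\Delta_{\mathit{shift}} + \delta$, combined with the facts that local clocks do not drift after $\text{GST}$ and that message delays are bounded by $\delta$ after $\text{GST}$, guarantees that in each simulated round $k$ every correct process $p_i$ collects in $\mathit{received}_i$ exactly the round-$k$ messages sent by every correct process (together with, possibly, some messages from Byzantine processes). Granting this, one builds $\mathcal{E}'$ so that its set of correct processes and their proposals coincide with those in $\mathcal{E}$; each correct process in round $k$ of $\mathcal{E}'$ sends precisely the messages $\mathcal{A}^S$ dictates from its round-$k$ state; and the Byzantine processes in $\mathcal{E}'$ send to each correct process, in round $k$, exactly the (arbitrary) messages that this correct process received from Byzantine processes during its round-$k$ window in $\mathcal{E}$. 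Since Byzantine behavior is unconstrained in the synchronous model, this is a legal synchronous execution of $\mathcal{A}^S$, and by design it runs for exactly $\mathcal{R}$ rounds.

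First I would establish the timing claim by induction on $k$. Let $\tau^*$ be the first time a correct process starts $\mathsf{CryptoSim}^-$; by $\mathcal{S}^*$ all correct processes start within $[\tau^*, \tau^* + \Delta_{\mathit{shift}}]$, and, since no clock drifts after $\text{GST} \le \tau^*$ and every round lasts exactly $\Delta_{\mathit{sync}}$, the $k$-th round window of a correct process $p_i$ has the form $[\sigma_k + x_i,\ \sigma_k + x_i + \Delta_{\mathit{sync}}]$ for a per-process offset $x_i \in [0, \Delta_{\mathit{shift}}]$ and a common $\sigma_k = \tau^* + (k-1)\Delta_{\mathit{sync}}$; that is, the width of the start window is preserved across rounds. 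A correct process emits its round-$k$ messages only after the internal wait for $\Delta_{\mathit{shift}}$, hence no earlier than $\sigma_k + \Delta_{\mathit{shift}}$ and no later than $\sigma_k + 2\Delta_{\mathit{shift}}$; such a message is delivered no earlier than $\sigma_k + \Delta_{\mathit{shift}}$ and no later than $\sigma_k + 2\Delta_{\mathit{shift}} + \delta = \sigma_k + \Delta_{\mathit{sync}}$. Since each round-$k$ window $[\sigma_k + x_i,\ \sigma_k + x_i + \Delta_{\mathit{sync}}]$ contains the interval $[\sigma_k + \Delta_{\mathit{shift}},\ \sigma_k + \Delta_{\mathit{sync}}]$ (because $0 \le x_i \le \Delta_{\mathit{shift}}$), the message lands inside $p_i$'s round-$k$ window; and because it is sent no earlier than $\sigma_k + \Delta_{\mathit{shift}} \ge \sigma_k + x_i$, it does not land in $p_i$'s round-$(k-1)$ window. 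Hence $\mathit{received}_i$ at the end of round $k$ contains all round-$k$ correct-process messages and no correct-process message belonging to another round.

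With the timing claim in hand, the state correspondence $s_i^k(\mathcal{E}) = s_i^k(\mathcal{E}')$ follows by a second (outer) induction on $k$: the base case $k=1$ is immediate, as $s_i^1$ is the initial state determined by $p_i$'s proposal in both executions; for the step, the round-$k$ messages sent by correct processes in $\mathcal{E}$ and in $\mathcal{E}'$ coincide (they are produced by the same transition function from the equal states $s_i^k$), and the Byzantine messages received by $p_i$ in round $k$ are equal by construction, so the input to line~\ref{line:update_state} in round $k$ of $\mathcal{E}$ equals the multiset of round-$k$ messages delivered to $p_i$ in $\mathcal{E}'$, whence $s_i^{k+1}(\mathcal{E}) = s_i^{k+1}(\mathcal{E}')$. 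The remaining bullet points (equal sets of correct processes, equal proposals, equal sets of messages sent by correct processes) then read off directly from the construction. This also sets up the companion result, \Cref{lemma:cryptography_based_simulation_correct}, for the bit-limited $\mathsf{CryptoSim}$: one then only needs to check that the guard at line~\ref{line:check_sent_bits} never fires when $\mathcal{S}^*$ holds, exactly as in the second intermediate result of the $\mathsf{CryptoFreeSim}$ analysis.

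I expect the main obstacle to be the boundary bookkeeping in the timing claim — specifically, ruling out that a round-$k$ message is received a ``round early'' or a ``round late'' by some recipient, which is precisely what the wait for $\Delta_{\mathit{shift}}$ before sending and the slightly oversized window $\Delta_{\mathit{sync}} = 2\Delta_{\mathit{shift}} + \delta$ are engineered to prevent. One must be careful about whether the reception interval is half-open at its endpoints; a standard $\epsilon$-slack argument (implicit already in \cite{DBLP:conf/podc/LindellLR02}) resolves any measure-zero ambiguity. A secondary point worth spelling out explicitly is that messages from Byzantine processes — which may be delivered in any window, or even originate before $\text{GST}$ — cause no difficulty: each such message is attributed to the unique round during whose window it is received by the given correct process, and is simply replayed by the corresponding Byzantine process in $\mathcal{E}'$.
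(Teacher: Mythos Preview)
Your construction of $\mathcal{E}'$ and the timing analysis for \emph{correct-process} messages are essentially what the paper does, and your induction establishing $s_i^k(\mathcal{E})=s_i^k(\mathcal{E}')$ is fine. But there is a genuine gap, and it is exactly the point that distinguishes $\mathsf{CryptoSim}$ from $\mathsf{CryptoFreeSim}$.

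The sentence ``Since Byzantine behavior is unconstrained in the synchronous model, this is a legal synchronous execution of $\mathcal{A}^S$'' is where your argument breaks for cryptography-based $\mathcal{A}^S$. When $\mathcal{A}^S$ relies on cryptography, the adversary is computationally bounded, and $\mathcal{E}'$ is a valid execution of $\mathcal{A}^S$ only if the replayed Byzantine messages are \emph{computationally feasible}: a Byzantine round-$k$ message must be producible from information available to the adversary through round $k$. The paper therefore proves a fourth property of $\mathcal{E}'$ that you omit: for any Byzantine message $m$ placed in round $k$ of $\mathcal{E}'$ (because it landed in some correct $p_r$'s round-$k$ window in $\mathcal{E}$), every correct-process message $m'$ that \emph{any} Byzantine process had received in $\mathcal{E}$ prior to $m$ being sent belongs to some $\mathit{sent}_s(k')$ with $k'\le k$. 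This guarantees the round-$k$ adversary in $\mathcal{E}'$ has seen no round-$(k{+}1)$ honest output, so needs no extra computational power.

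This is also where you misread the purpose of the $\Delta_{\mathit{shift}}$ wait before sending. You attribute it to keeping correct-process messages from arriving a round early or late, but $\mathsf{CryptoFreeSim}$ already handles that with $\Delta_{\mathit{sync}}=\Delta_{\mathit{shift}}+\delta$ and a parity bit, with no wait. The wait (and the enlarged $\Delta_{\mathit{sync}}=2\Delta_{\mathit{shift}}+\delta$) is engineered precisely for the feasibility step: in your own notation, a correct round-$k'$ message is sent at time $\ge \sigma_{k'}+\Delta_{\mathit{shift}}\ge \sigma_{k'}+x_r$, i.e., not before \emph{any} correct $p_r$ has entered its round-$k'$ window; hence no Byzantine process can have received it and used it to craft a message that lands in some $p_r$'s round-$(k'-1)$ window. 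Add this argument (it is a short case analysis identical in spirit to your timing claim) and your proof matches the paper's.
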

\begin{proof}
Let $\mathcal{C}$ denote the set of correct processes in $\mathcal{E}$.
Recall that $\tau^*$ denotes the time the first correct process starts $\mathsf{CryptoSim}^-$.
For each process $p_i \in \mathcal{C}$, we introduce the following notation:
\begin{compactitem}
    \item Let $\tau_i$ denote the time at which $p_i$ starts executing \Cref{algorithm:simulation}; as $\mathcal{S}^*$ holds, $\tau_i \in [\tau^*, \tau^* + \Delta_{\mathit{shift}}]$.
    Moreover, $\tau_i \in [\tau_j - \Delta_{\mathit{shift}}, \tau_j + \Delta_{\mathit{shift}}]$, for any process $p_j \in \mathcal{C}$.


    \item For every $k \in [1, \mathcal{R}]$, let $s_i^k$ denote the value of $p_i$'s local variable $s_i$ at the beginning of $k$-th iteration $k$ of the while loop in $\mathcal{E}$.
    Moreover, let $s_i^{\mathcal{R} + 1}$ denote the value of $p_i$'s local variable $s_i$ at the end of $\mathcal{R}$-th iteration of the while loop (after executing line~\ref{line:update_state}).

    \item For every $k \in [1, \mathcal{R}]$, let $\mathit{sent}_i(k)$ denote the set of messages $m$ such that $k$ is the value of the $\mathit{round}_i$ variable when $p_i$ sends $m$ in $\mathcal{E}$ (line~\ref{line:send_simulation}).
    Importantly, $m \in \mathit{sent}_i(k)$ if and only if $m$ is sent at time $\tau_i + (k - 1)\Delta_{\mathit{sync}} + \Delta_{\mathit{shift}}$ in $\mathcal{E}$; recall that, as $\tau_i \geq \tau^* \geq \text{GST}$ (since $\mathcal{S}^*$ holds), the local clocks of processes do not drift.

    \item For every $k \in [1, \mathcal{R}]$, let $\mathit{received}_i(k)$ denote the set of messages $m$ such that $k$ is the value of the $\mathit{round}_i$ variable when $p_i$ receives $m$ in $\mathcal{E}$ (line~\ref{line:receive_simulation}).
    Observe that $m \in \mathit{received}_i(k)$ if and only if $m$ is received by $p_i$ during the time period $\mathcal{T}_i^k = [\tau_i + (k - 1)\Delta_{\mathit{sync}}, \tau_i + k \cdot \Delta_{\mathit{sync}}]$ in $\mathcal{E}$.\footnote{
    For the sake of simplicity, and without loss of generality, we assume that when $\mathcal{S}^{*}$ holds, no correct process $p_i$ can receive a message sent by another correct process exactly at time $\tau_i + k' \cdot \Delta_{\mathit{sync}}$, where $k'$ is an integer. 
    To satisfy this assumption, we can define $\Delta_{\mathit{sync}} = 2\Delta_{\mathit{shift}} + \delta + \epsilon$, for any arbitrarily small constant $\epsilon>0$.
    (We avoid doing so for the simplicity of presentation.)} 
    
\end{compactitem}

We construct $\mathcal{E}'$ in the following way:
\begin{compactenum}
    \item For every process $p_i \in \mathcal{C}$, we perform the following steps:
    \begin{compactenum}
        \item For every round $k \in [1, \mathcal{R}]$, $p_i$ starts round $k$ in state $s_i^k$.

        \item Process $p_i$ concludes round $\mathcal{R}$ in state $s_i^{\mathcal{R} + 1}$.
    
        \item For every round $k \in [1, \mathcal{R}]$, $p_i$ sends $\mathit{sent}_i(k)$ in round $k$.

        \item For every round $k \in [1, \mathcal{R}]$, $p_i$ receives $\mathit{received}_i(k)$ in round $k$. 
    \end{compactenum}

    \item For every process $p_j \notin C$, we perform the following steps:
    \begin{compactenum}
        \item For every message $m \in \mathit{sent}_i(k)$ with the receiver being $p_j$, for some process $p_i \in \mathcal{C}$ and some round $k \in [1, \mathcal{R}]$, $p_j$ receives $m$ in round $k$.

        \item For every message $m \in \mathit{received}_i(k)$ with the sender being $p_j$, for some process $p_i \in \mathcal{C}$ and some round $k \in [1, \mathcal{R}]$, $p_j$ sends $m$ in round $k$.
    \end{compactenum}
\end{compactenum}
Due to the construction of $\mathcal{E}'$, the statement of the lemma is indeed satisfied.
It is only left to prove that $\mathcal{E}'$ is a valid synchronous execution of $\mathcal{A}^S$.
To this end, we show that $\mathcal{E}'$ satisfies the properties of a valid synchronous execution:
\begin{compactitem}
    \item \emph{If a message $m$ is sent by a process in round $k$, then the message is received in round $k$.}
    
    Consider any message $m$ sent in some round $k$ of $\mathcal{E}'$.
    Let the sender of $m$ be denoted by $p_s$ and let the receiver of $m$ be denoted by $p_r$.
    We consider four possibilities:
    \begin{compactitem}
        \item Let $p_s \in \mathcal{C}$ and $p_r \in \mathcal{C}$.
        In this case, $p_s$ sends $m$ at time $\tau^S(m) = \tau_s + (k - 1)\Delta_{\mathit{sync}} + \Delta_{\mathit{shift}}$ in $\mathcal{E}$.
        Importantly, message $m$ reaches process $p_r$ by time $\tau_s + (k - 1)\Delta_{\mathit{sync}} + \Delta_{\mathit{shift}} + \delta$.
        As $\tau_s \leq \tau_r + \Delta_{\mathit{shift}}$, $\tau_s + (k - 1)\Delta_{\mathit{sync}} + \Delta_{\mathit{shift}} + \delta \leq \tau_r + k \cdot \Delta_{\mathit{sync}}$.
        Finally, as $k \leq \mathcal{R}$, $p_r$ indeed receives $m$ in $\mathcal{E}$ as $p_r$ does not stop executing $\mathsf{CryptoSim}^-$ by time $\tau_r + \mathcal{R} \cdot \Delta_{\mathit{sync}}$.
        
        Let $\tau^R(m) \in [\tau^S(m), \tau^S(m) + \delta]$ denote the time at which $p_r$ receives $m$ in $\mathcal{E}$.
        As $\tau_s \geq \tau_r - \Delta_{\mathit{shift}}$, $\tau^R(m) \geq \tau_r + (k - 1)\Delta_{\mathit{sync}}$.
        Similarly, as $\tau_s \leq \tau_r + \Delta_{\mathit{shift}}$, $\tau^R(m) \leq \tau_r + k \cdot \Delta_{\mathit{sync}}$.
        Therefore, $p_r$ receives $m$ in $\mathcal{E}$ during the time period $[\tau_r + (k - 1)\Delta_{\mathit{sync}}, \tau_r + k \cdot \Delta_{\mathit{sync}}]$, which proves that $m \in \mathit{received}_r(k)$.
        Thus, $m$ is indeed received in round $k$ of $\mathcal{E'}$ due to step 1d of the construction.
        
        \item Let $p_s \in \mathcal{C}$ and $p_r \notin \mathcal{C}$.
        Here, message $m$ is indeed received by $p_r$ in round $k$ of $\mathcal{E}'$ due to step 2a of the construction.

        \item Let $p_s \notin \mathcal{C}$ and $p_r \in \mathcal{C}$.
        As $m$ is sent in round $k$ of $\mathcal{E}'$, this is done due to step 2b of the construction.
        Hence, $m \in \mathit{received}_r(k)$.
        Therefore, step 1d ensures $m$'s reception in round $k$ of $\mathcal{E}'$.
        
        \item Let $p_s \notin \mathcal{C}$ and $p_r \notin \mathcal{C}$.
        This case is impossible as our construction (step 2b) dictates $p_s$ to send $m$ only if $p_r \in \mathcal{C}$.

    \end{compactitem}
    In any possible scenario, the property is satisfied.
    \smallskip
    
    \item \emph{If a message $m$ is received by a process in round $k$, then the message is sent in round $k$.}

    Consider any message $m$ sent in some round $k$ of $\mathcal{E}'$.
    Let the sender of $m$ be denoted by $p_s$ and let the receiver of $m$ be denoted by $p_r$.
    Let us distinguish four scenarios:
    \begin{compactitem}
        \item Let $p_s \in \mathcal{C}$ and $p_r \in \mathcal{C}$.
        As $m$ is received in round $k$ of $\mathcal{E}'$, $m \in \mathit{received}_r(k)$.
        This implies that $m$ is received at some time $\tau^R(m) \in [\tau_r + (k - 1)\Delta_{\mathit{sync}}, \tau_r + k \cdot \Delta_{\mathit{sync}}]$.
        Moreover, $m \in \mathit{sent}_s(k')$, for some $k' \in [1, \mathcal{R}]$.
        
        If $k' = k$, step 1c of our construction ensures that $m$ is indeed sent in round $k$ of $\mathcal{E}'$.
        By contradiction, let $k' \neq k$.
        Recall that process $p_s$ sends $m$ at time $\tau^S(m) = \tau_s + (k' - 1)\Delta_{\mathit{sync}} + \Delta_{\mathit{shift}}$.
        We separate two cases:
        \begin{compactitem}
            \item Let $k' < k$.
            First, note that $\tau^R(m) \leq \tau^S(m) + \delta$.
            Hence, $\tau^R(m) \leq \tau_s + (k' - 1)\Delta_{\mathit{sync}} + \Delta_{\mathit{shift}} + \delta$.
            As $\tau_s \leq \tau_r + \Delta_{\mathit{shift}}$, $\tau^R(m) \leq \tau_r + \Delta_{\mathit{shift}} + (k' - 1)\Delta_{\mathit{sync}} + \Delta_{\mathit{shift}} + \delta \leq \tau_r + k' \cdot \Delta_{\mathit{sync}}$.
            As $k' < k$, $p_r$ receives $m$ before entering the $k'$-th iteration of the while loop in $\mathcal{E}$, therefore proving that $m \notin \mathit{received}_r(k)$, which is a contradiction.

            \item Let $k' > k$.
            Observe that $\tau^R(m) \geq \tau^S(m)$.
            Thus, $\tau^R(m) \geq \tau_s + (k' - 1)\Delta_{\mathit{sync}} + \Delta_{\mathit{shift}}$.
            Since $\tau_s \geq \tau_r - \Delta_{\mathit{shift}}$, we have that $\tau^R(m) \geq \tau_r - \Delta_{\mathit{shift}} + (k' - 1)\Delta_{\mathit{sync}} + \Delta_{\mathit{shift}} = \tau_r + (k' - 1)\Delta_{\mathit{sync}} \geq \tau_r + k \cdot \Delta_{\mathit{sync}}$.
            Thus, we reach a contradiction that $m \in \mathit{received}_r(k)$.
        \end{compactitem}

        \item Let $p_s \in \mathcal{C}$ and $p_r \notin \mathcal{C}$.
        In this case, the property holds due to step 2a of the construction.

        \item Let $p_s \notin \mathcal{C}$ and $p_r \in \mathcal{C}$.
        In this case, the property holds due to step 2b of the construction.
        
        \item Let $p_s \notin \mathcal{C}$ and $p_r \notin \mathcal{C}$.
        This case cannot occur as process $p_s \notin \mathcal{C}$ only sends messages to processes in $\mathcal{C}$ (see step 2b of the construction).
    \end{compactitem}

    \medskip
    \item \emph{The local behavior of every process $p_i \in \mathcal{C}$ is correct according to $\mathcal{A}^S$.}

    This property holds as $p_i$ transfers its states and sent and received messages from $\mathcal{E}$ to $\mathcal{E}'$.

    \medskip
    \item \emph{The execution is computationally feasible.} We aim to prove that for every message $m$ sent by a (Byzantine) process, the computation of $m$ does not need more computational assumptions than those required for executions in $\mathcal{A}^{S}$. 

    To prove this property, we focus on a specific process $p_j \notin \mathcal{C}$.
    Let $m$ be any message sent by $p_j$ in some round $k$ of $\mathcal{E}$ (in the case $p_j \in \mathcal{C}$, $p_j$ would exhibit a correct behavior that is, by definition, computationally feasible).
    As $m$ is sent in round $k$, $m \in \mathit{received}_r(k)$, for some process $p_r \in \mathcal{C}$ (see step 2b of the construction).
    Since $m \in \mathit{received}_r(k)$, $p_r$ receives $m$ in $\mathcal{E}$ at some time $\tau^R(m) \in [\tau_r + (k - 1)\Delta_{\mathit{sync}}, \tau_r + k \cdot \Delta_{\mathit{sync}}]$.

    Let $\mathcal{M}(m)$ denote the set of messages $m'$ such that (1) any process $p_z \notin \mathcal{C}$ has received $m'$ before $p_j$ sends $m$ in $\mathcal{E}$, and (2) the sender $p_s$ of $m'$ belongs to $\mathcal{C}$.
    Consider any message $m' \in \mathcal{M}(m)$.
    As $m'$ is sent by $p_s$ in $\mathcal{E}$, $m' \in \mathit{sent}_s(k')$, for some $k' \in [1, \mathcal{R}]$.
    Let $\tau^S(m')$ denote the time process $p_s$ sends $m'$ in $\mathcal{E}$.
    Importantly, $\tau^S(m') < \tau^R(m)$.
    First, we show that $k' \leq k$.
    By contradiction, let $k' > k$.
    We know that $\tau^S(m') = \tau_s + (k' - 1)\Delta_{\mathit{sync}} + \Delta_{\mathit{shift}}$.
    As $\tau_s \geq \tau_r - \Delta_{\mathit{shift}}$, $\tau^S(m') \geq \tau_r - \Delta_{\mathit{shift}} + (k' - 1)\Delta_{\mathit{sync}} + \Delta_{\mathit{shift}} = \tau_r + (k' - 1)\Delta_{\mathit{sync}} \geq \tau_r + k \cdot \Delta_{\mathit{sync}}$.
    Thus, we reach a contradiction with the fact that $\tau^S(m) < \tau^R(m)$, thus proving that $k' \leq k$.

    Finally, for every message $m' \in \mathcal{M}(m)$, $m'$ is received by process $p_z \notin \mathcal{C}$ in round no greater than $k$ in $\mathcal{E}'$ (due to step 2a of the construction).
    As $p_j$ is capable of sending $m$ once processes that do not belong to $\mathcal{C}$ have received messages from the $\mathcal{M}(m)$ set (it does so in $\mathcal{E}$), $\mathcal{E}'$ is indeed computationally feasible.
\end{compactitem}
As $\mathcal{E}'$ satisfies all aforementioned properties, it is indeed a valid synchronous execution $\mathcal{A}^S$, thus concluding the proof of the lemma.
\end{proof}

The following lemma proves that $\mathsf{CryptoSim}$ indeed simulates a cryptography-based (and, thus, even a cryptography-free) synchronous algorithm $\mathcal{A}^S$ when $\mathcal{S}^*$ holds.

\begin{lemma} [$\mathsf{CryptoSim}$ simulates $\mathcal{A}^S$] \label{lemma:cryptography_based_simulation_correct}
Let $\mathcal{S}^*$ hold.
For each execution $\mathcal{E}$ of $\mathsf{CryptoSim}$, there exists an $\mathcal{R}$-rounds-long synchronous execution $\mathcal{E}'$ of $\mathcal{A}^S$ such that:
\begin{compactitem}
    \item the sets of correct processes in $\mathcal{E}$ and $\mathcal{E}'$ are identical, and

    \item the proposals of correct processes in $\mathcal{E}$ and $\mathcal{E}'$ are identical, and

    \item the sets of messages sent by correct processes in $\mathcal{E}$ and $\mathcal{E}'$ are identical, and

    \item for each correct process $p_i$ and every $k \in [1, \mathcal{R} + 1]$, $s_i^k(\mathcal{E}) = s_i^k(\mathcal{E}')$, where (1) $s_i^k(\mathcal{E})$ is the state of $p_i$ at the beginning of the $k$-th (i.e., at the end of the $(k - 1)$-st) simulated round in $\mathcal{E}$, and (2) $s_i^k(\mathcal{E}')$ is the state of $p_i$ at the beginning of the $k$-th (i.e., at the end of the $(k - 1)$-st) round in $\mathcal{E}'$.
\end{compactitem}    
\end{lemma}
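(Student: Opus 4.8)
The plan is to derive \Cref{lemma:cryptography_based_simulation_correct} from \Cref{lemma:simulation_correct} by showing that, whenever $\mathcal{S}^*$ holds, the bit-budget check at line~\ref{line:check_sent_bits} of $\mathsf{CryptoSim}$ never actually blocks a send, so that $\mathsf{CryptoSim}$ and $\mathsf{CryptoSim}^-$ produce identical executions. Since \Cref{lemma:simulation_correct} already establishes the four bullet points for $\mathsf{CryptoSim}^-$, the result then follows immediately for $\mathsf{CryptoSim}$. This mirrors the reduction used for the cryptography-free case (the second intermediate result and the conclusion in the proof of \Cref{lemma:cryptography_free_simulation_correct}), with the twist that here there are no parity bits.

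First, I would prove an intermediate claim: in every execution $\mathcal{E}^-$ of $\mathsf{CryptoSim}^-$ for which $\mathcal{S}^*$ holds, no correct process sends more than $\mathcal{B}$ bits. To see this, apply \Cref{lemma:simulation_correct} to $\mathcal{E}^-$ to obtain an $\mathcal{R}$-round synchronous execution $\mathcal{E}'$ of $\mathcal{A}^S$ in which the set of correct processes, their proposals, and — crucially — the sets of messages they send are identical to those in $\mathcal{E}^-$. Because $\mathcal{B} = \mathit{pbit}(\mathcal{A}^S)$ is, by definition, the maximum over all executions of $\mathcal{A}^S$ and all processes of the number of bits sent by a correct process, every correct process sends at most $\mathcal{B}$ bits in $\mathcal{E}'$, hence also at most $\mathcal{B}$ bits in $\mathcal{E}^-$. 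Note the absence of any parity-bit overhead: unlike $\mathsf{CryptoFreeSim}$, where rounds are delimited by appended parity bits (forcing the $2\mathcal{B}$ bound), rounds in $\mathsf{CryptoSim}$ are delimited by timing, so the relevant bound is exactly $\mathcal{B}$, matching the threshold hard-coded in line~\ref{line:check_sent_bits}.

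Next, I would argue $\mathsf{CryptoSim} \equiv \mathsf{CryptoSim}^-$ under $\mathcal{S}^*$ by contradiction. The two algorithms are syntactically identical except for the check at line~\ref{line:check_sent_bits}, so any divergence must occur at the first moment a correct process $p_i$ in some execution $\mathcal{E}$ of $\mathsf{CryptoSim}$ refrains from sending a message $M_j$ that $\mathcal{A}^S$ prescribes, because $\mathit{sent\_bits}_i + B_j > \mathcal{B}$. Coupling $\mathcal{E}$ with the execution $\mathcal{E}^-$ of $\mathsf{CryptoSim}^-$ having the same adversarial behavior, the same message schedule, and (for a randomized $\mathcal{A}^S$) the same local coins up to that point, the two executions agree on the entire prefix before this send — since the check had not fired yet — so in $\mathcal{E}^-$ process $p_i$ does send $M_j$, bringing its cumulative bit count strictly above $\mathcal{B}$. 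This contradicts the intermediate claim, so no such divergence exists and $\mathsf{CryptoSim} \equiv \mathsf{CryptoSim}^-$ whenever $\mathcal{S}^*$ holds. Combining this equivalence with \Cref{lemma:simulation_correct} yields the four desired bullets for $\mathsf{CryptoSim}$, completing the proof.

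The main obstacle I anticipate is making the coupling between $\mathcal{E}$ and $\mathcal{E}^-$ precise: one must specify the correspondence between adversarial choices, the message schedule, and the random tapes so that ``the same prefix'' is well defined, and then verify that the first point of divergence is governed solely by the line~\ref{line:check_sent_bits} predicate, which is a deterministic function of the sending process's local history (the cumulative $\mathit{sent\_bits}_i$ and the current state $s_i$). Once this bookkeeping is in place, the remainder is a routine reduction, and the same skeleton will also support the probabilistic reformulation discussed in \Cref{sec:consequences}.
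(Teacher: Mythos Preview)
Your proposal is correct and follows essentially the same approach as the paper: first use \Cref{lemma:simulation_correct} to show that in $\mathsf{CryptoSim}^-$ no correct process ever exceeds $\mathcal{B}$ bits (since the sent-message sets coincide with those of a genuine synchronous execution of $\mathcal{A}^S$), then conclude by contradiction that the check at line~\ref{line:check_sent_bits} never fires, so $\mathsf{CryptoSim} \equiv \mathsf{CryptoSim}^-$ under $\mathcal{S}^*$. Your explicit coupling argument for the prefix up to the first blocked send is slightly more detailed than the paper's treatment, but the structure is identical.
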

\begin{proof}
To prove the lemma, we go through a sequence of intermediate results.

\medskip
\noindent \emph{Intermediate result 1: Let $\mathsf{CryptoSim}^-$ be identical to $\mathsf{CryptoSim}$ except that correct processes are allowed to send any number of bits (i.e., the check at line~\ref{line:check_sent_bits} is removed).
Moreover, let the condition $\mathcal{S}^*$ be adapted to $\mathsf{CryptoSim}^-$.
Then, the lemma holds for $\mathsf{CryptoSim}^-$.}
\\ Follows from \Cref{lemma:simulation_correct}.

\medskip
\noindent \emph{Intermediate result 2: Let $\mathsf{CryptoSim}^-$ be identical to $\mathsf{CryptoSim}$ except that correct processes are allowed to send any number of bits (i.e., the check at line~\ref{line:check_sent_bits} is removed).
Moreover, let the condition $\mathcal{S}^*$ be adapted to $\mathsf{CryptoSim}^-$.
Then, no correct process sends more than $\mathcal{B}$ bits in any execution $\mathcal{E}^-$ of $\mathsf{CryptoSim}^-$ when $\mathcal{S}^*$ holds.}
\\ By contradiction, suppose there exists an execution $\mathcal{E}^-$ of $\mathsf{CryptoSim}^-$ in which some correct process $p_i$ sends more than $\mathcal{B}$ bits.
The first intermediate result proves that $\mathcal{E}^-$ simulates an execution $\mathit{sim}(\mathcal{E}^-)$ of $\mathcal{A}^S$.
Hence, a message $m$ is sent by $p_i$ in $\mathcal{E}^-$ if and only if a message $m$ is sent by $p_i$ in $\mathit{sim}(\mathcal{E}^-)$.
Thus, $p_i$ sends more than $\mathcal{B}$ bits in $\mathit{sim}(\mathcal{E}^-)$, which is impossible as $p_i$ sends at most $\mathcal{B}$ bits in any execution of $\mathcal{A}^S$.

\medskip
\noindent \emph{Proof of \Cref{lemma:cryptography_based_simulation_correct}.}
To prove that $\mathsf{CryptoSim}$ correctly simulates $\mathcal{A}^S$ when $\mathcal{S}^*$ holds, it suffices to show that $\mathsf{CryptoSim} \equiv \mathsf{CryptoSim}^-$ as the lemma would follow from the first intermediate result, where $\mathsf{CryptoSim}^-$ is defined above.
By contradiction, suppose $\mathsf{CryptoSim} \not\equiv \mathsf{CryptoSim}^-$ when $\mathcal{S}^*$ holds.
This is only possible if there exists an execution $\mathcal{E}$ of $\mathsf{CryptoSim}$ in which a correct process does not send some message $m$ it was supposed to send according to $\mathcal{A}^S$ because the sending would exceed the $\mathcal{B}$ bits limit.
However, this implies that there exists an execution of $\mathsf{CryptoSim}^-$ in which this correct process does send more than $\mathcal{B}$ bits, which represents a contradiction with the second intermediate result.
Therefore, $\mathsf{CryptoSim} \equiv \mathsf{CryptoSim}^-$ when $\mathcal{S}^*$ holds.
\end{proof}

Now that we have explicitly introduced our simulation techniques for cryptography-free ($\mathsf{CryptoFreeSim}$) and cryptography-based ($\mathsf{CryptoSim}$) synchronous algorithms, we are ready to prove that \block satisfies the synchronicity property.

\begin{theorem} [Synchronicity] \label{theorem:block_synchronicity}
\block (\Cref{algorithm:block}) satisfies synchronicity.
\end{theorem}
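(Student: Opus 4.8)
The plan is to assume the precondition $\mathcal{S}$ of the synchronicity property holds --- the first correct process proposes at some $\tau \geq \text{GST}$, all correct processes propose by time $\tau + \Delta_{\mathit{shift}}$, and no correct process abandons \block by time $\tau + \Delta_{\mathit{total}}$ --- and then follow an arbitrary correct process $p_i$ through the four relevant steps of Task~1 of \Cref{algorithm:block}, showing it decides by time $\tau + \Delta_{\mathit{total}}$. The argument has two interleaved threads: a \emph{timing} thread, tracking when each step starts and ends and, crucially, checking that the precondition $\mathcal{S}^{*}$ of the simulation lemmas is met; and a \emph{value} thread, showing that by the end of Step~3 all correct processes hold a single common valid estimate, so that $\mathcal{GC}_2$ forces them all to decide it in Step~5. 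The timing thread relies on the definition of $\Delta_{\mathit{total}}$ and on the per-step waits of \Cref{algorithm:block}, which are calibrated so that an initial skew of at most $\Delta_{\mathit{shift}}$ among correct processes is re-absorbed at each step.

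First I would establish the timing up to the start of the $\mathcal{A}^S$ simulation. Since $p_i$ proposes to $\mathcal{GC}_1$ (Step~1) at the same instant $\tau_i$ it proposes to \block, we have $\tau_i \in [\tau, \tau + \Delta_{\mathit{shift}}]$ and $\tau_i \geq \text{GST}$, so $p_i$'s clock is exact from then on. By \Cref{lemma:crux_valid_only_1} all correct $\mathcal{GC}_1$-proposals are valid, so by Termination of $\mathcal{GC}_1$ and its (known) latency, every correct process decides from $\mathcal{GC}_1$ by time $\tau + \Delta_{\mathit{shift}} + \Delta_1 \leq \tau_i + \Delta_{\mathit{shift}} + \Delta_1$; hence condition~(2) of Step~1 is met no later than condition~(1), and $p_i$ starts Step~2 exactly at time $\tau_i + \Delta_{\mathit{shift}} + \Delta_1$. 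Consequently the first correct process starts the simulation at $\tau^{*} := \tau + \Delta_{\mathit{shift}} + \Delta_1 \geq \text{GST}$, all correct processes start it by $\tau^{*} + \Delta_{\mathit{shift}}$, and --- using assumption~(3) of $\mathcal{S}$ together with the inequality $\tau^{*} + \mathcal{R}\cdot\Delta_{\mathit{sync}} \leq \tau + \Delta_{\mathit{total}}$ --- no correct process stops the simulation by time $\tau^{*} + \mathcal{R}\cdot\Delta_{\mathit{sync}}$. Thus $\mathcal{S}^{*}$ holds.

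Next comes the value thread. By \Cref{lemma:cryptography_free_simulation_correct} (if $\mathcal{A}^S$ is cryptography-free) or \Cref{lemma:cryptography_based_simulation_correct} (in general), the simulation of Step~2 corresponds to an $\mathcal{R}$-round synchronous execution $\mathcal{E}'$ of $\mathcal{A}^S$ with the same set of correct processes, the same proposals, and the same per-round states. Since $\mathit{latency}(\mathcal{A}^S) = \mathcal{R}$, in $\mathcal{E}'$ every correct process has decided by round $\mathcal{R}$; therefore, in the simulation, every correct process $p_i$ ends Step~2 with $v_A \neq \bot$, all correct processes obtain the \emph{same} $v_A$ (Agreement of $\mathcal{A}^S$), and $\mathsf{valid}(v_A) = \mathit{true}$ (External validity of $\mathcal{A}^S$). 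For Step~3: if some correct process decided $(v_1, 1)$ from $\mathcal{GC}_1$, then by Consistency of $\mathcal{GC}_1$ every correct $\mathcal{GC}_1$-decision carries the value $v_1$, so every correct process proposed $v_1$ to $\mathcal{A}^S$, whence $v_A = v_1$ by Strong validity of $\mathcal{A}^S$; such a process sets $\mathit{est}_i \gets v_1 = v_A$, while every process with $g_1 = 0$ sets $\mathit{est}_i \gets v_A$ since $v_A \neq \bot$ is valid. If no correct process had grade $1$ from $\mathcal{GC}_1$, then every correct process likewise sets $\mathit{est}_i \gets v_A$. Either way, all correct processes propose the single valid value $v^{*} := v_A$ to $\mathcal{GC}_2$ in Step~4.

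Finally I would close the argument. Since all correct processes propose $v^{*}$ to $\mathcal{GC}_2$ (valid by \Cref{lemma:crux_valid_only_1}) and none abandons \block before $\tau + \Delta_{\mathit{total}}$, Termination and Strong validity of $\mathcal{GC}_2$ give that every correct process decides $(v^{*}, 1)$ from $\mathcal{GC}_2$ and therefore triggers $\mathsf{decide}(v^{*})$ in Step~5. For the deadline: the last correct process proposes to $\mathcal{GC}_2$ by time $\tau + 2\Delta_{\mathit{shift}} + \Delta_1 + \mathcal{R}\cdot\Delta_{\mathit{sync}}$ (it spends $\mathcal{R}\cdot\Delta_{\mathit{sync}}$ in Step~2 and nothing in Step~3), so by the latency bound of $\mathcal{GC}_2$ every correct process decides from $\mathcal{GC}_2$ --- and hence decides in \block --- by time $\tau + 2\Delta_{\mathit{shift}} + \Delta_1 + \mathcal{R}\cdot\Delta_{\mathit{sync}} + \Delta_2 = \tau + \Delta_{\mathit{total}}$, as required. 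I expect the main obstacle to be the timing bookkeeping: getting every inequality to line up so the $\Delta_{\mathit{shift}}$-skew is re-absorbed at Steps~1, 2, and~4, in particular verifying the precondition $\mathcal{S}^{*}$ of the simulation lemmas and the final $\tau + \Delta_{\mathit{total}}$ deadline. By contrast, the value part is a short chain of applications of the already-established properties of $\mathcal{GC}_1$, $\mathcal{GC}_2$, and $\mathcal{A}^S$ on top of the simulation lemmas, which carry the genuinely heavy machinery.
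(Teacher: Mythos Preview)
Your proposal is correct and follows essentially the same approach as the paper: assume the precondition $\mathcal{S}$, track the timing through Steps~1--5 (establishing that $\mathcal{S}^*$ holds so the simulation lemmas apply), argue via the consistency of $\mathcal{GC}_1$ and the strong validity/agreement of $\mathcal{A}^S$ that all correct processes enter $\mathcal{GC}_2$ with the same valid estimate, and conclude via the strong validity of $\mathcal{GC}_2$ that all decide $(v^*,1)$ by $\tau + \Delta_{\mathit{total}}$. If anything, you are more explicit than the paper about verifying the three clauses of $\mathcal{S}^*$ and pinning down the exact moment each process transitions between steps, which is helpful.
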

\begin{proof}
Suppose $\tau$ denotes the first time a correct process proposes to \block.
Let the following hold: (1) $\tau \geq \text{GST}$, (2) all correct processes propose to \block by time $\tau + \Delta_{\mathit{shift}}$, and (3) no correct process abandons \block by time $\tau + \Delta_{\mathit{total}}$.
(Hence, let the precondition of the synchronicity property be satisfied.)

As $\mathcal{GC}_1$ terminates in $\mathit{latency}(\mathcal{GC}_1)$ asynchronous rounds, every correct process decides from $\mathcal{GC}_1$ by time $\tau + \Delta_{\mathit{shift}} + \Delta_1$ (as all correct processes overlap for $\Delta_1 = \mathit{latency}(\mathcal{GC}_1) \cdot \delta$ time in $\mathcal{GC}_1$).
Moreover, all correct processes start executing $\mathcal{A}^S$ within $\Delta_{\mathit{shift}}$ time of each other (as they execute $\mathcal{GC}_1$ for at least $\Delta_{\mathit{shift}} + \Delta_1$ time even if they decide from $\mathcal{GC}_1$ before).
Due to \cref{lemma:cryptography_free_simulation_correct,lemma:cryptography_based_simulation_correct}, $\mathcal{A}^S$ exhibits a valid synchronous execution.
Hence, all correct processes decide the same valid (non-$\bot$) value from $\mathcal{A}^S$ by time $\tau + (\Delta_{\mathit{shift}} + \Delta_1) + (\mathcal{R} \cdot \Delta_{\mathit{sync}})$.
To prove \block's synchronicity property, we show that, at the end of Task 1's Step 3, the local variables $\mathit{est}_i$ and $\mathit{est}_j$, for any two correct processes $p_i$ and $p_j$, are identical.
\begin{compactitem}
    \item Assume a correct process $p_i$ decides $\mathit{est}_i$ with grade $1$ from $\mathcal{GC}_1$ (Step 1 of Task 1). 
    Hence, by the consistency property of $\mathcal{GC}_1$, all correct processes decide $(\mathit{est}_i, \cdot)$ from $\mathcal{GC}_1$, and then propose $\mathit{est}_i$ to $\mathcal{A}^S$.
    As $\mathcal{A}^S$ satisfies strong validity, every correct process $p_j$ decides $\mathit{est}_i$ from $\mathcal{A}^S$.
    As stated above, $\mathit{est}_i \neq \bot$ and $\mathsf{valid}(\mathit{est}_i) = \mathit{true}$. 
    Let $p_j$ be any correct process.
    \begin{compactitem}
    \item  If process $p_j$ has decided $(\mathit{est}_j, 1)$ from $\mathcal{GC}_1$, then $\mathit{est}_j = \mathit{est}_i$ due to the consistency property of $\mathcal{GC}_1$.
    
    \item  If process $p_j$ has decided $(\cdot, 0)$ from $\mathcal{GC}_1$, then $\mathit{est}_j = \mathit{est}_i$ due to the fact that $\mathit{est}_i$ is decided by $p_j$ from $\mathcal{A}^S$.
    \end{compactitem}
    In both cases, $\mathit{est}_i = \mathit{est}_j$ at the end of Task 1's Step 3.

    \item Let both $p_i$ and $p_j$ decide with grade $0$ from $\mathcal{GC}_1$ (Step 1 of Task 1).
    In this case, $\mathit{est}_i = \mathit{est}_j$ due to the agreement property of $\mathcal{A}^S$.
\end{compactitem} 
Thus, all correct processes propose to $\mathcal{GC}_2$ the same valid value $v$, and they do so within $\Delta_{\mathit{shift}}$ time of each other.
Every correct process decides $(v, 1)$ by time $\tau + (\Delta_{\mathit{shift}} + \Delta_1) + (\mathcal{R} \cdot \Delta_{\mathit{sync}}) + (\Delta_{\mathit{shift}} + \Delta_2)$ as $\Delta_2 = \mathit{latency}(\mathcal{GC}_2) \cdot \delta$; $v$ is decided with grade $1$ due to the strong validity property of $\mathcal{GC}_2$.
Therefore, every correct process decides (Step 5 of Task 1) by time $\tau + ( \Delta_{\mathit{shift}} + \Delta_1 ) + ( \mathcal{R} \cdot \Delta_{\mathit{sync}} ) + ( \Delta_{\mathit{shift}} + \Delta_2 ) = \tau + \Delta_{\mathit{total}}$, thus ensuring synchronicity.
\end{proof}

\smallskip
\noindent \textbf{Proof of complexity.}
To conclude the section, we prove \block's per-process complexity. Recall that $\mathit{pbit}(\mathcal{X})$ is the maximum number of bits sent by a correct process in $\mathcal{X} \in \{\mathcal{GC}_1, \mathcal{GC}_2, \mathcal{VB}\}$, whereas $\mathcal{B} = \mathit{pbit}(\mathcal{A}^S)$ is the maximum number of bits sent by a correct process in $\mathcal{A}^S$ (see \cref{section:preliminaries,section:block}).

\begin{theorem} [Exchanged bits]\label{theorem:block_communication}
Any correct process sends
\begin{equation*}
    \mathit{pbit}(\block) = \mathit{pbit}(\mathcal{GC}_1) + \mathit{pbit}(\mathcal{GC}_2) + \mathit{pbit}(\mathcal{VB}) + 2 \mathcal{B} \text{ bits in \block.}
\end{equation*}
\end{theorem}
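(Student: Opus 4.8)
The plan is to prove the equality by a direct decomposition of \block's communication into that of its constituent sub-protocols, together with the observation that everything else is local. First I would note that a correct process $p_i$ sends bits only while executing Task~1 of \Cref{algorithm:block}: Task~2 merely forwards $\mathsf{abandon}$ requests to $\mathcal{GC}_1$, $\mathcal{GC}_2$, and $\mathcal{VB}$ (whose induced traffic is already accounted for inside those instances) and halts the $\mathcal{A}^S$ simulation, while Task~3 only raises a local $\mathsf{validate}$ indication. Within Task~1, Steps~3, 5, and~7 are purely local (updating $\mathit{est}_i$, raising $\mathsf{decide}$, raising $\mathsf{completed}$), so every network message $p_i$ sends originates from exactly four sources: the $\mathcal{GC}_1$ instance (Step~1), the simulation of $\mathcal{A}^S$ (Step~2), the $\mathcal{GC}_2$ instance (Step~4), and the $\mathcal{VB}$ instance (Step~6). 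Hence the per-process bit count of \block is the sum of the per-process bit counts of these four components.

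Next I would bound each contribution. By the definition of $\mathit{pbit}(\cdot)$ in \Cref{section:preliminaries}, a correct process sends at most $\mathit{pbit}(\mathcal{GC}_1)$, $\mathit{pbit}(\mathcal{GC}_2)$, and $\mathit{pbit}(\mathcal{VB})$ bits in its $\mathcal{GC}_1$, $\mathcal{GC}_2$, and $\mathcal{VB}$ instances, respectively; this holds in \emph{every} execution, and in particular it is insensitive to the fact that \block keeps $\mathcal{GC}_1$ and $\mathcal{GC}_2$ running for the full $\Delta_{\mathit{shift}}+\Delta_1$ (resp.\ $\Delta_{\mathit{shift}}+\Delta_2$) duration and to a possible later $\mathsf{abandon}$ (abandoning can only decrease the bits sent). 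For the simulation of $\mathcal{A}^S$ in Step~2 the bound is $2\mathcal{B}$ bits. Summing the four bounds gives that a correct process sends at most $\mathit{pbit}(\mathcal{GC}_1) + \mathit{pbit}(\mathcal{GC}_2) + \mathit{pbit}(\mathcal{VB}) + 2\mathcal{B}$ bits, which is exactly the claimed quantity; it is also attained, since the adversary can simultaneously drive each sub-protocol into its own worst-case execution.

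The one non-routine step — and the main obstacle — is the $2\mathcal{B}$ cap on Step~2. Here I would appeal to the structure of the simulation procedure used in that step ($\mathsf{CryptoFreeSim}$ of \Cref{algorithm:crypto_free_sim} in the cryptography-free case, or $\mathsf{CryptoSim}$ of \Cref{algorithm:simulation} in general): each procedure maintains a running counter $\mathit{sent\_bits}_i$ of the $\mathcal{A}^S$-payload bits already transmitted and suppresses any message $M_j$ whose transmission would overshoot the budget, so the total number of $\mathcal{A}^S$-payload bits $p_i$ ever sends is at most $\mathcal{B}$. In $\mathsf{CryptoSim}$ nothing further is transmitted, so the bound $\mathcal{B}\le 2\mathcal{B}$ is immediate. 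In $\mathsf{CryptoFreeSim}$ each transmitted message additionally carries a single parity bit; since every $\mathcal{A}^S$-message contains at least one bit, the number of messages actually transmitted — hence the number of parity bits — is at most the number of payload bits, i.e.\ at most $\mathcal{B}$, so the total is at most $\mathcal{B}+\mathcal{B}=2\mathcal{B}$. Crucially, the $\mathit{sent\_bits}_i$ gate is enforced unconditionally (not only when the synchronicity precondition holds), so this bound applies in all executions, before and after GST, which is precisely what is needed here; combined with the three $\mathit{pbit}(\cdot)$ terms this yields the stated identity.
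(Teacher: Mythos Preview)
Your decomposition into the four communication sources and the summation argument is exactly the paper's approach; its own proof is a one-liner asserting precisely items (1)--(4) of your second paragraph.

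One small slip in your detailed justification of the $2\mathcal{B}$ term: in $\mathsf{CryptoFreeSim}$ the gate at line~\ref{line:check_sent_messages_crypto_free} is $\mathit{sent\_bits}_i + B_j \le 2\mathcal{B}$, not $\le \mathcal{B}$, and $\mathit{sent\_bits}_i$ tracks only the $\mathcal{A}^S$-payload bits $B_j$ (not the parity bit). So the hard cap guarantees payload $\le 2\mathcal{B}$, not $\le \mathcal{B}$, and your ``payload $\le \mathcal{B}$ plus parity $\le \mathcal{B}$'' accounting does not directly establish the stated $2\mathcal{B}$ bound. The paper's own proof simply asserts ``at most $2\mathcal{B}$ bits in the simulation of $\mathcal{A}^S$'' without unpacking this, so you are not missing anything relative to the paper; but if you want a rigorous $2\mathcal{B}$ (rather than $4\mathcal{B}$) here, you would need to argue along the lines of Intermediate result~2 in the proof of \Cref{lemma:cryptography_free_simulation_correct}, or else read the cap as intended to bound total (payload plus parity) bits.
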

\begin{proof}
Any correct process sends (1) $\mathit{pbit}(\mathcal{GC}_1)$ bits in $\mathcal{GC}_1$, (2) $\mathit{pbit}(\mathcal{GC}_2)$ bits in $\mathcal{GC}_2$, (3) $\mathit{pbit}(\mathcal{VB})$ in $\mathcal{VB}$, and (4) at most $2\mathcal{B}$ bits in the simulation of $\mathcal{A}^S$.
\end{proof}

Lastly, we define $\mathit{latency}(\block)$ in the following way:
\begin{equation*}
    \mathit{latency}(\block) = (\mathit{latency}(\mathcal{GC}_1) \cdot \delta) + (\mathcal{R} \cdot \Delta_{\mathit{sync}}) + (\mathit{latency}(\mathcal{GC}_2) \cdot \delta) + (\mathit{latency}(\mathcal{VB}) \cdot \delta).
\end{equation*}
We rely on $\mathit{latency}(\block)$ in \Cref{section:name_correctness_complexity_formal}, where we show that \name does not add more than a constant factor overhead in latency over \block, i.e., that \name has latency in $O(\mathit{latency}(\block))$.


\section{\name: Pseudocode \& Proof of Correctness and Complexity} \label{section:name_correctness_complexity_formal}

In this section, we give the pseudocode of \name.
Moreover, we prove \name's correctness and complexity.

\subsection{Finisher} \label{subsection:finisher}

First, we formally define the finisher primitive that \name utilizes to allow correct processes to decide and halt (i.e., stop sending and receiving messages).
The finisher primitive exposes the following interface:
\begin{compactitem}
    \item \textbf{request} $\mathsf{to\_finish}(v \in \mathsf{Value})$: a process aims to finish with value $v$.

    \item \textbf{indication} $\mathsf{finish}(v' \in \mathsf{Value})$: a process finishes with value $v'$.
\end{compactitem}
Every correct process invokes $\mathsf{to\_finish}(\cdot)$ at most once.
Moreover, if any correct process invokes $\mathsf{to\_finish}(v_1)$ and any other correct process invokes $\mathsf{to\_finish}(v_2)$, then $v_1 = v_2$.
We do not assume that all correct processes invoke $\mathsf{to\_finish}(\cdot)$.

The following properties are satisfied by the finisher primitive:
\begin{compactitem}
    \item \emph{Integrity:} If a correct process receives a $\mathsf{finish}(v')$ indication, then a correct process has previously invoked a $\mathsf{to\_finish}(v')$ request.

    \item \emph{Termination:} Let $\tau$ be the first time such that all correct processes have invoked a $\mathsf{to\_finish}(\cdot)$ request by time $\tau$.
    Then, every correct process receives a $\mathsf{finish}(\cdot)$ indication by time $\max(\tau, \text{GST}) + 2\delta$.

    \item \emph{Totality:} If any correct process receives a $\mathsf{finish}(\cdot)$ indication at some time $\tau$, then every correct process receives a $\mathsf{finish}(\cdot)$ indication by time $\max(\tau, \text{GST}) + 2\delta$.
\end{compactitem}

\subsubsection{\shortfin: implementation for constant-sized values}\label{subsubsection:shortfin}
\Cref{algorithm:finisher_small_inputs} is the pseudocode of \shortfin, our implementation of the finisher primitive for constant-sized values (i.e., the size of each value $v \in \mathsf{Value}$ is $O(1)$ bits).
\shortfin tolerates up to $t < n / 3$ Byzantine processes and exchanges $O(n^2)$ bits.

\shortfin operates as follows.
Once a correct process $p_i$ invokes a $\mathsf{to\_finish}(v)$ request (line~\ref{line:to_finish_small}), $p_i$ disseminates its value $v$ to all processes (line~\ref{line:broadcast_finish_1_small}).
Moreover, process $p_i$ disseminates some value (line~\ref{line:broadcast_finish_2_small}) once it receives that value from at least $t + 1$ processes (line~\ref{line:pluratility_finish_small}).
Finally, once $p_i$ receives some value $v'$ from $2t + 1$ processes (line~\ref{line:quorum_finish_small}), $p_i$ triggers $\mathsf{finish}(v')$ (line~\ref{line:finish_small}).

\begin{algorithm} [h]
\caption{\shortfin: Pseudocode (for process $p_i$)}
\label{algorithm:finisher_small_inputs}
\footnotesize
\begin{algorithmic} [1] 
\State \textbf{Local variables:}
\State \hskip2em $\mathsf{Boolean}$ $\mathit{started}_i \gets \mathit{false}$

\medskip
\State \textbf{upon} $\mathsf{to\_finish}(v \in \mathsf{Value})$: \label{line:to_finish_small}
\State \hskip2em $\mathit{started}_i \gets \mathit{true}$
\State \hskip2em \textbf{broadcast} $\langle \textsc{finish}, v \rangle$ \label{line:broadcast_finish_1_small}

\medskip
\State \textbf{upon} $\langle \textsc{finish}, v' \rangle$ is received from $t + 1$ processes, for some $v' \in \mathsf{Value}$, and $\mathit{started}_i = \mathit{false}$: \label{line:pluratility_finish_small}
\State \hskip2em $\mathit{started}_i \gets \mathit{true}$
\State \hskip2em \textbf{broadcast} $\langle \textsc{finish}, v' \rangle$ \label{line:broadcast_finish_2_small}

\medskip
\State \textbf{upon} $\langle \textsc{finish}, v' \rangle$ is received from $2t + 1$ processes, for some $v' \in \mathsf{Value}$: \label{line:quorum_finish_small}
\State \hskip2em \textbf{trigger} $\mathsf{finish}(v')$ \label{line:finish_small}

\end{algorithmic} 
\end{algorithm}

\smallskip
\noindent \textbf{Proof of correctness \& complexity.}
Let $v^{\star}$ denote the value such that if any correct process invokes $\mathsf{to\_finish}(v)$, then $v = v^{\star}$.
We start by proving that the first correct process that broadcasts a \textsc{finish} message does so for value $v^{\star}$.

\begin{lemma} \label{lemma:finisher_small_first_correct_message}
The first correct process that broadcasts a \textsc{finish} message does so for value $v^{\star}$.
\end{lemma}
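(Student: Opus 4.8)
The plan is to argue by contradiction on the earliest time a correct process broadcasts a \textsc{finish} message for a value different from $v^\star$. Let $p_i$ be the first correct process to broadcast $\langle \textsc{finish}, v' \rangle$ for some $v' \neq v^\star$, and let $\tau$ be the time at which $p_i$ does so. I would examine the two lines of \Cref{algorithm:finisher_small_inputs} at which a correct process can broadcast a \textsc{finish} message: line~\ref{line:broadcast_finish_1_small} (triggered by a $\mathsf{to\_finish}(\cdot)$ request) and line~\ref{line:broadcast_finish_2_small} (triggered by receiving $t+1$ matching \textsc{finish} messages).

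First I would rule out line~\ref{line:broadcast_finish_1_small}: if $p_i$ broadcasts $\langle \textsc{finish}, v' \rangle$ because it invoked $\mathsf{to\_finish}(v')$, then by definition of $v^\star$ (the value on which all correct $\mathsf{to\_finish}(\cdot)$ invocations agree) we must have $v' = v^\star$, contradicting $v' \neq v^\star$. Next I would rule out line~\ref{line:broadcast_finish_2_small}: if $p_i$ broadcasts $\langle \textsc{finish}, v' \rangle$ after receiving $\langle \textsc{finish}, v' \rangle$ from $t+1$ distinct processes, then since at most $t$ of those are faulty, at least one correct process $p_j$ must have previously sent $\langle \textsc{finish}, v' \rangle$. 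But $p_j$ sent this message strictly before time $\tau$ (since $p_i$ received it before broadcasting at $\tau$), contradicting the assumption that $p_i$ is the \emph{first} correct process to broadcast a \textsc{finish} message for a non-$v^\star$ value.

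Since both cases lead to a contradiction, no correct process ever broadcasts a \textsc{finish} message for a value other than $v^\star$; in particular, the first correct process to broadcast any \textsc{finish} message does so for $v^\star$. The argument is essentially a standard "first correct sender" quorum-intersection argument, so there is no serious obstacle — the only point requiring mild care is making the "strictly before $\tau$" timing precise, i.e., that a message received by $p_i$ before it acts at line~\ref{line:pluratility_finish_small} must have been sent at a time strictly less than $\tau$, which follows because message delays are positive and local steps take zero time.

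\begin{proof}
Suppose, for contradiction, that some correct process broadcasts a \textsc{finish} message for a value different from $v^{\star}$, and let $p_i$ be the first correct process to do so, broadcasting $\langle \textsc{finish}, v' \rangle$ with $v' \neq v^{\star}$ at some time $\tau$. We consider the two possible reasons for this broadcast.

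\emph{Case 1: $p_i$ broadcasts at line~\ref{line:broadcast_finish_1_small}.} Then $p_i$ has invoked $\mathsf{to\_finish}(v')$. By the definition of $v^{\star}$, every correct process that invokes $\mathsf{to\_finish}(\cdot)$ does so with value $v^{\star}$, hence $v' = v^{\star}$ -- a contradiction.

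\emph{Case 2: $p_i$ broadcasts at line~\ref{line:broadcast_finish_2_small}.} Then $p_i$ has received $\langle \textsc{finish}, v' \rangle$ from $t+1$ distinct processes (line~\ref{line:pluratility_finish_small}). As at most $t$ processes are faulty, at least one of these senders is a correct process $p_j$, and $p_j$ broadcast $\langle \textsc{finish}, v' \rangle$ at some time $\tau_j$. Since $p_i$ received this message before acting at time $\tau$, and message delays are positive while local computation takes zero time, we have $\tau_j < \tau$. Thus $p_j$ is a correct process that broadcasts a \textsc{finish} message for $v' \neq v^{\star}$ strictly before $p_i$, contradicting the choice of $p_i$ as the first such process.

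Both cases are impossible, so no correct process broadcasts a \textsc{finish} message for any value other than $v^{\star}$. In particular, the first correct process that broadcasts a \textsc{finish} message does so for $v^{\star}$.
\end{proof}
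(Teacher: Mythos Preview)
Your proof is correct and uses essentially the same first-sender argument as the paper: a correct process broadcasting at line~\ref{line:broadcast_finish_2_small} must have received a \textsc{finish} from an earlier correct sender, so the first correct \textsc{finish} must originate at line~\ref{line:broadcast_finish_1_small} and hence carry $v^\star$. The only framing difference is that the paper takes $p_i$ to be the first correct process to broadcast \emph{any} \textsc{finish} message, whereas you take $p_i$ to be the first to broadcast one for a \emph{wrong} value; your version therefore directly establishes the stronger \Cref{lemma:finisher_small_all_correct_message} (no correct process ever broadcasts a non-$v^\star$ \textsc{finish}) in one shot, making the paper's subsequent induction unnecessary.
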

\begin{proof}
Let $p_i$ be that correct process.
Process $p_i$ cannot broadcast the message at line~\ref{line:broadcast_finish_2_small} as that would contradict the fact that $p_i$ is the first correct process to broadcast a \textsc{finish} message.
Hence, process $p_i$ broadcasts its $\langle \textsc{finish}, v \rangle$ message at line~\ref{line:broadcast_finish_1_small}, which implies that $p_i$ has previously invoked a $\mathsf{to\_finish}(v)$ request (line~\ref{line:to_finish_small}).
Therefore, $v = v^{\star}$ due to the assumption that no correct process invokes a $\mathsf{to\_finish}(\cdot)$ request with a value different from $v^{\star}$.
\end{proof}

Next, we prove that no correct process broadcasts a \textsc{finish} message for a non-$v^{\star}$ value.

\begin{lemma} \label{lemma:finisher_small_all_correct_message}
If a correct process broadcasts a $\langle \textsc{finish}, v \rangle$ message, then $v = v^{\star}$.
\end{lemma}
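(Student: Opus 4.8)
The plan is to prove \Cref{lemma:finisher_small_all_correct_message} by strong induction on the order in which correct processes broadcast \textsc{finish} messages. The base case is exactly \Cref{lemma:finisher_small_first_correct_message}, which I may assume: the first correct process to broadcast a \textsc{finish} message does so for $v^\star$.

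For the inductive step, consider the $k$-th correct process (in temporal order) that broadcasts a \textsc{finish} message, call it $p_i$, and assume by the induction hypothesis that every correct process that broadcast a \textsc{finish} message strictly earlier did so for $v^\star$. There are two lines at which $p_i$ can broadcast: line~\ref{line:broadcast_finish_1_small} and line~\ref{line:broadcast_finish_2_small}. If $p_i$ broadcasts at line~\ref{line:broadcast_finish_1_small}, then it has previously invoked $\mathsf{to\_finish}(v)$ (line~\ref{line:to_finish_small}), and by the assumption on the finisher's input ($\mathsf{to\_finish}$ is invoked only with $v^\star$) we get $v = v^\star$. If $p_i$ broadcasts $\langle \textsc{finish}, v' \rangle$ at line~\ref{line:broadcast_finish_2_small}, then the guard at line~\ref{line:pluratility_finish_small} fired: $p_i$ received $\langle \textsc{finish}, v' \rangle$ from $t+1$ processes. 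Since at most $t$ of these are faulty, at least one of them is correct; call it $p_j$. That correct $p_j$ broadcast $\langle \textsc{finish}, v' \rangle$, and it must have done so strictly before $p_i$'s broadcast (since $p_i$ must have received $p_j$'s message before firing its own guard). Hence by the induction hypothesis $v' = v^\star$, and so $p_i$ also broadcasts for $v^\star$, completing the induction.

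The only mild subtlety — and the step I expect to require a word of care — is the temporal ordering argument establishing that the correct sender $p_j$ among the $t+1$ witnesses broadcast \emph{before} $p_i$. This follows because a correct process cannot act on a message it has not yet received: $p_i$'s condition at line~\ref{line:pluratility_finish_small} requires having received all $t+1$ copies, each of which was sent (by its respective sender) no later than its receipt by $p_i$; in particular $p_j$'s send precedes $p_j$'s-message-receipt-by-$p_i$, which precedes $p_i$'s own broadcast. This keeps $p_j$ strictly earlier in the enumeration, so the induction hypothesis genuinely applies. Everything else is a routine case split, so I would present it compactly. As a closing remark, this lemma — together with \Cref{lemma:finisher_small_first_correct_message} — is the key ingredient for the finisher's \emph{integrity} property and, via the usual Bracha-style $t+1$/$2t+1$ quorum reasoning on lines~\ref{line:pluratility_finish_small} and~\ref{line:quorum_finish_small}, feeds into \emph{termination} and \emph{totality} as well.
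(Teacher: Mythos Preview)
Your proof is correct and follows essentially the same approach as the paper: induction on the temporal order of correct processes' \textsc{finish} broadcasts, with \Cref{lemma:finisher_small_first_correct_message} as the base case and the same two-case split (line~\ref{line:broadcast_finish_1_small} versus line~\ref{line:broadcast_finish_2_small}) in the inductive step. Your explicit treatment of the temporal ordering subtlety is a welcome clarification but not a departure from the paper's argument.
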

\begin{proof}
We prove the lemma by induction.

\medskip
\noindent \emph{Base step: We prove that if $p_i$ is the first correct process to broadcast a \textsc{finish} message, then $v = v^{\star}$.}
\\ The base step follows directly from \Cref{lemma:finisher_small_first_correct_message}.

\medskip
\noindent \emph{Inductive step: The first $j$ correct processes to broadcast a \textsc{finish} message do so for value $v^{\star}$, for some $j \geq 1$.
We prove that the $(j + 1)$-st correct process to broadcast a \textsc{finish} message does so for value $v^{\star}.$}
\\ Let $p_{j + 1}$ be the $(j + 1)$-st correct process to broadcast a \textsc{finish} message, and let that message be for value $v$.
We distinguish two possibilities:
\begin{compactitem}
    \item Let $p_{j + 1}$ broadcast the \textsc{finish} message at line~\ref{line:broadcast_finish_1_small}.
    In this case, $v = v^{\star}$ as no correct process invokes a $\mathsf{to\_finish}(\cdot)$ request with a value different from $v^{\star}$.

    \item Let $p_{j + 1}$ broadcast the \textsc{finish} message at line~\ref{line:broadcast_finish_2_small}.
    Hence, $p_{j + 1}$ has previously received a \textsc{finish} message for $v$ from a correct process (due to the rule at line~\ref{line:pluratility_finish_small}).
    Therefore, $v = v^{\star}$.

\end{compactitem}
As $v = v^{\star}$ in all possible cases, the inductive step is concluded.
\end{proof}

We are now ready to prove that \shortfin satisfies the integrity property.

\begin{theorem} [Integrity]
\shortfin (\Cref{algorithm:finisher_small_inputs}) satisfies integrity.
\end{theorem}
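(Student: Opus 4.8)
The plan is to trace the $\mathsf{finish}(v')$ indication backwards through \Cref{algorithm:finisher_small_inputs} and then invoke the two lemmas already established. First I would observe that a correct process $p_i$ triggers $\mathsf{finish}(v')$ only at line~\ref{line:finish_small}, which is guarded by the rule at line~\ref{line:quorum_finish_small}; hence $p_i$ must have received $\langle \textsc{finish}, v' \rangle$ from $2t+1$ distinct processes. Since at most $t$ processes are faulty, at least $2t+1-t = t+1 \geq 1$ of these senders are correct, so at least one correct process has broadcast $\langle \textsc{finish}, v' \rangle$.

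Next I would apply \Cref{lemma:finisher_small_all_correct_message} to this correct sender, obtaining $v' = v^{\star}$, where $v^{\star}$ is the value associated with the finisher instance (well defined by the standing assumption that all correct processes that invoke $\mathsf{to\_finish}(\cdot)$ do so with the same value). It remains to exhibit a correct process that previously invoked $\mathsf{to\_finish}(v^{\star})$. Since we have just argued that at least one correct process broadcasts a \textsc{finish} message, there is a first (in time) correct process $p_f$ to do so. By the argument used in the proof of \Cref{lemma:finisher_small_first_correct_message}, $p_f$ cannot have broadcast at line~\ref{line:broadcast_finish_2_small} (this would contradict minimality), so $p_f$ broadcasts at line~\ref{line:broadcast_finish_1_small}, a step reached only after $p_f$ invokes $\mathsf{to\_finish}(\cdot)$; that request is necessarily $\mathsf{to\_finish}(v^{\star}) = \mathsf{to\_finish}(v')$. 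Therefore a correct process previously invoked $\mathsf{to\_finish}(v')$, which is exactly the integrity property.

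I do not expect any genuine obstacle here: the argument consists of a single quorum-intersection step followed by two already-proven lemmas. The only points deserving a moment of care are the well-definedness of $v^{\star}$ (handled by the standing assumption on $\mathsf{to\_finish}$ invocations) and the existence of the ``first correct $\textsc{finish}$-broadcaster'' (guaranteed by the quorum step, which produces at least one correct broadcaster). If one prefers to avoid even mentioning $v^{\star}$ in the degenerate case where no correct process ever broadcasts a \textsc{finish} message, note that then no correct process can reach the $2t+1$ threshold at line~\ref{line:quorum_finish_small}, so integrity holds vacuously.
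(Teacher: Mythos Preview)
Your proposal is correct and follows essentially the same approach as the paper: trace the $\mathsf{finish}(v')$ indication back to $2t+1$ \textsc{finish} messages, use quorum intersection to find a correct sender, apply \Cref{lemma:finisher_small_all_correct_message} to get $v' = v^{\star}$, and then invoke the ``first correct broadcaster'' argument (i.e., \Cref{lemma:finisher_small_first_correct_message}) to conclude that some correct process invoked $\mathsf{to\_finish}(v^{\star})$. Your added remarks on well-definedness of $v^{\star}$ and the vacuous case are fine but not needed.
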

\begin{proof}
Let $p_i$ be any correct process that receives a $\mathsf{finish}(v')$ indication, for some value $v'$ (line~\ref{line:finish_small}).
Hence, $p_i$ has previously received a $\langle \textsc{finish}, v' \rangle$ message from $2t + 1$ processes (line~\ref{line:quorum_finish_small}).
Thus, $p_i$ has received a \textsc{finish} message for value $v'$ from a correct process.
Given that no correct process sends a \textsc{finish} message for a non-$v^{\star}$ value (by \Cref{lemma:finisher_small_all_correct_message}), $v' = v^{\star}$.
Moreover, as the first correct process that broadcasts a \textsc{finish} value does so at line~\ref{line:broadcast_finish_1_small}, some correct process has invoked a $\mathsf{to\_finish}(v^{\star})$ request prior to $p_i$ receiving the aforementioned $\mathsf{finish}(\cdot)$ indication.
\end{proof}

Next, we prove the termination property.

\begin{theorem} [Termination]
\shortfin (\Cref{algorithm:finisher_small_inputs}) satisfies termination.
\end{theorem}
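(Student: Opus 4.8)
The plan is to prove the (slightly stronger) bound $\max(\tau,\text{GST})+\delta$ and observe that the stated $+2\delta$ follows immediately. First I would dispose of the trivial case: if some correct process never invokes $\mathsf{to\_finish}(\cdot)$, then no finite $\tau$ as in the hypothesis exists and the claim holds vacuously; so I assume $\tau$ is well-defined and finite, i.e.\ every correct process has invoked $\mathsf{to\_finish}(\cdot)$ by time $\tau$. By the input assumption on the finisher (all correct invocations carry the same value), this common value is exactly the $v^{\star}$ fixed at the start of the section, so each correct process invokes $\mathsf{to\_finish}(v^{\star})$ at some time $\tau'\le\tau$; in particular $v^{\star}$ is well-defined precisely when the hypothesis is non-vacuous.

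Next I would observe that the handler triggered by $\mathsf{to\_finish}(v^{\star})$ unconditionally broadcasts $\langle \textsc{finish}, v^{\star}\rangle$ at line~\ref{line:broadcast_finish_1_small} (there is no guard on $\mathit{started}_i$ there). Hence each of the $n-t$ correct processes sends $\langle \textsc{finish}, v^{\star}\rangle$ at a time $\le\tau$; note the $t+1$-amplification rule at line~\ref{line:pluratility_finish_small} is not needed here, as it only serves Totality. Any message sent at a time $\tau'\le\tau$ between correct processes is delivered by time $\max(\tau',\text{GST})+\delta\le\max(\tau,\text{GST})+\delta$ by the partial-synchrony bound after GST. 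Therefore, by time $\max(\tau,\text{GST})+\delta$, every correct process $p_i$ has received $\langle \textsc{finish}, v^{\star}\rangle$ from all $n-t$ correct processes; since $t<n/3$ gives $n-t\ge 2t+1$, the guard at line~\ref{line:quorum_finish_small} is satisfied at $p_i$, so $p_i$ triggers $\mathsf{finish}(v^{\star})$ at line~\ref{line:finish_small} no later than $\max(\tau,\text{GST})+\delta\le\max(\tau,\text{GST})+2\delta$, which is what Termination requires.

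There is no real obstacle: the argument is a single round of broadcasts, and the only points needing care are (i) the vacuity/$v^{\star}$-well-definedness bookkeeping, and (ii) checking that the $\mathsf{to\_finish}$ handler broadcasts regardless of $\mathit{started}_i$ so that every correct process contributes a $v^{\star}$-message without invoking the amplification step. (The gap between the $+\delta$ we actually prove and the $+2\delta$ in the statement presumably exists only to match the bound in Totality, whose proof genuinely costs two message delays.)
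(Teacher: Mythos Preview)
Your proof is correct and follows essentially the same route as the paper: both argue that by time $\tau$ every correct process has broadcast $\langle\textsc{finish},v^{\star}\rangle$, so by $\max(\tau,\text{GST})+\delta$ each correct process has received $n-t\ge 2t+1$ such messages and triggers $\mathsf{finish}$, giving the (stronger) $+\delta$ bound. The only cosmetic difference is that the paper invokes the preceding lemma (that every correct \textsc{finish} message is for $v^{\star}$) to conclude the $2t+1$ received messages agree, whereas you argue more directly that the $\mathsf{to\_finish}$ handler alone supplies $n-t$ copies of $v^{\star}$ without needing the amplification rule or that lemma.
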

\begin{proof}
Recall that $\tau$ is the first time such that all correct processes have invoked a $\mathsf{to\_finish}(\cdot)$ request by time $\tau$.
Hence, as there are at least $n - t \geq 2t + 1$ correct processes, every correct process $p_i$ receives a \textsc{finish} message from $2t + 1$ processes by time $\max(\tau, \text{GST}) + \delta$.
As all these messages are for the same value (namely, $v^{\star}$) due to \Cref{lemma:finisher_small_all_correct_message}, process $p_i$ does receive a $\mathsf{finish}(\cdot)$ indication by time $\max(\tau, \text{GST}) + \delta$ (line~\ref{line:finish_small}), which concludes the proof.
\end{proof}

The following theorem proves the totality property.

\begin{theorem} [Totality]
\shortfin (\Cref{algorithm:finisher_small_inputs}) satisfies totality.
\end{theorem}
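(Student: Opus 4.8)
The plan is to exploit the Bracha-style double-echo structure of \shortfin, exactly as one proves totality for reliable broadcast. Suppose a correct process $p_i$ triggers $\mathsf{finish}(v')$ at time $\tau$ (line~\ref{line:finish_small}); then by the rule at line~\ref{line:quorum_finish_small}, $p_i$ has received $\langle \textsc{finish}, v' \rangle$ from $2t+1$ processes by time $\tau$, at least $t+1$ of which are correct. First I would invoke \Cref{lemma:finisher_small_all_correct_message} on any one of those correct senders to conclude $v' = v^{\star}$. Since each of these $\geq t+1$ correct processes sent its $\langle \textsc{finish}, v' \rangle$ message at some time no later than $\tau$, and message delays after GST are at most $\delta$, every correct process receives $\langle \textsc{finish}, v' \rangle$ from all of them by time $\max(\tau, \text{GST}) + \delta$.

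Second, I would run the amplification argument to show that every correct process $p_j$ broadcasts $\langle \textsc{finish}, v' \rangle$ by time $\max(\tau, \text{GST}) + \delta$. Fix a correct $p_j$. By the previous step, by that time $p_j$ has received at least $t+1$ copies of $\langle \textsc{finish}, v' \rangle$. If $\mathit{started}_j = \mathit{false}$ at that point, the rule at line~\ref{line:pluratility_finish_small} fires and $p_j$ broadcasts $\langle \textsc{finish}, v' \rangle$ at line~\ref{line:broadcast_finish_2_small}. If instead $\mathit{started}_j = \mathit{true}$ already, then $p_j$ has previously broadcast some $\langle \textsc{finish}, v \rangle$ (at line~\ref{line:broadcast_finish_1_small} or line~\ref{line:broadcast_finish_2_small}), and \Cref{lemma:finisher_small_all_correct_message} forces $v = v^{\star} = v'$; so $p_j$ has already broadcast $\langle \textsc{finish}, v' \rangle$ by that time. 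Either way, the claim holds.

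Finally, since there are $n - t \geq 2t+1$ correct processes, all of which broadcast $\langle \textsc{finish}, v' \rangle$ by time $\max(\tau, \text{GST}) + \delta$, each correct process receives $\langle \textsc{finish}, v' \rangle$ from at least $2t+1$ processes by time $\max\big(\max(\tau, \text{GST}) + \delta,\, \text{GST}\big) + \delta = \max(\tau, \text{GST}) + 2\delta$, and therefore triggers $\mathsf{finish}(v')$ by that time through the rule at line~\ref{line:quorum_finish_small}. The routine part is the GST-aware timing bookkeeping (a message received by time $\tau$ was sent by time $\tau$, hence re-delivered everywhere within one more $\delta$, and re-broadcasts within a further $\delta$); the only genuinely delicate point is the case where a correct process has already set $\mathit{started}$ — there I rely on \Cref{lemma:finisher_small_all_correct_message} to guarantee its earlier broadcast was necessarily for $v'$, so it needs no separate amplification and the $t+1$ threshold does not have to be re-crossed for it.
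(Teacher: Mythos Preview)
Your proof is correct and follows essentially the same Bracha-style amplification argument as the paper: from the $2t+1$ \textsc{finish} messages received by $p_i$, at least $t+1$ correct senders reach everyone within $\delta$ after GST, the $t+1$ threshold (or prior $\mathit{started}=\mathit{true}$) forces every correct process to have broadcast $\langle\textsc{finish}, v^\star\rangle$ by $\max(\tau,\text{GST})+\delta$, and the $2t+1$ quorum is then met everywhere within one more $\delta$. Your case analysis for $\mathit{started}_j=\mathit{true}$ is in fact slightly more careful than the paper's (you explicitly cover both lines~\ref{line:broadcast_finish_1_small} and~\ref{line:broadcast_finish_2_small}), but the structure is identical.
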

\begin{proof}
Let $p_i$ be any correct process that receives a $\mathsf{finish}(\cdot)$ indication at some time $\tau$ (line~\ref{line:finish_small}).
Therefore, $p_i$ has received a $\langle \textsc{finish}, v^{\star} \rangle$ message from $2t + 1$ processes by time $\tau$ (line~\ref{line:quorum_finish_small}); recall that the integrity property is satisfied by \shortfin.
Hence, by time $\max(\tau, \text{GST}) + \delta$, every correct process receives $t + 1$ $\langle \textsc{finish}, v^{\star} \rangle$ messages.

Consider any correct process $p_j$.
As mentioned above, $p_j$ receives a $\langle \textsc{finish}, v^{\star} \rangle$ message from $t + 1$ processes by time $\max(\tau, \text{GST}) + \delta$.
Hence, the rule at line~\ref{line:pluratility_finish_small} activates at $p_j$ by time $\max(\tau, \text{GST}) + \delta$ (otherwise, $p_j$ has already broadcast a \textsc{finish} message at line~\ref{line:broadcast_finish_1_small}).
Hence, $p_j$ indeed broadcast a \textsc{finish} message for $v^{\star}$ by time $\max(\tau, \text{GST}) + \delta$.

Finally, as every correct process broadcasts a $\langle \textsc{finish}, v^{\star} \rangle$ message by time $\max(\tau, \text{GST}) + \delta$, the rule at line~\ref{line:quorum_finish_small} activates at every correct process by time $\max(\tau, \text{GST}) + 2\delta$.
Thus, the totality property is satisfied.
\end{proof}

Finally, we prove that any correct process sends $O(n)$ bits in \shortfin.

\begin{theorem} [Exchanged bits]
Any correct process sends $O(n)$ bits in \shortfin.
\end{theorem}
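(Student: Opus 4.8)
The plan is to bound, for an arbitrary correct process $p_i$, the total number of bits it sends across all broadcast sites in \Cref{algorithm:finisher_small_inputs}, and observe that this is $O(n)$. First I would note that a correct process broadcasts a \textsc{finish} message only at line~\ref{line:broadcast_finish_1_small} (triggered by $\mathsf{to\_finish}(\cdot)$) or at line~\ref{line:broadcast_finish_2_small} (triggered by the $t+1$-plurality rule). Crucially, the $\mathit{started}_i$ flag guards both: it is set to $\mathit{true}$ the first time $p_i$ enters either branch, and the line~\ref{line:broadcast_finish_2_small} branch has the explicit precondition $\mathit{started}_i = \mathit{false}$; moreover the $\mathsf{to\_finish}(\cdot)$ request is invoked at most once per correct process. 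Hence $p_i$ executes a broadcast \emph{at most once} in the entire execution.

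Next I would quantify the cost of that single broadcast. A broadcast in this point-to-point model consists of sending one $\langle \textsc{finish}, v' \rangle$ message to each of the $n$ processes. Since we consider constant-sized values ($L \in O(1)$ bits, as assumed for \shortfin) and the message tag \textsc{finish} is also of constant size, each such message carries $O(1)$ bits. Therefore the broadcast costs $n \cdot O(1) = O(n)$ bits.

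Combining the two observations: $p_i$ broadcasts at most once, at a cost of $O(n)$ bits, and sends no other messages in the protocol, so the per-process bit complexity of \shortfin is $O(n)$, giving total complexity $O(n^2)$ as claimed in the surrounding text. I do not expect any genuine obstacle here; the only subtlety worth stating explicitly in the write-up is the argument that the $\mathit{started}_i$ flag together with the at-most-once invocation of $\mathsf{to\_finish}(\cdot)$ prevents a correct process from broadcasting more than once (and in particular rules out the degenerate scenario where it broadcasts both at line~\ref{line:broadcast_finish_1_small} and at line~\ref{line:broadcast_finish_2_small}). A short proof is as follows.

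\begin{proof}
Consider any correct process $p_i$. We claim $p_i$ broadcasts a \textsc{finish} message at most once. Indeed, $p_i$ broadcasts only at line~\ref{line:broadcast_finish_1_small} or line~\ref{line:broadcast_finish_2_small}, and immediately before each such broadcast it sets $\mathit{started}_i \gets \mathit{true}$. The rule at line~\ref{line:pluratility_finish_small} requires $\mathit{started}_i = \mathit{false}$, so $p_i$ can execute line~\ref{line:broadcast_finish_2_small} at most once and only if it has not previously executed line~\ref{line:broadcast_finish_1_small}. Since $p_i$ invokes $\mathsf{to\_finish}(\cdot)$ at most once (by assumption), it executes line~\ref{line:broadcast_finish_1_small} at most once and only if it has not previously executed line~\ref{line:broadcast_finish_2_small}. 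Hence at most one of the two broadcast lines is ever executed by $p_i$, and it is executed at most once. A single broadcast sends one $\langle \textsc{finish}, v' \rangle$ message to each of the $n$ processes; as values are of constant size, each message has $O(1)$ bits, so the broadcast costs $O(n)$ bits. Since $p_i$ sends no other messages, $p_i$ sends $O(n)$ bits in \shortfin.
\end{proof}
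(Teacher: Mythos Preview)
Your approach is the same as the paper's---count broadcasts and multiply by $O(1)$ bits per recipient---but there is one small inaccuracy in your argument. You claim a correct process broadcasts \emph{at most once}, arguing that execution of line~\ref{line:broadcast_finish_1_small} happens ``only if it has not previously executed line~\ref{line:broadcast_finish_2_small}.'' That implication is not supported by the code: the $\mathsf{to\_finish}(\cdot)$ handler has no guard on $\mathit{started}_i$. A correct process could first trigger the plurality rule at line~\ref{line:pluratility_finish_small} (broadcasting at line~\ref{line:broadcast_finish_2_small} and setting $\mathit{started}_i \gets \mathit{true}$) and \emph{later} have $\mathsf{to\_finish}(\cdot)$ invoked, broadcasting a second time at line~\ref{line:broadcast_finish_1_small}. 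The correct bound is therefore ``at most twice,'' which is why the paper writes $O(1)$ \textsc{finish} messages rather than one. This does not affect the $O(n)$ conclusion, so the fix is just to weaken ``at most once'' to ``$O(1)$ times.''
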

\begin{proof}
Each correct process broadcasts only $O(1)$ \textsc{finish} messages, each of constant size; recall that values are constant-sized.
Hence, each correct process sends $O(n)$ bits. 
\end{proof}

\subsubsection{\longfin: implementation for long values.} \label{subsubsection:finisher_long_inputs_implementation}
\Cref{algorithm:finisher_long_inputs} is the pseudocode of \longfin, our implementation of the finisher primitive for values of size $L \notin O(1)$ bits.
\longfin tolerates up to $t < n / 3$ Byzantine processes and it exchanges $O\big( nL + n^2 \log(n) \big)$ bits.

The crucial element of \longfin is asynchronous data dissemination (ADD)~\cite{das2021asynchronous}, an asynchronous information-theoretic secure primitive tolerating $t < n / 3$ Byzantine failures.
ADD ensures the following: 
Let $M$ be a data blob of size $L$ that is the input of at least $t + 1$ correct processes.
The remaining correct processes do not input any value.
It is guaranteed that all correct processes eventually output (only) $M$.
In terms of complexity, the ADD protocol incurs $2$ asynchronous rounds and $O\big( L + n \log(n) \big)$ per-process bit complexity.

We describe \longfin from the perspective of a correct process $p_i$.
Once $p_i$ invokes a $\mathsf{to\_finish}(v)$ request (line~\ref{line:to_finish_long}), $p_i$ inputs its value to ADD (line~\ref{line:start_add_finisher_long}) and notifies all processes about this (line~\ref{line:broadcast_started_add_finisher_long}).
When process $p_i$ learns that $2t + 1$ processes have started ADD (line~\ref{line:started_add_quorum_finisher_long}), process $p_i$ knows that at least $t + 1$ correct processes have started ADD with a non-$\bot$ value (recall that this represents a precondition of the ADD primitive).
Hence, $p_i$ informs all other processes that at least $t + 1$ correct processes have started ADD with a non-$\bot$ value via a $\langle \text{``plurality started ADD''} \rangle$ message (line~\ref{line:broadcast_plurality_finisher_long}).
If $p_i$ receives a $\langle \text{``plurality started ADD''} \rangle$ message from $t + 1$ processes and it has not previously disseminated this message (line~\ref{line:plurality_plurality_2_finisher_long}), $p_i$ does so (line~\ref{line:broadcast_plurality_2_finisher_long}).
Finally, once $p_i$ outputs a value $v'$ from ADD and receives a $\langle \text{``plurality started ADD''} \rangle$ message from $2t + 1$ processes (line~\ref{line:rule_to_finish_finisher_long}), $p_i$ 
triggers $\mathsf{finish}(v')$ (line~\ref{line:finish_finisher_long}).

\begin{algorithm}
\caption{\longfin: Pseudocode (for process $p_i$)}
\label{algorithm:finisher_long_inputs}
\footnotesize
\begin{algorithmic} [1] 
\State \textbf{Local variables:}
\State \hskip2em $\mathsf{Boolean}$ $\mathit{echoed}_i \gets \mathit{false}$

\medskip
\State \textbf{upon} $\mathsf{to\_finish}(v \in \mathsf{Value})$: \label{line:to_finish_long}
\State \hskip2em input $v$ to ADD \label{line:start_add_finisher_long}
\State \hskip2em \textbf{broadcast} $\langle \text{``started ADD''} \rangle$ \label{line:broadcast_started_add_finisher_long}

\medskip
\State \textbf{upon} $\langle \text{``started ADD''} \rangle$ is received from $2t + 1$ processes and $\mathit{echoed}_i = \mathit{false}$:\label{line:started_add_quorum_finisher_long}
\State \hskip2em $\mathit{echoed}_i \gets \mathit{true}$
\State \hskip2em \textbf{broadcast} $\langle \text{``plurality started ADD''} \rangle$ \label{line:broadcast_plurality_finisher_long}


\medskip
\State \textbf{upon} $\langle \text{``plurality started ADD''} \rangle$ is received from $t + 1$ processes and $\mathit{echoed}_i = \mathit{false}$:\label{line:plurality_plurality_2_finisher_long}
\State \hskip2em $\mathit{echoed}_i \gets \mathit{true}$
\State \hskip2em \textbf{broadcast} $\langle \text{``plurality started ADD''} \rangle$ \label{line:broadcast_plurality_2_finisher_long}

\medskip
\State \textbf{upon} $\mathsf{Value}$ $v'$ is output from ADD and $\langle \text{``plurality started ADD''} \rangle$ is received from $2t + 1$ processes: \label{line:rule_to_finish_finisher_long}
\State \hskip2em \textbf{trigger} $\mathsf{finish}(v')$ \label{line:finish_finisher_long}

\end{algorithmic} 
\end{algorithm}

\smallskip
\noindent \textbf{Proof of correctness \& complexity.}
Let us denote by $v^{\star}$ the common value of all correct processes that invoke a $\mathsf{to\_finish}(\cdot)$ request.
We start by proving that if a correct process broadcasts a $\langle \text{``plurality started ADD''} \rangle$ message, then at least $t + 1$ correct processes have previously started ADD with $v^{\star}$.

\begin{lemma} \label{lemma:enough_to_ADD}
If a correct process broadcasts a $\langle \text{``plurality started ADD''} \rangle$ message, then at least $t + 1$ correct processes have previously input $v^{\star}$ to ADD.
\end{lemma}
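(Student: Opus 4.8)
The plan is to prove the statement by induction on the order in which correct processes broadcast a $\langle \text{``plurality started ADD''} \rangle$ message, which is the standard Bracha-style amplification (double-echo) argument. A correct process emits such a message only at line~\ref{line:broadcast_plurality_finisher_long} or at line~\ref{line:broadcast_plurality_2_finisher_long}, so the case analysis in each inductive step is driven by which of these two rules fired. Throughout, I would use the quorum-intersection counting enabled by $n \geq 3t+1$: a set of $2t+1$ processes contains at least $t+1$ correct ones, and a set of $t+1$ processes contains at least one correct one. I would also recall that $v^{\star}$ is well defined, since by assumption any two correct processes that invoke $\mathsf{to\_finish}(\cdot)$ do so with the same value.

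For the base case, I would argue that the first correct process $p$ to broadcast $\langle \text{``plurality started ADD''} \rangle$ cannot have done so at line~\ref{line:broadcast_plurality_2_finisher_long}: that rule requires $t+1$ distinct senders of the message, at least one of which is correct, contradicting that $p$ is the first correct sender. Hence $p$ fired line~\ref{line:broadcast_plurality_finisher_long}, so $p$ has received $\langle \text{``started ADD''} \rangle$ from $2t+1$ processes, of which at least $t+1$ are correct. Each such correct process sent that message at line~\ref{line:broadcast_started_add_finisher_long}, immediately after invoking $\mathsf{to\_finish}(\cdot)$ and inputting its value to ADD at line~\ref{line:start_add_finisher_long}; since every correct process that invokes $\mathsf{to\_finish}(\cdot)$ uses $v^{\star}$, these $t+1$ correct processes have input $v^{\star}$ to ADD before $p$'s broadcast.

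For the inductive step I would assume the claim holds for the first $j$ correct senders of $\langle \text{``plurality started ADD''} \rangle$ and consider the $(j+1)$-st, call it $q$. If $q$ fired line~\ref{line:broadcast_plurality_finisher_long}, the base-case reasoning applies verbatim. Otherwise $q$ fired line~\ref{line:broadcast_plurality_2_finisher_long}, so it received $\langle \text{``plurality started ADD''} \rangle$ from $t+1$ processes, at least one of which is a correct process $r$; since $r$'s broadcast precedes $q$'s, $r$ is among the first $j$ correct senders, and by the induction hypothesis at least $t+1$ correct processes input $v^{\star}$ to ADD before $r$ broadcast, hence before $q$ broadcast by transitivity of the temporal order. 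This closes the induction and proves the lemma.

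I do not expect a real obstacle: the argument is a textbook amplification proof. The one point to be careful about is temporal bookkeeping — the induction hypothesis must be phrased as ``at least $t+1$ correct processes input $v^{\star}$ to ADD \emph{strictly before} this process broadcasts,'' rather than merely ``at some point,'' so that the conclusion is exactly what the subsequent termination and totality arguments of \longfin need, and so that it lines up with the precondition of the ADD primitive (that $\geq t+1$ correct processes hold the same blob $M = v^{\star}$).
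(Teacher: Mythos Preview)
Your proof is correct and the core argument is the same as the paper's: the first correct process to broadcast $\langle\text{``plurality started ADD''}\rangle$ cannot have done so at line~\ref{line:broadcast_plurality_2_finisher_long}, so it fired line~\ref{line:broadcast_plurality_finisher_long} after receiving $2t+1$ $\langle\text{``started ADD''}\rangle$ messages, of which $t+1$ come from correct processes that already input $v^{\star}$ to ADD. The paper stops there and dispenses with your explicit induction by implicitly using monotonicity: once $t+1$ correct processes have input $v^{\star}$ to ADD before the \emph{first} such broadcast, this fact persists and trivially holds before every subsequent broadcast, so no inductive step is needed. Your induction is harmless but redundant.
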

\begin{proof}
The first correct process to broadcast a $\langle \text{``plurality started ADD''} \rangle$ message does so at line~\ref{line:broadcast_plurality_finisher_long}.
Let us denote this process by $p_i$.
Hence, before broadcasting the aforementioned message, $p_i$ has received a $\langle \text{``started ADD''} \rangle$ messages from $2t + 1$ processes (line~\ref{line:started_add_quorum_finisher_long}).
Therefore, at least $t + 1$ correct processes have sent a $\langle \text{``started ADD''} \rangle$ message.
Finally, as any correct process $p_j$ sends a $\langle \text{``started ADD''} \rangle$ message (line~\ref{line:broadcast_started_add_finisher_long}) only after inputting a value $v^{\star} \neq \bot$ to ADD (line~\ref{line:start_add_finisher_long}), the statement of the lemma holds.
\end{proof}

Next, we show that no correct process inputs to ADD a non-$v^{\star}$ value.

\begin{lemma} \label{lemma:no_other_to_ADD}
If a correct process inputs a value $v$ to ADD, then $v = v^{\star}$.
\end{lemma}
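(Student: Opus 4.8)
The plan is to trace backwards from the only line in \longfin where a correct process feeds a value into ADD, namely line~\ref{line:start_add_finisher_long}, and argue that the value fed in must be $v^{\star}$. First I would observe that a correct process $p_i$ inputs a value to ADD \emph{only} inside the handler triggered by $\mathsf{to\_finish}(v)$ (line~\ref{line:to_finish_long}): there is no other place in \Cref{algorithm:finisher_long_inputs} where a correct process invokes the ADD input action. So if $p_i$ inputs $v$ to ADD, then $p_i$ previously received a $\mathsf{to\_finish}(v)$ request.

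Next I would invoke the input assumptions of the finisher primitive, recalled just above the lemma: every correct process invokes $\mathsf{to\_finish}(\cdot)$ at most once, and if any correct process invokes $\mathsf{to\_finish}(v_1)$ while any other correct process invokes $\mathsf{to\_finish}(v_2)$, then $v_1 = v_2$. Combined with the definition of $v^{\star}$ as the common value of all correct processes that invoke $\mathsf{to\_finish}(\cdot)$, this immediately gives $v = v^{\star}$. That is the entire argument; it is a one-line consequence of the protocol structure plus the problem's precondition, closely mirroring the base step of \Cref{lemma:finisher_small_all_correct_message} but even simpler, since in \longfin there is no amplification/echo path that re-broadcasts the \emph{value} itself (the echoes carry only control messages $\langle\text{``started ADD''}\rangle$ and $\langle\text{``plurality started ADD''}\rangle$, not $v$), so no inductive step is needed.

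There is essentially no obstacle here: the only thing to be careful about is to confirm that ADD is never invoked by a correct process from any other code path (e.g.\ it is not re-invoked upon receiving control messages), which a quick inspection of the pseudocode settles. The proof would read: \emph{``Since a correct process inputs a value to ADD only at line~\ref{line:start_add_finisher_long}, which is executed only upon a $\mathsf{to\_finish}(v)$ request (line~\ref{line:to_finish_long}), the process has previously invoked $\mathsf{to\_finish}(v)$. By the assumption that all correct processes invoke $\mathsf{to\_finish}(\cdot)$ with the same value, and by the definition of $v^{\star}$, we conclude $v = v^{\star}$.''} This lemma then feeds into the downstream argument that ADD's precondition is met (at least $t+1$ correct inputs, all equal, by \Cref{lemma:enough_to_ADD}), so that every correct process outputs exactly $v^{\star}$ from ADD, which in turn underpins integrity and termination of \longfin.
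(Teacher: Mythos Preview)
Your proposal is correct and follows exactly the paper's approach: the paper's proof is a one-liner observing that the lemma follows from the fact that no correct process invokes $\mathsf{to\_finish}(\cdot)$ with a non-$v^{\star}$ value. Your write-up simply makes explicit the code-path inspection (that line~\ref{line:start_add_finisher_long} is the only ADD input point and lies inside the $\mathsf{to\_finish}$ handler) that the paper leaves implicit.
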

\begin{proof}
The lemma follows from the fact that no correct process invokes a $\mathsf{to\_finish}(\cdot)$ request with a non-$v^{\star}$ value (line~\ref{line:to_finish_long}).
\end{proof}

We now prove the integrity property of \longfin.

\begin{theorem} [Integrity]
\longfin (\Cref{algorithm:finisher_long_inputs}) satisfies integrity.
\end{theorem}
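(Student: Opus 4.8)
The plan is to trace back the precise conditions under which a correct process can trigger a $\mathsf{finish}(\cdot)$ indication, and then to invoke the two lemmas already established together with the guarantee of the ADD primitive. First I would fix a correct process $p_i$ that receives a $\mathsf{finish}(v')$ indication. By the pseudocode (line~\ref{line:finish_finisher_long}), this can only happen through the rule at line~\ref{line:rule_to_finish_finisher_long}, so $p_i$ must have (1) output $v'$ from ADD, and (2) received a $\langle \text{``plurality started ADD''} \rangle$ message from $2t+1$ processes. Since $t < n/3$, at least $t+1$ of those $2t+1$ senders are correct, hence some correct process has broadcast a $\langle \text{``plurality started ADD''} \rangle$ message.

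Next I would apply \Cref{lemma:enough_to_ADD} to conclude that at least $t+1$ correct processes have input $v^{\star}$ to ADD, and \Cref{lemma:no_other_to_ADD} to conclude that no correct process inputs any value other than $v^{\star}$ to ADD. Together these establish exactly the precondition of the ADD primitive: the data blob $v^{\star}$ is the input of at least $t+1$ correct processes, while every other correct process either inputs $v^{\star}$ or nothing at all. Therefore ADD's guarantee forces every correct process — and in particular $p_i$ — to output only $v^{\star}$, so $v' = v^{\star}$.

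Finally, since at least one correct process has input $v^{\star}$ to ADD, and a correct process inputs a value to ADD only at line~\ref{line:start_add_finisher_long}, which is executed solely in response to a $\mathsf{to\_finish}(v^{\star})$ request at line~\ref{line:to_finish_long} (with the same value), some correct process has previously invoked $\mathsf{to\_finish}(v^{\star}) = \mathsf{to\_finish}(v')$, which is precisely what integrity requires. The only delicate point in the whole argument is verifying that the ADD precondition genuinely holds — in particular that no correct process feeds ADD a conflicting value — but this is exactly the content of \Cref{lemma:enough_to_ADD} and \Cref{lemma:no_other_to_ADD}, so once those are in place the proof is short and direct.
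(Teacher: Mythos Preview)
Your proof is correct and follows essentially the same approach as the paper's: trace the $\mathsf{finish}(v')$ indication back to the ADD output and the $2t+1$ ``plurality started ADD'' messages, invoke \Cref{lemma:enough_to_ADD} and \Cref{lemma:no_other_to_ADD} to establish the ADD precondition, conclude $v' = v^{\star}$, and then observe that inputting to ADD only happens after a $\mathsf{to\_finish}$ request. Your version is slightly more explicit in justifying why \Cref{lemma:enough_to_ADD} applies (by noting that among the $2t+1$ senders at least one is correct), which the paper leaves implicit.
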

\begin{proof}
Let $p_i$ be any correct process that receives a $\mathsf{finish}(v')$ indication, for some value $v'$ (line~\ref{line:finish_finisher_long}).
Hence, $p_i$ has previously output $v'$ from ADD and received a $\langle \text{``plurality started ADD''} \rangle$ message from $2t + 1$ processes (line~\ref{line:rule_to_finish_finisher_long}).
As (1) at least $t + 1$ correct processes have previously input $v^{\star}$ to ADD (by \Cref{lemma:enough_to_ADD}), and (2) no correct process inputs any other value to ADD (by \Cref{lemma:no_other_to_ADD}), the precondition of ADD is satisfied.
Therefore, ADD ensures that $v^{\star} = v'$.
Finally, as the first correct process to input a value to ADD does so at line~\ref{line:start_add_finisher_long}, a correct process has indeed invoked a $\mathsf{to\_finish}(v^{\star})$ request (line~\ref{line:to_finish_long}) prior to $p_i$ receiving the aforementioned $\mathsf{finish}(v' = v^{\star})$ indication.
\end{proof}

Next, we prove the termination property.

\begin{theorem} [Termination]
\longfin (\Cref{algorithm:finisher_long_inputs}) satisfies termination.
\end{theorem}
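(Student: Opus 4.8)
The plan is to fix $\tau' = \max(\tau, \text{GST})$ and show that two processes run in parallel and both conclude by time $\tau' + 2\delta$: (i) every correct process outputs $v^{\star}$ from the ADD sub-protocol, and (ii) every correct process receives $\langle \text{``plurality started ADD''} \rangle$ from $2t+1$ processes. Once both hold at a correct process $p_i$, the guard at line~\ref{line:rule_to_finish_finisher_long} fires and $p_i$ triggers $\mathsf{finish}(v')$; by (i) the value $v'$ output from ADD equals $v^{\star}$, so $p_i$ receives a $\mathsf{finish}(v^{\star})$ indication by time $\tau' + 2\delta = \max(\tau, \text{GST}) + 2\delta$, which is exactly what the statement requires.

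For part (i): by hypothesis every correct process has invoked $\mathsf{to\_finish}(v^{\star})$ by time $\tau$, so (line~\ref{line:start_add_finisher_long}) every correct process — in particular at least $t+1$ of them — has input $v^{\star}$ to ADD by time $\tau$. By \Cref{lemma:no_other_to_ADD} no correct process ever inputs a different value, so the precondition of ADD is met. Since ADD terminates in $2$ asynchronous rounds, and all of its post-$\tau'$ communication takes at most $\delta$ per round, every correct process outputs $v^{\star}$ from ADD by time $\tau' + 2\delta$.

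For part (ii): again by time $\tau$ every correct process has broadcast $\langle \text{``started ADD''} \rangle$ (line~\ref{line:broadcast_started_add_finisher_long}), so by time $\tau' + \delta$ every correct process has received this message from all $\geq 2t+1$ correct processes. At that point the guard at line~\ref{line:started_add_quorum_finisher_long} is satisfied unless $\mathit{echoed}_i$ is already $\mathit{true}$, in which case $p_i$ has already broadcast $\langle \text{``plurality started ADD''} \rangle$ (via line~\ref{line:broadcast_plurality_finisher_long} or line~\ref{line:broadcast_plurality_2_finisher_long}). Either way, by time $\tau' + \delta$ every correct process has broadcast $\langle \text{``plurality started ADD''} \rangle$, hence by time $\tau' + 2\delta$ every correct process has received $2t+1$ such messages. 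Combining (i) and (ii) gives the bound.

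The only delicate point is the bookkeeping of the $\mathit{echoed}_i$ flag in part (ii): a correct process may have set $\mathit{echoed}_i$ through the $(t+1)$-threshold rule at line~\ref{line:plurality_plurality_2_finisher_long} rather than through the $(2t+1)$-threshold rule, so I would phrase the argument as ``by time $\tau' + \delta$ every correct process has broadcast $\langle \text{``plurality started ADD''} \rangle$ through one of the two rules'' rather than pinning down which line triggered it. This is the same flavor of case analysis as in the \shortfin termination proof, just with an extra echo layer, so I do not anticipate any genuine obstacle.
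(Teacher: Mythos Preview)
Your proposal is correct and follows essentially the same approach as the paper's proof: both argue that (i) the ADD precondition is met by time $\tau$ (via \Cref{lemma:no_other_to_ADD}) so every correct process outputs from ADD by $\max(\tau,\text{GST})+2\delta$, and (ii) every correct process broadcasts $\langle\text{``plurality started ADD''}\rangle$ by $\max(\tau,\text{GST})+\delta$ (through one of the two echo rules), so $2t+1$ such messages are received by $\max(\tau,\text{GST})+2\delta$. Your treatment of the $\mathit{echoed}_i$ flag is in fact slightly more explicit than the paper's, which simply cites ``line~\ref{line:broadcast_plurality_finisher_long} or line~\ref{line:broadcast_plurality_2_finisher_long}'' without spelling out the case split.
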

\begin{proof}
Recall that $\tau$ is the first time such that all correct processes have invoked a $\mathsf{to\_finish}(\cdot)$ request by time $\tau$.
Hence, by time $\tau$ at least $n - t \geq t + 1$ correct processes input $v^{\star}$ to ADD (line~\ref{line:start_add_finisher_long}) and send a $\langle \text{``started ADD''} \rangle$ message (line~\ref{line:broadcast_started_add_finisher_long}).
As no correct process inputs any non-$v^{\star}$ value to ADD (by \Cref{lemma:no_other_to_ADD}), the precondition of ADD is satisfied.
Therefore, by time $\max(\tau, \text{GST}) + 2\delta$, every correct process outputs a value from ADD (since ADD incurs two asynchronous rounds).
Moreover, by time $\max(\tau, \text{GST}) + \delta$, every correct process sends a $\langle \text{``plurality started ADD''} \rangle$ message (line~\ref{line:broadcast_plurality_finisher_long} or line~\ref{line:broadcast_plurality_2_finisher_long}).
Thus, every correct process receives a $\langle \text{``plurality started ADD''} \rangle$ from $n - t \geq 2t + 1$ processes by time $\max(\tau, \text{GST}) + 2\delta$.
This implies that the rule at line~\ref{line:rule_to_finish_finisher_long} activates at each correct process by time $\max(\tau, \text{GST}) + 2\delta$, thus concluding the proof.
\end{proof}

The following theorem proves the totality property.

\begin{theorem} [Totality]
\longfin (\Cref{algorithm:finisher_long_inputs}) satisfies totality.
\end{theorem}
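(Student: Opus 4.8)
The plan is to mirror the totality proof of \shortfin, replacing the single $\langle \textsc{finish}, \cdot \rangle$ quorum by the pair of enabling conditions of \Cref{line:rule_to_finish_finisher_long}: an ADD output together with a quorum of $\langle \text{``plurality started ADD''} \rangle$ messages. First I would fix a correct process $p_i$ that receives a $\mathsf{finish}(v')$ indication at some time $\tau$. This can only occur via the rule at \Cref{line:rule_to_finish_finisher_long}, so by time $\tau$ the process $p_i$ has (a) output $v'$ from ADD and (b) received $\langle \text{``plurality started ADD''} \rangle$ from $2t+1$ processes. By the already-established integrity property, $v' = v^{\star}$; moreover, at least one of those $2t+1$ senders is correct, so \Cref{lemma:enough_to_ADD} gives that at least $t+1$ correct processes have input $v^{\star}$ to ADD strictly before time $\tau$. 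Combined with \Cref{lemma:no_other_to_ADD} (no correct process inputs any value other than $v^{\star}$), this means the precondition of ADD holds and all of its correct inputters have supplied their input by time $\tau$.

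Next I would turn the two enabling conditions into timing bounds. For the $\langle \text{``plurality started ADD''} \rangle$ quorum: among the $2t+1$ senders $p_i$ heard from by time $\tau$, at least $t+1$ are correct, so by time $\max(\tau,\text{GST})+\delta$ every correct process has received $\langle \text{``plurality started ADD''} \rangle$ from $t+1$ processes; hence the rule at \Cref{line:plurality_plurality_2_finisher_long} fires at every correct process whose $\mathit{echoed}$ flag is still false (and any process whose flag is already true has, by the update rule, already broadcast $\langle \text{``plurality started ADD''} \rangle$), so every correct process has broadcast $\langle \text{``plurality started ADD''} \rangle$ by time $\max(\tau,\text{GST})+\delta$, and therefore receives it from $n-t\ge 2t+1$ processes by time $\max(\tau,\text{GST})+2\delta$. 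For the ADD output: since at least $t+1$ correct processes input $v^{\star}$ by time $\tau$ and no correct process inputs any other value, the $2$-asynchronous-round guarantee of ADD ensures every correct process outputs $v^{\star}$ by time $\max(\tau,\text{GST})+2\delta$. Both bounds being $\max(\tau,\text{GST})+2\delta$, the rule at \Cref{line:rule_to_finish_finisher_long} is enabled at every correct process by that time, so every correct process receives a $\mathsf{finish}(\cdot)$ indication by then.

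The main obstacle I anticipate is the ADD-timing step: I need to justify carefully that "ADD incurs $2$ asynchronous rounds" really yields the bound $\max(\tau,\text{GST})+2\delta$ here, given only that ADD's correct inputters have supplied their (identical) inputs by time $\tau$, while other correct processes may supply $v^{\star}$ later or never. Concretely, I would argue that each such inputter sends its round-$1$ ADD messages no later than $\tau$, these are delivered by $\max(\tau,\text{GST})+\delta$, the round-$2$ messages are then sent by that time and delivered by $\max(\tau,\text{GST})+2\delta$, and that late or absent correct inputters cannot delay reconstruction once $t+1$ correct round-$1$ messages are already in flight. A minor additional point is that the two enabling conditions of \Cref{line:rule_to_finish_finisher_long} must hold simultaneously, so it is the later of the two $2\delta$-bounds that matters — but they coincide, so the simultaneity causes no loss; everything else is a routine adaptation of the \shortfin argument.
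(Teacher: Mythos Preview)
Your proposal is correct and follows essentially the same approach as the paper: fix a correct finisher $p_i$ at time $\tau$, use the $2t+1$ ``plurality started ADD'' quorum to propagate to all correct processes within $2\delta$, and use \Cref{lemma:enough_to_ADD,lemma:no_other_to_ADD} together with the two-round ADD bound to get every correct process an ADD output within the same $2\delta$. If anything, you are more careful than the paper about the ADD-timing step and the $\mathit{echoed}$ flag; the paper simply asserts ``ADD incurs two asynchronous rounds'' without spelling out the round-by-round argument you sketch.
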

\begin{proof}
Let $p_i$ be a correct process that receives a $\mathsf{finish}(\cdot)$ indication at some time $\tau$; as guaranteed by the integrity property, the indication is for $v^{\star}$.
Hence, by time $\tau$, process $p_i$ has output $v^{\star}$ from ADD and received a $\langle \text{``plurality started ADD''} \rangle$ from $2t + 1$ processes (due to the rule at line~\ref{line:rule_to_finish_finisher_long}).
Let us focus on any correct process $p_j$.

Due to \cref{lemma:enough_to_ADD,lemma:no_other_to_ADD} and the fact that ADD incurs two asynchronous rounds, process $p_j$ outputs $v^{\star}$ from ADD by time $\max(\tau, \text{GST}) + 2\delta$.
Moreover, every correct process broadcasts a $\langle \text{``plurality started ADD''} \rangle$ message by time $\max(\tau, \text{GST}) + \delta$ (as $p_i$ has received such messages from at least $t + 1$ correct processes by time $\tau$), which implies that $p_j$ receives $n - t \geq 2t + 1$ such messages by time $\max(\tau, \text{GST}) + 2\delta$.
Therefore, the rule at line~\ref{line:rule_to_finish_finisher_long} activates at $p_j$ by time $\max(\tau, \text{GST}) + 2\delta$, which concludes the proof.
\end{proof}

Lastly, we prove the number of bits correct processes send in \longfin.


\begin{theorem} [Exchanged bits]
Any correct process sends $O\big(L + n \log(n) \big)$ bits in \longfin.
\end{theorem}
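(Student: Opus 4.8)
The plan is to decompose the communication of an arbitrary correct process $p_i$ in \longfin (\Cref{algorithm:finisher_long_inputs}) into three disjoint sources and bound each separately: (1) the messages $p_i$ sends while participating in the single ADD instance, (2) the $\langle \text{``started ADD''} \rangle$ broadcast, and (3) the $\langle \text{``plurality started ADD''} \rangle$ broadcast(s). Since these are the only lines on which $p_i$ emits messages, summing the three bounds yields the per-process bit complexity of \longfin.

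First I would handle ADD. By the properties of asynchronous data dissemination~\cite{das2021asynchronous} recalled in \Cref{subsubsection:finisher_long_inputs_implementation}, the ADD protocol incurs $O\big(L + n\log(n)\big)$ per-process bit complexity, where $L$ is the size of the disseminated blob. As $p_i$ inputs to ADD at most once (line~\ref{line:start_add_finisher_long}, triggered by the at-most-once $\mathsf{to\_finish}(\cdot)$ request), this contributes $O\big(L + n\log(n)\big)$ bits. Next, the $\langle \text{``started ADD''} \rangle$ message is broadcast only at line~\ref{line:broadcast_started_add_finisher_long}, which is executed solely within the handler for $\mathsf{to\_finish}(\cdot)$; since a correct process invokes $\mathsf{to\_finish}(\cdot)$ at most once, $p_i$ broadcasts this constant-size control message at most once, for a cost of $O(n)$ bits.

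The one step requiring a short argument is bounding the $\langle \text{``plurality started ADD''} \rangle$ traffic. This message is broadcast at line~\ref{line:broadcast_plurality_finisher_long} and line~\ref{line:broadcast_plurality_2_finisher_long}; both corresponding rules (lines~\ref{line:started_add_quorum_finisher_long} and~\ref{line:plurality_plurality_2_finisher_long}) are guarded by $\mathit{echoed}_i = \mathit{false}$ and, upon firing, immediately set $\mathit{echoed}_i \gets \mathit{true}$. Since $\mathit{echoed}_i$ is never reset to $\mathit{false}$, once either rule fires neither can fire again; hence $p_i$ broadcasts $\langle \text{``plurality started ADD''} \rangle$ at most once, costing $O(n)$ bits. (I do not expect this to be a genuine obstacle — it is a one-line monotonicity observation about the flag — but it is the only place where the bound is not immediate from a single line of the pseudocode.)

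Finally I would combine the three contributions: $p_i$ sends at most $O\big(L + n\log(n)\big) + O(n) + O(n)$ bits, which is $O\big(L + n\log(n)\big)$ since $n \in O(n\log n)$. This establishes the claimed per-process bit complexity of \longfin, and (multiplying by $n$) the $O\big(nL + n^2\log(n)\big)$ total bit complexity asserted earlier in \Cref{subsubsection:finisher_long_inputs_implementation}.
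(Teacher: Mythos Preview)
Your proposal is correct and follows essentially the same approach as the paper: bound the ADD contribution by the stated $O\big(L + n\log(n)\big)$ per-process complexity, and bound the two control-message broadcasts by $O(n)$ each. The paper's own proof is a two-sentence version of your argument and does not spell out the $\mathit{echoed}_i$ monotonicity observation you included, so your write-up is if anything more thorough.
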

\begin{proof}
Each correct process $p_i$ sends $O(n)$ bits via $\langle \text{``started ADD''} \rangle$ and $\langle \text{``plurality started ADD''} \rangle$ messages.
Moreover, the ADD primitive incurs $O\big( L + n \log(n) \big)$ bits per-process.
\end{proof}

\subsection{Pseudocode}

The pseudocode of \name is given in \Cref{algorithm:name_pseudocode_2}.
\name's executions unfold in views; $\mathsf{View} = \{1, 2, ...\}$ denotes the set of views.
Moreover, each view is associated with its instance of \block (see \Cref{section:block}); the instance of \block associated with view $V \in \mathsf{View}$ is denoted by $\mathcal{CX}(V)$ (line~\ref{line:block_instance}).
Each instance of \block is parametrized with $\Delta_{\mathit{shift}} = 2\delta$.
To guarantee liveness, \name ensures that all correct processes are brought to the same instance of \block for sufficiently long after GST, thus allowing \block to decide (due to its synchronicity property).
The safety of \name is ensured by the careful utilization of the \block instances.
We proceed to describe \name's pseudocode (\Cref{algorithm:name_pseudocode_2}) from the perspective of a correct process $p_i$.

\begin{algorithm} [h]
\caption{\name: Pseudocode (for process $p_i$)}
\label{algorithm:name_pseudocode_2}
\footnotesize
\begin{algorithmic} [1] 
\State \textbf{Uses:}
\State \hskip2em \block with parameter $\Delta_{\mathit{shift}} = 2\delta$, \textbf{instances} $\mathcal{CX}(V)$, for every $V \in \mathsf{View}$ \label{line:block_instance} \BlueComment{see \Cref{section:block}}
\State \hskip2em Finisher, \textbf{instance} $\mathcal{F}$


\medskip
\State \textbf{Local variables:}
\State \hskip2em $\mathsf{Map}(\mathsf{View} \to \mathsf{Boolean})$ $\mathit{helped}_i \gets \{\mathit{false}, \mathit{false}, ..., \mathit{false}\}$
\State \hskip2em $\mathsf{View}$ $\mathit{view}_i \gets 1$ \label{line:init_view}

\medskip
\State \textbf{upon} $\mathsf{propose}(v \in \mathsf{Value})$: \label{line:name_propose} \BlueComment{start participating in \name}
\State \hskip2em initialize $\mathcal{CX}(V)$ with $\mathsf{def}(p_i) = v$, for every view $V$
\State \hskip2em \textbf{invoke} $\mathcal{CX}(1).\mathsf{propose}(v)$ \BlueComment{start \block associated with view $1$ (i.e., enter view $1$)} \label{line:enter_1}

\medskip
\State \textbf{upon} $\mathcal{CX}(\mathit{view}_i).\mathsf{completed}$: \label{line:view_timer_expire} \BlueComment{current \block instance (i.e., current view) has completed}
\State \hskip2em \textbf{broadcast} $\langle \textsc{start-view}, \mathit{view}_i + 1 \rangle$ \label{line:send_complete_1} \BlueComment{start transiting to the next view}

\medskip
\State \textbf{upon} exists $\mathsf{View}$ $V$ such that $\langle \textsc{start-view}, V \rangle$ is received from $t + 1$ processes and $\mathit{helped}_i[V] = \mathit{false}$: \label{line:receive_plurality}
\State \hskip2em $\mathit{helped}_i[V] \gets \mathit{true}$
\State \hskip2em \textbf{broadcast} $\langle \textsc{start-view}, V \rangle$ \label{line:send_complete_2}



\medskip
\State \textbf{upon} exists $\mathsf{View}$ $V > \mathit{view}_i$ such that $\langle \textsc{start-view}, V \rangle$ is received from $2t + 1$ processes: \label{line:receive_quorum}
\label{line:send_complete_3_prime}
\State \hskip2em \textbf{wait for} $\mathcal{CX}(V - 1).\mathsf{validate}(v \in \mathsf{Value})$ \label{line:wait_to_validate} \BlueComment{wait for a value to propose to the new \block instance}
\State \hskip2em \textbf{invoke} $\mathcal{CX}(\mathit{view}_i).\mathsf{abandon}$ \BlueComment{stop participating in the current \block instance} \label{line:leave_v}
\State \hskip2em \textbf{invoke} $\mathcal{CX}(V).\mathsf{propose}(v)$ \BlueComment{start the new \block instance (i.e., enter new view)}  \label{line:enter_v}
\State \hskip2em $\mathit{view}_i \gets V$ \label{line:update_view_i} \BlueComment{update the current view}


\medskip
\State \textbf{upon} $\mathcal{CX}(\mathit{view}_i).\mathsf{decide}(v' \in \mathsf{Value})$: \label{line:decide_from_core} \BlueComment{decided from \block}
\State \hskip2em \textbf{invoke} $\mathcal{F}.\mathsf{to\_finish}(v')$ \label{line:to_finish_name}


\medskip 
\State \textbf{upon} $\mathcal{F}.\mathsf{finish}(v')$: \label{line:receive_finish_name}
\State \hskip2em \textbf{trigger} $\mathsf{decide}(v')$ \label{line:name_decide} \BlueComment{decide from \name}
\State \hskip2em \textbf{invoke} $\mathcal{CX}(\mathit{view}_i).\mathsf{abandon}$ \BlueComment{stop participating in the current \block instance}
\State \hskip2em \textbf{halt} \BlueComment{stop sending any messages and reacting to any received messages}  \label{line:name_stop} 


\end{algorithmic} 
\end{algorithm}

\smallskip
\noindent \textbf{Pseudocode description.}
We say that process $p_i$ \emph{enters} view $V$ once $p_i$ invokes a $\mathcal{CX}(V).\mathsf{propose}(\cdot)$ request (line~\ref{line:enter_1} or line~\ref{line:enter_v}).
Moreover, a process $p_i$ \emph{completes} view $V$ once $p_i$ receives a $\mathsf{completed}$ indication from $\mathcal{CX}(V)$ (line~\ref{line:view_timer_expire}).
Process $p_i$ keeps track of its \emph{current view} using the $\mathit{view}_i$ variable: $\mathit{view}_i$ is the last view entered by $p_i$. 
When process $p_i$ proposes to \name (line~\ref{line:name_propose}), $p_i$ forwards the proposal to $\mathcal{CX}(1)$ (line~\ref{line:enter_1}), i.e., $p_i$ enters view $1$.
Once process $p_i$ completes its current view (line~\ref{line:view_timer_expire}), $p_i$ starts transiting to the next view: process $p_i$ sends a \textsc{start-view} message for the next view (line~\ref{line:send_complete_1}), illustrating its will to enter the next view.
When $p_i$ receives $t + 1$ \textsc{start-view} messages for the same view (line~\ref{line:receive_plurality}), $p_i$ ``helps'' a transition to that view by broadcasting its own \textsc{start-view} message (line~\ref{line:send_complete_2}).
Finally, when $p_i$ receives $2t + 1$ \textsc{start-view} messages for any view $V$ greater than its current view (line~\ref{line:receive_quorum}), $p_i$ performs the following steps: (1) $p_i$ waits until it validates any value $v$ from $\mathcal{CX}(V - 1)$ (line~\ref{line:wait_to_validate}), (2) $p_i$ abandons its current (stale) view (line~\ref{line:leave_v}), (3) $p_i$ enters view $V$ with value $v$ (line~\ref{line:enter_v}), and (4) $p_i$ updates its current view to $V$ (line~\ref{line:update_view_i}).

Once process $p_i$ decides some value $v'$ from a \block instance associated with its current view (line~\ref{line:decide_from_core}), $p_i$ inputs $v'$ to the finisher primitive (line~\ref{line:to_finish_name}).
Lastly, when $p_i$ receives a $\mathsf{finish}(v^*)$ indication from the finisher primitive (line~\ref{line:receive_finish_name}), $p_i$ decides $v^*$ from \name (line~\ref{line:name_decide}) and halts (line~\ref{line:name_stop}).


\subsection{Proof of Correctness \& Complexity} We now prove the correctness and complexity of \name.




\smallskip
\noindent \textbf{Proof of correctness.} First, we show that if a correct process decides a value $v$ from $\mathcal{CX}(V)$, for any view $V$, then all correct processes that propose to $\mathcal{CX}(V')$ do propose value $v$, for any view $V' > V$.

\begin{lemma} \label{lemma:decide_from_block_propose_the_same}
Let a correct process decide a value $v$ from $\mathcal{CX}(V)$, where $V$ is any view.
If a correct process proposes a value $v'$ to $\mathcal{CX}(V')$, for any view $V' > V$, then $v' = v$. 
\end{lemma}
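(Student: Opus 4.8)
The plan is to prove the statement by induction on the view $V'$, ranging over $V' = V+1, V+2, \dots$, with the agreement and strong validity properties of \block doing the work and the structure of \name providing the link between consecutive views.

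The first step I would carry out is a simple structural observation about \name's pseudocode: any correct process that proposes a value to $\mathcal{CX}(W)$ for a view $W \geq 2$ does so at \cref{line:enter_v}, and the value it proposes is exactly a value it previously validated from $\mathcal{CX}(W-1)$, since \cref{line:wait_to_validate} blocks it until such a $\mathcal{CX}(W-1).\mathsf{validate}(\cdot)$ indication arrives (note that, by \block's interface, a correct process can validate from a \block instance even without having proposed to it, so this argument does not require $p_j$ to have participated in $\mathcal{CX}(W-1)$). Since $V' > V \geq 1$ forces $V' \geq 2$, \cref{line:enter_1} never applies in the cases we care about. With this in hand, the base case $V' = V+1$ follows immediately: a correct process proposing $v'$ to $\mathcal{CX}(V+1)$ validated $v'$ from $\mathcal{CX}(V)$, and since a correct process \emph{decided} $v$ from $\mathcal{CX}(V)$, the agreement property of \block (\Cref{theorem:block_agreement}) applied to $\mathcal{CX}(V)$ forbids any correct process from validating a value $\neq v$ there, so $v' = v$.

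For the inductive step I would assume, for some view $V'' \geq V+1$, that every correct process proposing to $\mathcal{CX}(V'')$ proposes $v$, and then deduce the same for $\mathcal{CX}(V''+1)$: a correct process proposing $v'$ to $\mathcal{CX}(V''+1)$ validated $v'$ from $\mathcal{CX}(V'')$; by the inductive hypothesis all correct proposals to $\mathcal{CX}(V'')$ equal $v$, so the strong validity property of \block (\Cref{theorem:block_strong_validity}) guarantees no correct process validates a value $\neq v$ from $\mathcal{CX}(V'')$, hence $v' = v$. Chaining this from $V+1$ up to any $V' > V$ yields the lemma.

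The only subtle point — and the one I would be most careful about — is the transition from the base case to the inductive step: at $\mathcal{CX}(V)$ we only know that \emph{some} correct process decided $v$, which is precisely the hypothesis of \emph{agreement}, whereas from $\mathcal{CX}(V+1)$ onward we instead know that \emph{all} correct proposals equal $v$, which is the hypothesis of \emph{strong validity}. Once one recognizes that the two \block properties dovetail in exactly this way, the rest of the induction is routine and requires no calculation.
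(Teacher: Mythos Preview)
Your proof is correct and follows essentially the same approach as the paper's own proof: induction on the view, with the base case at $V+1$ handled by \block's agreement property and the inductive step by \block's strong validity property. Your explicit structural observation about \cref{line:enter_v} and \cref{line:wait_to_validate}, and your remark about the switch from agreement to strong validity, are nice clarifications that the paper leaves implicit.
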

\begin{proof}
We prove the lemma by induction.

\medskip
\noindent \emph{Base step: We prove that if a correct process proposes $v'$ to $\mathcal{CX}(V + 1)$, then $v' = v$.}
\\ Let $p_i$ be any correct process that proposes $v'$ to $\mathcal{CX}(V + 1)$ (line~\ref{line:enter_v}).
Hence, $p_i$ has previously validated $v'$ from $\mathcal{CX}(V)$ (line~\ref{line:wait_to_validate}).
As a correct process decides $v$ from $\mathcal{CX}(V)$, the agreement property of $\mathcal{CX}(V)$ ensures that $v' = v$.

\medskip
\noindent \emph{Inductive step: If a correct process proposes $v'$ to $\mathcal{CX}(V')$, for some $V' > V$, then $v' = v$. We prove that if a correct process proposes $v''$ to $\mathcal{CX}(V' + 1)$, then $v'' = v$.}
\\ Let $p_i$ be any correct process that proposes $v''$ to $\mathcal{CX}(V' + 1)$ (line~\ref{line:enter_v}).
Hence, $p_i$ has previously validated $v''$ from $\mathcal{CX}(V')$ (line~\ref{line:wait_to_validate}).
Due to the inductive hypothesis, all correct processes that propose to $\mathcal{CX}(V')$ do so with value $v$.
Therefore, the strong validity property of $\mathcal{CX}(V')$ ensures that $v'' = v$.
\end{proof}

The following lemma proves that no two correct processes decide different values from (potentially different) instances of \block.

\begin{lemma} \label{lemma:decide_from_block_same}
Let a correct process $p_i$ decide a value $v_i$ from $\mathcal{CX}(V_i)$, where $V_i$ is any view.
Moreover, let another correct process $p_j$ decide a value $v_j$ from $\mathcal{CX}(V_j)$, where $V_j$ is any view.
Then, $v_i = v_j$.
\end{lemma}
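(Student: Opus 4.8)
The plan is to prove Lemma~\ref{lemma:decide_from_block_same} by a short case analysis on the relationship between $V_i$ and $V_j$, leaning entirely on Lemma~\ref{lemma:decide_from_block_propose_the_same} together with the agreement, integrity, and strong validity properties of \block (Module~\ref{mod:crux}).

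First I would dispose of the easy case $V_i = V_j$: here both $p_i$ and $p_j$ decide from the very same instance $\mathcal{CX}(V_i)$, so the agreement property of \block (established in \Cref{theorem:block_agreement}) immediately gives $v_i = v_j$. For the remaining case I would assume without loss of generality that $V_i < V_j$ (the argument is symmetric). Since $p_j$ decides $v_j$ from $\mathcal{CX}(V_j)$, the integrity property of \block guarantees that $p_j$ previously proposed to $\mathcal{CX}(V_j)$; in particular, the set of correct processes that propose to $\mathcal{CX}(V_j)$ is nonempty. Because $p_i$ decided $v_i$ from $\mathcal{CX}(V_i)$ and $V_j > V_i$, Lemma~\ref{lemma:decide_from_block_propose_the_same} tells us that \emph{every} correct process that proposes to $\mathcal{CX}(V_j)$ proposes exactly the value $v_i$. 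Thus all correct processes that propose to $\mathcal{CX}(V_j)$ do so with the common value $v_i$, and the strong validity property of \block (\Cref{theorem:block_strong_validity}) forbids any correct process from deciding (or validating) a value different from $v_i$ in $\mathcal{CX}(V_j)$. Hence $v_j = v_i$, completing the case and the lemma.

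I do not expect any real obstacle here: the lemma is essentially a corollary of Lemma~\ref{lemma:decide_from_block_propose_the_same}, and the only point requiring a moment's care is justifying that the set of correct proposers to $\mathcal{CX}(V_j)$ is nonempty so that strong validity's hypothesis is vacuously applicable in a useful way — which is exactly what \block's integrity property provides via the deciding process $p_j$. (This lemma will in turn feed directly into the agreement proof of \name, \Cref{theorem:name_agreement}, where the finisher $\mathcal{F}$ is used to propagate a single decided value; but that is outside the scope of this statement.)
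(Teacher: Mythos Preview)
Your proposal is correct and follows essentially the same approach as the paper: case split on $V_i = V_j$ (agreement of \block) versus $V_i < V_j$ (Lemma~\ref{lemma:decide_from_block_propose_the_same} plus strong validity of \block). Your extra invocation of \block's integrity to justify that the set of correct proposers to $\mathcal{CX}(V_j)$ is nonempty is a nice explicit touch that the paper leaves implicit, but otherwise the arguments coincide.
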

\begin{proof}
If $V_i = V_j$, the lemma holds due to the agreement property of $\mathcal{CX}(V_i = V_j)$.
Suppose $V_i \neq V_j$; without loss of generality, let $V_i < V_j$.
Due to \Cref{lemma:decide_from_block_propose_the_same}, all correct processes that propose to $\mathcal{CX}(V_j)$ do so with value $v_i$.
Therefore, due to the strong validity property of $\mathcal{CX}(V_j)$, $v_j = v_i$. 
\end{proof}

Next, we prove that there exists a common value $v^{\star}$ such that if a correct process invokes a $\mathcal{F}.\mathsf{to\_finish}(v)$ request, then $v = v^{\star}$.

\begin{lemma} \label{lemma:same_to_finish}
Let a correct process $p_i$ invoke a $\mathcal{F}.\mathsf{to\_finish}(v_i)$ request.
Moreover, let another correct process $p_j$ invoke a $\mathcal{F}.\mathsf{to\_finish}(v_j)$ request.
Then, $v_i = v_j$.
\end{lemma}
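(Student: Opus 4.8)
The plan is to prove Lemma~\ref{lemma:same_to_finish} by tracing each $\mathcal{F}.\mathsf{to\_finish}(\cdot)$ invocation back to its origin and invoking the already-established agreement guarantees of the \block instances. First I would observe that, by inspection of the pseudocode (\Cref{algorithm:name_pseudocode_2}), a correct process invokes $\mathcal{F}.\mathsf{to\_finish}(v)$ only at line~\ref{line:to_finish_name}, and this happens only as a reaction to a $\mathcal{CX}(\mathit{view}).\mathsf{decide}(v)$ indication at line~\ref{line:decide_from_core}. Hence every value passed to $\mathcal{F}.\mathsf{to\_finish}(\cdot)$ is a value decided by some correct process from some \block instance $\mathcal{CX}(V)$ for some view $V$.

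With that reduction in hand, the statement follows immediately from \Cref{lemma:decide_from_block_same}. Concretely: suppose $p_i$ invokes $\mathcal{F}.\mathsf{to\_finish}(v_i)$ and $p_j$ invokes $\mathcal{F}.\mathsf{to\_finish}(v_j)$. Then $p_i$ decided $v_i$ from $\mathcal{CX}(V_i)$ for some view $V_i$ (the value of $\mathit{view}_i$ at the time the decide indication fired), and $p_j$ decided $v_j$ from $\mathcal{CX}(V_j)$ for some view $V_j$. Applying \Cref{lemma:decide_from_block_same} to these two decisions yields $v_i = v_j$, which is exactly the claim. One small bookkeeping point I would make explicit is that the view in which $p_i$ receives the $\mathsf{decide}$ indication is well-defined — it is simply $\mathit{view}_i$ at that instant — and that $p_i$ is indeed running $\mathcal{CX}(\mathit{view}_i)$ at that time, so the decision genuinely comes from that instance; this is immediate from the fact that $p_i$ only ever has one active \block instance (it abandons the old one at line~\ref{line:leave_v} before entering a new one at line~\ref{line:enter_v}).

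I do not anticipate a genuine obstacle here: the lemma is a short corollary of \Cref{lemma:decide_from_block_same}, with the only work being the syntactic argument that $\mathcal{F}.\mathsf{to\_finish}(\cdot)$ values are exactly \block-decided values. The proof would be two or three sentences: identify the unique code path to $\mathsf{to\_finish}$, extract the two underlying \block decisions, and quote \Cref{lemma:decide_from_block_same}. If I wanted to be fully careful I would also note that a correct process invokes $\mathcal{F}.\mathsf{to\_finish}(\cdot)$ at most once (because after $\mathcal{F}.\mathsf{finish}$ it halts at line~\ref{line:name_stop}, and before that it can only fire line~\ref{line:decide_from_core} once per view while never re-entering a decided-upon view), which together with this lemma establishes the precondition required by the finisher primitive's specification; but that last remark is not strictly needed for the statement as written.
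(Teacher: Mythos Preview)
Your proposal is correct and follows essentially the same approach as the paper: trace each $\mathcal{F}.\mathsf{to\_finish}(\cdot)$ invocation back to a $\mathcal{CX}(V).\mathsf{decide}(\cdot)$ indication, then invoke \Cref{lemma:decide_from_block_same}. The paper's proof is exactly your ``two or three sentences'' version, without the extra bookkeeping remarks.
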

\begin{proof}
As $p_i$ invokes a $\mathcal{F}.\mathsf{to\_finish}(v_i)$ request (line~\ref{line:to_finish_name}), $p_i$ has previously decided $v_i$ from $\mathcal{CX}(V_i)$, for some view $V_i$.
Similarly, $p_j$ has decided $v_j$ from $\mathcal{CX}(V_j)$, for some view $V_j$.
Therefore, $v_i = v_j$ due to \Cref{lemma:decide_from_block_same}.
\end{proof}

We are finally ready to prove that \name satisfies agreement.

\begin{theorem} [Agreement] \label{theorem:name_agreement}
\name (\Cref{algorithm:name_pseudocode_2}) satisfies agreement.
\end{theorem}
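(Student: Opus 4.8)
The plan is to reduce \name's agreement entirely to facts already established for \block and the finisher, so the argument will be short. The key structural observation is that a correct process decides from \name \emph{only} at \cref{line:name_decide}, which fires exclusively in response to an $\mathcal{F}.\mathsf{finish}(\cdot)$ indication received at \cref{line:receive_finish_name}. Hence it suffices to prove that every $\mathsf{finish}(\cdot)$ indication obtained by a correct process carries one and the same value.

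First I would invoke \Cref{lemma:same_to_finish}, which gives a value $v^{\star}$ such that any correct process invoking a $\mathcal{F}.\mathsf{to\_finish}(\cdot)$ request does so with $v^{\star}$ (this is where the real work lives: \Cref{lemma:same_to_finish} rests on \Cref{lemma:decide_from_block_same}, which in turn combines the agreement and strong validity properties of \block across views, using the fact that \name carries a value validated from $\mathcal{CX}(V-1)$ into $\mathcal{CX}(V)$ at \cref{line:wait_to_validate,line:enter_v} — see \Cref{lemma:decide_from_block_propose_the_same}). Next, I would apply the integrity property of the finisher primitive: if a correct process receives a $\mathsf{finish}(v')$ indication, then some correct process previously invoked $\mathcal{F}.\mathsf{to\_finish}(v')$, so $v' = v^{\star}$. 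Consequently, every correct process that triggers $\mathsf{decide}(\cdot)$ at \cref{line:name_decide} decides exactly $v^{\star}$, which is the agreement property.

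I expect essentially no obstacle at this level: the heavy lifting — that no two correct processes can decide different values from (possibly distinct) \block instances even when asynchrony causes views to proliferate before GST — has already been discharged in \Cref{lemma:decide_from_block_propose_the_same,lemma:decide_from_block_same,lemma:same_to_finish}. The only two points that need a line of care are (i) confirming that \cref{line:name_decide} is the unique site at which a correct process decides from \name, and (ii) confirming that \name only ever feeds $\mathcal{F}$ a value it has decided from some \block instance (via \cref{line:decide_from_core,line:to_finish_name}), which is precisely the hypothesis under which \Cref{lemma:same_to_finish} applies. Both are immediate from inspection of \Cref{algorithm:name_pseudocode_2}.
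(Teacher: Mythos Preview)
Your proposal is correct and takes essentially the same approach as the paper: reduce to \Cref{lemma:same_to_finish} to establish the finisher's precondition, then invoke the finisher's integrity property so that every $\mathsf{finish}(\cdot)$ indication carries the common value $v^{\star}$. The paper's proof is terser (it folds the integrity step into ``$\mathcal{F}$ works according to its specification''), but your more explicit unpacking of that step, and your remarks on the unique decision site and on how \Cref{lemma:same_to_finish} is fed, are accurate and add clarity without changing the argument.
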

\begin{proof}
Suppose a correct process $p_i$ decides a value $v_i \in \mathsf{Value}$ (line~\ref{line:name_decide}).
Moreover, suppose another correct process $p_j$ decides a value $v_j \in \mathsf{Value}$ (line~\ref{line:name_decide}).
As \Cref{lemma:same_to_finish} guarantees that $\mathcal{F}$ works according to its specification, $v_i = v_j$.
\end{proof}

Next, we prove that \name satisfies external validity.

\begin{theorem} [External validity] \label{theorem:name_external_validity}
\name (\Cref{algorithm:name_pseudocode_2}) satisfies external validity.
\end{theorem}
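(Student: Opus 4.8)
The plan is to derive External Validity for \name directly from the External Validity property of \block (proved as \Cref{theorem:block_external_validity}), together with the bookkeeping that connects a \name-level decision to a \block-level decision. First I would trace the causal chain: a correct process $p_i$ triggers $\mathsf{decide}(v')$ at \Cref{line:name_decide} only in response to a $\mathcal{F}.\mathsf{finish}(v')$ indication at \Cref{line:receive_finish_name}. By the integrity property of the finisher primitive (proved for both \shortfin and \longfin), some correct process must have previously invoked $\mathcal{F}.\mathsf{to\_finish}(v')$ at \Cref{line:to_finish_name}. That invocation happens only when that process receives a $\mathcal{CX}(V).\mathsf{decide}(v')$ indication at \Cref{line:decide_from_core} for some view $V$, i.e., it decided $v'$ from the \block instance associated with view $V$.

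The key step is then to invoke the External Validity property of $\mathcal{CX}(V)$: since a correct process decides $v'$ from $\mathcal{CX}(V)$, External Validity of \block gives $\mathsf{valid}(v') = \mathit{true}$. One small thing I would check en route: the instances $\mathcal{CX}(V)$ are initialized with $\mathsf{def}(p_i) = v$ where $v$ is $p_i$'s own (valid) proposal, and every value proposed into a later \block instance is a value previously validated from the preceding \block instance; so the hypothesis that correct processes only propose valid values to \block is satisfied throughout, which is what makes \Cref{theorem:block_external_validity} applicable in the first place. This is essentially \Cref{lemma:crux_valid_only_1} reasoning lifted to the view level, but it is routine.

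Concretely, the proof would read roughly:

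\begin{proof}
Suppose a correct process $p_i$ decides a value $v'$ from \name (\Cref{line:name_decide}).
Then $p_i$ has previously received a $\mathcal{F}.\mathsf{finish}(v')$ indication (\Cref{line:receive_finish_name}).
By the integrity property of the finisher primitive, some correct process has previously invoked a $\mathcal{F}.\mathsf{to\_finish}(v')$ request (\Cref{line:to_finish_name}), which it does only upon receiving a $\mathcal{CX}(V).\mathsf{decide}(v')$ indication for some view $V$ (\Cref{line:decide_from_core}).
Since every correct process that proposes to a \block instance does so with a valid value (its initial proposal for $\mathcal{CX}(1)$, or a value validated from the preceding \block instance, which is valid by the external validity property of \block), the external validity property of $\mathcal{CX}(V)$ (\Cref{theorem:block_external_validity}) applies and yields $\mathsf{valid}(v') = \mathit{true}$.
\end{proof}

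I do not expect a genuine obstacle here — this is the easiest of \name's properties, a one-hop reduction to \block plus finisher integrity. The only mild subtlety is making sure the chain finisher-integrity $\to$ \block-decide $\to$ \block-external-validity is spelled out, and that the "correct processes propose only valid values to \block" precondition is discharged; both are immediate given the earlier lemmas.
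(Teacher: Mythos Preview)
Your proposal is correct and follows essentially the same chain as the paper: $\mathsf{decide}$ at \Cref{line:name_decide} $\Rightarrow$ $\mathcal{F}.\mathsf{finish}(v')$ $\Rightarrow$ (finisher integrity) some correct process invoked $\mathcal{F}.\mathsf{to\_finish}(v')$ $\Rightarrow$ that process decided $v'$ from some $\mathcal{CX}(V)$ $\Rightarrow$ external validity of \block gives $\mathsf{valid}(v')=\mathit{true}$. The one thing the paper makes explicit that you glossed over is the appeal to \Cref{lemma:same_to_finish}: the finisher primitive's specification assumes all correct processes call $\mathsf{to\_finish}$ with the same value, so strictly speaking you must discharge that precondition before invoking integrity; conversely, you spell out why the ``only valid proposals to \block'' precondition holds across views, which the paper leaves implicit.
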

\begin{proof}
Suppose a correct process decides a value $v$ (line~\ref{line:name_decide}).
Hence, that correct process has previously received a $\mathsf{finish}(v)$ indication from $\mathcal{F}$ (line~\ref{line:receive_finish_name}).
Moreover, as \Cref{lemma:same_to_finish} guarantees that $\mathcal{F}$ works according to its specification, a correct process has invoked a $\mathcal{F}.\mathsf{to\_finish}(v)$ request (due to the integrity property of $\mathcal{F}$) upon deciding $v$ from $\mathcal{CX}(V)$ (line~\ref{line:decide_from_core}), for some view $V$.
Therefore, due to the external validity property of $\mathcal{CX}(V)$, $v$ is valid.
\end{proof}



The following theorem proves the strong validity property of \name.

\begin{theorem} [Strong validity] \label{theorem:name_strong_validity}
\name (\Cref{algorithm:name_pseudocode_2}) satisfies strong validity.
\end{theorem}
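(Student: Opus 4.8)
The plan is to prove strong validity by induction on the view number, mirroring the structure already used for agreement (\Cref{theorem:name_agreement}) and following the proof sketch. Assume every correct process that proposes to \name does so with the same value $v$. I will establish the following invariant: for every view $V \in \mathsf{View}$, every correct process that proposes to $\mathcal{CX}(V)$ proposes $v$, and consequently no correct process decides or validates any value $v' \neq v$ from $\mathcal{CX}(V)$.

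For the base case $V = 1$: a correct process enters view $1$ only at line~\ref{line:enter_1}, where it forwards to $\mathcal{CX}(1)$ exactly the value it proposed to \name, namely $v$. Hence all correct processes that propose to $\mathcal{CX}(1)$ propose $v$, and the strong validity property of \block gives that no correct process decides or validates any $v' \neq v$ from $\mathcal{CX}(1)$.

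For the inductive step, assume the invariant holds for view $V$; I show it for $V + 1$. A correct process $p_i$ proposes a value $v''$ to $\mathcal{CX}(V+1)$ only at line~\ref{line:enter_v}, and only after it has received a $\mathcal{CX}(V).\mathsf{validate}(v'')$ indication (line~\ref{line:wait_to_validate}). By the inductive hypothesis no correct process validates any value other than $v$ from $\mathcal{CX}(V)$, so $v'' = v$. Thus all correct processes that propose to $\mathcal{CX}(V+1)$ propose $v$, and strong validity of \block again rules out any correct process deciding or validating a non-$v$ value from $\mathcal{CX}(V+1)$, completing the induction. Note that nothing requires \emph{all} correct processes to propose to $\mathcal{CX}(V)$ for $V > 1$: strong validity of \block only constrains those processes that do propose, which is exactly what the invariant tracks.

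Finally, to conclude: suppose a correct process decides $v'$ from \name (line~\ref{line:name_decide}). Then it previously received a $\mathcal{F}.\mathsf{finish}(v')$ indication (line~\ref{line:receive_finish_name}); since \Cref{lemma:same_to_finish} ensures $\mathcal{F}$ is used in accordance with its specification, the integrity property of $\mathcal{F}$ implies some correct process invoked $\mathcal{F}.\mathsf{to\_finish}(v')$, which (line~\ref{line:to_finish_name}) happens only after that process decided $v'$ from $\mathcal{CX}(V)$ for some view $V$. By the invariant, $v' = v$, so no correct process decides any value different from $v$. I do not anticipate any real obstacle: the argument is a routine induction, and the only point requiring a little care is to notice that the invariant is self-propagating precisely because the only way to enter view $V+1$ is by adopting a value validated in view $V$, together with the fact that \block guarantees strong validity (not merely agreement), which is what lets the argument go through for views in which no correct process has yet decided.
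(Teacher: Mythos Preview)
Your proposal is correct and takes essentially the same approach as the paper: both argue by induction on the view number that every correct process proposing to $\mathcal{CX}(V)$ proposes $v$ (using the strong validity of \block to carry the invariant through the $\mathsf{validate}$ step), and then trace a \name-decision back through $\mathcal{F}$'s integrity to a decision in some $\mathcal{CX}(V')$. Your write-up is slightly more explicit in folding ``no correct process decides or validates a non-$v$ value'' into the invariant and in noting that \block's strong validity applies even when not all correct processes propose, but the underlying argument is the same.
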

\begin{proof}
Suppose all correct processes propose the same value $v$ to \name.
Moreover, let a correct process $p_i$ decide some value $v'$ (line~\ref{line:name_decide}).
Hence, process $p_i$ has received a $\mathcal{F}.\mathsf{finish}(v')$ indication (line~\ref{line:receive_finish_name}).
Due to the integrity property of $\mathcal{F}$, a correct process had invoked a $\mathcal{F}.\mathsf{to\_finish}(v')$ upon deciding $v'$ from $\mathcal{CX}(V')$ (line~\ref{line:decide_from_core}), for some view $V'$.
To conclude the proof, we show by induction that all correct processes must have proposed $v$ to $\mathcal{CX}(V')$.

\medskip
\noindent \emph{Base step: We prove that if a correct process proposes $v^*$ to $\mathcal{CX}(1)$, then $v^* = v$.}
\\ The statement holds as all correct processes propose $v$ to \name.

\medskip
\noindent \emph{Inductive step: If a correct process proposes $v''$ to $\mathcal{CX}(V'')$, for some $V'' \geq 1$, then $v'' = v$. We prove that if a correct process proposes $v^*$ to $\mathcal{CX}(V'' + 1)$, then $v^* = v$.}
\\ Let $p_i$ be any correct process that proposes $v^*$ to $\mathcal{CX}(V'' + 1)$ (line~\ref{line:enter_v}).
Therefore, $p_i$ has previously validated $v^*$ from $\mathcal{CX}(V'')$ (line~\ref{line:wait_to_validate}).
Due to the inductive hypothesis, all correct processes that propose to $\mathcal{CX}(V'')$ do  so with value $v$.
Therefore, the strong validity property of $\mathcal{CX}(V'')$ ensures that $v^* = v$.

\medskip
As shown above, all correct processes propose $v$ to $\mathcal{CX}(V')$.
Therefore, $v' = v$ due to the strong validity property of $\mathcal{CX}(V')$.
\end{proof}

To prove the termination property of \name, we start by showing that if a correct process decides, all correct processes eventually decide.

\begin{lemma} \label{lemma:one_decide_all_decide}
If any correct process decides at some time $\tau$, then all correct processes decide by time $\max(\tau, \text{GST}) + 2\delta$.
\end{lemma}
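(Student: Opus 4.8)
The plan is to trace the single event that triggers a decision and push it through the finisher's totality guarantee. First I would unfold the definition: a correct process $p_k$ decides at time $\tau$ (line~\ref{line:name_decide}) only after receiving a $\mathcal{F}.\mathsf{finish}(v')$ indication from the finisher instance $\mathcal{F}$ at time $\tau$ (line~\ref{line:receive_finish_name}). So it suffices to show that every correct process receives a $\mathcal{F}.\mathsf{finish}(\cdot)$ indication by time $\max(\tau, \text{GST}) + 2\delta$, since each such process will then immediately trigger $\mathsf{decide}(\cdot)$ at line~\ref{line:name_decide}.

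The key step is invoking the \emph{totality} property of the finisher primitive (see \Cref{subsection:finisher}): if any correct process receives a $\mathsf{finish}(\cdot)$ indication at some time $\tau$, then every correct process receives a $\mathsf{finish}(\cdot)$ indication by time $\max(\tau, \text{GST}) + 2\delta$. The only subtlety is checking that $\mathcal{F}$ is actually being used within its contract, i.e., that the precondition ``if any correct process invokes $\mathsf{to\_finish}(v_1)$ and any other invokes $\mathsf{to\_finish}(v_2)$, then $v_1 = v_2$'' holds. This is exactly \Cref{lemma:same_to_finish}, which I would cite. Given that, $\mathcal{F}$ behaves according to its specification, so totality applies directly, and every correct process receives a $\mathsf{finish}(\cdot)$ indication by time $\max(\tau, \text{GST}) + 2\delta$. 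Each correct process that has not yet halted then executes lines~\ref{line:name_decide}--\ref{line:name_stop}, in particular triggering $\mathsf{decide}(\cdot)$; and a process that has already halted must have already decided (halting only occurs at line~\ref{line:name_stop}, right after deciding at line~\ref{line:name_decide}), so it decided at some time $\le \tau$ anyway.

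This lemma is essentially routine — it is a one-line reduction to the finisher's totality property plus a citation of \Cref{lemma:same_to_finish} to license that reduction. The only mild care needed is the bookkeeping around already-halted processes, but since a correct process halts only immediately after deciding, any correct process that has halted by time $\max(\tau,\text{GST})+2\delta$ has trivially already decided, so the claimed bound holds vacuously for it. I do not anticipate any real obstacle; the substantive work (agreement across views, the finisher implementations \shortfin/\longfin, and \Cref{lemma:same_to_finish} itself) has already been done in the preceding results, and this lemma is the glue that will feed into the proof of \Cref{theorem:name_termination}.
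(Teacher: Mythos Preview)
Your proposal is correct and takes essentially the same approach as the paper: both reduce the claim directly to the totality property of the finisher $\mathcal{F}$. The paper's proof is a one-liner (``follows directly from the totality property of $\mathcal{F}$''), while you additionally spell out the justification via \Cref{lemma:same_to_finish} and the halted-process bookkeeping, but these are elaborations of the same argument rather than a different route.
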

\begin{proof}
The lemma follows directly from the totality property of $\mathcal{F}$.
\end{proof}

The following lemma proves that, for any view $V$, the first $\langle \textsc{start-view}, V \rangle$ message broadcast by a correct process is broadcast at line~\ref{line:send_complete_1}.

\begin{lemma} \label{lemma:first_start_view_message}
For any view $V$, the first $\langle \textsc{start-view}, V \rangle$ message broadcast by a correct process is broadcast at line~\ref{line:send_complete_1}.
\end{lemma}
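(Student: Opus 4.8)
The claim is a standard ``first-echo-came-from-a-genuine-trigger'' argument about the view synchronizer embedded in \name. The plan is to fix an arbitrary view $V$, consider the first correct process $p_i$ that ever broadcasts $\langle \textsc{start-view}, V \rangle$, and argue by inspection of \Cref{algorithm:name_pseudocode_2} that $p_i$ cannot have broadcast this message at line~\ref{line:send_complete_2}. Concretely, a correct process executes line~\ref{line:send_complete_2} only after the guard at line~\ref{line:receive_plurality} fires, i.e.\ only after it has received $\langle \textsc{start-view}, V \rangle$ from $t+1$ distinct processes. Since at most $t$ processes are Byzantine, at least one of those $t+1$ senders is correct, so some correct process broadcast $\langle \textsc{start-view}, V \rangle$ strictly before $p_i$ did. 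This contradicts the choice of $p_i$ as the \emph{first} correct process to do so. The only other place a correct process broadcasts a $\langle \textsc{start-view}, \cdot \rangle$ message is line~\ref{line:send_complete_1}, so $p_i$ must have broadcast it there, which is exactly the statement.

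First I would make the ``first'' well-defined: among all correct processes and all times, pick the earliest event at which a correct process broadcasts $\langle \textsc{start-view}, V \rangle$; call the process $p_i$ and the time $\tau$. (If no correct process ever broadcasts such a message, the lemma is vacuously true, though in fact it is implicitly assumed that at least one does; either way the statement about ``the first'' is fine.) Then I would enumerate the two broadcast sites for $\textsc{start-view}$ messages — line~\ref{line:send_complete_1} (upon $\mathcal{CX}(\mathit{view}_i).\mathsf{completed}$) and line~\ref{line:send_complete_2} (upon the $t+1$-threshold rule at line~\ref{line:receive_plurality}) — and rule out the second. For the contradiction, the key point is that the $t+1$ received $\langle \textsc{start-view}, V \rangle$ messages must include at least one from a correct sender (by $n \geq 3t+1 > t$, or simply by the standard fact that a set of $t+1$ processes contains a correct one), and that correct sender's broadcast of $\langle \textsc{start-view}, V \rangle$ necessarily precedes $p_i$'s reception of it, hence precedes time $\tau$ — contradicting minimality of $\tau$.

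I do not anticipate a genuine obstacle here; this is a routine quorum-intersection / ``earliest honest action'' argument. The only mild subtlety worth stating cleanly is the ordering: a message received by $p_i$ by time $\tau$ was sent at some time $\tau' < \tau$ (messages are not received before being sent), and the sender being correct means this $\tau'$ is a correct-process broadcast of $\langle \textsc{start-view}, V \rangle$, so $\tau$ was not minimal. I would also note that a Byzantine process could of course send $\langle \textsc{start-view}, V \rangle$ earlier, but that is irrelevant since the lemma quantifies only over correct processes. With that, the enumeration of the two code sites and the elimination of line~\ref{line:send_complete_2} completes the proof.
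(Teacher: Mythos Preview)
Your proposal is correct and takes essentially the same approach as the paper: assume for contradiction that the first correct broadcast of $\langle \textsc{start-view}, V \rangle$ occurs at line~\ref{line:send_complete_2}, observe that the $t+1$-threshold guard at line~\ref{line:receive_plurality} forces at least one correct process to have broadcast the same message earlier, and derive a contradiction with minimality. The paper's proof is terser (it leaves the ``$t+1$ senders include a correct one'' step implicit in the phrase ``due to the rule at line~\ref{line:receive_plurality}''), but the argument is identical.
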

\begin{proof}
By contradiction, suppose the first \textsc{start-view} message for view $V$ broadcast by a correct process is broadcast at line~\ref{line:send_complete_2}; let $p_i$ be the sender of the message.
Prior to sending the message, $p_i$ has received a \textsc{start-view} message for $V$ from a correct process (due to the rule at line~\ref{line:receive_plurality}).
Therefore, we reach a contradiction.
\end{proof}

Next, we prove that if a correct process enters a view $V > 1$, view $V - 1$ was previously entered and completed by a correct process.
Recall that a correct process enters (resp., completes) some view $V^*$ if and only if that process invokes a $\mathcal{CX}(V^*).\mathsf{propose}(\cdot)$ request (resp., receives a $\mathcal{CX}(V^*).\mathsf{completed}$ indication).

\begin{lemma} \label{lemma:enter_next_previous_enter_complete}
If any correct process enters any view $V > 1$, then a correct process has previously entered and completed view $V - 1$.
\end{lemma}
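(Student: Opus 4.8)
\textbf{Proof plan for Lemma~\ref{lemma:enter_next_previous_enter_complete}.}
The plan is to argue by tracing the causal chain that brings a correct process into view $V > 1$. A correct process $p_i$ enters view $V$ only at line~\ref{line:enter_v} (line~\ref{line:enter_1} enters view $1$), and the $\mathsf{upon}$ block at line~\ref{line:receive_quorum} fires only after $p_i$ receives $\langle \textsc{start-view}, V \rangle$ from $2t+1$ processes. Among those $2t+1$ senders, at least $t+1$ are correct, so in particular at least one correct process broadcasts $\langle \textsc{start-view}, V \rangle$. By Lemma~\ref{lemma:first_start_view_message}, the \emph{first} correct process to broadcast $\langle \textsc{start-view}, V \rangle$ does so at line~\ref{line:send_complete_1}, i.e., inside the $\mathsf{upon}$ block at line~\ref{line:view_timer_expire}, which is triggered precisely by a $\mathcal{CX}(\mathit{view}_i).\mathsf{completed}$ indication. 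At the moment this correct process, call it $p_j$, executes line~\ref{line:send_complete_1} for view $V$, its local variable $\mathit{view}_j$ equals $V-1$, since it broadcasts $\langle \textsc{start-view}, \mathit{view}_j + 1\rangle$ and this message is for view $V$. Hence $p_j$ received a $\mathcal{CX}(V-1).\mathsf{completed}$ indication, which by definition means $p_j$ completed view $V-1$.

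Next I would close the loop by showing that completing view $V-1$ entails having entered it. A correct process receives a $\mathcal{CX}(V-1).\mathsf{completed}$ indication only from executing Task~1 of \block~(\Cref{algorithm:block}); by the integrity property of \block~(\Cref{theorem:block_integrity}), no correct process receives a $\mathsf{completed}$ indication from $\mathcal{CX}(V-1)$ unless it has previously proposed to $\mathcal{CX}(V-1)$. A correct process proposes to $\mathcal{CX}(V-1)$ exactly by invoking $\mathcal{CX}(V-1).\mathsf{propose}(\cdot)$ at line~\ref{line:enter_1} (if $V-1 = 1$) or line~\ref{line:enter_v} (if $V-1 > 1$), which is precisely what it means to enter view $V-1$. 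Therefore $p_j$ has both entered and completed view $V-1$ prior to broadcasting $\langle \textsc{start-view}, V\rangle$, and this broadcast in turn causally precedes $p_i$'s entry into view $V$ (since $p_i$ waits to collect $2t+1$ such messages, at least one from a correct process, and the earliest correct sender sent it at line~\ref{line:send_complete_1} after completing view $V-1$). This gives the desired temporal ordering.

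The main subtlety — and the step I would be most careful about — is the claim that the $2t+1$ $\langle \textsc{start-view}, V\rangle$ messages collected by $p_i$ include at least one that was sent by a correct process \emph{after} that correct process completed view $V-1$, rather than merely relayed at line~\ref{line:send_complete_2}. This is exactly where Lemma~\ref{lemma:first_start_view_message} does the work: it guarantees that the \emph{first} correct sender of $\langle \textsc{start-view}, V\rangle$ used line~\ref{line:send_complete_1}, and that sender's action necessarily precedes (in real time) any correct relay at line~\ref{line:send_complete_2} as well as $p_i$'s reception of $2t+1$ copies. I would phrase the argument in terms of this first correct sender to avoid any circularity, and I would not need to reason about what Byzantine senders do, since $t+1$ of the $2t+1$ senders are correct and one of them is (weakly) the first correct sender of that message. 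A minor point to state explicitly is that the $\langle \textsc{start-view}, V\rangle$ message $p_j$ broadcasts at line~\ref{line:send_complete_1} is for view $\mathit{view}_j + 1 = V$, so $\mathit{view}_j = V-1$ at that instant, pinning down exactly which view $p_j$ completed.
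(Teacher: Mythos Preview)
Your proposal is correct and follows essentially the same approach as the paper's proof: trace the $2t+1$ \textsc{start-view} messages back to at least one correct sender, invoke \Cref{lemma:first_start_view_message} to pin that first correct sender to line~\ref{line:send_complete_1} (hence completion of view $V-1$), and use \block's integrity to conclude that sender also entered view $V-1$. Your version is more explicit about why $\mathit{view}_j = V-1$ at the moment of the broadcast and about the temporal ordering, but the underlying argument is identical.
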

\begin{proof}
Let a correct process $p_i$ enter view $V > 1$ (line~\ref{line:enter_v}).
Hence, $p_i$ has previously received a \textsc{start-view} message for view $V$ from a correct process (due to the rule at line~\ref{line:receive_quorum}).
As the first correct process to broadcast such a message does so at line~\ref{line:send_complete_1} (by \Cref{lemma:first_start_view_message}), that process has previously completed view $V - 1$ (line~\ref{line:view_timer_expire}).
Moreover, due to the integrity property of $\mathcal{CX}(V - 1)$, that correct process had entered view $V - 1$ prior to $p_i$ entering view $V$.
\end{proof}

The following lemma proves that if no correct process ever decides from \name, every view is eventually entered by a correct process.

\begin{lemma} \label{lemma:no_decision_every_view}
If no correct process ever decides, then every view is eventually entered by a correct process.
\end{lemma}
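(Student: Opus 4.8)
If no correct process ever decides, then every view is eventually entered by a correct process.

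The plan is to prove this by strong induction on the view number $V$, using the completion and termination machinery already established for \block together with the double-echo amplification of the view synchronizer. The base case $V = 1$ is immediate: every correct process enters view $1$ upon proposing to \name (line~\ref{line:enter_1}), and at least one correct process proposes (indeed all do). For the inductive step, assume every view $V' \leq V$ has been entered by a correct process; I want to show view $V+1$ is eventually entered by a correct process.

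First I would argue that \emph{some} correct process completes view $V$. Suppose, toward a contradiction, that no correct process ever completes view $V$. Consider the set $C_V$ of correct processes that enter view $V$ (nonempty by the inductive hypothesis). A correct process in $C_V$ leaves view $V$ only by (i) deciding from \name and halting (excluded by hypothesis, since no correct process ever decides), or (ii) receiving $2t+1$ $\langle\textsc{start-view}, V'\rangle$ messages for some $V' > V$ and jumping ahead (line~\ref{line:receive_quorum}). For case (ii) to occur, at least $t+1$ correct processes must have broadcast $\langle\textsc{start-view}, V'\rangle$ for some $V' > V$; by \Cref{lemma:first_start_view_message} the first such broadcast happens at line~\ref{line:send_complete_1}, meaning a correct process completed view $V' - 1 \geq V$ — and by \Cref{lemma:enter_next_previous_enter_complete} applied repeatedly (or directly), completing a view $\geq V$ forces a correct process to have completed view $V$ itself (since to enter view $V'-1 > V$ some correct process completed view $V'-2$, and so on down to completing $V$). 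This contradicts the assumption that no correct process completes view $V$. Hence every correct process in $C_V$ either completes view $V$ or stays in view $V$ forever without completing it. In the latter situation, all correct processes that entered view $V$ remain in \block instance $\mathcal{CX}(V)$ without abandoning it; since \block satisfies \emph{termination} (\Cref{theorem:block_termination}) — if all correct processes that participate do so without abandoning, every correct process eventually receives a $\mathsf{completed}$ indication — we would need all correct processes to actually propose to $\mathcal{CX}(V)$. The subtle point here, which I expect to be the main obstacle, is that not all correct processes need enter view $V$: some slow correct process might still be lagging in an earlier view. I would handle this by noting that termination of \block only needs the participating processes not to abandon, but \block's termination as stated requires \emph{all} correct processes to propose — so I would instead invoke \block's \emph{completion} behavior through the simulation timeout: each step of $\mathcal{CX}(V)$ (the two graded-consensus instances and the $\mathcal{A}^S$ simulation) runs for a bounded, locally-measured duration, and $\mathcal{VB}$ is the only remaining component; I would argue that once the lagging processes catch up (which they do, by the inductive hypothesis that all earlier views get entered, combined with totality of the $\mathcal{VB}$ in earlier views feeding the $\mathsf{wait for}\ \mathcal{CX}(V-1).\mathsf{validate}$ at line~\ref{line:wait_to_validate}), all correct processes eventually enter view $V$ and, not abandoning, complete it by \Cref{theorem:block_termination}.

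Once some correct process $p$ completes view $V$, it broadcasts $\langle\textsc{start-view}, V+1\rangle$ at line~\ref{line:send_complete_1}. I then invoke the standard Bracha-style amplification: this single broadcast is received by all $n - t \geq t+1$ correct processes, so each of them eventually satisfies the rule at line~\ref{line:receive_plurality} (unless it already broadcast $\langle\textsc{start-view}, V+1\rangle$ for another reason) and broadcasts $\langle\textsc{start-view}, V+1\rangle$ at line~\ref{line:send_complete_2}. Thus all $n - t \geq 2t+1$ correct processes broadcast $\langle\textsc{start-view}, V+1\rangle$, so every correct process eventually receives $2t+1$ such messages; any correct process whose current view is still $< V+1$ then reaches the rule at line~\ref{line:receive_quorum}, waits for $\mathcal{CX}(V).\mathsf{validate}(\cdot)$ (which arrives by the totality of $\mathcal{VB}$ inside $\mathcal{CX}(V)$, since $p$ completed $\mathcal{CX}(V)$), and enters view $V+1$ at line~\ref{line:enter_v}. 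Since not all correct processes can already have current view $\geq V+1$ without one of them having entered $V+1$ (by \Cref{lemma:enter_next_previous_enter_complete} going up from $V+1$), in either case a correct process enters view $V+1$, completing the induction.

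I expect the chief difficulty to be cleanly discharging the ``view $V$ eventually completes'' sub-claim without circularity: the argument that all correct processes enter view $V$ relies on the inductive hypothesis about earlier views being entered \emph{and completed}, so I would strengthen the induction hypothesis to ``every view $V' \leq V$ is eventually entered \emph{and completed} by every correct process that does not decide,'' which makes the chain $\mathcal{CX}(V'-1).\mathsf{completed} \Rightarrow$ totality gives $\mathcal{CX}(V'-1).\mathsf{validate} \Rightarrow$ line~\ref{line:wait_to_validate} unblocks $\Rightarrow$ entry into view $V'$ go through uniformly, and then \Cref{theorem:block_termination} closes each step.
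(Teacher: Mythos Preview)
Your amplification step contains a concrete error. You write that once a single correct process $p$ completes view $V$ and broadcasts $\langle\textsc{start-view}, V{+}1\rangle$, ``this single broadcast is received by all $n-t \geq t+1$ correct processes, so each of them eventually satisfies the rule at line~\ref{line:receive_plurality}.'' But that rule fires only upon receiving the message \emph{from} $t+1$ distinct senders; one broadcast gives each recipient exactly one such message, which triggers nothing. From ``some process completes $V$'' alone you therefore cannot conclude that $V{+}1$ is entered; you need at least $t+1$ correct processes to have sent $\langle\textsc{start-view}, V{+}1\rangle$, and in this protocol the only way a correct process does so at line~\ref{line:send_complete_1} is by first completing view $V$ itself --- so you really need \emph{all} correct processes to complete $V$. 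Your proposed remedy of strengthening the induction hypothesis to ``every correct process enters and completes every view $V' \leq V$'' fails too: line~\ref{line:receive_quorum} lets a process jump directly from its current view to any higher view for which it holds $2t+1$ \textsc{start-view} messages, skipping all intermediate views, and nothing in the no-decide hypothesis prevents this. That strengthened statement is simply false.

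The paper sidesteps both difficulties by arguing by contradiction: let $V{+}1$ be the \emph{smallest} view never entered. By \Cref{lemma:enter_next_previous_enter_complete}, no view greater than $V$ is ever entered either. This single observation buys two things at once: no correct process in view $V$ can abandon it by jumping ahead, and no lagging process can overshoot $V$. Since $V$ itself \emph{is} entered, some process already received $2t+1$ $\langle\textsc{start-view}, V\rangle$ messages, hence $t+1$ correct processes have sent that message, and now the amplification at line~\ref{line:receive_plurality} genuinely fires, so every correct process broadcasts $\langle\textsc{start-view}, V\rangle$. Every correct process then receives $2t+1$ of them and (with the wait at line~\ref{line:wait_to_validate} discharged by totality of $\mathcal{CX}(V{-}1)$, which \Cref{lemma:enter_next_previous_enter_complete} says was completed) enters $V$. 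With all correct processes in $V$ and none abandoning or halting, termination of $\mathcal{CX}(V)$ now legitimately applies and \emph{all} of them complete $V$, each broadcasting $\langle\textsc{start-view}, V{+}1\rangle$; this yields $2t+1$ such messages and hence entry into $V{+}1$, the desired contradiction.
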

\begin{proof}
By contradiction, suppose this is not the case.
Let $V + 1$ be the smallest view that is not entered by any correct process.
As each correct process initially enters view $1$ (line~\ref{line:enter_1}), $V + 1 \geq 2$.
Moreover, by \Cref{lemma:enter_next_previous_enter_complete}, no correct process enters any view greater than $V + 1$.
Lastly, as no correct process enters any view greater than $V$, the $\mathit{view}_i$ variable cannot take any value greater than $V$ at any correct process $p_i$.
We prove the lemma through a sequence of intermediate results.

\medskip
\noindent \emph{Step 1.} \emph{If $V > 1$, then every correct process $p_i$ eventually broadcasts a $\langle \textsc{start-view}, V \rangle$ message.}
\\ Let $p_j$ be any correct process that enters view $V > 1$; such a process exists as $V$ is entered by a correct process.
Prior to entering view $V$ (line~\ref{line:enter_v}), $p_j$ has received $2t + 1$ $\langle \textsc{start-view}, V \rangle$ messages (due to the rule at line~\ref{line:receive_quorum}), out of which (at least) $t + 1$ are sent by correct processes.
Therefore, every correct process eventually receives the aforementioned $t + 1$ \textsc{start-view} messages (line~\ref{line:receive_plurality}), and broadcasts a $\langle \textsc{start-view}, V \rangle$ message at line~\ref{line:send_complete_2} (if it has not previously done so).

\medskip
\noindent \emph{Step 2.} \emph{Every correct process $p_i$ eventually enters view $V$.}
\\ If $V = 1$, the statement of the lemma holds as every correct process enters view $1$ (line~\ref{line:enter_1}) immediately upon starting. 

Hence, let $V > 1$.
By the statement of the first step, every correct process eventually broadcasts a $\langle \textsc{start-view}, V \rangle$ message.
Therefore, every correct process $p_i$ eventually receives $2t + 1$ $\langle \textsc{start-view}, V \rangle$ messages.
When this happens, there are two possibilities:
\begin{compactitem}
    \item Let $\mathit{view}_i < V$: In this case, the rule at line~\ref{line:receive_quorum} activates.
    Moreover, as view $V - 1$ has been completed by a correct process (by \Cref{lemma:enter_next_previous_enter_complete}), the totality property of $\mathcal{CX}(V - 1)$ ensures that $p_i$ eventually validates a value from $\mathcal{CX}(V - 1)$ (line~\ref{line:wait_to_validate}).
    Therefore, $p_i$ indeed enters $V$ in this case (line~\ref{line:enter_v}).

    \item Let $\mathit{view}_i = V$: In this case, $p_i$ has already entered view $V$.
\end{compactitem}

\medskip
\noindent \emph{Epilogue.} Due to the statement of the second step, every correct process eventually enters view $V$.
Moreover, no correct process ever abandons view $V$ (i.e., invokes $\mathcal{CX}(V).\mathsf{abandon}$ at line~\ref{line:leave_v}) as no correct process ever enters a view greater than $V$ (or halts).
The termination property of $\mathcal{CX}(V)$ ensures that every correct process eventually completes view $V$ (line~\ref{line:view_timer_expire}), and broadcasts a $\langle \textsc{start-view}, V + 1 \rangle$ message (line~\ref{line:send_complete_1}).
Therefore, every correct process eventually receives $n - t \geq 2t + 1$ $\langle \textsc{start-view}, V + 1 \rangle$ messages.
When that happens, (1) the rule at line~\ref{line:receive_quorum} activates at every correct process $p_i$ as $\mathit{view}_i < V + 1$, (2) $p_i$ eventually validates a value from $\mathcal{CX}(V)$ (line~\ref{line:wait_to_validate}) due to the totality property of $\mathcal{CX}(V)$ (recall that view $V$ is completed by a correct process), and (3) $p_i$ enters view $V + 1$ (line~\ref{line:enter_v}).
This represents a contradiction with the fact that view $V + 1$ is never entered by any correct process, which concludes the proof of the lemma.
\end{proof}


We now define the set of views that are entered by a correct process.

\begin{definition} [Entered views]
Let $\mathcal{V} = \{V \in \mathsf{View} \,|\, V \text{ is entered by a correct process}\}$.
\end{definition}

Moreover, we define the first time any correct process enters any view $V \in \mathcal{V}$.

\begin{definition} [First-entering time] \label{definition:first_time}
For any view $V \in \mathcal{V}$, $\tau_V$ denotes the time at which the first correct process enters $V$.
\end{definition}


Finally, we define the smallest view that is entered by every correct process at or after GST.

\begin{definition} [View $\vfinal$] \label{definition:vfinal}
We denote by $\vfinal$ the smallest view that belongs to $\mathcal{V}$ for which $\tau_{\vfinal} \geq \text{GST}$.
If such a view does not exist, then $\vfinal = \bot$.
\end{definition}

Recall that $\Delta_\mathit{total}$ is the minimum time a process has to spend on \block before completing it. Next, observe that, as all correct processes start executing \name before GST, $\vfinal > 1$.
The following lemma proves that no correct process enters any view greater than $\vfinal$ by time $\tau_{\vfinal} + \Delta_{\mathit{total}}$ (if $\vfinal \neq \bot$).

\begin{lemma} \label{lemma:time_greater_than_final}
Let $\vfinal \neq \bot$.
For any view $V \in \mathcal{V}$ such that  $V > \vfinal$, $\tau_V > \tau_{\vfinal} + \Delta_{\mathit{total}} > \tau_{\vfinal} + 2\delta$.
\end{lemma}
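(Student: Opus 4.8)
\textbf{Proof plan for Lemma~\ref{lemma:time_greater_than_final}.}
The goal is to show that once the first correct process enters the special view $\vfinal$ (the smallest view entered at or after GST), no correct process can enter any strictly larger view until at least $\Delta_{\mathit{total}}$ time has elapsed. The plan is to trace backwards through the view-entry mechanism: entering a view $V > \vfinal$ requires $2t+1$ $\langle\textsc{start-view},V\rangle$ messages (line~\ref{line:receive_quorum}), so at least one correct process must have broadcast such a message, and by \Cref{lemma:first_start_view_message} the \emph{first} correct process to do so broadcasts it at line~\ref{line:send_complete_1}, i.e., only after \emph{completing} view $V-1$. Since $V-1 \geq \vfinal$, and by \Cref{lemma:enter_next_previous_enter_complete} a chain of views $\vfinal, \vfinal+1, \dots, V-1$ must each have been entered and completed by correct processes in order, the crux is to lower-bound the time it takes to complete view $\vfinal$ itself.

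Concretely, I would argue as follows. Let $p$ be the first correct process to enter $\vfinal$; it does so at time $\tau_{\vfinal} \geq \text{GST}$ (by definition of $\vfinal$). Completing $\vfinal$ means receiving a $\mathcal{CX}(\vfinal).\mathsf{completed}$ indication. By the completion time property of \block (\Cref{theorem:block_completion_time}), since the earliest any correct process proposes to $\mathcal{CX}(\vfinal)$ is at time $\geq \tau_{\vfinal} \geq \text{GST}$, no correct process receives a $\mathsf{completed}$ indication from $\mathcal{CX}(\vfinal)$ before time $\tau_{\vfinal} + \Delta_{\mathit{total}}$. Hence the first correct process to broadcast $\langle\textsc{start-view},\vfinal+1\rangle$ (at line~\ref{line:send_complete_1}) does so strictly after $\tau_{\vfinal} + \Delta_{\mathit{total}}$. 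Because no correct process will broadcast $\langle\textsc{start-view},\vfinal+1\rangle$ at line~\ref{line:send_complete_2} before some correct process has already broadcast it at line~\ref{line:send_complete_1} (\Cref{lemma:first_start_view_message}), \emph{all} $\langle\textsc{start-view},\vfinal+1\rangle$ messages from correct processes are sent after $\tau_{\vfinal}+\Delta_{\mathit{total}}$. Therefore no correct process can gather $2t+1$ such messages (which requires at least $t+1$ from correct processes) before $\tau_{\vfinal}+\Delta_{\mathit{total}}$, so $\tau_{\vfinal+1} > \tau_{\vfinal} + \Delta_{\mathit{total}}$. For a general $V > \vfinal$, entering $V$ requires (by \Cref{lemma:enter_next_previous_enter_complete}, applied inductively) that view $\vfinal+1$ was entered earlier, so $\tau_V \geq \tau_{\vfinal+1} > \tau_{\vfinal} + \Delta_{\mathit{total}}$.

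Finally, to get the second inequality $\tau_{\vfinal}+\Delta_{\mathit{total}} > \tau_{\vfinal}+2\delta$, I would simply invoke the definition of $\Delta_{\mathit{total}}$ from line~\ref{line:delta_total} of \Cref{algorithm:block}: $\Delta_{\mathit{total}} = (\Delta_{\mathit{shift}} + \Delta_1) + (\mathcal{R}\cdot\Delta_{\mathit{sync}}) + (\Delta_{\mathit{shift}} + \Delta_2)$, and since \name instantiates \block with $\Delta_{\mathit{shift}} = 2\delta$ and all other summands are nonnegative (indeed positive, as $\mathcal{R} \geq 1$ and the $\Delta_i$ are latency-times-$\delta$ quantities), we have $\Delta_{\mathit{total}} \geq 2\Delta_{\mathit{shift}} = 4\delta > 2\delta$.

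\textbf{Main obstacle.} The only real subtlety is making the appeal to \Cref{theorem:block_completion_time} airtight: that theorem is stated per-process ("if a correct process $p_i$ proposes at time $\tau \geq \text{GST}$, then $p_i$ does not receive $\mathsf{completed}$ by $\tau + \Delta_{\mathit{total}}$"), so I need to make sure that the \emph{first} correct process to complete $\vfinal$ — whoever that is — also \emph{proposed} to $\mathcal{CX}(\vfinal)$ at a time $\geq \text{GST}$. This follows because $\tau_{\vfinal}$ is defined as the first time \emph{any} correct process enters $\vfinal$, so every correct process that enters $\vfinal$ does so at time $\geq \tau_{\vfinal} \geq \text{GST}$; and by the integrity property of \block, a correct process only receives a $\mathsf{completed}$ indication from $\mathcal{CX}(\vfinal)$ if it previously proposed to (entered) $\mathcal{CX}(\vfinal)$. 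Combining these gives exactly the hypothesis needed for \Cref{theorem:block_completion_time}. The inductive step invoking \Cref{lemma:enter_next_previous_enter_complete} is routine, and the arithmetic with $\Delta_{\mathit{total}}$ is immediate once $\Delta_{\mathit{shift}} = 2\delta$ is recalled.
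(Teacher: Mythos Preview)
Your proposal is correct and takes essentially the same approach as the paper's proof. The paper's version is terser: it directly invokes \Cref{lemma:enter_next_previous_enter_complete} (which already encapsulates the \textsc{start-view} chain reasoning you spell out via \Cref{lemma:first_start_view_message}) and simply notes $\Delta_{\mathit{total}} > \Delta_{\mathit{shift}} = 2\delta$, but your more explicit treatment of the per-process hypothesis in \Cref{theorem:block_completion_time} and the arithmetic on $\Delta_{\mathit{total}}$ fills in details the paper leaves implicit.
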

\begin{proof}
For view $\vfinal + 1$ to be entered by a correct process, there must exist a correct process that has previously completed view $\vfinal$ (by \Cref{lemma:enter_next_previous_enter_complete}).
As $\tau_{\vfinal} \geq \text{GST}$, the completion time property of $\mathcal{CX}(\vfinal)$ ensures that no correct process completes view $\vfinal$ by time $\tau_{\vfinal} + \Delta_{\mathit{total}}$.
Therefore, $\tau_{\vfinal + 1} > \tau_{\vfinal} + \Delta_{\mathit{total}}$.
Moreover, due to \Cref{lemma:enter_next_previous_enter_complete}, $\tau_V > \tau_{\vfinal} + \Delta_{\mathit{total}}$, for any view $V > \vfinal + 1$. Finally, let us note that $\Delta_{\mathit{total}} > \Delta_{\mathit{shift}} = 2\delta$.
\end{proof}

Assuming that no correct process decides by time $\tau_{V_{\mathit{final}}} + \Delta_{\mathit{total}}$ and $\vfinal \neq \bot$, every correct process decides from $\mathcal{CX}(\vfinal)$ by time $\tau_{V_{\mathit{final}}} + \Delta_{\mathit{total}}$.

\begin{lemma} \label{lemma:enter_vfinal_2_delta}
Let $\vfinal \neq \bot$ and let no correct process decide by time $\tau_{\vfinal} + \Delta_{\mathit{total}}$.
Then, every correct process decides the same value from $\mathcal{CX}(\vfinal)$ by time $\tau_{\vfinal} + \Delta_{\mathit{total}}$.
\end{lemma}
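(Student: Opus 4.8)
The plan is to show that under the stated hypotheses, the precondition $\mathcal{S}^*$ of the synchronicity property of $\mathcal{CX}(\vfinal)$ holds, and then invoke that property directly. Recall the synchronicity property of \block requires three things: (1) the first correct process to propose to $\mathcal{CX}(\vfinal)$ does so at some time $\tau \geq \text{GST}$, (2) all correct processes propose to $\mathcal{CX}(\vfinal)$ by time $\tau + \Delta_{\mathit{shift}}$, and (3) no correct process abandons $\mathcal{CX}(\vfinal)$ by time $\tau + \Delta_{\mathit{total}}$. The time $\tau$ here is exactly $\tau_{\vfinal}$ by \Cref{definition:first_time}, so condition (1) is immediate from the definition of $\vfinal$ (\Cref{definition:vfinal}).

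For condition (2), I would argue that the view synchronization mechanism brings all correct processes into view $\vfinal$ within $\Delta_{\mathit{shift}} = 2\delta$ of each other after GST. Concretely: some correct process enters $\vfinal$ at $\tau_{\vfinal}$, and by the rule at \cref{line:enter_v} it did so upon receiving $2t+1$ $\langle \textsc{start-view}, \vfinal \rangle$ messages; at least $t+1$ of these come from correct processes, who sent them at or before $\tau_{\vfinal}$. Since $\tau_{\vfinal} \geq \text{GST}$, every correct process receives those $t+1$ messages by time $\tau_{\vfinal} + \delta$, triggers the amplification rule at \cref{line:receive_plurality} (if it has not already broadcast), so all correct processes have broadcast $\langle \textsc{start-view}, \vfinal \rangle$ by $\tau_{\vfinal} + \delta$, and hence every correct process receives $n - t \geq 2t+1$ such messages by $\tau_{\vfinal} + 2\delta$. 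A correct process that has not yet entered $\vfinal$ then activates the rule at \cref{line:receive_quorum}; it needs to validate a value from $\mathcal{CX}(\vfinal - 1)$ at \cref{line:wait_to_validate}. Here I would use that $\vfinal - 1$ was completed by some correct process (via \Cref{lemma:enter_next_previous_enter_complete}, which applies since $\vfinal > 1$), and the totality property of $\mathcal{CX}(\vfinal - 1)$ to argue the wait terminates quickly — combined with \Cref{lemma:time_greater_than_final} to handle the timing, so that the process does not get pushed past $\vfinal$ before entering it. Careful bookkeeping shows every correct process enters $\vfinal$ by $\tau_{\vfinal} + 2\delta = \tau_{\vfinal} + \Delta_{\mathit{shift}}$; I should also note no correct process that is already in a view $> \vfinal$ exists before this time, which follows from \Cref{lemma:time_greater_than_final}.

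For condition (3), I would argue by contradiction: a correct process abandons $\mathcal{CX}(\vfinal)$ only at \cref{line:leave_v} (entering a higher view) or at \cref{line:name_stop} (after deciding). By \Cref{lemma:time_greater_than_final}, no correct process enters a view $> \vfinal$ before $\tau_{\vfinal} + \Delta_{\mathit{total}}$, ruling out the first case; and by the hypothesis of the lemma, no correct process decides by $\tau_{\vfinal} + \Delta_{\mathit{total}}$, ruling out the second. With all three conditions of $\mathcal{S}^*$ established, the synchronicity property of $\mathcal{CX}(\vfinal)$ yields that every correct process decides (from $\mathcal{CX}(\vfinal)$) by time $\tau_{\vfinal} + \Delta_{\mathit{total}}$, and the agreement property of $\mathcal{CX}(\vfinal)$ guarantees they all decide the same value.

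I expect the main obstacle to be condition (2) — specifically, pinning down precisely that every correct process enters $\vfinal$ (rather than being dragged to a later or earlier view) within $2\delta$ of $\tau_{\vfinal}$, and that the $\mathcal{CX}(\vfinal - 1).\mathsf{validate}$ wait at \cref{line:wait_to_validate} does not introduce extra delay. This requires combining the totality property of $\mathcal{CX}(\vfinal - 1)$ with the fact that $\vfinal - 1$ was completed by a correct process at or before $\tau_{\vfinal}$; one must check that the completion of $\vfinal - 1$ happened early enough (before GST is fine, or after) that totality gives validation by $\max(\text{completion time}, \text{GST}) + 2\delta \leq \tau_{\vfinal} + 2\delta$. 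The remaining conditions (1) and (3) are comparatively routine, resting on \Cref{definition:vfinal,lemma:time_greater_than_final} and the lemma's hypothesis.
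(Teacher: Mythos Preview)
Your proposal is correct and follows essentially the same approach as the paper: establish the three preconditions of \block's synchronicity property for $\mathcal{CX}(\vfinal)$ (via the $2t{+}1$/$t{+}1$ \textsc{start-view} amplification argument, \Cref{lemma:enter_next_previous_enter_complete}, totality of $\mathcal{CX}(\vfinal-1)$, and \Cref{lemma:time_greater_than_final}), then invoke synchronicity and agreement. You have also correctly identified the only delicate point, namely that the wait at \cref{line:wait_to_validate} resolves by $\tau_{\vfinal}+2\delta$ because some correct process completed view $\vfinal-1$ by time $\tau_{\vfinal}$ and $\tau_{\vfinal}\geq\text{GST}$.
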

\begin{proof}
We prove the lemma through a sequence of intermediate steps.

\medskip
\noindent \emph{Step 1.} \emph{Every correct process enters view $\vfinal$ by time $\tau_{\vfinal} + 2\delta$.}
\\ Recall that $\vfinal > 1$.
Let $p_i$ be the correct process that enters view $\vfinal$ (line~\ref{line:enter_v}) at time $\tau_{\vfinal} \geq \text{GST}$.
Therefore, $p_i$ has received $2t + 1$ $\langle \textsc{start-view}, \vfinal \rangle$ messages (due to the rule at line~\ref{line:receive_quorum}) by time $\tau_{\vfinal}$.
Among the aforementioned $2t + 1$ \textsc{start-view} messages, at least $t + 1$ are broadcast by correct processes.
Note that \Cref{lemma:enter_next_previous_enter_complete} shows that some correct process $p_l$ has completed view $\vfinal - 1$ by time $\tau_{\vfinal}$.

Now consider any correct process $p_j$.
We prove that $p_j$ broadcasts a \textsc{start-view} message for view $\vfinal$ by time $\tau_{\vfinal} + \delta$.
Indeed, by time $\tau_{\vfinal} + \delta$, $p_j$ receives $t + 1$ $\langle \textsc{start-view}, \vfinal \rangle$ messages (line~\ref{line:receive_plurality}), and broadcasts a $\langle \textsc{start-view}, \vfinal \rangle$ message (line~\ref{line:send_complete_2}) assuming that it has not already done so.

As we have proven, all correct processes broadcast a \textsc{start-view} message for view $\vfinal$ by time $\tau_{\vfinal} + \delta$.
Therefore, every correct process $p_k$ receives $2t + 1$ $\langle \textsc{start-view}, \vfinal \rangle$ messages by time $\tau_{\vfinal} + 2\delta$.
Importantly, when this happens, the rule at line~\ref{line:receive_quorum} activates at process $p_k$ (unless $p_k$ has already entered view $\vfinal$) as the value of the $\mathit{view}_k$ variable cannot be greater than $\vfinal$ due to \Cref{lemma:time_greater_than_final} and the fact that $\Delta_{\mathit{total}} > 2\delta$.
Moreover, due to the totality property of $\mathcal{CX}(\vfinal - 1)$, $p_k$ validates a value from $\mathcal{CX}(\vfinal - 1)$ by time $\tau_{\vfinal} + 2\delta$ (line~\ref{line:wait_to_validate}); recall that some correct process $p_l$ has completed view $\vfinal - 1$ by time $\tau_{\vfinal}$.
Therefore, $p_k$ indeed enters view $\vfinal$ by time $\tau_{\vfinal} + 2\delta$ (line~\ref{line:enter_v}).

\medskip
\noindent \emph{Step 2.} \emph{No correct process abandons view $\vfinal$ by time $\tau_{\vfinal} + \Delta_{\mathit{total}}$.}
\\ As no correct process decides by time $\tau_{\vfinal} + \Delta_{\mathit{total}}$, no correct process halts by time $\tau_{\vfinal} + \Delta_{\mathit{total}}$.
Moreover, no correct process enters any view greater than $\vfinal$ by time $\tau_{\vfinal} + \Delta_{\mathit{total}}$ (due to \Cref{lemma:time_greater_than_final}).
Therefore, the statement holds.

\medskip
\noindent \emph{Epilogue.} Due to the aforementioned two intermediate steps, the precondition of the synchronicity property of $\mathcal{CX}(\vfinal)$ is fulfilled.
Therefore, the synchronicity and agreement properties of $\mathcal{CX}(\vfinal)$ directly imply the lemma.
\end{proof}

We are finally ready to prove the termination property of \name.

\begin{theorem} [Termination] \label{theorem:name_termination}
\name (\Cref{algorithm:name_pseudocode_2}) satisfies termination.
Concretely, if $\vfinal \neq \bot$, every correct process decides by time $\tau_{\vfinal} + \Delta_{\mathit{total}} + 2\delta$.
\end{theorem}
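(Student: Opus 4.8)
The plan is to prove the Termination theorem by combining the structural lemmas already established (\Cref{lemma:one_decide_all_decide}, \Cref{lemma:no_decision_every_view}, \Cref{lemma:time_greater_than_final}, \Cref{lemma:enter_vfinal_2_delta}) with a case analysis on whether some correct process decides ``early'' relative to the window $[\tau_{\vfinal}, \tau_{\vfinal} + \Delta_{\mathit{total}}]$.

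First I would handle the trivial case: if $\vfinal = \bot$ is \emph{not} the hypothesis (the theorem is only claimed when $\vfinal \neq \bot$), so I may assume $\vfinal \neq \bot$. I still need to argue that $\vfinal$ is well-defined under the theorem's hypotheses, but since the statement is conditional on $\vfinal \neq \bot$, I can take this as given. Next I would split on the following dichotomy. \textbf{Case 1:} some correct process decides at some time $\tau \leq \tau_{\vfinal} + \Delta_{\mathit{total}}$. Then by \Cref{lemma:one_decide_all_decide}, all correct processes decide by time $\max(\tau, \text{GST}) + 2\delta$. Since $\tau_{\vfinal} \geq \text{GST}$ (by definition of $\vfinal$) and $\tau \leq \tau_{\vfinal} + \Delta_{\mathit{total}}$, we get $\max(\tau,\text{GST}) \leq \tau_{\vfinal} + \Delta_{\mathit{total}}$, hence all correct processes decide by $\tau_{\vfinal} + \Delta_{\mathit{total}} + 2\delta$, as required. \textbf{Case 2:} no correct process decides by time $\tau_{\vfinal} + \Delta_{\mathit{total}}$. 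Then the hypothesis of \Cref{lemma:enter_vfinal_2_delta} is met, so every correct process decides the same value $v^\star$ from $\mathcal{CX}(\vfinal)$ (line~\ref{line:decide_from_core}) by time $\tau_{\vfinal} + \Delta_{\mathit{total}}$. Upon this, each correct process invokes $\mathcal{F}.\mathsf{to\_finish}(v^\star)$ (line~\ref{line:to_finish_name}) by time $\tau_{\vfinal} + \Delta_{\mathit{total}}$.

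Then I would invoke the Termination property of the finisher primitive $\mathcal{F}$: letting $\tau$ be the first time by which all correct processes have invoked $\mathcal{F}.\mathsf{to\_finish}(\cdot)$, we have $\tau \leq \tau_{\vfinal} + \Delta_{\mathit{total}}$, so every correct process receives a $\mathcal{F}.\mathsf{finish}(\cdot)$ indication by time $\max(\tau,\text{GST}) + 2\delta \leq \tau_{\vfinal} + \Delta_{\mathit{total}} + 2\delta$. Upon receiving this indication (line~\ref{line:receive_finish_name}), each correct process triggers $\mathsf{decide}(\cdot)$ (line~\ref{line:name_decide}). Hence every correct process decides by time $\tau_{\vfinal} + \Delta_{\mathit{total}} + 2\delta$ in both cases, establishing the theorem. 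I would also remark that one needs $\mathcal{F}$ to be invoked correctly — i.e. all correct processes invoke $\mathsf{to\_finish}$ with the same value — which is exactly \Cref{lemma:same_to_finish}, so $\mathcal{F}$ behaves according to its specification. Note also that in Case 2 we must confirm no correct process halts before reaching the finisher step; but halting only happens \emph{after} deciding (line~\ref{line:name_stop}), and by assumption no decision occurs before $\tau_{\vfinal} + \Delta_{\mathit{total}}$, so this is consistent — and even if some process decides in the small window between $\tau_{\vfinal} + \Delta_{\mathit{total}}$ and the finisher's completion, \Cref{lemma:one_decide_all_decide} still closes the gap within $2\delta$.

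The main obstacle, I expect, is not any single deep step but rather the bookkeeping of the time bounds: making sure the $\max(\cdot, \text{GST})$ terms collapse correctly given $\tau_{\vfinal} \geq \text{GST}$, and that the two cases genuinely partition all possibilities and each yields the \emph{same} final bound $\tau_{\vfinal} + \Delta_{\mathit{total}} + 2\delta$. A secondary subtlety is ensuring that in Case 2 the finisher is actually \emph{entered} by all correct processes (not just that they decide from $\mathcal{CX}(\vfinal)$) — this follows because deciding from $\mathcal{CX}(\vfinal)$ immediately triggers the $\mathsf{to\_finish}$ invocation with no intervening blocking step, and no process has halted yet. Once these timing details are pinned down, the proof is a short assembly of the prior lemmas.
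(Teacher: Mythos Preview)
Your proof is essentially the same as the paper's: the identical two-case split on whether some correct process decides by time $\tau_{\vfinal} + \Delta_{\mathit{total}}$, with \Cref{lemma:one_decide_all_decide} closing Case~1 and \Cref{lemma:enter_vfinal_2_delta} plus the finisher's termination property closing Case~2. Your extra bookkeeping (collapsing the $\max(\cdot,\text{GST})$ term, citing \Cref{lemma:same_to_finish} to justify that $\mathcal{F}$ is well-used, noting that no process halts before invoking $\mathsf{to\_finish}$) is correct and more explicit than the paper.

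One small gap: the theorem states \emph{two} things---unconditional termination, and the concrete bound when $\vfinal \neq \bot$---and you treat only the second, reading the whole statement as conditional. The paper covers the case $\vfinal = \bot$ in one line: if no correct process ever decides, \Cref{lemma:no_decision_every_view} shows every view is eventually entered, so in particular some view is first entered at or after GST, contradicting $\vfinal = \bot$; hence some correct process decides, and \Cref{lemma:one_decide_all_decide} yields termination for all. You already list \Cref{lemma:no_decision_every_view} among your tools, so this is a trivial addition.
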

\begin{proof}
If $\vfinal = \bot$, then at least one correct process decides.
(Indeed, if no correct process decides, then \Cref{lemma:no_decision_every_view} proves that $\vfinal \neq \bot$.)
Hence, termination is ensured by \Cref{lemma:one_decide_all_decide}.

Let us now consider the case in which $\vfinal \neq \bot$.
We study two scenarios:
\begin{compactitem}
    \item Let a correct process decide by time $\tau_{\vfinal} + \Delta_{\mathit{total}}$.
    In this case, the theorem holds due to \Cref{lemma:one_decide_all_decide}.

    \item Otherwise, all correct processes decide the same value from $\mathcal{CX}(\vfinal)$ by time $\tau_{\vfinal} + \Delta_{\mathit{total}}$ (by \Cref{lemma:enter_vfinal_2_delta}) and invoke a $\mathcal{F}.\mathsf{to\_finish}(\cdot)$ request (line~\ref{line:to_finish_name}).
    Therefore, the theorem holds due to the termination property of $\mathcal{F}$.
\end{compactitem}
Hence, the termination property is ensured even if $\vfinal \neq \bot$.
\end{proof}

\smallskip
\noindent \textbf{Proof of complexity.}
First, we define the greatest view entered by a correct process before GST.

\begin{definition} [View $\vmax$] \label{definition:vmax}
We denote by $\vmax$ the greatest view that belongs to $\mathcal{V}$ for which $\tau_{\vmax} < \text{GST}$.
\end{definition}

Observe that $\vmax$ is well-defined due to the assumption that all correct processes start executing \name before GST.
Importantly, if $\vfinal \neq \bot$ (see \Cref{definition:vfinal}), then $\vfinal = \vmax + 1$ (by \Cref{lemma:enter_next_previous_enter_complete}).
The following lemma shows that if a correct process broadcasts a \textsc{start-view} message for a view $V$, then $V \in \mathcal{V}$ or $V - 1 \in \mathcal{V}$.

\begin{lemma} \label{lemma:v_in_v}
If a correct process broadcasts a \textsc{start-view} message for view $V$, then $V \in \mathcal{V}$ or $V - 1 \in \mathcal{V}$.
\end{lemma}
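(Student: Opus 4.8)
The plan is to do a case analysis on which line of \Cref{algorithm:name_pseudocode_2} caused the correct process $p_i$ to broadcast the $\langle \textsc{start-view}, V \rangle$ message. There are exactly two such lines: line~\ref{line:send_complete_1} (inside the handler for $\mathcal{CX}(\mathit{view}_i).\mathsf{completed}$) and line~\ref{line:send_complete_2} (inside the handler triggered by receiving $t+1$ matching \textsc{start-view} messages). First I would handle line~\ref{line:send_complete_1}: here $p_i$ broadcasts $\langle \textsc{start-view}, \mathit{view}_i + 1 \rangle$, so $V = \mathit{view}_i + 1$, meaning $V - 1 = \mathit{view}_i$. Since $\mathit{view}_i$ is only ever set to a view that $p_i$ has entered (it is initialized to $1$ at line~\ref{line:init_view}, and $p_i$ enters view $1$ at line~\ref{line:enter_1}, and it is subsequently updated at line~\ref{line:update_view_i} only after $p_i$ invokes $\mathcal{CX}(V).\mathsf{propose}$ at line~\ref{line:enter_v}), we have $V - 1 = \mathit{view}_i \in \mathcal{V}$ by the definition of $\mathcal{V}$ (\Cref{definition:first_time}'s surrounding definition of entered views).

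For the case of line~\ref{line:send_complete_2}, I would argue by induction on the order in which correct processes broadcast $\langle \textsc{start-view}, V \rangle$ messages — or, more cleanly, invoke \Cref{lemma:first_start_view_message}, which already establishes that the \emph{first} correct process to broadcast $\langle \textsc{start-view}, V \rangle$ does so at line~\ref{line:send_complete_1}. Thus, if \emph{any} correct process broadcasts a \textsc{start-view} message for $V$, then in particular the first one to do so broadcasts it at line~\ref{line:send_complete_1}, and the line~\ref{line:send_complete_1} case above applies to that process, giving $V - 1 \in \mathcal{V}$. This immediately yields the conclusion for $p_i$ regardless of which line $p_i$ itself used. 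In fact this observation collapses the whole proof: it suffices to note that the existence of \emph{some} correct-process broadcast of $\langle \textsc{start-view}, V\rangle$ implies, via \Cref{lemma:first_start_view_message}, the existence of one via line~\ref{line:send_complete_1}, whose sender had $\mathit{view} = V-1$ and hence $V - 1 \in \mathcal{V}$.

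I do not anticipate a serious obstacle here; the main thing to be careful about is the bookkeeping around the $\mathit{view}_i$ variable — specifically confirming that every assignment to $\mathit{view}_i$ is accompanied by (and in fact follows) an actual $\mathcal{CX}(\cdot).\mathsf{propose}$ invocation, so that $\mathit{view}_i \in \mathcal{V}$ is an invariant. One minor subtlety is the boundary case $V = 1$: a correct process broadcasting $\langle \textsc{start-view}, 1 \rangle$ would need $V - 1 = 0 \in \mathcal{V}$, which is vacuous/ill-defined, but this case cannot arise because line~\ref{line:send_complete_1} only ever broadcasts for $\mathit{view}_i + 1 \geq 2$ and line~\ref{line:send_complete_2} requires a prior correct broadcast for the same view; so no correct process ever broadcasts a \textsc{start-view} message for view $1$, and the statement holds vacuously there. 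Otherwise $V \geq 2$ and the argument above gives $V - 1 \in \mathcal{V}$ cleanly.
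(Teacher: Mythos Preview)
Your proof is correct and uses the same key ingredient as the paper --- \Cref{lemma:first_start_view_message} together with the integrity property of \block to conclude that the first correct broadcaster of $\langle \textsc{start-view}, V\rangle$ had entered view $V-1$ --- but your packaging is more direct. The paper wraps the same observation in a contradiction argument: it first splits on whether $\mathcal{V}$ is finite, takes the maximal $V^* \in \mathcal{V}$, invokes \Cref{lemma:enter_next_previous_enter_complete} to fill in all views below $V^*$, and only then derives a contradiction from a hypothetical broadcast for some $V > V^* + 1$. Your argument avoids that detour entirely and in fact establishes the slightly stronger statement that $V - 1 \in \mathcal{V}$ always (for $V \geq 2$), rather than just the disjunction. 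The only small tightening I would suggest: rather than phrasing the $\mathit{view}_i \in \mathcal{V}$ claim as a standing invariant (which is not literally true before $p_i$ invokes $\mathsf{propose}$), appeal directly to the integrity property of $\mathcal{CX}(\mathit{view}_i)$ at the moment line~\ref{line:send_complete_1} fires, exactly as the paper does in its contradiction step.
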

\begin{proof}
If $|\mathcal{V}| = \infty$, the lemma trivially holds.
Hence, let $|\mathcal{V}| \neq \infty$; let $V^*$ denote the greatest view that belongs to $\mathcal{V}$.
\Cref{lemma:enter_next_previous_enter_complete} guarantees that $V' \in \mathcal{V}$, for every view $V' < V^*$.
By contradiction, suppose there exists a correct process that broadcasts a \textsc{start-view} message for a view $V$ such that $V > V^* + 1$.
Let $p_i$ be the first correct process to broadcast a $\langle \textsc{start-view}, V > V^* + 1 \rangle$ message.
By \Cref{lemma:first_start_view_message}, $p_i$ has previously completed view $V - 1 \geq V^* + 1$.
Due to the integrity property of $\mathcal{CX}(V - 1)$, $p_i$ has entered view $V - 1 \geq V^* + 1$.
Therefore, $V^* + 1 \in \mathcal{V}$, which contradicts the fact that $V^*$ is the greatest view that belongs to $\mathcal{V}$.
\end{proof}

Next, we prove that any correct process broadcasts at most two \textsc{start-view} messages for any view $V$.

\begin{lemma} \label{lemma:at_most_two_start_view}
Any correct process broadcasts at most two \textsc{start-view} messages for any view $V$.
\end{lemma}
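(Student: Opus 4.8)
The plan is to prove \Cref{lemma:at_most_two_start_view} by examining the two lines in \Cref{algorithm:name_pseudocode_2} at which a correct process can broadcast a $\langle \textsc{start-view}, V \rangle$ message, namely line~\ref{line:send_complete_1} and line~\ref{line:send_complete_2}, and arguing that each fires at most once per view at any correct process. First I would observe that line~\ref{line:send_complete_1} is triggered only by a $\mathcal{CX}(\mathit{view}_i).\mathsf{completed}$ indication with $\mathit{view}_i + 1 = V$, i.e., $\mathit{view}_i = V - 1$; since the $\mathit{view}_i$ variable is monotonically non-decreasing (it is only ever updated to a strictly larger value at line~\ref{line:update_view_i}), the condition $\mathit{view}_i = V - 1$ can hold only while $p_i$'s current view equals $V-1$. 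Together with the integrity property of $\mathcal{CX}(V-1)$ (a process receives at most\,---\,well, I should check: integrity only says ``no $\mathsf{completed}$ unless proposed'', not ``at most once''), so I would instead appeal to the fact that $p_i$ proposes to $\mathcal{CX}(V-1)$ at most once (it enters each view at most once, since entering requires $\mathit{view}_i < V-1$ and then sets $\mathit{view}_i \gets V-1$), hence, by the semantics of \block as a single-shot instance, receives at most one $\mathsf{completed}$ indication from $\mathcal{CX}(V-1)$, so line~\ref{line:send_complete_1} fires at most once for view $V$.

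Next I would handle line~\ref{line:send_complete_2}: this broadcast is guarded by $\mathit{helped}_i[V] = \mathit{false}$, and the first statement executed in the handler is $\mathit{helped}_i[V] \gets \mathit{true}$. Since $\mathit{helped}_i[V]$ is never reset to $\mathit{false}$, the rule at line~\ref{line:receive_plurality} can activate for view $V$ at most once at $p_i$, so line~\ref{line:send_complete_2} broadcasts a $\langle \textsc{start-view}, V \rangle$ message at most once. Combining the two cases, any correct process broadcasts at most one \textsc{start-view} message for $V$ via line~\ref{line:send_complete_1} and at most one via line~\ref{line:send_complete_2}, for a total of at most two, which is exactly the claim.

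The main obstacle I anticipate is pinning down precisely why a correct process enters (and therefore proposes to, and completes) any given view at most once\,---\,this is the linchpin for bounding the line~\ref{line:send_complete_1} broadcasts. The argument hinges on the monotonicity of $\mathit{view}_i$: entering a view $V$ at line~\ref{line:enter_v} requires the guard $V > \mathit{view}_i$ at line~\ref{line:receive_quorum} and immediately sets $\mathit{view}_i \gets V$; entering view $1$ happens exactly once at line~\ref{line:enter_1}. Hence each view is entered at most once, each \block instance $\mathcal{CX}(V)$ is proposed to at most once, and (since \block instances are single-shot modules that signal $\mathsf{completed}$ at most once per participating process, by their specification in Module~\ref{mod:crux}) the $\mathsf{completed}$ indication for $\mathcal{CX}(V-1)$ arrives at $p_i$ at most once. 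I would state this sub-argument explicitly, perhaps as a short preliminary observation, before assembling the case analysis above. The rest is a routine inspection of the pseudocode.
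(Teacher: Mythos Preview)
Your proposal is correct and follows essentially the same approach as the paper's proof: bound the broadcast at line~\ref{line:send_complete_1} by the monotonicity of $\mathit{view}_i$ (a process completes each view at most once) and the broadcast at line~\ref{line:send_complete_2} by the $\mathit{helped}_i[V]$ guard. The paper's proof is terser and simply asserts that $p_i$ ``enters monotonically increasing views'', whereas you spell out why (the guard $V > \mathit{view}_i$ at line~\ref{line:receive_quorum} followed by $\mathit{view}_i \gets V$) and are more careful about the ``at most one $\mathsf{completed}$'' step; this extra care is appropriate but not a different argument.
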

\begin{proof}
Let $p_i$ be any correct process.
Process $p_i$ sends at most one $\langle \textsc{start-view}, V \rangle$ message at line~\ref{line:send_complete_1} as $p_i$ enters monotonically increasing views (i.e., it is impossible for $p_i$ to complete view $V$ more than once).
Moreover, process $p_i$ sends at most one $\langle \textsc{start-view}, V \rangle$ message at line~\ref{line:send_complete_2} due to the $\mathit{helped}_i[]$ variable, which concludes the proof.
\end{proof}

We next prove that $\vmax \in O(1)$ (i.e., it does not depend on $n$).

\begin{lemma} \label{lemma:vmax_constant}
$\vmax \in O(1)$.
\end{lemma}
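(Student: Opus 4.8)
The goal is to show that $\vmax$ (the greatest view entered by a correct process strictly before GST) is bounded by a constant independent of $n$. The key intuition is that every view transition before GST requires a correct process to first \emph{complete} the previous view's \block instance (by \Cref{lemma:enter_next_previous_enter_complete}), and by the completion time property of \block, completing a \block instance takes at least $\Delta_{\mathit{total}}$ local time \emph{once the proposing process's clock is no longer drifting} — but before GST clocks may drift arbitrarily, so this time argument does not directly apply pre-GST. Hence, the real mechanism bounding $\vmax$ is not time but the all-to-all communication structure of the view synchronizer combined with the per-process bit budget.

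My plan is the following. First, I would invoke \Cref{lemma:enter_next_previous_enter_complete} repeatedly: for view $\vmax$ to be entered by a correct process, views $1, 2, \dots, \vmax$ must all have been entered by a correct process, i.e.\ $\{1, \dots, \vmax\} \subseteq \mathcal{V}$. Second, I would argue that for a correct process $p$ to enter a view $V \geq 2$, it must have received $2t+1$ $\langle \textsc{start-view}, V \rangle$ messages (the rule at line~\ref{line:receive_quorum}), hence at least $t+1$ such messages from correct processes; and by \Cref{lemma:first_start_view_message} and \Cref{lemma:enter_next_previous_enter_complete}, the first correct process to broadcast $\langle \textsc{start-view}, V \rangle$ did so at line~\ref{line:send_complete_1} only after \emph{completing} view $V-1$, which by the integrity property of $\mathcal{VB}$ inside \block (Step 6 of \block's Task 1) means it actually ran the full \block pipeline for view $V-1$, in particular it sent the messages for the $\mathcal{GC}_1$, $\mathcal{A}^S$, $\mathcal{GC}_2$, and $\mathcal{VB}$ sub-instances of that view. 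The crux is then a counting argument: each correct process has a finite per-view bit budget (namely $\mathit{pbit}(\block)$, see \Cref{theorem:block_communication}), and more importantly each correct process, being a deterministic state machine, can only \emph{complete} a view after having fully executed its \block instance — and the number of distinct views a single correct process can complete before GST is what we must bound.

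The main obstacle — and where I expect the real content to lie — is that, a priori, a \emph{single} correct process could race ahead through unboundedly many views before GST if Byzantine processes feed it the requisite $2t+1$ $\langle\textsc{start-view},\cdot\rangle$ messages for each successive view. To rule this out, I would rely on the amplification threshold in the synchronizer: a process only completes view $V-1$ after running $\mathcal{CX}(V-1)$, and within $\mathcal{CX}(V-1)$, \emph{completion} requires receiving a $\mathsf{completed}$ indication from $\mathcal{VB}$, which (by $\mathcal{VB}$'s integrity) requires the process to have proposed, and proposing to $\mathcal{VB}$ at Step 6 requires having decided from $\mathcal{GC}_2$ at Step 4 — and each of $\mathcal{GC}_1$, $\mathcal{GC}_2$ terminates only after a fixed number $\mathit{latency}(\mathcal{GC}_j)$ of \emph{asynchronous rounds}, each of which requires a round-trip of messages between correct processes (not Byzantine ones, since the termination of graded consensus depends on correct-process quorums). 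Thus each view a correct process completes consumes at least one ``round-trip among correct processes'', and since we only need a \emph{constant} number of such round-trips per view and the quantity being bounded is independent of $n$, I would formalize: let $q$ be any correct process that enters $\vmax$; trace back the chain of $t+1$-correct-process certificates witnessing each entry $1 \to 2 \to \cdots \to \vmax$; at each step some correct process completed the previous view, which required it to locally finish a \block instance whose duration is lower-bounded (in terms of causal message chains among correct processes) by a constant $c = \mathit{latency}(\mathcal{GC}_1) + \mathcal{R} + \mathit{latency}(\mathcal{GC}_2) + \mathit{latency}(\mathcal{VB})$.

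Actually, re-reading the setup, I suspect the paper takes a cleaner route: since all correct processes \emph{start} \name before GST (line~\ref{line:enter_1}) and the pre-GST cost is explicitly excluded from complexity accounting, the statement $\vmax \in O(1)$ is likely proven by observing that the \emph{number of views any correct process passes through} is bounded using a causality/message-chain argument that is entirely independent of how adversarial the pre-GST schedule is — essentially, one shows that passing from view $V$ to view $V+1$ at any correct process requires at least one "correct-to-correct-to-correct" message chain (because \block's sub-protocols $\mathcal{GC}_1, \mathcal{GC}_2, \mathcal{VB}$ each need correct-process participation to terminate, and the synchronizer needs $t+1$ correct senders to amplify), and since in any execution prefix before GST the longest such chain rooted at the system's start is bounded only by... hmm, that is unbounded too. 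So the honest answer is that the bound must come from the \emph{time} structure after all: I would therefore argue by contradiction — suppose $\vmax$ is unbounded over executions; pick an execution where some correct process $p$ enters a large view $V$ before GST; since $p$ is correct and $\Delta_{\mathit{total}}$ is a fixed constant of \emph{real} time, and since \emph{between GST and the moment $p$ enters $V$} nothing happens (that's pre-GST), the contradiction must instead be derived from the fact that after GST there are only $O(1)$ further views (shown by \Cref{lemma:time_greater_than_final}), so $\vmax$ appears only in complexity bounds as $\log \vmax$, which is the "view number" additive term the paper explicitly sets aside — meaning the precise claim $\vmax \in O(1)$ must be interpreted as: the number of views executed \emph{post-GST} is $O(1)$, and $\vmax$ itself only contributes the $\log \vmax$ header term. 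I would reconcile this by presenting the proof as: combining \Cref{lemma:enter_next_previous_enter_complete}, \Cref{lemma:time_greater_than_final}, and \Cref{definition:vfinal}, if $\vfinal \neq \bot$ then at most a constant number of views are entered at or after GST; and the claim $\vmax \in O(1)$ for the complexity bound then follows because the per-process bit complexity counts only post-GST messages, so effectively only $O(1)$ views contribute — I would make this rigorous by bounding the number of \emph{distinct views in which a correct process sends a message after GST} by a constant, which is exactly what is needed for \Cref{theorem:block_communication} to compose into $\mathit{pbit}(\name) = O(n + \mathit{pbit}(\block))$. The step I expect to be the genuine obstacle is precisely pinning down what "$\vmax \in O(1)$" is really asserting and threading the needle between the drifting-clock pre-GST regime (where $\vmax$ is truly unbounded in the worst case) and the complexity claim (which only cares about post-GST activity); I would resolve it by showing that after GST every correct process sends messages in at most $\vfinal, \vfinal+1$ and $O(1)$ subsequent views, and that the pre-GST views contribute no counted bits.
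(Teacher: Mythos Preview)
Your proposal misidentifies what the lemma is asserting. In this paper, $O(1)$ means ``independent of $n$'' --- it is an asymptotic statement in the number of processes, not an absolute bound. The paper's entire proof is one sentence: $\vmax$ depends only on GST, the pre-GST message delays, and the clock drift, none of which are functions of $n$; hence $\vmax \in O(1)$. Your intuition midway through (``the quantity being bounded is independent of $n$'') was exactly right, but you abandoned it.

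Your eventual conclusion --- that $\vmax$ is ``truly unbounded in the worst case'' and that the lemma must therefore be reinterpreted as a statement about post-GST views --- is a misreading. Yes, $\vmax$ can be arbitrarily large across executions (make GST later and it grows), but that is perfectly compatible with $\vmax \in O(1)$ in the intended sense: for a fixed execution (fixed GST, fixed schedule, fixed drift), $\vmax$ is a number that does not scale with $n$. This is also why the paper explicitly flags the $\log(V_{\text{GST}})$ view-number header term in \Cref{section:technical_overview} as something it sets aside --- the authors are aware $\vmax$ is not absolutely bounded and are transparent about that convention. All of your causality-chain and bit-budget machinery is unnecessary here; save it, because none of it would give an absolute bound anyway (as you yourself discovered), and the lemma never needed one.
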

\begin{proof}
The lemma holds as $\vmax$ does not depend on $n$; $\vmax$ depends on $\text{GST}$, the message delays before GST and the clock drift.
\end{proof}

The following lemma proves that if $\vfinal \neq \bot$, then $\vfinal \in O(1)$.

\begin{lemma} \label{lemma:vfinal_constant}
If $\vfinal \neq \bot$, then $\vfinal \in O(1)$.
\end{lemma}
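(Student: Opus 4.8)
The plan is to reduce the claim directly to \Cref{lemma:vmax_constant}, exploiting the structural identity $\vfinal = \vmax + 1$ that was already noted in the paragraph following \Cref{definition:vmax}. First I would recall why that identity holds when $\vfinal \neq \bot$: by \Cref{lemma:enter_next_previous_enter_complete}, a correct process can enter view $\vfinal$ only if view $\vfinal - 1$ was previously entered (and completed) by a correct process, so $\vfinal - 1 \in \mathcal{V}$; since $\vmax$ is the largest view in $\mathcal{V}$ entered strictly before GST and $\vfinal$ is the smallest view in $\mathcal{V}$ entered at or after GST, we get $\vfinal - 1 \le \vmax$ and, by minimality of $\vfinal$ together with the fact that every view in $\mathcal{V}$ below $\vfinal$ is entered before GST, $\vfinal \le \vmax + 1$. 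The two inequalities give $\vfinal = \vmax + 1$.

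Next I would invoke \Cref{lemma:vmax_constant}, which asserts $\vmax \in O(1)$ because $\vmax$ depends only on GST, the (bounded, but unknown) pre-GST message delays, and the pre-GST clock drift — none of which scale with $n$. Combining this with $\vfinal = \vmax + 1$ yields $\vfinal \in O(1)$, which is exactly the statement to be proved.

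I do not expect any real obstacle here: the lemma is an immediate corollary of \Cref{lemma:vmax_constant} plus the already-established equality $\vfinal = \vmax + 1$. The only portion worth writing out explicitly is that equality, and even it has essentially been argued in the text preceding the lemma, so in the final write-up I would simply cite \Cref{lemma:enter_next_previous_enter_complete} and the remark after \Cref{definition:vmax} rather than re-derive it, then conclude in one line via \Cref{lemma:vmax_constant}.
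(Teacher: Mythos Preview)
Your proposal is correct and matches the paper's own proof essentially line for line: the paper simply recalls that $\vfinal = \vmax + 1$ (the remark after \Cref{definition:vmax}) and then applies \Cref{lemma:vmax_constant}. Your additional justification of the equality is fine, though note that your ``two inequalities'' $\vfinal - 1 \le \vmax$ and $\vfinal \le \vmax + 1$ are the same statement; the other direction $\vmax < \vfinal$ comes from the monotonicity of first-entering times implied by \Cref{lemma:enter_next_previous_enter_complete}.
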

\begin{proof}
Recall that if $\vfinal \neq \bot$, $\vfinal = \vmax + 1$.
As $\vmax \in O(1)$ (by \Cref{lemma:vmax_constant}), $\vfinal \in O(1)$.
\end{proof}

Next, we prove that if $\vfinal = \bot$, then $\vmax$ is the greatest view that belongs to $\mathcal{V}$.

\begin{lemma} \label{lemma:vfinal_bot}
If $\vfinal = \bot$, then $\vmax$ is the greatest view that belongs to $\mathcal{V}$.
\end{lemma}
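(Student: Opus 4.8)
The plan is to prove the contrapositive-flavored statement directly: assuming $\vfinal = \bot$, I want to show that $\vmax$ is the greatest element of $\mathcal{V}$, i.e., that no correct process ever enters a view strictly greater than $\vmax$. First I would recall the key facts already established: by Definition~\ref{definition:vmax}, $\vmax$ is the greatest view in $\mathcal{V}$ entered (first) \emph{before} GST, and by Definition~\ref{definition:vfinal}, $\vfinal = \bot$ means there is \emph{no} view $V \in \mathcal{V}$ with $\tau_V \geq \text{GST}$. Combining these two, every view in $\mathcal{V}$ has its first-entering time strictly before GST.

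The core of the argument is then a contradiction: suppose some view $V' \in \mathcal{V}$ with $V' > \vmax$. Since $V' \in \mathcal{V}$, by definition it is entered by some correct process, so $\tau_{V'}$ is well-defined. Because $\vfinal = \bot$, we must have $\tau_{V'} < \text{GST}$. But then $V'$ is a view in $\mathcal{V}$ entered before GST that is strictly larger than $\vmax$, directly contradicting the maximality of $\vmax$ among views entered before GST (Definition~\ref{definition:vmax}). Hence no such $V'$ exists, and $\vmax$ is the greatest view in $\mathcal{V}$. One subtlety worth noting is that $\mathcal{V}$ is nonempty and $\vmax$ is well-defined here — this follows from the standing assumption (used in Definition~\ref{definition:vmax}) that all correct processes start executing \name before GST, so view $1$ is entered before GST and thus $1 \in \mathcal{V}$ with $\tau_1 < \text{GST}$.

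I do not anticipate a serious obstacle: this lemma is essentially a bookkeeping consequence of the definitions of $\vmax$, $\vfinal$, and $\mathcal{V}$, together with the fact that $\tau_V$ (Definition~\ref{definition:first_time}) partitions entered views into those appearing before GST and those appearing at or after GST. The only thing to be careful about is making the case distinction airtight — every view $V \in \mathcal{V}$ satisfies either $\tau_V < \text{GST}$ or $\tau_V \geq \text{GST}$, and the hypothesis $\vfinal = \bot$ rules out the latter for \emph{all} of them, which pins $\vmax$ down as the overall maximum. I would write this as a short two- or three-sentence proof by contradiction.

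\begin{proof}
Suppose, for contradiction, that there exists a view $V' \in \mathcal{V}$ with $V' > \vmax$.
As $V' \in \mathcal{V}$, the first-entering time $\tau_{V'}$ is well-defined (see \Cref{definition:first_time}).
Since $\vfinal = \bot$, no view in $\mathcal{V}$ has first-entering time at or after GST (see \Cref{definition:vfinal}); hence $\tau_{V'} < \text{GST}$.
But then $V'$ is a view in $\mathcal{V}$ with $\tau_{V'} < \text{GST}$ and $V' > \vmax$, which contradicts the definition of $\vmax$ as the greatest such view (see \Cref{definition:vmax}).
Therefore, no view in $\mathcal{V}$ exceeds $\vmax$, i.e., $\vmax$ is the greatest view that belongs to $\mathcal{V}$.
\end{proof}
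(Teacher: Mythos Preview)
Your proof is correct and takes essentially the same approach as the paper: both argue by contradiction, suppose a view $V' \in \mathcal{V}$ with $V' > \vmax$, and then use the definitions of $\vfinal$ and $\vmax$ to rule out both $\tau_{V'} \geq \text{GST}$ and $\tau_{V'} < \text{GST}$. The paper presents the two cases as an explicit dichotomy while you first eliminate $\tau_{V'} \geq \text{GST}$ via $\vfinal = \bot$ and then derive the contradiction, but the logic is identical.
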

\begin{proof}
By contradiction, suppose there exists a view $V^* \in \mathcal{V}$ such that $V^* > \vmax$.
We distinguish two possibilities regarding $\tau_{V^*}$:
\begin{compactitem}
    \item Let $\tau_{V^*} < \text{GST}$: This case is impossible as $\vmax$ is the greatest view that belongs to $\mathcal{V}$ entered by a correct process before GST (see \Cref{definition:vmax}).

    \item Let $\tau_{V^*} \geq \text{GST}$: This case is impossible as $\vfinal = \bot$ (see \Cref{definition:vfinal}).
\end{compactitem}
Therefore, the lemma holds.
\end{proof}

The following lemma gives the earliest entering time for each view greater than $\vfinal$ (assuming that $\vfinal \neq \bot$).

\begin{lemma} \label{lemma:starting_time}
If $\vfinal \neq \bot$, then $\tau_{V} > \tau_{V - 1} + \Delta_{\mathit{total}}$, for every view $V \in \mathcal{V}$ such that $V > \vfinal$.
\end{lemma}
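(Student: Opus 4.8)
\textbf{Proof plan for \Cref{lemma:starting_time}.}
The plan is to show that, once we are in the ``stabilized'' regime (i.e., at or past $\vfinal$), the time between entering consecutive views is at least $\Delta_{\mathit{total}}$, which will later be the key ingredient for bounding the number of post-GST views by $O(1)$. The statement is essentially an iteration of the argument already used in \Cref{lemma:time_greater_than_final}, now applied not only between $\vfinal$ and $\vfinal+1$, but between every pair of consecutive views $V-1, V$ with $V > \vfinal$. I would prove it by (strong) induction on $V \in \mathcal{V}$ with $V > \vfinal$.

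\emph{Base case $V = \vfinal + 1$.} This is exactly \Cref{lemma:time_greater_than_final} (with its first inequality $\tau_{\vfinal+1} > \tau_{\vfinal} + \Delta_{\mathit{total}}$), so nothing new is needed; I would just cite it.

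\emph{Inductive step.} Assume the claim holds for all views in $\mathcal{V}$ strictly between $\vfinal$ and $V$; I want $\tau_V > \tau_{V-1} + \Delta_{\mathit{total}}$ for the next view $V \in \mathcal{V}$. First I would invoke \Cref{lemma:enter_next_previous_enter_complete}: since some correct process enters view $V > \vfinal \geq 1$, some correct process $p$ has previously entered and completed view $V-1$. Next I argue that $\tau_{V-1} \geq \text{GST}$: indeed $\vfinal = \vmax + 1$ is the smallest view with first-entering time at or after GST, and $V - 1 \geq \vfinal$, so by \Cref{definition:vmax,definition:vfinal} (and the monotonicity given by \Cref{lemma:enter_next_previous_enter_complete}) every view $\geq \vfinal$ is first entered at or after GST; hence $\tau_{V-1} \geq \text{GST}$. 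The process $p$ that completes view $V-1$ did so after entering it, and entering it could not have happened before $\tau_{V-1}$; since $\tau_{V-1} \geq \text{GST}$, $p$'s local clock does not drift during its execution of $\mathcal{CX}(V-1)$, so the completion time property of $\mathcal{CX}(V-1)$ (with parameter $\Delta_{\mathit{total}}$) guarantees $p$ does not receive a $\mathsf{completed}$ indication from $\mathcal{CX}(V-1)$ before time $(\text{time } p \text{ entered } V-1) + \Delta_{\mathit{total}} \geq \tau_{V-1} + \Delta_{\mathit{total}}$. By \Cref{lemma:first_start_view_message}, the first correct $\langle \textsc{start-view}, V\rangle$ message is sent at \cref{line:send_complete_1}, i.e., only after its sender completes view $V-1$; and a correct process enters $V$ (at \cref{line:receive_quorum}, then \cref{line:enter_v}) only after receiving $2t+1$ such messages, hence after at least one correct process sent one, hence after some correct process completed $V-1$. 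Therefore $\tau_V > \tau_{V-1} + \Delta_{\mathit{total}}$.

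\emph{Main obstacle.} The only delicate point is justifying $\tau_{V-1} \geq \text{GST}$ for all $V - 1 \geq \vfinal$ — i.e., that no view at or above $\vfinal$ is first entered before GST. This follows because $\vmax$ (the greatest pre-GST view) equals $\vfinal - 1$ when $\vfinal \neq \bot$, and \Cref{lemma:enter_next_previous_enter_complete} forces the first-entering times $\tau_V$ to be (weakly) increasing in $V$ over $\mathcal{V}$; so once $\tau_{\vfinal} \geq \text{GST}$, every later view inherits this. If one wanted to be maximally careful, one could fold this monotonicity observation into the induction hypothesis itself (``$\tau_{V-1} \geq \text{GST}$ and $\tau_{V} > \tau_{V-1} + \Delta_{\mathit{total}}$''), which makes the clock-non-drift appeal to the completion time property fully rigorous at each step. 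Everything else is a direct reuse of \Cref{lemma:time_greater_than_final}'s reasoning.
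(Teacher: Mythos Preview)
Your proposal is correct and follows essentially the same approach as the paper: invoke \Cref{lemma:enter_next_previous_enter_complete} to get that view $V-1$ was completed before $V$ is entered, argue $\tau_{V-1} \geq \text{GST}$ (since $V-1 \geq \vfinal$), and apply the completion time property of $\mathcal{CX}(V-1)$. The only difference is cosmetic: the paper dispenses with the induction and argues directly for arbitrary $V > \vfinal$, since once $\tau_{V-1} \geq \text{GST}$ is established the completion-time argument needs no inductive hypothesis.
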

\begin{proof}
The proof is similar to that of \Cref{lemma:time_greater_than_final}.
For view $V > \vfinal$ to be entered by a correct process, there must exist a correct process that has previously completed view $V - 1 \geq \vfinal$ (by \Cref{lemma:enter_next_previous_enter_complete}).
As $\tau_{V - 1} \geq \text{GST}$ (due to \Cref{lemma:enter_next_previous_enter_complete} and the fact that $\tau_{\vfinal} \geq \text{GST}$), the completion time property of $\mathcal{CX}(V - 1)$ ensures that no correct process completes view $V - 1$ by time $\tau_{V - 1} + \Delta_{\mathit{total}}$.
Therefore, $\tau_{V} > \tau_{V - 1} + \Delta_{\mathit{total}}$.
\end{proof}

Next, we give an upper bound on the greatest view entered by a correct process assuming that $\vfinal \neq \bot$.

\begin{lemma} \label{lemma:vfinal+2}
Let $\vfinal \neq \bot$, and let $V^*$ be the greatest view that belongs to $\mathcal{V}$.
Then, $V^* < \vfinal + 2$.
\end{lemma}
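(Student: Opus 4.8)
The plan is a proof by contradiction: assume $V^* \ge \vfinal + 2$, and show that some correct process would then have to complete view $\vfinal+1$ strictly later than the moment by which all correct processes have already decided and halted — which is impossible.

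First I would pin down a correct process that completes view $\vfinal+1$. Since $V^* \in \mathcal{V}$ and $V^* \ge \vfinal+2 > 1$, applying \Cref{lemma:enter_next_previous_enter_complete} repeatedly (descending from $V^*$ down to $\vfinal+1$, which is legitimate because $V^*-1 \ge \vfinal+1$) shows that some correct process $p$ enters and completes view $\vfinal+1$; the same repeated application shows that every view below $V^*$ lies in $\mathcal{V}$, so in particular $\vfinal+1 \in \mathcal{V}$. Let $\tau^e$ denote the time $p$ enters view $\vfinal+1$, i.e.\ invokes $\mathcal{CX}(\vfinal+1).\mathsf{propose}(\cdot)$ (line~\ref{line:enter_v}); then $\tau^e \ge \tau_{\vfinal+1}$. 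Since $\vfinal+1 \in \mathcal{V}$ and $\vfinal+1 > \vfinal$, \Cref{lemma:time_greater_than_final} gives $\tau_{\vfinal+1} > \tau_{\vfinal}+\Delta_{\mathit{total}}$; combined with $\tau_{\vfinal} \ge \text{GST}$ (\Cref{definition:vfinal}) this yields $\tau^e \ge \tau_{\vfinal+1} > \text{GST}$.

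Next I would apply the completion time property of $\mathcal{CX}(\vfinal+1)$ to $p$: since $p$ proposes to $\mathcal{CX}(\vfinal+1)$ at time $\tau^e > \text{GST}$, it does not receive a $\mathsf{completed}$ indication from $\mathcal{CX}(\vfinal+1)$ by time $\tau^e + \Delta_{\mathit{total}}$, so $p$ completes view $\vfinal+1$ at some time $\tau^c > \tau^e + \Delta_{\mathit{total}} > \tau_{\vfinal} + 2\Delta_{\mathit{total}}$. On the other hand, because $\vfinal \ne \bot$, \Cref{theorem:name_termination} guarantees that every correct process — in particular $p$ — decides by time $\tau_{\vfinal}+\Delta_{\mathit{total}}+2\delta$, and upon deciding it halts (line~\ref{line:name_stop}), ceasing to react to received messages and hence unable to receive any later $\mathcal{CX}$ indication; therefore $\tau^c \le \tau_{\vfinal}+\Delta_{\mathit{total}}+2\delta$. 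Putting the two bounds together forces $\tau_{\vfinal}+2\Delta_{\mathit{total}} < \tau_{\vfinal}+\Delta_{\mathit{total}}+2\delta$, i.e.\ $\Delta_{\mathit{total}} < 2\delta$, which contradicts $\Delta_{\mathit{total}} > \Delta_{\mathit{shift}} = 2\delta$ (the \block specification requires $\Delta_{\mathit{total}} > \Delta_{\mathit{shift}}$, and $\Delta_{\mathit{shift}} = 2\delta$ in \name). Hence $V^* < \vfinal+2$.

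The main obstacle is the bookkeeping needed to invoke the completion time property of $\mathcal{CX}(\vfinal+1)$ soundly — namely establishing that $p$'s proposal to that instance occurs at or after GST — which is precisely why the chain $\tau^e \ge \tau_{\vfinal+1} > \tau_{\vfinal}+\Delta_{\mathit{total}} \ge \text{GST}$ via \Cref{lemma:time_greater_than_final} is essential; a secondary care point is to make explicit that "completing a view" cannot occur once a process has halted, so that the deadline of \Cref{theorem:name_termination} genuinely upper-bounds $\tau^c$.
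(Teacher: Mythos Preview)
Your proof is correct and follows essentially the same approach as the paper. The paper's version is slightly more compact: it invokes \Cref{lemma:starting_time} twice to obtain $\tau_{\vfinal+2} > \tau_{\vfinal} + 2\Delta_{\mathit{total}}$ directly, and then contrasts this with the halting deadline from \Cref{theorem:name_termination}; you instead unfold the second application of the completion time property by hand (tracking the specific process $p$ that completes view $\vfinal+1$) and use \Cref{lemma:time_greater_than_final} rather than \Cref{lemma:starting_time} for the first step, but the underlying argument and the final arithmetic contradiction $\Delta_{\mathit{total}} < 2\delta$ are identical.
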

\begin{proof}
By \Cref{theorem:name_termination}, all correct processes decide (and halt) by time $\tau_{\vfinal} + \Delta_{\mathit{total}} + 2\delta$.
Moreover, $\tau_{\vfinal + 1} > \tau_{\vfinal} + \Delta_{\mathit{total}}$ (by \Cref{lemma:starting_time}).
Furthermore, \Cref{lemma:starting_time} shows that $\tau_{\vfinal + 2} > \tau_{\vfinal + 1} + \Delta_{\mathit{total}} > \tau_{\vfinal} + 2\Delta_{\mathit{total}}$.
As $\Delta_{\mathit{total}} > 2\delta$, we have that $\tau_{\vfinal} + \Delta_{\mathit{total}} + 2\delta < \tau_{\vfinal} + 2\Delta_{\mathit{total}}$, which concludes the proof.
\end{proof}

The last intermediate result shows that the greatest view entered by a correct process does not depend on $n$ (i.e., it is a constant).

\begin{lemma} \label{lemma:greatest_constant}
Let $V^*$ be the greatest view that belongs to $\mathcal{V}$.
Then, $V^* \in O(1)$.
\end{lemma}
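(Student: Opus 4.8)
The plan is a straightforward case analysis on whether $\vfinal = \bot$, leaning entirely on the structural lemmas already established. First I would handle the case $\vfinal = \bot$: Lemma~\ref{lemma:vfinal_bot} states that in this case $\vmax$ is the greatest view in $\mathcal{V}$, i.e., $V^* = \vmax$. Since Lemma~\ref{lemma:vmax_constant} gives $\vmax \in O(1)$ (it depends only on $\text{GST}$, the pre-GST message delays, and the clock drift, none of which scale with $n$), we immediately obtain $V^* = \vmax \in O(1)$.

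Next I would handle the case $\vfinal \neq \bot$. Here Lemma~\ref{lemma:vfinal+2} shows that $V^* < \vfinal + 2$, and Lemma~\ref{lemma:vfinal_constant} shows $\vfinal \in O(1)$ (recall that $\vfinal = \vmax + 1$ in this case, and $\vmax \in O(1)$ by Lemma~\ref{lemma:vmax_constant}). Combining the two facts gives $V^* \leq \vfinal + 1 \in O(1)$. Taking the maximum of the two bounds across the two cases then yields $V^* \in O(1)$ in all situations where a greatest view exists.

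I expect essentially no obstacle: the lemma is a one-line corollary of the preceding results, so the only care needed is bookkeeping. The one point worth stating explicitly is that a greatest element of $\mathcal{V}$ is well-defined in the case $\vfinal \neq \bot$. This follows because Theorem~\ref{theorem:name_termination} forces every correct process to decide and halt by time $\tau_{\vfinal} + \Delta_{\mathit{total}} + 2\delta$; after halting no correct process broadcasts any further \textsc{start-view} message, and by Lemma~\ref{lemma:v_in_v} together with the $t+1$ amplification threshold at line~\ref{line:receive_plurality}, no new view can be entered once all correct processes have halted, so $\mathcal{V}$ is finite. (If instead $\mathcal{V}$ is infinite, the statement is vacuous, matching the convention already used in Lemmas~\ref{lemma:v_in_v} and~\ref{lemma:vfinal+2}.)
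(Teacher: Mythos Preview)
Your proposal is correct and follows essentially the same approach as the paper: a two-case split on whether $\vfinal = \bot$, invoking Lemma~\ref{lemma:vfinal_bot} with Lemma~\ref{lemma:vmax_constant} in the first case and Lemma~\ref{lemma:vfinal+2} with Lemma~\ref{lemma:vfinal_constant} in the second. Your additional remark about well-definedness of $V^*$ is extra care the paper omits, but it is not needed for the argument.
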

\begin{proof}
If $\vfinal = \bot$, then $V^* = \vmax$ (by \Cref{lemma:vfinal_bot}).
Therefore, \Cref{lemma:vmax_constant} concludes the proof.
Otherwise, $V^* < \vfinal + 2$ (by \Cref{lemma:vfinal+2}).
In this case, the lemma holds due to \Cref{lemma:vfinal_constant} in this case.
\end{proof}

We are finally ready to prove the bit complexity of \name.
Recall that $\mathit{bit}(\block)$ denotes the number of bits correct processes collectively send in \block.
Moreover, we denote by $\mathit{bit}(\mathcal{F})$ the number of bits correct processes collectively send in $\mathcal{F}$.

\begin{theorem} [Per-process bit complexity] \label{theorem:name_bit_complexity}
\name achieves $O\big( n + \mathit{pbit}(\block) + \mathit{pbit}(\mathcal{F}) \big)$ per-process bit complexity.
\end{theorem}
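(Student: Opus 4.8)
The plan is to decompose the bits that a correct process $p_i$ sends over an arbitrary execution of \name (\Cref{algorithm:name_pseudocode_2}) into exactly three sources: (i) the \block instances $\mathcal{CX}(V)$ in which $p_i$ participates, (ii) the $\langle \textsc{start-view}, \cdot \rangle$ messages $p_i$ broadcasts, and (iii) the single finisher instance $\mathcal{F}$. An inspection of the pseudocode shows these are the only places where $p_i$ emits a message, so $p_i$'s total (and in particular post-GST) bit count is at most the sum of the three contributions, and it suffices to bound each by $O(n + \mathit{pbit}(\block) + \mathit{pbit}(\mathcal{F}))$.

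For source (i), I would first observe that $p_i$ sends bits inside $\mathcal{CX}(V)$ only if it \emph{enters} view $V$, i.e.\ invokes $\mathcal{CX}(V).\mathsf{propose}(\cdot)$ at \cref{line:enter_1} or \cref{line:enter_v}: merely waiting to validate from $\mathcal{CX}(V-1)$ at \cref{line:wait_to_validate} only activates Task 3 of \block, which re-raises an indication without sending anything, and passively receiving messages addressed to an instance costs nothing. Since $p_i$ enters strictly increasing views, the number of distinct \block instances it can send in is at most $V^\star$, the largest view in $\mathcal{V}$, and $V^\star \in O(1)$ by \Cref{lemma:greatest_constant}. Each such instance costs at most $\mathit{pbit}(\block)$ bits by \Cref{theorem:block_communication}, so source (i) contributes $O(1)\cdot \mathit{pbit}(\block) = O(\mathit{pbit}(\block))$ bits.

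For source (ii), \Cref{lemma:v_in_v} guarantees that $p_i$ broadcasts a \textsc{start-view} message only for views $V$ with $V \in \mathcal{V}$ or $V-1 \in \mathcal{V}$, hence only for views in $\{1, \dots, V^\star + 1\}$, a set of size $O(1)$ by \Cref{lemma:greatest_constant}; by \Cref{lemma:at_most_two_start_view} it broadcasts at most two of them per view, so $O(1)$ in total, each of $O(1)$ bits (a fixed tag plus a constant-size view counter, in line with the paper's convention of suppressing the $\log V_{\text{GST}}$ term) and sent to $n$ recipients, for a contribution of $O(n)$ bits. For source (iii), $p_i$ sends at most $\mathit{pbit}(\mathcal{F})$ bits in $\mathcal{F}$ by definition. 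Summing, $p_i$ sends $O(\mathit{pbit}(\block)) + O(n) + \mathit{pbit}(\mathcal{F}) = O(n + \mathit{pbit}(\block) + \mathit{pbit}(\mathcal{F}))$ bits, as claimed. The only genuinely delicate step is (i): one must argue carefully that no bits are ever charged to \block instances of views $p_i$ never enters (leaning on the exact semantics of \name's handlers and of \block's Task 3), and that $V^\star$ is independent of $n$ — the latter being exactly the payoff of the view-skipping (validation-broadcast) mechanism, already isolated in \Cref{lemma:greatest_constant}.
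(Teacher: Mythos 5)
Your decomposition (\textsc{start-view} messages, \block instances, the finisher $\mathcal{F}$) and the lemmas you invoke — \Cref{lemma:v_in_v}, \Cref{lemma:at_most_two_start_view}, \Cref{lemma:greatest_constant}, together with \Cref{theorem:block_communication} and the per-process bound on $\mathcal{F}$ — are exactly those in the paper's proof, so the approach is the same and the conclusion is correct.

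The one step you single out as delicate is, however, mis-stated. It is not true that a correct process $p_i$ sends bits inside $\mathcal{CX}(V)$ only if $p_i$ enters view $V$, nor that waiting at \cref{line:wait_to_validate} merely triggers the message-free Task~3 of \block. The validation-broadcast sub-module of $\mathcal{CX}(V-1)$ can make $p_i$ emit messages even though $p_i$ never invoked $\mathcal{CX}(V-1).\mathsf{propose}$: in \shortvbt (\Cref{algorithm:echo_basic}), the amplification rule at \cref{line:echo_receive_plurality} has $p_i$ broadcast an \textsc{echo} upon receiving $t+1$ matching \textsc{init}s whether or not $p_i$ ever broadcast, and this passive step is exactly how a lagging process obtains a value at \cref{line:wait_to_validate}; the same caveat applies to the graded-consensus sub-modules. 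What actually yields the $O(1)$ bound is the weaker invariant you need: every message-emitting rule in a sub-module of \block has a threshold of at least $t+1$, so $p_i$ can send a bit in $\mathcal{CX}(V)$ only if at least one correct process actively started $\mathcal{CX}(V)$, forcing $V \in \mathcal{V}$ and hence $V \leq V^\star \in O(1)$. Restating the delicate step in terms of $\mathcal{V}$ rather than in terms of the views $p_i$ enters makes the argument go through; the rest of your proof, including the $O(n)$ charge for \textsc{start-view} messages under the $O(1)$-bit view-number convention, is fine as written.
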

\begin{proof}
Every correct process broadcasts at most two \textsc{start-view} messages for any view (by \Cref{lemma:at_most_two_start_view}).
Moreover, \Cref{lemma:v_in_v} proves that, if a correct process sends a \textsc{start-view} message for a value $V$, then (1) $V \in \mathcal{V}$, or (2) $V - 1 \in \mathcal{V}$.
As the greatest view $V^*$ of $\mathcal{V}$ is a constant (due to \Cref{lemma:greatest_constant}), every correct process sends $O(1) \cdot 2 \cdot n = O(n)$ bits via \textsc{start-view} messages.
Moreover, there are $O(1)$ executed instances of \block (due to \Cref{lemma:greatest_constant}).
Finally, every correct process sends $\mathit{pbit}(\mathcal{F})$ bits in $\mathcal{F}$.
Therefore, the per-process bit complexity of \name is $O(n) + O(1) \cdot \mathit{pbit}(\block) + \mathit{pbit}(\mathcal{F}) = O\big( n + \mathit{pbit}(\block) + \mathit{pbit}(\mathcal{F}) \big)$.
\end{proof}

To prove the latency of \name, we first prove a specific property of $\mathcal{CX}(\vmax)$.

\begin{lemma} \label{lemma:time_complete_vmax}
Let (1) all correct processes enter view $\vmax$ by some time $\tau$, and (2) no correct process abandon view $\vmax$ before time $\tau' = \max(\tau, \text{GST}) + \mathit{latency}(\block)$.
Then, all correct processes complete view $\vmax$ by time $\tau'$.
\end{lemma}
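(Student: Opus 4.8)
\textbf{Proof plan for \Cref{lemma:time_complete_vmax}.}
The plan is to reduce this statement to the \emph{completion time} and \emph{termination} behavior of the \block instance $\mathcal{CX}(\vmax)$, treated as a closed box via Module~\ref{mod:crux}. First I would observe that the hypotheses almost — but not exactly — match the precondition of \block's synchronicity property: all correct processes enter $\vmax$ (i.e., propose to $\mathcal{CX}(\vmax)$) by time $\tau$, and none abandons before $\tau' = \max(\tau,\text{GST}) + \mathit{latency}(\block)$. The subtlety is that $\tau$ may be smaller than $\text{GST}$ (indeed, by \Cref{definition:vmax}, $\tau_{\vmax} < \text{GST}$, so $\vmax$ is entered during the asynchronous period), so synchronicity in its literal form does not apply. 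Instead, what I want is a ``latency after GST'' statement: once all correct processes are simultaneously running $\mathcal{CX}(\vmax)$ and none abandons, the termination machinery of \block's submodules ($\mathcal{GC}_1$, the bounded simulation of $\mathcal{A}^S$, $\mathcal{GC}_2$, and $\mathcal{VB}$) carries every correct process through all seven steps of Task~1 within $\mathit{latency}(\block)$ time measured from $\max(\tau,\text{GST})$.

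The key steps, in order, would be: (1) Recall the definition $\mathit{latency}(\block) = (\mathit{latency}(\mathcal{GC}_1)\cdot\delta) + (\mathcal{R}\cdot\Delta_{\mathit{sync}}) + (\mathit{latency}(\mathcal{GC}_2)\cdot\delta) + (\mathit{latency}(\mathcal{VB})\cdot\delta)$ from the end of \Cref{section:crux_correctness_complexity_formal}, and note it is precisely the sum of per-step latencies. (2) Argue that since all correct processes have proposed to $\mathcal{CX}(\vmax)$ by $\tau$ and none abandons before $\tau'$, they all have proposed to $\mathcal{GC}_1$ (Step~1 of Task~1) by $\tau$ without abandoning; by the termination property of $\mathcal{GC}_1$ and its known worst-case latency, every correct process decides from $\mathcal{GC}_1$ by time $\max(\tau,\text{GST}) + \mathit{latency}(\mathcal{GC}_1)\cdot\delta$ (before GST, asynchronous rounds can only complete faster or be stalled, but after GST message delays are $\le\delta$, so the worst-case completion is bounded by counting rounds from $\max(\tau,\text{GST})$). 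However, each process is also forced to wait $\Delta_{\mathit{shift}} + \Delta_1$ local time in Step~1; since local clocks do not drift after GST, and this waiting started no later than $\max(\tau,\text{GST})$ in the worst case accounting, the additional fixed wait is already subsumed in how $\mathit{latency}(\block)$ is structured — here I would need to be slightly careful and possibly invoke that $\Delta_1 = \mathit{latency}(\mathcal{GC}_1)\cdot\delta$ exactly, so the wait and the decision coincide in the bound. (3) Repeat the argument step-by-step: Step~2 (simulation of $\mathcal{A}^S$) takes exactly $\mathcal{R}\cdot\Delta_{\mathit{sync}}$ time by construction; Step~3 is instantaneous (local); Step~4 ($\mathcal{GC}_2$) contributes $\mathit{latency}(\mathcal{GC}_2)\cdot\delta$ analogously to Step~1; Step~5 is local; Step~6 ($\mathcal{VB}$) contributes $\mathit{latency}(\mathcal{VB})\cdot\delta$ by the termination property of validation broadcast (all correct processes broadcast to $\mathcal{VB}$, none abandons, so all receive $\mathsf{completed}$); Step~7 is local. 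Summing gives completion by $\max(\tau,\text{GST}) + \mathit{latency}(\block) = \tau'$. (4) Conclude that ``complete view $\vmax$'' — i.e., receive a $\mathcal{CX}(\vmax).\mathsf{completed}$ indication — happens for every correct process by $\tau'$.

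The main obstacle I anticipate is the bookkeeping around the fixed waiting periods ($\Delta_{\mathit{shift}} + \Delta_1$ in Step~1 and $\Delta_{\mathit{shift}} + \Delta_2$ in Step~4) versus the adversarial-round-count latencies of the submodules, particularly reconciling the case $\tau < \text{GST}$: before GST the forced waits use possibly-drifting clocks, but since $\vmax$ is by definition entered before GST while the \emph{completion} is what we are bounding post-GST, I need to argue that the relevant waiting intervals effectively run from $\max(\tau,\text{GST})$ at worst (a process that entered early has already ``used up'' some waiting time, which only helps). The cleanest way to handle this is to note that the definitions of $\Delta_1, \Delta_2$ and $\Delta_{\mathit{sync}}$ were chosen precisely so that $\mathit{latency}(\block)$ upper-bounds the post-GST time needed regardless of how much pre-GST progress occurred — a point the excerpt already flags in the bullet ``Latency after GST'' of \Cref{subsection:block_pseudocode} (``starting or executing steps before GST can only decrease latency''). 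I would cite that reasoning rather than re-derive it, so the proof of this lemma stays short: it is essentially an instantiation of \block's latency-after-GST guarantee with $\tau$ in place of the general ``time all correct processes start.''
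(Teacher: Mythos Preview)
Your proposal is correct and follows essentially the same approach as the paper: a step-by-step walk through Task~1 of \block, accumulating the per-stage latencies $\mathit{latency}(\mathcal{GC}_1)\cdot\delta$, $\mathcal{R}\cdot\Delta_{\mathit{sync}}$, $\mathit{latency}(\mathcal{GC}_2)\cdot\delta$, and $\mathit{latency}(\mathcal{VB})\cdot\delta$ from $\max(\tau,\text{GST})$ to obtain completion by $\tau'$. Your concern about the fixed $\Delta_{\mathit{shift}}+\Delta_j$ waiting periods is in fact more scrupulous than the paper's own proof, which simply sums the submodule latencies without explicitly addressing those waits.
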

\begin{proof}
By time $\max(\tau, \text{GST}) + (\mathit{latency}(\mathcal{GC}_1) \cdot \delta)$, all correct processes decide from $\mathcal{GC}_1$ (i.e., conclude Step 1 of Task 1) as they all overlap while executing $\mathcal{GC}_1$ for at least $\mathit{latency}(\mathcal{GC}_1) \cdot \delta$ time.
Similarly, all correct processes stop executing $\mathcal{A}^S$ (conclude Step 2 of Task 1) by time $\max(\tau, \text{GST}) + (\mathit{latency}(\mathcal{GC}_1) \cdot \delta) + (\mathcal{R} \cdot \Delta_{\mathit{sync}})$.
Furthermore, all correct processes decide from $\mathcal{GC}_2$ by time $\max(\tau, \text{GST}) + (\mathit{latency}(\mathcal{GC}_1) \cdot \delta) + (\mathcal{R} \cdot \Delta_{\mathit{sync}}) + (\mathit{latency}(\mathcal{GC}_2) \cdot \delta)$.
Lastly, all correct processes receive a $\mathsf{completed}$ indication from $\mathcal{VB}$ (and, thus, complete $\vmax$) by time $\max(\tau, \text{GST}) + (\mathit{latency}(\mathcal{GC}_1) \cdot \delta) + (\mathcal{R} \cdot \Delta_{\mathit{sync}}) + (\mathit{latency}(\mathcal{GC}_2) \cdot \delta) + (\mathit{latency}(\mathcal{VB}) \cdot \delta) = \max(\tau, \text{GST}) + \mathit{latency}(\block) = \tau'$.
\end{proof}

Next, we prove that $\tau_{\vfinal} - \text{GST} \leq 2\delta + \mathit{latency}(\block) + 2\delta$ (assuming $\vfinal \neq \bot$).

\begin{lemma} \label{lemma:tau_vfinal_value}
Let $\vfinal \neq \bot$.
Then, $\tau_{\vfinal} - \text{GST} \leq 2\delta + \mathit{latency}(\block) + 2\delta$.
\end{lemma}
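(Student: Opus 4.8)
The plan is to follow the view-synchronization pipeline of \name step by step, tracking how quickly a correct process can be pushed into view $\vfinal = \vmax + 1$ once the network has stabilized. The skeleton is: (a) all correct processes enter view $\vmax$ by time $\text{GST}+2\delta$; (b) all correct processes then complete view $\vmax$ by time $\tau' := \text{GST}+2\delta+\mathit{latency}(\block)$ (via \Cref{lemma:time_complete_vmax}); (c) the resulting $\langle \textsc{start-view}, \vmax+1 \rangle$ amplification, together with the totality property of $\mathcal{CX}(\vmax)$, forces a correct process into $\vfinal$ by $\tau'+2\delta$. A recurring bookkeeping device is that, whenever we reach a time bound $X$ with $X \le \text{GST} + 2\delta + \mathit{latency}(\block) + 2\delta$ and observe $\tau_{\vfinal} \le X$, the claimed inequality already holds; so at each stage we may freely assume $\tau_{\vfinal}$ exceeds the threshold reached so far. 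In particular $\tau_{\vfinal} > \text{GST}+2\delta$, which ensures that during $[\text{GST}, \text{GST}+2\delta]$ every correct process has $\mathit{view}_i \le \vmax$ (nobody has entered $\vfinal$ yet) --- a fact needed for the ``jump'' rule at line~\ref{line:receive_quorum} to fire for $V = \vmax$.

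\textbf{Step (a).} If $\vmax = 1$ this is immediate, since every correct process enters view $1$ before $\text{GST}$. If $\vmax > 1$: the first correct process to enter $\vmax$ (at time $\tau_{\vmax} < \text{GST}$) collected $2t+1$ copies of $\langle \textsc{start-view}, \vmax \rangle$, of which at least $t+1$ were sent by correct processes before $\text{GST}$; these reach every correct process by $\text{GST}+\delta$, trigger the amplification at line~\ref{line:receive_plurality}, and hence every correct process broadcasts $\langle \textsc{start-view}, \vmax \rangle$ by $\text{GST}+\delta$, so every correct process holds $n-t\ge 2t+1$ such messages by $\text{GST}+2\delta$. The accompanying wait at line~\ref{line:wait_to_validate} for a $\mathcal{CX}(\vmax-1).\mathsf{validate}(\cdot)$ also resolves by $\text{GST}+2\delta$: by \Cref{lemma:enter_next_previous_enter_complete} some correct process completed view $\vmax-1$ before $\tau_{\vmax} < \text{GST}$, so the totality property of $\mathcal{CX}(\vmax-1)$ yields a validation at every correct process by $\text{GST}+2\delta$. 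Thus every correct process enters $\vmax$ by $\text{GST}+2\delta$.

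\textbf{Steps (b) and (c).} To invoke \Cref{lemma:time_complete_vmax} with $\tau = \text{GST}+2\delta$ we must check that no correct process abandons view $\vmax$ before $\tau' = \text{GST}+2\delta+\mathit{latency}(\block)$. A correct process abandons $\vmax$ only by entering a higher view --- necessarily $\vfinal$ --- at line~\ref{line:leave_v}, or by halting at line~\ref{line:name_stop} after deciding from \name. If the former happens before $\tau'$ then $\tau_{\vfinal} < \tau'$ and we are done. If the latter happens before $\tau'$, then by \Cref{lemma:one_decide_all_decide} every correct process decides from \name, and hence halts, by time $\tau'+2\delta$; since $\vfinal \ne \bot$, some correct process enters $\vfinal$ before it halts, so $\tau_{\vfinal} \le \tau'+2\delta$ and we are done. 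Otherwise no correct process abandons $\vmax$ before $\tau'$, so \Cref{lemma:time_complete_vmax} gives that every correct process completes view $\vmax$ --- and thus broadcasts $\langle \textsc{start-view}, \vmax+1 \rangle$ at line~\ref{line:send_complete_1} --- by $\tau'$. Consequently every correct process has $n-t\ge 2t+1$ copies of $\langle \textsc{start-view}, \vmax+1 \rangle$ by $\tau'+\delta$; moreover the first completion of $\mathcal{CX}(\vmax)$ occurs by $\tau'$, so its totality property gives each correct process a $\mathcal{CX}(\vmax).\mathsf{validate}(\cdot)$ by $\tau'+2\delta$. Assuming $\tau_{\vfinal} > \tau'+2\delta$, every correct process still has $\mathit{view}_i \le \vmax$ at $\tau'+2\delta$, so the rule at line~\ref{line:receive_quorum} fires for $V = \vmax+1$ and some correct process enters $\vfinal$ by $\tau'+2\delta$ --- a contradiction. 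Hence $\tau_{\vfinal} - \text{GST} \le 2\delta + \mathit{latency}(\block) + 2\delta$.

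The main obstacle is not any single estimate but keeping the pre-$\text{GST}$ case analysis coherent. The argument hinges on the maximality of $\vmax$: it is what guarantees both that the $\langle \textsc{start-view}, \vmax \rangle$ amplification was seeded by correct processes before $\text{GST}$ and that view $\vmax-1$ was already completed before $\text{GST}$ (so the $\mathsf{validate}$ wait at line~\ref{line:wait_to_validate} resolves fast). The two remaining ``escape hatches'' for a correct process --- jumping ahead into $\vfinal$ early, or halting because a stray decision from a pre-$\text{GST}$ $\mathcal{CX}(V)$ with $V \le \vmax$ percolated through the finisher $\mathcal{F}$ --- are exactly the situations in which the target inequality becomes trivially true (the first directly, the second via \Cref{lemma:one_decide_all_decide}), so they can be assumed away; threading this ``else we are already done'' reasoning consistently through Steps (a)--(c) is the delicate part.
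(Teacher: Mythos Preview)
Your proof is correct and follows essentially the same contradiction-and-amplification argument as the paper: assume $\tau_{\vfinal}$ exceeds the bound, show every correct process is driven into $\vmax$ by $\text{GST}+2\delta$, invoke \Cref{lemma:time_complete_vmax} to get completion of $\vmax$ by $\tau'$, and then use \textsc{start-view} amplification plus totality of $\mathcal{CX}(\vmax)$ to force entry into $\vfinal$ by $\tau'+2\delta$. Your treatment is in fact slightly more careful than the paper's: you explicitly verify the no-abandonment precondition of \Cref{lemma:time_complete_vmax} by splitting off the two ``escape hatches'' (early entry into $\vfinal$, or early halting via $\mathcal{F}$) and observing that each directly yields the desired bound via \Cref{lemma:one_decide_all_decide}, whereas the paper invokes \Cref{lemma:time_complete_vmax} without discharging that hypothesis.
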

\begin{proof}
By contradiction, suppose $\tau_{\vfinal} > \text{GST} + 2\delta + \mathit{latency}(\block) + 2\delta$.
Hence, no correct process enters any view greater than $\vmax$ by time $\text{GST} + 2\delta + \mathit{latency}(\block) + 2\delta$ (by \Cref{lemma:enter_next_previous_enter_complete}).
First, we prove that all correct processes enter view $\vmax$ by time $\text{GST} + 2\delta$.

\medskip
\noindent \emph{Intermediate result: All correct processes enter view $\vmax$ by time $\text{GST} + 2\delta$.}
\\ If $\vmax = 1$, then every correct process enters view $\vmax$ (line~\ref{line:enter_1}) before GST, which proves the statement.

Let $\vmax > 1$.
Let $p_i$ be the correct process that enters view $\vmax$ (line~\ref{line:enter_v}) at time $\tau_{\vmax} < \text{GST}$.
Therefore, $p_i$ has received $2t + 1$ $\langle \textsc{start-view}, \vmax \rangle$ messages (due to the rule at line~\ref{line:receive_quorum}) by time $\tau_{\vmax}$.
Among the aforementioned $2t + 1$ \textsc{start-view} messages, at least $t + 1$ are broadcast by correct processes.
Note that \Cref{lemma:enter_next_previous_enter_complete} shows that some correct process $p_l$ has completed view $\vmax - 1$ by time $\tau_{\vmax}$.

Now consider any correct process $p_j$.
We prove that $p_j$ broadcasts a \textsc{start-view} message for view $\vmax$ by time $\text{GST} + \delta$.
Indeed, by time $\text{GST} + \delta$, $p_j$ receives $t + 1$ $\langle \textsc{start-view}, \vmax \rangle$ messages (line~\ref{line:receive_plurality}), and broadcasts a $\langle \textsc{start-view}, \vmax \rangle$ message (line~\ref{line:send_complete_2}) assuming that it has not already done so.

As we have proven, all correct processes broadcast a \textsc{start-view} message for view $\vmax$ by time $\text{GST} + \delta$.
Therefore, every correct process $p_k$ receives $2t + 1$ $\langle \textsc{start-view}, \vfinal \rangle$ messages by time $\text{GST} + 2\delta$.
Importantly, when this happens, the rule at line~\ref{line:receive_quorum} activates at process $p_k$ (unless $p_k$ has already entered view $\vmax$) as the value of the $\mathit{view}_k$ variable cannot be greater than $\vmax$ due to the fact that no correct process enters any view greater than $\vmax$ by time $\text{GST} + 2\delta + \mathit{latency}(\block) + 2\delta > \text{GST} + 2\delta$.
Moreover, due to the totality property of $\mathcal{CX}(\vmax - 1)$, $p_k$ validates a value from $\mathcal{CX}(\vmax - 1)$ by time $\text{GST} + 2\delta$ (line~\ref{line:wait_to_validate}); recall that some correct process $p_l$ has completed view $\vmax - 1$ by time $\text{GST}$.
Therefore, $p_k$ indeed enters view $\vmax$ by time $\text{GST} + 2\delta$ (line~\ref{line:enter_v}).

\medskip
\noindent \emph{Epilogue.}
Due to the intermediate result and \Cref{lemma:time_complete_vmax}, all correct processes complete view $\vmax$ by time $\text{GST} + 2\delta + \mathit{latency}(\block)$ (line~\ref{line:view_timer_expire}).
Therefore, every correct process broadcasts a \textsc{start-view} message for $\vmax + 1 = \vfinal$ by time $\text{GST} + 2\delta + \mathit{latency}(\block)$ (line~\ref{line:send_complete_1}), which implies that every correct process receives $n - t \geq 2t + 1$ \textsc{start-view} messages for view $\vfinal$ by time $\text{GST} + 2\delta + \mathit{latency}(\block) + \delta$ (line~\ref{line:receive_quorum}).
Moreover, as all correct processes complete view $\vmax$ by time $\text{GST} + 2\delta + \mathit{latency}(\block)$, all correct processes validate a value from $\mathcal{CX}(\vmax)$ by time $\text{GST} + 2\delta + \mathit{latency}(\block) + 2\delta$ (line~\ref{line:wait_to_validate}), which proves that $\tau_{\vfinal}  - \text{GST} \leq 2\delta + \mathit{latency}(\block) + 2\delta$.
\end{proof}

Finally, we are ready to prove \name's latency.

\begin{theorem} [Latency] \label{theorem:name_latency}
\name (\Cref{algorithm:name_pseudocode_2}) achieves $O\big( \mathit{latency}(\block) \big)$ latency. 
\end{theorem}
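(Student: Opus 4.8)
The plan is to bound $\tau^\ast - \text{GST}$, where $\tau^\ast$ is the first time by which every correct process has decided, since by definition $\mathit{latency}(\name) = \max\big(\tfrac{1}{\delta}(\tau^\ast - \text{GST}),\, 0\big)$; the claim then follows once we establish $\tau^\ast - \text{GST} \in O(\mathit{latency}(\block))$ (treating the fixed parameter $\delta$ as a constant, consistently with the paper's convention). I would split into the two cases $\vfinal \neq \bot$ and $\vfinal = \bot$, mirroring the proof of \Cref{theorem:name_termination}. As a preliminary I would record the elementary inequality $\Delta_{\mathit{total}} \le \mathit{latency}(\block) + 4\delta$, which is immediate from expanding $\Delta_{\mathit{total}} = (\Delta_{\mathit{shift}} + \Delta_1) + (\mathcal{R}\cdot\Delta_{\mathit{sync}}) + (\Delta_{\mathit{shift}} + \Delta_2)$ and $\mathit{latency}(\block) = (\mathit{latency}(\mathcal{GC}_1)\cdot\delta) + (\mathcal{R}\cdot\Delta_{\mathit{sync}}) + (\mathit{latency}(\mathcal{GC}_2)\cdot\delta) + (\mathit{latency}(\mathcal{VB})\cdot\delta)$ together with $\Delta_1 = \mathit{latency}(\mathcal{GC}_1)\cdot\delta$, $\Delta_2 = \mathit{latency}(\mathcal{GC}_2)\cdot\delta$ and $\Delta_{\mathit{shift}} = 2\delta$; similarly $\mathit{latency}(\block) \ge \Delta_{\mathit{sync}} \ge \delta$, so every additive $O(\delta)$ term below is absorbed into $O(\mathit{latency}(\block))$.

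In the case $\vfinal \neq \bot$, \Cref{theorem:name_termination} gives that all correct processes decide by time $\tau_{\vfinal} + \Delta_{\mathit{total}} + 2\delta$, and \Cref{lemma:tau_vfinal_value} gives $\tau_{\vfinal} - \text{GST} \le 2\delta + \mathit{latency}(\block) + 2\delta$. Chaining these and substituting the preliminary inequality yields $\tau^\ast - \text{GST} \le 2\,\mathit{latency}(\block) + O(\delta) = O(\mathit{latency}(\block))$, so this case is purely bookkeeping of additive $\delta$ terms.

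In the case $\vfinal = \bot$, I would show by contradiction that some correct process decides by $\tau' := \text{GST} + 2\delta + \mathit{latency}(\block)$. Assume not; then (since deciding and halting coincide at lines~\ref{line:name_decide}--\ref{line:name_stop}) no correct process halts by $\tau'$. By \Cref{lemma:vfinal_bot}, $\vmax$ is the largest view in $\mathcal{V}$, so no correct process ever enters a view above $\vmax$. Reusing the intermediate-result argument inside the proof of \Cref{lemma:tau_vfinal_value} — which needs only that no correct process enters a view above $\vmax$ by time $\text{GST} + 2\delta$ — I would conclude that all correct processes enter view $\vmax$ by time $\text{GST} + 2\delta$. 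Since by assumption none halts before $\tau'$ and none enters a higher view, none abandons view $\vmax$ before $\tau'$, so \Cref{lemma:time_complete_vmax} yields that all correct processes complete view $\vmax$ by $\tau'$ and hence broadcast $\langle \textsc{start-view}, \vmax + 1 \rangle$ (line~\ref{line:send_complete_1}); within $2\delta$ more, every correct process receives $2t+1$ such messages and, by the totality property of $\mathcal{CX}(\vmax)$, validates a value from it, and therefore enters view $\vmax + 1$ (line~\ref{line:enter_v}). But then $\vmax + 1 \in \mathcal{V}$ with first-entering time at least $\text{GST}$, contradicting $\vfinal = \bot$. Hence some correct process decides by $\tau'$, and \Cref{lemma:one_decide_all_decide} propagates the decision, so $\tau^\ast \le \max(\tau', \text{GST}) + 2\delta = \text{GST} + O(\mathit{latency}(\block))$.

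Combining the two cases, $\tau^\ast - \text{GST} \in O(\mathit{latency}(\block))$ in every execution, which gives $\mathit{latency}(\name) \in O(\mathit{latency}(\block))$. I expect the $\vfinal = \bot$ case to be the main obstacle: the existing lemmas are phrased under $\vfinal \neq \bot$, so the chain ``all processes enter $\vmax$ by $\text{GST}+2\delta$ $\Rightarrow$ all complete $\vmax$ by $\tau'$ $\Rightarrow$ all enter $\vmax+1$'' must be re-derived under the contradiction hypothesis, taking care that each application of \Cref{lemma:time_complete_vmax}, of the totality property of the \block instances, and of the view-synchronization amplification at lines~\ref{line:receive_plurality}--\ref{line:enter_v} has its preconditions verified. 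The $\vfinal \neq \bot$ case, by contrast, is just additive-$\delta$ accounting layered on \Cref{theorem:name_termination} and \Cref{lemma:tau_vfinal_value}.
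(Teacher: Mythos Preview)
For $\vfinal \neq \bot$ your argument is exactly the paper's: invoke \Cref{theorem:name_termination} and \Cref{lemma:tau_vfinal_value} and absorb $\Delta_{\mathit{total}}$ into $O(\mathit{latency}(\block))$.

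For $\vfinal = \bot$ you take a genuinely different (and more careful) route. The paper dispatches this case with the bare assertion that the latency is then $0$; your instinct to argue it out is well placed, since that assertion is not obviously justified---for instance, if all correct processes are still inside $\mathcal{CX}(\vmax)$ at GST, decide from it shortly after GST, run the finisher, and halt before any of them completes $\vmax$, then no view is first-entered after GST (so $\vfinal=\bot$) yet $\tau^\ast > \text{GST}$. Your contradiction argument correctly recovers the $O(\mathit{latency}(\block))$ bound regardless, and is the right way to close the case.

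One timing detail in that contradiction needs tightening. You assume no correct process decides by $\tau'$, but the final step (``therefore enters view $\vmax+1$'') only takes effect by $\tau'+2\delta$, and your hypothesis does not preclude a process halting in $(\tau',\tau'+2\delta]$ before it reaches line~\ref{line:enter_v}. The fix is immediate: either strengthen the hypothesis to ``no correct process decides by $\tau'+2\delta$'' (so non-halting covers the whole derivation and at least one process really enters $\vmax+1$), or simply note that if the contradiction fails then some process decided in $(\tau',\tau'+2\delta]$, which already gives $\tau^\ast \le \tau'+4\delta$ via \Cref{lemma:one_decide_all_decide}. Either way the $O(\mathit{latency}(\block))$ bound survives.
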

\begin{proof}
If $\vfinal = \bot$, \name's latency is $0$.
Hence, let $\vfinal \neq \bot$.
By \Cref{theorem:name_termination}, all correct processes decide by time $\tau_{\vfinal} + \Delta_{\mathit{total}} + 2\delta$.
Due to \Cref{lemma:tau_vfinal_value}, $\tau_{\vfinal} - \text{GST} \in O\big( \mathit{latency}(\block) \big)$.
Therefore, the latency of \name is $\tau_{\vfinal} + \Delta_{\mathit{total}} + 2\delta - \text{GST} \in O(\mathit{latency}(\block) + \Delta_{\mathit{total}}) = O(\mathit{latency}(\block))$.
\end{proof}

\smallskip
\noindent \textbf{On limiting the number of views for which \textsc{start-view} messages are sent.}
Recall that our implementation ensures that all correct processes enter monotonically increasing views, i.e., if a correct process enters a view $v'$ after it has previously entered a view $v$, then $v' > v$.
Moreover, recall that $|\mathcal{V}| \leq \vmax + 2$ (due to \cref{lemma:vfinal_bot,lemma:vfinal+2}).
Therefore, once a correct process enters view $\vmax$ (or any greater view), only $O(1)$ views are left for the process to go through before it terminates.
Let us denote by $\tau_{\geq \vmax}(p_i)$ the time at which process $p_i$ enters a view greater than or equal to $\vmax$, for every correct process $p_i$.
Importantly, our implementation of \name allows for any correct process $p_i$ to visit \emph{all} views smaller than $\vmax$ during the time period $\mathcal{T}_{\mathit{unstable}}(p_i) = [\text{GST}, \tau_{\geq \vmax}(p_i))$.
Therefore, every correct process $p_i$ can visit unboundedly many views (though independent of $n$) during the time period $\mathcal{T}_{\mathit{unstable}}(p_i)$.

Importantly, the problem above can easily be tackled.
Employing the ``waiting'' strategy proposed in~\cite{civit2022byzantine} suffices to guarantee that any correct process $p_i$ visits only $O(1)$ views during the time period $\mathcal{T}_{\mathit{unstable}}(p_i)$.
Let us briefly describe the aforementioned strategy.
When a correct process $p_i$ learns about a new view, $p_i$ does not immediately enter that view (as is the case in our current implementation).
Instead, process $p_i$ waits $\delta$ time; this $\delta$ time is used to learn about other (potentially more advanced) views.
Hence, at least $\delta$ time elapses (after GST) between any two entrances performed by process $p_i$.
As it is ensured that $p_i$ enters $\vmax$ (or a greater view) within $O(1) \cdot \delta$ time after GST (i.e., $\tau_{\geq \vmax}(p_i) - \text{GST} \leq O(1) \cdot \delta$), the proposed strategy ensures that $p_i$ visit only $O(1)$ views during the time period $\mathcal{T}_{\mathit{unstable}}(p_i)$.
We opted not to include this logic in our implementation of \name for the sake of simplicity and presentation.

\section{Existing Primitives} \label{section:existing_primitives}

This section outlines the fundamental building blocks utilized in our implementations of (1) rebuilding broadcast (\Cref{section:rebuilding_broadccast}), (2) graded consensus (\Cref{section:graded_consensus_concrete_implementations}), and (3) validation broadcast (\Cref{section:validation_broadcast_concrete_implementations}).

\smallskip
\noindent \textbf{Error-correcting codes.}
We use error-correcting codes.
Concretely, we use the standard Reed-Solomon (RS) codes~\cite{reed1960polynomial}.
We denote by $\mathsf{RSEnc}$ and $\mathsf{RSDec}$ the encoding and decoding algorithms.
Briefly, $\mathsf{RSEnc}(M, m, k)$ takes as input a message $M$ consisting of $k$ symbols, treats it as a polynomial of degree $k - 1$, and outputs $m$ evaluations of the corresponding polynomial. Moreover, each symbol consists of $O\big( \frac{|M|}{k} + \log(m) \big)$ bits.
On the other hand, $\mathsf{RSDec}(k, r, T)$ takes as input a set of symbols $T$ (some of which may be incorrect), and outputs a polynomial of degree $k - 1$ (i.e., $k$ symbols) by correcting up to $r$ errors (incorrect symbols) in $T$.
Importantly, $\mathsf{RSDec}$ can correct up to $r$ errors in $T$ and output the original message if $|T| \geq k + 2r$~\cite{macwilliams1977theory}. 
One concrete instantiation of RS codes is the Gao algorithm~\cite{gao2003new}.

\smallskip
\noindent \textbf{Collision-resistant hash function.}
We assume a cryptographic collision-resistant hash function $\mathsf{hash}(\cdot)$ that guarantees that a computationally bounded adversary cannot devise two inputs $i_1$ and $i_2$ such that $\mathsf{hash}(i_1) = \mathsf{hash}(i_2)$, except with a negligible probability.
Each hash value is of size $\kappa$ bits; we assume $\kappa > \log(n)$.\footnote{If $\kappa \leq \log(n)$, $t \in O(n)$ faulty processes would have computational power exponential in $\kappa$.}

\smallskip
\noindent \textbf{Cryptographic accumulators.}
We use standard cryptographic accumulators~\cite{abraham2023communication,nayak2020improved}.
A cryptographic accumulator scheme constructs an accumulation value for a set of values and produces a witness for each value in the set.
Given the accumulation value and a witness, any process can verify if a value is indeed in the set.
More formally, given a parameter $\kappa$ and a set $\mathcal{D}$ of $n$ values $d_1, ..., d_n$, an accumulator has the following components:
\begin{compactitem}
    \item $\mathsf{Gen}(1^\kappa, n)$: This algorithm takes a parameter $\kappa$ represented in the unary form $1^\kappa$ and an accumulation threshold $n$ (an upper bound on the number of values that can be accumulated securely); returns an accumulator key $a_k$.
    The accumulator key $a_k$ is public.

    \item $\mathsf{Eval}(a_k, \mathcal{D})$: This algorithm takes an accumulator key $a_k$ and a set $\mathcal{D}$ of values to be accumulated; returns an accumulation value $z$ for the value set $\mathcal{D}$.

    \item $\mathsf{CreateWit}(a_k, z, d_i, \mathcal{D})$: This algorithm takes an accumulator key $a_k$, an accumulation value $z$ for $\mathcal{D}$ and a value $d_i$; returns $\bot$ if $d_i \notin \mathcal{D}$, and a witness $\omega_i$ if $d_i \in \mathcal{D}$.

    \item $\mathsf{Verify}(a_k, z, \omega_i, d_i)$: This algorithm takes an accumulator key $a_k$, an accumulation value $z$ for $\mathcal{D}$ and a value $d_i$; returns $\mathit{true}$ if $\omega_i$ is the witness for $d_i \in \mathcal{D}$, and $\mathit{false}$ otherwise.
\end{compactitem}
Concretely, we use Merkle trees~\cite{merkle1987digital} as our cryptographic accumulators as they are purely hash-based.
Elements of $\mathcal{D}$ form the leaves of a Merkle tree, the accumulator key is a specific hash function, an accumulation value is the Merkle tree root, and a witness is a Merkle tree proof. 
Importantly, the size of an accumulation value is $O(\kappa)$ bits, and the size of a witness is $O(\log(n) \kappa)$ bits, where $\kappa$ denotes the size of a hash value.
Throughout the remainder of the paper, we refrain from explicitly mentioning the accumulator key $a_k$ as we assume that the associated hash function is fixed. 
Moreover, the accumulator scheme is assumed to be collision-free, i.e., for any accumulator key $ak \gets \mathsf{Gen}(1^{\kappa}, n)$, it is computationally impossible to establish $(\{d_1, ..., d_n\}, d', w')$ such that (1) $d' \notin \{d_1, ..., d_n\}$, (2) $z \leftarrow \mathsf{Eval}(ak, \{d_1, ..., d_n\})$, and (3) $\mathsf{Verify}(ak, z, w', d') = \mathit{true}$. In our case, this property is reduced to the collision resistance of the underlying hash function. 
It is important to mention that bilinear accumulators allow for witnesses of size $\kappa$ bits; however, they require a trusted powers-of-tau setup to establish $q$-SDH public parameters \cite{DXR22a}.

In the rest of the paper, $\MRoot{v} = \mathsf{Eval}\Big( \{\big( 1,P_v(1) \big), ..., \big( n,P_v(n) \big)\} \Big)$, where $\mathsf{Eval}$ is the accumulator evaluation function (see above) and $[P_v(1), ..., P_v(n)] = \mathsf{RSEnc}(v, n, t + 1)$ is the Reed-Solomon encoding of value $v$ (see the paragraph ``Error-correcting codes'' in this section).
We underline that this construction is standard in the literature (see, e.g., \cite{nayak2020improved}). 

\smallskip
\noindent \textbf{Attiya-Welch graded consensus.}
We utilize a graded consensus algorithm proposed by Attiya and Welch~\cite{AttiyaWelch23}; recall that the specification of the graded consensus problem is given in \Cref{subsection:graded_consensus_building_blocks}.
Specifically, the Attiya-Welch (AW, for short) graded consensus algorithm tolerates up to $t < n / 3$ Byzantine processes, incurs $O(n L)$ per-process bit complexity with $L$-sized values, and terminates in $9$ asynchronous rounds.
Crucially, the AW graded consensus algorithm, in addition to the properties specified in \Cref{subsection:graded_consensus_building_blocks}, satisfies the following property:
\begin{compactitem}
    \item \emph{Safety:} If a correct process decides a pair $(v', \cdot)$, then $v'$ has been proposed by a correct process.
\end{compactitem}

\smallskip
\noindent \textbf{Reducing broadcast.}
The reducing broadcast primitive is proposed in~\cite{DBLP:journals/acta/MostefaouiR17}. The corresponding implementation tolerates up to $t < n / 3$ Byzantine processes, incurs $O(n L)$ per-process bit complexity with $L$-sized values, and terminates in $2$ asynchronous rounds. 
The goal of the primitive is to reduce the number of different values held by correct processes to a constant.
The specification is associated with the default value $\bot_{rd} \notin \mathsf{Value}$.
Reducing broadcast exposes the following interface:
\begin{compactitem}
    \item \textbf{request} $\mathsf{broadcast}(v \in \mathsf{Value})$: a process broadcasts value $v$.

    \item \textbf{request} $\mathsf{abandon}$: a process abandons (i.e., stops participating in) reducing broadcast.

    \item \textbf{indication} $\mathsf{deliver}(v' \in \mathsf{Value} \cup \{\bot_{rd}\})$: a process delivers value $v'$ ($v'$ can be $\bot_{rd}$).
\end{compactitem}
Every correct process broadcasts at most once.

The following properties are ensured by reducing broadcast:
\begin{compactitem}
    \item \emph{Validity:} If all correct processes that broadcast do so with the same value, no correct process delivers $\bot_{rd}$.

    \item \emph{Safety:} If a correct process delivers a value $v' \in \mathsf{Value}$ ($v' \neq \bot_{rd}$), then a correct process has previously broadcast $v'$.

    \item \emph{Reduction:} The number of values (including $\bot_{rd}$) that are delivered by correct processes, in total, is $O(1)$.

    \item \emph{Termination:} If all correct processes broadcast and no correct process abandons reducing broadcast, then every correct process eventually delivers a value.
\end{compactitem}

\section{\reduceacool: A-COOL Reduction} \label{section:acool_reduction}

This section introduces a distributed algorithm named \reduceacool that is crucial in our cryptography-less implementations of graded consensus and validation broadcast optimized for long values.
\reduceacool tolerates up to $t < n / 5$ Byzantine processes, and it can be seen as a version of the A-COOL Byzantine agreement protocol introduced by Li and Chen~\cite{li2021communication} without its underlying reliance on a one-bit Byzantine agreement.
Moreover, the \reduceacool algorithm, inspired by the approach taken in~\cite{alhaddad2022balanced}, clarifies the handling of messages made implicit in the original A-COOL algorithm.
Concretely, a message from a process $p$ for phase $q > 1$ is not processed by a correct process unless a message from $p$ for phase $q - 1$ has previously been processed.\footnote{We underline that this detail is not explicitly mentioned in~\cite{li2021communication}.}
\Cref{tab:reduce_cool_summary} outlines the key features of \reduceacool.

\begin{table}[ht]
\centering
\footnotesize
\begin{tabular}{ |p{3.9cm}|p{2.3cm}|p{2.4cm}|p{2cm}|p{1.5cm}|p{1.8cm}|  }
 \hline
 \centering Algorithm & \centering Section & \centering Exchanged bits & \centering Async. rounds & \centering Resilience & 
\centering Cryptography \tabularnewline
 \hline
 \hline

\centering \textbf{ \reduceacool (\Cref{algorithm:a_cool_reduce}) } & \centering \Cref{subsection:reduceacool_pseudocode} & \centering $O\big( nL + n^2 \log(n) \big)$ & \centering $5$ & \centering $t < n / 5$ & \centering None \tabularnewline
\hline
\end{tabular}
    \caption{Relevant aspects of the \reduceacool algorithm proposed by Li and Chen~\cite{li2021communication}.
    \\($L$ denotes the bit-size of a value.)}
\label{tab:reduce_cool_summary}
\end{table}


\subsection{Specification of \reduceacool}

Each correct process can input its value to \reduceacool.
Moreover, each correct process can abandon (i.e., stop participating in) \reduceacool.
Lastly, each correct process can output a pair $(\mathit{success} \in \{0, 1\}, v \in \mathsf{Value})$ from \reduceacool.



The following properties are ensured by \reduceacool:
\begin{compactitem}
    \item \emph{Safety:} Let any correct process output a pair $(1, v)$ from \reduceacool.
    Then, a correct process has previously input $v$ to \reduceacool.

    \item \emph{Agreement:} Let any correct process output a pair $(1, v)$ from \reduceacool.
    If a correct process outputs a pair $(\cdot, v')$ from \reduceacool, then $v' = v$.

    \item \emph{Strong validity:} If all correct processes that input to \reduceacool do so with the same value $v$, then no correct process outputs a pair different from $(1, v)$.
    
    \item \emph{Termination:} If all correct processes input to \reduceacool and no correct process abandons \reduceacool, then every correct process eventually outputs from \reduceacool.
\end{compactitem}

\subsection{\reduceacool: Pseudocode} \label{subsection:reduceacool_pseudocode}

The pseudocode of \reduceacool is given in \Cref{algorithm:a_cool_reduce}.
Recall that \reduceacool (1) tolerates up to $t<n/5$
Byzantine processes, (2) is error-free (i.e., no execution, even with a negligible probability, violates the correctness of \reduceacool), and (3) exchanges $O\big( nL + n^2 \log (n))$ bits. 
\reduceacool internally utilizes Reed-Solomon codes (see \Cref{section:existing_primitives}).


\begin{algorithm}
\caption{\reduceacool: Pseudocode (for process $p_i$)}
\label{algorithm:a_cool_reduce}
\footnotesize
\begin{multicols}{2}
\begin{algorithmic} [1] 
\State \textbf{Constants:}
\State \hskip2em $\mathsf{Integer}$ $k = \lfloor \frac{t}{5} \rfloor + 1$

\medskip
\State \textbf{Local variables:}
\State \hskip2em $\mathsf{Value}$ $\omega^{(i)} \gets $ $p_i$'s input value
\State \hskip2em $\mathsf{Set}(\mathsf{Process})$ $\mathcal{S}^{1}_{0}, \mathcal{S}^{1}_{1}, \mathcal{S}^{2}_{0}, \mathcal{S}^{2}_{1}, \mathcal{S}^{3}_{0}, \mathcal{S}^{3}_{1}, \mathcal{S}^{4}_{0}, \mathcal{S}^{4}_{1}$ $\gets \emptyset$
\State \hskip2em $\mathsf{Array}(\mathsf{RS\_Symbol})$ $\mathit{sym} \gets [\bot, ..., \bot]$

\medskip
\State \textbf{Phase 1:}
\State let $[y^{(i)}_1, y^{(i)}_2, \ldots, y^{(i)}_n] \gets \textsc{RSEnc}(\omega^{(i)}, n, k)$
\State \textbf{send} $\langle\textsc{symbols}, (y^{(i)}_j, y^{(i)}_i)\rangle$ to every process $p_j$

\smallskip
\State \textbf{upon receiving} $\langle\textsc{symbols}, (y^{(j)}_i, y^{(j)}_j)\rangle$ from process $p_j$: 
\State \hskip2em \textbf{if} $(y^{(j)}_i, y^{(j)}_j) = (y^{(i)}_i, y^{(i)}_j)$:
\State \hskip4em  $\mathcal{S}^{1}_{1} \gets \mathcal{S}^{1}_{1} \cup \{ j \}$
\State \hskip2em \textbf{else:}
\State \hskip4em  $\mathcal{S}^{1}_{0} \gets \mathcal{S}^{1}_{0} \cup \{ j \}$

\smallskip
\State \textbf{upon} $|\mathcal{S}^{1}_{0} \cup \mathcal{S}^{1}_{1}| \geq 4t + 1$:

\State \hskip2em \textbf{if} $|\mathcal{S}^{1}_{1}| \geq 3t + 1$: 

\State \hskip4em let $s^{1}_{i} \gets 1$
\State \hskip4em \textbf{broadcast} $\langle P1, s^{1}_{i} \rangle$

\State \hskip2em \textbf{else:}

\State \hskip4em let $s^{1}_{i} \gets 0$, $\omega^{(i)} \gets \phi$
\State \hskip4em \textbf{broadcast} $\langle P1, s^{1}_{i} \rangle$

\smallskip
\State \textbf{upon receiving} $\langle P1, s^{1}_{j} \rangle$ from process $p_j$: 
\State \hskip2em \textbf{if} $s^{1}_{j} = 1$:
\State \hskip4em wait until $j \in \mathcal{S}^{1}_{0} \cup \mathcal{S}^{1}_{1}$
\State \hskip4em \textbf{if} $j \in \mathcal{S}^{1}_{1}$:
\State \hskip6em  $\mathcal{S}^{2}_{1} \gets \mathcal{S}^{2}_{1} \cup \{ j \}$
\State \hskip4em \textbf{else:}
\State \hskip6em  $\mathcal{S}^{2}_{0} \gets \mathcal{S}^{2}_{0} \cup \{ j \}$
\State \hskip2em \textbf{else:}
\State \hskip4em  $\mathcal{S}^{2}_{0} \gets \mathcal{S}^{2}_{0} \cup \{ j \}$

\medskip
\State \textbf{Phase 2:}
\State \textbf{if} $s^{1}_{i} = 1$:

\smallskip
\State \hskip2em \textbf{upon} $|\mathcal{S}^{2}_{0} \cup \mathcal{S}^{2}_{1}| \geq 4t + 1$:

\State \hskip4em \textbf{if} $|\mathcal{S}^{2}_{1}| \geq 3t + 1$: 

\State \hskip6em let $s^{2}_{i} \gets 1$
\State \hskip6em \textbf{broadcast} $\langle P2, s^{2}_{i} \rangle$

\State \hskip4em \textbf{else:}

\State \hskip6em let $s^{2}_{i} \gets 0$, $\omega^{(i)} \gets \phi$
\State \hskip6em \textbf{broadcast} $\langle P2, s^{2}_{i} \rangle$

\State \textbf{else:}
\State \hskip2em let $s^{2}_{i} \gets 0$, $\omega^{(i)} \gets \phi$
\State \hskip2em \textbf{broadcast} $\langle P2, s^{2}_{i} \rangle$

\smallskip
\State \textbf{upon receiving} $\langle P2, s^{2}_{j} \rangle$ from process $p_j$: 
\State \hskip2em \textbf{if} $s^{2}_{j} = 1$:
\State \hskip4em wait until $j \in \mathcal{S}^{2}_{0} \cup \mathcal{S}^{2}_{1}$
\State \hskip4em \textbf{if} $j \in \mathcal{S}^{2}_{1}$:
\State \hskip6em  $\mathcal{S}^{3}_{1} \gets \mathcal{S}^{3}_{1} \cup \{ j \}$
\State \hskip4em \textbf{else:}
\State \hskip6em $\mathcal{S}^{3}_{0} \gets \mathcal{S}^{3}_{0} \cup \{ j \}$
\State \hskip2em \textbf{else:}
\State \hskip4em  $\mathcal{S}^{3}_{0} \gets \mathcal{S}^{3}_{0} \cup \{ j \}$

\medskip
\State \textbf{Phase 3:}
\State \textbf{if} $s^{2}_{i} = 1$:

\smallskip
\State \hskip2em \textbf{upon} $|\mathcal{S}^{3}_{0} \cup \mathcal{S}^{3}_{1}| \geq 4t + 1$:

\State \hskip4em \textbf{if} $|\mathcal{S}^{3}_{1}| \geq 3t + 1$:

\State \hskip6em let $s^{3}_{i} \gets 1$
\State \hskip6em \textbf{send} $\langle P3, s^{3}_{i}, y^{(i)}_j \rangle$ to every process $p_j$ \label{line:reduce_acool_send_symbols_phase_3}

\State \hskip4em \textbf{else}:

\State \hskip6em let $s^{3}_{i} \gets 0$, $\omega^{(i)} \gets \phi$
\State \hskip6em \textbf{broadcast} $\langle P3, s^{3}_{i}, \bot \rangle$

\State \textbf{else:}
\State \hskip2em let $s^{3}_{i} \gets 0$, $\omega^{(i)} \gets \phi$
\State \hskip2em \textbf{broadcast} $\langle P3, s^{3}_{i}, \bot \rangle$

\smallskip
\State \textbf{upon receiving} $\langle P3, s^{3}_{j}, y^{(j)}_i \rangle$ from process $p_j$: 
\State \hskip2em \textbf{if} $s^{3}_{j} = 1$:
\State \hskip4em wait until $j \in \mathcal{S}^{3}_{0} \cup \mathcal{S}^{3}_{1}$
\State \hskip4em \textbf{if} $j \in \mathcal{S}^{3}_{1}$:
\State \hskip6em  $\mathcal{S}^{4}_{1} \gets \mathcal{S}^{4}_{1} \cup \{ j \}$
\State \hskip6em  $sym[j] \gets y^{(j)}_i$
\State \hskip4em \textbf{else:}
\State \hskip6em $\mathcal{S}^{4}_{0} \gets \mathcal{S}^{4}_{0} \cup \{ j \}$

\medskip
\State \textbf{Phase 4:}
\State \textbf{if} $s^{3}_{i} = 1$:

\smallskip
\State \hskip2em \textbf{upon} $|\mathcal{S}^{4}_{0} \cup \mathcal{S}^{4}_{1}| \geq 4t + 1$:

\State \hskip4em \textbf{if} $|\mathcal{S}^{4}_{1}| \geq 3t + 1$: \label{line:reduce_acool_receive_n-t_success_indicators}

\State \hskip6em let $s^{4}_{i} \gets 1$
\State \hskip6em \textbf{broadcast} $\langle P4, y^{(i)}_i \rangle$ \label{line:reduce_acool_share_symbol_validation_unchanged_1}

\State \hskip4em \textbf{else:}

\State \hskip6em let $s^{4}_{i} \gets 0$
\State \hskip6em \textbf{broadcast} $\langle P4, y^{(i)}_i \rangle$ \label{line:reduce_acool_share_symbol_validation_unchanged_0}

\State \textbf{else:}

\State \hskip2em \textbf{upon} $|\mathcal{S}^{4}_{0} \cup \mathcal{S}^{4}_{1}| \geq 4t + 1$: \label{line:reduce_acool_receive_succes_4_phase_3}

\State \hskip4em let $y_i^{(i)} \gets \mathsf{majority}(\mathit{sym})$ \label{line:reduce_acool_majority_validation}

\State \hskip4em  let $s^{4}_{i} \gets 0$
\State \hskip4em \textbf{broadcast} $\langle P4, y^{(i)}_i \rangle$ \label{line:reduce_acool_share_symbol_validation_majority}

\medskip
\State \textbf{Phase 5:}
\State \textbf{if} $s^{3}_{i} = 1$: \label{line:reduce_acool_final_check_success_indicator}
\State \hskip2em \textbf{return} $(s^{4}_{i}, \omega^{(i)})$ \label{line:reduce_acool_deliver_1}
\State \textbf{else:}
\State \hskip2em \textbf{upon} receiving $4t + 1$ RS symbols in $P4$ messages:
\State \hskip4em \textbf{return} $(0, \mathsf{RSDec}(k,t,\text{received symbols}))$ \label{line:reduce_acool_deliver_0} 

\end{algorithmic}
\end{multicols}
\end{algorithm}


\smallskip
\noindent \textbf{Pseudocode description.}
\reduceacool consists of five phases.
In brief, the crucial idea behind \reduceacool (borrowed from the A-COOL algorithm~\cite{li2021communication}) is to reduce the number of possible non-default (i.e., non-$\phi$) values to at most one.
Concretely, after finishing the third phase of \reduceacool, there exists at most one value $\omega \neq \phi$ such that every correct process that has a non-$\phi$ value as its estimation has value $\omega$.
Importantly, if such a value $\omega$ indeed exists (i.e., not all correct processes reach the fourth phase with value $\phi$), then at least $2t+1$ correct processes have value $\omega$.
This is essential as it ensures the successful reconstruction of $\omega$ at all correct processes that have $\phi$ as their estimated decision.
At each phase, processes update and exchange some \emph{success indicators} binary variables $\{s^\rho_i\}^{\rho \in [1:4]}_{i \in [1:n] }$. The event of a negative success indicator $(s^\rho_i = 0)$ means that the number of mismatched observations is high enough to imply that the initial message of processor $p_i$ doesn’t match the majority of other processors’ initial messages. On the opposite, the event of a final positive success indicator $(s^4_i = 1)$ means that the corresponding non-$\phi$ value is reconstructed by every correct process.
We refer the reader to~\cite{li2021communication} for the full details on how \reduceacool (using the logic of A-COOL) reduces the number of non-$\phi$ values to at most one before reaching the fourth phase.

\subsection{\reduceacool: Proof of Correctness \& Complexity}

We underline that \reduceacool's proof of correctness and complexity can be found in~\cite{li2021communication}.
For completeness, we summarize the proof in this subsection.

\smallskip
\noindent \textbf{Proof of correctness.}
First, we prove that \reduceacool satisfies termination.

\begin{theorem} [Termination]
\reduceacool (\Cref{algorithm:a_cool_reduce}) satisfies termination.
\end{theorem}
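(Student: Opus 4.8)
The plan is to prove termination by a forward induction over the five phases of \reduceacool. Throughout, I would rely on three facts: since $t < n/5$, there are $n - t \geq 4t + 1$ correct processes; by hypothesis every correct process inputs to \reduceacool and none abandons it; and the network is reliable, so every message sent by a correct process to a correct process is eventually delivered.

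\textbf{Base case (Phase 1).} Every correct process $p_j$ sends $\langle \textsc{symbols}, \cdot\rangle$ to all processes, so every correct $p_i$ eventually receives such a message from each of the $\geq 4t+1$ correct processes and therefore eventually reaches $|\mathcal{S}^1_0 \cup \mathcal{S}^1_1| \geq 4t+1$. The corresponding ``upon'' then fires and $p_i$ broadcasts $\langle P1, s^1_i\rangle$. Thus every correct process eventually emits its Phase-1 message.

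\textbf{Inductive step (Phases 2, 3, 4).} I would show: if every correct process eventually emits its Phase-$(q-1)$ message, then every correct process eventually emits its Phase-$q$ message. A correct process entering phase $q$ with a negative success indicator emits its Phase-$q$ message immediately (the ``else'' branches), so the claim is trivial for it. A correct process $p_i$ entering phase $q$ with success indicator $1$ must clear an ``upon $|\mathcal{S}^{q}_0 \cup \mathcal{S}^{q}_1| \geq 4t+1$'' guard, where membership in $\mathcal{S}^{q}_{\cdot}$ is granted, upon receipt of a correct $p_j$'s Phase-$(q-1)$ message, only after a blocking ``wait until $j \in \mathcal{S}^{q-1}_0 \cup \mathcal{S}^{q-1}_1$''. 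The heart of this step is to argue that these nested waits never deadlock: a correct process emits its Phase-$(q-1)$ message only after it has itself registered its Phase-$(q-2)$ message, and, unwinding this chain of prerequisites all the way down to the $\langle \textsc{symbols}, \cdot\rangle$ message (which is always eventually received), one obtains that $p_i$ eventually places every correct $p_j$ into $\mathcal{S}^{q-1}_{\cdot}$ and hence, after processing $p_j$'s Phase-$(q-1)$ message, into $\mathcal{S}^{q}_{\cdot}$. Since this holds for all $\geq 4t+1$ correct senders, the threshold guard of phase $q$ is eventually satisfied and $p_i$ emits its Phase-$q$ message.

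\textbf{Phase 5 and conclusion.} A correct process with $s^3_i = 1$ returns $(s^4_i, \omega^{(i)})$ as soon as it has set $s^4_i$ in Phase 4; a correct process with $s^3_i = 0$ returns once it has collected $4t+1$ Reed--Solomon symbols carried by $\langle P4, \cdot\rangle$ messages, which is eventually the case because every correct process broadcasts such a message in Phase 4 along either branch. Combining the steps, every correct process eventually outputs. The main obstacle --- and the only genuinely non-routine ingredient --- is the quantitative analysis underlying the inductive step: one must show that the success-indicator thresholds ($\geq 3t+1$ entries in $\mathcal{S}^{q}_1$, $\geq 4t+1$ in $\mathcal{S}^{q}_0 \cup \mathcal{S}^{q}_1$) behave consistently across phases, so that the guards are actually reachable and no correct process waits forever. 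This is precisely the amplification property at the core of the A-COOL construction; I would not reprove it but cite it from \cite{li2021communication}, together with the standard quorum-intersection and Reed--Solomon degree arguments, to finish.
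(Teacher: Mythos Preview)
Your phase-by-phase induction is correct and is essentially an expanded version of the paper's one-line argument (which simply notes that every blocking threshold is $4t+1$ while there are at least $n-t\geq 4t+1$ correct processes, and then cites \cite[Lemma 3]{li2021communication}).

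Your final paragraph, however, misidentifies what remains. The $3t{+}1$ check on $|\mathcal{S}^q_1|$ is \emph{not} a guard: it is the branch condition of an if/else that sets $s^q_i$ to $1$ or $0$ and then proceeds either way. The only blocking points are (i) the ``upon $|\mathcal{S}^q_0\cup\mathcal{S}^q_1|\geq 4t+1$'' guards and (ii) the nested ``wait until $j\in\mathcal{S}^{q-1}_0\cup\mathcal{S}^{q-1}_1$'' waits, and your induction already handles both: every correct $p_j$ eventually lands in $\mathcal{S}^{q-1}_\cdot$ at every correct $p_i$ (by unwinding to the $\textsc{symbols}$ message), so every correct $p_j$ is eventually placed in $\mathcal{S}^{q}_\cdot$, and since there are $\geq 4t+1$ correct processes the guard fires. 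Nothing further is needed---in particular, the A-COOL ``amplification'' argument, quorum intersection, and Reed--Solomon degree bounds are tools for \emph{agreement} and \emph{safety} (Lemmas~4--6 of \cite{li2021communication}), not termination. You can simply delete that paragraph; what precedes it is already a complete proof.
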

\begin{proof}
As there are at least $n-t=4t + 1$ correct processes and thresholds at each phase are set to at most $4t + 1$, no correct process gets stuck at any phase of \reduceacool. 
For the full proof, see~\cite[Lemma 3]{li2021communication}.
\end{proof}

Next, we prove the safety property.

\begin{theorem} [Safety]
\reduceacool (\Cref{algorithm:a_cool_reduce}) satisfies safety.
\end{theorem}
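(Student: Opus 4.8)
The plan is to prove that if any correct process outputs a pair $(1, v)$ from \reduceacool, then some correct process must have input $v$ to \reduceacool. The key observation is that a correct process $p_i$ can only return a pair with a positive success indicator ($s^4_i = 1$) from line~\ref{line:reduce_acool_deliver_1}, which requires $s^3_i = 1$ (the check at line~\ref{line:reduce_acool_final_check_success_indicator}). In turn, $s^3_i = 1$ is only set when $p_i$ reaches Phase 3 with $s^2_i = 1$ and observes $|\mathcal{S}^3_1| \geq 3t + 1$, which cascades backwards: $s^2_i = 1$ requires $s^1_i = 1$, and $s^1_i = 1$ requires $|\mathcal{S}^1_1| \geq 3t + 1$. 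Crucially, along this entire chain, whenever a success indicator is set to $1$, the process never executes the branch that resets $\omega^{(i)} \gets \phi$. Hence $p_i$'s value $\omega^{(i)}$ at the end of the protocol is still its \emph{input} value (it was never reset to $\phi$). So if $p_i$ itself returns $(1, \omega^{(i)})$, then $v = \omega^{(i)}$ is precisely $p_i$'s own input, and $p_i$ is a correct process, which establishes safety immediately.

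First I would make precise the monotonicity of the reset: $\omega^{(i)}$ is only ever modified by being set to $\phi$, and this happens exactly in the branches where a success indicator $s^\rho_i$ is set to $0$ (in Phases 1, 2, 3). Therefore $\omega^{(i)} \neq \phi$ at Phase 5 if and only if $s^1_i = s^2_i = s^3_i = 1$. Second, I would observe that the only way a correct process returns a pair of the form $(1, \cdot)$ is via line~\ref{line:reduce_acool_deliver_1}: the alternative return at line~\ref{line:reduce_acool_deliver_0} always returns a pair $(0, \cdot)$. So a $(1, v)$ output forces $s^3_i = 1$ and $s^4_i = 1$, and in particular $\omega^{(i)} = v$ with $v \neq \phi$ (and $v$ equal to $p_i$'s input value, since $\phi \notin \mathsf{Value}$ and the input is a genuine $\mathsf{Value}$). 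Third, I would conclude: since the outputting process $p_i$ is correct and $v$ is its input, a correct process has input $v$, as required.

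The plan is deliberately short because the safety property for \reduceacool is almost entirely bookkeeping about which branches reset $\omega^{(i)}$; the hard combinatorial content of A-COOL lies in the \emph{agreement} and \emph{strong validity} properties (ensuring at most one non-$\phi$ value survives to Phase 4 and that it is reconstructed correctly everywhere), not in safety. I expect the main (minor) obstacle to be a clean statement of the ``$\omega^{(i)}$ is never reset to a non-input value'' invariant, making sure I account for the fact that in Phase 4 and Phase 5 there is no further modification of $\omega^{(i)}$ (only the symbol arrays $sym$ and $y^{(i)}_i$ are touched). Once that invariant is in place, the argument is a one-line trace back through the success-indicator chain. I would also cite~\cite[Lemma~2 and Theorem~1]{li2021communication} for the original argument, and note that the added ``wait until $j \in \mathcal{S}^\rho_0 \cup \mathcal{S}^\rho_1$'' guards do not affect which values processes hold, only the ordering of message processing, so they do not disturb the safety invariant.

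\begin{proof}
Suppose a correct process $p_i$ outputs a pair $(1, v)$ from \reduceacool. A correct process returns either via line~\ref{line:reduce_acool_deliver_1} or via line~\ref{line:reduce_acool_deliver_0}; the latter always returns a pair whose first component is $0$. Hence $p_i$ returns via line~\ref{line:reduce_acool_deliver_1}, which is reached only when $s^3_i = 1$ (line~\ref{line:reduce_acool_final_check_success_indicator}), and the returned pair is $(s^4_i, \omega^{(i)})$, so $s^4_i = 1$ and $v = \omega^{(i)}$.

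We now argue $\omega^{(i)}$ still equals $p_i$'s input value. The variable $\omega^{(i)}$ is initialized to $p_i$'s input value, and the only statements that modify it are the assignments $\omega^{(i)} \gets \phi$, which occur exactly in the Phase~1, Phase~2, and Phase~3 branches that simultaneously set the corresponding success indicator $s^1_i$, $s^2_i$, respectively $s^3_i$ to $0$; no statement in Phase~4 or Phase~5 modifies $\omega^{(i)}$. Since $s^3_i = 1$, process $p_i$ executed the Phase~3 branch with $s^3_i \gets 1$, which in turn requires $s^2_i = 1$, which requires $s^1_i = 1$; along this chain none of the $\omega^{(i)} \gets \phi$ assignments was executed. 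Therefore $\omega^{(i)}$ retains its initial value, i.e., $v = \omega^{(i)}$ is $p_i$'s input to \reduceacool. As $p_i$ is correct, a correct process has input $v$ to \reduceacool, which is what the safety property asserts.
\end{proof}
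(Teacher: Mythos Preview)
Your proof is correct and follows essentially the same approach as the paper's: both argue that a $(1,v)$ output must come from line~\ref{line:reduce_acool_deliver_1}, which forces $s^3_i = 1$, and then use the observation that $\omega^{(i)}$ is only ever overwritten (to $\phi$) in branches that set a success indicator to $0$, so $\omega^{(i)}$ still equals $p_i$'s input. The paper is terser, packaging this as two invariants (``$\omega^{(i)} \in \{\phi,\text{input}\}$'' and ``$\omega^{(i)}=\phi \Rightarrow s^3_i=0$''), whereas you spell out the chain $s^3_i=1 \Rightarrow s^2_i=1 \Rightarrow s^1_i=1$ explicitly; the content is the same.
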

\begin{proof}
The following holds in \reduceacool, for each correct process $p_i$: (1) $\omega^{(i)} \in \{\phi, p_i\text{'s input value}\}$, and (2) $\omega^{(i)} = \phi$ implies $s^{3}_i = 0$.
The check at line~\ref{line:reduce_acool_final_check_success_indicator} ensures that a correct process $p_i$ returns $(1, \omega^{(i)})$ at line~\ref{line:reduce_acool_deliver_1} only if $s^{3}_i = 1$.
Hence, the two statements ensure that $p_i$ returns $(1, \omega^{(i)})$ only if $\omega^{(i)}$ is equal to $p_i$'s input value.
\end{proof}

Next, we restate the key lemma from the A-COOL algorithm~\cite{li2021communication}.

\begin{lemma}\label{lemma:non_duplicity_reduce_acool}
At the end of phase 3 of \reduceacool (\Cref{algorithm:a_cool_reduce}) with $n \geq 5t+1$, there exists at most one group of correct processes such that (1) the processes within this group have the same non-$\phi$ value, and (2) the correct processes outside
this group have $\phi$ as their value.
\end{lemma}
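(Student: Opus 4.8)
The plan is to track how the success indicators $s^1_i, s^2_i, s^3_i$ propagate through the first three phases and use a counting argument on the quorum thresholds. First I would fix notation: for each phase $\rho \in \{1,2,3\}$, let $\mathcal{H}^\rho$ denote the set of correct processes $p_i$ that still hold a non-$\phi$ value $\omega^{(i)}$ when they finish phase $\rho$ (equivalently, those with $s^\rho_i = 1$, together with the observation that $\omega^{(i)} = \phi$ forces $s^3_i = 0$ and similarly for earlier phases). Note $\mathcal{H}^1 \supseteq \mathcal{H}^2 \supseteq \mathcal{H}^3$, since a process only ever switches its value to $\phi$ and never back. The claim to prove is that all processes in $\mathcal{H}^3$ share a common value (this is the ``at most one group'' statement), and the phrasing about processes ``outside the group'' holding $\phi$ is then immediate because any correct process not in $\mathcal{H}^3$ has $\omega^{(i)} = \phi$ by definition of $\mathcal{H}^3$.

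The core of the argument is a standard quorum-intersection computation, carried out over two phases. Suppose, for contradiction, that two correct processes $p_a, p_b \in \mathcal{H}^3$ hold distinct non-$\phi$ values $\omega^{(a)} \neq \omega^{(b)}$, which (by the safety-type invariant) must be their respective original inputs, and these inputs differ. Since $p_a$ reached $s^3_a = 1$, it must have seen $|\mathcal{S}^3_1| \ge 3t+1$ entries for $p_a$, meaning at least $2t+1$ \emph{correct} processes $p_j$ sent $p_a$ a phase-2 message with $s^2_j = 1$ \emph{and} had $j \in \mathcal{S}^2_1$ from $p_a$'s viewpoint — i.e., these correct processes had matching encoded symbols with $p_a$, hence the same underlying polynomial, hence (for $k = \lfloor t/5\rfloor + 1$, and since $n \ge 5t+1$ correct symbols over-determine a degree-$(k-1)$ polynomial) the same value $\omega^{(a)}$. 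Symmetrically, at least $2t+1$ correct processes agreed with $p_b$ on value $\omega^{(b)}$. Two sets of $\ge 2t+1$ correct processes inside a population of $\ge 4t+1$ correct processes must intersect in at least one correct process, which would then have held both $\omega^{(a)}$ and $\omega^{(b)}$ as its phase-2 value — a contradiction since $\omega^{(a)} \neq \omega^{(b)}$. I would be careful to phrase the symbol-matching step correctly: the relevant fact is that two processes whose Reed--Solomon symbol vectors restricted to a large enough common index set agree must encode the same message, which follows from the $\textsc{RSEnc}/\textsc{RSDec}$ parameters stated in \Cref{section:existing_primitives}.

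The second half of the lemma — that \emph{if} a non-$\phi$ group exists then it has at least $2t+1$ correct members (needed later for reconstruction, though not literally part of this lemma's statement) — I would only mention in passing since it is not asserted here; this lemma only needs the ``at most one'' part. The main obstacle I anticipate is getting the bookkeeping of the nested $\mathcal{S}^\rho_0 / \mathcal{S}^\rho_1$ sets exactly right, in particular the ``wait until $j \in \mathcal{S}^\rho_0 \cup \mathcal{S}^\rho_1$'' synchronization that guarantees a process's phase-$\rho$ classification of $p_j$ is consistent with $p_j$'s own phase-$(\rho-1)$ success indicator. I would handle this by stating a small helper claim: if a correct process $p_i$ places $j \in \mathcal{S}^{\rho+1}_1$, then $p_i$ received a symbol-match with $p_j$ at phase $1$ \emph{and} $p_j$ reported $s^\rho_j = 1$. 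With that claim in hand, the intersection counting goes through cleanly. Since \reduceacool is taken essentially verbatim from~\cite{li2021communication}, I would ultimately defer the full low-level verification to that reference (as the surrounding text already does) and present the above as the proof sketch, citing \cite[Lemma~3 and surrounding analysis]{li2021communication} for the complete details.
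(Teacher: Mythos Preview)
Your final move --- deferring to~\cite{li2021communication} --- is exactly what the paper does (its entire proof is the single sentence ``The lemma is proven in~\cite[Lemma~6]{li2021communication}''), so at the level of \emph{strategy} you are aligned. Two remarks, however.

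First, your sketch contains a genuine gap at the step ``these correct processes had matching encoded symbols with $p_a$, hence the same underlying polynomial, hence \ldots\ the same value $\omega^{(a)}$.'' A correct $p_j$ landing in $p_a$'s set $\mathcal{S}^3_1$ (via $\mathcal{S}^2_1$, via $\mathcal{S}^1_1$) only certifies that $p_j$'s and $p_a$'s RS codewords agree at the \emph{two} positions $\{a,j\}$. Since the code has $k = \lfloor t/5 \rfloor + 1$ message symbols, two-point agreement does not force $\omega^{(j)} = \omega^{(a)}$ once $t \ge 10$. Consequently your intersection argument --- finding a single correct $p_c$ common to $p_a$'s and $p_b$'s supporter sets --- does \emph{not} yield a process holding ``both $\omega^{(a)}$ and $\omega^{(b)}$''; it only yields a $p_c$ whose codeword agrees with $p_a$'s at $\{a,c\}$ and with $p_b$'s at $\{b,c\}$, which is perfectly consistent with three distinct polynomials. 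Your caveat about ``a large enough common index set'' identifies the right obstacle but does not discharge it: the index set you actually produce has size two. The proof in~\cite{li2021communication} leverages the \emph{nested} quorum structure across all three phases (each supporter $p_j$ of $p_a$ with $s^2_j=1$ itself has $\ge 2t+1$ correct supporters with $s^1=1$, etc.), and it is this two-level expansion that eventually forces enough codeword positions to coincide. Your single-level intersection is not sufficient, and presenting it as ``the proof sketch'' risks misleading the reader about where the difficulty lies.

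Second, a minor correction: the relevant result in~\cite{li2021communication} is Lemma~6, not Lemma~3 (Lemma~3 there is the termination argument, which this paper cites separately).
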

\begin{proof}
The lemma is proven in~\cite[Lemma 6]{li2021communication}.
\end{proof}

We are now ready to prove the agreement property.

\begin{theorem} [Agreement]
\reduceacool (\Cref{algorithm:a_cool_reduce}) satisfies agreement.
\end{theorem}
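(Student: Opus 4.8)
The plan is to peel back the definition of a ``successful'' output and then invoke the non-duplicity lemma. Suppose a correct process $p_i$ outputs $(1, v)$. By the guard at \Cref{line:reduce_acool_final_check_success_indicator}, this output is produced at \Cref{line:reduce_acool_deliver_1}, so $s^{3}_i = 1$ and $v = \omega^{(i)}$. A short invariant argument establishes that $\omega^{(i)}$ is reset to $\phi$ exactly when one of $s^{1}_i, s^{2}_i, s^{3}_i$ is set to $0$, and $s^{3}_i = 1$ forces $s^{2}_i = s^{1}_i = 1$; hence $\omega^{(i)} \neq \phi$, and $\omega^{(i)}$ still equals $p_i$'s input value, so $p_i$'s Phase-1 codeword is exactly $\mathsf{RSEnc}(v, n, k)$. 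Because $p_i$ also obtained $s^{4}_i = 1$ at \Cref{line:reduce_acool_receive_n-t_success_indicators}, it collected $|\mathcal{S}^{4}_1| \geq 3t + 1$, so at least $2t + 1$ \emph{correct} processes $p_j$ lie in $\mathcal{S}^{4}_1$; each such $p_j$ sent $\langle P3, s^{3}_j = 1, \cdot \rangle$, so $s^{3}_j = 1$ and, by the same invariant, $\omega^{(j)} \neq \phi$.

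Next I would apply \Cref{lemma:non_duplicity_reduce_acool}: at the end of Phase~3 there is at most one group of correct processes holding a common non-$\phi$ value, and every other correct process holds $\phi$. Since $p_i$ and the $\geq 2t+1$ processes $p_j$ identified above all hold non-$\phi$ values, they all belong to this single group, so they all hold $v = \omega^{(i)}$. In particular each of these $\geq 2t+1$ correct processes has input value $v$, computed $\mathsf{RSEnc}(v, n, k)$ in Phase~1, and broadcasts the (authentic) symbol $y^{(j)}_j$ of that codeword in Phase~4 regardless of the value of $s^{4}_j$. This is the structural fact I want to use: the non-$\phi$ value seen by correct processes is unique, equals $v$, and is backed by at least $2t+1$ correct Reed--Solomon symbols.

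Finally I would take an arbitrary correct process $p_\ell$ that outputs $(\cdot, v')$ and split on $s^{3}_\ell$. If $s^{3}_\ell = 1$, then $p_\ell$ outputs $(s^{4}_\ell, \omega^{(\ell)})$ at \Cref{line:reduce_acool_deliver_1} with $\omega^{(\ell)} \neq \phi$, so by \Cref{lemma:non_duplicity_reduce_acool} it is in the same group and $\omega^{(\ell)} = v$, i.e.\ $v' = v$. If $s^{3}_\ell = 0$, then $p_\ell$ outputs $(0, \mathsf{RSDec}(k, t, T))$ at \Cref{line:reduce_acool_deliver_0} for the multiset $T$ of $4t+1$ Phase-4 symbols it received; among these, all symbols contributed by the $\geq 2t+1$ correct group members are the true symbols of $\mathsf{RSEnc}(v,n,k)$, and the same holds for the symbols broadcast at \Cref{line:reduce_acool_share_symbol_validation_majority} by the remaining correct ($\phi$-holding) processes, since their $\mathsf{majority}$ at \Cref{line:reduce_acool_majority_validation} is taken over symbols that originated only from processes with $s^{3} = 1$, a strict majority of which are correct. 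Hence $T$ contains at most $t$ corrupted symbols, so $\mathsf{RSDec}(k, t, T)$ — which succeeds since $|T| = 4t+1 \geq k + 2t$, valid because $k = \lfloor t/5 \rfloor + 1 \leq 2t+1$ — recovers the polynomial encoding $v$, giving $v' = v$.

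The main obstacle is precisely this last step: verifying, with the exact A-COOL thresholds and the ``message from $p$ for phase $q$ processed only after a phase-$(q-1)$ message from $p$'' discipline, that the symbol multiset collected by a $\phi$-holding process really does contain at most $t$ errors — equivalently, that enough authentic symbols of $\mathsf{RSEnc}(v,n,k)$ reach it and that the $\mathsf{majority}$ step never propagates a bad symbol. Rather than re-deriving this counting, I would quote the corresponding quantitative lemmas of Li and Chen~\cite{li2021communication}, exactly as the surrounding subsection does for termination, safety, and \Cref{lemma:non_duplicity_reduce_acool}.
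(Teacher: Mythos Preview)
Your proposal is correct and follows essentially the same route as the paper's proof: both start from an output $(1,v)$, deduce $s^{4}_i=s^{3}_i=1$ and hence $\geq 2t+1$ correct processes with $s^{3}=1$, invoke \Cref{lemma:non_duplicity_reduce_acool} to conclude they all hold $v$, and then split on $s^{3}_\ell$ for an arbitrary correct $p_\ell$. The only substantive difference is at the majority step for $\phi$-holders: the paper makes the quorum-intersection explicit (``there are at least $t+1$ correct processes that both $p_j$ and $p_i$ have heard from'', yielding $\geq t+1$ correct symbols against $\leq t$ Byzantine in $p_\ell$'s $\mathit{sym}$ array), whereas you assert ``a strict majority of which are correct'' and defer the counting to Li--Chen. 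Since the paper itself opens with ``We follow the proof of \cite[Lemma 4]{li2021communication}'' and is likewise a summary, your deferral is in the same spirit; the paper just supplies one more line of the arithmetic.
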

\begin{proof}
We follow the proof of \cite[Lemma 4]{li2021communication}.
Let $p_i$ be a correct process that outputs $(1, \omega_i)$ from \reduceacool.
Process $p_i$ has received $3t + 1$ positive success indicators 
(line~\ref{line:reduce_acool_receive_n-t_success_indicators}), which implies that a set $K$ of at least $3t + 1 - t =2t + 1$ correct processes $p_k$ have $s_k^{3} = 1$ and send a $\langle P3, 1, y_j^{(k)} \rangle$ message to each correct process $p_j$ (line~\ref{line:reduce_acool_send_symbols_phase_3}). 
Note that $p_i \in K$. 
By \Cref{lemma:non_duplicity_reduce_acool}, all correct processes that send a positive success indicator in the third phase hold the same value $\omega_i$ (that was proposed by them). 
Moreover, for each correct process $p_j$, the success indicators it receives at line~\ref{line:reduce_acool_receive_succes_4_phase_3} must contain at least $t+1$ positive success indicators as there are at least $t + 1$ correct processes that both $p_j$ and $p_i$ have heard from.
This implies that, for every correct process $p_l$ with $s_l^{3} = 0$, $p_l$ obtains a correctly-encoded RS symbol at line~\ref{line:reduce_acool_majority_validation} (as at most $t$ incorrect and at least $t + 1$ correct symbols are received by $p_l$).  
 Hence, every correct process that sends a symbol (lines~\ref{line:reduce_acool_share_symbol_validation_unchanged_1}, \ref{line:reduce_acool_share_symbol_validation_unchanged_0} or \ref{line:reduce_acool_share_symbol_validation_majority}) does send a correct symbol, which means that any correct process that outputs at line~\ref{line:reduce_acool_deliver_0} does output $\omega_i$. 
 Finally, by \Cref{lemma:non_duplicity_reduce_acool}, any correct process that outputs at line~\ref{line:reduce_acool_deliver_1} also output $\omega_i$.
\end{proof}

Finally, we prove the strong validity property.

\begin{theorem} [Strong validity]
\reduceacool (\Cref{algorithm:a_cool_reduce}) satisfies strong validity.
\end{theorem}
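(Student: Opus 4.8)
The plan is to show that, under the hypothesis that every correct process inputs the same value $v$, each correct process traverses all four phases with a positive success indicator and never overwrites its estimate $\omega^{(i)}$, so that every correct process eventually returns $(1,v)$ at line~\ref{line:reduce_acool_deliver_1}. The starting observation is purely about the encoding: since $[y^{(i)}_1,\dots,y^{(i)}_n]=\mathsf{RSEnc}(v,n,k)$ is computed deterministically from $v$, any two correct processes $p_i$ and $p_j$ obtain the \emph{same} codeword, hence $y^{(j)}_i=y^{(i)}_i$ and $y^{(j)}_j=y^{(i)}_j$. Consequently, whenever a correct $p_i$ receives the \textsc{symbols} message of a correct $p_j$, the equality test succeeds and $j$ is placed in $\mathcal{S}^{1}_{1}$.

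Next I would run an induction over the phase index $\rho\in\{1,2,3,4\}$ with the invariant: every correct process $p_i$ sets $s^{\rho}_i=1$, keeps $\omega^{(i)}=v$, and (for $\rho\le 3$) broadcasts/sends the phase-$\rho$ message carrying the positive indicator, with the phase-$3$ message additionally carrying the correct symbol $y^{(i)}_j$. The base case $\rho=1$ follows from the encoding observation together with the counting fact that there are at least $n-t=4t+1$ correct processes: at the first moment the trigger $|\mathcal{S}^{1}_{0}\cup\mathcal{S}^{1}_{1}|\ge 4t+1$ fires, at most $t$ of the counted processes are faulty, so $|\mathcal{S}^{1}_{1}|\ge 3t+1$, which forces $s^{1}_i=1$ and leaves $\omega^{(i)}$ untouched. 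For the inductive step I would use the ``wait until $j\in\mathcal{S}^{\rho}_{0}\cup\mathcal{S}^{\rho}_{1}$'' guards: by the inductive hypothesis every correct process is already in $\mathcal{S}^{\rho}_{1}$ of every other correct process and then broadcasts $\langle P\rho,1\rangle$, so each correct process places all (at least $4t+1$) correct senders into $\mathcal{S}^{\rho+1}_{1}$, again meeting the $3t+1$ threshold, hence $s^{\rho+1}_i=1$ with $\omega^{(i)}$ unchanged; in the passage $\rho=3\to 4$ the same argument records a valid symbol $sym[j]=y^{(j)}_i$ for each correct $j$ and then reaches line~\ref{line:reduce_acool_receive_n-t_success_indicators} with $s^{4}_i=1$.

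Finally I would close with Phase 5: since the invariant gives $s^{3}_i=1$ for every correct $p_i$, the check at line~\ref{line:reduce_acool_final_check_success_indicator} takes the first branch, and $p_i$ returns $(s^{4}_i,\omega^{(i)})=(1,v)$. In particular no correct process ever reaches line~\ref{line:reduce_acool_deliver_0}, so no correct process outputs a pair different from $(1,v)$, which is exactly strong validity.

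The main obstacle I anticipate is bookkeeping rather than conceptual: one must argue carefully that at every threshold-crossing point the correct processes \emph{alone} already supply the required $3t+1$ positive entries, so that neither the ``upon $|\cdot|\ge 4t+1$'' triggers nor the ``wait until'' guards let a correct process act prematurely on a set dominated by faulty processes. This is where $n\ge 5t+1$ (giving $n-t=4t+1>3t+1$) is used, and it has to be invoked at each of the four phases; one should also check that the $t$ faulty processes cannot inflate $\mathcal{S}^{\rho}_{0}$ quickly enough to fire $|\mathcal{S}^{\rho}_{0}\cup\mathcal{S}^{\rho}_{1}|\ge 4t+1$ before $3t+1$ correct entries sit in $\mathcal{S}^{\rho}_{1}$, which holds because a correct process inserts $j$ into $\mathcal{S}^{\rho}_{0}$ only upon a message from $p_j$, bounding the faulty contribution by $t$.
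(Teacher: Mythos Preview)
Your proposal is correct and follows the same approach as the paper's proof, which simply states that when all correct processes input the same value, all success indicators stay at $1$ and $\omega^{(i)}$ is never overwritten, citing \cite[Lemma 5]{li2021communication} for the details. You have essentially reconstructed that argument in full: the deterministic encoding makes all correct-to-correct symbol comparisons succeed, and then the counting $n-t\ge 4t+1$ with at most $t$ faulty senders guarantees the $3t+1$ threshold at each phase, so the induction goes through and every correct process returns $(1,v)$ at line~\ref{line:reduce_acool_deliver_1}. Your final paragraph correctly identifies and dispatches the only subtle point---that faulty processes cannot saturate the $4t+1$ trigger before $3t+1$ correct processes populate $\mathcal{S}^{\rho}_{1}$---which the paper leaves implicit.
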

\begin{proof}
We follow the proof of \cite[Lemma 5]{li2021communication}. 
Recall that a message from a process $p$ for phase $q > 1$ is processed only after a message for phase $q-1$ from the same process has been processed.
When every correct process inputs the same value $v$, all correct processes set their success indicators to $1$ and maintain their value throughout the entire algorithm.
Therefore, for every correct process $p_i$, $\omega^{(i)} = v$.
Thus, every correct process outputs $(1, v)$ from \reduceacool (line~\ref{line:reduce_acool_deliver_1}). 
\end{proof}

\smallskip
\noindent \textbf{Proof of complexity}
We prove that any correct process sends $O\big(L + n\log(n) \big)$ bits in $\reduceacool$.

\begin{theorem} [Exchanged bits]
Any correct process sends $O\big( L + n\log(n) \big)$ bits in \reduceacool.
\end{theorem}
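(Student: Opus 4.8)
The plan is to bound, phase by phase, the number of bits a fixed correct process $p_i$ sends in \reduceacool, and then add the five contributions. The one quantitative input I need is the size of a Reed--Solomon symbol: since the encoding has dimension $k = \lfloor t/5 \rfloor + 1 = \Theta(n)$ over $n$ evaluation points, the $\mathsf{RSEnc}$ bound recalled in \Cref{section:existing_primitives} gives that each symbol $y^{(i)}_j$ occupies $O\big(L/k + \log n\big) = O\big(L/n + \log n\big)$ bits. Hence whenever $p_i$ sends one symbol to each of the $n$ processes, the cost is $O(n)\cdot O\big(L/n + \log n\big) = O\big(L + n\log n\big)$ bits, and whenever $p_i$ broadcasts a message carrying only $O(1)$ flags (plus $O(\log n)$ bits of routing/phase metadata) the cost is $O(n\log n)$ bits.

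First I would handle Phase~1: $p_i$ sends one $\langle\textsc{symbols},\cdot\rangle$ message to every process, each carrying two RS symbols, for $O\big(L + n\log n\big)$ bits total, and then broadcasts a single $\langle P1, s^1_i\rangle$ message, for $O(n\log n)$ bits. Phase~2 contributes only the broadcast of $\langle P2, s^2_i\rangle$, i.e.\ $O(n\log n)$ bits. In Phase~3, $p_i$ either sends one symbol $y^{(i)}_j$ to each $p_j$ (cost $O\big(L + n\log n\big)$) or broadcasts $\langle P3, s^3_i, \bot\rangle$ (cost $O(n\log n)$); in either case the contribution is $O\big(L + n\log n\big)$ bits. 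Phase~4 has $p_i$ broadcast $\langle P4, y^{(i)}_i\rangle$, a single symbol, to $n$ processes, for $O\big(L + n\log n\big)$ bits. Phase~5 triggers no sends (it only returns). Summing the five bounds yields $O\big(L + n\log n\big)$ bits per correct process, matching \Cref{tab:reduce_cool_summary}.

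The main obstacle, though a mild one, is to argue carefully that no message type is emitted more than a constant number of times by a correct process, even under adversarial scheduling and Byzantine senders. This follows from the fact that each state variable $\mathcal{S}^\rho_b$ is a monotone set into which a given sender is inserted at most once, that each threshold guard $|\mathcal{S}^\rho_0 \cup \mathcal{S}^\rho_1| \geq 4t+1$ is one-shot, and that (by the stated convention of \Cref{section:acool_reduction}) a phase-$q$ message from a process is ignored until that process's phase-$(q-1)$ message has been processed; I would make this bookkeeping explicit before doing the arithmetic above. A secondary point worth recording is that invoking $\mathsf{abandon}$ can only cause $p_i$ to stop earlier, hence can only decrease its bit count, so the bound holds unconditionally rather than merely under the termination precondition. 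With these observations in place, the theorem is immediate from the per-phase accounting.
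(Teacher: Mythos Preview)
Your proof is correct and follows the same approach as the paper's, only far more explicitly: the paper's entire argument is the single sentence ``Each correct process sends $O\big( L + n\log(n) \big)$ bits via the $\textsc{symbols}$ messages and $P1$, $P2$, $P3$ and $P4$ messages,'' whereas you spell out the per-phase accounting, the RS symbol size, and the one-shot nature of each send. One small remark: your step $k = \lfloor t/5 \rfloor + 1 = \Theta(n)$ tacitly assumes $t = \Theta(n)$; this is the standard regime throughout the paper (and is what makes the stated bound hold), but it is worth flagging rather than asserting as an identity.
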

\begin{proof}
Each correct process sends $O\big( L + n\log(n) \big)$ bits via the $\textsc{symbols}$ messages and $P1$, $P2$, $P3$ and $P4$ messages.
\end{proof}

Lastly, we prove that \reduceacool requires $5$ asynchronous rounds until all correct processes output a pair from \reduceacool.

\begin{theorem}[Asynchronous rounds]
Assuming all correct processes input to $\reduceacool$ and no correct process abandons $\reduceacool$, $\reduceacool$ takes $5$ asynchronous rounds before all correct processes output.
\end{theorem}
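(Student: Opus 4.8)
The plan is to sharpen the already-proven termination argument into a per-round count. The key observation is that \reduceacool is built from a chain of five consecutive communication steps: the $\textsc{symbols}$ messages of Phase~1, followed by the $P1$, $P2$, $P3$, and $P4$ messages, where each step's guard depends only on the messages of the previous step, and a correct process emits its output as soon as it has processed the relevant prefix of this chain. Accordingly, I would prove by induction on $\rho \in \{1,2,3,4\}$ the statement: \emph{if all correct processes input to \reduceacool and no correct process abandons, then every correct process sends its $P\rho$ message (in Phase~$\rho$) by asynchronous round $\rho$}, and then conclude that every correct process outputs by round $5$. Throughout, the proof uses that there are at least $n-t \geq 4t+1$ correct processes (since $n \geq 5t+1$), that point-to-point links deliver within one asynchronous round, and that a correct process emits its phase messages in round-monotone order.

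For the base case, every correct process sends its $\textsc{symbols}$ message at round $0$; since $\textsc{symbols}$ messages carry no processing prerequisite, by round $1$ every correct process has processed at least $4t+1$ of them, so $|\mathcal{S}^1_0 \cup \mathcal{S}^1_1| \geq 4t+1$ and $\langle P1, s^1_i\rangle$ is broadcast by round $1$. For the inductive step ($\rho \in \{2,3,4\}$, with the convention that Phase~1 messages precede Phase~2), I would first show that for every ordered pair of correct processes $(p_i,p_j)$, process $p_i$ has \emph{processed} $p_j$'s $P(\rho-1)$ message by round $\rho$: $p_j$ sends it by round $\rho-1$ (inductive hypothesis), so $p_i$ receives it by round $\rho$, and its processing is gated only by the ``wait until $j \in \mathcal{S}^{\rho-1}_0 \cup \mathcal{S}^{\rho-1}_1$'' clause and by the cross-phase discipline that $p_j$'s phase-$(\rho-1)$ message is handled only after $p_j$'s phase-$(\rho-2)$ message (equivalently, its $\textsc{symbols}$ message when $\rho-1=1$); both prerequisites are, by the inductive hypothesis together with one-round delivery, available to $p_i$ no later than round $\rho-1 \leq \rho$. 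Consequently, by round $\rho$ process $p_i$ has processed the $P(\rho-1)$ messages of all $\geq 4t+1$ correct processes, each of which contributes its sender to $\mathcal{S}^{\rho}_0 \cup \mathcal{S}^{\rho}_1$ through the corresponding receive handler, so the Phase-$\rho$ guard $|\mathcal{S}^{\rho}_0 \cup \mathcal{S}^{\rho}_1| \geq 4t+1$ fires by round $\rho$ and $p_i$ sends $\langle P\rho, \cdot\rangle$ by round $\rho$. (A process that sets a negative success indicator emits the next-phase message even earlier, which only helps; and the fact that these $4t+1$-thresholds are never blocked by the absence of Byzantine participation is precisely the content of the already-proven termination theorem, which I would cite.)

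Finally, applying the inductive claim with $\rho = 4$: every correct process sends $\langle P4, y^{(i)}_i\rangle$ by round $4$ and, in the same local step, fixes $s^4_i$. A correct process with $s^3_i = 1$ then returns $(s^4_i, \omega^{(i)})$ in Phase~5 already at round $4$; a correct process with $s^3_i = 0$ waits for $4t+1$ RS symbols carried in $P4$ messages, and since all $\geq 4t+1$ correct processes have sent such a message by round $4$, it collects $4t+1$ of them and returns by round $5$. Hence every correct process outputs within $5$ asynchronous rounds. The main obstacle is exactly the inductive step's claim that the ``wait until $j \in \mathcal{S}^{\rho-1}_0 \cup \mathcal{S}^{\rho-1}_1$'' guards and the in-order, per-phase processing discipline do not stall a correct process past the round in which the triggering message arrives; handling this cleanly is why the induction is phrased over ``sends $P\rho$ by round $\rho$'' rather than over mere receipt, so that one can argue a correct sender's earlier-phase messages (the only things a receiver ever blocks on) reach the receiver by the time its later-phase message does. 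Everything else — that the thresholds are met purely by correct processes, and that outputs follow immediately from the final messages — is routine given the termination theorem and the structure of the receive handlers.
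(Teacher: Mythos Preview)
Your proposal is correct and follows the same approach as the paper: both count one asynchronous round per message type in the chain $\textsc{symbols}, P1, P2, P3, P4$ to reach five rounds. The paper's proof is a two-sentence summary of exactly this count, whereas you spell out the induction and explicitly discharge the ``wait until $j \in \mathcal{S}^{\rho-1}_0 \cup \mathcal{S}^{\rho-1}_1$'' guards and the per-phase processing discipline; this extra care is not strictly required for the paper's level of detail but is sound and addresses a point the paper leaves implicit.
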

\begin{proof}
\reduceacool incurs $1$ asynchronous round for each \textsc{symbols}, $P1$, $P2$, $P3$ and $P4$ message.
Therefore, \reduceacool incurs $5$ asynchronous rounds.
\end{proof}

\section{Rebuilding Broadcast} \label{section:rebuilding_broadccast}

In this section, we introduce rebuilding broadcast, a distributed primitive that plays a major role in our implementations of graded consensus (\Cref{section:graded_consensus_concrete_implementations}) and validation broadcast (\Cref{section:validation_broadcast_concrete_implementations}) optimized for long values.
Concretely, we present the following implementation of the rebuilding broadcast primitive.

\begin{table}[ht]
\centering
\footnotesize
\begin{tabular}{ |p{3.9cm}|p{2.3cm}|p{2.6cm}|p{2cm}|p{1.5cm}|p{1.8cm}|  }
 \hline
 \centering Algorithm & \centering Section & \centering Exchanged bits & \centering Async. rounds & \centering Resilience & 
\centering Cryptography \tabularnewline
 \hline
 \hline

\centering \textbf{ \longreb (\Cref{algorithm:longreb_optimal_resilience}) } & \centering \Cref{subsection:longreb_3} & \centering $O( nL + n^2 \log(n) \kappa )$ & \centering $2$ & \centering $t < n / 3$ & \centering Hash \tabularnewline
\hline
\end{tabular}
    \caption{Relevant aspects of a rebuilding broadcast algorithm we propose.
    \\($L$ denotes the bit-size of a value, whereas $\kappa$ denotes the bit-size of a hash value.)}
\label{tab:rebuilding-broadcast-summary}
\end{table}

We start by defining the problem of rebuilding broadcast (\Cref{subsection:rebuilding_broadcast_problem_definition}).
Then, we give \longreb's pseudocode (\Cref{subsection:longreb_3}).
Finally, we prove the correctness and complexity of \longreb (\Cref{subsection:longreb_proof}).

\subsection{Problem Definition} \label{subsection:rebuilding_broadcast_problem_definition}

The rebuilding broadcast primitive allows each process to broadcast its input value and eventually deliver and rebuild some values.
The specification of the problem is associated with the default value $\bot_{\mathit{reb}} \notin \mathsf{Value}$.
Rebuilding broadcast exposes the following interface:
\begin{compactitem}
    \item \textbf{request} $\mathsf{broadcast}(v \in \mathsf{Value})$: a process broadcasts value $v$.

    \item \textbf{request} $\mathsf{abandon}$: a process abandons (i.e., stops participating in) rebuilding broadcast.

    \item \textbf{indication} $\mathsf{deliver}(v' \in \mathsf{Value} \cup \{\bot_{\mathit{reb}}\})$: a process delivers value $v'$ ($v'$ can be $\bot_{\mathit{reb}}$).

    \item \textbf{indication} $\mathsf{rebuild}(v' \in \mathsf{Value})$: a process rebuilds value $v'$ ($v'$ cannot be $\bot_{\mathit{reb}}$).
\end{compactitem}
Any correct process broadcasts at most once.
We do not assume that all correct processes broadcast.

The rebuilding broadcast primitive requires the following properties to be satisfied:
\begin{compactitem}
    \item \emph{Strong validity:} If all correct processes that broadcast do so with the same value, then no correct process delivers $\bot_{\mathit{reb}}$.

    \item \emph{Safety:} If a correct process delivers a value \rebnotbottom{v'}, then a correct process has previously broadcast $v'$.

    \item \emph{Rebuilding validity:} If a correct process delivers a value \rebnotbottom{v'} at some time $\tau$, then every correct process rebuilds $v'$ by time $\max(\tau, \text{GST}) + \delta$.

    \item \emph{Integrity:} A correct process delivers at most once and only if it has previously broadcast.

    \item \emph{Termination:} If all correct processes broadcast and no correct process abandons rebuilding broadcast, then every correct process eventually delivers.
\end{compactitem}
Note that a correct process can rebuild a value even if (1) it has not previously broadcast, or (2) it has previously abandoned rebuilding broadcast, or (3) it has previously delivered a value (or $\bot_{\mathit{reb}}$).
Moreover, multiple values can be rebuilt by a correct process.

\subsection{\longreb: Pseudocode} \label{subsection:longreb_3}

In this subsection, we introduce \longreb (\Cref{algorithm:longreb_optimal_resilience}), our implementation of the rebuilding broadcast primitive.
\longreb (1) tolerates up to $t < n / 3$ Byzantine processes, (2) exchanges $O(nL + n^2 \log(n) \kappa)$ bits, and (3) delivers a value in 2 asynchronous rounds.
Internally, \longreb utilizes cryptographic accumulators (see \Cref{section:existing_primitives}).
We underline that \longreb is highly inspired by an implementation of the reducing broadcast primitive presented in~\cite{DBLP:journals/acta/MostefaouiR17}.

\smallskip
\noindent \textbf{Pseudocode description.}
Let $p_i$ be any correct process.
Process $p_i$ relies on the following local functions:
\begin{compactitem}
    \item $\mathsf{total}(\mathcal{H})$: returns the set of processes from which $p_i$ has received an \textsc{init} or an \textsc{echo} message with $\mathcal{H}$ as the Merkle root (line~\ref{line:function_total_hash}); these two types of messages are explained in the rest of the subsection.

    \item $\mathsf{init}(\mathcal{H})$: returns the set of processes from which $p_i$ has received an \textsc{init} message with $\mathcal{H}$ as the Merkle root (line~\ref{line:function_init_hash}).

    \item $\mathsf{total\_init}$: returns the union of $\mathsf{init}(\mathcal{H})$, for every Merkle root $\mathcal{H}$ (line~\ref{line:function_total_init}).

    \item $\mathsf{most\_frequent}$: returns the most frequent Merkle root according to the $\mathsf{init}(\cdot)$ function (line~\ref{line:function_most_frequent}).
\end{compactitem}
When $p_i$ broadcasts its value $v$ (line~\ref{line:rebuilding_broadcast}), $p_i$ (1) encodes $v$ into $n$ RS symbols $[m_1, m_2, ..., m_n]$ (line~\ref{line:reblong3-enc}), (2) computes the Merkle root of the aforementioned RS symbols (line~\ref{line:reblong3-eval}), and (3) sends each symbol $m_j$ to process $p_j$ via an \textsc{init} message (line~\ref{line:reblong3-init-send}); this message also contains the Merkle root and its witness (i.e., a Merkle proof).
Moreover, once $p_i$ receives an \textsc{init} message for a specific Merkle root and RS symbol from $t + 1$ processes (line~\ref{line:rebuilding_check_echo}), it disseminates the Merkle root and the RS symbol to all processes via an \textsc{echo} message (line~\ref{line:rebuilding_broadcast_echo}).

If $p_i$'s local function $\mathsf{total}(\cdot)$ returns a set of size $t + 1$ for some Merkle root different from the Merkle root $p_i$ has computed (line~\ref{line:different_hash_t+1}), $p_i$ knows that it is impossible that all correct processes have broadcast the same value.
Therefore, $p_i$ delivers the default value $\bot_{\mathit{reb}}$ in this case (line~\ref{line:decide_bot_1}).
Once $p_i$ receives $t + 1$ different RS symbols for the same Merkle root via \textsc{init} and \textsc{echo} messages (line~\ref{line:rule_rebuild_rebuilding}), $p_i$ rebuilds that value by decoding the received RS symbols (line~\ref{line:rebuild_rebuilding}).
Similarly, when $p_i$ receives $2t + 1$ different RS symbols for the same Merkle root (line~\ref{line:rule_deliver_rebuilding}), $p_i$ delivers that value (line~\ref{line:deliver_rebuilding}).
Finally, once $|\mathsf{total\_init}| - |\mathsf{init}(\mathsf{most\_frequent})| \geq t + 1$ (line~\ref{line:complex_check}), $p_i$ delivers the default value $\bot_{\mathit{reb}}$ (line~\ref{line:deliver_4_rebuilding}) as it is impossible that all correct processes have previously broadcast the same value.


\begin{algorithm}
\caption{\longreb: Pseudocode (for process $p_i$)}
\label{algorithm:longreb_optimal_resilience}
\footnotesize
\begin{algorithmic} [1] 
\State \textbf{Rules:} 
\State \hskip2em Any \textsc{init} or \textsc{echo} message with an invalid witness is ignored.
\State \hskip2em Only one \textsc{init} message is processed per process. \label{line:one_init_accept}

\medskip
\State \textbf{Local variables:}
\State \hskip2em $\mathsf{Merkle\_Root}$ $\mathcal{H}_i \gets \bot$
\State \hskip2em $\mathsf{Boolean}$ $\mathit{delivered}_i \gets \mathit{false}$
\State \hskip2em $\mathsf{Map}(\mathsf{Merkle\_Root} \to \mathsf{Boolean})$ $\mathit{rebuilt}_i \gets \{\mathit{false}, \mathit{false}, ..., \mathit{false} \}$

\medskip
\State \textbf{Local functions:}
\State \hskip2em $\mathsf{total}(\mathcal{H} \in \mathsf{Merkle\_Root}) \gets $ the set of processes from which $p_i$ has received $\langle \textsc{init}, \mathcal{H}, \cdot, \cdot \rangle$ or $\langle \textsc{echo}, \mathcal{H}, \cdot, \cdot \rangle$ messages \label{line:function_total_hash}



\State \hskip2em $\mathsf{init}(\mathcal{H} \in \mathsf{Merkle\_Root}) \gets$ the set of processes from which $p_i$ has received a $\langle \textsc{init}, \mathcal{H}, \cdot, \cdot \rangle$ message \label{line:function_init_hash}

\State \hskip2em $\mathsf{total\_init} \gets \bigcup_{\mathcal{H}} \mathsf{init}(\mathcal{H})$ \label{line:function_total_init}

\State \hskip2em $\mathsf{most\_frequent} \gets \mathcal{H}$ such that $|\mathsf{init}(\mathcal{H})| \geq |\mathsf{init}(\mathcal{H}')|$, for every $\mathcal{H}' \in \mathsf{Merkle\_Root}$ \label{line:function_most_frequent}

\medskip
\State \textbf{upon} $\mathsf{broadcast}(v \in \mathsf{Value})$: \label{line:rebuilding_broadcast}
\State \hskip2em let $[m_1, m_2, ..., m_n] \gets \mathsf{RSEnc}(v, n, t + 1)$\label{line:reblong3-enc}
\State \hskip2em let $\mathcal{H}_i \gets \mathsf{Eval}([(1,m_1), (2,m_2), ..., (n,m_n)])$\label{line:reblong3-eval}
\State \hskip2em \textbf{for each} $j \in [1, n]$:
\State \hskip4em let $\mathcal{P}_j \gets \mathsf{CreateWit}(\mathcal{H}_i, (j,m_j), [(1,m_1), (2,m_2), ..., (n,m_n)])$
\State \hskip4em \textbf{send} $\langle \textsc{init}, \mathcal{H}_i, m_j, \mathcal{P}_j \rangle$ to process $p_j$ \label{line:reblong3-init-send}

\medskip
\State \textbf{when} $\langle \textsc{init}, \mathcal{H}, m_i, \mathcal{P}_i \rangle$ or $\langle \textsc{echo}, \mathcal{H}, m_i, \mathcal{P}_i \rangle$ \textbf{is received:}
\State \hskip2em \textbf{if} (1) $\mathcal{H} \neq \mathcal{H}_i$, and (2) $\langle \textsc{init}, \mathcal{H}, m_i, \mathcal{P}_i \rangle$ is received from $t + 1$ processes, and (3) $\langle \textsc{echo}, \mathcal{H}, m_i, \mathcal{P}_i \rangle$ is not broadcast yet: \label{line:rebuilding_check_echo}
\State \hskip4em \textbf{broadcast} $\langle \textsc{echo}, \mathcal{H}, m_i, \mathcal{P}_i \rangle$ \label{line:rebuilding_broadcast_echo}

\medskip
\State \hskip2em \textbf{if} exists $\mathcal{H}' \neq \mathcal{H}_i$ such that $|\mathsf{total}(\mathcal{H}')| \geq t + 1$ and $\mathit{delivered}_i = \mathit{false}$: \label{line:different_hash_t+1}
\State \hskip4em $\mathit{delivered}_i \gets \mathit{true}$
\State \hskip4em \textbf{trigger} $\mathsf{deliver}(\bot_{\mathit{reb}})$ \label{line:decide_bot_1}

\smallskip
\State \hskip2em \textcolor{blue}{\(\triangleright\) if $\mathsf{broadcast}(\cdot)$ has not been invoked, $p_i$ only performs the following check}
\State \hskip2em \textbf{if} exists $\mathcal{H}'$ such that $|\mathsf{total}(\mathcal{H}')| \geq t + 1$ and $\mathit{rebuilt}_i[\mathcal{H}'] = \mathit{false}$: \label{line:rule_rebuild_rebuilding} 
\State \hskip4em $\mathit{rebuilt}_i[\mathcal{H}'] \gets \mathit{true}$
\State \hskip4em \textbf{trigger} $\mathsf{rebuild}\big( \mathsf{RSDec}(t + 1, 0, \text{any } t + 1 \text{ received RS symbols for } \mathcal{H}') \big)$ \label{line:rebuild_rebuilding}

\smallskip
\State \hskip2em \textbf{if} exists $\mathcal{H}'$ such that $|\mathsf{total}(\mathcal{H}')| \geq 2t + 1$ and $\mathit{delivered}_i = \mathit{false}$:  \label{line:rule_deliver_rebuilding} 
\State \hskip4em $\mathit{delivered}_i \gets \mathit{true}$
\State \hskip4em \textbf{trigger} $\mathsf{deliver}\big( \mathsf{RSDec}(t + 1, 0, \text{any } t + 1 \text{ received RS symbols for } \mathcal{H}') \big)$ \label{line:deliver_rebuilding}

\smallskip
\State \hskip2em \textbf{if} $|\mathsf{total\_init}| - |\mathsf{init}(\mathsf{most\_frequent})| \geq t + 1$ and $\mathit{delivered}_i = \mathit{false}$: \label{line:complex_check}
\State \hskip4em $\mathit{delivered}_i \gets \mathit{true}$
\State \hskip4em \textbf{trigger} $\mathsf{deliver}(\bot_{\mathit{reb}})$ \label{line:deliver_4_rebuilding}
\end{algorithmic} 
\end{algorithm}

\subsection{\longreb: Proof of Correctness \& Complexity} \label{subsection:longreb_proof}

We now prove \longreb's correctness and complexity.

\smallskip
\noindent \textbf{Proof of correctness.}
Recall that $\MRoot{v} = \mathsf{Eval}\big([(1,m_1), ..., (n,m_n)]\big)$, where $[m_1, ..., m_n] =  \mathsf{RSEnc}(v, n, t + 1)$ (see \Cref{section:existing_primitives}).
We start by showing that \longreb satisfies strong validity. 

\begin{theorem} [Strong validity] \label{theorem:long_reb_strong_validity_3}
\longreb (\Cref{algorithm:longreb_optimal_resilience}) satisfies strong validity.
\end{theorem}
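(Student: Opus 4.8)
The plan is to show that, under the hypothesis that every correct process which broadcasts does so with the same value $v$, no correct process ever triggers $\mathsf{deliver}(\bot_{\mathit{reb}})$. Write $\mathcal{H}^\star = \MRoot{v}$. The only two lines at which $\bot_{\mathit{reb}}$ is delivered are line~\ref{line:decide_bot_1} and line~\ref{line:deliver_4_rebuilding}, so it suffices to rule out both. By the pseudocode, a correct process that never invokes $\mathsf{broadcast}(\cdot)$ executes only the rule at line~\ref{line:rule_rebuild_rebuilding} and hence reaches neither line; so I only need to treat a correct process $p_i$ that did broadcast, and for such a process $\mathcal{H}_i = \mathcal{H}^\star$ by lines~\ref{line:reblong3-enc}--\ref{line:reblong3-eval}.

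The key lemma I would establish first is: \emph{no correct process ever sends an $\langle \textsc{init}, \mathcal{H}, \cdot, \cdot \rangle$ or $\langle \textsc{echo}, \mathcal{H}, \cdot, \cdot \rangle$ message with $\mathcal{H} \neq \mathcal{H}^\star$.} For \textsc{init} this is immediate from lines~\ref{line:reblong3-enc}--\ref{line:reblong3-init-send}: a broadcasting correct process encodes $v$ and uses root $\mathcal{H}^\star$, while a non-broadcasting correct process sends no \textsc{init} at all. For \textsc{echo}, note that, since (by the \textsc{init} part just shown) only faulty processes ever send an \textsc{init} carrying a root $\neq \mathcal{H}^\star$, and at most $t$ processes are faulty, no correct process can ever collect $\langle \textsc{init}, \mathcal{H}, \cdot, \cdot \rangle$ (with valid witness) from $t+1$ distinct senders for any $\mathcal{H} \neq \mathcal{H}^\star$; hence condition~(2) of the guard at line~\ref{line:rebuilding_check_echo} never holds for such $\mathcal{H}$, while for $\mathcal{H} = \mathcal{H}^\star$ condition~(1) fails because $\mathcal{H} = \mathcal{H}_i$. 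So correct processes never broadcast \textsc{echo} messages in this scenario at all, which in particular covers the lemma for \textsc{echo}.

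With the lemma in hand, both $\bot_{\mathit{reb}}$-delivery rules are easily ruled out. For line~\ref{line:decide_bot_1}: for any $\mathcal{H}' \neq \mathcal{H}_i = \mathcal{H}^\star$, every process in $\mathsf{total}(\mathcal{H}')$ must have sent $p_i$ an \textsc{init} or \textsc{echo} message carrying root $\mathcal{H}'$, so by the lemma it is faulty; therefore $|\mathsf{total}(\mathcal{H}')| \le t < t+1$ and the guard never activates. For line~\ref{line:deliver_4_rebuilding}: since only one \textsc{init} is processed per sender (line~\ref{line:one_init_accept}), the sets $\{\mathsf{init}(\mathcal{H})\}_{\mathcal{H}}$ are pairwise disjoint, so $|\mathsf{total\_init}| - |\mathsf{init}(\mathcal{H}^\star)| = \big|\bigcup_{\mathcal{H}' \neq \mathcal{H}^\star} \mathsf{init}(\mathcal{H}')\big|$, and every process in this union sent an \textsc{init} with a root $\neq \mathcal{H}^\star$, hence is faulty; thus the difference is at most $t$. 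Because $\mathsf{most\_frequent}$ maximizes $|\mathsf{init}(\cdot)|$ we have $|\mathsf{init}(\mathsf{most\_frequent})| \ge |\mathsf{init}(\mathcal{H}^\star)|$, so $|\mathsf{total\_init}| - |\mathsf{init}(\mathsf{most\_frequent})| \le t < t+1$ and this guard never activates either. Combining the two cases, no correct process delivers $\bot_{\mathit{reb}}$, which is exactly strong validity.

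The argument is short, and there is no deep obstacle; the only care needed is in the bookkeeping: making sure that \textsc{echo} messages are also covered by the lemma (so that a foreign root cannot reach the threshold $t+1$ in $\mathsf{total}(\cdot)$ via echoes rather than inits), observing that the ``one \textsc{init} per sender'' rule is what makes the $\mathsf{init}(\cdot)$ sets disjoint (needed for the counting at line~\ref{line:deliver_4_rebuilding}), and noting that a non-broadcasting correct process genuinely skips both $\bot_{\mathit{reb}}$ rules. The structure mirrors the validity proof of the reducing-broadcast protocol of~\cite{DBLP:journals/acta/MostefaouiR17} on which \longreb is based.
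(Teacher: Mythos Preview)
Your proof is correct and follows essentially the same approach as the paper: both establish that no correct process sends an \textsc{init} or \textsc{echo} message with a root other than $\mathcal{H}^\star$, then use this to rule out both $\bot_{\mathit{reb}}$-delivery rules. Your counting argument for line~\ref{line:deliver_4_rebuilding} is phrased directly (via disjointness of the $\mathsf{init}(\cdot)$ sets and $|\mathsf{init}(\mathsf{most\_frequent})| \ge |\mathsf{init}(\mathcal{H}^\star)|$), whereas the paper phrases the same arithmetic by contradiction, but the arguments are equivalent.
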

\begin{proof}
Suppose all correct processes that propose to \longreb do so with the same value $v \in \mathsf{Value}$; let $\mathcal{H} = \mathsf{MR}(v)$.
Observe that no correct process sends any \textsc{init} or \textsc{echo} message for any Merkle root different from $\mathcal{H}$ due to the check at line~\ref{line:rebuilding_check_echo}.

Let us consider any correct process $p_i$.
Process $p_i$ does not deliver $\bot_{\mathit{reb}}$ at line~\ref{line:decide_bot_1} as the check at line~\ref{line:different_hash_t+1} never activates (given that no correct process sends any message for any Merkle root $\mathcal{H}' \neq \mathcal{H})$.
It is left to prove that the check at line~\ref{line:complex_check} never activates at process $p_i$.
By contradiction, suppose it does. 
Let $\omega$ be the most frequent Merkle root when the check at line~\ref{line:complex_check} activates; let $x = |\mathsf{init}(\omega)|$. 
Let $g$ (resp., $b$) be the set of correct (resp., Byzantine) processes from which $p_i$ has received an $\textsc{init}$ message. 
The activation of the check at line~\ref{line:complex_check} implies $|g| + |b| - x \geq t+1$. 
Hence, $|g| > x$ and $|\mathsf{init}(\mathcal{H})| > x$, which contradicts the fact that $\omega = \mathsf{most\_frequent}$.
\end{proof}

Next, we prove the safety property.

\begin{theorem} [Safety]
\longreb (\Cref{algorithm:longreb_optimal_resilience}) satisfies safety.
\end{theorem}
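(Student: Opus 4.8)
The goal is to show that if a correct process $p_i$ triggers $\mathsf{deliver}(v')$ with $v' \neq \bot_{\mathit{reb}}$, then some correct process previously invoked $\mathsf{broadcast}(v')$. A correct process delivers a non-$\bot_{\mathit{reb}}$ value only at line~\ref{line:deliver_rebuilding}, which is guarded by the condition $|\mathsf{total}(\mathcal{H}')| \geq 2t+1$ for some Merkle root $\mathcal{H}'$ (line~\ref{line:rule_deliver_rebuilding}), and in that case $v' = \mathsf{RSDec}(t+1, 0, T)$ where $T$ is a set of $t+1$ received RS symbols associated with $\mathcal{H}'$. So the plan is to trace backwards from this quorum of $2t+1$ \textsc{init}/\textsc{echo} messages for $\mathcal{H}'$ to the existence of a correct broadcaster of the value whose Merkle root is $\mathcal{H}'$.

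First I would argue that at least $t+1$ correct processes sent an \textsc{init} or \textsc{echo} message carrying $\mathcal{H}'$, since the $2t+1$-sized set $\mathsf{total}(\mathcal{H}')$ contains at most $t$ Byzantine processes. Then I would invoke the amplification rule: a correct process only broadcasts $\langle \textsc{echo}, \mathcal{H}', \cdot, \cdot\rangle$ after receiving $\langle \textsc{init}, \mathcal{H}', \cdot, \cdot\rangle$ from $t+1$ processes (line~\ref{line:rebuilding_check_echo}), hence from at least one correct process; and a correct process sends $\langle \textsc{init}, \mathcal{H}', m_j, \mathcal{P}_j\rangle$ only at line~\ref{line:reblong3-init-send}, i.e., only after invoking $\mathsf{broadcast}(v)$ with $\mathcal{H}_i = \mathsf{Eval}([(1,m_1),\dots,(n,m_n)]) = \mathcal{H}' = \mathsf{MR}(v)$. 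Chaining these two facts, the existence of any (correct-origin) \textsc{echo} for $\mathcal{H}'$ implies the existence of a correct \textsc{init} for $\mathcal{H}'$, which implies a correct broadcaster of a value $v$ with $\mathsf{MR}(v) = \mathcal{H}'$.

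Next I would show $v = v'$, i.e., that the RS-decoding of the $t+1$ symbols yields exactly the value the correct process broadcast. The collision-freeness of the accumulator scheme (reduced to collision resistance of the underlying hash, see \Cref{section:existing_primitives}) ensures that every symbol-witness pair accepted by a correct process for Merkle root $\mathcal{H}'$ must be a genuine leaf $(j, m_j)$ of the tree committed by $\mathcal{H}' = \mathsf{MR}(v)$; invalid witnesses are discarded by the rule at the top of \Cref{algorithm:longreb_optimal_resilience}, and each process processes only one \textsc{init} (line~\ref{line:one_init_accept}), so no two distinct symbols at the same index can both be accepted. Therefore all $t+1$ symbols in $T$ are genuine code symbols of $\mathsf{RSEnc}(v, n, t+1)$, and since a Reed--Solomon code of dimension $t+1$ is uniquely determined by any $t+1$ of its symbols, $\mathsf{RSDec}(t+1, 0, T) = v$. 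Hence $v' = v$ was broadcast by a correct process.

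The main obstacle I anticipate is the bookkeeping around the witness check and the ``at most one \textsc{init} per process'' rule: one must be careful that a Byzantine process cannot get a correct process to accept a bogus symbol under Merkle root $\mathcal{H}'$ (which would corrupt the decoded value even though $\mathcal{H}'$ itself is legitimate). This is precisely where the accumulator's collision-freeness is essential, and the argument must explicitly reduce ``accepted symbol under $\mathcal{H}'$'' to ``genuine leaf of the tree committed by the correct broadcaster'' — otherwise the claim $v' = v$ does not follow merely from a quorum on $\mathcal{H}'$. Everything else (counting quorums, the $t+1$/$2t+1$ thresholds, tracing \textsc{echo} back to \textsc{init}) is routine.
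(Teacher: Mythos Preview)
Your proposal is correct and takes essentially the same approach as the paper: trace the $2t+1$ quorum on $\mathcal{H}'$ back through the \textsc{echo}-amplification rule to a correct broadcaster of a value with Merkle root $\mathcal{H}'$, then invoke the accumulator's collision resistance to conclude the decoded value matches. Your treatment of why the $t+1$ accepted symbols are genuine is in fact more careful than the paper's (which just cites ``$\MRoot{\cdot}$'s collision resistance'' without elaboration); one minor note is that the ``one \textsc{init} per process'' rule at line~\ref{line:one_init_accept} is not actually needed for that step---the witness check alone already forces any accepted symbol at index $j$ under root $\mathcal{H}'$ to be the genuine leaf $(j,m_j)$.
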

\begin{proof}
If a correct process delivers a value \rebnotbottom{v'} (line~\ref{line:deliver_rebuilding}), then the process has previously received an \textsc{init} or \textsc{echo} message for the Merkle root $\mathcal{H} = \mathsf{MR}(v')$ from a correct process (due to the check at line~\ref{line:rule_deliver_rebuilding} and the $\MRoot{\cdot}$'s collision resistance).
As any correct process sends an \textsc{init} or \textsc{echo} message for $\mathcal{H}$ only if a correct process has previously broadcast $v'$ (due to the rule at line~\ref{line:rebuilding_check_echo} and the $\MRoot{\cdot}$'s collision resistance), the safety property is guaranteed.
\end{proof}

The following theorem proves rebuilding validity.

\begin{theorem} [Rebuilding validity]
\longreb (\Cref{algorithm:longreb_optimal_resilience}) satisfies rebuilding validity.
\end{theorem}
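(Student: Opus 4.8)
The plan is to show that if any correct process $p_i$ delivers a non-$\bot_{\mathit{reb}}$ value $v'$ at some time $\tau$, then every correct process $p_j$ triggers $\mathsf{rebuild}(v')$ by time $\max(\tau, \text{GST}) + \delta$. First I would trace back which line of \Cref{algorithm:longreb_optimal_resilience} can cause the delivery of a non-$\bot_{\mathit{reb}}$ value: inspecting the pseudocode, the only such line is line~\ref{line:deliver_rebuilding}, guarded by the check at line~\ref{line:rule_deliver_rebuilding}, namely that there exists a Merkle root $\mathcal{H}' = \MRoot{v'}$ with $|\mathsf{total}(\mathcal{H}')| \geq 2t+1$. (Deliveries at lines~\ref{line:decide_bot_1} and~\ref{line:deliver_4_rebuilding} both yield $\bot_{\mathit{reb}}$, so they are irrelevant here.) Thus at time $\tau$, process $p_i$ has received $\langle \textsc{init}, \mathcal{H}', \cdot, \cdot \rangle$ or $\langle \textsc{echo}, \mathcal{H}', \cdot, \cdot \rangle$ messages (with valid witnesses) from a set of $2t+1$ processes, at least $t+1$ of which are correct.

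Next I would argue that each of those $t+1$ correct processes \emph{broadcasts} (to all processes) an \textsc{echo} or \textsc{init} message carrying $\mathcal{H}'$ together with the recipient-specific RS symbol and witness. A correct process that sent an \textsc{init} message for $\mathcal{H}'$ did so because it invoked $\mathsf{broadcast}(v')$ with $\MRoot{v'} = \mathcal{H}'$ (here I use the collision resistance of $\MRoot{\cdot}$), in which case it sent an \textsc{init} message to \emph{every} process at line~\ref{line:reblong3-init-send}. A correct process that sent an \textsc{echo} message for $\mathcal{H}'$ did so via a \textbf{broadcast} at line~\ref{line:rebuilding_broadcast_echo}. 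Either way, each of these $t+1$ correct processes sends to every correct process $p_j$ a message of the form $\langle \cdot, \mathcal{H}', m_j, \mathcal{P}_j \rangle$ with a valid witness $\mathcal{P}_j$ for the $j$-th leaf. These messages are sent at or before time $\tau$ (since $p_i$ received them by $\tau$), so by the partial synchrony assumption they are received by every correct process $p_j$ by time $\max(\tau, \text{GST}) + \delta$. Consequently, by that time, $|\mathsf{total}(\mathcal{H}')| \geq t+1$ holds locally at $p_j$, which activates the check at line~\ref{line:rule_rebuild_rebuilding} (unless $\mathit{rebuilt}_j[\mathcal{H}']$ is already $\mathit{true}$, meaning $p_j$ has already rebuilt $v'$). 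Hence $p_j$ triggers $\mathsf{rebuild}\big(\mathsf{RSDec}(t+1, 0, \text{any } t+1 \text{ received RS symbols for } \mathcal{H}')\big)$.

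Finally I would verify that the rebuilt value equals $v'$. The $t+1$ RS symbols used by $p_j$ at line~\ref{line:rebuild_rebuilding} all come with valid Merkle witnesses against the common root $\mathcal{H}' = \MRoot{v'} = \mathsf{Eval}\big([(1,m_1),\ldots,(n,m_n)]\big)$ where $[m_1,\ldots,m_n] = \mathsf{RSEnc}(v', n, t+1)$; by the collision resistance of the accumulator, each such symbol must be the genuine $m_\ell$ for its index $\ell$, so decoding any $t+1$ of them with zero errors reconstructs exactly the degree-$t$ polynomial encoding $v'$, i.e. $p_j$ rebuilds $v'$. I expect the main obstacle — really the only subtle point — to be the bookkeeping around which messages are \emph{broadcast} versus \emph{sent to a single recipient}: one must confirm that every correct process in $p_i$'s $\mathsf{total}(\mathcal{H}')$ set genuinely forwards its $\mathcal{H}'$-message to \emph{all} correct processes (this is where the distinction between line~\ref{line:reblong3-init-send}'s per-process \textsc{init} and the broadcast \textsc{echo} of line~\ref{line:rebuilding_broadcast_echo} matters), so that each correct $p_j$ really accumulates $t+1$ senders for $\mathcal{H}'$ within one message delay after $\max(\tau, \text{GST})$.
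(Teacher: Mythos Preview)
Your proposal is correct and follows the same approach as the paper: both argue that at least $t+1$ of the $2t+1$ senders in $\mathsf{total}(\mathcal{H}')$ are correct and send/broadcast their $\mathcal{H}'$-messages to every process, so every correct $p_j$ reaches $|\mathsf{total}(\mathcal{H}')|\geq t+1$ and rebuilds $v'$ within $\max(\tau,\text{GST})+\delta$. You supply more detail than the paper's three-sentence proof (explicitly handling the \textsc{init}/\textsc{echo} distinction and invoking accumulator collision resistance for the decoded value), but the underlying argument is the same.
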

\begin{proof}
Suppose any correct process $p_j$ delivers a value \rebnotbottom{v'} (line~\ref{line:deliver_rebuilding}) at time $\tau$.
Therefore, $p_j$ has previously received $2t + 1$ correctly-encoded (as they are accompanied by valid Merkle proofs) RS symbols (line~\ref{line:rule_deliver_rebuilding}) by time $\tau$, out of which $t + 1$ are broadcast by correct processes (via \textsc{init} or \textsc{echo} messages).
Hence, every correct process receives $t + 1$ correctly-encoded RS symbols (line~\ref{line:rule_rebuild_rebuilding}) by time $\max(\tau, \text{GST}) + \delta$ and rebuilds $v'$ (line~\ref{line:rebuild_rebuilding}), also by time $\max(\tau, \text{GST}) + \delta$.
\end{proof}

We continue our proof by showing that \longreb satisfies integrity.

\begin{theorem} [Integrity]
\longreb (\Cref{algorithm:longreb_optimal_resilience}) satisfies integrity.
\end{theorem}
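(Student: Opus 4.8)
The plan is to establish the integrity property of \longreb (\Cref{algorithm:longreb_optimal_resilience}), which requires showing two things about a correct process $p_i$: (1) $p_i$ delivers at most once, and (2) $p_i$ delivers only if it has previously invoked $\mathsf{broadcast}(\cdot)$.

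For the first part, I would observe that $p_i$ triggers $\mathsf{deliver}(\cdot)$ only at lines~\ref{line:decide_bot_1}, \ref{line:deliver_rebuilding}, or \ref{line:deliver_4_rebuilding}. Each of these is guarded by the condition $\mathit{delivered}_i = \mathit{false}$, and immediately before triggering the indication, $p_i$ sets $\mathit{delivered}_i \gets \mathit{true}$. Since $\mathit{delivered}_i$ is never reset to $\mathit{false}$, once $p_i$ delivers, no further $\mathsf{deliver}(\cdot)$ indication can be triggered. Hence $p_i$ delivers at most once.

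For the second part, I would note that all three delivery lines occur within the handler \textbf{when} $\langle \textsc{init}, \ldots \rangle$ or $\langle \textsc{echo}, \ldots \rangle$ \textbf{is received}, but the crucial point is that these checks make use of $\mathcal{H}_i$ (for the check at line~\ref{line:different_hash_t+1}) and, more fundamentally, the comment at line before line~\ref{line:rule_rebuild_rebuilding} clarifies that a process which has not invoked $\mathsf{broadcast}(\cdot)$ performs only the rebuilding check (line~\ref{line:rule_rebuild_rebuilding}) and not the delivery checks. A process that has not broadcast has $\mathcal{H}_i = \bot$, so it cannot meaningfully evaluate the delivery conditions; by design, only a process that has invoked $\mathsf{broadcast}(\cdot)$ (and thus set $\mathcal{H}_i$ at line~\ref{line:reblong3-eval}) executes lines~\ref{line:different_hash_t+1}, \ref{line:rule_deliver_rebuilding}, and \ref{line:complex_check}. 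Therefore, whenever $p_i$ triggers a $\mathsf{deliver}(\cdot)$ indication, it must have previously invoked $\mathsf{broadcast}(\cdot)$.

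I do not expect any real obstacle here: integrity follows directly from the structure of the pseudocode and the single-use $\mathit{delivered}_i$ flag, together with the stated restriction that non-broadcasting processes only run the $\mathsf{rebuild}$ branch. The only subtlety worth spelling out carefully is the second part — making explicit that the three delivery code paths are gated behind having invoked $\mathsf{broadcast}(\cdot)$ — which is a matter of reading the algorithm's comment rather than a genuine proof difficulty. The remainder is a two-line argument.

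\begin{proof}
We prove the two parts of integrity separately.

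\emph{A correct process delivers at most once.} A correct process $p_i$ triggers $\mathsf{deliver}(\cdot)$ only at line~\ref{line:decide_bot_1}, line~\ref{line:deliver_rebuilding}, or line~\ref{line:deliver_4_rebuilding}. Each of these lines is guarded by the condition $\mathit{delivered}_i = \mathit{false}$, and in each case $p_i$ sets $\mathit{delivered}_i \gets \mathit{true}$ before triggering the indication. Since $\mathit{delivered}_i$ is never set back to $\mathit{false}$, once $p_i$ has delivered, none of the three guards can be satisfied again. Hence, $p_i$ delivers at most once.

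\emph{A correct process delivers only if it has previously broadcast.} As noted in the comment preceding line~\ref{line:rule_rebuild_rebuilding}, a correct process that has not invoked $\mathsf{broadcast}(\cdot)$ performs only the rebuilding check at line~\ref{line:rule_rebuild_rebuilding} upon receiving an \textsc{init} or \textsc{echo} message; in particular, it never executes the checks at line~\ref{line:different_hash_t+1}, line~\ref{line:rule_deliver_rebuilding}, or line~\ref{line:complex_check}, and thus never triggers a $\mathsf{deliver}(\cdot)$ indication. Consequently, if $p_i$ triggers $\mathsf{deliver}(\cdot)$ (at line~\ref{line:decide_bot_1}, line~\ref{line:deliver_rebuilding}, or line~\ref{line:deliver_4_rebuilding}), then $p_i$ has previously invoked $\mathsf{broadcast}(\cdot)$ (line~\ref{line:rebuilding_broadcast}).

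Combining the two parts, \longreb satisfies integrity.
\end{proof}
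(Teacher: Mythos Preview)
Your proof is correct and follows essentially the same approach as the paper's own proof, which likewise argues (1) at-most-once delivery from the $\mathit{delivered}_i$ flag and (2) delivery-only-after-broadcast from the fact that a process that has not broadcast only performs the check at line~\ref{line:rule_rebuild_rebuilding}. Your version is simply more detailed than the paper's two-sentence argument.
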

\begin{proof}
Any correct process $p_i$ delivers at most once due to the $\mathit{delivered}_i$ variable.
Moreover, process $p_i$ delivers only if it has previously broadcast due to the fact that only the check at line~\ref{line:rule_rebuild_rebuilding} is performed by $p_i$ unless $p_i$ has previously broadcast.
\end{proof}

Lastly, we prove \longreb's termination.

\begin{theorem} [Termination] \label{theorem:termination_longreb_3t}
\longreb (\Cref{algorithm:longreb_optimal_resilience}) satisfies termination.
\end{theorem}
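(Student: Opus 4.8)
The plan is to argue by contradiction. Assume all correct processes invoke $\mathsf{broadcast}(\cdot)$, none invokes $\mathsf{abandon}$, and yet some correct process $p_i$ never triggers a $\mathsf{deliver}(\cdot)$ indication, so $\mathit{delivered}_i = \mathit{false}$ forever. First I would record the consequences of this. Since correct processes attach valid witnesses to their \textsc{init} messages, $p_i$ eventually processes one \textsc{init} message from each of the $\geq n-t$ correct processes, and since only one \textsc{init} per sender is processed (\cref{line:one_init_accept}) these contribute $n-t \geq 2t+1$ distinct senders, so eventually $|\mathsf{total\_init}| \geq n-t$. Because none of the guards at \cref{line:different_hash_t+1,line:rule_deliver_rebuilding,line:complex_check} (all predicated on $\mathit{delivered}_i = \mathit{false}$) ever fires, the following hold at all times: (a) $|\mathsf{total}(\mathcal{H}')| \leq t$ for every root $\mathcal{H}' \neq \mathcal{H}_i$, (b) $|\mathsf{total}(\mathcal{H})| \leq 2t$ for every root $\mathcal{H}$, and (c) $|\mathsf{total\_init}| - |\mathsf{init}(\mathsf{most\_frequent})| \leq t$.

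Next I would case-split on the distribution of the $n-t$ correct broadcasters' Merkle roots; write $n_{\mathcal{H}}$ for the number of correct processes whose broadcast value has root $\mathcal{H}$ (by collision-resistance distinct correct values have distinct roots). \emph{Case 1: some root $\mathcal{H}^{\star}$ has $n_{\mathcal{H}^{\star}} \geq t+1$.} If $\mathcal{H}^{\star} \neq \mathcal{H}_i$, then $p_i$ eventually collects $\geq t+1$ \textsc{init} messages for $\mathcal{H}^{\star}$, so $|\mathsf{total}(\mathcal{H}^{\star})| \geq t+1$, contradicting (a). If $\mathcal{H}^{\star} = \mathcal{H}_i$, then every correct process whose own root differs from $\mathcal{H}_i$ eventually receives $\geq t+1$ mutually identical \textsc{init} messages for $\mathcal{H}_i$ (one per broadcaster of the root-$\mathcal{H}_i$ value, each with a valid witness) and hence broadcasts an \textsc{echo} for $\mathcal{H}_i$ (\cref{line:rebuilding_check_echo,line:rebuilding_broadcast_echo}); together with the $n_{\mathcal{H}_i}$ correct \textsc{init}-senders, all $n-t \geq 2t+1$ correct processes eventually lie in $\mathsf{total}(\mathcal{H}_i)$ at $p_i$, contradicting (b). (The reconstruction at \cref{line:deliver_rebuilding} is then well-posed: $\mathcal{H}_i$ being $p_i$'s own root, $p_i$ holds the Reed--Solomon encoding it produced in \cref{line:reblong3-enc}, so the decoding returns that value.) \emph{Case 2: $n_{\mathcal{H}} \leq t$ for every root.} Fix a time after $p_i$ has processed \textsc{init} from all $n-t$ correct processes. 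The correct processes whose root is not $\mathsf{most\_frequent}$ number $\sum_{\mathcal{H} \neq \mathsf{most\_frequent}} n_{\mathcal{H}} = (n-t) - n_{\mathsf{most\_frequent}} \geq (n-t) - t = n-2t \geq t+1$, and each is a member of $\mathsf{total\_init} \setminus \mathsf{init}(\mathsf{most\_frequent})$; hence $|\mathsf{total\_init}| - |\mathsf{init}(\mathsf{most\_frequent})| \geq t+1$, contradicting (c). Since every case is contradictory, $p_i$ must deliver; as $p_i$ was arbitrary, termination holds.

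The step I expect to be the main obstacle is the sub-case $\mathcal{H}^{\star} = \mathcal{H}_i$ of Case 1: one must check carefully that the $(t+1)$-threshold echo amplification at \cref{line:rebuilding_check_echo} genuinely pushes $|\mathsf{total}(\mathcal{H}_i)|$ up to $n-t$ at $p_i$ even when all Byzantine processes are silent (so the only help comes from correct processes holding a different root), and that the reconstructions invoked at \cref{line:rebuild_rebuilding} and \cref{line:deliver_rebuilding} are never ill-posed --- concretely, that whenever these lines fire for a root $\mathcal{H}' \neq \mathcal{H}_i$ the process has already collected $t+1$ symbols at distinct coordinates carrying valid witnesses (so $\mathsf{RSDec}(t+1,0,\cdot)$ returns the originally broadcast value, by collision-resistance), whereas for $\mathcal{H}' = \mathcal{H}_i$ the process uses its own encoding, and that \cref{line:deliver_rebuilding} applied to a foreign root is in fact preempted by \cref{line:decide_bot_1}. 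Everything else is standard threshold bookkeeping, analogous to the reducing-broadcast analysis of~\cite{DBLP:journals/acta/MostefaouiR17} and to the strong-validity argument already carried out in \Cref{theorem:long_reb_strong_validity_3}.
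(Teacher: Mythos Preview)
Your proof is correct and follows essentially the same two-case structure as the paper's: Case~1 (some value is broadcast by $\geq t+1$ correct processes, so echo amplification pushes $|\mathsf{total}(\cdot)|$ to $n-t \geq 2t+1$) and Case~2 (no such value, so the check at \cref{line:complex_check} fires). Your presentation is slightly more detailed---you sub-split Case~1 on whether $\mathcal{H}^{\star} = \mathcal{H}_i$ and name which of \cref{line:decide_bot_1} or \cref{line:deliver_rebuilding} actually fires, whereas the paper simply asserts that \cref{line:rule_deliver_rebuilding} activates---and your Case~2 counting (lower-bounding $|\mathsf{total\_init} \setminus \mathsf{init}(\mathsf{most\_frequent})|$ directly by the number of correct processes with a non-majority root) is a cosmetic variant of the paper's $2t+1+f$ vs.\ $t+f$ bookkeeping; both arrive at $\geq t+1$.
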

\begin{proof}
To prove termination, we consider two cases:
\begin{compactitem}
    \item Suppose at least $t + 1$ correct process broadcast the same value \rebnotbottom{v}.
    Hence, every correct process that did not broadcast $v$ broadcasts an \textsc{echo} message for $\mathcal{H} = \mathsf{MR}(v)$.
    As there are at least $n - t \geq 2t + 1$ correct processes, the rule at line~\ref{line:rule_deliver_rebuilding} eventually activates at every correct process $p_i$ and enables $p_i$ to deliver a value (line~\ref{line:deliver_rebuilding}).

    \item Suppose no value \rebnotbottom{v} exists such that $t + 1$ correct processes broadcast $v$.
    Consider any correct process $p_i$.
    Let us assume that $p_i$ never activates the rule at line~\ref{line:different_hash_t+1} nor the rule at line~\ref{line:rule_deliver_rebuilding}.
    We now prove that the rule at line~\ref{line:complex_check} eventually activates at $p_i$ in this case.

    As no correct process abandons \longreb, $p_i$ eventually receives \textsc{init} messages from all correct processes.
    When that happens, $|\mathsf{total\_init}| \geq 2t + 1 + f$, where $f \leq t$ denotes the number of faulty processes $p_i$ receives (and does not ignore) \textsc{init} messages from. 
    Note that, because of the $\MRoot{\cdot}$'s collision resistance, $|\mathsf{init}(\mathsf{most\_frequent})| \leq t + f$.
    Hence, $|\mathsf{total\_total}| - |\mathsf{init}(\mathsf{most\_frequent})| \geq 2t + 1 + f - t - f \geq t + 1$, which implies that the rule at line~\ref{line:complex_check} activates.
    Therefore, $p_i$ delivers $\bot_{\mathit{reb}}$ at line~\ref{line:deliver_4_rebuilding}.
\end{compactitem}
Since termination is ensured in both possible cases, the proof is concluded.
\end{proof}



\smallskip
\noindent \textbf{Proof of complexity.}
First, we prove that any correct process broadcasts $O(1)$ \textsc{echo} messages. 

\begin{lemma} \label{lemma:two_echo}
Any correct process broadcasts $O(1)$ different \textsc{echo} messages. 
\end{lemma}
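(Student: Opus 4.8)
The plan is to bound the number of distinct \textsc{echo} messages a correct process $p_i$ can broadcast by tracking the only line where an \textsc{echo} message is sent, namely line~\ref{line:rebuilding_broadcast_echo}. First I would observe that $p_i$ broadcasts an \textsc{echo} message only upon passing the check at line~\ref{line:rebuilding_check_echo}, which requires (1) the Merkle root $\mathcal{H}$ in the message to differ from $p_i$'s own root $\mathcal{H}_i$, (2) $p_i$ to have received $\langle \textsc{init}, \mathcal{H}, m_i, \mathcal{P}_i \rangle$ from $t+1$ distinct processes, and (3) no prior \textsc{echo} for $\langle \mathcal{H}, m_i, \mathcal{P}_i \rangle$. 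The key structural fact to exploit is the rule at line~\ref{line:one_init_accept}: at most one \textsc{init} message is processed per sender. Consequently, each sender contributes to the $t+1$-threshold of at most one $(\mathcal{H}, m_i)$ pair at $p_i$ (a valid witness $\mathcal{P}_i$ is determined by $\mathcal{H}$ and $m_i$, so I will drop it from the bookkeeping).

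Next I would run a counting argument: suppose $p_i$ broadcasts \textsc{echo} messages for $k$ distinct pairs $(\mathcal{H}^{(1)}, m_i^{(1)}), \dots, (\mathcal{H}^{(k)}, m_i^{(k)})$. For each such pair, there is a set $S_\ell$ of $t+1$ distinct processes from which $p_i$ processed an $\langle \textsc{init}, \mathcal{H}^{(\ell)}, m_i^{(\ell)}, \cdot\rangle$ message. By the one-\textsc{init}-per-process rule, the sets $S_1, \dots, S_k$ are pairwise disjoint (a process that contributed to $S_\ell$ sent exactly one processed \textsc{init}, hence cannot appear in any $S_{\ell'}$ with $\ell' \neq \ell$, as that would require a second processed \textsc{init} with a different payload). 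Therefore $k(t+1) \le n$, which yields $k \le n/(t+1) < 3$ since $t < n/3$ gives $n < 3(t+1)$, hence $k \le 2$. So $p_i$ broadcasts at most $2 \in O(1)$ distinct \textsc{echo} messages.

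The main obstacle — though it is a mild one — is making precise the claim that the $t+1$ contributing senders for distinct echoed pairs are disjoint; this hinges entirely on line~\ref{line:one_init_accept} ("only one \textsc{init} message is processed per process") together with the fact that the check at line~\ref{line:rebuilding_check_echo} counts only received (and not ignored) \textsc{init} messages. I would state this as the crux of the argument and note that a Byzantine process equivocating by sending multiple \textsc{init} messages does not help: $p_i$ processes only the first, so it still contributes to at most one threshold. I expect the proof to be two short paragraphs: one establishing disjointness of the contributing sets, and one doing the arithmetic $k(t+1) \le n < 3(t+1)$ to conclude $k \le 2$. No nontrivial calculation is needed beyond this.

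\begin{proof}
A correct process $p_i$ broadcasts an \textsc{echo} message only at line~\ref{line:rebuilding_broadcast_echo}, after passing the check at line~\ref{line:rebuilding_check_echo}. This check requires that $p_i$ has received $\langle \textsc{init}, \mathcal{H}, m_i, \mathcal{P}_i \rangle$ from $t + 1$ distinct processes, for the specific Merkle root $\mathcal{H}$ and RS symbol $m_i$ in the broadcast \textsc{echo} message (the witness $\mathcal{P}_i$ is determined by $\mathcal{H}$ and $m_i$, so we identify an \textsc{echo} payload with the pair $(\mathcal{H}, m_i)$).

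Suppose $p_i$ broadcasts \textsc{echo} messages for $k$ distinct payloads $(\mathcal{H}^{(1)}, m_i^{(1)}), \dots, (\mathcal{H}^{(k)}, m_i^{(k)})$. For each $\ell \in [1, k]$, let $S_\ell$ be a set of $t + 1$ distinct processes from which $p_i$ processed an $\langle \textsc{init}, \mathcal{H}^{(\ell)}, m_i^{(\ell)}, \cdot \rangle$ message (such a set exists by the check at line~\ref{line:rebuilding_check_echo}). By the rule at line~\ref{line:one_init_accept}, $p_i$ processes at most one \textsc{init} message per sender; hence any process in $S_\ell$ cannot belong to $S_{\ell'}$ for $\ell' \neq \ell$, as that would require $p_i$ to have processed two \textsc{init} messages with different payloads from the same sender. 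Therefore the sets $S_1, \dots, S_k$ are pairwise disjoint, so $k(t + 1) \leq n$. Since $t < n / 3$, we have $n < 3(t + 1)$, and thus $k < 3$, i.e., $k \leq 2$. Hence any correct process broadcasts at most $2 \in O(1)$ distinct \textsc{echo} messages.
\end{proof}
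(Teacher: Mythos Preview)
Your proof is correct and follows essentially the same approach as the paper: both rely on the one-\textsc{init}-per-process rule (line~\ref{line:one_init_accept}) and a pigeonhole/counting argument over the $n$ processes with threshold $t+1$ and $n>3t$. Your version is simply more explicit, deriving the concrete bound $k\le 2$ where the paper leaves the $O(1)$ implicit.
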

\begin{proof}
Any correct process $p_i$ can receive $t + 1$ identical \textsc{init} messages from as many different processes at most $O(1)$ times since $n > 3t$. 
(Recall that $p_i$ only ``accepts'' one \textsc{init} message per process due to the rule at line~\ref{line:one_init_accept}.)
Therefore, the lemma holds.
\end{proof}

The following theorem proves that each correct process exchanges $O(L + n\log(n)\kappa)$ bits.

\begin{theorem} [Exchanged bits]
A correct process sends $O(L + n\log(n)\kappa)$ bits in \longreb. 
\end{theorem}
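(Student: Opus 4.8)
The plan is to bound the per-process communication of \longreb by accounting separately for the three kinds of messages a correct process sends: \textsc{init} messages (line~\ref{line:reblong3-init-send}), \textsc{echo} messages (line~\ref{line:rebuilding_broadcast_echo}), and noting there are no other message types. First I would handle \textsc{init} messages: a correct process $p_i$ sends exactly one \textsc{init} message to each of the $n$ processes when it invokes $\mathsf{broadcast}(\cdot)$. Each such message carries an RS symbol $m_j$ of size $O\big(\frac{L}{t+1} + \log n\big) = O\big(\frac{L}{n} + \log n\big)$ bits (by the properties of Reed-Solomon encoding stated in \Cref{section:existing_primitives}), a Merkle root $\mathcal{H}_i$ of size $O(\kappa)$ bits, and a Merkle proof $\mathcal{P}_j$ of size $O(\log(n)\kappa)$ bits. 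So each \textsc{init} message is $O\big(\frac{L}{n} + \log(n)\kappa\big)$ bits, and over all $n$ recipients this totals $O\big(L + n\log(n)\kappa\big)$ bits.

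Next I would handle \textsc{echo} messages. By \Cref{lemma:two_echo}, a correct process broadcasts only $O(1)$ distinct \textsc{echo} messages, each broadcast to all $n$ processes. Each \textsc{echo} message $\langle \textsc{echo}, \mathcal{H}, m_i, \mathcal{P}_i \rangle$ again carries one RS symbol ($O\big(\frac{L}{n}+\log n\big)$ bits), a Merkle root ($O(\kappa)$ bits), and a Merkle proof ($O(\log(n)\kappa)$ bits), so each \textsc{echo} message is $O\big(\frac{L}{n} + \log(n)\kappa\big)$ bits. Broadcasting $O(1)$ of them to $n$ processes costs $O\big(L + n\log(n)\kappa\big)$ bits. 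Adding the \textsc{init} and \textsc{echo} contributions — and observing that no other messages are sent — gives a total of $O\big(L + n\log(n)\kappa\big)$ bits per process, which is exactly the claim. I would close with a one-line statement that summing the two bounds yields the theorem.

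The main obstacle here is essentially bookkeeping rather than conceptual difficulty: the key non-trivial input, the $O(1)$ bound on distinct \textsc{echo} messages, is already established in \Cref{lemma:two_echo}, so the proof just needs to correctly tally symbol sizes, accumulator-value sizes, and witness sizes using the parameters fixed in \Cref{section:existing_primitives}. One subtlety worth stating carefully is that the Merkle-proof size $O(\log(n)\kappa)$ dominates the per-message overhead (over the plain $O(\kappa)$ root), and that the RS symbol size $O(L/n + \log n)$ is subsumed by $O(L/n + \log(n)\kappa)$, so the totals collapse cleanly. No genuinely hard step is expected.

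Here is the proof:

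\begin{proof}
A correct process sends only \textsc{init} and \textsc{echo} messages. We bound the contribution of each.

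\textbf{\textsc{init} messages.}
A correct process $p_i$ sends exactly one \textsc{init} message to each process $p_j$ (line~\ref{line:reblong3-init-send}), i.e., $n$ \textsc{init} messages in total. Each message $\langle \textsc{init}, \mathcal{H}_i, m_j, \mathcal{P}_j \rangle$ contains: (1) an RS symbol $m_j$ of size $O\big( \frac{L}{t+1} + \log n \big) = O\big( \frac{L}{n} + \log n \big)$ bits; (2) a Merkle root $\mathcal{H}_i$ of size $O(\kappa)$ bits; and (3) a Merkle proof (witness) $\mathcal{P}_j$ of size $O(\log(n)\kappa)$ bits. Hence each \textsc{init} message is of size $O\big( \frac{L}{n} + \log(n)\kappa \big)$ bits, and $p_i$ sends $n \cdot O\big( \frac{L}{n} + \log(n)\kappa \big) = O\big( L + n\log(n)\kappa \big)$ bits via \textsc{init} messages.

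\textbf{\textsc{echo} messages.}
By \Cref{lemma:two_echo}, $p_i$ broadcasts only $O(1)$ distinct \textsc{echo} messages, each to all $n$ processes. Each message $\langle \textsc{echo}, \mathcal{H}, m_i, \mathcal{P}_i \rangle$ contains an RS symbol ($O\big( \frac{L}{n} + \log n \big)$ bits), a Merkle root ($O(\kappa)$ bits), and a Merkle proof ($O(\log(n)\kappa)$ bits), and is therefore of size $O\big( \frac{L}{n} + \log(n)\kappa \big)$ bits. Thus $p_i$ sends $O(1) \cdot n \cdot O\big( \frac{L}{n} + \log(n)\kappa \big) = O\big( L + n\log(n)\kappa \big)$ bits via \textsc{echo} messages.

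Summing the two contributions, a correct process sends $O\big( L + n\log(n)\kappa \big)$ bits in \longreb.
\end{proof}
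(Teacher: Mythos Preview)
Your proof is correct and follows essentially the same approach as the paper: bound each message to $O\big(\frac{L}{n} + \log(n)\kappa\big)$ bits, invoke \Cref{lemma:two_echo} to get $O(1)$ messages sent to each process, and multiply. The paper's version is slightly terser (it does not separate \textsc{init} from \textsc{echo}), but the argument is the same.
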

\begin{proof}
Each message sent by a correct process is of size $O(\kappa + \frac{L}{n} + \log(n) + \log(n)\kappa) = O(\frac{L}{n} + \log(n)\kappa)$ bits.
As each correct process sends at most $O(1)$ messages (one \textsc{init} and $O(1)$ \textsc{echo} messages as proven by \Cref{lemma:two_echo}) to each process, each correct process sends $O(1) \cdot n \cdot O(\frac{L}{n} + \log(n)\kappa) = O(L + n\log(n)\kappa)$ bits.
\end{proof}

Finally, the following theorem proves that \longreb takes $2$ asynchronous rounds before correct processes deliver a value.

\begin{theorem} [Asynchronous rounds] \label{theorem:async_rounds_rebuilding_3}
Assuming all correct processes broadcast via \longreb and no correct process abandons \longreb, \longreb takes $2$ asynchronous rounds before all correct processes deliver.
\end{theorem}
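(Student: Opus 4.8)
The plan is to perform a case analysis on how many correct processes broadcast a common value, reusing the structure of the termination proof (\Cref{theorem:termination_longreb_3t}) but now tracking the number of message delays along the longest relevant causal chain. Throughout, I assume all correct processes broadcast and none abandons, so every \textsc{init}/\textsc{echo} message between correct processes is eventually delivered; the goal is to bound every relevant chain of such messages by two. I would first record the base fact that within one asynchronous round every correct process has received the \textsc{init} message addressed to it by each of the $n - t \geq 2t+1$ correct processes.

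Case 1: at least $t+1$ correct processes broadcast the same value $v$, with $\mathcal{H} = \MRoot{v}$. Here each such process sends to a given process $p_j$ the \emph{same} triple $\langle \textsc{init}, \mathcal{H}, m_j, \mathcal{P}_j \rangle$ (the $j$-th Reed–Solomon symbol of $v$ and its witness), so within one round every correct $p_j$ either already has $\mathcal{H}_j = \mathcal{H}$ or receives $t+1$ identical such \textsc{init} messages and broadcasts $\langle \textsc{echo}, \mathcal{H}, m_j, \mathcal{P}_j \rangle$ at \Cref{line:rebuilding_broadcast_echo}. Consequently, within two rounds every correct process has, for the Merkle root $\mathcal{H}$, an \textsc{init}-or-\textsc{echo} message from every correct process, i.e.\ $|\mathsf{total}(\mathcal{H})| \geq n - t \geq 2t+1$, and moreover at least $t+1$ distinct honest RS symbols for $\mathcal{H}$ (its own index plus the indices of the non-broadcasters that echoed), so the rule at \Cref{line:rule_deliver_rebuilding} fires and it delivers at \Cref{line:deliver_rebuilding} --- unless it already delivered through \Cref{line:decide_bot_1} or \Cref{line:deliver_4_rebuilding}, which by the same counting also happens within two rounds.

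Case 2: no value is broadcast by $t+1$ or more correct processes. Then, exactly as in the proof of \Cref{theorem:termination_longreb_3t}, the check at \Cref{line:complex_check} fires within a \emph{single} round at every correct process that has not already delivered: after one round it has received \textsc{init} messages from all $n - t$ correct processes together with the $f \leq t$ Byzantine processes whose \textsc{init} it accepts, so $|\mathsf{total\_init}| \geq (n - t) + f \geq 2t + 1 + f$, whereas by collision resistance of $\MRoot{\cdot}$ and the case hypothesis $|\mathsf{init}(\mathsf{most\_frequent})| \leq t + f$; the difference is at least $t+1$, triggering delivery of $\bot_{\mathit{reb}}$. Combining both cases, every correct process delivers within two asynchronous rounds, which is the claim. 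The main obstacle I anticipate is the bookkeeping in Case 1: one must confirm that the $t+1$ common broadcasters really do furnish each $p_j$ with $t+1$ \emph{identical} triples (so the threshold at \Cref{line:rebuilding_check_echo} is met with matching symbols), that the resulting \textsc{echo}s plus the original \textsc{init}s are counted by $\mathsf{total}(\cdot)$ without being discarded by the ``one \textsc{init} per process'' rule (\Cref{line:one_init_accept}), and that the symbols a deliverer collects span $\geq t+1$ distinct evaluation points so that $\mathsf{RSDec}$ succeeds with no errors; once these are pinned down, the round count follows directly from the $\textsc{init} \to \textsc{echo} \to \text{deliver}$ chain.
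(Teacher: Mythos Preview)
Your proposal is correct and follows essentially the same approach as the paper: both proofs mirror the case analysis of the termination proof (\Cref{theorem:termination_longreb_3t}), arguing that in Case~1 the \textsc{init}$\to$\textsc{echo} chain yields $|\mathsf{total}(\mathcal{H})|\geq 2t+1$ within two rounds and in Case~2 the rule at \Cref{line:complex_check} fires within one round. Your additional bookkeeping (identical triples, the one-\textsc{init}-per-process rule, and distinct RS evaluation points) is more careful than the paper's brief argument, but note that the distinct-symbol count is not actually needed for the round bound itself, since the rule at \Cref{line:rule_deliver_rebuilding} fires on $|\mathsf{total}(\mathcal{H}')|\geq 2t+1$ regardless.
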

\begin{proof}
Similarly to the proof of the termination property (\Cref{theorem:termination_longreb_3t}), we analyze two scenarios:
\begin{compactitem}
    \item There exists a value \rebnotbottom{v} such that at least $t + 1$ correct processes broadcast $v$ via \longreb.
    At the end of the first asynchronous round, every correct process whose value is not $v$ broadcasts an \textsc{echo} message for $\mathsf{MR}(v)$ (line~\ref{line:rebuilding_broadcast_echo}).
    Therefore, at the end of the second asynchronous round, every correct process receives $n - t \geq 2t + 1$ messages for $\mathsf{MR}(v)$, activates the rule at line~\ref{line:rule_deliver_rebuilding}, and delivers at line~\ref{line:deliver_rebuilding}.

    \item There does not exist a value \rebnotbottom{v} such that at least $t + 1$ correct processes broadcast $v$ via \longreb.
    In this case, every correct process activates a rule at line~\ref{line:complex_check} and delivers at line~\ref{line:deliver_4_rebuilding} upon receiving \textsc{init} messages from all correct processes.
    Hence, \longreb takes $1$ asynchronous round in this scenario.
\end{compactitem}
The proof is concluded as \longreb takes $2$ asynchronous rounds before all correct processes deliver.
\end{proof} 

\section{Graded Consensus: Concrete Implementations} \label{section:graded_consensus_concrete_implementations}

This section provides concrete implementations of the graded consensus primitive that we employ in \name to yield Byzantine agreement algorithms with various bit complexity.
Concretely, \Cref{tab:graded-consensus-summary} outlines the characteristics of two graded consensus implementations we introduce.

\begin{table}[ht]
\centering
\footnotesize
\begin{tabular}{ |p{3.7cm}|p{2.3cm}|p{2.5cm}|p{2cm}|p{1.5cm}|p{1.8cm}|  }
 \hline
 \centering Algorithm & \centering Section & \centering Exchanged bits & \centering Async. rounds & \centering Resilience & 
\centering Cryptography \tabularnewline
 \hline
 \hline

\centering \textbf{ \longgct (\Cref{algorithm:graded_consensus_long_3_4}) } & \centering \Cref{subsection:graded_consensus_bounded}     & \centering $O(nL + n^2 \log(n) \kappa)$ & \centering 11 & \centering $t < n / 3$ & \centering Hash \tabularnewline

 \hline

 \centering \textbf{ \longgcf (\Cref{algorithm:its_graded_consensus}) }  & \centering \Cref{subsection:its_graded_consensus}                 & \centering $O\big( nL + n^2\log(n) \big)$ & \centering 14 & \centering $t < n / 5$ & \centering None  \tabularnewline
 \hline
\end{tabular}
    \caption{Relevant aspects of the two graded consensus algorithms we propose.
    \\($L$ denotes the bit-size of a value, whereas $\kappa$ denotes the bit-size of a hash value.)}
\label{tab:graded-consensus-summary}
\end{table}

\subsection{Review of the Specification of Graded Consensus}\label{app:graded-consensus-definition}
First, we recall the definition of graded consensus.
Graded consensus exposes the following interface:
\begin{compactitem}
    \item \textbf{request} $\mathsf{propose}(v \in \mathsf{Value})$: a process proposes value $v$.

    \item \textbf{request} $\mathsf{abandon}$: a process abandons (i.e., stops participating in) graded consensus.

    \item \textbf{indication} $\mathsf{decide}(v' \in \mathsf{Value}, g' \in \{0, 1\})$: a process decides value $v'$ with grade $g'$.
\end{compactitem}
Every correct process proposes at most once and no correct process proposes an invalid value.
Importantly, not all correct processes are guaranteed to propose to graded consensus.

The graded consensus primitive satisfies the following properties:
\begin{compactitem}
    \item \emph{Strong validity:} If all correct processes that propose do so with the same value $v$ and a correct process decides a pair $(v', g')$, then $v' = v$ and $g' = 1$.

    \item \emph{External validity:} If any correct process decides a pair $(v', \cdot)$, then $\mathsf{valid}(v') = \mathit{true}$.

    \item \emph{Consistency:} If any correct process decides a pair $(v, 1)$, then no correct process decides a pair $(v' \neq v, \cdot)$.

    \item \emph{Integrity:} No correct process decides more than once.

    \item \emph{Termination:} If all correct processes propose and no correct process abandons graded consensus, then every correct process eventually decides.
\end{compactitem}



\subsection{\longgct: Pseudocode \& Proof of Correctness and Complexity}\label{subsection:graded_consensus_bounded} 

In this subsection, we present \longgct (\Cref{algorithm:graded_consensus_long_3_4}), our hash-based implementation of graded consensus that exchanges $O(nL + n^2 \log (n) \kappa)$ bits.
Notably, \longgct is optimally resilient (tolerates $t < n / 3$ faulty processes).
\longgct internally relies on (1) a collision-resistant hash function $\mathsf{hash}(\cdot)$, (2) the \longreb rebuilding broadcast algorithm (see \Cref{section:rebuilding_broadccast}), and (3) AW graded consensus~\cite{AttiyaWelch23} (see \Cref{section:existing_primitives}).


\begin{algorithm}
\caption{\longgct: Pseudocode (for process $p_i$)}
\label{algorithm:graded_consensus_long_3_4}
\footnotesize
\begin{algorithmic} [1] 
\State \textbf{Uses:}
\State \hskip2em \longreb rebuilding broadcast, \textbf{instance} $\mathcal{RB}$ \BlueComment{see \Cref{section:rebuilding_broadccast}}
\State \hskip2em AW graded consensus~\cite{AttiyaWelch23}, \textbf{instance} $\mathcal{AW}$ \BlueComment{hash values and $\bot_{\mathit{reb}}$ can be proposed and decided; see \Cref{section:existing_primitives}}

\medskip
\State \textbf{Local variables:}
\State \hskip2em $\mathsf{Value}$ $\mathit{proposal}_i \gets \bot$

\medskip
\State \textbf{upon} $\mathsf{propose}(v \in \mathsf{Value})$: \label{line:graded_consensus_long_propose}
\State \hskip2em $\mathit{proposal}_i \gets v$
\State \hskip2em \textbf{invoke} $\mathcal{RB}.\mathsf{broadcast}(v)$ \label{line:graded_consensus_long_broadcast}

\medskip
\State \textbf{upon} $\mathcal{RB}.\mathsf{deliver}(v' \in \mathsf{Value} \cup \{\bot_{\mathit{reb}}\})$: \label{line:graded_consensus_long_deliver_rb}
\State \hskip2em \textbf{if} $v' \neq \bot_{\mathit{reb}}$:
\State \hskip4em \textbf{invoke} $\mathcal{AW}.\mathsf{propose}\big( \mathsf{hash}(v') \big)$ \label{line:graded_consensus_long_propose_aw} \BlueComment{propose the hash value of $v'$ if $v' \neq \bot_{\mathit{reb}}$}
\State \hskip2em \textbf{else:}
\State \hskip4em \textbf{invoke} $\mathcal{AW}.\mathsf{propose}(\bot_{\mathit{reb}})$ \label{line:graded_consensus_long_propose_bot_aw} \BlueComment{propose $\bot_{\mathit{reb}}$ if $v' = \bot_{\mathit{reb}}$}

\medskip
\State \textbf{upon} $\mathcal{AW}.\mathsf{decide}(\mathcal{H} \in \mathsf{Hash\_Value} \cup \{\bot_{\mathit{reb}}\}, g \in \{0, 1\})$: \label{line:graded_consensus_long_decide_aw}
\State \hskip2em \textbf{if} $\mathcal{H} = \bot_{\mathit{reb}}$: \label{line:graded_consensus_check_after_aw_decision}
\State \hskip4em \textbf{trigger} $\mathsf{decide}(\mathit{proposal}_i, 0)$ \label{line:graded_consensus_long_decide_1} \BlueComment{if $\mathcal{H} = \bot_{\mathit{reb}}$, decide $p_i$'s proposal with grade 0}
\State \hskip2em \textbf{else:}
\State \hskip4em \textbf{wait for} $\mathcal{RB}.\mathsf{rebuild}(v' \in \mathsf{Value})$ such that $\mathsf{hash}(v') = \mathcal{H}$ \label{line:graded_consensus_wait_for_rebuild} \BlueComment{some correct process delivered value $v'$ from $\mathcal{RB}$}
\State \hskip4em \textbf{trigger} $\mathsf{decide}(v', g)$ \label{line:graded_consensus_long_decide} \BlueComment{after rebuilding $v'$, decide $v'$ with the grade specified by $\mathcal{AW}$}
\end{algorithmic} 
\end{algorithm}

\smallskip
\noindent \textbf{Pseudocode description.}
We describe \longgct (\Cref{algorithm:graded_consensus_long_3_4}) from the perspective of a correct process $p_i$.
When $p_i$ proposes its value $v$ (line~\ref{line:graded_consensus_long_propose}), it broadcasts the value using the $\mathcal{RB}$ instance of the rebuilding broadcast (line~\ref{line:graded_consensus_long_broadcast}).
If $p_i$ delivers a value $v' \neq \bot_{\mathit{reb}}$ from $\mathcal{RB}$, it proposes $\mathsf{hash}(v')$ to the $\mathcal{AW}$ instance of the AW graded consensus algorithm (line~\ref{line:graded_consensus_long_propose_aw}).
If $p_i$ delivers $\bot_{\mathit{reb}}$ from $\mathcal{RB}$, it forwards $\bot_{\mathit{reb}}$ to $\mathcal{AW}$ (line~\ref{line:graded_consensus_long_propose_bot_aw}).
Eventually, $p_i$ decides a pair $(\mathcal{H}, g)$ from $\mathcal{AW}$ (line~\ref{line:graded_consensus_long_decide_aw}).
We distinguish two scenarios:
\begin{compactitem}
    \item If $\mathcal{H} = \bot_{\mathit{reb}}$, then $p_i$ decides its proposal with grade 0 (line~\ref{line:graded_consensus_long_decide_1}).

    \item Otherwise, $p_i$ waits to rebuild a value $v'$ such that $\mathsf{hash}(v') = \mathcal{H}$ (line~\ref{line:graded_consensus_wait_for_rebuild}).
    As $\mathcal{H} \neq \bot_{\mathit{reb}}$ is decided from $\mathcal{AW}$, the safety property of $\mathcal{AW}$ guarantees that $\mathcal{H}$ has previously been proposed to $\mathcal{AW}$ by a correct process (line~\ref{line:graded_consensus_long_propose_aw}).
    Therefore, $v'$ has been delivered from $\mathcal{RB}$ by a correct process (line~\ref{line:graded_consensus_long_deliver_rb}), which implies that $p_i$ eventually rebuilds $v'$ (due to the rebuilding validity property of $\mathcal{RB}$).
    After rebuilding $v'$, $p_i$ decides $v'$ with grade $g$ (line~\ref{line:graded_consensus_long_decide}).
\end{compactitem}

\smallskip
\noindent \textbf{Proof of correctness.}
We start by proving the strong validity property.

\begin{theorem} [Strong validity]
\longgct (\Cref{algorithm:graded_consensus_long_3_4}) satisfies strong validity.
\end{theorem}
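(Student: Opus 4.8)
The plan is to trace the value $v$ through the two sub-protocols $\mathcal{RB}$ (the \longreb rebuilding broadcast) and $\mathcal{AW}$ (the Attiya-Welch graded consensus), invoking the strong validity guarantees of each at the appropriate point. So suppose all correct processes that propose to \longgct do so with the same value $v$, and suppose some correct process $p_i$ decides a pair $(v', g')$; we must show $v' = v$ and $g' = 1$.

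First I would argue about $\mathcal{RB}$. Every correct process that invokes $\mathcal{RB}.\mathsf{broadcast}(\cdot)$ (line~\ref{line:graded_consensus_long_broadcast}) does so with $v$. By the strong validity property of \longreb (\Cref{theorem:long_reb_strong_validity_3}), no correct process delivers $\bot_{\mathit{reb}}$ from $\mathcal{RB}$; and by the safety property of \longreb, any value delivered from $\mathcal{RB}$ must have been broadcast by a correct process, hence equals $v$. Consequently, every correct process that reaches line~\ref{line:graded_consensus_long_deliver_rb} with a non-$\bot_{\mathit{reb}}$ value proposes exactly $\mathsf{hash}(v)$ to $\mathcal{AW}$ (line~\ref{line:graded_consensus_long_propose_aw}), and no correct process ever proposes $\bot_{\mathit{reb}}$ to $\mathcal{AW}$. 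Thus all correct processes that propose to $\mathcal{AW}$ do so with the same value $\mathsf{hash}(v)$.

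Next I would apply the strong validity property of the AW graded consensus algorithm: since $p_i$ decides a pair $(\mathcal{H}, g)$ from $\mathcal{AW}$ (line~\ref{line:graded_consensus_long_decide_aw}) and all correct proposers propose $\mathsf{hash}(v)$, we get $\mathcal{H} = \mathsf{hash}(v) \neq \bot_{\mathit{reb}}$ and $g = 1$. Therefore $p_i$ takes the \textbf{else} branch at line~\ref{line:graded_consensus_check_after_aw_decision}: it waits for $\mathcal{RB}.\mathsf{rebuild}(w)$ with $\mathsf{hash}(w) = \mathcal{H} = \mathsf{hash}(v)$ (line~\ref{line:graded_consensus_wait_for_rebuild}) and then decides $(w, g) = (w, 1)$. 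By the safety property of \longreb, any rebuilt value was previously broadcast by a correct process, so $w = v$; alternatively, one appeals to collision resistance of $\mathsf{hash}(\cdot)$ to conclude $w = v$ from $\mathsf{hash}(w) = \mathsf{hash}(v)$. Hence $v' = w = v$ and $g' = g = 1$, as required. (One should also note that the wait at line~\ref{line:graded_consensus_wait_for_rebuild} does terminate here, since a correct process delivered $v$ from $\mathcal{RB}$, so by rebuilding validity $p_i$ eventually rebuilds $v$ — though termination is formally a separate property.)

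I do not anticipate a genuine obstacle: the only subtlety is the possibility that a correct process $p_i$ delivers a non-$\bot_{\mathit{reb}}$ value from $\mathcal{RB}$ but some other correct process delivers $\bot_{\mathit{reb}}$ — this is ruled out by strong validity of \longreb, so the $\mathcal{AW}$ input is genuinely unanimous. A minor bookkeeping point is that not every correct process is assumed to propose to \longgct (or to $\mathcal{AW}$), but the strong validity statements of \longreb and AW graded consensus are already phrased conditionally on "all correct processes \emph{that} propose/broadcast", so this causes no difficulty.

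\begin{proof}
Suppose all correct processes that propose to \longgct do so with the same value $v$, and let a correct process $p_i$ decide a pair $(v', g')$ (line~\ref{line:graded_consensus_long_decide_1} or line~\ref{line:graded_consensus_long_decide}).

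Every correct process that invokes $\mathcal{RB}.\mathsf{broadcast}(\cdot)$ (line~\ref{line:graded_consensus_long_broadcast}) does so with value $v$. By the strong validity property of \longreb (\Cref{theorem:long_reb_strong_validity_3}), no correct process delivers $\bot_{\mathit{reb}}$ from $\mathcal{RB}$ (line~\ref{line:graded_consensus_long_deliver_rb}). Moreover, by the safety property of \longreb, if a correct process delivers a value $w \neq \bot_{\mathit{reb}}$ from $\mathcal{RB}$, then a correct process previously broadcast $w$, hence $w = v$. Therefore every correct process that reaches line~\ref{line:graded_consensus_long_deliver_rb} proposes $\mathsf{hash}(v)$ to $\mathcal{AW}$ (line~\ref{line:graded_consensus_long_propose_aw}), and no correct process proposes $\bot_{\mathit{reb}}$ to $\mathcal{AW}$.

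Since $p_i$ decides a pair $(\mathcal{H}, g)$ from $\mathcal{AW}$ (line~\ref{line:graded_consensus_long_decide_aw}) and all correct processes that propose to $\mathcal{AW}$ do so with the same value $\mathsf{hash}(v)$, the strong validity property of the AW graded consensus algorithm yields $\mathcal{H} = \mathsf{hash}(v) \neq \bot_{\mathit{reb}}$ and $g = 1$. Hence the condition at line~\ref{line:graded_consensus_check_after_aw_decision} fails, and $p_i$ waits for a $\mathcal{RB}.\mathsf{rebuild}(v')$ indication with $\mathsf{hash}(v') = \mathcal{H} = \mathsf{hash}(v)$ (line~\ref{line:graded_consensus_wait_for_rebuild}), and then triggers $\mathsf{decide}(v', g)$ (line~\ref{line:graded_consensus_long_decide}). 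By the safety property of \longreb, $v'$ was previously broadcast by a correct process, so $v' = v$. Consequently $p_i$ decides $(v', g') = (v, 1)$, which establishes strong validity.
\end{proof}
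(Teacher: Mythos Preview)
Your proof is correct and follows essentially the same route as the paper's: use strong validity and safety of $\mathcal{RB}$ to get a unanimous $\mathsf{hash}(v)$ input to $\mathcal{AW}$, then use strong validity of $\mathcal{AW}$ to force $(\mathsf{hash}(v),1)$, and conclude $(v,1)$.

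One small slip in your last step: the safety property of rebuilding broadcast (\Cref{subsection:rebuilding_broadcast_problem_definition}) applies only to \emph{delivered} values, not to \emph{rebuilt} values, so you cannot directly invoke it for the value obtained at line~\ref{line:graded_consensus_wait_for_rebuild}. The paper closes the argument using collision resistance of $\mathsf{hash}(\cdot)$ --- exactly the alternative you mention in your plan --- which is the correct justification here.
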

\begin{proof}
Suppose all correct processes that propose to graded consensus do so with the same value $v$.
Hence, all correct processes that broadcast their proposal via $\mathcal{RB}$ do so with value $v$ (line~\ref{line:graded_consensus_long_broadcast}).
Therefore, every correct process that delivers a value from $\mathcal{RB}$ does deliver value $v \neq \bot_{\mathit{reb}}$ (due to the strong validity and safety properties of $\mathcal{RB}$), which further implies that all correct processes that propose to $\mathcal{AW}$ do so with hash value $\mathcal{H} = \mathsf{hash}(v)$ (line~\ref{line:graded_consensus_long_propose_aw}).
Due to the strong validity property of $\mathcal{AW}$, any correct process $p_i$ that decides from $\mathcal{AW}$ decides a pair $(\mathcal{H} \neq \bot_{\mathit{reb}}, 1)$ (line~\ref{line:graded_consensus_long_decide_aw}).
Hence, every correct process that decides from \longgct decides with grade 1 (line~\ref{line:graded_consensus_long_decide}).
Finally, as the $\mathsf{hash}(\cdot)$ function is collision-resistant, every correct process that decides from \longgct does decide value $v$.
\end{proof}

The following theorem proves the external validity property.

\begin{theorem} [External validity]
\longgct (\Cref{algorithm:graded_consensus_long_3_4}) satisfies external validity.
\end{theorem}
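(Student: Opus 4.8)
The plan is to trace any decided value back to a value that some correct process delivered from the rebuilding broadcast $\mathcal{RB}$, and then use the external validity already established for the AW graded consensus together with the safety property of $\mathcal{RB}$. Concretely, suppose a correct process $p_i$ triggers $\mathsf{decide}(v', g')$ at \cref{algorithm:graded_consensus_long_3_4} for some value $v'$. There are exactly two code paths: either $p_i$ decided via \cref{line:graded_consensus_long_decide_1} (the $\mathcal{H} = \bot_{\mathit{reb}}$ branch), or via \cref{line:graded_consensus_long_decide} (the $\mathcal{H} \neq \bot_{\mathit{reb}}$ branch). I would handle these two cases separately.

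In the first case, $v' = \mathit{proposal}_i$, which is exactly the value $p_i$ passed to $\mathsf{propose}(\cdot)$ at \cref{line:graded_consensus_long_propose}. Since we assume every correct process proposes only valid values to graded consensus, $\mathsf{valid}(v') = \mathsf{valid}(\mathit{proposal}_i) = \mathit{true}$, and we are done. (Note this relies on $p_i$ actually having proposed, i.e. $\mathit{proposal}_i \neq \bot$; this holds because reaching \cref{line:graded_consensus_long_decide_aw} requires having proposed to $\mathcal{AW}$, which in turn requires having delivered from $\mathcal{RB}$, which requires having invoked $\mathcal{RB}.\mathsf{broadcast}$ at \cref{line:graded_consensus_long_broadcast} — but to keep the argument clean I would either invoke the integrity property of $\mathcal{RB}$, or simply observe that the structure of Task-style handlers forces a prior $\mathsf{propose}$. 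Actually, the cleanest route is: delivering from $\mathcal{RB}$ requires broadcasting to $\mathcal{RB}$ by its integrity property, so $\mathit{proposal}_i$ was set.)

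In the second case, $p_i$ decided a pair $(\mathcal{H}, g)$ from $\mathcal{AW}$ with $\mathcal{H} \neq \bot_{\mathit{reb}}$, waited at \cref{line:graded_consensus_wait_for_rebuild} for $\mathcal{RB}.\mathsf{rebuild}(v')$ with $\mathsf{hash}(v') = \mathcal{H}$, and then decided $v'$. By the AW safety property (recalled in \Cref{section:existing_primitives}, "If a correct process decides a pair $(v',\cdot)$, then $v'$ has been proposed by a correct process"), some correct process $p_j$ proposed $\mathcal{H}$ to $\mathcal{AW}$ at \cref{line:graded_consensus_long_propose_aw}; hence $p_j$ previously delivered some value $w$ from $\mathcal{RB}$ with $\mathsf{hash}(w) = \mathcal{H}$ and $w \neq \bot_{\mathit{reb}}$. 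By the safety property of \longreb (rebuilding broadcast), $w$ was broadcast to $\mathcal{RB}$ by some correct process, which — again by the assumption that correct processes propose only valid values, combined with \cref{line:graded_consensus_long_broadcast} — means $\mathsf{valid}(w) = \mathit{true}$. It remains to identify $w$ with $v'$: since $\mathsf{hash}(v') = \mathcal{H} = \mathsf{hash}(w)$ and $\mathsf{hash}(\cdot)$ is collision-resistant, $v' = w$ (except with negligible probability), so $\mathsf{valid}(v') = \mathit{true}$.

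The only mild subtlety — and the one place worth being careful — is the appeal to collision resistance when concluding $v' = w$: strictly this makes the statement hold with all-but-negligible probability rather than error-free, which matches the "Hash" cryptography column for \longgct in \Cref{tab:graded-consensus-summary}. Everything else is a routine case split over the two decision lines. I do not anticipate a genuine obstacle; the proof is a two-case structural argument invoking (i) the validity assumption on proposals, (ii) AW safety, (iii) \longreb safety, and (iv) collision resistance of $\mathsf{hash}$.
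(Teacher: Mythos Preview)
Your proposal is correct and follows essentially the same two-case argument as the paper: case 1 uses that $\mathit{proposal}_i$ is valid by assumption, and case 2 chains $\mathcal{AW}$ safety, collision resistance of $\mathsf{hash}(\cdot)$, and $\mathcal{RB}$ safety to trace the decided value back to a correct process's proposal. Your treatment is slightly more explicit (introducing the intermediate value $w$ and then identifying $v' = w$ via collision resistance, and justifying that $\mathit{proposal}_i$ was set), but the structure and ingredients are identical to the paper's proof.
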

\begin{proof}
Suppose a correct process $p_i$ decides some value $v'$.
We consider two possibilities:
\begin{compactitem}
    \item Let $p_i$ decide $v'$ at line~\ref{line:graded_consensus_long_decide_1}.
    In this case, $v'$ is the proposal of $p_i$.
    As no correct process proposes an invalid value to \longgct, $v'$ is a valid value.

    \item Let $p_i$ decide $v'$ at line~\ref{line:graded_consensus_long_decide}.
    Hence, $p_i$ has previously decided $\mathcal{H} = \mathsf{hash}(v') \neq \bot_{\mathit{reb}}$ from $\mathcal{AW}$ (line~\ref{line:graded_consensus_long_decide_aw}).
    Due to the safety property of $\mathcal{AW}$, a correct process has previously proposed $\mathcal{H}$ to $\mathcal{AW}$ (line~\ref{line:graded_consensus_long_propose_aw}), which implies (due to the collision resistance of the $\mathsf{hash}(\cdot)$ function) that a correct process has delivered $v' \neq \bot_{\mathit{reb}}$ from $\mathcal{RB}$ (line~\ref{line:graded_consensus_long_deliver_rb}).
    The safety property of $\mathcal{RB}$ proves that $v'$ has been broadcast by a correct process (line~\ref{line:graded_consensus_long_broadcast}), which means that a correct process has proposed $v'$ to \longgct (line~\ref{line:graded_consensus_long_propose}).
    As no correct process proposes an invalid value to \longgct, $v'$ is a valid value.
\end{compactitem}
As $v'$ is a valid value in both possible scenarios, the proof is concluded.
\end{proof}

Next, we prove consistency.

\begin{theorem} [Consistency]
\longgct (\Cref{algorithm:graded_consensus_long_3_4}) satisfies consistency.
\end{theorem}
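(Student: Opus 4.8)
The plan is to trace how a decision with grade $1$ in \longgct forces agreement on the underlying value. Suppose some correct process $p_i$ decides a pair $(v, 1)$. By the pseudocode, a grade-$1$ decision can only be produced at line~\ref{line:graded_consensus_long_decide} (the grade at line~\ref{line:graded_consensus_long_decide_1} is always $0$), so $p_i$ must have decided $(\mathcal{H}, 1)$ from $\mathcal{AW}$ with $\mathcal{H} = \mathsf{hash}(v) \neq \bot_{\mathit{reb}}$, and then rebuilt $v$ with $\mathsf{hash}(v) = \mathcal{H}$. First I would invoke the consistency property of $\mathcal{AW}$: since $p_i$ decides $(\mathcal{H}, 1)$ from $\mathcal{AW}$, no correct process decides any pair $(\mathcal{H}' \neq \mathcal{H}, \cdot)$ from $\mathcal{AW}$. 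In particular, no correct process decides $(\bot_{\mathit{reb}}, \cdot)$ from $\mathcal{AW}$, so no correct process takes the branch at line~\ref{line:graded_consensus_long_decide_1}; every correct process that decides in \longgct does so at line~\ref{line:graded_consensus_long_decide}, after deciding $(\mathcal{H}, \cdot)$ from $\mathcal{AW}$ (with the \emph{same} $\mathcal{H}$, by the consistency of $\mathcal{AW}$) and rebuilding some value $w$ with $\mathsf{hash}(w) = \mathcal{H}$.

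It then remains to argue $w = v$ for any such correct process, i.e., that two correct processes cannot rebuild distinct values with the same hash. This is precisely where the collision resistance of $\mathsf{hash}(\cdot)$ is used: if $p_i$ rebuilds $v$ and some correct $p_j$ rebuilds $w$ with $\mathsf{hash}(v) = \mathcal{H} = \mathsf{hash}(w)$ but $v \neq w$, this is a hash collision, which occurs only with negligible probability. Hence $w = v$, and so every correct process that decides in \longgct decides a pair $(v, \cdot)$ — in particular, no correct process decides $(v' \neq v, \cdot)$, which is the consistency property. (Strictly speaking, one could also note that $v$ and $w$ were both delivered/rebuilt from the single $\mathcal{RB}$ instance; combining the safety property of $\mathcal{RB}$ with collision resistance gives the same conclusion, but the direct hash-collision argument is cleanest.)

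I expect the only real subtlety is the bookkeeping around the two decision branches of \longgct — making sure that a grade-$1$ decision genuinely rules out the $\bot_{\mathit{reb}}$ branch at all correct processes, which follows from reading off that $\mathcal{AW}$ consistency propagates ``$\mathcal{H} \neq \bot_{\mathit{reb}}$ and the exact value of $\mathcal{H}$'' to every correct decider. There is no heavy computation here; the argument is a short chain of module-property invocations (consistency of $\mathcal{AW}$, collision resistance of $\mathsf{hash}$), and the proof should fit in a few lines.

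\begin{proof}
Suppose a correct process $p_i$ decides a pair $(v, 1)$.
Since the grade emitted at line~\ref{line:graded_consensus_long_decide_1} is always $0$, $p_i$ must have decided $(v, 1)$ at line~\ref{line:graded_consensus_long_decide}.
Hence $p_i$ has previously decided a pair $(\mathcal{H}, 1)$ from $\mathcal{AW}$ (line~\ref{line:graded_consensus_long_decide_aw}) with $\mathcal{H} = \mathsf{hash}(v) \neq \bot_{\mathit{reb}}$, and then rebuilt $v$ from $\mathcal{RB}$ (line~\ref{line:graded_consensus_wait_for_rebuild}).

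Let $p_j$ be any correct process that decides a pair $(v', \cdot)$ from \longgct.
By the consistency property of $\mathcal{AW}$, $p_j$ cannot decide any pair $(\mathcal{H}' \neq \mathcal{H}, \cdot)$ from $\mathcal{AW}$; in particular, $p_j$ does not decide $(\bot_{\mathit{reb}}, \cdot)$ from $\mathcal{AW}$, so the check at line~\ref{line:graded_consensus_check_after_aw_decision} fails at $p_j$ and $p_j$ does not decide at line~\ref{line:graded_consensus_long_decide_1}.
Therefore $p_j$ decides at line~\ref{line:graded_consensus_long_decide}, meaning it has decided $(\mathcal{H}, \cdot)$ from $\mathcal{AW}$ (the same $\mathcal{H}$, by the consistency of $\mathcal{AW}$) and rebuilt a value $v'$ with $\mathsf{hash}(v') = \mathcal{H} = \mathsf{hash}(v)$.
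By the collision resistance of $\mathsf{hash}(\cdot)$, $v' = v$.
Thus no correct process decides a pair $(v' \neq v, \cdot)$, which establishes consistency.
\end{proof}
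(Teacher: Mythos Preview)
Your proof is correct and follows essentially the same approach as the paper: invoke the consistency of $\mathcal{AW}$ to force every correct decider to obtain the same hash $\mathcal{H}$ from $\mathcal{AW}$, then use collision resistance of $\mathsf{hash}(\cdot)$ to conclude all rebuilt values equal $v$. Your version is slightly more explicit about why grade~$1$ forces the non-$\bot_{\mathit{reb}}$ branch and why no correct process can take the line~\ref{line:graded_consensus_long_decide_1} branch, but the argument is the same.
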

\begin{proof}
Let $p_i$ be any correct process that decides a pair $(v, 1)$ (line~\ref{line:graded_consensus_long_decide}).
Hence, $p_i$ has previously decided a pair $(\mathcal{H} = \mathsf{hash}(v), 1)$ from $\mathcal{AW}$ (line~\ref{line:graded_consensus_long_decide_aw}).
Due to the consistency property of $\mathcal{AW}$, any correct process that decides from it does decide $(\mathcal{H}, \cdot)$.
Therefore, because of the $\mathsf{hash}(\cdot)$ function's collision resistance, any correct process that decides from \longgct does decide $v$ (line~\ref{line:graded_consensus_long_decide}).
\end{proof}

The following theorem proves the integrity property.

\begin{theorem} [Integrity]
\longgct (\Cref{algorithm:graded_consensus_long_3_4}) satisfies integrity.
\end{theorem}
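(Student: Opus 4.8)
The plan is to establish the integrity property of \longgct by tracing every possible decision event back to the corresponding decision event of the underlying AW graded consensus instance $\mathcal{AW}$. A correct process $p_i$ decides from \longgct only at line~\ref{line:graded_consensus_long_decide_1} or line~\ref{line:graded_consensus_long_decide}, and both of these are triggered inside the handler that reacts to $\mathcal{AW}.\mathsf{decide}(\cdot, \cdot)$ at line~\ref{line:graded_consensus_long_decide_aw}. So the first step is to observe that each \longgct decision of $p_i$ is caused by exactly one $\mathcal{AW}$ decision of $p_i$.

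Next, I would invoke the integrity property of $\mathcal{AW}$ (listed in \Cref{app:graded-consensus-definition}), which guarantees that no correct process decides more than once from $\mathcal{AW}$. Hence $p_i$ receives at most one $\mathcal{AW}.\mathsf{decide}(\mathcal{H}, g)$ indication. The remaining subtlety is to check that a single $\mathcal{AW}$ decision cannot cause two \longgct decisions: the two branches of the conditional at line~\ref{line:graded_consensus_check_after_aw_decision} are mutually exclusive (either $\mathcal{H} = \bot_{\mathit{reb}}$ or $\mathcal{H} \neq \bot_{\mathit{reb}}$), so at most one of line~\ref{line:graded_consensus_long_decide_1} and line~\ref{line:graded_consensus_long_decide} executes, and in the $\mathcal{H} \neq \bot_{\mathit{reb}}$ branch the $\mathsf{wait\ for}$ at line~\ref{line:graded_consensus_wait_for_rebuild} fires at most once (the value $v'$ with $\mathsf{hash}(v') = \mathcal{H}$ is unique by collision resistance of $\mathsf{hash}(\cdot)$, so $p_i$ does not re-decide on a second rebuilt value). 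Combining these observations, $p_i$ triggers $\mathsf{decide}(\cdot, \cdot)$ at most once, which is exactly the integrity property.

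I do not anticipate a real obstacle here — the argument is a short bookkeeping exercise that simply chains the integrity property of $\mathcal{AW}$ through the deterministic control flow of \Cref{algorithm:graded_consensus_long_3_4}. If anything, the one place requiring a word of care is ruling out a double decision within the non-$\bot_{\mathit{reb}}$ branch, which is handled by the uniqueness of the hash preimage being waited for (and by the fact that the handler processes a single $\mathcal{AW}$ indication). The proof will therefore be one or two sentences: "By the integrity property of $\mathcal{AW}$, no correct process decides more than once from $\mathcal{AW}$; since each \longgct decision of $p_i$ is triggered by an $\mathcal{AW}$ decision of $p_i$, and the two decision branches are mutually exclusive, $p_i$ decides at most once from \longgct."

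\begin{proof}
A correct process $p_i$ triggers a $\mathsf{decide}(\cdot, \cdot)$ indication only at line~\ref{line:graded_consensus_long_decide_1} or line~\ref{line:graded_consensus_long_decide}, and both lines execute only within the handler for an $\mathcal{AW}.\mathsf{decide}(\mathcal{H}, g)$ indication (line~\ref{line:graded_consensus_long_decide_aw}). By the integrity property of $\mathcal{AW}$, process $p_i$ receives at most one such indication. Moreover, within that handler, the condition at line~\ref{line:graded_consensus_check_after_aw_decision} selects exactly one of the two branches ($\mathcal{H} = \bot_{\mathit{reb}}$ or $\mathcal{H} \neq \bot_{\mathit{reb}}$), so at most one of line~\ref{line:graded_consensus_long_decide_1} and line~\ref{line:graded_consensus_long_decide} executes; in the latter branch, the value $v'$ awaited at line~\ref{line:graded_consensus_wait_for_rebuild} with $\mathsf{hash}(v') = \mathcal{H}$ is unique (collision resistance of $\mathsf{hash}(\cdot)$), so $p_i$ decides once. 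Hence $p_i$ decides at most once from \longgct.
\end{proof}
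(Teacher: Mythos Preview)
Your proof is correct and takes essentially the same approach as the paper, which simply states that integrity of \longgct follows directly from the integrity property of $\mathcal{AW}$. You have spelled out the control-flow bookkeeping (mutual exclusivity of the branches, uniqueness of the awaited preimage) in more detail than the paper, but the core argument is identical.
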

\begin{proof}
The integrity property of \longgct follows directly from the integrity property of $\mathcal{AW}$.
\end{proof}

Finally, we prove the termination property.

\begin{theorem} [Termination]
\longgct (\Cref{algorithm:graded_consensus_long_3_4}) satisfies termination.
\end{theorem}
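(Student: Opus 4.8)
I would prove the termination property of \longgct by simply tracing through the algorithm, invoking the termination/totality properties of its two submodules ($\mathcal{RB} = \longreb$ and $\mathcal{AW}$, the Attiya--Welch graded consensus) in sequence, and using the safety property of $\mathcal{AW}$ together with the rebuilding validity of $\mathcal{RB}$ to clear the one potentially-blocking \textbf{wait} statement (line~\ref{line:graded_consensus_wait_for_rebuild}).

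\textbf{Key steps, in order.} First, assume the hypothesis: all correct processes propose to \longgct and no correct process abandons it. Hence every correct process invokes $\mathcal{RB}.\mathsf{broadcast}(\cdot)$ (line~\ref{line:graded_consensus_long_broadcast}) and never abandons $\mathcal{RB}$. By the termination property of \longreb (\Cref{theorem:termination_longreb_3t}), every correct process eventually delivers some value $v' \in \mathsf{Value} \cup \{\bot_{\mathit{reb}}\}$ from $\mathcal{RB}$ (line~\ref{line:graded_consensus_long_deliver_rb}), and thus proposes either $\mathsf{hash}(v')$ or $\bot_{\mathit{reb}}$ to $\mathcal{AW}$ (lines~\ref{line:graded_consensus_long_propose_aw}--\ref{line:graded_consensus_long_propose_bot_aw}). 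Since no correct process abandons \longgct, none abandons $\mathcal{AW}$. By the termination property of AW graded consensus, every correct process eventually decides a pair $(\mathcal{H}, g)$ from $\mathcal{AW}$ (line~\ref{line:graded_consensus_long_decide_aw}).

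Second, handle the two branches at line~\ref{line:graded_consensus_check_after_aw_decision}. If $\mathcal{H} = \bot_{\mathit{reb}}$, the process immediately decides $(\mathit{proposal}_i, 0)$ (line~\ref{line:graded_consensus_long_decide_1}) — done. If $\mathcal{H} \neq \bot_{\mathit{reb}}$, the process waits at line~\ref{line:graded_consensus_wait_for_rebuild} to $\mathsf{rebuild}$ a value $v'$ with $\mathsf{hash}(v') = \mathcal{H}$. Here is the main point: by the safety property of $\mathcal{AW}$, the decided value $\mathcal{H}$ was proposed to $\mathcal{AW}$ by some correct process $p_j$, which (by line~\ref{line:graded_consensus_long_propose_aw}) means $p_j$ delivered a value $v'$ with $\mathsf{hash}(v') = \mathcal{H}$ from $\mathcal{RB}$; collision-resistance of $\mathsf{hash}(\cdot)$ pins down $v'$ uniquely (up to negligible-probability failure, which is impossible here since \longreb is error-free and $\mathsf{hash}$ collisions are the only soft spot — I should phrase this carefully, noting the hash assumption). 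By the rebuilding validity property of \longreb, every correct process rebuilds $v'$ (within $\delta$ of the later of $p_j$'s delivery and GST), so the \textbf{wait} at line~\ref{line:graded_consensus_wait_for_rebuild} completes and the process decides $(v', g)$ at line~\ref{line:graded_consensus_long_decide}. In both branches every correct process decides, establishing termination.

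\textbf{Expected obstacle.} The only non-mechanical step is arguing that the \textbf{wait} on line~\ref{line:graded_consensus_wait_for_rebuild} does not block forever — i.e., connecting ``$\mathcal{AW}$ decided $\mathcal{H} \neq \bot_{\mathit{reb}}$'' back to ``some correct process actually delivered a preimage of $\mathcal{H}$ from $\mathcal{RB}$,'' which is exactly what the safety property of AW graded consensus plus the rebuilding validity of \longreb give us. This is the same argument already used in the external validity proof just above, so it is low-risk; the subtlety is only in being explicit that rebuilding validity is a \emph{liveness}-style guarantee (every correct process \emph{will} rebuild), not merely a safety guarantee, which is what makes the wait terminate. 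Everything else is routine bookkeeping over the two submodules' termination properties.

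\begin{proof}
Assume every correct process proposes to \longgct and no correct process abandons \longgct. Then every correct process invokes $\mathcal{RB}.\mathsf{broadcast}(\cdot)$ (line~\ref{line:graded_consensus_long_broadcast}) and never abandons $\mathcal{RB}$; by the termination property of \longreb (\Cref{theorem:termination_longreb_3t}), every correct process eventually delivers from $\mathcal{RB}$ (line~\ref{line:graded_consensus_long_deliver_rb}) and hence invokes $\mathcal{AW}.\mathsf{propose}(\cdot)$ (line~\ref{line:graded_consensus_long_propose_aw} or line~\ref{line:graded_consensus_long_propose_bot_aw}). Since no correct process abandons \longgct, none abandons $\mathcal{AW}$, so by the termination property of AW graded consensus every correct process eventually decides a pair $(\mathcal{H}, g)$ from $\mathcal{AW}$ (line~\ref{line:graded_consensus_long_decide_aw}).

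Consider any correct process $p_i$ that decides $(\mathcal{H}, g)$ from $\mathcal{AW}$. If $\mathcal{H} = \bot_{\mathit{reb}}$, then $p_i$ decides $(\mathit{proposal}_i, 0)$ at line~\ref{line:graded_consensus_long_decide_1}. Otherwise, $\mathcal{H} \in \mathsf{Hash\_Value}$; by the safety property of $\mathcal{AW}$, some correct process proposed $\mathcal{H}$ to $\mathcal{AW}$ (line~\ref{line:graded_consensus_long_propose_aw}), hence delivered a value $v' \neq \bot_{\mathit{reb}}$ from $\mathcal{RB}$ with $\mathsf{hash}(v') = \mathcal{H}$ (line~\ref{line:graded_consensus_long_deliver_rb}). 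By the rebuilding validity property of \longreb, every correct process, and in particular $p_i$, eventually rebuilds $v'$ (line~\ref{line:graded_consensus_wait_for_rebuild}); by the collision resistance of $\mathsf{hash}(\cdot)$, this is the unique value satisfying $\mathsf{hash}(v') = \mathcal{H}$, so the wait at line~\ref{line:graded_consensus_wait_for_rebuild} completes and $p_i$ decides $(v', g)$ at line~\ref{line:graded_consensus_long_decide}. In all cases $p_i$ decides, which establishes termination.
\end{proof}
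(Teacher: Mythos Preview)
Your proof is correct and follows essentially the same approach as the paper's own proof: invoke termination of $\mathcal{RB}$, then termination of $\mathcal{AW}$, then case-split on whether $\mathcal{H} = \bot_{\mathit{reb}}$, and in the non-trivial branch use the safety property of $\mathcal{AW}$ together with the rebuilding validity of $\mathcal{RB}$ to discharge the \textbf{wait} at line~\ref{line:graded_consensus_wait_for_rebuild}. The structure, the invoked properties, and the case analysis all match.
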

\begin{proof}
Let us assume that all correct processes propose and no correct process ever abandons \longgct.
In this case, the termination property of $\mathcal{RB}$ ensures that every correct process eventually delivers a value from it (line~\ref{line:graded_consensus_long_deliver_rb}), and proposes to $\mathcal{AW}$ (line~\ref{line:graded_consensus_long_propose_aw} or line~\ref{line:graded_consensus_long_propose_bot_aw}).
Similarly, the termination property of $\mathcal{AW}$ guarantees that every correct process eventually decides from it (line~\ref{line:graded_consensus_long_decide_aw}).
We now separate two cases that can occur at any correct process $p_i$:
\begin{compactitem}
    \item Let $p_i$ decide $\bot_{\mathit{reb}}$ from $\mathcal{AW}$.
    In this case, $p_i$ decides from \longgct at line~\ref{line:graded_consensus_long_decide_1}, thus satisfying termination.

    \item Let $p_i$ decide $\mathcal{H} \neq \bot_{\mathit{reb}}$ from $\mathcal{AW}$.
    Due to the safety property of $\mathcal{AW}$, a correct process has previously proposed $\mathcal{H}$ to $\mathcal{AW}$ (line~\ref{line:graded_consensus_long_propose_aw}), which implies that a correct process has delivered value $v \neq \bot_{\mathit{reb}}$ from $\mathcal{RB}$ (line~\ref{line:graded_consensus_long_deliver_rb}) such that $\mathsf{hash}(v) = \mathcal{H}$ (due to the $\mathsf{hash}(\cdot)$ function's collision resistance).
    Therefore, the rebuilding validity property of $\mathcal{RB}$ ensures that $p_i$ eventually rebuilds $v$ (line~\ref{line:graded_consensus_wait_for_rebuild}) and decides $v$ at line~\ref{line:graded_consensus_long_decide}.
\end{compactitem}
As termination is satisfied in both cases, the proof is concluded.
\end{proof}

\smallskip
\noindent \textbf{Proof of complexity.}
First, we prove the number of bits correct processes exchange in \longgct.

\begin{theorem} [Exchanged bits]
A correct process sends $O(L + n \log (n) \kappa)$ bits in \longgct.
\end{theorem}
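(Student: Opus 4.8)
The plan is to account separately for the bits each correct process sends in each of the two subprotocols that \longgct invokes, namely the rebuilding broadcast instance $\mathcal{RB}$ (implemented by \longreb) and the graded consensus instance $\mathcal{AW}$ (the Attiya--Welch algorithm), and then add the two contributions. Since \longgct's pseudocode (\Cref{algorithm:graded_consensus_long_3_4}) contains no other communication — the only sends are those performed inside $\mathcal{RB}.\mathsf{broadcast}$, $\mathcal{RB}$'s internal machinery, $\mathcal{AW}.\mathsf{propose}$, and $\mathcal{AW}$'s internal machinery — the per-process bit complexity of \longgct is exactly the sum of the per-process bit complexities of $\mathcal{RB}$ and $\mathcal{AW}$.

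First I would recall, from \Cref{section:rebuilding_broadccast} (specifically the ``Exchanged bits'' theorem for \longreb), that a correct process sends $O(L + n\log(n)\kappa)$ bits in $\mathcal{RB}$. Next I would recall, from the ``Attiya--Welch graded consensus'' paragraph in \Cref{section:existing_primitives}, that the AW algorithm incurs $O(nL')$ per-process bit complexity on $L'$-sized inputs. The key observation here is that the values \longgct feeds into $\mathcal{AW}$ are not the original $L$-bit values but their hashes (line~\ref{line:graded_consensus_long_propose_aw}), or the constant-sized sentinel $\bot_{\mathit{reb}}$ (line~\ref{line:graded_consensus_long_propose_bot_aw}); a hash value has size $\kappa$ bits. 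Hence $L' = O(\kappa)$, and the per-process cost of $\mathcal{AW}$ is $O(n\kappa)$ bits.

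Adding the two contributions gives $O(L + n\log(n)\kappa) + O(n\kappa) = O(L + n\log(n)\kappa)$ bits per correct process, which is the claimed bound. I would then note that multiplying by $n$ yields the $O(nL + n^2\log(n)\kappa)$ total bit complexity advertised in \Cref{tab:graded-consensus-summary}, and that this matches the per-process figure listed there.

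The only mild subtlety — and the one place I would be careful — is making sure the argument ``the only messages sent are those of $\mathcal{RB}$ and $\mathcal{AW}$'' is airtight: one must check that none of the \textbf{upon} handlers in \Cref{algorithm:graded_consensus_long_3_4} (the $\mathcal{RB}.\mathsf{deliver}$ handler, the $\mathcal{AW}.\mathsf{decide}$ handler, and the $\mathsf{propose}$ handler) performs any direct send, and that the $\mathsf{wait\ for}\ \mathcal{RB}.\mathsf{rebuild}(\cdot)$ step at line~\ref{line:graded_consensus_wait_for_rebuild} is purely a local blocking step that triggers no extra communication beyond what $\mathcal{RB}$ already accounts for in its rebuilding-validity guarantee. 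This is immediate from inspection of the pseudocode, so I expect no real obstacle; the proof is essentially a one-line summation once the input-size reduction ($L \mapsto \kappa$ for $\mathcal{AW}$) is pointed out.
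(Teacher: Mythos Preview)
Your proposal is correct and takes essentially the same approach as the paper: decompose the per-process cost into the $\mathcal{RB}$ contribution ($O(L + n\log(n)\kappa)$) and the $\mathcal{AW}$ contribution ($O(n\kappa)$, since $\mathcal{AW}$ operates on $\kappa$-bit hash values), then sum. The paper's proof is just the terse two-line version of exactly this argument.
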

\begin{proof}
Let $p_i$ be any correct process.
Process $p_i$ sends $O(L + n\log(n) \kappa)$ bits in $\mathcal{RB}$ (see \Cref{section:rebuilding_broadccast}).
Moreover, process $p_i$ sends $O(n\kappa)$ bits in $\mathcal{AW}$ (see~\cite{AttiyaWelch23}).
Hence, process $p_i$ sends $O(L + n \log(n) \kappa)$ bits.
\end{proof}

Finally, the following theorem proves that \longgct takes at most $11$ asynchronous rounds before all correct processes decide.

\begin{theorem}[Asynchronous rounds]
Assuming all correct processes propose to \longgct and no correct process abandons \longgct, \longgct takes $11$ asynchronous rounds before all correct processes decide.
\end{theorem}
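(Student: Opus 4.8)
The plan is to track the critical path of messages through the three sub-protocols composing \longgct: the rebuilding broadcast $\mathcal{RB}$ (instantiated by \longreb), the AW graded consensus $\mathcal{AW}$, and the final rebuild step that reuses $\mathcal{RB}$. By \Cref{theorem:async_rounds_rebuilding_3}, $\mathcal{RB}$ delivers a value to all correct processes within $2$ asynchronous rounds of all correct processes broadcasting (which they do immediately upon proposing, line~\ref{line:graded_consensus_long_broadcast}). Upon delivery (line~\ref{line:graded_consensus_long_deliver_rb}), every correct process immediately proposes to $\mathcal{AW}$; since AW graded consensus terminates in $9$ asynchronous rounds (see \Cref{section:existing_primitives}), all correct processes decide from $\mathcal{AW}$ within $2 + 9 = 11$ asynchronous rounds. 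The only subtlety is the branch at line~\ref{line:graded_consensus_wait_for_rebuild}, where a process that decides a non-$\bot_{\mathit{reb}}$ hash $\mathcal{H}$ from $\mathcal{AW}$ must wait to rebuild the corresponding value $v'$.

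First I would argue that this wait does not add to the round count. If some correct process decides $(\mathcal{H} \neq \bot_{\mathit{reb}}, \cdot)$ from $\mathcal{AW}$, then by the safety property of $\mathcal{AW}$ some correct process proposed $\mathcal{H}$, hence (by collision resistance of $\mathsf{hash}$) some correct process delivered $v'$ with $\mathsf{hash}(v') = \mathcal{H}$ from $\mathcal{RB}$ — and this delivery happened no later than round $2$, i.e., before any correct process proposed to $\mathcal{AW}$. The rebuilding validity property of \longreb guarantees that once any correct process delivers $v' \neq \bot_{\mathit{reb}}$ at some time $\tau$, every correct process rebuilds $v'$ by time $\max(\tau, \text{GST}) + \delta$, i.e., within one additional asynchronous round. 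Since the $\mathcal{RB}$ delivery occurs by round $2$ and $\mathcal{AW}$ only terminates at round $11$, the rebuild of $v'$ is already complete (it finishes by round $3$) well before any process reaches line~\ref{line:graded_consensus_wait_for_rebuild}. Therefore the \textbf{wait} at line~\ref{line:graded_consensus_wait_for_rebuild} is vacuous in terms of latency, and every correct process decides from \longgct as soon as it decides from $\mathcal{AW}$.

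Combining the two observations: all correct processes decide from \longgct within $2$ (for $\mathcal{RB}$) $+\,9$ (for $\mathcal{AW}$) $=11$ asynchronous rounds, which is the claim. I expect the main obstacle to be the bookkeeping around the "$\max(\tau, \text{GST})$" terms in the asynchronous-round accounting — one must be careful that "asynchronous round" is defined (as in \Cref{section:preliminaries}) via the scaled timed-execution prefix, so that composing the round bounds of the sub-protocols is legitimate and the $+\delta$ from rebuilding validity really does correspond to one asynchronous round rather than something incommensurable. Once that is pinned down, the argument is a straightforward concatenation of the already-established bounds.
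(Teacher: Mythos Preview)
Your proposal is correct and follows the same $2+9=11$ decomposition as the paper. You actually go further than the paper's own one-line proof by explicitly arguing that the \textbf{wait} at line~\ref{line:graded_consensus_wait_for_rebuild} incurs no extra latency (the rebuild completes by round~$3$, well before $\mathcal{AW}$ terminates), a subtlety the paper leaves implicit.
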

\begin{proof}
Each correct process that participates in \longgct and does not abandon it incurs 2 asynchronous rounds in $\mathcal{RB}$ (see \Cref{section:rebuilding_broadccast}), followed by 9 asynchronous rounds in $\mathcal{AW}$ (see~\cite{AttiyaWelch23}).
\end{proof}

\subsection{\longgcf: Pseudocode \& Proof of Correctness and Complexity} \label{subsection:its_graded_consensus}

In this subsection, we introduce \longgcf (\Cref{algorithm:its_graded_consensus}), our implementation of graded consensus that exchanges $O\big( nL + n^2 \log(n) \big)$ bits while relying on no cryptographic primitives.
\longgcf tolerates up to $t < n / 5$ Byzantine processes.
Under the hood, \longgcf utilizes the \reduceacool algorithm (see \Cref{section:acool_reduction}) and the AW graded consensus algorithm (see \Cref{section:existing_primitives}).

\begin{algorithm}
\caption{\longgcf: Pseudocode (for process $p_i$)}
\label{algorithm:its_graded_consensus}
\footnotesize
\begin{algorithmic} [1] 
\State \textbf{Uses:}
\State \hskip2em \reduceacool , \textbf{instance} $\mathcal{ACOOL}$ \BlueComment{see \Cref{section:acool_reduction}}
\State \hskip2em AW graded consensus \cite{AttiyaWelch23}, \textbf{instance} $\mathcal{AW}$ \BlueComment{see \Cref{section:existing_primitives}}

\medskip
\State \textbf{Local variables:}
\State \hskip2em $\mathsf{Value}$ $\mathit{proposal}_i \gets \bot$
\State \hskip2em $\mathsf{Value}$ $\mathit{reduction\_output}_i \gets \bot$ 

\medskip
\State \textbf{upon} $\mathsf{propose}(v \in \mathsf{Value})$: \label{line:longgcf_propose}
\State \hskip2em $\mathit{proposal}_i \gets v$
\State \hskip2em let $(\mathit{success}, \mathit{reduction\_output}_i) \gets \mathcal{ACOOL}(v)$ \label{line:longgcf_input_to_acool}
\State \hskip2em \textbf{if} $\mathit{success} = 1$:
\State \hskip4em \textbf{invoke} $\mathcal{AW}.\mathsf{propose}(\mathsf{HAPPY})$ \label{line:longgcf_propose_happy_to_aw}
\State \hskip2em \textbf{else:}
\State \hskip4em \textbf{invoke} $\mathcal{AW}.\mathsf{propose}(\mathsf{SAD})$ \label{line:longgcf_propose_sad_to_aw}

\medskip
\State \textbf{upon} $\mathcal{AW}.\mathsf{decide}(v' \in \{\mathsf{HAPPY}, \mathsf{SAD}\}, g' \in \{0, 1\})$: \label{line:longgcf_decide_from_aw}
\State \hskip2em \textbf{if} $v' = \mathsf{SAD}$:
\State \hskip4em \textbf{trigger} $\mathsf{decide}(\mathit{proposal}_i, 0)$ \label{line:longgcf_decide_own_proposal}
\State \hskip2em \textbf{else:}
\State \hskip4em \textbf{wait until} $\mathit{reduction\_output_i} \ne \bot$
\State \hskip4em \textbf{trigger} $\mathsf{decide}(\mathit{reduction\_output}_i, g')$ \label{line:longgcf_decide_output}

\end{algorithmic}
\end{algorithm}

\smallskip
\noindent \textbf{Pseudocode description.}
When a correct process $p_i$ proposes a value $v$ to \longgcf (line~\ref{line:longgcf_propose}), it forwards $v$ to the $\mathcal{ACOOL}$ instance of the \reduceacool algorithm (line~\ref{line:longgcf_input_to_acool}).
Once $p_i$ obtains a pair $(\mathit{success}, \mathit{reduction\_output}_i)$ from $\mathcal{ACOOL}$, $p_i$ checks if $\mathit{success} = 1$.
If so, $p_i$ proposes $\mathsf{HAPPY}$ to the $\mathcal{AW}$ instance of the AW graded consensus algorithm (line~\ref{line:longgcf_propose_happy_to_aw}).
Otherwise, $p_i$ proposes $\mathsf{SAD}$ to $\mathcal{AW}$ (line~\ref{line:longgcf_propose_sad_to_aw}).
When $p_i$ decides a pair $(v', g')$ from $\mathcal{AW}$ (line~\ref{line:longgcf_decide_from_aw}), $p_i$ performs the following logic:
\begin{compactitem}
    \item If $v' = \mathsf{SAD}$, then $p_i$ decides its proposal to \longgcf with grade $0$ (line~\ref{line:longgcf_decide_own_proposal}) as $p_i$ knows that it is impossible that all correct processes have previously proposed the same value to \longgcf (due to the strong validity property of $\mathcal{ACOOL}$ and $\mathcal{AW}$).

    \item Otherwise, $p_i$ decides $(\mathit{reduction\_output}_i, g')$ from \longgcf (line~\ref{line:longgcf_decide_output}).
\end{compactitem}

\smallskip
\noindent \textbf{Proof of correctness.}
We start by proving that \longgcf satisfies strong validity.

\begin{theorem} [Strong validity]
\longgcf (\Cref{algorithm:its_graded_consensus}) satisfies strong validity.
\end{theorem}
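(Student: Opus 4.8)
The plan is to derive strong validity of \longgcf directly from the strong validity properties of its two submodules, \reduceacool (the instance $\mathcal{ACOOL}$) and AW graded consensus (the instance $\mathcal{AW}$). First I would assume that all correct processes that propose to \longgcf do so with the same value $v$. By the pseudocode (line~\ref{line:longgcf_input_to_acool}), this means every correct process that inputs to $\mathcal{ACOOL}$ does so with $v$. The strong validity property of \reduceacool then guarantees that no correct process outputs a pair different from $(1, v)$ from $\mathcal{ACOOL}$; in particular, every correct process that obtains an output gets $\mathit{success} = 1$ and $\mathit{reduction\_output}_i = v$.

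Next I would observe that, since every correct process obtains $\mathit{success} = 1$ from $\mathcal{ACOOL}$, every correct process that proposes to $\mathcal{AW}$ proposes $\mathsf{HAPPY}$ (line~\ref{line:longgcf_propose_happy_to_aw}); no correct process proposes $\mathsf{SAD}$. By the strong validity property of $\mathcal{AW}$, any correct process that decides from $\mathcal{AW}$ decides the pair $(\mathsf{HAPPY}, 1)$. Consequently, at line~\ref{line:longgcf_decide_from_aw}, the branch $v' = \mathsf{SAD}$ is never taken by a correct process, so no correct process decides at line~\ref{line:longgcf_decide_own_proposal}. Instead, each deciding correct process takes the else branch, waits until $\mathit{reduction\_output}_i \neq \bot$ (which, as established, equals $v$), and triggers $\mathsf{decide}(v, 1)$ at line~\ref{line:longgcf_decide_output}. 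Hence every correct process that decides from \longgcf decides $(v, 1)$, which is exactly the strong validity requirement.

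A minor subtlety worth addressing is the wait at line~\ref{line:longgcf_decide_output}: I should note that since all correct processes input to $\mathcal{ACOOL}$ with $v$, the strong validity and termination properties of \reduceacool together ensure $\mathit{reduction\_output}_i$ is eventually set to $v$, so the wait terminates (though for the strong validity property itself one only needs that whenever a correct process does decide, the decided value is $(v,1)$). I expect this to be an entirely routine argument with no real obstacle — the only thing to be careful about is invoking the correct property name from each submodule (strong validity of \reduceacool from \Cref{section:acool_reduction}, strong validity of AW graded consensus from \Cref{section:existing_primitives}) and confirming that $\mathcal{ACOOL}$ and $\mathcal{AW}$ are indeed used in a closed-box manner with their preconditions met (every correct process inputs at most once with a valid value, which holds here since \longgcf's proposals are assumed valid and forwarded unchanged).

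\begin{proof}
Suppose all correct processes that propose to \longgcf do so with the same value $v$. By line~\ref{line:longgcf_input_to_acool}, every correct process that inputs to $\mathcal{ACOOL}$ inputs $v$. By the strong validity property of \reduceacool, no correct process outputs a pair different from $(1, v)$; hence every correct process that obtains an output from $\mathcal{ACOOL}$ obtains $\mathit{success} = 1$ and sets $\mathit{reduction\_output}_i = v$. Therefore, every correct process that proposes to $\mathcal{AW}$ proposes $\mathsf{HAPPY}$ (line~\ref{line:longgcf_propose_happy_to_aw}); no correct process proposes $\mathsf{SAD}$. By the strong validity property of $\mathcal{AW}$, every correct process that decides from $\mathcal{AW}$ decides the pair $(\mathsf{HAPPY}, 1)$ (line~\ref{line:longgcf_decide_from_aw}). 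Consequently, no correct process executes line~\ref{line:longgcf_decide_own_proposal}; instead, each deciding correct process waits until $\mathit{reduction\_output}_i \neq \bot$, which (as shown above) equals $v$, and then triggers $\mathsf{decide}(v, 1)$ at line~\ref{line:longgcf_decide_output}. Thus every correct process that decides from \longgcf decides $(v, 1)$, establishing strong validity.
\end{proof}
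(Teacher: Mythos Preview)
Your proof is correct and follows essentially the same approach as the paper: invoke strong validity of $\mathcal{ACOOL}$ to obtain $(1,v)$ for every correct process, then strong validity of $\mathcal{AW}$ to obtain $(\mathsf{HAPPY},1)$, and conclude that every deciding correct process decides $(v,1)$ at line~\ref{line:longgcf_decide_output}. Your extra remark about the wait condition and $\mathit{reduction\_output}_i$ being set to $v$ is a helpful clarification the paper leaves implicit.
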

\begin{proof}
Suppose all correct processes that propose to \longgcf do so with the same value $v$.
Hence, all correct processes that input a value to $\mathcal{ACOOL}$ do input $v$ (line~\ref{line:longgcf_input_to_acool}).
The strong validity property of $\mathcal{ACOOL}$ ensures that each correct process $p_i$ that receives an output from $\mathcal{ACOOL}$ receives $(1, v)$.
Therefore, all correct processes that propose to $\mathcal{AW}$ do so with $\mathsf{HAPPY}$ (line~\ref{line:longgcf_propose_happy_to_aw}).
The strong validity property of $\mathcal{AW}$ ensures that all correct processes decide $(\mathsf{HAPPY}, 1)$ from $\mathcal{AW}$ (line~\ref{line:longgcf_decide_from_aw}).
Finally, all correct processes that decide from \longgcf do so with $(v, 1)$ (line~\ref{line:longgcf_decide_output}), which concludes the proof.
\end{proof}

Next, we prove external validity.

\begin{theorem} [External validity]
\longgcf (\Cref{algorithm:its_graded_consensus}) satisfies external validity.
\end{theorem}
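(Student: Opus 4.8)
The plan is to trace any decided value back through the layers of \longgcf to a correct process's proposal, which is valid by assumption. We are proving external validity: if a correct process decides a pair $(v', \cdot)$, then $\mathsf{valid}(v') = \mathit{true}$. There are exactly two places where a correct process decides: line~\ref{line:longgcf_decide_own_proposal} and line~\ref{line:longgcf_decide_output}. I would handle these two cases separately.

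First, suppose a correct process $p_i$ decides $(v', 0)$ at line~\ref{line:longgcf_decide_own_proposal}. Then $v' = \mathit{proposal}_i$, the value $p_i$ proposed to \longgcf. Since we assume no correct process proposes an invalid value to graded consensus (the standing assumption from the specification in \Cref{app:graded-consensus-definition}), $\mathsf{valid}(v') = \mathit{true}$ immediately.

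Second, suppose a correct process $p_i$ decides $(v', g')$ at line~\ref{line:longgcf_decide_output}. Then $v' = \mathit{reduction\_output}_i$, the value output by the $\mathcal{ACOOL}$ instance, and moreover $p_i$ has decided $(\mathsf{HAPPY}, g')$ from $\mathcal{AW}$ (line~\ref{line:longgcf_decide_from_aw}). The key step is to invoke the safety property of \reduceacool: since $p_i$ reached line~\ref{line:longgcf_decide_output} only because $\mathcal{AW}$ decided $\mathsf{HAPPY}$, and — here is the subtlety — a correct process proposes $\mathsf{HAPPY}$ to $\mathcal{AW}$ (line~\ref{line:longgcf_propose_happy_to_aw}) only after obtaining $\mathit{success} = 1$ from $\mathcal{ACOOL}$, the external-validity argument depends on showing $v'$ was actually broadcast by a correct process. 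I would argue as follows: by the external validity (i.e., $\mathsf{valid}$-preserving) property of $\mathcal{AW}$, $\mathsf{HAPPY}$ was proposed to $\mathcal{AW}$ by a correct process, so some correct process obtained $(1, \cdot)$ from $\mathcal{ACOOL}$; by the safety property of \reduceacool, the value accompanying that success bit was input to $\mathcal{ACOOL}$ by a correct process, and by the agreement property of \reduceacool, every correct process (in particular $p_i$) that outputs from $\mathcal{ACOOL}$ outputs that same value. Hence $v' = \mathit{reduction\_output}_i$ equals a value that a correct process input to $\mathcal{ACOOL}$ at line~\ref{line:longgcf_input_to_acool}, which is that correct process's proposal to \longgcf, and is therefore valid.

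The main obstacle I anticipate is the bookkeeping around when $\mathit{reduction\_output}_i$ is well-defined and consistent across processes: the \textbf{wait until} $\mathit{reduction\_output_i} \ne \bot$ guard before line~\ref{line:longgcf_decide_output} ensures $p_i$ actually has an $\mathcal{ACOOL}$ output before deciding, and I need the agreement property of \reduceacool — triggered by the existence of \emph{some} correct process with success bit $1$ — to pin down that this output is the value a correct process proposed. Care is needed to state precisely that a correct process proposing $\mathsf{HAPPY}$ to $\mathcal{AW}$ is exactly the hypothesis needed to activate \reduceacool's agreement property. Once that chain is assembled, the conclusion $\mathsf{valid}(v') = \mathit{true}$ follows from the standing assumption that correct processes propose only valid values to \longgcf. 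I would close the proof after dispatching both cases.

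\begin{proof}
Suppose a correct process $p_i$ decides a pair $(v', g')$ from \longgcf. We consider the two lines at which a correct process can decide.

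\emph{Case 1: $p_i$ decides at line~\ref{line:longgcf_decide_own_proposal}.} Then $v' = \mathit{proposal}_i$, i.e., the value $p_i$ proposed to \longgcf (line~\ref{line:longgcf_propose}). Since no correct process proposes an invalid value to \longgcf, $\mathsf{valid}(v') = \mathit{true}$.

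\emph{Case 2: $p_i$ decides at line~\ref{line:longgcf_decide_output}.} Then $v' = \mathit{reduction\_output}_i$ and $p_i$ has previously decided $(\mathsf{HAPPY}, g')$ from $\mathcal{AW}$ (line~\ref{line:longgcf_decide_from_aw}). By the external validity property of $\mathcal{AW}$, $\mathsf{HAPPY}$ has been proposed to $\mathcal{AW}$ by a correct process (line~\ref{line:longgcf_propose_happy_to_aw}); hence some correct process $p_j$ obtained $(1, v'')$ as output from $\mathcal{ACOOL}$ for some value $v''$. By the safety property of \reduceacool, $v''$ was input to $\mathcal{ACOOL}$ by a correct process (line~\ref{line:longgcf_input_to_acool}), i.e., $v''$ is the proposal of some correct process to \longgcf; thus $\mathsf{valid}(v'') = \mathit{true}$. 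Moreover, since $p_j$ output $(1, v'')$ from $\mathcal{ACOOL}$, the agreement property of \reduceacool ensures that every correct process that outputs from $\mathcal{ACOOL}$ outputs the value $v''$; in particular, $\mathit{reduction\_output}_i = v''$. Therefore $v' = v''$ and $\mathsf{valid}(v') = \mathit{true}$.

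In both cases, $\mathsf{valid}(v') = \mathit{true}$, which concludes the proof.
\end{proof}
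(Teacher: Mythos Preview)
Your proof is correct and follows essentially the same approach as the paper: split on the two decision lines, handle line~\ref{line:longgcf_decide_own_proposal} directly, and for line~\ref{line:longgcf_decide_output} chain back through $\mathcal{AW}$ to a correct proposer of $\mathsf{HAPPY}$, then use safety and agreement of $\mathcal{ACOOL}$ to identify $\mathit{reduction\_output}_i$ with a correct process's proposal.

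One naming slip: the property of $\mathcal{AW}$ you invoke in Case~2 is its \emph{safety} property (``if a correct process decides $(v',\cdot)$, then $v'$ was proposed by a correct process''), not its external validity property. External validity would only give you $\mathsf{valid}(\mathsf{HAPPY}) = \mathit{true}$, which is meaningless here since $\mathsf{HAPPY}$/$\mathsf{SAD}$ are not elements of $\mathsf{Value}$. The paper explicitly relies on the additional safety property of the Attiya--Welch algorithm (stated in \Cref{section:existing_primitives}), and that is what your argument actually uses. With that label corrected, your proof matches the paper's.
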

\begin{proof}
If a correct process $p_i$ decides at line~\ref{line:longgcf_decide_own_proposal}, the decision is valid since the process has previously proposed a valid value to \longgcf.
Suppose $p_i$ decides some value $v$ at line~\ref{line:longgcf_decide_output}.
In this case, $p_i$ has previously decided $\mathsf{HAPPY}$ from $\mathcal{AW}$, which implies that some correct process $p_j$ has proposed $\mathsf{HAPPY}$ to $\mathcal{AW}$ at line~\ref{line:longgcf_propose_happy_to_aw} (due to the safety property of $\mathcal{AW}$).
Hence, $p_j$ has received $(1, v')$ from $\mathcal{ACOOL}$ (line~\ref{line:longgcf_input_to_acool}), for some value $v'$.
Importantly, the agreement property of $\mathcal{ACOOL}$ shows that $v' = v$.
Moreover, the safety property of $\mathcal{ACOOL}$ shows that $v$ was previously proposed to \longgcf by a correct process.
As no correct process proposes an invalid value to \longgcf, $v$ is valid.
\end{proof}

The following theorem proves consistency.

\begin{theorem} [Consistency]
\longgcf (\Cref{algorithm:its_graded_consensus}) satisfies consistency.
\end{theorem}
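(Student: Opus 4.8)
The plan is to reduce consistency of \longgcf to the consistency property of its submodule $\mathcal{AW}$ together with the agreement property of $\mathcal{ACOOL}$. First I would take any correct process $p_i$ that decides a pair $(v, 1)$; the key observation is that a correct process decides with grade $1$ only at line~\ref{line:longgcf_decide_output} (the branch at line~\ref{line:longgcf_decide_own_proposal} always decides with grade $0$), so $p_i$ must have decided $(\mathsf{HAPPY}, 1)$ from $\mathcal{AW}$ at line~\ref{line:longgcf_decide_from_aw}, and $v = \mathit{reduction\_output}_i$, the value $p_i$ obtained from $\mathcal{ACOOL}$ at line~\ref{line:longgcf_input_to_acool}.

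Next, I would invoke the consistency property of $\mathcal{AW}$: since $p_i$ decides $(\mathsf{HAPPY}, 1)$ from $\mathcal{AW}$, every correct process that decides from $\mathcal{AW}$ decides a pair $(\mathsf{HAPPY}, \cdot)$ — in particular, no correct process decides $(\mathsf{SAD}, \cdot)$. Consequently, no correct process can decide from \longgcf at line~\ref{line:longgcf_decide_own_proposal}; every correct process that decides from \longgcf does so at line~\ref{line:longgcf_decide_output}, deciding $(\mathit{reduction\_output}_j, \cdot)$ for its own output $\mathit{reduction\_output}_j$ from $\mathcal{ACOOL}$.

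It remains to show $\mathit{reduction\_output}_j = v$ for every such $p_j$. Here I would use the agreement property of $\mathcal{ACOOL}$: because $p_i$ received $(1, v)$ from $\mathcal{ACOOL}$ (this positivity of the success indicator follows from the fact that $p_i$ proposed $\mathsf{HAPPY}$ to $\mathcal{AW}$, cf. line~\ref{line:longgcf_propose_happy_to_aw}), agreement guarantees that any correct process $p_j$ that outputs a pair $(\cdot, v')$ from $\mathcal{ACOOL}$ has $v' = v$. Hence $\mathit{reduction\_output}_j = v$, so $p_j$ decides $(v, \cdot)$ from \longgcf, which is exactly consistency. I do not anticipate a genuine obstacle here — the only subtlety is making sure that at least one correct process proposed $\mathsf{HAPPY}$ to $\mathcal{AW}$ so that the hypothesis of $\mathcal{ACOOL}$'s agreement property (some correct process outputting a pair with success indicator $1$) is met, which is immediate because $p_i$ itself is such a process.

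\begin{proof}
Let $p_i$ be any correct process that decides a pair $(v, 1)$ from \longgcf.
A correct process decides with grade $1$ only at line~\ref{line:longgcf_decide_output}, since the branch at line~\ref{line:longgcf_decide_own_proposal} always decides with grade $0$.
Hence, $p_i$ has previously decided $(\mathsf{HAPPY}, 1)$ from $\mathcal{AW}$ (line~\ref{line:longgcf_decide_from_aw}), and $v = \mathit{reduction\_output}_i$, where $\mathit{reduction\_output}_i$ is the value $p_i$ obtained from $\mathcal{ACOOL}$ (line~\ref{line:longgcf_input_to_acool}).
In particular, $p_i$ proposed $\mathsf{HAPPY}$ to $\mathcal{AW}$ (line~\ref{line:longgcf_propose_happy_to_aw}), which means $p_i$ received a pair $(1, v)$ from $\mathcal{ACOOL}$.

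Consider any correct process $p_j$ that decides some pair $(v', g')$ from \longgcf.
As $p_i$ decides $(\mathsf{HAPPY}, 1)$ from $\mathcal{AW}$, the consistency property of $\mathcal{AW}$ ensures that $p_j$ decides a pair $(\mathsf{HAPPY}, \cdot)$ from $\mathcal{AW}$.
Therefore, $p_j$ does not decide at line~\ref{line:longgcf_decide_own_proposal}; instead, $p_j$ decides $(\mathit{reduction\_output}_j, g')$ at line~\ref{line:longgcf_decide_output}, where $\mathit{reduction\_output}_j$ is the value $p_j$ obtained from $\mathcal{ACOOL}$.

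Finally, since $p_i$ received $(1, v)$ from $\mathcal{ACOOL}$ and $p_j$ received a pair $(\cdot, \mathit{reduction\_output}_j)$ from $\mathcal{ACOOL}$, the agreement property of $\mathcal{ACOOL}$ guarantees that $\mathit{reduction\_output}_j = v$.
Hence, $v' = v$, which proves that no correct process decides a pair $(v' \neq v, \cdot)$ from \longgcf.
\end{proof}
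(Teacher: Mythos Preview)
Your overall strategy is the same as the paper's, but there is a genuine gap in one step. You assert that ``$p_i$ proposed $\mathsf{HAPPY}$ to $\mathcal{AW}$'' and hence that $p_i$ itself received $(1, v)$ from $\mathcal{ACOOL}$. This does not follow from $p_i$ deciding $(\mathsf{HAPPY}, 1)$ from $\mathcal{AW}$: none of the graded-consensus properties (strong validity, consistency, safety, etc.) guarantee that a process deciding $(v',1)$ must have proposed $v'$. It is perfectly consistent with the specification that $p_i$ obtained $(0, v)$ from $\mathcal{ACOOL}$, proposed $\mathsf{SAD}$ to $\mathcal{AW}$, and nevertheless decided $(\mathsf{HAPPY}, 1)$ because other correct processes proposed $\mathsf{HAPPY}$. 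In that case your invocation of $\mathcal{ACOOL}$'s agreement property---which requires \emph{some} correct process to have output a pair with success indicator $1$---is not licensed by the premise you stated.

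The fix is exactly what the paper does: invoke the \emph{safety} property of $\mathcal{AW}$ (not its consistency) to conclude that \emph{some} correct process $p_k$ proposed $\mathsf{HAPPY}$, hence $p_k$ received $(1, w)$ from $\mathcal{ACOOL}$ for some $w$. Then $\mathcal{ACOOL}$'s agreement gives $\mathit{reduction\_output}_i = w$ (so $v = w$) and $\mathit{reduction\_output}_j = w$ for every correct $p_j$, and the rest of your argument goes through unchanged.
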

\begin{proof}
If any correct process $p_i$ decides $(v, 1)$ from \longgcf, it does so at line~\ref{line:longgcf_decide_output}.
This implies that $p_i$ has previously decided $(\mathsf{HAPPY}, 1)$ from $\mathcal{AW}$ (line~\ref{line:longgcf_decide_from_aw}).
Due to the safety property of $\mathcal{AW}$, a correct process has proposed $\mathsf{HAPPY}$ to $\mathcal{AW}$, which implies that any correct process that receives a pair from $\mathcal{ACOOL}$ does receive $(\cdot, v)$. (due to the agreement property of $\mathcal{ACOOL}$).
Moreover, any correct process that decides from $\mathcal{AW}$ does decide $(\mathsf{HAPPY}, \cdot)$ due to the consistency property of $\mathcal{AW}$.
Therefore, any correct process that decides from \longgcf does decide $v$ at line~\ref{line:longgcf_decide_output}, which concludes the proof.
\end{proof}

We now prove integrity.

\begin{theorem}[Integrity]
    \longgcf (\Cref{algorithm:its_graded_consensus}) satisfies integrity.
\end{theorem}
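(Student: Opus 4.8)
The plan is to prove the integrity property of \longgcf by tracing back how a correct process reaches either of the two decision lines and showing that in both cases the process must have previously invoked $\mathsf{propose}(\cdot)$. The integrity property of graded consensus (see \Cref{app:graded-consensus-definition}) states: \emph{no correct process decides more than once}. So there are really two things to establish: that a correct process decides at most once, and (implicitly, to match the specification exactly) that the decision mechanism is well-founded. The bulk of the argument is the "at most once" claim.

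First I would observe that a correct process $p_i$ triggers $\mathsf{decide}(\cdot)$ only inside the \textbf{upon} $\mathcal{AW}.\mathsf{decide}(\cdot)$ handler (line~\ref{line:longgcf_decide_from_aw}), at exactly one of line~\ref{line:longgcf_decide_own_proposal} or line~\ref{line:longgcf_decide_output}, and these two branches are mutually exclusive (they are the two arms of an \texttt{if}/\texttt{else} on whether $v' = \mathsf{SAD}$). Hence a single activation of the handler produces exactly one $\mathsf{decide}$ call. It therefore suffices to show the handler is activated at most once by a correct process. This follows directly from the integrity property of the AW graded consensus instance $\mathcal{AW}$: by integrity of $\mathcal{AW}$, no correct process decides more than once from $\mathcal{AW}$, so the \textbf{upon} $\mathcal{AW}.\mathsf{decide}(\cdot)$ event fires at most once at $p_i$. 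Chaining these two facts gives that $p_i$ triggers $\mathsf{decide}(\cdot)$ from \longgcf at most once.

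To complete the match with the specification wording, I would also note that a correct process only enters the $\mathcal{AW}.\mathsf{decide}$ handler after it has participated in $\mathcal{AW}$, which (given the code) only happens after $p_i$ executes line~\ref{line:longgcf_propose_happy_to_aw} or line~\ref{line:longgcf_propose_sad_to_aw}, which in turn is reached only from inside the \textbf{upon} $\mathsf{propose}(v)$ handler (line~\ref{line:longgcf_propose}); so in particular $\mathit{proposal}_i$ and $\mathit{reduction\_output}_i$ are well-defined at decision time, and $p_i$ has indeed proposed. This mirrors the structure of the integrity proof for \block (\Cref{theorem:block_integrity}) and for \longgct, where integrity is inherited from the integrity of the relevant submodule.

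I do not anticipate a serious obstacle here — this is a routine "inheritance from a submodule" argument, exactly parallel to the integrity proofs already given for \longgct and \block. The only mild subtlety worth stating explicitly is that the \emph{wait until} $\mathit{reduction\_output}_i \neq \bot$ guard on line~\ref{line:longgcf_decide_output} does not allow a second decision: the handler body still executes only once, the wait merely delays the single $\mathsf{decide}$ call until the \reduceacool output is available (and that output is guaranteed to arrive by the agreement/termination properties of $\mathcal{ACOOL}$ once $\mathsf{HAPPY}$ was decided). I would phrase the final proof in three short sentences: decisions happen only in the $\mathcal{AW}.\mathsf{decide}$ handler; that handler runs at most once by integrity of $\mathcal{AW}$; each run triggers exactly one $\mathsf{decide}$; hence \longgcf satisfies integrity. $\qed$
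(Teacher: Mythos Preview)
Your proposal is correct and takes essentially the same approach as the paper: the paper's proof is the single line ``Follows directly from integrity of $\mathcal{AW}$,'' which is exactly the core of your argument. Your version is more detailed (and the extra paragraph about having previously proposed is unnecessary here, since graded-consensus integrity only requires ``decides at most once''---you may be conflating it with \block's integrity), but the substance is identical.
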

\begin{proof}
    Follows directly from integrity of $\mathcal{AW}$.
\end{proof}

Finally, we prove termination.

\begin{theorem} [Termination]
\longgcf (\Cref{algorithm:its_graded_consensus}) satisfies termination.
\end{theorem}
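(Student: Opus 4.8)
The plan is to establish termination of \longgcf by chaining the termination guarantees of its two sub-protocols, \reduceacool and AW graded consensus, exactly as the description suggests. First I would observe that the precondition of the theorem is that all correct processes propose to \longgcf and no correct process abandons it. Tracing the pseudocode (\Cref{algorithm:its_graded_consensus}): upon proposing, each correct process immediately inputs its (valid) value to the $\mathcal{ACOOL}$ instance at line~\ref{line:longgcf_input_to_acool}. Since all correct processes participate in $\mathcal{ACOOL}$ and none abandons it, the termination property of \reduceacool guarantees that every correct process eventually obtains an output pair $(\mathit{success}, \mathit{reduction\_output}_i)$ from $\mathcal{ACOOL}$.

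Next I would argue that every correct process subsequently proposes to the $\mathcal{AW}$ instance: regardless of whether $\mathit{success}=1$ or $\mathit{success}=0$, the process invokes $\mathcal{AW}.\mathsf{propose}(\cdot)$ (line~\ref{line:longgcf_propose_happy_to_aw} or line~\ref{line:longgcf_propose_sad_to_aw}). Hence all correct processes propose to $\mathcal{AW}$, and since no correct process abandons \longgcf it does not abandon $\mathcal{AW}$ either; the termination property of AW graded consensus then ensures every correct process eventually decides some pair $(v', g')$ from $\mathcal{AW}$ (line~\ref{line:longgcf_decide_from_aw}).

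It then remains to show that the $\mathcal{AW}$ decision actually translates into a \longgcf decision. I would split into the two cases of the branch at line~\ref{line:longgcf_decide_from_aw}. If $v' = \mathsf{SAD}$, the process directly triggers $\mathsf{decide}(\mathit{proposal}_i, 0)$ at line~\ref{line:longgcf_decide_own_proposal}, so it decides. If $v' = \mathsf{HAPPY}$, the process must wait until $\mathit{reduction\_output}_i \neq \bot$ before deciding at line~\ref{line:longgcf_decide_output}. The one subtlety — the main obstacle of the proof — is ruling out that this wait blocks forever. But this is immediate: the process only reaches the waiting statement after it has already obtained an output from $\mathcal{ACOOL}$ (that is what caused it to propose $\mathsf{HAPPY}$ in the first place), so $\mathit{reduction\_output}_i$ has already been assigned a value by line~\ref{line:longgcf_input_to_acool}; moreover the safety/agreement properties of \reduceacool (invoked when a correct process proposes $\mathsf{HAPPY}$, which by AW's safety must have happened) guarantee that whenever some correct process outputs $(1,v)$, the value $v$ is well-defined and non-$\bot$. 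Thus the wait returns instantly and the process decides at line~\ref{line:longgcf_decide_output}. Since every correct process eventually reaches one of these two deciding lines, termination follows.

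\begin{proof}
Assume all correct processes propose to \longgcf and no correct process abandons \longgcf. By the code (line~\ref{line:longgcf_input_to_acool}), every correct process inputs a value to $\mathcal{ACOOL}$ and, as no correct process abandons \longgcf, none abandons $\mathcal{ACOOL}$; hence the termination property of \reduceacool ensures that every correct process eventually obtains an output $(\mathit{success}, \mathit{reduction\_output}_i)$ from $\mathcal{ACOOL}$. Consequently every correct process invokes $\mathcal{AW}.\mathsf{propose}(\cdot)$ (line~\ref{line:longgcf_propose_happy_to_aw} or line~\ref{line:longgcf_propose_sad_to_aw}), so all correct processes propose to $\mathcal{AW}$ and none abandons it; by the termination property of AW graded consensus, every correct process eventually decides a pair $(v', g')$ from $\mathcal{AW}$ (line~\ref{line:longgcf_decide_from_aw}). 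If $v' = \mathsf{SAD}$, the process triggers $\mathsf{decide}(\mathit{proposal}_i, 0)$ at line~\ref{line:longgcf_decide_own_proposal}, thus deciding. If $v' = \mathsf{HAPPY}$, the process waits until $\mathit{reduction\_output}_i \neq \bot$; but this process has already received its output from $\mathcal{ACOOL}$ (otherwise it could not have reached line~\ref{line:longgcf_decide_from_aw}), so $\mathit{reduction\_output}_i$ was already assigned at line~\ref{line:longgcf_input_to_acool}, and the wait returns immediately, after which the process decides at line~\ref{line:longgcf_decide_output}. In either case every correct process eventually decides, establishing termination.
\end{proof}
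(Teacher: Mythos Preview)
Your proof is correct and follows the same approach as the paper, which simply states that termination ``follows directly from termination of $\mathcal{ACOOL}$ and $\mathcal{AW}$.'' Your version is considerably more detailed, in particular spelling out the case analysis after $\mathcal{AW}$ decides and arguing that the \texttt{wait until} at line~\ref{line:longgcf_decide_output} cannot block (since $\mathit{reduction\_output}_i$ was already assigned when $\mathcal{ACOOL}$ returned); this extra care is welcome and matches what the paper implicitly relies on. One small remark: in your plan the parenthetical ``that is what caused it to propose $\mathsf{HAPPY}$ in the first place'' is slightly off---a process can decide $\mathsf{HAPPY}$ from $\mathcal{AW}$ even if it proposed $\mathsf{SAD}$---but your formal proof uses the correct justification (reaching line~\ref{line:longgcf_decide_from_aw} already requires having obtained an $\mathcal{ACOOL}$ output).
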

\begin{proof}
The termination property follows directly from termination of $\mathcal{ACOOL}$ and $\mathcal{AW}$.
\end{proof}

\smallskip
\noindent \textbf{Proof of complexity.}
We prove that any correct process sends $O\big( L + n\log(n) \big)$ bits.

\begin{theorem} [Exchanged bits]
Any correct process sends $O\big( L + n\log(n) \big)$ bits in \longgcf.
\end{theorem}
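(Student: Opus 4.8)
The plan is to observe that \longgcf (\Cref{algorithm:its_graded_consensus}) performs no communication of its own: every message a correct process $p_i$ sends is sent inside one of the two sub-protocol instances it invokes, namely the \reduceacool instance $\mathcal{ACOOL}$ and the AW graded consensus instance $\mathcal{AW}$. Consequently the per-process bit complexity of \longgcf is at most the sum of the per-process bit complexities of $\mathcal{ACOOL}$ and $\mathcal{AW}$. Moreover, each of these instances is invoked at most once by a correct process (line~\ref{line:longgcf_input_to_acool} for $\mathcal{ACOOL}$, and exactly one of lines~\ref{line:longgcf_propose_happy_to_aw} and~\ref{line:longgcf_propose_sad_to_aw} for $\mathcal{AW}$), so no extra multiplicative factor arises from repeated invocations.

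Next I would bound each contribution using results already established. For $\mathcal{ACOOL}$, the ``Exchanged bits'' theorem of \Cref{section:acool_reduction} gives a per-process bound of $O\big(L + n\log(n)\big)$ bits, where $L$ is the bit-size of $p_i$'s proposal (the value it inputs to $\mathcal{ACOOL}$). For $\mathcal{AW}$, the inputs are drawn from the constant-size set $\{\mathsf{HAPPY}, \mathsf{SAD}\}$; by the complexity of AW graded consensus recalled in \Cref{section:existing_primitives} — namely $O(n\ell)$ per-process bits on $\ell$-sized values — the instance $\mathcal{AW}$ costs $O(n)$ bits per process here.

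Adding the two contributions gives $O\big(L + n\log(n)\big) + O(n) = O\big(L + n\log(n)\big)$ bits per correct process, which is the claimed bound. I do not expect a real obstacle in this argument; the only points that deserve a line of care are (i) confirming that \longgcf introduces no messages outside its $\mathcal{ACOOL}$ and $\mathcal{AW}$ sub-instances, and (ii) checking that the values handed to $\mathcal{AW}$ are genuinely of constant size, so that the generic $O(n\ell)$ bound specializes to $O(n)$ rather than to $O(nL)$.
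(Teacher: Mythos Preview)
Your proposal is correct and follows essentially the same approach as the paper: bound the bits sent in $\mathcal{ACOOL}$ by $O\big(L + n\log(n)\big)$, bound the bits sent in $\mathcal{AW}$ by $O(n)$ (since the proposals there are constant-sized), and sum. Your write-up is in fact a bit more explicit than the paper's in noting that \longgcf sends no messages outside its two sub-instances and that each is invoked at most once.
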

\begin{proof}
Any correct process sends $O\big( L + n \log(n) \big)$ bits in $\mathcal{ACOOL}$.
Moreover, any correct process sends $O(n)$ bits in $\mathcal{AW}$.
Therefore, any correct process sends $O\big( L + n \log(n) \big)$ bits.
\end{proof}

Lastly, we prove that \longgcf requires $14$ asynchronous rounds before all correct processes decide.

\begin{theorem}[Asynchronous rounds]
Assuming all correct processes propose to \longgcf and no correct process abandons \longgcf, \longgcf takes $14$ asynchronous rounds before all correct processes decide.
\end{theorem}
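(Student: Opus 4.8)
The plan is to trace the asynchronous round complexity of \longgcf (\Cref{algorithm:its_graded_consensus}) by following its two sequential subcomponents, exactly as was done for \longgct in the previous subsection. The algorithm first runs the \reduceacool instance $\mathcal{ACOOL}$ to completion, and then (based on its binary success flag) proposes to the AW graded consensus instance $\mathcal{AW}$; only after deciding from $\mathcal{AW}$ does a correct process decide from \longgcf. Since there is no message exchange in \longgcf itself beyond these two closed-box invocations, the total number of asynchronous rounds is simply the sum of the rounds of $\mathcal{ACOOL}$ and $\mathcal{AW}$, provided all correct processes propose and none abandon (so that the termination preconditions of both subprotocols are met and the sequential composition goes through without stalling).

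First I would invoke the round-complexity result for \reduceacool established earlier in \Cref{section:acool_reduction}: assuming all correct processes input to $\mathcal{ACOOL}$ and none abandon it, \reduceacool terminates in $5$ asynchronous rounds (this is the theorem ``Asynchronous rounds'' for \reduceacool, matching the $5$ phases, one round per phase). Next I would recall the round complexity of the Attiya--Welch graded consensus algorithm from \Cref{section:existing_primitives}: $\mathcal{AW}$ terminates in $9$ asynchronous rounds under the same kind of precondition. Then I would argue that the composition is indeed sequential: every correct process that proposes to \longgcf proposes to $\mathcal{ACOOL}$ (line~\ref{line:longgcf_input_to_acool}), and no correct process abandons $\mathcal{ACOOL}$ (since no correct process abandons \longgcf), so by the termination of $\mathcal{ACOOL}$ every correct process obtains an output within $5$ rounds; at that point every correct process proposes to $\mathcal{AW}$ (line~\ref{line:longgcf_propose_happy_to_aw} or line~\ref{line:longgcf_propose_sad_to_aw}), and again none abandons, so within a further $9$ rounds every correct process decides from $\mathcal{AW}$ (line~\ref{line:longgcf_decide_from_aw}) and hence decides from \longgcf (line~\ref{line:longgcf_decide_own_proposal} or line~\ref{line:longgcf_decide_output}); note the $\mathtt{wait\ until}\ \mathit{reduction\_output}_i \neq \bot$ in the $\mathsf{HAPPY}$ branch adds no rounds, since $\mathit{reduction\_output}_i$ was already set when $\mathcal{ACOOL}$ output. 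Adding $5 + 9 = 14$ gives the claimed bound.

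The main (mild) obstacle I anticipate is the $\mathtt{wait\ until}$ in the $\mathsf{HAPPY}$ branch and, more subtly, ensuring that a correct process deciding from $\mathcal{AW}$ does not get stuck because its own $\mathcal{ACOOL}$ output is somehow ``late''. Here I would lean on the fact that a correct process only reaches the $\mathcal{AW}$-decide handler after it has proposed to $\mathcal{AW}$, which in turn happens only after it has received its $\mathcal{ACOOL}$ output and set $\mathit{reduction\_output}_i$; hence by the time the $\mathtt{wait\ until}$ is evaluated the guard is already satisfied, and no extra asynchronous round is incurred. With that observation the proof is essentially a one-line summation, mirroring the proof of the analogous theorem for \longgct. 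The completed statement and proof read:

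\begin{proof}
Each correct process that participates in \longgcf and does not abandon it incurs $5$ asynchronous rounds in $\mathcal{ACOOL}$ (see \Cref{section:acool_reduction}), after which it proposes to $\mathcal{AW}$ (line~\ref{line:longgcf_propose_happy_to_aw} or line~\ref{line:longgcf_propose_sad_to_aw}); this is followed by $9$ asynchronous rounds in $\mathcal{AW}$ (see~\cite{AttiyaWelch23}), after which the process decides from \longgcf (line~\ref{line:longgcf_decide_own_proposal} or line~\ref{line:longgcf_decide_output}). The $\mathtt{wait\ until}\ \mathit{reduction\_output}_i \neq \bot$ in line~\ref{line:longgcf_decide_output} adds no asynchronous rounds, since $\mathit{reduction\_output}_i$ was already assigned when $\mathcal{ACOOL}$ produced its output (line~\ref{line:longgcf_input_to_acool}), which necessarily precedes the proposal to $\mathcal{AW}$ and hence the decision from $\mathcal{AW}$. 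Therefore \longgcf takes $5 + 9 = 14$ asynchronous rounds before all correct processes decide.
\end{proof}
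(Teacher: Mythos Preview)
Your proposal is correct and follows essentially the same approach as the paper: summing the $5$ asynchronous rounds of $\mathcal{ACOOL}$ and the $9$ asynchronous rounds of $\mathcal{AW}$ to obtain $14$. Your additional remark that the $\mathtt{wait\ until}$ guard adds no extra round is a helpful clarification, but the paper's proof is even terser and simply states the two round counts and their sum.
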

\begin{proof}
$\mathcal{ACOOL}$ incurs $5$ asynchronous rounds, whereas $\mathcal{AW}$ incurs $9$ asynchronous rounds.
Therefore, \longgcf requires $14$ asynchronous rounds until all correct processes decide.
\end{proof}
\section{Validation Broadcast: Concrete Implementations} \label{section:validation_broadcast_concrete_implementations}

In this section, we present concrete implementations of the validation broadcast primitive we utilize in \name to obtain Byzantine agreement algorithms with various bit complexity.
Concretely, \Cref{tab:validation-broadcast-summary} outlines the characteristics of three validation broadcast implementations we introduce.

\begin{table}[ht]
\centering
\footnotesize
\begin{tabular}{ |p{3.9cm}|p{2.3cm}|p{2.8cm}|p{2cm}|p{1.5cm}|p{1.8cm}|  }
 \hline
 \centering Algorithm & \centering Section & \centering Exchanged bits & \centering Async. rounds & \centering Resilience & \centering Cryptography \tabularnewline
 \hline
 \hline

\centering \textbf{ \shortvbt (\Cref{algorithm:echo_basic}) } & \centering \Cref{subsection:validation_broadcast_basic_implementation} & \centering $O(n^2L)$ & \centering $4$ & \centering $t < n / 3$ & \centering None \tabularnewline

 \hline

 \centering \textbf{ \longvbt (\Cref{algorithm:validation_broadcast_long_3_4}) }  & \centering \Cref{subsection:validation_bcast_long_3_4}                 & \centering $O( nL + n^2\log(n)\kappa )$ & \centering $6$ & \centering $t < n / 3$ & \centering Hash  \tabularnewline
 
 \hline

 \centering \textbf{ \longvbf (\Cref{algorithm:its_validation_broadcast}) }  & \centering \Cref{subsection:its_validation_broadcast}                 & \centering $O\big( nL + n^2\log(n) \big)$ & \centering $15$ & \centering $t < n / 5$ & \centering None  \tabularnewline
 
 \hline
\end{tabular}
    \caption{Relevant aspects of the three validation broadcast algorithms we propose.
    \\($L$ denotes the bit-size of a value, whereas $\kappa$ denotes the bit-size of a hash value.)}
\label{tab:validation-broadcast-summary}
\end{table}

\subsection{Review of the Specification of Validation Broadcast}\label{app:validation-broadcast-definition}
Let us recall the definition of the validation broadcast primitive.
The following interface is exposed:
\begin{compactitem}
    \item \textbf{request} $\mathsf{broadcast}(v \in \mathsf{Value})$: a process broadcasts value $v$.

    \item \textbf{request} $\mathsf{abandon}$: a process abandons (i.e., stops participating in) validation broadcast.

    \item \textbf{indication} $\mathsf{validate}(v' \in \mathsf{Value})$: a process validates value $v'$.

    \item \textbf{indication} $\mathsf{completed}$: a process is notified that validation broadcast has completed.
\end{compactitem}
Every correct process broadcasts at most once.
Not all correct processes are guaranteed to broadcast their value.
Recall that each process $p_i$ is associated with its default value $\mathsf{def}(p_i) \in \mathsf{Value}$. 

The validation broadcast primitive guarantees the following properties: 
\begin{compactitem}
    \item \emph{Strong validity:} If all correct processes that broadcast do so with the same value $v$, then no correct process validates any value $v' \neq v$.


    \item \emph{Safety:} If a correct process $p_i$ validates a value $v'$, then a correct process has previously broadcast $v'$ or $v' = \mathsf{def}(p_i)$.

    \item \emph{Integrity:} No correct process receives a $\mathsf{completed}$ indication unless it has previously broadcast.

    \item \emph{Termination:} If all correct processes broadcast their value and no correct process abandons validation broadcast, then every correct process eventually receives a $\mathsf{completed}$ indication.

    \item \emph{Totality:} If any correct process receives a $\mathsf{completed}$ indication at time $\tau$, then every correct process validates a value by time $\max(\tau, \text{GST}) + 2\delta$.
\end{compactitem}







We underline that a correct process might validate a value even if (1) it has not previously broadcast its input value, or (2) it has previously abandoned the primitive, or (3) it has previously received a $\mathsf{completed}$ indication. Moreover, a correct process may validate multiple values, and two correct processes may validate different values. 


\subsection{\shortvbt: Pseudocode \& Proof of Correctness and Complexity} \label{subsection:validation_broadcast_basic_implementation}

The pseudocode of \shortvbt is given in \Cref{algorithm:echo_basic}.
Recall that \shortvbt (1) tolerates up to $t < n / 3$ Byzantine processes, (2) uses no cryptography (i.e., is resilient against a computationally unbounded adversary), and (3) exchanges $O(n^2L)$ bits.
\shortvbt internally utilizes the reducing broadcast primitive (see \Cref{section:existing_primitives}).
 
\begin{algorithm}
\caption{\shortvbt: Pseudocode (for process $p_i$)}
\label{algorithm:echo_basic}
\footnotesize
\begin{algorithmic} [1]
\State \textbf{Uses:}
\State \hskip2em Reducing broadcast~\cite{DBLP:journals/acta/MostefaouiR17}, \textbf{instance} $\mathcal{RB}$ \BlueComment{see \Cref{section:existing_primitives}}

\medskip
\State \textbf{Rules:} 
\State \hskip2em Only one \textsc{init} message is processed per process. 

\medskip
\State \textbf{Local variables:}
\State \hskip2em $\mathsf{Map}(\mathsf{Value} \cup \{\bot_{rd}\} \to \mathsf{Boolean})$ $\mathit{echo}_i \gets \{\mathit{false}, \mathit{false}, ..., \mathit{false}\}$

\medskip
\State \textbf{Local functions:}
\State \hskip2em $\mathsf{init}(v \in \mathsf{Value} \cup \{\bot_{rd}\}) \gets $ the set of processes from which $p_i$ has received an $\langle \textsc{init}, v \rangle$ message \label{line:short_vb_init_function}

\State \hskip2em $\mathsf{total\_init} \gets \bigcup_{v} \mathsf{init}(v)$ \label{line:short_vb_total_init_function}

\State \hskip2em $\mathsf{most\_frequent} \gets v$ such that $|\mathsf{init}(v)| \geq |\mathsf{init}(v')|$, for every $v' \in \mathsf{Value} \cup \{\bot_{rd}\}$ \label{line:most_frequent_function_validation_broadcast}

\State \hskip2em $\mathsf{echo}(v \in \mathsf{Value} \cup \{\bot_{rd}\}) \gets $ the set of processes from which $p_i$ has received an 
$\langle \textsc{echo}, v \rangle$ message \label{line:echo_function}

\medskip
\State \textbf{upon} $\mathsf{broadcast}(v \in \mathsf{Value})$: \label{line:echo_broadcast}

\State \hskip2em \textbf{invoke} $\mathcal{RB}.\mathsf{broadcast}(v)$ \label{line:echo_reducing_broadcast}

\medskip
\State \textbf{upon} $\mathcal{RB}.\mathsf{deliver}(v \in \mathsf{Value} \cup \{ \bot_{rd} \})$: \label{line:echo_reducing_deliver}
\State \hskip2em \textbf{broadcast} $\langle \textsc{init}, v \rangle$ \label{line:broadcast_init_1_short_vb}

\medskip
\State \textbf{upon} exists $v \in \mathsf{Value} \cup \{\bot_{rd}\}$ such that $|\mathsf{init}(v)| \geq t + 1$ and $\mathit{echo}_i[v] = \mathit{false}$: \label{line:echo_receive_plurality}
\State \hskip2em $\mathit{echo}_i[v] \gets \mathit{true}$
\State \hskip2em \textbf{broadcast} $\langle \textsc{echo}, v \rangle$ \label{line:echo_broadcast_2}

\medskip
\State \textbf{upon} $|\mathsf{total\_init}| - |\mathsf{init}(\mathsf{most\_frequent})| \geq t + 1$ and $\mathit{echo}_i[\bot_{rd}] = \mathit{false}$: \label{line:echo_most_frequent}
\State \hskip2em $\mathit{echo}_i[\bot_{rd}] \gets \mathit{true}$
\State \hskip2em \textbf{broadcast} $\langle \textsc{echo}, \bot_{rd} \rangle$ \label{line:echo_broadcast_3}

\medskip
\State \textbf{upon} exists $v \in \mathsf{Value} \cup \{\bot_{rd}\}$ such that $|\mathsf{echo}(v)| \geq 2t + 1$:
\BlueComment{can be triggered only if $p_i$ has previously broadcast}
\label{line:echo_receive_quorum}
\State \hskip2em \textbf{trigger} $\mathsf{completed}$ \label{line:echo_trigger_completed}

\medskip
\State \textbf{upon} exists $v \in \mathsf{Value} \cup \{\bot_{rd}\}$ such that $|\mathsf{echo}(v)| \geq t + 1$: \label{line:echo_receive_plurality_2}\BlueComment{can be triggered anytime}
\State \hskip2em \textbf{if} $v = \bot_{rd}$:
\State \hskip4em $v \gets \mathsf{def}(p_i)$ \label{line:echo_validate_broadcast}
\State \hskip2em \textbf{trigger} $\mathsf{validate}(v)$ \label{line:echo_trigger_validate}

\end{algorithmic}
\end{algorithm}

\smallskip
\noindent \textbf{Pseudocode description.}
We describe \shortvbt's pseudocode from the perspective of a correct process $p_i$.
Process $p_i$ relies on the following local functions:
\begin{compactitem}
    \item $\mathsf{init}(v)$: returns the set of processes from which $p_i$ has received an \textsc{init} message with value $v$ (line~\ref{line:short_vb_init_function}).

    \item $\mathsf{total\_init}$: returns the union of $\mathsf{init}(v)$, for every value $v$ (line~\ref{line:short_vb_total_init_function}).


    \item $\mathsf{most\_frequent}$: returns the most frequent value according to the $\mathsf{init}(\cdot)$ function (line~\ref{line:most_frequent_function_validation_broadcast}).

    \item $\mathsf{echo}(v)$: returns the set of processes from which $p_i$ has received an \textsc{echo} message with value $v$ (line~\ref{line:echo_function}).
\end{compactitem}
When $p_i$ broadcasts its value $v$ (line~\ref{line:echo_broadcast}), $p_i$ disseminates $v$ using the instance $\mathcal{RB}$ of the reducing broadcast primitive (line~\ref{line:echo_reducing_broadcast}).
Once $p_i$ delivers a value from $\mathcal{RB}$ (line~\ref{line:echo_reducing_deliver}), $p_i$ broadcast an \textsc{init} message for that value (line~\ref{line:broadcast_init_1_short_vb}).
If $p_i$ receives an \textsc{init} message for the same value $v'$ from $t + 1$ processes (line~\ref{line:echo_receive_plurality}), $p_i$ broadcasts an \textsc{echo} message for $v'$ (line~\ref{line:echo_broadcast_2}) unless it has already done so.
If $|\mathsf{total\_init}| - |\mathsf{init}(\mathsf{most\_frequent})| \geq t + 1$ (line~\ref{line:echo_most_frequent}), process $p_i$ broadcasts an \textsc{echo} message for $\bot_{rd}$ (line~\ref{line:echo_broadcast_3}) as it knows that it is impossible that all correct processes have broadcast the same value.
When $p_i$ receives $2t + 1$ \textsc{echo} messages for the same value (line~\ref{line:echo_receive_quorum}), $p_i$ completes \shortvbt (line~\ref{line:echo_trigger_completed}).
Finally, when $p_i$ receives $t + 1$ \textsc{echo} messages for the same value $v$ (line~\ref{line:echo_receive_plurality_2}), $p_i$ validates a value $v^*$ according to the following logic:
\begin{compactitem}
    \item If $v = \bot_{rd}$, then $v^* = \mathsf{def}(p_i)$ (line~\ref{line:echo_validate_broadcast}).

    \item Otherwise, $v^* = v$ (line~\ref{line:echo_trigger_validate}).
\end{compactitem}

\smallskip
\noindent \textbf{Proof of correctness.}
We start by proving strong validity.

\begin{theorem} [Strong validity]
\shortvbt (\Cref{algorithm:echo_basic}) satisfies strong validity.
\end{theorem}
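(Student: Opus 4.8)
The plan is to trace how the hypothesis ``all correct processes that broadcast do so with $v$'' propagates through the two message types (\textsc{init} and \textsc{echo}) of \shortvbt, showing that no correct process ever sends a message carrying a value other than $v$ — in particular that the $\bot_{rd}$ branch is never triggered — and then reading off strong validity from the $t+1$-\textsc{echo} rule.

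First I would handle the \textsc{init} messages. Every correct process that invokes $\mathcal{RB}.\mathsf{broadcast}(\cdot)$ (line~\ref{line:echo_reducing_broadcast}) does so with $v$, so the validity property of reducing broadcast guarantees that no correct process delivers $\bot_{rd}$ from $\mathcal{RB}$, and its safety property guarantees that any non-$\bot_{rd}$ value delivered from $\mathcal{RB}$ was broadcast by a correct process, hence equals $v$. Consequently every correct process that delivers from $\mathcal{RB}$ (line~\ref{line:echo_reducing_deliver}) delivers exactly $v$, so every \textsc{init} message sent by a correct process carries $v$ (line~\ref{line:broadcast_init_1_short_vb}).

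Next I would show that correct processes only ever send $\langle\textsc{echo}, v\rangle$. The rule at line~\ref{line:echo_receive_plurality} requires $t+1$ identical \textsc{init} messages; since correct processes send only $\langle\textsc{init}, v\rangle$ and there are at most $t$ faulty processes, a correct process can reach this threshold only for $v$. The rule at line~\ref{line:echo_most_frequent}, which would emit $\langle\textsc{echo}, \bot_{rd}\rangle$, never fires: using the ``only one \textsc{init} processed per process'' rule, the sets $\{\mathsf{init}(w)\}_w$ of a correct process are pairwise disjoint and all correct senders lie in $\mathsf{init}(v)$, so $\mathsf{total\_init}\setminus\mathsf{init}(v)$ consists solely of faulty senders and thus $|\mathsf{total\_init}| - |\mathsf{init}(\mathsf{most\_frequent})| \le |\mathsf{total\_init}| - |\mathsf{init}(v)| \le t < t+1$. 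Hence no correct process ever broadcasts an \textsc{echo} message for a value different from $v$.

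Finally I would conclude: a correct process validates at line~\ref{line:echo_trigger_validate} only after collecting $t+1$ $\langle\textsc{echo}, w\rangle$ messages for a common $w$, and any such set contains a message from a correct process; therefore $w$ is a value that some correct process echoed, i.e.\ $w = v$ (in particular $w \neq \bot_{rd}$, so line~\ref{line:echo_validate_broadcast} is not reached), and the validated value is $v$. This gives strong validity. I do not expect a serious obstacle; the only slightly delicate point is the accounting at line~\ref{line:echo_most_frequent}, where one must note that $\mathsf{most\_frequent}$ could in principle be a faulty-injected value, but since $|\mathsf{init}(\mathsf{most\_frequent})| \ge |\mathsf{init}(v)|$ holds by definition, the bound above still goes through.
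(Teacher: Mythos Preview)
Your proof is correct and follows essentially the same route as the paper: use validity and safety of $\mathcal{RB}$ to pin all correct \textsc{init} messages to $v$, argue that neither \textsc{echo}-sending rule can fire for a value other than $v$, and conclude via the $t{+}1$-\textsc{echo} threshold. Your treatment of the $\mathsf{most\_frequent}$ rule is in fact more explicit than the paper's (you spell out why $|\mathsf{total\_init}| - |\mathsf{init}(\mathsf{most\_frequent})| \le t$ even if $\mathsf{most\_frequent}$ is adversarially chosen), but the overall argument is the same.
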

\begin{proof}
Suppose all correct processes that broadcast do so with the same value $v$.
Hence, due to the validity and safety properties of $\mathcal{RB}$, all correct processes that deliver a value from $\mathcal{RB}$ deliver $v$.
Therefore, no correct process sends an \textsc{echo} message for a non-$v$ value at line~\ref{line:echo_broadcast_2} as the rule at line~\ref{line:echo_receive_plurality} never activates.
Similarly, the rule at line~\ref{line:echo_most_frequent} never activates as there can be at most $t$ \textsc{init} messages for non-$v$ values received by any correct process, which implies that no correct process sends an \textsc{echo} message for a non-$v$ value at line~\ref{line:echo_broadcast_3}.
Hence, due to the check line \ref{line:echo_receive_plurality_2}, a correct process can only validate value $v$ (line~\ref{line:echo_trigger_validate}).
\end{proof}

Next, we prove safety.

\begin{theorem} [Safety]
\shortvbt (\Cref{algorithm:echo_basic}) satisfies safety.
\end{theorem}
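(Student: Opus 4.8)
The plan is to prove the safety property of \shortvbt directly from \Cref{algorithm:echo_basic}. Safety requires that whenever a correct process $p_i$ validates a value $v'$, then either $v'$ has previously been broadcast by a correct process, or $v' = \mathsf{def}(p_i)$. First I would trace back from the $\mathsf{validate}(v')$ trigger at line~\ref{line:echo_trigger_validate}: a correct process validates $v'$ only after receiving $t + 1$ \textsc{echo} messages for some value $v \in \mathsf{Value} \cup \{\bot_{rd}\}$ (the rule at line~\ref{line:echo_receive_plurality_2}), where $v' = v$ if $v \neq \bot_{rd}$, and $v' = \mathsf{def}(p_i)$ if $v = \bot_{rd}$. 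The case $v = \bot_{rd}$ is immediate: $v' = \mathsf{def}(p_i)$, so safety holds trivially. The remaining case is $v = v' \in \mathsf{Value}$.

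In the case $v' \in \mathsf{Value}$, since $p_i$ receives $t + 1$ $\langle \textsc{echo}, v' \rangle$ messages and at most $t$ of them come from faulty processes, at least one correct process $p_j$ has broadcast $\langle \textsc{echo}, v' \rangle$. A correct process broadcasts an \textsc{echo} message for $v'$ only at line~\ref{line:echo_broadcast_2} (the line~\ref{line:echo_broadcast_3} branch only ever sends $\langle \textsc{echo}, \bot_{rd} \rangle$, so it cannot be responsible for $v' \neq \bot_{rd}$). Hence $p_j$ must have received $t + 1$ $\langle \textsc{init}, v' \rangle$ messages (the rule at line~\ref{line:echo_receive_plurality}), at least one of which originates from a correct process, say $p_k$. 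A correct process broadcasts $\langle \textsc{init}, v' \rangle$ only at line~\ref{line:broadcast_init_1_short_vb}, immediately after delivering $v'$ from $\mathcal{RB}$ (line~\ref{line:echo_reducing_deliver}). Since $v' \neq \bot_{rd}$, the safety property of reducing broadcast (stated in \Cref{section:existing_primitives}) guarantees that $v'$ has been broadcast to $\mathcal{RB}$ by a correct process; by the invocation at line~\ref{line:echo_reducing_broadcast}, that correct process must have broadcast $v'$ via \shortvbt. This establishes the required conclusion.

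I expect no serious obstacle here — the proof is a routine ``follow the message chain backwards'' argument, and the only subtlety is being careful to distinguish the two branches that emit \textsc{echo} messages (line~\ref{line:echo_broadcast_2} versus line~\ref{line:echo_broadcast_3}) and to note that the latter only emits $\bot_{rd}$, so it is irrelevant once we are in the case $v' \in \mathsf{Value}$. The only external fact invoked is the safety property of the reducing broadcast primitive, which is available from \Cref{section:existing_primitives}. A clean writeup would simply split on whether the $t+1$ received \textsc{echo} messages carry $\bot_{rd}$ or a genuine value, handle the trivial case in one line, and chase the $v' \in \mathsf{Value}$ case through \textsc{echo} $\to$ \textsc{init} $\to$ $\mathcal{RB}.\mathsf{deliver}$ $\to$ $\mathcal{RB}.\mathsf{broadcast}$ $\to$ \shortvbt$.\mathsf{broadcast}$ using the quorum-intersection observation ($t+1$ messages include at least one correct sender) at each of the two relay steps.
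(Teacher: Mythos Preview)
Your proposal is correct and follows essentially the same approach as the paper: case-split on whether the $t+1$ received \textsc{echo} messages carry $\bot_{rd}$ (trivially yielding $\mathsf{def}(p_i)$) or a genuine value, and in the latter case trace backwards to a correct $\mathcal{RB}$-delivery and invoke the safety property of reducing broadcast. The paper's write-up is terser (it elides the explicit \textsc{echo}~$\to$~\textsc{init} step you spell out), but the argument is the same.
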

\begin{proof}
Consider any correct process $p_i$ that validates a certain value $v'$ line~\ref{line:echo_trigger_validate}.
Necessarily, $p_i$ has received $t + 1$ $\langle \textsc{echo}, v \rangle$ messages line~\ref{line:echo_receive_plurality_2}, for some $v \in \mathsf{Value} \cup \{\bot_{rd}\}$.
We now consider two possibilities:
\begin{compactitem}
    \item Let $v = \bot_{rd}$.
    In this case, $v' = \mathsf{def}(p_i)$.

    \item Let $v \neq \bot_{rd}$.
    In this case, a correct process has previously delivered $v$ from $\mathcal{RB}$.
    Due to the safety property of $\mathcal{RB}$, a correct process has broadcast $v$.
\end{compactitem}
Safety is satisfied as it holds in both possible scenarios.
\end{proof}

The following theorem proves integrity.

\begin{theorem} [Integrity]
\shortvbt (\Cref{algorithm:echo_basic}) satisfies integrity.
\end{theorem}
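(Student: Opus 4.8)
The plan is to pin down the single place at which a correct process can produce a $\mathsf{completed}$ indication and to argue that this place is unreachable before the process has invoked a $\mathsf{broadcast}(\cdot)$ request. Inspecting \Cref{algorithm:echo_basic}, the $\mathsf{completed}$ event is triggered only on \ref{line:echo_trigger_completed}, which sits in the body of the rule guarded on \ref{line:echo_receive_quorum} (the existence of some value $v$ with $|\mathsf{echo}(v)| \geq 2t+1$). Since no other statement in the algorithm raises $\mathsf{completed}$, it suffices to reason about that one rule. So the first step is just this syntactic observation: $\mathsf{completed}$ occurs at a correct process $p_i$ only via \ref{line:echo_trigger_completed}, and \ref{line:echo_trigger_completed} executes only when the rule on \ref{line:echo_receive_quorum} activates at $p_i$.

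The second step is to make precise the pseudocode annotation attached to \ref{line:echo_receive_quorum} (``can be triggered only if $p_i$ has previously broadcast''): the rule at \ref{line:echo_receive_quorum} is dormant at a correct process $p_i$ until $p_i$ executes the $\mathsf{broadcast}(v)$ handler on \ref{line:echo_broadcast}. Under this reading, before $p_i$ broadcasts the rule never fires, no matter how many \textsc{echo} messages $p_i$ has accumulated; and after $p_i$ broadcasts, any activation of the rule — hence any $\mathsf{completed}$ indication — necessarily happens strictly after that broadcast. Combining the two steps yields exactly the integrity property: no correct process receives a $\mathsf{completed}$ indication unless it has previously broadcast.

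The main obstacle — indeed the only nontrivial point — is that, unlike the analogous $\mathsf{completed}$ arguments for \block (which route the indication through the explicitly-entered Task~1), here the echo quorum $|\mathsf{echo}(v)| \geq 2t+1$ could in principle be assembled at $p_i$ purely from \textsc{echo} messages of other processes that did broadcast, so the property does not follow from the flow of information alone; it hinges entirely on the guard on \ref{line:echo_receive_quorum} being semantically ``enabled only post-broadcast''. Once that convention is stated explicitly, the proof is the one-step chain described above, mirroring in spirit the integrity proof of \block.
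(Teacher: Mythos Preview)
Your proof is correct and takes essentially the same approach as the paper: both rely on the pseudocode convention that the rule at \ref{line:echo_receive_quorum} is only checked after $p_i$ has invoked $\mathsf{broadcast}(\cdot)$, from which integrity follows immediately. The paper states this in a single sentence, while you (reasonably) unpack why this guard is the crux rather than any information-flow argument.
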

\begin{proof}
The integrity property follows from the fact that the check at line~\ref{line:echo_receive_quorum} is only performed if $p_i$ has previously broadcast.
\end{proof}

Next, we prove termination.

\begin{theorem} [Termination]
\shortvbt (\Cref{algorithm:echo_basic}) satisfies termination.
\end{theorem}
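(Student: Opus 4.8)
The plan is to trace the protocol forward, assuming all correct processes broadcast and no correct process abandons \shortvbt, and show that every correct process eventually triggers a $\mathsf{completed}$ indication at line~\ref{line:echo_trigger_completed}. The key is that the protocol's only ``branching point'' is whether correct processes end up echoing a common non-$\bot_{rd}$ value or whether enough of them echo $\bot_{rd}$; in either case I want to produce $2t+1$ \textsc{echo} messages for a common value at every correct process. First I would invoke the termination property of $\mathcal{RB}$: since all correct processes broadcast (line~\ref{line:echo_reducing_broadcast}) and none abandons, every correct process eventually delivers some value from $\mathcal{RB}$ (line~\ref{line:echo_reducing_deliver}) and hence broadcasts an \textsc{init} message (line~\ref{line:broadcast_init_1_short_vb}). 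Then I would invoke the \emph{reduction} property of $\mathcal{RB}$: the set $D$ of values (including $\bot_{rd}$) delivered by correct processes has size $O(1)$ — in particular, since there are $\geq n-t \geq 2t+1$ correct processes and only a constant number of delivered values, some value $v^\star \in D$ is delivered by at least $\lceil (2t+1)/|D| \rceil$ correct processes.

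The main case analysis is then: either (a) some value $v \in \mathsf{Value} \cup \{\bot_{rd}\}$ is delivered from $\mathcal{RB}$ by at least $t+1$ correct processes, or (b) no value is delivered by $t+1$ correct processes. In case (a), every correct process eventually receives $\geq t+1$ $\langle\textsc{init}, v\rangle$ messages, so the rule at line~\ref{line:echo_receive_plurality} fires at every correct process (unless it has already echoed $v$), and every correct process broadcasts $\langle\textsc{echo}, v\rangle$ at line~\ref{line:echo_broadcast_2}; hence every correct process eventually receives $\geq n-t \geq 2t+1$ such \textsc{echo} messages and the rule at line~\ref{line:echo_receive_quorum} fires. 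In case (b), I argue that the rule at line~\ref{line:echo_most_frequent} eventually fires at every correct process: once a correct process $p_i$ has received \textsc{init} messages from all $\geq n-t$ correct processes, $|\mathsf{total\_init}| \geq n-t \geq 2t+1$, while $|\mathsf{init}(\mathsf{most\_frequent})| \leq t + b$ where $b \leq t$ is the number of Byzantine \textsc{init} messages $p_i$ has processed — here I use that at most $t$ correct processes deliver any single value (the negation of case (a)) plus the one-\textsc{init}-per-process rule. A short arithmetic check gives $|\mathsf{total\_init}| - |\mathsf{init}(\mathsf{most\_frequent})| \geq (2t+1+b) - (t+b) = t+1$, so line~\ref{line:echo_most_frequent} fires and every correct process broadcasts $\langle\textsc{echo}, \bot_{rd}\rangle$; again every correct process collects $\geq 2t+1$ such messages and line~\ref{line:echo_receive_quorum} fires.

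I expect the main obstacle to be the bookkeeping in case (b): carefully bounding $|\mathsf{init}(\mathsf{most\_frequent})|$ from above requires combining (i) the case assumption that no value is delivered from $\mathcal{RB}$ by more than $t$ correct processes, (ii) the ``only one \textsc{init} per process'' rule so that Byzantine processes contribute at most one \textsc{init} each to a fixed value, and (iii) the fact that the bound must hold \emph{simultaneously} with $|\mathsf{total\_init}| \geq n-t$, which is why I wait until all correct \textsc{init} messages have arrived. One subtlety to handle is that the choice between cases (a) and (b) is made over the values correct processes \emph{deliver from $\mathcal{RB}$}, not over the \textsc{init} messages a particular process receives, so I should phrase the case split globally and then derive the per-process receipt conclusions from reliable delivery of messages between correct processes. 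Once $\mathsf{completed}$ is shown to fire at an arbitrary correct process, the theorem follows, since the argument did not depend on the identity of the process.

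\begin{proof}
Assume all correct processes invoke $\mathsf{broadcast}(\cdot)$ and no correct process abandons \shortvbt. By the termination property of $\mathcal{RB}$, every correct process eventually delivers some value from $\mathcal{RB}$ (line~\ref{line:echo_reducing_deliver}) and broadcasts a corresponding \textsc{init} message (line~\ref{line:broadcast_init_1_short_vb}). By the reduction property of $\mathcal{RB}$, the set of values delivered by correct processes is of size $O(1)$.

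We distinguish two cases.

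\smallskip
\noindent\emph{Case 1: some value $v \in \mathsf{Value} \cup \{\bot_{rd}\}$ is delivered from $\mathcal{RB}$ by at least $t+1$ correct processes.}
Then every correct process eventually receives $\langle \textsc{init}, v \rangle$ from at least $t+1$ processes, so the rule at line~\ref{line:echo_receive_plurality} activates at every correct process (unless it has already set $\mathit{echo}_i[v] \gets \mathit{true}$ and broadcast $\langle \textsc{echo}, v \rangle$). In either case, every correct process broadcasts $\langle \textsc{echo}, v \rangle$. Hence every correct process eventually receives $\langle \textsc{echo}, v \rangle$ from at least $n - t \geq 2t+1$ processes, so the rule at line~\ref{line:echo_receive_quorum} activates and every correct process triggers $\mathsf{completed}$ (line~\ref{line:echo_trigger_completed}).

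\smallskip
\noindent\emph{Case 2: no value is delivered from $\mathcal{RB}$ by more than $t$ correct processes.}
Fix any correct process $p_i$ and consider a time after which $p_i$ has received an \textsc{init} message from every correct process (such a time exists since all correct processes eventually broadcast \textsc{init} and the network is reliable). At that point $|\mathsf{total\_init}| \geq n - t \geq 2t+1$. Let $b \leq t$ be the number of Byzantine processes whose \textsc{init} message $p_i$ has processed; by the rule that only one \textsc{init} message is processed per process, together with the case assumption, for every value $v'$ we have $|\mathsf{init}(v')| \leq t + b$, and in particular $|\mathsf{init}(\mathsf{most\_frequent})| \leq t + b$. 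Also $|\mathsf{total\_init}| \geq 2t + 1 + b$, since $p_i$ has processed \textsc{init} from all $\geq n-t \geq 2t+1$ correct processes plus $b$ Byzantine ones. Therefore
\begin{equation*}
    |\mathsf{total\_init}| - |\mathsf{init}(\mathsf{most\_frequent})| \geq (2t + 1 + b) - (t + b) = t + 1,
\end{equation*}
so the rule at line~\ref{line:echo_most_frequent} activates at $p_i$ and $p_i$ broadcasts $\langle \textsc{echo}, \bot_{rd} \rangle$ (unless it has already done so). As $p_i$ was arbitrary, every correct process broadcasts $\langle \textsc{echo}, \bot_{rd} \rangle$, so every correct process eventually receives $\langle \textsc{echo}, \bot_{rd} \rangle$ from at least $n - t \geq 2t+1$ processes, the rule at line~\ref{line:echo_receive_quorum} activates, and every correct process triggers $\mathsf{completed}$ (line~\ref{line:echo_trigger_completed}).

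\smallskip
In both cases, every correct process eventually receives a $\mathsf{completed}$ indication, which establishes termination.
\end{proof}
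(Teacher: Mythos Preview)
Your proof is correct and follows essentially the same approach as the paper: invoke termination of $\mathcal{RB}$, then split on whether some value is delivered from $\mathcal{RB}$ by at least $t+1$ correct processes, and in the second case use the same $|\mathsf{total\_init}| - |\mathsf{init}(\mathsf{most\_frequent})| \geq (2t+1+b) - (t+b) = t+1$ counting argument (the paper uses $f$ where you use $b$). The only superfluous step is your invocation of the reduction property of $\mathcal{RB}$, which plays no role in the termination argument.
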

\begin{proof}
Assuming that all correct processes propose and no correct process ever abandons \shortvbt, all correct processes eventually deliver a value from $\mathcal{RB}$ (due to $\mathcal{RB}$'s termination property).
At this point, we separate two possibilities:
\begin{compactitem}
    \item Let there exist a value $v \in \mathsf{Value} \cup \{\bot_{rd}\}$ such that at least $t + 1$ correct processes deliver $v$ from $\mathcal{RB}$.
    In this case, all correct processes eventually broadcast an \textsc{echo} message for $v$ (line~\ref{line:echo_broadcast_2}), which means that all correct processes eventually receive $2t + 1$ \textsc{echo} messages for $v$ and complete \shortvbt.
    
    \item Otherwise, every correct process $p_i$ eventually sends an \textsc{echo} message for $\bot_{rd}$ (line~\ref{line:echo_broadcast_3}).
    Indeed, consider the point in time at which $p_i$ receives \textsc{init} messages from all correct processes.
    Then, $|\mathsf{total\_init}| \geq 2t + 1 + f$, where $f$ is the number of faulty processes $p_i$ has heard from.
    Given that no value delivered from $\mathcal{RB}$ by at least $t + 1$ correct processes exists, $|\mathsf{init}(\mathsf{most\_frequent})| \leq t + f$.
    As $|\mathsf{total\_init}| - |\mathsf{init}(\mathsf{most\_frequent})| \geq 2t + 1 + f - t - f \geq t + 1$, the rule at line~\ref{line:echo_most_frequent} activates and $p_i$ broadcasts an \textsc{echo} message for $\bot_{rd}$ (line~\ref{line:echo_broadcast_3}).
    Therefore, all correct processes eventually receive $n -t \geq 2t + 1$ \textsc{echo} messages for $\bot_{rd}$ (line~\ref{line:echo_receive_quorum}) and complete \shortvbt (line~\ref{line:echo_trigger_completed}).
\end{compactitem}
Termination is satisfied as it holds in both possible cases.
\end{proof}

Finally, we prove totality.

\begin{theorem} [Totality]\label{thm:echo-basic-totality}
\shortvbt (\Cref{algorithm:echo_basic}) satisfies totality.
Concretely, if a correct process receives a $\mathsf{completed}$ indication at time $\tau$, then every correct process validates a value by time $\max(\tau, \text{GST}) + \delta$.
\end{theorem}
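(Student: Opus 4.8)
The plan is to trace back the $\mathsf{completed}$ indication to a large set of \textsc{echo} messages and then use the amplification guaranteed by the $t+1$ threshold. Concretely, suppose a correct process $p_i$ receives a $\mathsf{completed}$ indication at time $\tau$ (line~\ref{line:echo_trigger_completed}). The only rule that triggers this is the one at line~\ref{line:echo_receive_quorum}, so there exists some $v \in \mathsf{Value} \cup \{\bot_{rd}\}$ such that $p_i$ has received $\langle \textsc{echo}, v \rangle$ from $2t+1$ distinct processes by time $\tau$. Since at most $t$ of these are faulty, at least $t+1$ correct processes have broadcast $\langle \textsc{echo}, v \rangle$ by time $\tau$.

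Next I would invoke the network model. Each of these $\langle \textsc{echo}, v \rangle$ messages is sent by a correct process via \textbf{broadcast} (to all processes) no later than time $\tau$; by the reliability of the channels and the $\delta$ bound after GST, every correct process receives all $t+1$ of them by time $\max(\tau, \text{GST}) + \delta$. At that point, for process $v$, the predicate $|\mathsf{echo}(v)| \geq t+1$ holds at every correct process $p_j$, so the rule at line~\ref{line:echo_receive_plurality_2} is enabled at $p_j$ by time $\max(\tau, \text{GST}) + \delta$. Crucially, this rule carries no precondition (the comment explicitly states it ``can be triggered anytime''), so it fires regardless of whether $p_j$ has broadcast its own value, has abandoned \shortvbt, or has already received a $\mathsf{completed}$ indication. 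Hence $p_j$ triggers $\mathsf{validate}(v')$ for some $v'$ (namely $v' = v$ if $v \neq \bot_{rd}$, and $v' = \mathsf{def}(p_j)$ otherwise, per line~\ref{line:echo_validate_broadcast}) by time $\max(\tau, \text{GST}) + \delta$, which is exactly the claimed deadline.

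One minor technical wrinkle to address is that $p_j$ may already have triggered $\mathsf{validate}(\cdot)$ earlier (from a different value $v''$); this does not affect the statement, since the specification only requires that \emph{some} value be validated by the deadline, and multiple validations are explicitly permitted. I expect no serious obstacle here: the argument is a standard quorum-intersection plus message-delay bound, and the whole difficulty is just being careful that the triggering rule at line~\ref{line:echo_receive_plurality_2} is unconditional, so no assumption about the recipients' participation status is needed. From this theorem (together with the earlier strong validity, safety, integrity, and termination theorems), \shortvbt satisfies all properties of validation broadcast, with the $\mathsf{completed}$-to-$\mathsf{validate}$ latency being $\delta$ rather than the $2\delta$ allowed by the specification.
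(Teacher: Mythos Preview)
Your proposal is correct and takes essentially the same approach as the paper: trace the $\mathsf{completed}$ indication back to $2t+1$ \textsc{echo} messages, observe that at least $t+1$ come from correct processes, and use the $\delta$ delivery bound after GST to conclude every correct process reaches the $t+1$ threshold at line~\ref{line:echo_receive_plurality_2} by $\max(\tau,\text{GST})+\delta$. Your version is more detailed (you explicitly argue that the validation rule is unconditional and handle the $\bot_{rd}$ case and multiple-validation wrinkle), but the underlying argument is identical to the paper's.
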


\begin{proof}
    Let $p_i$ be a correct process that receives a \textsf{completed} indication at time $\tau$. 
    Then, $p_i$ must have received $2t+1$ matching \textsc{echo} messages for some value $v \in \mathsf{Value} \cup \{\bot_{rd}\}$ by time $\tau$.
    At least $t+1$ of those messages are sent by correct processes.
    These messages are received by all correct processes by time $\max(\tau, \text{GST}) + \delta$.
    Therefore, every correct process validates $v$ by $\max(\tau, \text{GST}) + \delta$.
\end{proof}



\smallskip
\noindent \textbf{Proof of complexity.}
Next, we prove that any correct process sends $O(nL)$ bits in \shortvbt.

\begin{theorem} [Exchanged bits]
Any correct process sends $O(nL)$ bits in \shortvbt.
\end{theorem}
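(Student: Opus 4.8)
The plan is to bound the per-process bit complexity by accounting for every message a correct process sends in \shortvbt, which is straightforward since the algorithm only sends two kinds of messages on top of the reducing broadcast subroutine. First I would recall from \Cref{section:existing_primitives} that the reducing broadcast instance $\mathcal{RB}$ incurs $O(nL)$ per-process bit complexity, since it is run on $L$-sized values and each correct process sends $O(n)$ messages of $O(L)$ bits each. This accounts for the contribution of line~\ref{line:echo_reducing_broadcast}.

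Next I would bound the cost of the \textsc{init} and \textsc{echo} messages. For \textsc{init}: each correct process broadcasts an \textsc{init} message only at line~\ref{line:broadcast_init_1_short_vb}, and only once (it delivers from $\mathcal{RB}$ at most once, and only one \textsc{init} message is processed per process anyway), so that is $n$ messages of $O(L)$ bits, i.e., $O(nL)$ bits. For \textsc{echo}: a correct process broadcasts an \textsc{echo} message for a value $v$ at most once (guarded by $\mathit{echo}_i[v]$), and by the reduction property of $\mathcal{RB}$ only $O(1)$ distinct non-$\bot_{rd}$ values are ever delivered by correct processes — combined with the $t+1$-threshold rule at line~\ref{line:echo_receive_plurality}, any $v$ for which a correct process sends an \textsc{echo} message must have been delivered from $\mathcal{RB}$ by a correct process (or be $\bot_{rd}$). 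Hence each correct process sends \textsc{echo} messages for only $O(1)$ distinct values, giving $O(1) \cdot n$ messages of $O(L)$ bits, i.e., $O(nL)$ bits. Summing the three contributions yields $O(nL)$ bits per process, which completes the proof.

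\begin{proof}
Let $p_i$ be any correct process.
Process $p_i$ sends messages in \shortvbt only (1) within the $\mathcal{RB}$ instance of reducing broadcast (line~\ref{line:echo_reducing_broadcast}), (2) via \textsc{init} messages (line~\ref{line:broadcast_init_1_short_vb}), and (3) via \textsc{echo} messages (line~\ref{line:echo_broadcast_2} or line~\ref{line:echo_broadcast_3}).

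For (1), the reducing broadcast implementation of~\cite{DBLP:journals/acta/MostefaouiR17} incurs $O(nL)$ per-process bit complexity for $L$-sized values (see \Cref{section:existing_primitives}).

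For (2), process $p_i$ broadcasts an \textsc{init} message only at line~\ref{line:broadcast_init_1_short_vb}, upon delivering a value from $\mathcal{RB}$.
As $p_i$ delivers at most once from $\mathcal{RB}$, it broadcasts at most one \textsc{init} message, i.e., it sends $O(n)$ \textsc{init} messages of $O(L)$ bits each, for a total of $O(nL)$ bits.

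For (3), process $p_i$ broadcasts at most one \textsc{echo} message for each value $v \in \mathsf{Value} \cup \{\bot_{rd}\}$ due to the $\mathit{echo}_i[\cdot]$ variable.
Moreover, $p_i$ broadcasts an \textsc{echo} message for a value $v \neq \bot_{rd}$ only if the rule at line~\ref{line:echo_receive_plurality} activates, which requires $p_i$ to have received an $\langle \textsc{init}, v \rangle$ message from at least $t + 1$ processes, hence from a correct process.
As any correct process broadcasts an \textsc{init} message only for a value it has delivered from $\mathcal{RB}$, and the reduction property of $\mathcal{RB}$ guarantees that correct processes collectively deliver only $O(1)$ distinct values, process $p_i$ broadcasts \textsc{echo} messages for only $O(1)$ distinct values (including $\bot_{rd}$).
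Therefore, $p_i$ sends $O(1) \cdot n = O(n)$ \textsc{echo} messages of $O(L)$ bits each, for a total of $O(nL)$ bits.

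Summing the three contributions, process $p_i$ sends $O(nL) + O(nL) + O(nL) = O(nL)$ bits in \shortvbt.
\end{proof}
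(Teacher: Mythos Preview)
Your proof is correct and follows essentially the same approach as the paper: both decompose the cost into (i) the $O(nL)$ per-process bits of the reducing broadcast $\mathcal{RB}$, (ii) $O(1)$ \textsc{init} broadcasts of $O(L)$ bits, and (iii) $O(1)$ \textsc{echo} broadcasts of $O(L)$ bits, with the $O(1)$ bound on the number of distinct \textsc{echo} values coming from the reduction property of $\mathcal{RB}$. Your version is slightly more detailed in tracing the \textsc{echo} bound back through the $t+1$ threshold to a correct sender's \textsc{init} and then to $\mathcal{RB}$'s reduction property, but the argument is the same.
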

\begin{proof}
Each correct process broadcasts $O(1)$ \textsc{echo} messages (ensured by the reduction property of $\mathcal{RB}$), each with $O(L)$ bits.
Therefore, any correct process sends $O(nL)$ bits via \textsc{echo} messages.
Moreover, each correct process broadcasts only $O(1)$ \textsc{init} messages of size $O(L)$ bits. Thus, any correct process sends $O(nL)$ bits via \textsc{init} messages.
As $O(nL)$ bits are sent per-process while executing the reducing broadcast primitive (see~\cite{DBLP:journals/acta/MostefaouiR17}), any correct process sends $O(nL) + O(nL) + O(nL) = O(nL)$ bits in \shortvbt.
\end{proof}

Finally, we prove the number of asynchronous rounds \shortvbt requires.

\begin{theorem} [Asynchronous rounds]
Assuming all correct processes broadcast via \shortvbt and no correct process abandons \shortvbt, \shortvbt takes $4$ asynchronous rounds before all correct processes receive a $\mathsf{completed}$ indication.
\end{theorem}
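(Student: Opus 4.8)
The plan is to trace the critical path through the algorithm when all correct processes broadcast and none abandons, accounting for each asynchronous round as defined in \Cref{section:preliminaries}. First I would observe that the reducing broadcast primitive $\mathcal{RB}$ (from~\cite{DBLP:journals/acta/MostefaouiR17}, see \Cref{section:existing_primitives}) terminates in $2$ asynchronous rounds: since all correct processes broadcast and none abandons, the termination property of $\mathcal{RB}$ guarantees that every correct process delivers a value from $\mathcal{RB}$ by (scaled) time $2$. Upon delivery (line~\ref{line:echo_reducing_deliver}), each correct process immediately broadcasts an \textsc{init} message (line~\ref{line:broadcast_init_1_short_vb}); this constitutes the third asynchronous round, so every correct process receives the \textsc{init} messages of all correct processes by time $3$.

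Next I would argue that by time $3$ every correct process broadcasts some \textsc{echo} message. As in the termination proof, two cases arise: either some value $v \in \mathsf{Value} \cup \{\bot_{rd}\}$ is delivered from $\mathcal{RB}$ by at least $t+1$ correct processes---in which case every correct process eventually sees $t+1$ $\langle \textsc{init}, v\rangle$ messages and the rule at line~\ref{line:echo_receive_plurality} fires---or no such value exists, in which case the counting argument ($|\mathsf{total\_init}| - |\mathsf{init}(\mathsf{most\_frequent})| \geq 2t+1+f - (t+f) \geq t+1$) shows the rule at line~\ref{line:echo_most_frequent} fires. In both cases the \textsc{echo} is broadcast upon receiving \textsc{init} messages from all correct processes, i.e., by time $3$. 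This \textsc{echo} round is the fourth asynchronous round, so every correct process receives at least $n-t \geq 2t+1$ matching \textsc{echo} messages for the same value by time $4$, triggering the rule at line~\ref{line:echo_receive_quorum} and hence the $\mathsf{completed}$ indication (line~\ref{line:echo_trigger_completed}).

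The main subtlety---and the step I would be most careful about---is ensuring that the \textsc{echo} messages sent by correct processes are for a \emph{common} value, so that the $2t+1$ threshold at line~\ref{line:echo_receive_quorum} is actually met rather than being split across several values. This is exactly what the case analysis above secures: when no value is delivered $t+1$ times from $\mathcal{RB}$, every correct process echoes $\bot_{rd}$ (a single common value); otherwise the reduction property of $\mathcal{RB}$ bounds the number of delivered values, and a pigeonhole-style argument shows some value is echoed by $\geq 2t+1$ correct processes. Once this is in place, summing the rounds gives $2$ (for $\mathcal{RB}$) $+\,1$ (\textsc{init}) $+\,1$ (\textsc{echo}) $=4$ asynchronous rounds, which completes the proof.
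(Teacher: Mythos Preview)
Your proof is correct and follows essentially the same approach as the paper: two rounds for $\mathcal{RB}$, one for \textsc{init}, one for \textsc{echo}, with the same two-case analysis on whether some value is delivered from $\mathcal{RB}$ by at least $t+1$ correct processes. One minor remark: in the first case your ``subtlety'' paragraph invokes the reduction property and a pigeonhole argument, but this is unnecessary---as you already observed in your main paragraph, once $t+1$ correct processes send $\langle\textsc{init},v\rangle$, \emph{every} correct process echoes $v$ via line~\ref{line:echo_receive_plurality}, so all $n-t\geq 2t+1$ correct processes echo the common value $v$ directly.
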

\begin{proof}
First, we underline that $\mathcal{RB}$ requires $2$ asynchronous rounds until all correct processes deliver a value (see~\cite{DBLP:journals/acta/MostefaouiR17}).
Next, we analyze two scenarios:
\begin{compactitem}
    \item There exists a value $v \in \mathsf{Value} \cup \{\bot_{rd}\}$ such that at least $t + 1$ correct processes deliver $v$ from $\mathcal{RB}$.
    Hence, at the end of the second asynchronous round, these correct processes broadcast an \textsc{init} message for $v$.
    Therefore, every correct process broadcasts an \textsc{echo} message for $v$ at the end of the third asynchronous round.
    Thus, at the end of the fourth asynchronous round, every correct process receives $n - t \geq 2t + 1$ \textsc{echo} messages for $v$, and completes \shortvbt.

    \item There does not exist a value $v \in \mathsf{Value} \cup \{\bot_{rd}\}$ such that at least $t + 1$ correct processes deliver $v$ from $\mathcal{RB}$.
    In this case, every correct process sends an \textsc{echo} message for $\bot_{rd}$ at the end of the third asynchronous round.
    Therefore, all correct processes receive $n - t \geq 2t + 1$ \textsc{echo} messages for $\bot_{rd}$ at the end of the fourth asynchronous round, which concludes this case.
\end{compactitem}
The proof is concluded as it takes $4$ rounds before all correct processes complete \shortvbt.
\end{proof}

\subsection{\longvbt: Pseudocode \& Proof of Correctness and Complexity}
\label{subsection:validation_bcast_long_3_4}

This subsection presents \longvbt (\Cref{algorithm:validation_broadcast_long_3_4}), our hash-based implementation of the validation broadcast primitive.
\longvbt tolerates up to $t < n / 3$ Byzantine failures and exchanges $O(nL + n^2 \log(n) \kappa)$ bits.
\longvbt internally relies on rebuilding broadcast (see \Cref{section:rebuilding_broadccast}) and \shortvbt (see \Cref{subsection:validation_broadcast_basic_implementation}).

\begin{algorithm}
\caption{\longvbt: Pseudocode (for process $p_i$)}
\label{algorithm:validation_broadcast_long_3_4}
\footnotesize
\begin{algorithmic} [1] 
\State \textbf{Uses:}
\State \hskip2em Rebuilding broadcast, \textbf{instance} $\mathcal{RB}$ \BlueComment{see \Cref{section:rebuilding_broadccast}}
\State \hskip2em \shortvbt validation broadcast with $\mathsf{def}(p_i) = \bot$, \textbf{instance} $\mathcal{VB}$ \BlueComment{hash values are broadcast; see \Cref{subsection:validation_broadcast_basic_implementation}}


\medskip
\State \textbf{upon} $\mathsf{broadcast}(v \in \mathsf{Value})$: \label{line:validation_broadcast_long_propose}

\State \hskip2em \textbf{invoke} $\mathcal{RB}.\mathsf{broadcast}(v)$ \label{line:validation_broadcast_long_broadcast}

\medskip
\State \textbf{upon} $\mathcal{RB}.\mathsf{deliver}(v' \in \mathsf{Value} \cup \{\bot_{\mathit{reb}}\})$: \label{line:validation_broadcast_long_deliver_rb}
\State \hskip2em \textbf{if} $v' \neq \bot_{\mathit{reb}}$:
\State \hskip4em \textbf{invoke} $\mathcal{VB}.\mathsf{broadcast}\big( \mathsf{hash}(v') \big)$ \label{line:validation_broadcast_long_propose_aw}
\State \hskip2em \textbf{else:}
\State \hskip4em \textbf{invoke} $\mathcal{VB}.\mathsf{broadcast}(\bot)$ \label{line:validation_broadcast_long_broadcast_bot_vb}

\medskip
\State \textbf{upon} $\mathcal{VB}.\mathsf{completed}$: \label{line:short_vb_completed}
\State \hskip2em \textbf{trigger} $\mathsf{completed}$ \label{line:long_vb_trigger_completed}

\medskip
\State \textbf{upon} $\mathcal{VB}.\mathsf{validate}(\mathcal{H} \in \mathsf{Hash\_Value} \cup \{\bot\})$: \label{line:echos_from_short_vb}
\State \hskip2em \textbf{if} $\mathcal{H} = \bot$:
\State \hskip4em \textbf{trigger} $\mathsf{validate}\big( \mathsf{def}(p_i) \big)$ \label{line:long_vb_trigger_validate_own}
\State \hskip2em \textbf{else:}
\State \hskip4em \textbf{wait for} $\mathcal{RB}.\mathsf{rebuild}(v')$ such that $\mathsf{hash}(v') = \mathcal{H}$ \label{line:long_vb_wait_rebuild}
\State \hskip4em \textbf{trigger} $\mathsf{validate}(v')$ \label{line:long_vb_trigger_validate_theirs}

\end{algorithmic} 
\end{algorithm}

\smallskip
\noindent \textbf{Pseudocode description.}
Let us consider any correct process $p_i$.
When $p_i$ broadcasts its value (line~\ref{line:validation_broadcast_long_propose}), it disseminates that value using the $\mathcal{RB}$ instance of rebuilding broadcast (line~\ref{line:validation_broadcast_long_broadcast}).
If $p_i$ delivers a non-$\bot_{\mathit{reb}}$ value from $\mathcal{RB}$, it broadcasts the hash of the value via the $\mathcal{VB}$ instance of \shortvbt (line~\ref{line:validation_broadcast_long_propose_aw}).
If $p_i$ delivers $\bot_{\mathit{reb}}$ from $\mathcal{RB}$, it broadcasts $\bot$ via $\mathcal{VB}$ (line~\ref{line:validation_broadcast_long_broadcast_bot_vb}).
When $p_i$ completes $\mathcal{VB}$ (line~\ref{line:short_vb_completed}), it completes \longvbt (line~\ref{line:long_vb_trigger_completed}).
Finally, when $p_i$ validates $\mathcal{H} \in \mathsf{Hash\_Value} \cup \{\bot\}$ from $\mathcal{VB}$ (line~\ref{line:echos_from_short_vb}), it executes the following steps:
\begin{compactitem}
    \item If $\mathcal{H} = \bot$, $p_i$ validates $\mathsf{def}(p_i)$ from \longvbt (line~\ref{line:long_vb_trigger_validate_own}).

    \item Otherwise, $p_i$ waits until it rebuilds a value $v'$ from $\mathcal{RB}$ such that $\mathsf{hash}(v') = \mathcal{H}$ (line~\ref{line:long_vb_wait_rebuild}).
    Then, it validates $v'$ (line~\ref{line:long_vb_trigger_validate_theirs}).
\end{compactitem}

\smallskip
\noindent \textbf{Proof of correctness}
We start by proving strong validity.

\begin{theorem}[Strong validity]
\longvbt (\Cref{algorithm:validation_broadcast_long_3_4}) satisfies strong validity.
\end{theorem}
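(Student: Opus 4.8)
The plan is to derive strong validity of \longvbt directly from the strong validity properties of its two submodules, $\mathcal{RB}$ (an instance of \longreb rebuilding broadcast) and $\mathcal{VB}$ (an instance of \shortvbt validation broadcast), combined with the collision resistance of $\mathsf{hash}(\cdot)$. First I would assume that all correct processes that broadcast to \longvbt do so with the same value $v$ (line~\ref{line:validation_broadcast_long_propose}). Each such process then invokes $\mathcal{RB}.\mathsf{broadcast}(v)$ on line~\ref{line:validation_broadcast_long_broadcast}, so all correct processes that broadcast via $\mathcal{RB}$ use the same value $v$. By the strong validity property of $\mathcal{RB}$, no correct process delivers $\bot_{\mathit{reb}}$ from $\mathcal{RB}$, and by the safety property of $\mathcal{RB}$, any value a correct process does deliver from $\mathcal{RB}$ must have been broadcast by a correct process --- hence equals $v$. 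Therefore every correct process that reaches line~\ref{line:validation_broadcast_long_deliver_rb} delivers exactly $v$ and takes the ``then'' branch, invoking $\mathcal{VB}.\mathsf{broadcast}(\mathsf{hash}(v))$ on line~\ref{line:validation_broadcast_long_propose_aw}; in particular, no correct process ever broadcasts $\bot$ via $\mathcal{VB}$.

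Next I would apply strong validity of $\mathcal{VB}$: since all correct processes that broadcast to $\mathcal{VB}$ do so with the same value $\mathsf{hash}(v)$, no correct process validates any value $\mathcal{H} \neq \mathsf{hash}(v)$ from $\mathcal{VB}$ (line~\ref{line:echos_from_short_vb}). In particular, no correct process validates $\bot$ from $\mathcal{VB}$, so the branch on line~\ref{line:long_vb_trigger_validate_own} (which would validate $\mathsf{def}(p_i)$) is never taken by a correct process. Hence any correct process that validates a value from \longvbt does so on line~\ref{line:long_vb_trigger_validate_theirs}, after waiting (line~\ref{line:long_vb_wait_rebuild}) for $\mathcal{RB}.\mathsf{rebuild}(v')$ with $\mathsf{hash}(v') = \mathcal{H} = \mathsf{hash}(v)$. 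By collision resistance of $\mathsf{hash}(\cdot)$, $\mathsf{hash}(v') = \mathsf{hash}(v)$ implies $v' = v$, so the only value a correct process can validate from \longvbt is $v$. This establishes that no correct process validates any value $v'' \neq v$, which is exactly strong validity.

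The structure of this argument is entirely analogous to the proof of strong validity for \longgct (Theorem in \Cref{subsection:graded_consensus_bounded}), with $\mathcal{VB}$ playing the role that $\mathcal{AW}$ plays there, so I expect no real obstacle: it is a routine chaining of submodule guarantees. The one point deserving a sentence of care is the appeal to collision resistance --- strictly, the conclusion $v' = v$ holds except with negligible probability against a computationally bounded adversary, consistent with the ``Hash'' cryptographic assumption listed for \longvbt in \Cref{tab:validation-broadcast-summary}; I would phrase it exactly as in the corresponding \longgct proofs to keep the treatment uniform. I would write the proof as a short paragraph mirroring \Cref{theorem:long_reb_strong_validity_3}-style prose rather than as a displayed derivation, to avoid any blank-line-in-math pitfalls.

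\begin{proof}
Suppose all correct processes that broadcast to \longvbt do so with the same value $v \in \mathsf{Value}$. Then every correct process that reaches line~\ref{line:validation_broadcast_long_broadcast} invokes $\mathcal{RB}.\mathsf{broadcast}(v)$, so all correct processes that broadcast via $\mathcal{RB}$ do so with $v$. By the strong validity property of $\mathcal{RB}$, no correct process delivers $\bot_{\mathit{reb}}$ from $\mathcal{RB}$; by the safety property of $\mathcal{RB}$, any value delivered from $\mathcal{RB}$ by a correct process was previously broadcast via $\mathcal{RB}$ by a correct process, hence equals $v$. Thus every correct process that delivers from $\mathcal{RB}$ (line~\ref{line:validation_broadcast_long_deliver_rb}) delivers exactly $v$ and therefore invokes $\mathcal{VB}.\mathsf{broadcast}\big(\mathsf{hash}(v)\big)$ (line~\ref{line:validation_broadcast_long_propose_aw}); in particular, no correct process broadcasts $\bot$ via $\mathcal{VB}$.

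Consequently, all correct processes that broadcast via $\mathcal{VB}$ do so with the same value $\mathsf{hash}(v)$. By the strong validity property of $\mathcal{VB}$, no correct process validates any value $\mathcal{H} \neq \mathsf{hash}(v)$ from $\mathcal{VB}$ (line~\ref{line:echos_from_short_vb}); in particular, no correct process validates $\bot$ from $\mathcal{VB}$, so no correct process executes line~\ref{line:long_vb_trigger_validate_own}. Hence any correct process that validates a value from \longvbt does so on line~\ref{line:long_vb_trigger_validate_theirs}, after rebuilding a value $v'$ from $\mathcal{RB}$ with $\mathsf{hash}(v') = \mathcal{H} = \mathsf{hash}(v)$. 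Since $\mathsf{hash}(\cdot)$ is collision-resistant, $v' = v$. Therefore no correct process validates any value $v'' \neq v$, which establishes strong validity.
\end{proof}
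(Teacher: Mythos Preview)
Your proof is correct and follows essentially the same approach as the paper's own proof: both chain the strong validity and safety of $\mathcal{RB}$, then the strong validity of $\mathcal{VB}$, and finally invoke collision resistance of $\mathsf{hash}(\cdot)$ to conclude that any validated value must equal $v$. Your version is slightly more explicit about line references and the $\bot$ branch, but the argument is identical in substance.
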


\begin{proof}
Suppose all correct processes that broadcast do so with the same value $v$.
By the strong validity and safety properties of $\mathcal{RB}$, any correct process that delivers from $\mathcal{RB}$ delivers $v$.
Thus, any correct process that broadcasts via $\mathcal{VB}$ broadcasts $\mathcal{H} = \mathsf{hash}(v) \neq \bot$.
By the strong validity of the $\mathcal{VB}$ instance, all correct processes that validate from $\mathcal{VB}$ do validate $\mathcal{H} \neq \bot$.
Finally, by the collision-resistance of the $\mathsf{hash}(\cdot)$ function, no correct process can rebuild some value $v' \ne v$ from $\mathcal{RB}$ such that $\mathsf{hash}(v') = \mathsf{hash}(v)$, thus concluding the proof.   
\end{proof}

Next, we prove the safety property.

\begin{theorem}[Safety]
\longvbt (\Cref{algorithm:validation_broadcast_long_3_4}) satisfies safety.
\end{theorem}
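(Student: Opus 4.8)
The plan is to trace any validated value back through the two submodules $\mathcal{RB}$ (rebuilding broadcast) and $\mathcal{VB}$ (the \shortvbt instance), mirroring the structure of the proofs of the analogous properties for \longgct and \shortvbt. First I would consider an arbitrary correct process $p_i$ that validates a value $v'$ in \longvbt. Validation can only be triggered at line~\ref{line:long_vb_trigger_validate_own} or line~\ref{line:long_vb_trigger_validate_theirs}, so I would split into two cases according to which line fired. In both cases, the trigger was preceded by a $\mathcal{VB}.\mathsf{validate}(\mathcal{H})$ indication at line~\ref{line:echos_from_short_vb}, for some $\mathcal{H} \in \mathsf{Hash\_Value} \cup \{\bot\}$.

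In the first case, $\mathcal{H} = \bot$, and then $p_i$ validates $\mathsf{def}(p_i)$ at line~\ref{line:long_vb_trigger_validate_own}, so $v' = \mathsf{def}(p_i)$ and safety holds trivially. In the second case, $\mathcal{H} \neq \bot$ and $p_i$ has validated $\mathcal{H}$ from the $\mathcal{VB}$ instance. By the safety property of \shortvbt (instantiated with $\mathsf{def}(p_i) = \bot$ here), either $\mathcal{H} = \bot$ — excluded in this case — or a correct process has previously broadcast $\mathcal{H}$ to $\mathcal{VB}$, i.e., some correct process executed line~\ref{line:validation_broadcast_long_propose_aw}, which means it delivered some value $w \neq \bot_{\mathit{reb}}$ from $\mathcal{RB}$ with $\mathsf{hash}(w) = \mathcal{H}$. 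By the safety property of $\mathcal{RB}$, $w$ was previously broadcast to $\mathcal{RB}$ by a correct process, hence (line~\ref{line:validation_broadcast_long_broadcast}) broadcast to \longvbt by a correct process. Finally, $p_i$ only triggers $\mathsf{validate}(v')$ at line~\ref{line:long_vb_trigger_validate_theirs} after rebuilding $v'$ from $\mathcal{RB}$ with $\mathsf{hash}(v') = \mathcal{H} = \mathsf{hash}(w)$; by the collision resistance of the $\mathsf{hash}(\cdot)$ function, $v' = w$, so $v'$ has been broadcast to \longvbt by a correct process, as required.

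The main obstacle — really the only subtlety — is the collision-resistance step linking $v'$ and $w$: I must be careful that the value $w$ whose hash was broadcast to $\mathcal{VB}$ is genuinely the same value $p_i$ rebuilds, rather than merely hashing to the same digest. This is handled exactly as in the \longgct proof, by appealing to collision resistance of $\mathsf{hash}(\cdot)$, and I would phrase it the same way here. Everything else is a direct unwinding of the safety guarantees of the two submodules, so the proof should be short:

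\begin{proof}
Let $p_i$ be a correct process that validates a value $v'$.
Then $p_i$ triggered $\mathsf{validate}(v')$ either at line~\ref{line:long_vb_trigger_validate_own} or at line~\ref{line:long_vb_trigger_validate_theirs}, in both cases after receiving a $\mathcal{VB}.\mathsf{validate}(\mathcal{H})$ indication (line~\ref{line:echos_from_short_vb}) for some $\mathcal{H} \in \mathsf{Hash\_Value} \cup \{\bot\}$.

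If $\mathcal{H} = \bot$, then $v' = \mathsf{def}(p_i)$ (line~\ref{line:long_vb_trigger_validate_own}), and safety holds.

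Otherwise, $\mathcal{H} \neq \bot$.
By the safety property of the $\mathcal{VB}$ instance of \shortvbt (used with $\mathsf{def}(p_i) = \bot$), since $\mathcal{H} \neq \bot$, a correct process has previously broadcast $\mathcal{H}$ via $\mathcal{VB}$ (line~\ref{line:validation_broadcast_long_propose_aw}).
Hence, that correct process has delivered some value $w \neq \bot_{\mathit{reb}}$ from $\mathcal{RB}$ (line~\ref{line:validation_broadcast_long_deliver_rb}) with $\mathsf{hash}(w) = \mathcal{H}$.
By the safety property of $\mathcal{RB}$, $w$ has been broadcast to $\mathcal{RB}$ by a correct process (line~\ref{line:validation_broadcast_long_broadcast}), i.e., a correct process has broadcast $w$ to \longvbt (line~\ref{line:validation_broadcast_long_propose}).
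Finally, $p_i$ triggered $\mathsf{validate}(v')$ at line~\ref{line:long_vb_trigger_validate_theirs} only after rebuilding $v'$ from $\mathcal{RB}$ with $\mathsf{hash}(v') = \mathcal{H} = \mathsf{hash}(w)$.
By the collision resistance of the $\mathsf{hash}(\cdot)$ function, $v' = w$, so $v'$ has been broadcast to \longvbt by a correct process, which concludes the proof.
\end{proof}
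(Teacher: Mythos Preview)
Your proof is correct and follows essentially the same approach as the paper: a case split on whether $\mathcal{H} = \bot$, then in the non-$\bot$ case chaining the safety property of $\mathcal{VB}$, collision resistance of $\mathsf{hash}(\cdot)$, and the safety property of $\mathcal{RB}$ to trace $v'$ back to a correct broadcaster. Your version is slightly more explicit in naming the intermediate value $w$ delivered from $\mathcal{RB}$ before invoking collision resistance, but the structure and reasoning are identical.
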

\begin{proof}
Let $p_i$ be a correct process that validates a value $v$. We distinguish two cases:
\begin{compactitem}
    \item Process $p_i$ validates $v$ at line~\ref{line:long_vb_trigger_validate_own}.
    In this case, $v = \mathsf{def}(p_i)$.
    
    \item Process $p_i$ validates $v$ at line~\ref{line:long_vb_trigger_validate_theirs}.
    As $\mathcal{H} \neq \bot$ and the default value for $p_i$ in $\mathcal{VB}$ is $\bot$, the safety property of $\mathcal{VB}$ guarantees that a correct process has broadcast $\mathcal{H} = \mathsf{hash}(v) \neq \bot$ via $\mathcal{VB}$.
    Therefore, due to $\mathsf{hash}(\cdot)$'s collision resistance, $v$ has been delivered from $\mathcal{RB}$ by a correct process.
    Due to the safety property of $\mathcal{RB}$, a correct process has broadcast $v$ using $\mathcal{RB}$, which implies that a correct process has broadcast $v$ using \longvbt.
\end{compactitem}
The theorem holds as its statement is true in both possible cases.
\end{proof}

The following theorem proves integrity.

\begin{theorem}[Integrity]
\longvbt (\Cref{algorithm:validation_broadcast_long_3_4}) satisfies integrity.
\end{theorem}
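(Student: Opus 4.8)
The statement to prove is the integrity property of \longvbt: no correct process receives a $\mathsf{completed}$ indication from \longvbt unless it has previously broadcast via \longvbt.

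The plan is to trace backward through the pseudocode of \Cref{algorithm:validation_broadcast_long_3_4}. The only place where a correct process $p_i$ triggers a $\mathsf{completed}$ indication from \longvbt is at line~\ref{line:long_vb_trigger_completed}, and this happens exactly when $p_i$ receives a $\mathcal{VB}.\mathsf{completed}$ indication (line~\ref{line:short_vb_completed}). So it suffices to show that $p_i$ cannot receive a $\mathcal{VB}.\mathsf{completed}$ indication unless it has previously invoked $\mathsf{broadcast}(\cdot)$ on \longvbt. By the integrity property of the \shortvbt instance $\mathcal{VB}$ (proved as \Cref{thm:echo-basic-totality}'s neighbouring integrity theorem for \shortvbt), $p_i$ receives a $\mathcal{VB}.\mathsf{completed}$ indication only if $p_i$ has previously invoked $\mathcal{VB}.\mathsf{broadcast}(\cdot)$. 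In turn, $p_i$ invokes $\mathcal{VB}.\mathsf{broadcast}(\cdot)$ only at line~\ref{line:validation_broadcast_long_propose_aw} or line~\ref{line:validation_broadcast_long_broadcast_bot_vb}, i.e., only upon receiving a $\mathcal{RB}.\mathsf{deliver}(\cdot)$ indication (line~\ref{line:validation_broadcast_long_deliver_rb}). Finally, by the integrity property of the rebuilding broadcast instance $\mathcal{RB}$ (stated in \Cref{subsection:rebuilding_broadcast_problem_definition} and established for \longreb in \Cref{subsection:longreb_proof}), $p_i$ delivers from $\mathcal{RB}$ only if it has previously broadcast via $\mathcal{RB}$ — which happens only at line~\ref{line:validation_broadcast_long_broadcast}, upon an invocation of $\mathsf{broadcast}(\cdot)$ on \longvbt (line~\ref{line:validation_broadcast_long_propose}). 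Chaining these implications yields the claim.

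Concretely, I would write: let $p_i$ be any correct process that receives a $\mathsf{completed}$ indication from \longvbt. By inspection of \Cref{algorithm:validation_broadcast_long_3_4}, $p_i$ triggers this indication at line~\ref{line:long_vb_trigger_completed}, hence $p_i$ has previously received a $\mathcal{VB}.\mathsf{completed}$ indication. The integrity property of $\mathcal{VB}$ (an instance of \shortvbt) then ensures $p_i$ has previously invoked $\mathcal{VB}.\mathsf{broadcast}(\cdot)$, which only occurs at line~\ref{line:validation_broadcast_long_propose_aw} or line~\ref{line:validation_broadcast_long_broadcast_bot_vb}. Both require $p_i$ to have received a $\mathcal{RB}.\mathsf{deliver}(\cdot)$ indication; by the integrity property of $\mathcal{RB}$ (an instance of \longreb), $p_i$ has previously invoked $\mathcal{RB}.\mathsf{broadcast}(\cdot)$, which happens only at line~\ref{line:validation_broadcast_long_broadcast} upon an invocation of $\mathsf{broadcast}(\cdot)$ on \longvbt. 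Therefore $p_i$ has previously broadcast, as required.

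There is no real obstacle here — this is a routine static trace of the pseudocode, and the only subtlety is making sure each invoked submodule's integrity property is cited correctly (in particular, that \shortvbt's integrity was indeed proved with the same quantifier structure; it was, since the $\mathcal{VB}.\mathsf{completed}$ check at line~\ref{line:echo_receive_quorum} of \Cref{algorithm:echo_basic} is gated on $p_i$ having previously broadcast). I would keep the proof to two or three sentences. The only thing to be careful about is not to conflate $\mathsf{broadcast}$ on \longvbt with $\mathcal{RB}.\mathsf{broadcast}$ or $\mathcal{VB}.\mathsf{broadcast}$; the argument must explicitly walk through all three layers.

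\begin{proof}
Let $p_i$ be any correct process that receives a $\mathsf{completed}$ indication from \longvbt.
By \Cref{algorithm:validation_broadcast_long_3_4}, $p_i$ triggers this indication at line~\ref{line:long_vb_trigger_completed}, so $p_i$ has previously received a $\mathcal{VB}.\mathsf{completed}$ indication (line~\ref{line:short_vb_completed}).
By the integrity property of $\mathcal{VB}$ (an instance of \shortvbt), $p_i$ has previously invoked $\mathcal{VB}.\mathsf{broadcast}(\cdot)$, which occurs only at line~\ref{line:validation_broadcast_long_propose_aw} or line~\ref{line:validation_broadcast_long_broadcast_bot_vb}.
Both invocations are performed only upon $p_i$ receiving a $\mathcal{RB}.\mathsf{deliver}(\cdot)$ indication (line~\ref{line:validation_broadcast_long_deliver_rb}).
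By the integrity property of $\mathcal{RB}$ (an instance of \longreb), $p_i$ has previously invoked $\mathcal{RB}.\mathsf{broadcast}(\cdot)$, which occurs only at line~\ref{line:validation_broadcast_long_broadcast}, i.e., upon an invocation of $\mathsf{broadcast}(\cdot)$ on \longvbt (line~\ref{line:validation_broadcast_long_propose}).
Hence $p_i$ has previously broadcast via \longvbt, which proves the integrity property.
\end{proof}
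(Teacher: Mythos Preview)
Your proof is correct and follows essentially the same approach as the paper: both trace backward through the two layers, using the integrity of $\mathcal{VB}$ to conclude a prior $\mathcal{VB}.\mathsf{broadcast}$, hence a prior $\mathcal{RB}.\mathsf{deliver}$, and then the integrity of $\mathcal{RB}$ to conclude a prior $\mathcal{RB}.\mathsf{broadcast}$, hence a prior \longvbt $\mathsf{broadcast}$. Your version is slightly more detailed with explicit line references, but the argument is the same.
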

\begin{proof}
Let $p_i$ be a correct process that receives a \textsf{completed} indication. 
By the integrity of $\mathcal{VB}$, $p_i$ must have broadcast via $\mathcal{VB}$, 
and thus $p_i$ must have delivered from $\mathcal{RB}$. 
By the integrity of $\mathcal{RB}$, $p_i$ must have broadcast via $\mathcal{RB}$, and thus must have broadcast via \longvbt.
\end{proof}

Next, we prove the termination property.

\begin{theorem}[Termination]
\longvbt (\Cref{algorithm:validation_broadcast_long_3_4}) satisfies termination.
\end{theorem}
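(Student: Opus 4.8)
The plan is to derive termination of \longvbt by chaining the termination guarantees of its two sub-primitives, the rebuilding broadcast instance $\mathcal{RB}$ and the \shortvbt validation broadcast instance $\mathcal{VB}$. Assume the hypothesis of the theorem: all correct processes broadcast via \longvbt and no correct process abandons \longvbt. Since an $\mathsf{abandon}$ request to \longvbt is forwarded to $\mathcal{RB}$ and $\mathcal{VB}$ (analogously to Task 2 of \block), under this hypothesis no correct process abandons $\mathcal{RB}$ or $\mathcal{VB}$ either.

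First I would observe that, since every correct process invokes $\mathsf{broadcast}(\cdot)$ on \longvbt (line~\ref{line:validation_broadcast_long_propose}), every correct process invokes $\mathcal{RB}.\mathsf{broadcast}(\cdot)$ (line~\ref{line:validation_broadcast_long_broadcast}). Together with the fact that no correct process abandons $\mathcal{RB}$, the termination property of rebuilding broadcast (\Cref{theorem:termination_longreb_3t}) guarantees that every correct process eventually receives a $\mathcal{RB}.\mathsf{deliver}(v')$ indication (line~\ref{line:validation_broadcast_long_deliver_rb}). Consequently, every correct process eventually invokes $\mathcal{VB}.\mathsf{broadcast}(\cdot)$: with $\mathsf{hash}(v')$ when $v' \neq \bot_{\mathit{reb}}$ (line~\ref{line:validation_broadcast_long_propose_aw}), and with $\bot$ otherwise (line~\ref{line:validation_broadcast_long_broadcast_bot_vb}).

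Next, since every correct process broadcasts via $\mathcal{VB}$ and no correct process abandons $\mathcal{VB}$, the termination property of \shortvbt ensures that every correct process eventually receives a $\mathcal{VB}.\mathsf{completed}$ indication (line~\ref{line:short_vb_completed}), upon which it triggers $\mathsf{completed}$ from \longvbt (line~\ref{line:long_vb_trigger_completed}). This establishes termination.

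There is essentially no hard obstacle here; the argument is a straightforward two-step reduction. The one point I would be careful to state is that the path to $\mathsf{completed}$ does not pass through any $\mathcal{RB}.\mathsf{rebuild}(\cdot)$ event: the potentially blocking ``wait for $\mathcal{RB}.\mathsf{rebuild}$'' at line~\ref{line:long_vb_wait_rebuild} lies only on the $\mathsf{validate}(\cdot)$ path (line~\ref{line:long_vb_trigger_validate_theirs}), so it is irrelevant to this theorem, and the chain above is not stalled by it.
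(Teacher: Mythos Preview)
Your proposal is correct and follows the same approach as the paper: termination of \longvbt is derived from the termination properties of $\mathcal{RB}$ and $\mathcal{VB}$ in sequence. The paper's proof is a single sentence to this effect, whereas you spell out the chain explicitly (and add the helpful remark that the $\mathsf{rebuild}$ wait lies only on the $\mathsf{validate}$ path), but the underlying argument is identical.
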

\begin{proof}
The termination property of \longvbt follows from the termination property of $\mathcal{RB}$ and $\mathcal{VB}$.
\end{proof}

Finally, we prove the totality property.

\begin{theorem}[Totality]\label{thm:validation_broadcast_long_3_4-totality}
\longvbt (\Cref{algorithm:validation_broadcast_long_3_4}) satisfies totality.
\end{theorem}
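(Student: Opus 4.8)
The plan is to derive the totality of \longvbt directly from the totality of its two sub-modules, $\mathcal{RB}$ (rebuilding broadcast) and $\mathcal{VB}$ (the \shortvbt instance), together with the rebuilding validity property of $\mathcal{RB}$. The key observation is that a $\mathsf{completed}$ indication from \longvbt is triggered (line~\ref{line:long_vb_trigger_completed}) only as a consequence of a $\mathsf{completed}$ indication from $\mathcal{VB}$ (line~\ref{line:short_vb_completed}), so I can pull the timing guarantee through $\mathcal{VB}$'s totality and then through $\mathcal{RB}$'s rebuilding validity.

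\textbf{The main steps, in order.} First I would fix a correct process $p_i$ that receives a $\mathsf{completed}$ indication from \longvbt at some time $\tau$; by inspection of the pseudocode this happens at line~\ref{line:long_vb_trigger_completed}, which is triggered by a $\mathcal{VB}.\mathsf{completed}$ indication (line~\ref{line:short_vb_completed}) received by $p_i$ at time $\tau$ (local steps take zero time). Second, I would apply the totality property of $\mathcal{VB}$ (\shortvbt, see \Cref{thm:echo-basic-totality}): every correct process validates some value $\mathcal{H}_j \in \mathsf{Hash\_Value} \cup \{\bot\}$ from $\mathcal{VB}$ by time $\max(\tau, \text{GST}) + \delta$. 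Third, I would split on the value validated by an arbitrary correct process $p_j$: if $\mathcal{H}_j = \bot$, then $p_j$ immediately triggers $\mathsf{validate}(\mathsf{def}(p_j))$ at line~\ref{line:long_vb_trigger_validate_own}, so $p_j$ validates by time $\max(\tau, \text{GST}) + \delta \le \max(\tau, \text{GST}) + 2\delta$. If $\mathcal{H}_j \neq \bot$, I need to show $p_j$ eventually rebuilds a value $v'$ with $\mathsf{hash}(v') = \mathcal{H}_j$ (line~\ref{line:long_vb_wait_rebuild}), and bound the time. For this I would argue: by the safety property of $\mathcal{VB}$ (whose default value here is $\bot$) a correct process broadcast $\mathcal{H}_j \neq \bot$ via $\mathcal{VB}$, hence (by collision resistance of $\mathsf{hash}$) delivered a value $v'$ with $\mathsf{hash}(v') = \mathcal{H}_j$ from $\mathcal{RB}$ at line~\ref{line:validation_broadcast_long_deliver_rb} — and this delivery happened no later than the moment that correct process broadcast via $\mathcal{VB}$, which is before any correct process can validate $\mathcal{H}_j$ from $\mathcal{VB}$, hence before $\max(\tau, \text{GST}) + \delta$. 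Applying the rebuilding validity property of $\mathcal{RB}$ to that $\mathcal{RB}$-delivery (which occurred at some time $\tau' \le \max(\tau, \text{GST}) + \delta$), every correct process rebuilds $v'$ by time $\max(\tau', \text{GST}) + \delta \le \max(\tau, \text{GST}) + 2\delta$, so $p_j$ triggers $\mathsf{validate}(v')$ at line~\ref{line:long_vb_trigger_validate_theirs} by time $\max(\tau, \text{GST}) + 2\delta$. Combining both cases yields the claim.

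\textbf{The expected obstacle.} The only delicate point is the timing bookkeeping in the $\mathcal{H}_j \neq \bot$ case: I must be careful that the $\mathcal{RB}$-delivery whose rebuilding validity I invoke really did occur by $\max(\tau, \text{GST}) + \delta$, rather than merely ``eventually.'' The clean way to see this is that $\mathcal{VB}$'s totality is about validation of $\mathcal{H}_j$ by all correct processes by $\max(\tau,\text{GST})+\delta$, and \emph{some} correct process validated $\mathcal{H}_j$; by $\mathcal{VB}$'s safety a correct process must have $\mathcal{VB}$-broadcast $\mathcal{H}_j$, and that broadcast was itself triggered (line~\ref{line:validation_broadcast_long_propose_aw}) by an $\mathcal{RB}$-delivery of the corresponding $v'$. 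Since the $\mathcal{RB}$-delivery causally precedes the $\mathcal{VB}$-broadcast, which causally precedes the first $\mathcal{VB}$-validation of $\mathcal{H}_j$, which occurs at or before $\max(\tau,\text{GST})+\delta$, the $\mathcal{RB}$-delivery time $\tau'$ satisfies $\tau' \le \max(\tau,\text{GST})+\delta$; then $\max(\tau',\text{GST})+\delta \le \max(\tau,\text{GST})+2\delta$ as required. I would present this as a short chain of inequalities, concluding: \emph{every correct process validates a value by time $\max(\tau, \text{GST}) + 2\delta$}, which is exactly the totality statement, completing the proof.
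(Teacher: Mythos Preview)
Your proposal is correct and follows essentially the same approach as the paper's proof: reduce to $\mathcal{VB}$'s totality to get validation from $\mathcal{VB}$ by $\max(\tau,\text{GST})+\delta$, then case-split on whether the validated value is $\bot$ (immediate) or a hash (use $\mathcal{VB}$'s safety to find a causally earlier $\mathcal{RB}$-delivery and apply rebuilding validity to bound the rebuild time by $\max(\tau,\text{GST})+2\delta$). Your timing-chain argument in the non-$\bot$ case is in fact slightly more explicit than the paper's, but the structure and the key inequalities are identical.
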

\begin{proof}
Suppose some correct process receives a \textsf{completed} indication at time $\tau$, then it must have received a \textsf{completed} indication from $\mathcal{VB}$ at $\tau$.
%
%
By the totality property of $\mathcal{VB}$ (\Cref{thm:echo-basic-totality}), all correct processes validate some $\mathcal{H} \in \mathsf{Hash\_Value} \cup \{\bot\}$ from $\mathcal{VB}$ by time $\max(\tau, \text{GST}) + \delta$. 
%
%
Thus, the rule at line~\ref{line:echos_from_short_vb} activates for $\mathcal{H}$ at every correct process $p_i$ by time $\max(\tau, \text{GST}) + \delta$.

If $\mathcal{H} = \bot$, then $p_i$ validates a value by time $\max(\tau, \text{GST}) + \delta$.
Otherwise, the safety property of $\mathcal{VB}$ proves that some correct process $p_j$ has broadcast $\mathcal{H}$ via $\mathcal{VB}$ by time $\tau' \leq \max(\tau, \text{GST}) + \delta$.
Thus, $p_j$ has delivered a value $v'$ from $\mathcal{RB}$ by time $\tau'$ such that $\mathsf{hash}(v') = \mathcal{H}$.
Due to the rebuilding validity of $\mathcal{RB}$, $p_i$ rebuilds and validates $v'$ by time $\max(\tau', \text{GST}) + \delta \leq \max(\tau, \text{GST}) + 2\delta$. 
\end{proof}

\smallskip
\noindent \textbf{Proof of complexity.}
We prove that any correct process sends $O(L + n \log(n) \kappa)$ bits in \longvbt.

\begin{theorem} [Exchanged bits]
Any correct process sends $O(L + n \log(n) \kappa)$ bits in \longvbt.
\end{theorem}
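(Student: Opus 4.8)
The plan is to bound the per-process bit cost of \longvbt by summing the per-process costs of its two sub-protocols, $\mathcal{RB}$ and $\mathcal{VB}$, plus the (negligible) local bookkeeping in \Cref{algorithm:validation_broadcast_long_3_4}. Observe first that \longvbt sends no messages directly: every \textbf{broadcast} or \textbf{send} in the pseudocode is an invocation of $\mathcal{RB}$ or $\mathcal{VB}$, so the entire communication cost is the sum of the costs of these two closed-box primitives. A correct process invokes $\mathcal{RB}.\mathsf{broadcast}(\cdot)$ at most once (line~\ref{line:validation_broadcast_long_broadcast}), so by the exchanged-bits bound for \longreb (\Cref{subsection:longreb_proof}, the ``Exchanged bits'' theorem), it sends $O(L + n\log(n)\kappa)$ bits inside $\mathcal{RB}$. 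Similarly, a correct process invokes $\mathcal{VB}.\mathsf{broadcast}(\cdot)$ at most once (line~\ref{line:validation_broadcast_long_propose_aw} or line~\ref{line:validation_broadcast_long_broadcast_bot_vb}), and the value broadcast is a hash (or $\bot$), i.e., an $O(\kappa)$-bit object; plugging $L' = O(\kappa)$ into the per-process bound for \shortvbt ($O(nL')$ bits, \Cref{subsection:validation_broadcast_basic_implementation}) gives $O(n\kappa)$ bits inside $\mathcal{VB}$.

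The remaining step is to add these two contributions: $O(L + n\log(n)\kappa) + O(n\kappa) = O(L + n\log(n)\kappa)$, since $n\log(n)\kappa$ dominates $n\kappa$. I would also remark that the $\mathsf{completed}$ and $\mathsf{validate}$ handlers (lines~\ref{line:short_vb_completed}--\ref{line:long_vb_trigger_validate_theirs}) are purely local: they trigger indications and wait for a $\mathcal{RB}.\mathsf{rebuild}(\cdot)$ event, but send nothing. Hence the total per-process cost is exactly the sum above, and the theorem follows.

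I do not expect a genuine obstacle here — the statement is a routine accounting argument over two black-box sub-protocols whose complexities are already established earlier in the paper. The only mild subtlety worth stating explicitly is that the value fed into $\mathcal{VB}$ is a hash of size $O(\kappa)$ rather than an arbitrary $L$-bit value, so one must be careful to instantiate the \shortvbt bound with the correct value size; getting this wrong would spuriously produce an $O(n^2L)$-per-process term. With that observation in place, the proof is a two-line calculation, which I would write as:

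\begin{proof}
\longvbt (\Cref{algorithm:validation_broadcast_long_3_4}) sends messages only through its $\mathcal{RB}$ and $\mathcal{VB}$ instances; the handlers at lines~\ref{line:short_vb_completed}--\ref{line:long_vb_trigger_validate_theirs} perform only local steps. A correct process invokes $\mathcal{RB}.\mathsf{broadcast}(\cdot)$ at most once, hence sends $O(L + n\log(n)\kappa)$ bits in $\mathcal{RB}$ (see \Cref{section:rebuilding_broadccast}). A correct process invokes $\mathcal{VB}.\mathsf{broadcast}(\cdot)$ at most once, and with a value of size $O(\kappa)$ bits (a hash value or $\bot$), hence sends $O(n\kappa)$ bits in $\mathcal{VB}$ (see \Cref{subsection:validation_broadcast_basic_implementation}). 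Summing, a correct process sends $O(L + n\log(n)\kappa) + O(n\kappa) = O(L + n\log(n)\kappa)$ bits in \longvbt.
\end{proof}
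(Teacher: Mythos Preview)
Your proposal is correct and essentially identical to the paper's proof: both observe that all communication happens inside $\mathcal{RB}$ and $\mathcal{VB}$, bound the per-process cost of $\mathcal{RB}$ by $O(L + n\log(n)\kappa)$ and of $\mathcal{VB}$ by $O(n\kappa)$ (since the broadcast value is a $\kappa$-bit hash), and sum. Your version is slightly more explicit about the local-only handlers and the single-invocation observation, but the argument is the same.
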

\begin{proof}
Correct processes only exchange bits as part of the $\mathcal{RB}$ and $\mathcal{VB}$ instances. Correct processes $\mathcal{RB}$-broadcast at most an $L$-sized value, and $\mathcal{VB}$-broadcast at most a $\kappa$-sized value (where $\kappa$ is the length of a hash). Thus, any correct process sends $O(L + n\log(n)\kappa)$ + $O(n\kappa) = O(L + n\log(n)\kappa)$ bits.
\end{proof}

Next, we prove that \longvbt requires $6$ asynchronous rounds.

\begin{theorem} [Asynchronous rounds]
Assuming all correct processes broadcast via \longvbt and no correct process abandons \longvbt, \longvbt takes $6$ asynchronous rounds before all correct processes receive a $\mathsf{completed}$ indication.
\end{theorem}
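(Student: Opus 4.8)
The statement to prove is that \longvbt incurs $6$ asynchronous rounds before all correct processes receive a $\mathsf{completed}$ indication, assuming all correct processes broadcast and no correct process abandons the protocol. The plan is to trace the two sequential sub-protocols used by \longvbt (\Cref{algorithm:validation_broadcast_long_3_4}): first the rebuilding broadcast instance $\mathcal{RB}$, then the \shortvbt validation broadcast instance $\mathcal{VB}$, and add their round costs. From \Cref{theorem:async_rounds_rebuilding_3}, $\mathcal{RB}$ (instantiated as \longreb) takes $2$ asynchronous rounds before all correct processes deliver a value. At that point, by the deliver handler (line~\ref{line:validation_broadcast_long_deliver_rb}), every correct process invokes $\mathcal{VB}.\mathsf{broadcast}(\cdot)$, so all correct processes begin participating in $\mathcal{VB}$. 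Since \shortvbt takes $4$ asynchronous rounds before all correct processes receive a $\mathsf{completed}$ indication (as established in \Cref{subsection:validation_broadcast_basic_implementation}), and the $\mathsf{completed}$ indication from $\mathcal{VB}$ immediately triggers $\mathsf{completed}$ in \longvbt (line~\ref{line:short_vb_completed}--\ref{line:long_vb_trigger_completed}), the total is $2 + 4 = 6$ asynchronous rounds.

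The key steps, in order, would be: (i) invoke the termination hypothesis — all correct processes broadcast via \longvbt and none abandons — and observe that this implies all correct processes broadcast via $\mathcal{RB}$ (line~\ref{line:validation_broadcast_long_broadcast}) and none abandons $\mathcal{RB}$; (ii) apply \Cref{theorem:async_rounds_rebuilding_3} to conclude all correct processes deliver from $\mathcal{RB}$ within $2$ asynchronous rounds; (iii) note that each such delivery triggers a $\mathcal{VB}.\mathsf{broadcast}$ call (either with a hash value or with $\bot$), so within $2$ rounds all correct processes have started $\mathcal{VB}$ and, by the no-abandon hypothesis, none abandons $\mathcal{VB}$; (iv) apply the asynchronous-round bound for \shortvbt to conclude all correct processes receive a $\mathsf{completed}$ indication from $\mathcal{VB}$ within $4$ further asynchronous rounds; and (v) observe that the \longvbt $\mathsf{completed}$ indication is triggered in the same round as the $\mathcal{VB}$ $\mathsf{completed}$ indication (line~\ref{line:long_vb_trigger_completed}), giving $6$ rounds in total.

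I expect there to be essentially no obstacle here — this is a routine composition argument, entirely analogous to the asynchronous-round proofs already given for \longgct and \longgcf, where sequentially composed sub-protocols have their round counts summed. The only point requiring a modicum of care is the bookkeeping at the seam between $\mathcal{RB}$ and $\mathcal{VB}$: one must check that the act of starting $\mathcal{VB}$ does not itself consume an extra round. It does not, since the deliver handler is a local step (local steps take zero time, per \Cref{section:preliminaries}) and the $\mathcal{VB}.\mathsf{broadcast}$ invocation happens within the round in which $\mathcal{RB}$ delivers. Hence the $6$ rounds of \longvbt decompose cleanly as the $2$ rounds of \longreb followed by the $4$ rounds of \shortvbt.

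\begin{proof}
Assume all correct processes broadcast via \longvbt and no correct process abandons \longvbt. Then every correct process invokes $\mathcal{RB}.\mathsf{broadcast}(\cdot)$ (line~\ref{line:validation_broadcast_long_broadcast}) and no correct process abandons $\mathcal{RB}$. By \Cref{theorem:async_rounds_rebuilding_3}, $\mathcal{RB}$ takes $2$ asynchronous rounds before all correct processes deliver a value from $\mathcal{RB}$. Upon delivery (line~\ref{line:validation_broadcast_long_deliver_rb}), each correct process performs a local step invoking $\mathcal{VB}.\mathsf{broadcast}(\cdot)$ (line~\ref{line:validation_broadcast_long_propose_aw} or line~\ref{line:validation_broadcast_long_broadcast_bot_vb}); since local steps take zero time, all correct processes have invoked $\mathcal{VB}.\mathsf{broadcast}(\cdot)$ within $2$ asynchronous rounds, and no correct process abandons $\mathcal{VB}$. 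By the asynchronous-round bound for \shortvbt, $\mathcal{VB}$ takes $4$ additional asynchronous rounds before every correct process receives a $\mathsf{completed}$ indication from $\mathcal{VB}$. Each such indication triggers a $\mathsf{completed}$ indication in \longvbt (line~\ref{line:short_vb_completed}--\ref{line:long_vb_trigger_completed}) via a local step. Therefore, all correct processes receive a $\mathsf{completed}$ indication within $2 + 4 = 6$ asynchronous rounds.
\end{proof}
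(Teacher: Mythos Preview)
Your proof is correct and follows exactly the same approach as the paper: summing the $2$ asynchronous rounds of $\mathcal{RB}$ (\longreb) with the $4$ asynchronous rounds of $\mathcal{VB}$ (\shortvbt). The paper's proof is a one-line version of what you wrote, simply citing the two sub-protocol round counts and concluding.
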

\begin{proof}
As $\mathcal{RB}$ requires $2$ asynchronous rounds (see \Cref{section:rebuilding_broadccast}) and $\mathcal{VB}$ requires $4$ asynchronous rounds (see \Cref{subsection:validation_broadcast_basic_implementation}), the theorem holds.
\end{proof}

\subsection{\longvbf: Pseudocode \& Proof of Correctness and Complexity}
\label{subsection:its_validation_broadcast}

In this subsection, we introduce \longvbf (\Cref{algorithm:its_validation_broadcast}), our implementation of validation broadcast that exchanges $O\big( nL + n^2 \log(n) \big)$ bits while relying on no cryptographic primitives.
\longvbf tolerates up to $t < n / 5$ Byzantine processes, and it follows the similar approach as \longgcf.
Specifically, \longvbf relies on (1) the \reduceacool algorithm (see \Cref{section:acool_reduction}), and (2) the AW graded consensus algorithm (see \Cref{section:existing_primitives}).

\begin{algorithm}
\caption{\longvbf: Pseudocode (for process $p_i$)}
\label{algorithm:its_validation_broadcast}
\footnotesize
\begin{algorithmic} [1] 
\State \textbf{Uses:}
\State \hskip2em \reduceacool, \textbf{instance} $\mathcal{ACOOL}$ \BlueComment{see \Cref{section:acool_reduction}}
\State \hskip2em AW graded consensus \cite{AttiyaWelch23}, \textbf{instance} $\mathcal{AW}$ \BlueComment{see \Cref{section:existing_primitives}}

\medskip
\State \textbf{upon} $\mathsf{broadcast}(v \in \mathsf{Value})$: \label{line:longvbf_broadcast}
\State \hskip2em let $(\mathit{success}, \mathit{reduction\_output}_i) \gets \mathcal{ACOOL}(v)$ \label{line:longvbf_input_to_acool}
\State \hskip2em let $[m_1, m_2, ..., m_n] \gets \mathsf{RSEnc}(\mathit{reduction\_output}_i, n, t + 1)$
\State \hskip2em \textbf{broadcast} $\langle \textsc{symbol}, m_i \rangle$  \label{line:longvbf_broadcast_symbol}
\State \hskip2em \textbf{if} $\mathit{success} = 1$:
\State \hskip4em \textbf{invoke} $\mathcal{AW}.\mathsf{propose}(\mathsf{HAPPY})$ \label{line:longvbf_propose_happy_to_aw}
\State \hskip2em \textbf{else:}
\State \hskip4em \textbf{invoke} $\mathcal{AW}.\mathsf{propose}(\mathsf{SAD})$ \label{line:longvbf_propose_sad_to_aw}

\medskip
\State \textbf{upon} $\mathcal{AW}.\mathsf{output}(v' \in \{\mathsf{HAPPY}, \mathsf{SAD}\}, g' \in \{0, 1\})$: \label{line:longvbf_decide_from_aw}

\State \hskip2em \textbf{broadcast} $\langle v' \rangle$ \label{line:longvbf_broadcast_happy} \label{line:longvbf_broadcast_sad}


\medskip
\State  \textcolor{blue}{\(\triangleright\) completion rules (triggered only if previously broadcast)}
\State  \textbf{upon} receiving $4t + 1$ \textsc{symbol} messages and $2t + 1$ $\langle \mathsf{HAPPY} \rangle$ messages: \label{line:rule_complete_happy}
\State \hskip2em \textbf{trigger} $\mathsf{completed}$ \label{line:complete_happy}

\medskip
\State \textbf{upon} receiving $2t + 1$ $\langle \mathsf{SAD} \rangle$ messages: \label{line:rule_complete_sad}
\State \hskip2em \textbf{trigger} $\mathsf{completed}$ \label{line:complete_sad}

\medskip
\State  \textcolor{blue}{\(\triangleright\) validation rules (can be triggered anytime)}

\State  \textbf{upon} receiving $3t + 1$ \textsc{symbol} messages and $t + 1$ $\langle \mathsf{HAPPY} \rangle$ messages: \label{line:rule_validate_happy}
\State \hskip2em \textbf{trigger} $\mathsf{validate}\big( \mathsf{RSDec}(t + 1, t, \text{received symbols}) \big)$  \label{line:validate_happy}

\smallskip
\State \textbf{upon} receiving $t + 1$ $\langle \mathsf{SAD} \rangle$ messages: \label{line:rule_validate_sad}
\State \hskip2em \textbf{trigger} $\mathsf{validate}\big( \mathsf{def}(p_i) \big)$ \label{line:validate_sad}

\end{algorithmic}
\end{algorithm}


\smallskip
\noindent \textbf{Pseudocode description.}
When a correct process $p_i$ broadcasts a value $v$ to \longvbf (line~\ref{line:longvbf_broadcast}), it forwards $v$ to the $\mathcal{ACOOL}$ instance of the \reduceacool algorithm (line~\ref{line:longvbf_input_to_acool}).
Once $p_i$ obtains a pair $(\mathit{success}, \mathit{reduction\_output}_i)$ from $\mathcal{ACOOL}$, $p_i$ broadcast its associated Reed-Solomon symbol for $\mathit{reduction\_output}_i$ in a $\textsc{symbol}$ message to achieve totality later. Then, $p_i$ checks if $\mathit{success} = 1$.
If so, $p_i$ proposes $\mathsf{HAPPY}$ to the $\mathcal{AW}$ instance of the AW graded consensus algorithm (line~\ref{line:longvbf_propose_happy_to_aw}).
Otherwise, $p_i$ proposes $\mathsf{SAD}$ to $\mathcal{AW}$ (line~\ref{line:longvbf_propose_sad_to_aw}).
When $p_i$ decides a pair $(v', g')$ from $\mathcal{AW}$ (line~\ref{line:longvbf_decide_from_aw}), $p_i$ broadcast $v'$, while $g'$ is ignored.
Finally, when $4t+1$ $\textsc{symbol}$ and $2t+1$ $\mathsf{HAPPY}$ messages (line~\ref{line:rule_complete_happy}) or $2t+1$ $\mathsf{SAD}$ messages (line~\ref{line:rule_complete_sad}), the $p_i$ process can trigger $\mathsf{completed}$. With $t$ fewer messages of the type mentioned above (line~\ref{line:rule_validate_happy} or line~\ref{line:rule_validate_sad}), process $p_i$ can trigger $\mathsf{validate}$ even if some correct processes have abandoned \longvbf after a certain completion, thus guaranteeing completeness.

\smallskip
\noindent \textbf{Proof of correctness.}
We start by proving the strong validity property.

\begin{theorem} [Strong validity]
\longvbf (\Cref{algorithm:its_validation_broadcast}) satisfies strong validity.
\end{theorem}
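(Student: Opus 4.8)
The plan is to trace how the unanimity of correct proposals propagates through the three sub-components of \longvbf, exactly mirroring the strong validity proof already given for \longgcf (\Cref{algorithm:its_graded_consensus}). Suppose all correct processes that broadcast to \longvbf do so with the same value $v$. First I would invoke the strong validity property of $\mathcal{ACOOL}$: since every correct process that inputs to $\mathcal{ACOOL}$ inputs $v$ (line~\ref{line:longvbf_input_to_acool}), every correct process that obtains an output pair obtains $(1, v)$. Consequently, every correct process that broadcasts a $\textsc{symbol}$ message (line~\ref{line:longvbf_broadcast_symbol}) does so with its associated Reed-Solomon symbol of $v$, i.e.\ with $m_i$ where $[m_1, \ldots, m_n] = \mathsf{RSEnc}(v, n, t+1)$; and every correct process that proposes to $\mathcal{AW}$ proposes $\mathsf{HAPPY}$ (line~\ref{line:longvbf_propose_happy_to_aw}).

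Next I would use the strong validity property of $\mathcal{AW}$: since all correct processes that propose to $\mathcal{AW}$ propose $\mathsf{HAPPY}$, every correct process that decides from $\mathcal{AW}$ decides $(\mathsf{HAPPY}, 1)$ (line~\ref{line:longvbf_decide_from_aw}), hence every correct process that broadcasts at line~\ref{line:longvbf_broadcast_sad}/\ref{line:longvbf_broadcast_happy} broadcasts $\langle \mathsf{HAPPY} \rangle$. Therefore no correct process ever sends $\langle \mathsf{SAD} \rangle$, so the validation rule at line~\ref{line:rule_validate_sad} can never fire at a correct process (it would require $t+1$ $\langle \mathsf{SAD} \rangle$ messages, but there are at most $t$ Byzantine senders). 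This means the only way a correct process validates is via the rule at line~\ref{line:rule_validate_happy}, upon receiving $3t+1$ $\textsc{symbol}$ messages together with $t+1$ $\langle \mathsf{HAPPY} \rangle$ messages, whereupon it triggers $\mathsf{validate}(\mathsf{RSDec}(t+1, t, \text{received symbols}))$.

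Finally I would argue that this reconstruction necessarily yields $v$. Among the $3t+1$ received $\textsc{symbol}$ messages, at least $2t+1$ come from correct processes, and each such message carries the correct RS symbol of $v$ (from the first paragraph); at most $t$ symbols are corrupted. Since $\mathsf{RSDec}(t+1, t, \cdot)$ corrects up to $t$ errors given at least $(t+1) + 2t = 3t+1$ symbols (the Reed-Solomon bound recalled in \Cref{section:existing_primitives}), the decoding outputs exactly $v$. Hence no correct process validates any value $v' \neq v$, establishing strong validity. The only subtlety worth double-checking — and the one I would expect to be the main obstacle — is confirming that the $3t+1$ threshold on $\textsc{symbol}$ messages is indeed enough to guarantee at least $t+1$ correct (uncorrupted) symbols in the error-correcting sense: with $n \geq 5t+1$ correct processes numbering at least $4t+1$, any $3t+1$-sized received set contains at least $2t+1$ correct symbols and at most $t$ faulty ones, so the Reed-Solomon decoding premise $|T| \geq k + 2r$ with $k = t+1$ and $r = t$ is met, and the argument closes cleanly.

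\begin{proof}
Suppose all correct processes that broadcast to \longvbf do so with the same value $v$. Then every correct process that inputs to $\mathcal{ACOOL}$ inputs $v$ (line~\ref{line:longvbf_input_to_acool}); by the strong validity property of $\mathcal{ACOOL}$, every correct process that obtains an output pair obtains $(1, v)$. Hence every correct process that broadcasts a $\textsc{symbol}$ message (line~\ref{line:longvbf_broadcast_symbol}) broadcasts its RS symbol $m_i$ of $v$, and every correct process that proposes to $\mathcal{AW}$ proposes $\mathsf{HAPPY}$ (line~\ref{line:longvbf_propose_happy_to_aw}).

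By the strong validity property of $\mathcal{AW}$, every correct process that decides from $\mathcal{AW}$ decides $(\mathsf{HAPPY}, 1)$ (line~\ref{line:longvbf_decide_from_aw}), so every correct process that broadcasts at line~\ref{line:longvbf_broadcast_happy} broadcasts $\langle \mathsf{HAPPY} \rangle$. Thus at most $t$ processes (the Byzantine ones) ever send $\langle \mathsf{SAD} \rangle$, so the rule at line~\ref{line:rule_validate_sad} never activates at a correct process. Therefore any correct process that validates does so via the rule at line~\ref{line:rule_validate_happy}, triggering $\mathsf{validate}\big( \mathsf{RSDec}(t+1, t, \text{received symbols}) \big)$ upon receiving $3t+1$ $\textsc{symbol}$ messages and $t+1$ $\langle \mathsf{HAPPY} \rangle$ messages.

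Among the $3t+1$ received $\textsc{symbol}$ messages, at least $2t+1$ are sent by correct processes and carry the correct RS symbol of $v$; at most $t$ are corrupted. Since $3t+1 = (t+1) + 2t$, the decoding $\mathsf{RSDec}(t+1, t, \cdot)$ corrects the up to $t$ erroneous symbols and outputs $v$ (see \Cref{section:existing_primitives}). Hence no correct process validates any value $v' \neq v$, which establishes strong validity.
\end{proof}
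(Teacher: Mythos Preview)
Your proof is correct and follows essentially the same approach as the paper's: both invoke strong validity of $\mathcal{ACOOL}$ to conclude every correct process outputs $(1,v)$ and sends a correct RS symbol of $v$, then invoke strong validity of $\mathcal{AW}$ to rule out the $\mathsf{SAD}$ validation rule, and finally argue that the $3t+1$ received symbols contain at least $2t+1$ correct ones so the decoding yields $v$. Your write-up is in fact slightly more explicit than the paper's about why the Reed-Solomon decoding threshold is met.
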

\begin{proof}
Suppose all correct processes that broadcast do so with the same value $v$.
Thus, all correct processes that input a value to $\mathcal{ACOOL}$ do input $v$ (line~\ref{line:longvbf_input_to_acool}).
The strong validity property of $\mathcal{ACOOL}$ ensures that each correct process $p_i$ that receives an output from $\mathcal{ACOOL}$ receives $(1, v)$.
Therefore, all correct processes that send a \textsc{symbol} message include a correctly-encoded RS symbol (line~\ref{line:longvbf_broadcast_symbol}).
Moreover, all correct processes that propose to $\mathcal{AW}$ do so with $\mathsf{HAPPY}$ (line~\ref{line:longvbf_propose_happy_to_aw}).
The strong validity property of $\mathcal{AW}$ ensures that all correct processes decide $(\mathsf{HAPPY}, 1)$ from $\mathcal{AW}$ (line~\ref{line:longvbf_decide_from_aw}), which implies that no correct process validates any value at line~\ref{line:validate_sad}.
Finally, if a correct process validates a value at line~\ref{line:validate_happy}, that value must be $v$ as it has received at least $2t + 1$ correctly-encoded RS symbols for $v$.
\end{proof}

The next theorem proves that \longvbf satisfies safety.

\begin{theorem} [Safety]
\longvbf (\Cref{algorithm:its_validation_broadcast}) satisfies safety.
\end{theorem}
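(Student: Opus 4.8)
The plan is to follow the two-case structure used for the earlier validation-broadcast implementations, splitting on which of \longvbf's two validation rules fired. If $p_i$ validates $v'$ at line~\ref{line:validate_sad}, then by inspection $v' = \mathsf{def}(p_i)$ and safety holds immediately. The substantive case is when $p_i$ validates $v'$ at line~\ref{line:validate_happy}: here $v'$ is obtained as $\mathsf{RSDec}(t+1, t, \text{received symbols})$ from the $3t+1$ \textsc{symbol} messages that $p_i$ has collected, and $p_i$ has also received $t+1$ $\langle \mathsf{HAPPY} \rangle$ messages (the precondition of the rule at line~\ref{line:rule_validate_happy}). I will show that $v'$ equals a value that some correct process broadcast to \longvbf.

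First I would trace $\mathsf{HAPPY}$ back to its source. Among the $t+1$ senders of $\langle \mathsf{HAPPY} \rangle$, at least one is correct, and that process broadcast $\langle \mathsf{HAPPY} \rangle$ at line~\ref{line:longvbf_broadcast_happy} only after deciding $(\mathsf{HAPPY}, \cdot)$ from $\mathcal{AW}$ (line~\ref{line:longvbf_decide_from_aw}). By the safety property of $\mathcal{AW}$, some correct process $p^\star$ proposed $\mathsf{HAPPY}$ (line~\ref{line:longvbf_propose_happy_to_aw}), which means $p^\star$ obtained a pair $(1, v^\star)$ from $\mathcal{ACOOL}$ for some value $v^\star$ (line~\ref{line:longvbf_input_to_acool}). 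Two properties of \reduceacool then close the gap. By its safety property, a correct process has previously input $v^\star$ to $\mathcal{ACOOL}$, hence some correct process broadcast $v^\star$ via \longvbf. By its agreement property, every correct process that outputs a pair from $\mathcal{ACOOL}$ outputs $(\cdot, v^\star)$; since a correct process sends its \textsc{symbol} message (line~\ref{line:longvbf_broadcast_symbol}) only after obtaining its pair from $\mathcal{ACOOL}$, every \textsc{symbol} message sent by a correct process $p_j$ carries the $j$-th symbol of the (deterministic) codeword $\mathsf{RSEnc}(v^\star, n, t+1)$.

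The final step is the error-correction count. Of the $3t+1$ \textsc{symbol} messages $p_i$ uses, at least $2t+1$ come from correct processes and are therefore correctly-placed symbols of the codeword of $v^\star$, while at most $t$ are adversarial. Since $3t+1 = (t+1) + 2t$, the decoding requirement $|T| \ge k + 2r$ with $k = t+1$ and $r = t$ holds with equality, so $\mathsf{RSDec}(t+1, t, \cdot)$ uniquely recovers $v^\star$, giving $v' = v^\star$; as $v^\star$ was broadcast by a correct process, safety follows. I expect the only points needing care are (i) confirming that \emph{every} correct sender of a \textsc{symbol} message has passed through $\mathcal{ACOOL}$, so that the agreement property of \reduceacool applies to all of them, and (ii) checking that the $\mathsf{RSDec}$ parameters meet the stated decoding guarantee exactly; both hold here, but a careless argument could slip on either.
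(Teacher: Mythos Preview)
Your proof is correct and follows essentially the same approach as the paper: a two-case split on which validation rule fired, with the substantive case tracing $\mathsf{HAPPY}$ back through $\mathcal{AW}$'s safety to a correct process that output $(1, v^\star)$ from $\mathcal{ACOOL}$, then using \reduceacool's safety and agreement to conclude that all correct \textsc{symbol} messages encode $v^\star$ and that $v^\star$ was broadcast by a correct process, so RS decoding recovers it. Your version is slightly more explicit about the $|T| \ge k + 2r$ decoding bound, which the paper leaves implicit.
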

\begin{proof}
Let $p_i$ be any correct process.
We consider the following two cases:
\begin{compactitem}
    \item Let $p_i$ validate a value $v$ at line~\ref{line:validate_sad}.
    In this case, $v = \mathsf{def}(p_i)$.

    \item Let $p_i$ validate a value $v$ at line~\ref{line:validate_happy}.
    In this case, some correct process has decided $(\mathsf{HAPPY}, \cdot)$ from $\mathcal{AW}$ (as $p_i$ has received a $\langle \mathsf{HAPPY} \rangle$ message from $t + 1$ processes).
    Therefore, the safety property of $\mathcal{AW}$ guarantees that a correct process has previously proposed $\mathsf{HAPPY}$ to $\mathcal{AW}$, which means that process has received $(1, v^*)$ from $\mathcal{ACOOL}$.
    The agreement property of $\mathcal{ACOOL}$ ensures that all correct processes that send a \textsc{symbol} message do so with a correctly-encoded RS symbol for $v^*$.
    Moreover, the safety property of $\mathcal{ACOOL}$ ensures that $v^*$ is broadcast via \longvbf by a correct process.
    As $p_i$ receives at least $2t + 1$ correctly-encoded RS symbols before validating $v$, $v = v^*$.
\end{compactitem}
The safety property is ensured as its statement holds in both possible cases.
\end{proof}

Next, we prove integrity.

\begin{theorem} [Integrity]
\longvbf (\Cref{algorithm:its_validation_broadcast}) satisfies integrity.
\end{theorem}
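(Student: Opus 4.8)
The plan is to argue directly from the structure of \Cref{algorithm:its_validation_broadcast}, exactly as was done for the integrity of \shortvbt and \longvbt. First I would enumerate the only places in \longvbf's pseudocode where a correct process triggers a $\mathsf{completed}$ indication: namely line~\ref{line:complete_happy} (inside the rule at line~\ref{line:rule_complete_happy}) and line~\ref{line:complete_sad} (inside the rule at line~\ref{line:rule_complete_sad}). Both of these rules are explicitly designated as ``completion rules (triggered only if previously broadcast)''. Hence, for a correct process $p_i$ to trigger $\mathsf{completed}$, $p_i$ must have previously invoked a $\mathsf{broadcast}(\cdot)$ request (line~\ref{line:longvbf_broadcast}), which is precisely what integrity demands.

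Concretely, I would take a correct process $p_i$ that receives a $\mathsf{completed}$ indication and observe that this indication is produced at line~\ref{line:complete_happy} or line~\ref{line:complete_sad}. In either case, the guarding convention on the corresponding rule ensures that $p_i$ has already executed the $\mathsf{broadcast}(v \in \mathsf{Value})$ event handler starting at line~\ref{line:longvbf_broadcast}. Therefore $p_i$ has previously broadcast, and the theorem follows. I would then simply close with ``which concludes the proof.''

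There is essentially no substantive obstacle here: unlike strong validity, safety, or totality, integrity is a purely syntactic consequence of how the completion rules are gated. The only care needed is to confirm that no other line in \longvbf (e.g., the validation rules at lines~\ref{line:rule_validate_happy} and~\ref{line:rule_validate_sad}, or the $\mathcal{ACOOL}$/$\mathcal{AW}$ sub-instances) can cause a $\mathsf{completed}$ indication to be surfaced to the caller of \longvbf — and indeed the validation rules only trigger $\mathsf{validate}(\cdot)$, not $\mathsf{completed}$, while the sub-protocols' indications are consumed internally. So the proof will be a single short paragraph with no calculation.

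\begin{proof}
Let $p_i$ be any correct process that receives a $\mathsf{completed}$ indication.
Process $p_i$ triggers the $\mathsf{completed}$ indication either at line~\ref{line:complete_happy} or at line~\ref{line:complete_sad}.
In both cases, the corresponding rule (line~\ref{line:rule_complete_happy} or line~\ref{line:rule_complete_sad}) is triggered only if $p_i$ has previously broadcast (i.e., previously invoked a $\mathsf{broadcast}(\cdot)$ request at line~\ref{line:longvbf_broadcast}).
Hence, $p_i$ has previously broadcast, which concludes the proof.
\end{proof}
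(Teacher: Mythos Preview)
Your proposal is correct and takes essentially the same approach as the paper, which simply states that integrity ``follows directly from the pseudocode of \longvbf.'' Your version is actually more explicit than the paper's one-line proof, since you enumerate the two completion points and point to the gating convention on the completion rules.
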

\begin{proof}
The statement of the theorem follows directly from the pseudocode of \longvbf.
\end{proof}

The following theorem proves termination.

\begin{theorem} [Termination]
\longvbf (\Cref{algorithm:its_validation_broadcast}) satisfies termination.
\end{theorem}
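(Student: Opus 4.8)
The plan is to prove termination of \longvbf by tracing the guaranteed progress of each sub-protocol it invokes, under the hypothesis that all correct processes broadcast and no correct process abandons \longvbf. First I would invoke the termination property of $\mathcal{ACOOL}$ (\reduceacool): since all correct processes input a value to $\mathcal{ACOOL}$ at line~\ref{line:longvbf_input_to_acool} and none abandon it, every correct process eventually obtains an output pair $(\mathit{success}, \mathit{reduction\_output}_i)$. Consequently, every correct process broadcasts a $\textsc{symbol}$ message (line~\ref{line:longvbf_broadcast_symbol}) and proposes either $\mathsf{HAPPY}$ or $\mathsf{SAD}$ to $\mathcal{AW}$ (lines~\ref{line:longvbf_propose_happy_to_aw}--\ref{line:longvbf_propose_sad_to_aw}). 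Then, by the termination property of AW graded consensus, every correct process eventually decides a pair $(v', g')$ from $\mathcal{AW}$ at line~\ref{line:longvbf_decide_from_aw} and broadcasts $\langle v' \rangle$ at line~\ref{line:longvbf_broadcast_happy}.

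Next I would use the consistency property of $\mathcal{AW}$ to argue that all correct processes broadcast $\langle v' \rangle$ for the \emph{same} value $v' \in \{\mathsf{HAPPY}, \mathsf{SAD}\}$: if any correct process decides $(\mathsf{HAPPY}, 1)$ then, by consistency, no correct process decides $(\mathsf{SAD}, \cdot)$; actually I only need the weaker observation that if some correct process decides $\mathsf{HAPPY}$ with grade $1$, all decide $\mathsf{HAPPY}$, and otherwise I must handle the case where correct processes may disagree. The cleaner route: split into two cases according to whether at least one correct process decides $\mathsf{SAD}$ from $\mathcal{AW}$ with grade $1$, or no correct process does. If some correct process decides $(\mathsf{SAD}, 1)$, consistency of $\mathcal{AW}$ forces every correct process to decide $(\mathsf{SAD}, \cdot)$, so all $n - t \geq 2t+1$ correct processes broadcast $\langle \mathsf{SAD} \rangle$; hence every correct process eventually receives $2t+1$ $\langle \mathsf{SAD} \rangle$ messages and triggers $\mathsf{completed}$ via the rule at line~\ref{line:rule_complete_sad}. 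Otherwise, every correct process that decides $\mathsf{SAD}$ does so with grade $0$, and by consistency of $\mathcal{AW}$ applied to grade-$1$ $\mathsf{HAPPY}$ decisions (or a direct case analysis), it is not possible for some correct process to decide $\mathsf{SAD}$ and another to decide $\mathsf{HAPPY}$ with grade $1$; so either all correct processes decide $\mathsf{HAPPY}$, in which case all $\geq 2t+1$ of them broadcast $\langle \mathsf{HAPPY} \rangle$ and also all broadcast $\textsc{symbol}$ messages, so every correct process eventually receives $4t+1$ $\textsc{symbol}$ messages and $2t+1$ $\langle \mathsf{HAPPY} \rangle$ messages and triggers $\mathsf{completed}$ via line~\ref{line:rule_complete_happy}, or all correct processes decide $\mathsf{SAD}$, which reduces to the previous sub-case.

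I would then remark that the completion rules at lines~\ref{line:rule_complete_happy} and~\ref{line:rule_complete_sad} are only gated on having previously broadcast, which holds for every correct process by hypothesis, so the ``triggered only if previously broadcast'' restriction is vacuous here. The main obstacle I anticipate is the case analysis over the possible disagreement among correct processes' $\mathcal{AW}$ decisions: I need to carefully argue, using the consistency property of $\mathcal{AW}$, that the set of decided values among correct processes is such that either $\geq 2t+1$ correct processes broadcast $\langle \mathsf{SAD} \rangle$ or $\geq 2t+1$ correct processes broadcast $\langle \mathsf{HAPPY} \rangle$ (and in the latter case also that all $\geq 4t+1$ correct processes — recall $n \geq 5t+1$ — have sent $\textsc{symbol}$ messages). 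Since $n - t \geq 4t + 1$ when $n \geq 5t+1$, and every correct process sends a $\textsc{symbol}$ message, the $4t+1$ $\textsc{symbol}$-message threshold is always eventually met; the only real content is showing one of the two $\mathsf{HAPPY}$/$\mathsf{SAD}$ thresholds is eventually met, which follows because all $n - t \geq 2t+1$ correct processes broadcast the same value when consistency yields a grade-$1$ decision, and when no grade-$1$ decision occurs a direct inspection of $\mathcal{AW}$'s guarantees shows the correct processes still cannot be split in a way that starves both thresholds. I would close the proof by noting that in every case every correct process eventually triggers $\mathsf{completed}$.

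\begin{proof}
Assume all correct processes broadcast via \longvbf and no correct process abandons \longvbf. By the termination property of $\mathcal{ACOOL}$, every correct process eventually obtains an output from $\mathcal{ACOOL}$ (line~\ref{line:longvbf_input_to_acool}), broadcasts a $\textsc{symbol}$ message (line~\ref{line:longvbf_broadcast_symbol}), and proposes to $\mathcal{AW}$ (lines~\ref{line:longvbf_propose_happy_to_aw}--\ref{line:longvbf_propose_sad_to_aw}). By the termination property of $\mathcal{AW}$, every correct process eventually decides a pair $(v', g')$ from $\mathcal{AW}$ and broadcasts $\langle v' \rangle$ (line~\ref{line:longvbf_broadcast_happy}). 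Since there are at least $n - t \geq 4t + 1$ correct processes (recall $n \geq 5t + 1$), every correct process that does not halt eventually receives $4t + 1$ $\textsc{symbol}$ messages.

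We distinguish two cases. First, suppose some correct process decides $\mathsf{HAPPY}$ from $\mathcal{AW}$ with grade $1$. By the consistency property of $\mathcal{AW}$, every correct process that decides from $\mathcal{AW}$ decides $(\mathsf{HAPPY}, \cdot)$, hence all $n - t \geq 2t + 1$ correct processes broadcast $\langle \mathsf{HAPPY} \rangle$. Therefore, every correct process eventually receives $2t + 1$ $\langle \mathsf{HAPPY} \rangle$ messages and $4t + 1$ $\textsc{symbol}$ messages, so the rule at line~\ref{line:rule_complete_happy} activates (its precondition that $p_i$ has previously broadcast holds by hypothesis), and $p_i$ triggers $\mathsf{completed}$ (line~\ref{line:complete_happy}).

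Second, suppose no correct process decides $\mathsf{HAPPY}$ from $\mathcal{AW}$ with grade $1$. If some correct process decides $\mathsf{SAD}$ with grade $1$, then by consistency of $\mathcal{AW}$ all correct processes decide $(\mathsf{SAD}, \cdot)$. Otherwise, no correct process decides any value from $\mathcal{AW}$ with grade $1$; in particular, there is no value decided with grade $1$, so consistency does not constrain the decisions, but in this sub-case the only way both the $\mathsf{HAPPY}$ and $\mathsf{SAD}$ completion thresholds could fail forever is if fewer than $2t + 1$ correct processes broadcast $\langle \mathsf{HAPPY} \rangle$ and fewer than $2t + 1$ correct processes broadcast $\langle \mathsf{SAD} \rangle$; since every correct process broadcasts exactly one of the two and there are at least $n - t \geq 4t + 1 > 2 \cdot 2t$ correct processes, at least $2t + 1$ correct processes broadcast the same value. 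In either situation, at least $2t + 1$ correct processes broadcast $\langle \mathsf{SAD} \rangle$ or at least $2t + 1$ correct processes broadcast $\langle \mathsf{HAPPY} \rangle$. If $2t + 1$ correct processes broadcast $\langle \mathsf{SAD} \rangle$, every correct process eventually receives $2t + 1$ $\langle \mathsf{SAD} \rangle$ messages and triggers $\mathsf{completed}$ via the rule at line~\ref{line:rule_complete_sad}. If instead $2t + 1$ correct processes broadcast $\langle \mathsf{HAPPY} \rangle$, every correct process eventually receives $2t + 1$ $\langle \mathsf{HAPPY} \rangle$ messages and $4t + 1$ $\textsc{symbol}$ messages, and triggers $\mathsf{completed}$ via the rule at line~\ref{line:rule_complete_happy}.

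In all cases, every correct process eventually triggers a $\mathsf{completed}$ indication, which proves termination.
\end{proof}
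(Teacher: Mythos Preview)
Your proof is correct, but it takes an unnecessarily convoluted route compared to the paper's. The paper's argument never invokes the consistency property of $\mathcal{AW}$; it uses a single pigeonhole split on the number of correct processes that decide $\mathsf{SAD}$ from $\mathcal{AW}$: either at least $2t+1$ correct processes decide $(\mathsf{SAD}, \cdot)$, in which case the rule at line~\ref{line:rule_complete_sad} fires everywhere, or fewer than $2t+1$ do, which (since there are $n-t \geq 4t+1$ correct processes) forces at least $2t+1$ correct processes to decide $(\mathsf{HAPPY}, \cdot)$, so the rule at line~\ref{line:rule_complete_happy} fires (the $4t+1$ \textsc{symbol} threshold being trivially met). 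Your sub-case~2b is exactly this pigeonhole, but you reach it only after two detours through consistency that buy nothing: grades are irrelevant to which message is broadcast at line~\ref{line:longvbf_broadcast_happy}, so splitting on grade-$1$ decisions is superfluous. The paper's decomposition is shorter and makes clear that termination of \longvbf depends only on termination of $\mathcal{ACOOL}$ and $\mathcal{AW}$ plus the resilience bound $n \geq 5t+1$, not on any safety property of $\mathcal{AW}$.
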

\begin{proof}
All correct processes decide from $\mathcal{AW}$ due to the termination property of $\mathcal{ACOOL}$ and $\mathcal{AW}$.
We now consider two scenarios:
\begin{compactitem}
    \item At least $2t + 1$ correct processes decide $(\mathsf{SAD}, \cdot)$ from $\mathcal{AW}$.
    In this case, every correct process eventually receives $2t + 1$ $\langle \mathsf{SAD} \rangle$ messages (line~\ref{line:rule_complete_sad}), and triggers $\mathsf{completed}$ (line~\ref{line:complete_sad}).

    \item Otherwise, every correct process eventually receives $2t + 1$ $\langle \mathsf{HAPPY} \rangle$ messages and $4t + 1$ \textsc{symbol} messages (line~\ref{line:rule_complete_happy}), and triggers $\mathsf{completed}$ (line~\ref{line:complete_happy}).
\end{compactitem}
Termination is ensured.
\end{proof}

Lastly, we prove totality.

\begin{theorem} [Totality]
\longvbf (\Cref{algorithm:its_validation_broadcast}) satisfies totality.
\end{theorem}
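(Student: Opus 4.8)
The plan is to prove totality for \longvbf by tracing what happens after some correct process triggers a $\mathsf{completed}$ indication, following the same structure as the totality proofs for \shortvbt (\Cref{thm:echo-basic-totality}) and \longvbt (\Cref{thm:validation_broadcast_long_3_4-totality}). Suppose a correct process $p_i$ triggers $\mathsf{completed}$ at time $\tau$. By the pseudocode of \longvbf, this happens either via the rule at line~\ref{line:rule_complete_happy} (receipt of $4t+1$ \textsc{symbol} messages and $2t+1$ $\langle\mathsf{HAPPY}\rangle$ messages) or via the rule at line~\ref{line:rule_complete_sad} (receipt of $2t+1$ $\langle\mathsf{SAD}\rangle$ messages). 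I would handle these two cases separately.

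First I would consider the $\mathsf{SAD}$ case. If $p_i$ received $2t+1$ $\langle\mathsf{SAD}\rangle$ messages by time $\tau$, then at least $t+1$ of these are from correct processes, and these messages reach every correct process by time $\max(\tau,\text{GST})+\delta$. Therefore the rule at line~\ref{line:rule_validate_sad} activates at every correct process $p_j$ by time $\max(\tau,\text{GST})+\delta$, and $p_j$ triggers $\mathsf{validate}(\mathsf{def}(p_j))$ (line~\ref{line:validate_sad}). This gives validation within $\delta$ (well within the required $2\delta$). Note this argument works even for correct processes that never broadcast, that abandoned the primitive, or that already completed — exactly as the specification permits.

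Next I would consider the $\mathsf{HAPPY}$ case. If $p_i$ received $4t+1$ \textsc{symbol} messages and $2t+1$ $\langle\mathsf{HAPPY}\rangle$ messages by time $\tau$, then among the \textsc{symbol} messages at least $3t+1$ are from correct processes and among the $\langle\mathsf{HAPPY}\rangle$ messages at least $t+1$ are from correct processes. Both subsets reach every correct process by time $\max(\tau,\text{GST})+\delta$. Hence the rule at line~\ref{line:rule_validate_happy} activates at every correct process $p_j$ by time $\max(\tau,\text{GST})+\delta$, and $p_j$ triggers $\mathsf{validate}\big(\mathsf{RSDec}(t+1,t,\text{received symbols})\big)$ (line~\ref{line:validate_happy}). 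Here I need to argue that the decoding succeeds and that it is meaningful: since at least $t+1$ correct processes sent $\langle\mathsf{HAPPY}\rangle$, at least one correct process decided $(\mathsf{HAPPY},\cdot)$ from $\mathcal{AW}$, so by the safety property of $\mathcal{AW}$ some correct process proposed $\mathsf{HAPPY}$ and thus received $(1,v^\star)$ from $\mathcal{ACOOL}$; by the agreement property of $\mathcal{ACOOL}$, every correct process sends a correctly-encoded RS symbol for $v^\star$. Among $3t+1$ symbols received from correct processes plus up to $t$ faulty ones, there are at least $3t+1$ correct and at most $t$ erroneous symbols, so $\mathsf{RSDec}(t+1,t,\cdot)$ correctly recovers $v^\star$. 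In both cases every correct process validates by time $\max(\tau,\text{GST})+\delta \le \max(\tau,\text{GST})+2\delta$, which establishes totality.

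The main obstacle I anticipate is the $\mathsf{HAPPY}$-case argument that the Reed–Solomon decode at line~\ref{line:validate_happy} actually returns a well-defined value (rather than failing or producing garbage) — this requires invoking the non-duplicity guarantee of \reduceacool (\Cref{lemma:non_duplicity_reduce_acool}) together with the safety/agreement of $\mathcal{ACOOL}$ and $\mathcal{AW}$ to pin down the unique $v^\star$ that all correct processes encode, and then a counting check that $3t+1$ correct symbols suffice to correct $t$ errors for degree-$t$ polynomials ($3t+1 \ge (t+1) + 2t$). The $\mathsf{SAD}$ case is routine message-counting and timing. A minor subtlety worth a sentence: the $4t+1$ versus $3t+1$ (and $2t+1$ versus $t+1$) thresholds are exactly what make $p_i$'s completion imply that enough \emph{correct}-process messages exist to trigger every other correct process's validation rule — this is the design reason for the gap of $t$ between the completion and validation thresholds.

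\begin{proof}
Suppose a correct process $p_i$ triggers a $\mathsf{completed}$ indication at some time $\tau$. By the pseudocode of \longvbf, this occurs either via the rule at line~\ref{line:rule_complete_sad} or via the rule at line~\ref{line:rule_complete_happy}. We consider the two cases.

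\smallskip
\noindent \emph{Case 1: $p_i$ triggers $\mathsf{completed}$ at line~\ref{line:complete_sad}.}
Then $p_i$ has received $2t + 1$ $\langle \mathsf{SAD} \rangle$ messages by time $\tau$, out of which at least $t + 1$ are sent by correct processes. These $t + 1$ messages are received by every correct process by time $\max(\tau, \text{GST}) + \delta$. Therefore, the rule at line~\ref{line:rule_validate_sad} activates at every correct process $p_j$ by time $\max(\tau, \text{GST}) + \delta$, and $p_j$ triggers $\mathsf{validate}\big( \mathsf{def}(p_j) \big)$ at line~\ref{line:validate_sad}. Hence, every correct process validates a value by time $\max(\tau, \text{GST}) + \delta \leq \max(\tau, \text{GST}) + 2\delta$.

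\smallskip
\noindent \emph{Case 2: $p_i$ triggers $\mathsf{completed}$ at line~\ref{line:complete_happy}.}
Then $p_i$ has received $4t + 1$ \textsc{symbol} messages and $2t + 1$ $\langle \mathsf{HAPPY} \rangle$ messages by time $\tau$. Among the \textsc{symbol} messages, at least $3t + 1$ are sent by correct processes; among the $\langle \mathsf{HAPPY} \rangle$ messages, at least $t + 1$ are sent by correct processes. Both of these subsets are received by every correct process by time $\max(\tau, \text{GST}) + \delta$. Thus, the rule at line~\ref{line:rule_validate_happy} activates at every correct process $p_j$ by time $\max(\tau, \text{GST}) + \delta$. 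It remains to argue that the decoding at line~\ref{line:validate_happy} succeeds. Since at least $t + 1$ correct processes sent a $\langle \mathsf{HAPPY} \rangle$ message (line~\ref{line:longvbf_broadcast_happy}), at least one correct process decided $(\mathsf{HAPPY}, \cdot)$ from $\mathcal{AW}$; by the safety property of $\mathcal{AW}$, a correct process proposed $\mathsf{HAPPY}$ to $\mathcal{AW}$ (line~\ref{line:longvbf_propose_happy_to_aw}), and thus received $(1, v^{\star})$ from $\mathcal{ACOOL}$ (line~\ref{line:longvbf_input_to_acool}) for some value $v^{\star}$. By the agreement property of $\mathcal{ACOOL}$, every correct process that receives an output from $\mathcal{ACOOL}$ receives $(\cdot, v^{\star})$, and hence sends a correctly-encoded RS symbol for $v^{\star}$ in its \textsc{symbol} message (line~\ref{line:longvbf_broadcast_symbol}). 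Therefore, among the $3t + 1$ symbols that $p_j$ receives from correct processes, all are correctly-encoded symbols of $v^{\star}$, and $p_j$ receives at most $t$ additional (possibly incorrect) symbols. Since $3t + 1 \geq (t + 1) + 2t$, the decoder $\mathsf{RSDec}(t + 1, t, \cdot)$ corrects up to $t$ errors and outputs $v^{\star}$. Hence $p_j$ triggers $\mathsf{validate}(v^{\star})$ at line~\ref{line:validate_happy} by time $\max(\tau, \text{GST}) + \delta \leq \max(\tau, \text{GST}) + 2\delta$.

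\smallskip
In both cases, every correct process validates a value by time $\max(\tau, \text{GST}) + 2\delta$, which concludes the proof.
\end{proof}
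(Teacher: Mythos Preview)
Your proof is correct and follows essentially the same case split as the paper (completion via the $\mathsf{HAPPY}$ rule vs.\ the $\mathsf{SAD}$ rule, then counting correct senders to trigger the corresponding validation rule within $\delta$). The only difference is that you additionally argue the Reed--Solomon decode at line~\ref{line:validate_happy} recovers a specific value $v^{\star}$; this is not needed for totality (the specification only requires that \emph{some} value is validated, and $\mathsf{RSDec}$ returns something regardless), so the paper omits it---your extra argument is correct but superfluous here, belonging rather to the safety proof.
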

\begin{proof}
We consider two scenarios:
\begin{compactitem}
    \item A correct process triggers $\mathsf{completed}$ at line~\ref{line:complete_happy} at time $\tau$.
    Hence, this correct process has received $4t + 1$ RS symbols and $2t + 1$ $\langle \mathsf{HAPPY} \rangle$ messages (line~\ref{line:rule_complete_happy}) by time $\tau$.
    Therefore, every correct process receives at least $3t + 1$ symbols and $t + 1$ $\langle \mathsf{HAPPY} \rangle$ messages (line~\ref{line:rule_validate_happy}) from correct processes by time $\max(\text{GST},\tau) + \delta$, and validates a value (line~\ref{line:validate_happy}) by $\max(\tau, \text{GST}) + \delta$.

    \item A correct process triggers $\mathsf{completed}$ at line~\ref{line:complete_sad} at time $\tau$.
    Hence, this correct process has received $2t + 1$ $\langle \mathsf{SAD} \rangle$ messages (line~\ref{line:rule_complete_sad}) by time $\tau$.
    Therefore, every correct process receives $t + 1$ $\langle \mathsf{SAD} \rangle$ messages (line~\ref{line:rule_validate_sad}) by time $\max(\tau, \text{GST}) + \delta$, and validates a value (line~\ref{line:validate_sad}) by $\max(\tau, \text{GST}) + \delta$.
\end{compactitem}
As totality is ensured in both possible scenarios, the proof is concluded.
\end{proof}

\smallskip
\noindent \textbf{Proof of complexity.}
We prove that any correct process sends $O\big( L + n\log(n) \big)$ bits in \longvbf.

\begin{theorem} [Exchanged bits]
Any correct process sends $O\big( L + n\log(n) \big)$ bits in \longvbf.
\end{theorem}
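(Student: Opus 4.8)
The plan is to decompose \longvbf's communication (\Cref{algorithm:its_validation_broadcast}) into its constituent parts and bound each separately. A correct process $p_i$ sends bits in \longvbf only (1) inside the $\mathcal{ACOOL}$ instance of \reduceacool, (2) inside the $\mathcal{AW}$ instance of AW graded consensus, (3) via the single $\langle \textsc{symbol}, m_i \rangle$ broadcast on line~\ref{line:longvbf_broadcast_symbol}, and (4) via the single $\langle v' \rangle$ broadcast on line~\ref{line:longvbf_broadcast_sad}. Summing these four contributions yields the claimed bound.

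First I would invoke the complexity of \reduceacool established earlier in \Cref{section:acool_reduction}: each correct process sends $O\big(L + n\log(n)\big)$ bits in $\mathcal{ACOOL}$. Second, since $\mathcal{AW}$ is run on the constant-sized domain $\{\mathsf{HAPPY}, \mathsf{SAD}\}$, the AW graded consensus bound (\Cref{section:existing_primitives}) gives $O(n)$ bits per process. Third, the Reed–Solomon encoding $\mathsf{RSEnc}(\cdot, n, t+1)$ produces symbols of size $O\big(L/n + \log n\big)$ bits each (by the stated properties of RS codes, with $k = t+1 \in \Theta(n)$ and $m = n$), and $p_i$ sends one such symbol to each of the $n$ processes, for a total of $O\big(L + n\log n\big)$ bits; crucially, the \textbf{broadcast} on line~\ref{line:longvbf_broadcast_symbol} fires at most once, since $\mathsf{broadcast}(\cdot)$ is invoked at most once by a correct process. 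Fourth, the $\langle v' \rangle$ message carries a constant-sized value and is broadcast at most once (upon the single $\mathcal{AW}.\mathsf{output}$ event), contributing $O(n)$ bits.

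Adding up, a correct process sends $O\big(L + n\log(n)\big) + O(n) + O\big(L + n\log n\big) + O(n) = O\big(L + n\log(n)\big)$ bits, which is exactly the statement. The argument is essentially bookkeeping; the only point requiring a moment of care is confirming that each of the \textsc{symbol} and $\langle v' \rangle$ broadcasts fires at most once per correct process — this follows because the corresponding triggering events ($\mathsf{broadcast}(\cdot)$ on line~\ref{line:longvbf_broadcast} and $\mathcal{AW}.\mathsf{output}(\cdot)$ on line~\ref{line:longvbf_decide_from_aw}) each occur at most once — together with correctly reading off the RS symbol size from the $\mathsf{RSEnc}$ parameters. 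No further subtlety is expected.
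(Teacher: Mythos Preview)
Your proof is correct and follows essentially the same decomposition as the paper's own proof: bound the bits sent in $\mathcal{ACOOL}$, in $\mathcal{AW}$, via the \textsc{symbol} broadcast, and via the $\langle v' \rangle$ broadcast, then sum. Your version is in fact more detailed than the paper's (you explicitly justify the RS symbol size and the at-most-once firing of each broadcast), but the underlying argument is identical.
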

\begin{proof}
Any correct process sends $O\big( L + n \log(n) \big) + O(n) = O\big( L + n \log(n) \big)$ bits via $\mathcal{ACOOL}$ and $\mathcal{AW}$.
Moreover, each correct process also sends $O\big( L + n\log(n) \big) + O(n) = O\big( L + n\log(n) \big)$ bits via $\textsc{symbol}$, $\mathsf{HAPPY}$ and $\mathsf{SAD}$ messages.
\end{proof}

Finally, we prove that \longvbf requires $15$ asynchronous rounds.

\begin{theorem} [Asynchronous rounds]
Assuming all correct processes broadcast via \longvbf and no correct process abandons \longvbf, \longvbf takes $15$ asynchronous rounds before all correct processes receive a $\mathsf{completed}$ indication.
\end{theorem}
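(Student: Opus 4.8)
The plan is to decompose a (complete, non-abandoning) execution of \longvbf into its three sequential stages --- the \reduceacool instance $\mathcal{ACOOL}$, the AW graded consensus instance $\mathcal{AW}$, and the final $\langle v' \rangle$ broadcast --- and add up their asynchronous-round costs. First I would invoke the round bound for \reduceacool from \Cref{section:acool_reduction}: since every correct process broadcasts and none abandons, $\mathcal{ACOOL}$ delivers a pair $(\mathit{success}, \mathit{reduction\_output}_i)$ to every correct process within $5$ asynchronous rounds. At that point each correct process broadcasts its $\langle \textsc{symbol}, m_i \rangle$ message (line~\ref{line:longvbf_broadcast_symbol}) and proposes either $\mathsf{HAPPY}$ or $\mathsf{SAD}$ to $\mathcal{AW}$ (line~\ref{line:longvbf_propose_happy_to_aw} or line~\ref{line:longvbf_propose_sad_to_aw}); hence all \textsc{symbol} messages from correct processes are delivered to every correct process by round $6$, so in particular every correct process holds at least $n - t \ge 4t + 1$ \textsc{symbol} messages by round $6$.

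Next I would use the known $9$-round bound for AW graded consensus (\Cref{section:existing_primitives}). Because every correct process proposes to $\mathcal{AW}$ by round $5$ and no correct process abandons, every correct process decides a pair $(v', g') \in \{\mathsf{HAPPY}, \mathsf{SAD}\} \times \{0,1\}$ from $\mathcal{AW}$ by round $5 + 9 = 14$ (line~\ref{line:longvbf_decide_from_aw}) and immediately broadcasts $\langle v' \rangle$ (line~\ref{line:longvbf_broadcast_happy}). Therefore every $\langle \mathsf{HAPPY} \rangle$ or $\langle \mathsf{SAD} \rangle$ message sent by a correct process reaches every correct process by round $15$.

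Finally I would close the argument with the same case split used in the termination proof. Since there are at least $n - t \ge 4t + 1$ correct processes, either (i) at least $2t+1$ of them decide $\mathsf{SAD}$, in which case every correct process receives $2t+1$ $\langle \mathsf{SAD} \rangle$ messages by round $15$ and triggers $\mathsf{completed}$ via the rule at line~\ref{line:rule_complete_sad}; or (ii) fewer than $2t+1$ correct processes decide $\mathsf{SAD}$, so at least $(4t+1) - 2t = 2t+1$ of them decide $\mathsf{HAPPY}$, whence by round $15$ every correct process has both $2t+1$ $\langle \mathsf{HAPPY} \rangle$ messages and the $\ge 4t+1$ \textsc{symbol} messages it has held since round $6$, so the rule at line~\ref{line:rule_complete_happy} fires. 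In both cases every correct process receives a $\mathsf{completed}$ indication by round $15 = 5 + 9 + 1$, which is the claimed bound.

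The whole proof is essentially bookkeeping, so I do not anticipate a genuine obstacle. The one place to be slightly careful is the accounting for the sequential composition: one must observe that the $\langle v' \rangle$ broadcast adds exactly one round on top of the $14$ rounds consumed by $\mathcal{ACOOL}$ followed by $\mathcal{AW}$, and that the \textsc{symbol} messages emitted after round $5$ are never the bottleneck since they are delivered by round $6 \le 15$; consequently the completion rules at lines~\ref{line:rule_complete_happy} and~\ref{line:rule_complete_sad} are the only ones that matter for the bound.
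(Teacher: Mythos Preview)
Your proposal is correct and follows essentially the same approach as the paper: decompose into $\mathcal{ACOOL}$ ($5$ rounds), then $\mathcal{AW}$ ($9$ rounds), then one round for the $\langle v' \rangle$ broadcast, giving $5+9+1=15$. Your version is in fact a bit more careful than the paper's, which asserts the disjunction ``$4t{+}1$ \textsc{symbol} and $2t{+}1$ $\langle\mathsf{HAPPY}\rangle$, or $2t{+}1$ $\langle\mathsf{SAD}\rangle$'' without spelling out the case split on how many correct processes decide $\mathsf{SAD}$ versus $\mathsf{HAPPY}$.
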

\begin{proof}
Recall that $\mathcal{ACOOL}$ requires $5$ asynchronous rounds (see \Cref{section:acool_reduction}).
Hence, at the end of the fifth asynchronous round, each correct process (1) broadcasts a \textsc{symbol} message, and (2) proposes to $\mathcal{AW}$.
As $\mathcal{AW}$ requires $9$ asynchronous rounds (see \Cref{section:existing_primitives}), all correct processes broadcast a $\mathsf{HAPPY}$ or a $\mathsf{SAD}$ message at the end of the $14$-th asynchronous round.
Therefore, at the end of the $15$-th asynchronous round, each correct process receives (1) $4t + 1$ \textsc{symbol} and $2t + 1$ $\mathsf{HAPPY}$ messages, or (2) $2t + 1$ $\mathsf{SAD}$ messages, thus concluding the proof.
\end{proof}
\section{\name for Partially Synchronous Agreement on a Core Set (ACS)}
\label{subsec:from_ole_to_acs}

In this section, we adapt \name's transformation to Agreement on a Core Set (ACS)~\cite{AAPS23, cohen2023concurrent,das2024asynchronous,duan2023practical,shoup2024theoretical}, also known as Asynchronous Common Subset or Vector Consensus. This variant of Byzantine agreement (see Module~\ref{mod:acs}), which is the strongest achievable in non-synchronous environments~\cite{civit2023validity}, ensures that correct processes agree on a vector containing (at least) $n-t$ proposals. ACS is crucial for protocols like Atomic Broadcast~\cite{correia2006consensus,Doudou1998,neves2005solving} (also called Total Order Broadcast) and Asynchronous Secure Multi-Party Computation (AMPC), which has been studied extensively since the early 1990s~\cite{beaver1991efficient,ben1993asynchronous,Ben-Or94}. Recent implementations of AMPC include work such as \cite{abraham2024perfect} for perfect security and \cite{goyal2024towards,ji2024linear} for statistical security, improving upon earlier protocols~\cite{choudhury2017efficient,choudhury2023communication}.

We first describe how ACS reduces to reliable broadcast and index agreement on a core set (index ACS) in \Cref{section:acs_process_validation}. Next, we describe how we can achieve index ACS via oblivious leader election (OLE) in \Cref{section:acs_reduction_to_iacs_ole}, along with an implementation that assumes synchrony. Finally, we put the obtained synchronous index ACS under \name to achieve (efficient) partially synchronous index ACS and therefore, (efficient) partially synchronous ACS, in \Cref{section:acs_merging}.

\begin{module}
\caption{Agreement on a Core Set (ACS)}
\label{mod:acs}
\footnotesize
\begin{algorithmic}[1]
\Statex \textbf{Parameters:}
\Statex \hskip2em $\mathsf{Integer}$ $t$ \BlueComment{maximum number of tolerated faults}

\medskip 
\Statex \textbf{Events:}

\Statex \hskip2em \emph{request} $\mathsf{propose}(\mathsf{Value} $ $v)$: a process proposes a value $v$.

\Statex \hskip2em \emph{indication} $\mathsf{decide}(\mathsf{Vector}(\mathsf{Value} \cup \{\bot\})$ $ V' )$: a process decides a vector $V'$.

\medskip 
\Statex \textbf{Properties:}

\Statex \hskip2em \emph{Agreement:} If two correct processes $p_i$ and $p_j$ output $V$ and $V'$, respectively, then $V = V'$.

\smallskip 
\Statex \hskip2em \emph{Validity:} If a correct process $p_i$ outputs $V$, then:
\Statex \hskip4em \emph{(a)} $|\{p_j \in \Pi| V[j] \neq \bot\}| \geq n-t$, and
\Statex \hskip4em \emph{(b)} if $V[j] \neq \bot$ and $p_j$ is correct, then $V[j]$ has been proposed by $p_j$.

\smallskip
\Statex \hskip2em \emph{Termination:} Every correct process eventually decides.
\end{algorithmic}
\end{module}

\subsection{ACS to Index Agreement on a Core Set (Index ACS)}\label{section:acs_process_validation}

We now explain how ACS can be reduced to index ACS (see Module~\ref{mod:index-acs}), following the approach from \cite{das2024asynchronous}.\footnote{The corresponding primitive is referred to as Index Asynchronous Common Subset in \cite{das2024asynchronous}. However, we opt for the term Agreement on a Core Set to avoid any confusion related to the network model.}
In the context of index ACS, \emph{process validation} refers to the mechanism by which a process considers that another process has done some action correctly (e.g. reliably broadcast). Each process $p_i$ maintains a dynamic set $\mathit{Valid}_i$ of processes it has validated (in our case, from which it has reliable-delivered the input). This set can grow over time but never decrease (a process cannot be ``un''-validated).
A set of processes $V$ is said \emph{locally validated} (resp., \emph{globally validated}) if every process $p_j \in V$ has been validated by at least one correct process (resp. by all the correct processes). Let us note that if $V$ is a singleton, it boils down to the definition from \cite{das2024asynchronous}.

An essential property of the validated set $\mathit{Valid}_i$ in index ACS is $x$-\emph{Totality}. This property ensures that if a set of processes $V$ is locally validated at some time $\tau$, then it will be globally validated by some time $\tau' < \infty$. In the case of partial synchrony,  $\tau' \leq \mathsf{max}(\tau, \text{GST}) + x \delta$, where $x$ is a parameter related to the latency of process validation. In practice, this property is (typically) ensured by the $x$-Totality property of reliable broadcast (see Module~\ref{mod:reliable-broadcast}) with some constant number of rounds $x$. 


\begin{module}
\caption{Index Agreement on a Core Set (Index ACS)}
\label{mod:index-acs} 
\footnotesize
\begin{algorithmic}[1]
\Statex \textbf{Parameters:}
\Statex \hskip2em $\mathsf{Integer}$ $x$ \BlueComment{totality-related latency of process validation}

\medskip 
\Statex \textbf{Notes:}
\Statex \hskip2em Every correct process $p_i$ is \emph{equipped} with a (dynamic) set of processes it has \emph{validated}, $\mathit{Valid}_i$, that can have new processes during the execution \hphantom{allllll} of the module. The set $\mathit{Valid}_i$ satisfies $x$-\emph{Totality}: 
if a set of processes $V$ is locally validated at some time $\tau$, then it will be globally validated \hphantom{allllll} by time $\tau' < \infty$. In the case of partial synchrony, $\tau' \leq \mathsf{max}(\tau, \text{GST}) + x \delta$.

\medskip 
\Statex \textbf{Events:}

\Statex \hskip2em \emph{request} $\mathsf{propose}(\mathsf{Set}(\mathsf{Process}) \mathit{V} \subseteq \mathit{Valid}_i)$: a process $p_i$ proposes its set of validated processes.

\Statex \hskip2em \emph{indication} $\mathsf{decide}(\mathsf{Set}(\mathsf{Process})\;V')$: a process decides a subset of processes $V'$.

\medskip 
\Statex \textbf{Properties} 

\Statex \hskip2em \emph{Agreement:} If two correct processes $p_i$ and $p_j$ decide $V$ and $V'$, respectively, then $V = V'$.

\smallskip 
\Statex \hskip2em \emph{Local Validity:} If an honest process decides $V$, then $V$ has been locally validated and $|V| \geq n-t$.

\smallskip
\Statex \hskip2em \emph{Strong Validity:} If every honest process proposes the same set of processes $V$, then each honest process decides $V$. 

\smallskip 
\Statex \hskip2em \emph{Termination:} Every correct process eventually decides.

\end{algorithmic}
\end{module}

As explained in \cite{das2024asynchronous}, it is easy to solve ACS atop index ACS (see \Cref{algorithm:acs_to_index_acs}). Initially, each process broadcasts its value using the reliable broadcast primitive. When a correct process $p_i$ delivers a value from the $j$-th instance of the reliable broadcast, it stores them and insert $p_j$ to its $\mathit{Valid}_i$. When $n-t$ processes have been validated, $p_i$ then propose $\mathit{Valid}_i$ to the index ACS. Upon deciding on a set $V'$, from the index ACS, $p_i$ first waits until it has validated every process in $V'$. Then, $p_i$ define its decision as follows: for each process $p_j \in V'$, $p_i$ set its $j$-th output as the value it delivered from the $j$-th reliable broadcast, and for $p_j \notin V'$, $p_i$ set its $j$-th output as $\bot$.

The reduction of holds for the following reasons. First, the agreement property of ACS is guaranteed by the agreement properties of both the index ACS and reliable broadcast. Similarly, the validity of ACS is derived from the validity property of the index ACS and the justification property of the reliable broadcast. Finally, the termination of ACS is ensured by the termination property of the index ACS combined with the obligation and $x$-totality properties of the reliable broadcast.


Lastly, let us explain the latency and bit complexity of the reduction. From the protocol, it can be seen that the latency is $latency(\mathcal{RBC}) + latency(\mathcal{IACS}) + O(1)$, while the per-process bit complexity is $n \cdot pbit(\mathcal{RBC}) + pbit(\mathcal{IACS})$. 

\begin{algorithm}
\caption{Reducing ACS to Index ACS and Reliable Broadcast: Pseudocode (for process $p_i$)}
\label{algorithm:acs_to_index_acs}
\footnotesize
\begin{algorithmic} [1]

\State \textbf{Local variables:}
\State \hskip2em $\mathsf{Map}(\mathsf{Process} \rightarrow \mathsf{Value} \cup \{\bot\})$ $\mathit{delivered}_i \gets \{\bot, \dots, \bot\}$ \BlueComment{values delivered by reliable broadcasts} 
\State \hskip2em $\mathsf{Vector}(\mathsf{Value} \cup \{\bot\})$ $\mathit{output}_i \gets \{\bot, \dots, \bot\}$

\medskip
\State \textbf{Global variables:}
\State \hskip2em $\mathsf{Set}(\mathsf{Process})$ $\mathit{Valid}_i \gets \emptyset$ \BlueComment{set of validated processes. can be read by other modules}

\medskip
\State \textbf{Uses:}
\State \hskip2em Reliable Broadcast, \textbf{instance} $\mathcal{RBC}_j$ with source $p_j$, for each $j \in [1:n]$ \BlueComment{see Module~\ref{mod:reliable-broadcast}}
\State \hskip2em Index ACS, \textbf{instance} $\mathcal{IACS}$, associated with $\mathit{Valid}_i$ \BlueComment{see Module~\ref{mod:index-acs}}

\medskip
\State \textbf{upon} $\mathsf{propose}(v \in \mathsf{Value})$:
\State \hskip2em \textbf{invoke} $\mathcal{RBC}_i.\mathsf{broadcast}(v)$ \label{line:acs_propose_broadcast}

\medskip
\State \textbf{upon} $\mathcal{RBC}_j.\mathsf{deliver}(v' \in \mathsf{Value})$:
\State \hskip2em $\mathit{delivered}_i[p_j] \gets v'$ \label{line:acs_rbc_deliver}
\State \hskip2em $\mathit{Valid}_i \gets \mathit{Valid}_i \cup \{p_j\} $ \label{line:acs_valid_update}
\State \hskip2em \textbf{if} $|\mathit{Valid}_i| = n-t$ \textbf{then:}
\State \hskip4em \textbf{invoke} $\mathcal{IACS}.\mathsf{propose}(\mathit{Valid}_i)$\label{line:acs_iacs_propose}

\medskip
\State \textbf{upon} $\mathcal{IACS}.\mathsf{decide}(V' \subseteq \Pi)$:
\State \hskip2em \textbf{wait until} $V' \subseteq \mathit{Valid}_i $ \label{line:acs_wait_for_rbc}
\State \hskip2em \textbf{for each} $p_j \in V'$:
\State \hskip4em $\mathit{output}_i[j] \gets \mathit{delivered}_i[p_j]$ \label{line:acs_output}
\State \hskip2em \textbf{trigger} $\mathsf{decide}(\mathit{output}_i)$ \label{line:acs_decide}

\end{algorithmic}
\end{algorithm}

\begin{module}
\caption{Reliable Broadcast}
\label{mod:reliable-broadcast}
\footnotesize
\begin{algorithmic}[1]

\Statex \textbf{Parameters:}
\Statex \hskip2em $\mathsf{Process}$ $p_s$ \BlueComment{source process}
\Statex \hskip2em $\mathsf{Integer}$ $x$ \BlueComment{constant related to the totality property}

\medskip 
\Statex \textbf{Events:}

\Statex \hskip2em \emph{request} $\mathsf{broadcast}(\mathsf{Value}\ v)$: process $p_s$ broadcasts a value $v$.

\Statex \hskip2em \emph{indication} $\mathsf{deliver}(\mathsf{Value}\ v')$: a process delivers a value $v'$.

\medskip 
\Statex \textbf{Properties:}

\Statex \hskip2em \emph{Obligation:} 
If $p_s$ is correct and broadcasts a value $v$, then every correct process eventually delivers the value $v$.

\smallskip 
\Statex \hskip2em \emph{Justification:} 
If $p_s$ is correct and a correct process delivers a value $v$, then $v$ has been broadcast by $p_s$.

\smallskip 
\Statex \hskip2em \emph{Agreement:} 
If two correct processes $p_i$ and $p_j$ delivers the value $v$ and $v'$, respectively, then $v = v'$.


\smallskip
\Statex \hskip2em $x$-\emph{Totality}: If a correct process delivers a value at time $\tau$, then every correct process delivers a value by time $\max(\tau, \text{GST}) + x\delta$.

\end{algorithmic}
\end{module}

\subsection{Synchronous Index ACS to Synchronous Oblivious Leader Election (OLE)}\label{section:acs_reduction_to_iacs_ole}


We now describe how under synchrony, index ACS can be reduced to oblivious leader election (OLE) (see Module~\ref{mod:ole}). Our reduction works for (randomized) OLE protocol with a constant success probability $\rho$, in which the obtained index ACS also succeeds with the same probability.

\begin{module}
\caption{Oblivious Leader Election (OLE)}
\label{mod:ole}
\footnotesize
\begin{algorithmic}[1]
 
\Statex \textbf{Parameters:}
\Statex \hskip2em $\mathsf{Real}$ $\rho \in [0,1]$ \BlueComment{minimum probability of success}

\medskip
\Statex \textbf{Events:}

\Statex \hskip2em \emph{request} $\mathsf{call()}$: a process requests to participate in the leader election.

\Statex \hskip2em \emph{indication} $\mathsf{select}(\mathsf{Process}\ p_\ell)$: a process is notified of the elected leader $p_\ell$.

\medskip 
\Statex \textbf{Properties:}

\Statex \hskip2em \emph{$\rho$-Quality:} With probability at least $\rho$, all correct processes trigger $\mathsf{select}(p_\ell)$ for some (common) process $p_\ell$ that was correct at the time of \hphantom{allllll}the first invocation of $\mathsf{call()}$ by any correct process.

\smallskip
\Statex \hskip2em \emph{Termination:} Every correct process eventually selects a leader.

\end{algorithmic}
\end{module}

Our reduction is presented in \Cref{algorithm:synchronous_iacs}. Let us describe the reduction.
Each correct process $p_i$ maintains a dynamic set $\mathit{Valid}_i$ of validated processes which satisfy $x$-totality (see Module~\ref{mod:index-acs}). Initially, every correct process starts with a set $\mathit{Valid}_i$ containing at least $n - t$ processes. The reduction is achieved in several rounds. In the first round, each process $p_i$ broadcasts its set $\mathit{V}_i \subseteq \mathit{Valid}_i$ of size at least $n-t$. Then, $p_i$ sets a timer for $x \delta$ time units, the sufficient time to guarantee global validation of each process included in $\mathit{V}_i$ due to $x$-totality of $\mathit{Valid}_i$. In the second round, correct processes invoke the OLE module, which will elect an (optimistically common and so-far correct) leader $p_{\ell}$ to each process. Upon receiving the leader, a correct process first waits for its timer to expire to ensure global validity guaranteed by the $x$-totality.
Then, if the leader has sent a well-formed set $\mathit{V}_\ell$ (of at least $n - t$ processes) that contains only processes that have been validated by $p_i$, then $p_i$ decides $\mathit{V}_\ell$. Otherwise, $p_i$ defaults to its own set $\mathit{V}_i$.

The reduction is correct due to the following. First, termination and validity are ensured by construction. Next, with probability at least $\rho$, processes agree on a common, so-far correct leader $p_\ell$ (by the $\rho$-quality of the OLE). Due to $x$-totality guaranteed by each set $\mathit{Valid}_i$, each correct process will validate all processes included in $\mathit{V}_\ell$ before the timer expires. Consequently, with probability at least $\rho$, the correct processes agree on the leader's valid proposal $\mathit{V}_\ell$. Using the same idea, it can be proven that strong validity also holds with probability at least $\rho$.

Finally, let us note the latency and bit complexity of the reduction. Notice that a set of processes can be encoded as a binary string with $O(n)$ bits. From the algorithm, it can be seen that the latency is $latency(\mathcal{OLE}) + O(1)$ while its per-process bit complexity is $O(n^2) + pbit(\mathcal{OLE})$.


\begin{algorithm}
\caption{Synchronous Index ACS with Probability of Success \(\rho\): Pseudocode (for process $p_i$)}
\label{algorithm:synchronous_iacs}
\footnotesize
\begin{algorithmic} [1]

\State \textbf{Uses:}
\State \hskip2em Oblivious Leader Election with $\rho$-quality, \textbf{instance} $\mathcal{OLE}$ \BlueComment{see Module~\ref{mod:ole}}
\State \hskip2em Timer, \textbf{instance} $\mathit{timer}$ 

\medskip
\State \textbf{Input Parameters:}
\State \hskip2em $\mathsf{Set}(\mathsf{Process}) \mathit{V}_i$ \BlueComment{$\mathit{V}_i \subseteq \mathit{Valid}_i$}

\medskip
\State \textbf{Constants:}
\State \hskip2em $\mathsf{Integer}$ $x$ \BlueComment{related to the $x$-totality property of the process validation}

\medskip
\State \textbf{Global variables:}
\State \hskip2em $\mathsf{Set}(\mathsf{Process})$ $\mathit{Valid}_i$ \BlueComment{dynamic set of validated processes with $x$-Totality}

\medskip
\State \textbf{Round 1:}
\State \hskip2em \textbf{broadcast} $\langle \textsc{init}, \mathit{V}_i \rangle$ \label{line:iacs_broadcast_valid}
\State \hskip2em $\mathit{timer}.\mathsf{measure}(x \delta)$ 

\medskip
\State \textbf{Round 2:}

\State \hskip2em \textbf{invoke} $\mathcal{OLE}.\mathsf{call()}$ \label{line:iacs_invoke_ole}

\medskip
\State \textbf{upon} $\mathcal{OLE}.\mathsf{select}(\mathsf{Process}\;p_\ell)$: \label{line:iacs_ole_select}

\State \hskip2em \textbf{wait until} $\mathit{timer}.\mathsf{expire}()$ 

\State \hskip2em \textbf{if} $\langle \textsc{init}, V_\ell \rangle$ has been received from $p_\ell \neq \bot$ such that:
\State \hskip4em \text{(a)} $|V_\ell| \geq n-t$, \textbf{and}
\State \hskip4em \text{(b)} $V_\ell \subseteq \mathit{Valid}_i$ 
\State \hskip2em \textbf{then:}
\State \hskip4em \textbf{trigger} $\mathsf{decide}(V_\ell)$ 
\State \hskip2em \textbf{else:}

\State \hskip4em \textbf{trigger} $\mathsf{decide}(\mathit{\mathit{V}_i})$ 



\end{algorithmic}
\end{algorithm}

\subsection{Merging everything together}\label{section:acs_merging}


Now, we are ready to combine the aforementioned building blocks together to solve ACS. High-level wise, we use \Cref{algorithm:acs_to_index_acs}, where the index ACS is implemented using \Cref{algorithm:synchronous_iacs} that is compiled using \name. Specifically, we use \Cref{algorithm:synchronous_iacs} for $\mathcal{A}^S$ defined in \block. Notice that, index ACS is similar to Byzantine agreement in the sense that it is a Byzantine agreement where (1) the proposed and decision value is a set of processes, and (2) a value (set of processes) is considered valid by a process $p_i$ if it has validated it (by checking its $\mathit{Valid}_i$). Here, we only describe the modifications needed on \block to fit our index ACS usage.

\begin{itemize}
    \item In process $p_i$, each module has access to global variable $\mathit{Valid}_i$ that may grow over time.
    \item The $\mathsf{valid}(\cdot)$ predicate (defined by the external validity) employed by $p_i$ is as follow: $\mathsf{valid}(V) = |V| \ge n-t \wedge V \subseteq \mathit{Valid}_i$.
\end{itemize}

Notice that this means for the same set of processes $V$, two correct processes $p_i$ and $p_j$ may have different perceptions regarding the validity of $V$. However, thanks to the $x$-totality of $\mathit{Valid}_i$ and $\mathit{Valid}_j$ (see Module~\ref{mod:index-acs}), if $p_i$ considers $V$ to be valid at time $\tau$, then by $\max(\tau, \text{GST}) + x\delta$, $p_j$ will also consider $V$ to be valid.\footnote{A subtle point here is that each process now needs a message buffer. We cannot just drop a message with an invalid value, as a value may become valid in the future. However, this will not affect the latency or the bit complexity of the resulting protocol.} This is sufficient to allow \block to compile the index ACS and thus, allow us to solve ACS in partial synchrony.

Let us now explain the latency and the bit complexity of the resulting transformation $\name_{\mathsf{ACS}}$. First, notice that in ACS, a decision consists of $O(n)$ proposals. Thus, to distinguish their size, we denote $L_{in}$ as the size of the proposal and $L \in O(nL_{in})$ as the size of the decision.

Then, notice that in the compiled index ACS, a value is $L_{\mathsf{IACS}} \in O(n)$ bits, using the encoding we mentioned before to encode a set of processes. In our concrete implementation of the final transformation, $\name_{\mathsf{ACS}}$, we use the following implementation for these primitives, that were used in \name and \Cref{algorithm:acs_to_index_acs}:
\begin{itemize}
    \item Graded consensus (Module~\ref{mod:graded-consensus}) from \cite{AttiyaWelch23} with $O(n L_{\mathsf{IACS}}) = O(n^2)$ per-process bit complexity.
    \item Validation broadcast (Module~\ref{mod:validation-broadcast}) from \Cref{subsection:validation_broadcast_basic_implementation} with $O(n L_{\mathsf{IACS}}) = O(n^2)$ per-process bit complexity.
    \item Finisher (\Cref{subsection:finisher}) from \Cref{subsubsection:shortfin} with $O(n L_{\mathsf{IACS}}) = O(n^2)$ per-process bit complexity.
    \item Reliable broadcast (Module~\ref{mod:reliable-broadcast}) from \cite{alhaddad2022balanced,chen2024ociorcool} with $O(L_{in} + n\log(n))$ per-process bit-complexity.
\end{itemize}

Importantly, those implementations have constant latency. Hence, $\name_{\mathsf{ACS}}$ solves ACS with expected per-process bit-complexity of $O(nL_{in}+n^2) + \mathcal{B}$ and $O(\mathcal{R})$ expected latency, where $\mathcal{B}$ and $\mathcal{R}$ denote the per-process bit-complexity and the latency of the underlying (randomized) synchronous OLE protocol, respectively.

We now see the power of $\name_{\mathsf{ACS}}$ for solving ACS in partial synchrony. By using the recent OLE protocol from \cite{AsharovChandramouli24} and assuming private channels, we achieve an optimally resilient ACS against a strongly-adaptive unbounded adversary. Moreover, the protocol has expected per-process bit-complexity $O(nL_{in}+n^2\log^2(n))$ and expected constant latency. Another example is by plugging in OLE from \cite{GPV06}, on which the resulting ACS protocol is resilient against $t < n/(3+\epsilon)$ faults under a static unbounded adversary. Furthermore, the protocol has expected per-process bit-complexity of $O(nL_{in}+n^2)$ and $O(\log(n)/\epsilon^2)$ expected latency.
For a more comprehensive comparison, we summarized the results along with the existing state of the arts in \Cref{table:state_of_the_art_protocols_acs}.

\begin{table}[h!]
    \footnotesize
    \centering
    \begin{tabular}{|c|c|c|c|c|c|c|c|}
        \hline
        \textbf{Protocol} & \textbf{Relaxation} & \textbf{Security} & \textbf{Resiliency} & \textbf{Latency} & 
        $\boldsymbol{\textbf{L}}_{in}$ \textbf{for optim.} & \textbf{Remark} & \textbf{Sync} \\
                \hline \hline
            KS \cite{king2016byzantine,king2018correction} & None & Perfect & $n/(0.87\cdot 10^{9}) $ & $\tilde{O}(n^3)$ & $\Omega(n^4)$ & - & A \\
        \hline
        HPZ \cite{huang2023byzantine} & None & Perfect & $n/(3 + \epsilon)$ & $\tilde{O}(n^4/\epsilon^{8})$ & $\tilde{\Omega}(n^5/\epsilon^{8})$ & - & A \\
        \hline
        HPZ \cite{huang2023byzantine} & None & Perfect & $n/3$ & $\tilde{O}(n^{12})$ &  $\tilde{\Omega}(n^{13})$ & - & A \\
        \hline
        $\boldsymbol{\textbf{\name}}($CDGGKVZ \cite{errorFreeValidated}$)$ & None & Perfect & $n/3$ & $O(n)$ & $\Omega(n)$ & - & PS \\

\hline \hline
        Bar-Joseph \& Ben-Or \cite{Bar-JosephB98} & Crash & - & - & $\Omega(\sqrt{n/\log(n)})$ & - & LB & S \\
        \hline
        AAKS \cite{AAKS14,AAKS18} & Crash Only & Perfect & $n/2$ & $\tilde{\Omega}(n)$ & $\tilde{\Omega}(n)$ & - & A\\
        \hline
        $\boldsymbol{\textbf{\name}}($HKO \cite{hajiaghayi2024nearly}$)$ & Omission & Perfect & $n/30$ & $O(\sqrt{n}\log^3(n))$ & $\Omega(n\log^3(n))$ & - & PS \\
        
        \hline \hline
        KKKSS \cite{KKKSS08,KKKSS10} & Static Adv. & $1/n^c$ & $n/(3 + \epsilon)$ & $2^{\Theta(\log^8(n))}$ & $\tilde{\Omega}(n)$ & - & A \\
        \hline
        $\boldsymbol{\textbf{\name}_{\mathsf{acs}}}($GPV \cite{GPV06}$)$ & Static Adv. & Perfect & $n/(3 + \epsilon)$ & $O(\log(n)/\epsilon^2)$ & $\Omega(n)$ & - & PS \\
        \hline
        $\boldsymbol{\textbf{\name}_{\mathsf{acs}}}($GPV \cite{GPV06}$)$ & Static Adv. & Perfect & $n/\log^{1.58}(n)$ & $O(1)$ & $\Omega(n)$ & - & PS \\
        
        \hline \hline
        BCP \cite{BCP18} & Priv. Chan. & Perfect & $n/3$ & $O(n \log(n))$ & $\Omega(n^5)$ & - & A \\
        \hline
        BCP \cite{BCP18} & Priv. Chan. & Perfect & $n/(3+\epsilon)$ & $O(\log(n)/\epsilon)$ & $\Omega(n^5)$ & - & A \\
        \hline
        AAPS \cite{AAPS23} & Priv. Chan. & Perfect & $n/4$ & $O(1)$ & $\tilde{\Omega}(n^2)$ & - & A \\
        \hline
        AAPS \cite{AAPS23} + \cite{ChoudhuryPatra23}'s AISS & Priv. Chan. & $\mathit{neg}(\lambda)$ & $n/3$ & $O(1)$ & $\Omega(n^4 + \lambda^2 n^3)$ & - & A \\
        \hline
        $\boldsymbol{\textbf{\name}_{\mathsf{acs}}}($AC \cite{AsharovChandramouli24}$)$ & Priv. Chan. & Perfect & $n/3$ & $O(1)$ & $\Omega(n \log^2(n))$ & - & PS \\ 
        \hline
    \end{tabular}
    \caption{
Overview of state-of-the-art fully asynchronous and partially synchronous ACS protocols in different models with unbounded adversary. 
        In the Remark column, LB stands for lower bound. 
        A value different from 'Perfect' in the Security column corresponds to the probability of violating termination. 
        A (Asynchrony), S (Synchrony), and PS (Partial Synchrony) in the Sync column indicate the synchrony assumption. 
        $\epsilon$ is always a non-zero constant. 
        The value in column '$L_{in}$ for optim.' corresponds to the size  $L_{in}$ of a proposal such that the protocol achieves optimal $O(n^2 L_{in})$ total communication. 
        For protocols yielded by $\name$ and $\name_{\mathsf{acs}}$, it is also enough to achieve optimal per-process communication.
        An ACS protocol yielded by $\name$ (and not $\name_{\mathsf{acs}}$) has been obtained by classic reduction to binary agreement\cite{ben1994asynchronous,canetti1993fast,Chor1989} (instead of OLE).
    }
    \label{table:state_of_the_art_protocols_acs}
\end{table}

\clearpage
\bibliographystyle{acm}
\bibliography{references}

\end{document}